\newcommand{\comment}[1]{\textcolor{red}{#1}}
\def\added#1{{\textcolor{blue}{#1}}}
\newtheorem{theorem}{Theorem}
\newtheorem{lemma}{Lemma}
\newtheorem{corollary}{Corollary}
\newtheorem{definition}{Definition}
\newtheorem{remark}{Remark}
\newtheorem{claim}{Claim} 
\newenvironment{proofclaim}{\noindent \emph{Proof.}\ }{\hfill
    $\Diamond$\vspace{1em}}
  \newcommand{\ccw}{counterclockwise }
\newcommand{\cw}{clockwise }
\newcommand{\Cov}{\mathop{\mathrm{Cov}}}
\newcommand{\diam}{\mathop{\mathrm{diam}}}
\begin{document}

\title{Scaling limits for random triangulations on the torus}
\author{Vincent Beffara\thanks{Univ. Grenoble Alpes, CNRS, IF, 38000
    Grenoble, France} $^\ddagger$, Cong Bang Huynh$^{*\dagger\ddagger}$, Benjamin
  Lévêque\thanks{Univ. Grenoble Alpes, CNRS, Grenoble INP, G-SCOP,
    38000 Grenoble, France} \thanks{\texttt{emails
      :vincent.beffara@univ-grenoble-alpes.fr,
      cong-bang.huynh@univ-grenoble-alpes.fr,
      benjamin.leveque@cnrs.fr}}}

\maketitle
\begin{abstract}
  We study the scaling limit of essentially simple triangulations on
  the torus. We consider, for every $n\geq 1$, a uniformly random
  triangulation $G_n$ over the set of (appropriately rooted)
  essentially simple triangulations on the torus with $n$ vertices. We
  view $G_n$ as a metric space by endowing its set of vertices with
  the graph distance denoted by $d_{G_n}$ and show that the random
  metric space $(V(G_n),n^{-1/4}d_{G_n})$ converges in distribution in
  the Gromov–Hausdorff sense when $n$ goes to infinity, at least along
  subsequences, toward a random metric space. One of the crucial steps
  in the argument is to construct a simple labeling on the map and
  show its convergence to an explicit scaling limit. We moreover show
  that this labeling approximates the distance to the root up to a
  uniform correction of order $o(n^{1/4})$.
\end{abstract}

\textbf{Keywords:} random maps, unicellular maps, Schnyder woods, toroidal triangulations


\section{Introduction}

\subsection{Some definitions}

Recall  that  the  \emph{Hausdorff  distance}  between  two  non-empty
subsets $X$ and $Y$ of a metric space $(M,d)$ is defined as
\begin{equation*}
d_{Haus}(X,Y)  = \inf\{\epsilon\geq  0:  X  \subset Y_\epsilon  \quad
\textrm{and} \quad Y\subset X_{\epsilon} \},
\end{equation*}
where $Z_{\epsilon}$  denotes $\{m\in  M: d(m,Z)\leq  \epsilon\}$. The
\emph{Gromov-Hausdorff  distance} between  two  compact metric  spaces
$(S,\delta)$ and $(S',\delta')$ is defined as
\begin{equation*}
d_{GH}((S,\delta),(S',\delta'))=\inf \{d_{Haus}(\varphi(S),\varphi'(S') \},
\end{equation*} 
where   the   infimum  is   taken   over   all  isometric   embeddings
$\varphi:S\rightarrow S"$ and $\varphi':  S'\rightarrow S"$ of $S$ and
$S'$   into  a   common   metric  space   $(S",\delta")$.  Note   that
$d_{GH}((S,\delta), (S',\delta'))$ is equal to  $0$ if and only if the
metric spaces $S$  and $S'$ are isometric to each  other. We refer the
reader  to  \emph{e.g.}~\cite[Section~3]{berry2013scaling} for  a
detailed investigation of the Gromov-Hausdorff distance.

In this paper, we are considering some random graphs seen as random
metric spaces and consider their convergence in distribution in the
sense of the Gromov-Hausdorff distance.  In general, graphs may
contain loops and multiple edges. A graph is called \emph{simple} if
it contains no loop nor multiple edges. A graph embedded on a surface
is called a \emph{map} on this surface if all its faces are
homeomorphic to open disks. In this paper we consider orientable
surface of genus $g$ where the plane is the surface of genus $0$, the
torus the surface of genus $1$, etc.  For $p\geq 3$, a map is called a
$p$-angulation if all its faces have size $p$. For $p=3$
(resp. $p=4$), such maps are respectively called triangulations
(resp. quadrangulations).

\subsection{Random planar maps}

Let us  first review some  results on  random planar maps.  Consider a
random  planar  map  $G_n$  with   $n$  vertices  which  is  uniformly
distributed  over  a  certain  class   of  planar  maps  (like  planar
triangulations, quadrangulations or $p$-angulations). Equip the vertex
set $V(G_n)$ with  the graph distance $d_{G_n}$. It is  known that the
diameter of the resulting metric space  is of order $n^{1/4}$ (see for
example~\cite{chassaing2004random} for the  case of quadrangulations).
Thus  one   can  expect  that   the  rescaled  random   metric  spaces
$(V(G_n),n^{-1/4}d_{G_n})$ converge  in distribution  as $n$  tends to
infinity   toward   a  certain   random   metric   space.  In   2006,
Schramm~\cite{schramm2006conformally} suggested  to use the  notion of
Gromov-Hausdorff distance 
to formalize this  question by
specifying  the topology  of this  convergence.  He was  the first  to
conjecture the  existence of a  scaling limit for large  random planar
triangulations.  In 2011,  Le Gall~\cite{le2013uniqueness}  proved the
existence of  the scaling limit  of the rescaled random  metric spaces
$(V(G_n),n^{-1/4}d_{G_n})$   for  $p$-angulations   when  $p=3$,   or,
$p\geq 4$  and $p$ is  even. The case  $p=3$ solves the  conjecture of
Schramm.  Miermont~\cite{miermont2013brownian}   gave  an  alternative
proof  in  the  case  of quadrangulations  $(p=4)$.  Addario-Berry  and
Albenque~\cite{berry2013scaling}  prove  the  case  $p=3$  for  simple
triangulations (\emph{i.e.}\ triangulations with  no loop nor multiple
edges). An  important aspect  of all  these results is  that, up  to a
constant rescaling factor, all these classes converge toward the same
object called the Brownian map.

It is  natural to address the  question of the existence  of a scaling
limit of random maps on higher genus oriented surfaces. Chapuy, Marcus
and Schaeffer \cite{chapuy2009bijection}  extended the bijection known
for planar  bipartite quadrangulations to any  oriented surfaces. This
led  Bettinelli  \cite{bettinelli2010scaling}  to  show  that  random
quadrangulations  on oriented  surfaces converge  in distribution,  at
least along a subsequence. More formally:

\begin{theorem}[Bettinelli~\cite{bettinelli2010scaling}]
  \label{th:Bettinelli}
  For $g\geq 1$ and $n\geq 1$, let $G_n$ be a uniformly random element
  of the set of all  angle-rooted bipartite quadrangulations with $n$
  vertices  on the  oriented  surface  of genus  $g$.  Then, from  any
  increasing  sequence  of integers,  one  can  extract a  subsequence
  $(n_k)_{k\geq 0}$ along which the rescaled metric spaces
  $$\left(V(G_{n_k}),n_k^{-1/4}d_{G_{n_k}}\right)_{k\geq0}$$
  converge in distribution for the Gromov-Hausdorff distance.
\end{theorem}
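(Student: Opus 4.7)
The plan is to adapt the encoding approach that yields the planar convergence, using the Chapuy-Marcus-Schaeffer bijection~\cite{chapuy2009bijection} in place of the Cori-Vauquelin-Schaeffer one. Starting from $G_n$, I would produce the associated pair $(M_n,\ell_n)$, where $M_n$ is a rooted unicellular map of genus $g$ whose number of edges is linear in $n$, and $\ell_n:V(M_n)\to\mathbb{Z}$ is a well-labeling (labels at adjacent vertices differ by at most $1$ in absolute value, normalized so that $\min\ell_n=1$). The essential feature of this encoding is the ``cactus'' upper bound
\[ d_{G_n}(u,v) \leq \ell_n(u) + \ell_n(v) - 2\min_{w\in [u,v]} \ell_n(w) + 2, \]
where $[u,v]$ is any path in a contour traversal of $M_n$ between $u$ and $v$. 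This inequality will transfer tightness from the labeled unicellular map to the random metric space $(V(G_n),n^{-1/4} d_{G_n})$.

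The second step decomposes $M_n$. A unicellular map of genus $g$ factors uniquely into its \emph{scheme} --- a unicellular map of genus $g$ whose vertices all have degree at least three, so that only finitely many schemes exist, each with a bounded number of edges --- together with a family of plane trees grafted along the scheme's edges. Conditionally on the scheme and on the integer partition recording the sizes of the decorating trees, those trees are independent uniformly random labeled plane trees. I would analyse the law of the random scheme and of the decoration sizes, and show that after dividing by $n$ the vector of edge-length proportions converges to an explicit distribution on the simplex indexed by the scheme edges.

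For the third step, each decorating tree encoded by its contour and label processes converges, after rescaling time by the square root of its size and labels by $n^{1/4}$, to a Brownian snake with prescribed lifetime; gluing along the scheme yields a continuous random labeled $\mathbb{R}$-graph. Combined with the cactus bound, this shows that the map sending the contour parameter to the corresponding vertex of $M_n$ is uniformly $C\,n^{1/4}$-Lipschitz with respect to a tight pseudometric on $[0,1]$, so the family of rescaled metric spaces is tight in the Gromov-Hausdorff topology; Prokhorov's theorem then yields convergence along a subsequence. I expect the main obstacle to be the combinatorial step: unlike the planar case where the scheme is trivial, one must first describe the law of the random scheme of a uniform unicellular map of positive genus and show that it is asymptotically non-degenerate (every scheme edge carries a macroscopic fraction of the $n$ edges, so that no scheme edge collapses in the limit), and then control the exceptional scheme edges whose decorations stay microscopic so that their contribution to the Gromov-Hausdorff limit is negligible.
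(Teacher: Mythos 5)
The paper does not prove Theorem~\ref{th:Bettinelli}; it is quoted from Bettinelli~\cite{bettinelli2010scaling} and then used as a template, so there is no in-house proof to compare your attempt against. That said, your sketch does reproduce the skeleton of Bettinelli's argument (Chapuy--Marcus--Schaeffer bijection, cactus bound, scheme-plus-forest decomposition, Brownian-snake scaling limits for contour and label processes, tightness and Prokhorov), which is also the template the present paper adapts to triangulations in Sections~\ref{somebijection}--\ref{sec:proof}.

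One step is stated too loosely to let stand. After conditioning on the scheme and on the sizes of the decorations, the decorating trees together with their labels are \emph{not} an independent family of uniformly random labeled plane trees. Since the labels record distances to the root, they must be globally consistent around each non-contractible cycle of the scheme, and the decomposition therefore carries an extra family of discrete label increments (``bridges'') along the scheme edges, analogous to the Motzkin paths $M_i$ of Section~\ref{sec:motz} in this paper; the trees are uniform and independent only \emph{conditionally} on these bridges and on matching floor labels. Handling these bridges and proving their joint scaling limit is precisely what yields the law $\mu$ on scheme parameters and is the combinatorial novelty of positive genus over the planar case; it is also where the non-degeneracy you worry about at the end (no scheme edge collapsing after rescaling by $\sqrt{n}$) is actually established. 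So the plan is sound, but this conditional-independence step needs to be corrected and carried out; the rest of your outline matches the cited proof.
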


Contrary to the planar case, the uniqueness of the subsequential limit
is not  proved there.  Nevertheless, a  phenomenon of  universality is
expected: it is  conjectured that the sequence does  converge and that
moreover,  up  to  a  deterministic  multiplicative  constant  on  the
distance, the limit  is the same for  many models of random  maps of a
given  genus.  In  genus  $1$,  the  conjectured  limit  is  described
in~\cite{bettinelli2010scaling} and referred  to as the \emph{toroidal
  Brownian map}.

The present article extends Theorem~\ref{th:Bettinelli} to the case of
(essentially simple) triangulations of the torus. In that respect, it
is comparable to the paper of Addario-Berry and
Albenque~\cite{berry2013scaling} which did the same in the planar
setup and thus our work contributes to the understanding of
universality for random toroidal maps.

\subsection{Main results}

A \emph{contractible loop} is an edge enclosing a region homeomorphic
to an open disk. A pair of \emph{homotopic multiple edges} is a pair
of edges that have the same extremities and whose union encloses a
region homeomorphic to an open disk.  A graph $G$ embedded on the
torus is called \emph{essentially simple} if it has no contractible
loop nor homotopic multiple edges. Being essentially simple for a
toroidal map is the natural generalization of being simple for a
planar map.

In this paper, we distinguish paths and cycles
from walks and closed walks as the firsts have no repeated vertices.
A \emph{triangle} of a toroidal map is a closed walk of size $3$
enclosing a region that is homeomorphic to an open disk.  This region
is called the \emph{interior} of the triangle. Note that a triangle is
not necessarily a face of the map as its interior may be not empty. 
We say that a triangle is \emph{maximal} (by inclusion) if its
interior is not strictly contained in the interior of another
triangle. 
  We
define the \emph{corners} of a triangle as the three angles
that appear in the interior of this triangle when its interior is
removed (if non empty).

Our main result is the following convergence result:

\begin{theorem}
\label{main3}
For $n\geq  1$, let  $G_n$ be  a uniformly random  element of  the set
 of all   essentially simple toroidal triangulations
on $n$  vertices that are  rooted at a  corner of a  maximal triangle.
Then,  from any  increasing sequence  of integers,  one can  extract a
subsequence $(n_k)_{k\geq 0}$ along which the rescaled metric spaces 
$$\left(V(G_{n_k}),n_k^{-1/4}d_{G_{n_k}}\right)_{k\geq0}$$
  converge in distribution for the Gromov-Hausdorff distance.
\end{theorem}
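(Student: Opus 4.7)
The strategy follows the general framework established by Chassaing--Schaeffer, and adapted by Bettinelli to the torus and by Addario-Berry and Albenque to simple planar triangulations. The paper's keywords point to \emph{toroidal Schnyder woods} as the bijective tool: a uniform essentially simple toroidal triangulation with $n$ vertices, rooted at a corner of a maximal triangle, should be in bijection with a uniform object consisting of a toroidal unicellular map carrying some extra combinatorial data. The role of the rooting at a corner of a maximal triangle is precisely to canonically pick out a distinguished Schnyder wood (which is in general not unique on the torus), so that the encoding becomes well-defined and measure-preserving, in the spirit of what is done for planar triangulations in~\cite{berry2013scaling}.

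From this bijection I would construct a vertex labeling $\ell : V(G_n) \to \mathbb{Z}$ defined through the Schnyder wood, for instance as a signed difference of lengths of monochromatic paths from each vertex back to the root. Such a labeling is the toroidal analogue of the Schnyder labeling used in~\cite{berry2013scaling} and of the Cori--Vauquelin--Schaeffer labeling for quadrangulations. The central combinatorial claim announced in the abstract is that with high probability
\begin{equation*}
  \sup_{v \in V(G_n)} \bigl| d_{G_n}(v_0, v) - \ell(v) \bigr| = o(n^{1/4}),
\end{equation*}
where $v_0$ is the root vertex. One direction should follow from exhibiting explicit Schnyder paths of controlled length; the other direction requires showing that an arbitrary short path in $G_n$ can be straightened to one whose length is comparable to $\ell$, with a correction of negligible order on the scale $n^{1/4}$. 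This estimate is the main technical obstacle of the paper.

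Once the labeling is under control, I would encode the toroidal unicellular structure by its contour walk and transport the labels onto it, then prove that the rescaled pair (contour, labels) converges in distribution to an explicit continuum process of Brownian type defined on the relevant one-dimensional quotient. Combined with the labeling/distance comparison above, this yields tightness of the family $(V(G_n), n^{-1/4} d_{G_n})$ in the Gromov--Hausdorff topology, via essentially the same diameter-and-oscillation criterion used in~\cite{bettinelli2010scaling}; Prokhorov's theorem then gives subsequential convergence to a random compact metric space. The hardest step by far is the uniform comparison between $\ell$ and $d_{G_n}$: on the torus, distances may be realized by paths winding around non-contractible loops in ways not directly captured by a single Schnyder wood, so ruling out a gap between $\ell$ and $d_{G_n}$ larger than $o(n^{1/4})$ requires delicate geometric control of long paths in the random map, beyond what is needed in the planar setting.
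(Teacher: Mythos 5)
Your overall architecture (Schnyder-wood bijection with a decorated toroidal unicellular map, a labeling, convergence of the rescaled encoding, then tightness and Prokhorov) is the paper's architecture. But there is a genuine gap in how you get tightness of the metric. You write that tightness of $(V(G_n),n^{-1/4}d_{G_n})$ follows from the convergence of the contour/label pair \emph{combined with} the uniform comparison $\sup_v |d_{G_n}(v_0,v)-\ell(v)|=o(n^{1/4})$, and you identify that comparison as the main technical obstacle --- while giving no argument for it beyond ``delicate geometric control of long paths''. As written, the proposal therefore rests on an unproved (and genuinely hard) estimate. Moreover, that estimate is not what drives tightness in the paper: the paper proves it only in an appendix and states explicitly that it is not needed for the main theorem; it is an ingredient toward \emph{uniqueness} of the limit, which the theorem does not claim.

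What actually yields tightness is a much weaker, purely combinatorial, one-sided bound in the spirit of Chassaing--Schaeffer/Bettinelli: for the angle labeling $\lambda$ of the unicellular map (defined by $+1$ around a stem and $-1$ along an edge while tracing the unique face, not by monochromatic path lengths), one shows $d_G(u,v)\leq m(u)+m(v)-2\overline{m}(u,v)+O(1)$ by chaining successors along the contour (Lemma~\ref{distancebetweentwovertices}), together with $d_G(v_0,v)\leq m(v)$ and the Lipschitz-type bound $|m(u)-m(v)|\leq 7$ across edges. Once the rescaled shifted label process converges (which requires the decomposition of the unicellular map into well-labeled forests and Motzkin paths, the convergence of the discrete parameters, and the forest/Motzkin limit theorems --- all of which your sketch compresses into one sentence), this upper bound gives equicontinuity of the rescaled distance functions, hence tightness, and the Gromov--Hausdorff statement follows via a correspondence/distortion argument. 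If you want to salvage your route, you should either supply a proof of the $o(n^{1/4})$ comparison (a substantial piece of work involving rightmost walks in the canonical $3$-orientation and control of separating cycles) or, better, replace it by the one-sided chaining bound, which is what the theorem actually requires.
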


\begin{remark}
  The reason for the particular choice of rooting in
  Theorem~\ref{main3} is of a technical nature due to the
    bijection that we use in Section~\ref{sec:preliminaries}. It is a natural conjecture that
  compactness, and thus also the existence of subsequential scaling
  limits, would still hold \emph{e.g.} for triangulations rooted at a
  uniformly random angle. This is based on the following reasoning:
  if the inside of every maximal triangle has diameter of smaller
  order than $n^{1/4}$, then rooting inside such a triangle rather
  than at one of its corners would affect distances by a quantity that
  would be smoothed out by the normalization. On the other hand,
  having one maximal triangle containing $\alpha n$ vertices has very
  small probability, because of the relative growths of the number of
  triangulations of genus $0$ and $1$. The remaining obstruction would
  be the existence of a maximal triangle with an inside containing
  much fewer than $n$ vertices but having diameter of order $n^{1/4}$,
  which would presumably be ruled out by a precise control of the
  geometry of simple triangulations of genus $0$. This is a possible
  direction for future work, but we chose not to investigate it
  further due to the already large size of the present paper.
\end{remark}

We also show in an appendix that with high probability, the labeling
function that we define as a crucial tool in our argument (see
Section~\ref{sec:label} for a formal definition) approximates the
distance to the root up to a uniform $o(n^{1/4})$ correction (see
Theorem~\ref{main2bis}).  Such a comparison estimate is an essential
step in proving the uniqueness of the subsequential scaling limit, and
thus the convergence, in frameworks similar to that of our main result
--- see \cite{berry2013scaling} for the case of genus $0$, it is also
likely that a similar argument would be applicable to quadrangulations
of the torus \cite{BM19} (those two quantities are actually equal in
the case of bipartite quadrangulations on any surface with positive
genus, but it seems that a bound of the order $o(n^{1/4})$ is enough).

The overall strategy for the  proof of Theorem~\ref{main3} is the same
as     in~\cite{bettinelli2010scaling},     as      well     as     in
\cite{le2013uniqueness}   and~\cite{miermont2013brownian}:  obtain   a
bijection between  maps and  simpler combinatorial  objects (typically
decorated  trees),  then  show  the convergence  of  these  objects  to  a
non-trivial continuous  random limit  from which  relevant information
can then be  extracted about the original model. As  a result, most of
the    structure     of    the     paper    is     largely    inspired
by~\cite{bettinelli2010scaling}     (for     the    main     argument)
and~\cite{berry2013scaling} (for methods specific to triangulations).

The bijection that we use here is based on a recent generalization of
Schnyder woods to higher
genus~\cite{GL14,GKL15,leveque2017generalization}.  One issue when
going to higher genus is that the set of Schnyder woods of a given
triangulation is no longer a single distributive lattice like in the
planar case, it is rather a collection of distributive lattices.
Nevertheless, it is possible to single out one of these distributive
lattices, in the toroidal case, by requiring an extra property, called
\emph{balanced}, that defines a unique minimal element used as a
\emph{canonical orientation} for the toroidal triangulation.  The
particular properties of this canonical orientation leads to a
bijection between essentially simple toroidal triangulation and
particular toroidal unicellular maps~\cite{despre2017encoding} (a
\emph{unicellular} map is a map with only one face, i.e. the natural
generalization of trees when going to higher genus). Then the main
difficulty that we have to face is that the metric properties of the
initial map are less apparent in the unicellular map than in the
planar case or in the bipartite quadrangulations setup.

\subsection*{Structure of the paper}

The bijection between toroidal triangulations and particular
unicellular maps is presented in Section~\ref{sec:preliminaries} with
some related properties.  In Section~\ref{sec:label}, we define a
labeling function of the angles of a unicellular map and prove some
relations with the graph distance in the corresponding triangulation.
In Section~\ref{somebijection} we explain how to decompose the
particular unicellular maps given by the bijection into simpler
elements with the use of Motzkin paths and well-labeled forests.  In
Section~\ref{sec:brownian}, we review some results on variants of the
Brownian motion.  Then the proof of Theorem~\ref{main3} then proceeds
in several steps. In Section~\ref{section5}, we study the convergence
of the parameters of the discrete map in the scaling limit.  In
Sections~\ref{sec:motzkin},~\ref{section6} and~\ref{section7} we
review and extend classical convergence results for conditioned random
walks and random forests.  Finally, in Section~\ref{sec:proof}, we
combine the previous ingredients to build the proof of the main
theorem. In Appendix~\ref{rightmostpath}, we exploit the canonical
orientation of the triangulation to define rightmost paths and relate
them to shortest paths, thus obtaining the announced upper bound on
the difference between distances and labels.

This work has been partially supported by the LabEx PERSYVAL-Lab
(ANR-11-LABX-0025-01) funded by the French program Investissement
d’avenir and the ANR project GATO (ANR-16-CE40-0009-01) funded by the French
Agence National de la Recherche.

\section{Bijection between toroidal triangulations and unicellular maps}
\label{sec:preliminaries}

For $n\geq 1$, let $\mathcal G(n)$ be the set of essentially simple
toroidal triangulations on $n$ vertices that are rooted at a corner of
a maximal triangle.

Consider an element $G$ of $\mathcal G(n)$. The corner of the maximal
triangle where  $G$ is rooted  is called the \emph{root  corner}. Note
that, since  $G$ is  essentially simple, there  is a  unique triangle,
called the \emph{root  triangle}, whose corner is the  root corner (and
this root triangle  is maximal by assumption). The vertex  of the root
triangle corresponding  to the  root corner  is called  the \emph{root
  vertex}. We also define, in a  unique way, a particular angle of the
map, called the \emph{root angle}, that is the angle of $G$ that is in
the interior of the root triangle, incident to the root vertex and the
last  one in  \ccw  order around  the  root vertex.  Note  that it  is
possible to retrieve the root corner  from the root angle in a unique
way  (indeed, the  root angle  defines already  one edge  of the  root
triangle and  the side of  its interior, thus  it remains to  find the
third vertex of the root triangle  such that the interior is maximal).
Thus rooting  $G$ on its root  corner or root angle  is equivalent. We
call \emph{root face},  the face of $G$ containing the  root angle. We
introduce in  the rest  of this section  some terminology  and results
adapted               from~\cite{despre2017encoding}              (see
also~\cite{leveque2017generalization}).

\subsection{Toroidal unicellular maps}

Recall that a unicellular map is a map with only one face. There are
two types of toroidal unicellular maps since two cycles of a toroidal
unicellular map may intersect either on a single vertex (square case)
or on a path (hexagonal case). On the first row of
Figure~\ref{fig:hexasquare} we have represented these two cases into a
square box that is often use to represent a toroidal object (its
opposite sides are identified). On the second row of
Figure~\ref{fig:hexasquare} we have represented again these two cases
by a square and hexagon by copying some vertices and edges of the map
(here again the opposite sides are identified).  Depending on what we
want to look at we often move from one representation to the other in
this paper.  We call \emph{special} the vertices of a toroidal
unicellular map that are on all the cycles of the map. Thus the number
of special vertices of a square (resp. hexagon) toroidal unicellular
map is exactly one (resp. two).

\begin{figure}[h]
\center
\begin{tabular}{cc}
\includegraphics[scale=0.3]{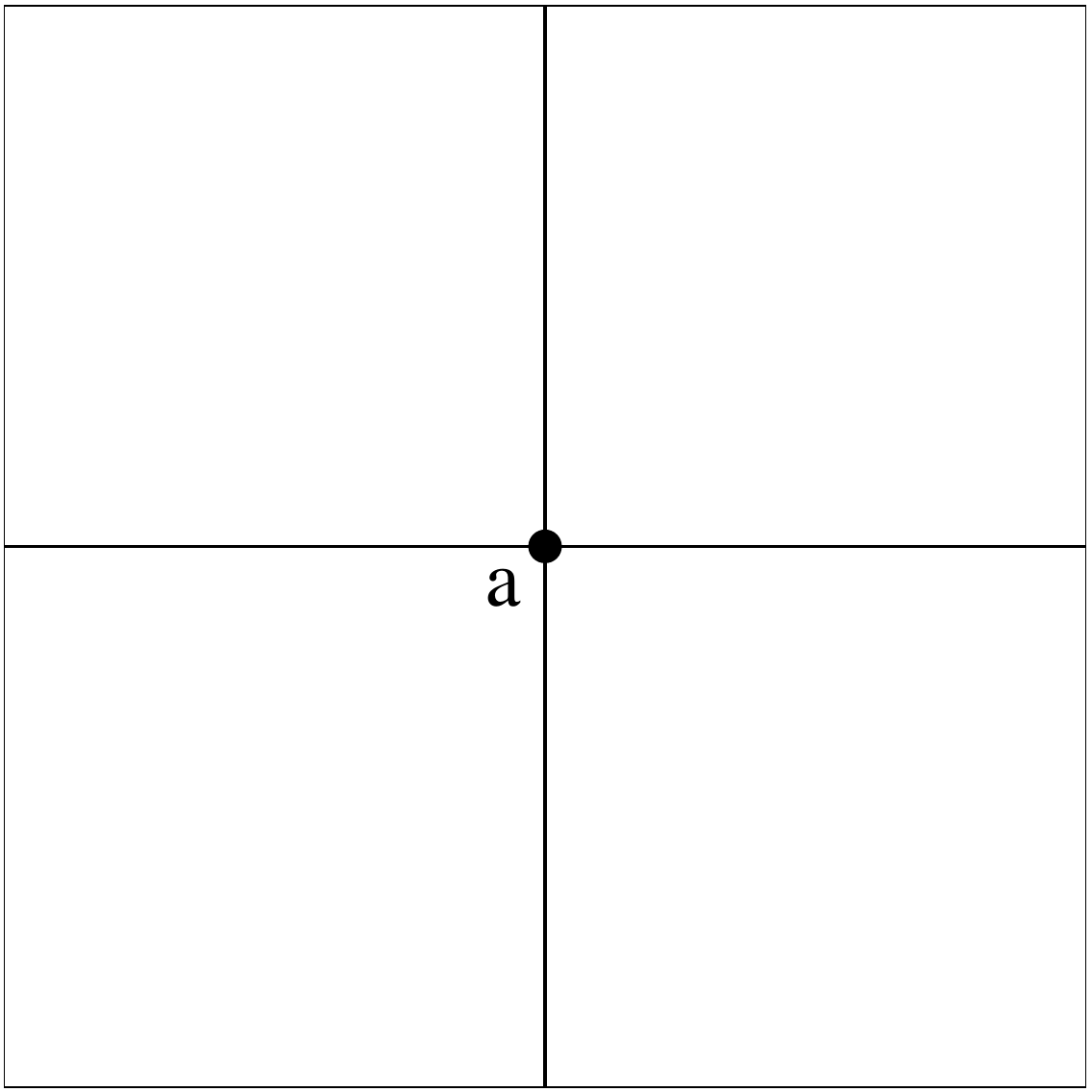} \ \ & \ \
\includegraphics[scale=0.3]{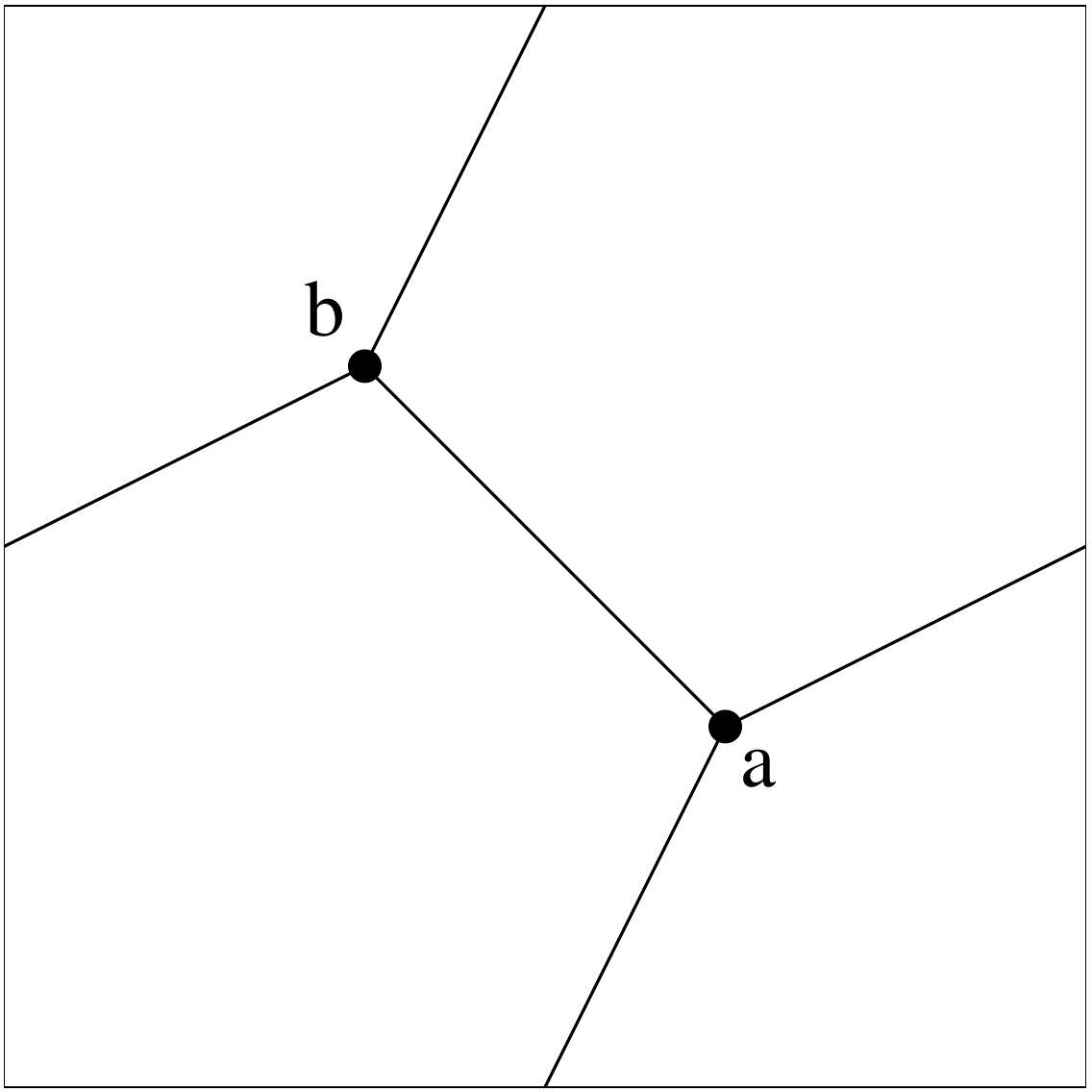}\\
\includegraphics[scale=0.3]{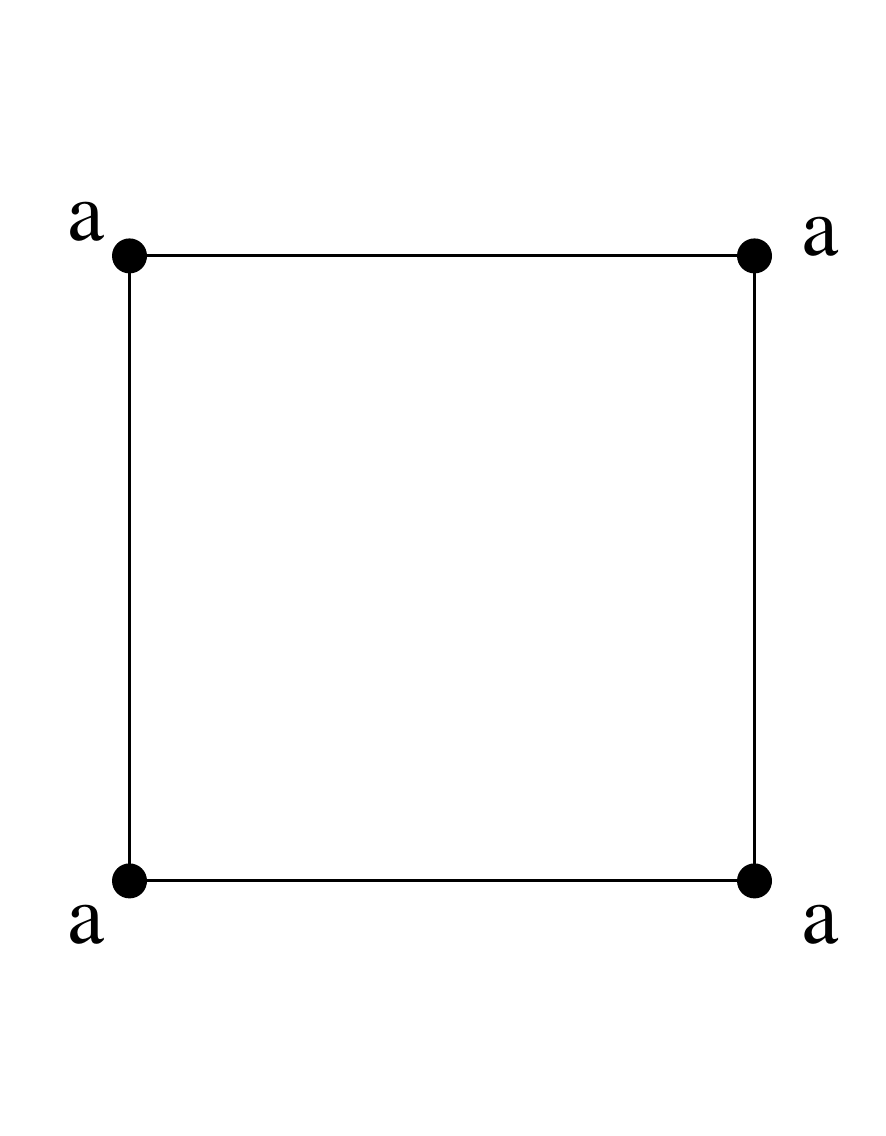} \ \ & \ \
\includegraphics[scale=0.3]{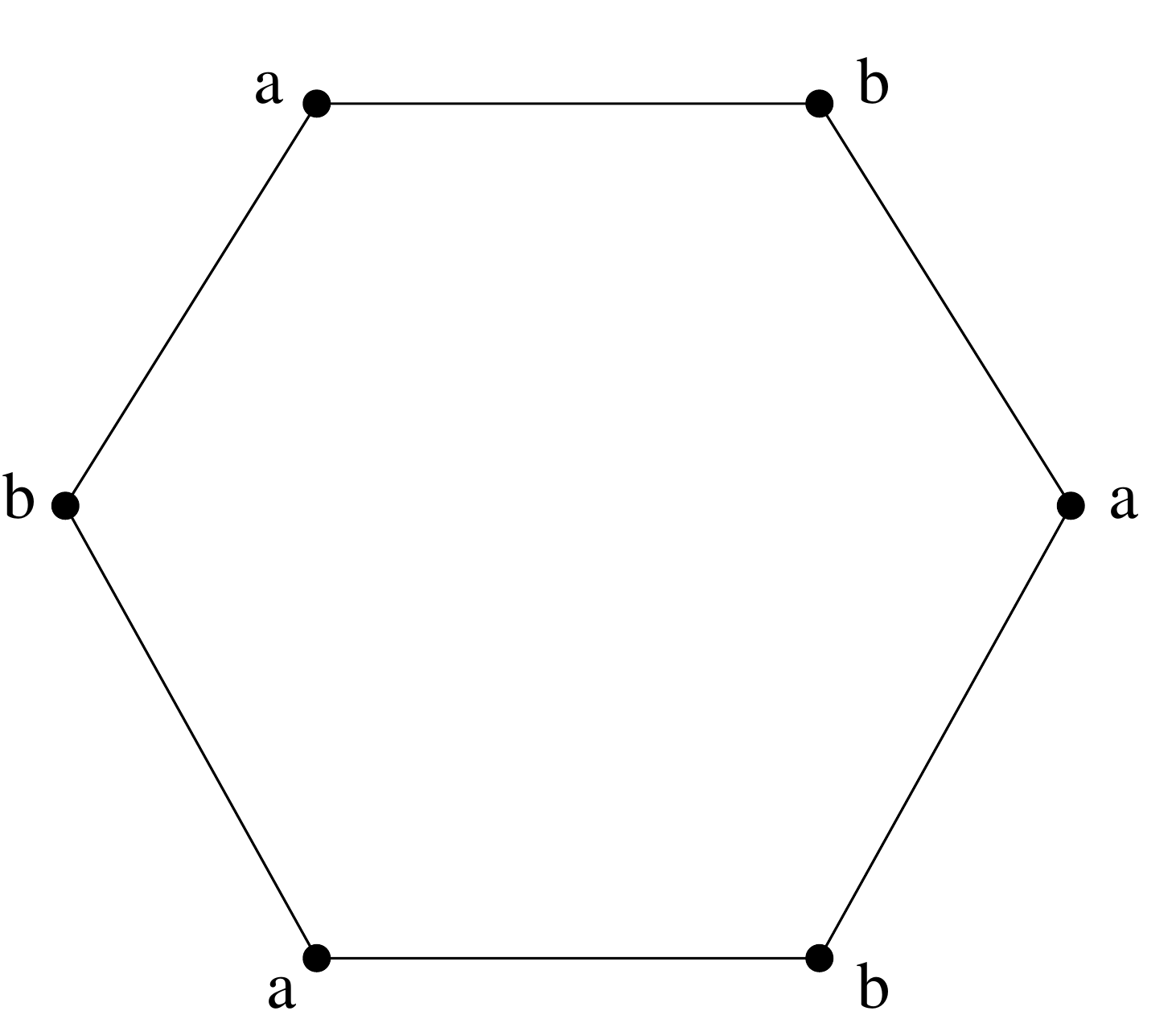}\\
Square case\ \ & \ \ Hexagonal case 
\end{tabular}

\caption{The two types of toroidal unicellular maps with two different representations for each case.} 
\label{fig:hexasquare}
\end{figure}

Given a map, we call \emph{stem}, a half-edge that is added to the map, attached to an angle of a vertex and whose other extremity is dangling in the face incident to this angle.

For $n\geq 1$, let $\mathcal T_r(n)$ denote the set of toroidal
unicellular maps  rooted on a particular angle, with exactly $n$
vertices, $n+1$ edges and $2n-1$ stems distributed as follows (see
figure~\ref{fig:k7-squarebox-nodir} for an example in
$\mathcal T_r(7)$ where the root angle is represented with the usual
"root" symbol in the whole paper.). The vertex incident to the root
angle is called the \emph{root vertex}.  A vertex that is not the root
vertex, is incident to exactly $2$ stems if it is not a special
vertex, $1$ stem if it is the special vertex of a hexagon and $0$ stem
if it is the special vertex of a square. The root vertex is incident
to $1$ additional stem, i.e.  it is incident to exactly $3$ stems if
it is not a special vertex, $2$ stems if it is the special vertex of a
hexagon and $1$ stem if it is the special vertex of a
square. Moreover, one of the stem incident to the root vertex, called
the \emph{root stem}, is incident to the root angle and just after the
root angle in \ccw order around the root vertex.

\begin{figure}[h]
\center
\includegraphics[scale=0.4]{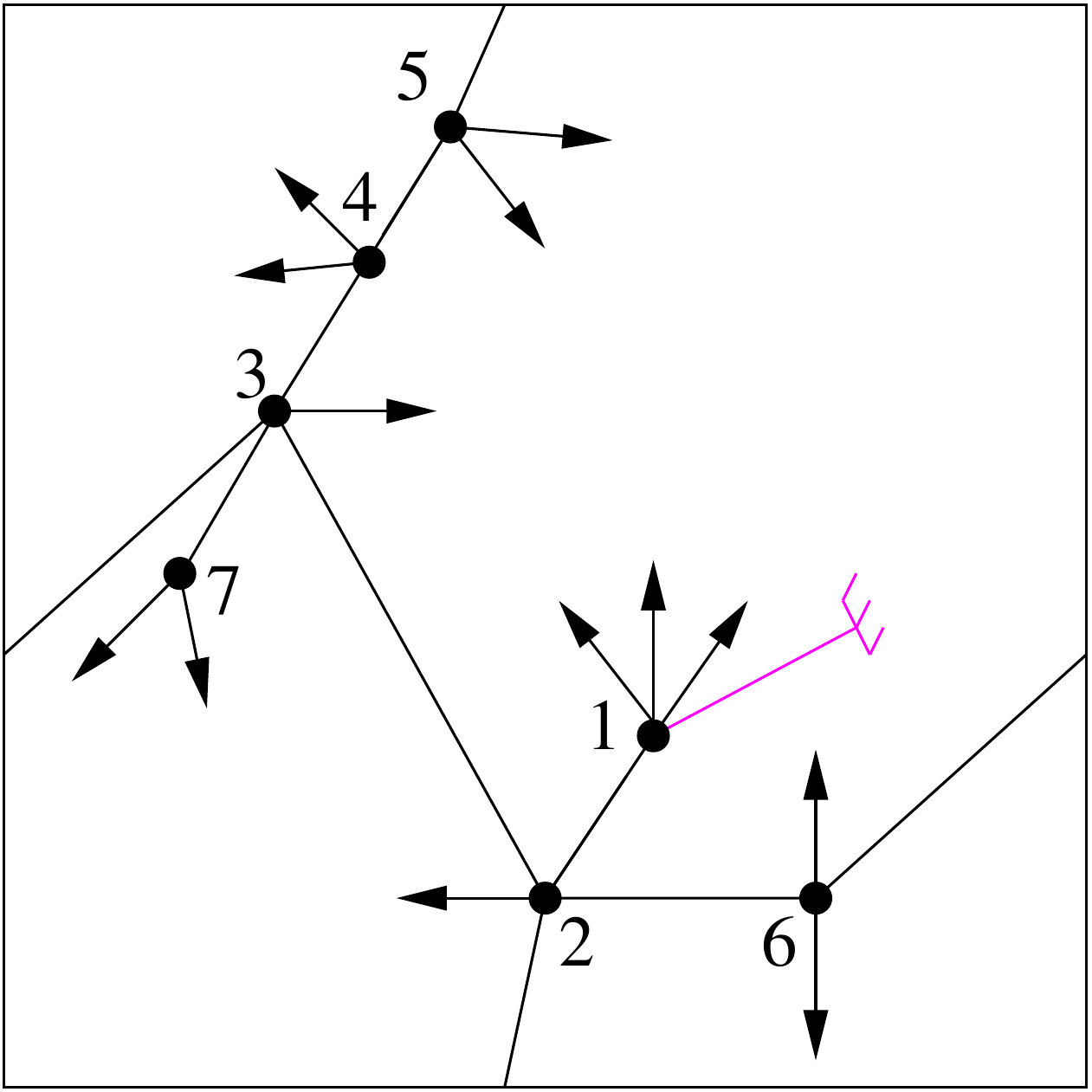} \caption{Example
  of an element of $\mathcal T_r(7)$.}
\label{fig:k7-squarebox-nodir}
\end{figure}

\subsection{Closure procedure}

Given an element $T$ of $\mathcal T_r(n)$, there is a generic way to
attach step by step all the dangling extremities of the stems of $T$
to build a toroidal triangulation.  Let $T_0=T$, and, for
$1\leq k \leq 2n-1$, let $T_{k}$ be the map obtained from $T_{k-1}$ by
attaching the extremity of a stem to an angle of the map (we explicit
below which stems can be attached and how). The \emph{special face of
  $T_0$} is its only face. For $1\leq k \leq 2n-1$, the \emph{special
  face of $T_{k}$} is the face on the right of the stem of $T_{k-1}$
that is attached to obtain $T_{k}$ (the stem is by convention oriented
from its incident vertex toward its dangling part).  For
$0\leq k\leq 2n-1$, the border of the special face of $T_k$ consists
of a sequence of edges and stems. We define an \emph{admissible
  triple} as a sequence $(e_1,e_2,s)$, appearing in \ccw order along
the border of the special face of $T_k$, such that $e_1=(u,v)$ and
$e_2=(v,w)$ are edges of $T_k$ and $s$ is a stem attached to $w$. The
\emph{closure} of this admissible triple consists in attaching $s$ to
$u$, so that it creates an edge $(w,u)$ oriented from $w$ to $u$ and
so that it creates a triangular face $(u,v,w)$ on its left side.  The
\emph{complete closure} of $T$ consists in closing a sequence of
admissible triples, i.e.  for $1\leq k \leq 2n-1$, the map $T_{k}$ is
obtained from $T_{k-1}$ by closing any admissible triple.

Figure~\ref{fig:closure} is the hexagonal representation of the
example of Figure~\ref{fig:k7-squarebox-nodir} on which a complete
closure is performed. We have represented here the unicellular map as
an hexagon since it is easier to understand what happen in the unique
face of the map. The map obtained by performing the complete closure
procedure is the clique on seven vertices $K_7$.

\begin{figure}[h]
\center
\begin{tabular}{cc}
\includegraphics[scale=0.4]{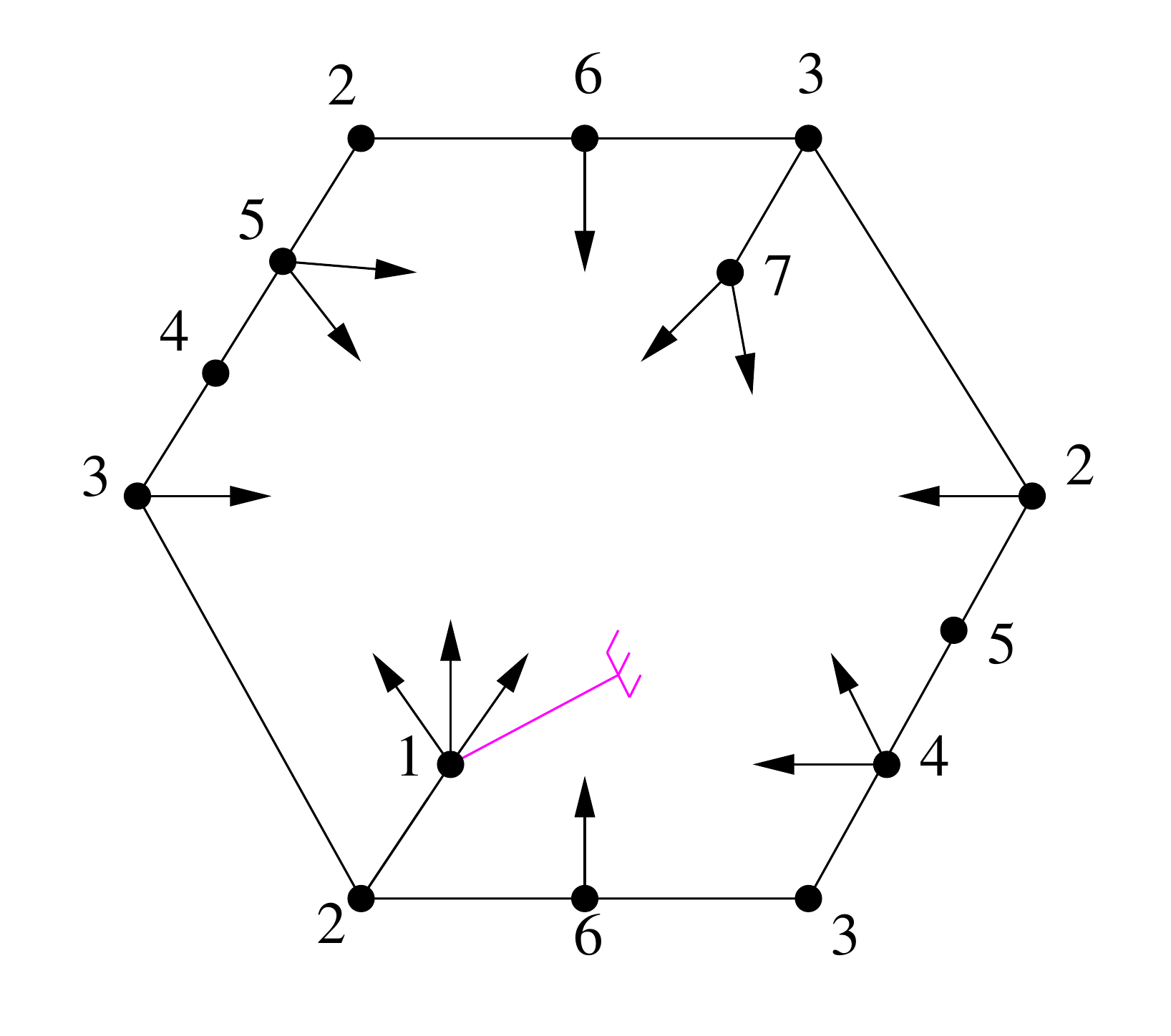}  
 & 
\includegraphics[scale=0.4]{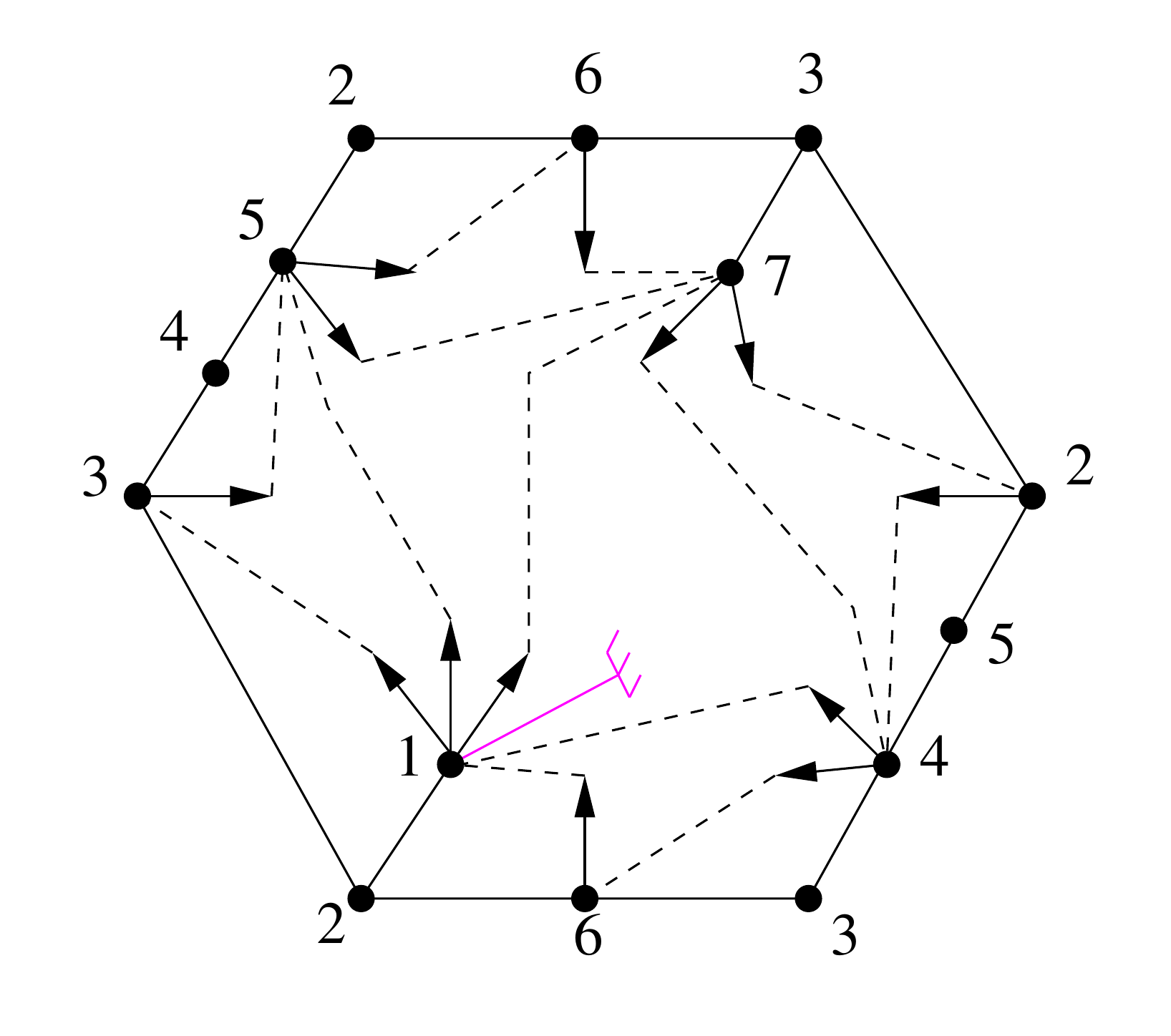}\\
A unicellular map of $\mathcal T_{r,s,b}(7)$ & The complete closure gives $K_7$ \\
\end{tabular}
\caption{Example of the complete closure procedure.}
\label{fig:closure}
\end{figure}

Note that, for $0\leq k\leq 2n-1$, the special face of $T_k$ contains
all the stems of $T_k$. The closure of a stem reduces the number of
edges on the border of the special face and the number of stems by
$1$. At the beginning, the unicellular map $T_0$ has $n+1$ edges and
$2n-1$ stems. So along the border of its special face, there are
$2n+2$ edges and $2n-1$ stems. Thus there is exactly three more edges
than stems on the border of the special face of $T_0$ and this is
preserved while closing stems. So at each step there is necessarily at
 least one admissible triple and the sequence $T_k$ is well defined.
Since the difference of three is preserved, the special face of
$T_{2n-2}$ is a quadrangle with exactly one stem. So the attachment
of the last stem creates two faces that have size three and at the end
$T_{2n-1}$ is a toroidal triangulation.  Note that at a given step
there might be several admissible triples but their closure are
independent and the order in which they are closed does not modify
the obtained triangulation $T_{2n-1}$.

When a stem is attached on the root angle, then, by convention, the
new root angle is maintained on the right side of the extremity of the
stem, i.e. the root angle is maintained in the special face.  A
particularly important property when attaching stems is when the
complete closure procedure described here never \emph{wraps over the
  root angle}, i.e. when a stem is attached, the root angle is
always on its right side in the special face. The property of never
wrapping over the root angle is called \emph{safe} (an analogous
property is sometimes called "balanced" in the planar case but we
prefer to keep the word "balanced" for something else in the current
paper).  Let $\mathcal T_{r,s}(n)$ denote the set of elements of
$\mathcal T_r(n)$ that are safe.

Consider an element $T$ of $\mathcal T_{r,s}(n)$ with root angle
$a_0$. Then for $0\leq k\leq 2n-2$, let $s$ be the first stem met while walking
\ccw from $a_0$ in the special face of $T_k$. An essential property
from~\cite{despre2017encoding} is that before $s$, at least two edges
are met and thus the last two of these edges form an admissible triple
with $s$. So one can attach all the stems of $T$ by starting from
the root angle $a_0$ and walking along the face of $T$ in \ccw order
around this face: each time a stem is met, it is attached in order
to create a triangular face on its left side. Note that in such a
sequence of admissible triples closure, the last stem that is
attached is the root stem of $T$.

\subsection{Canonical orientation and balanced property}

For $n\geq 1$, consider an element $T$ of $\mathcal T_{r}(n)$ whose
edges and stems are oriented w.r.t.~the root angle $a_0$ as follows
(see Figure~\ref{fig:orientation} that corresponds to the example of
Figure~\ref{fig:k7-squarebox-nodir}): the stems are all outgoing, and
while walking \cw around the unique face of $T$ from $a_0$, the first
time an edge is met, it is oriented \ccw w.r.t.~the face of $T$. This
orientation plays a particular role and is called the \emph{canonical
  orientation} of $T$.

\begin{figure}[h]
\center
\includegraphics[scale=0.4]{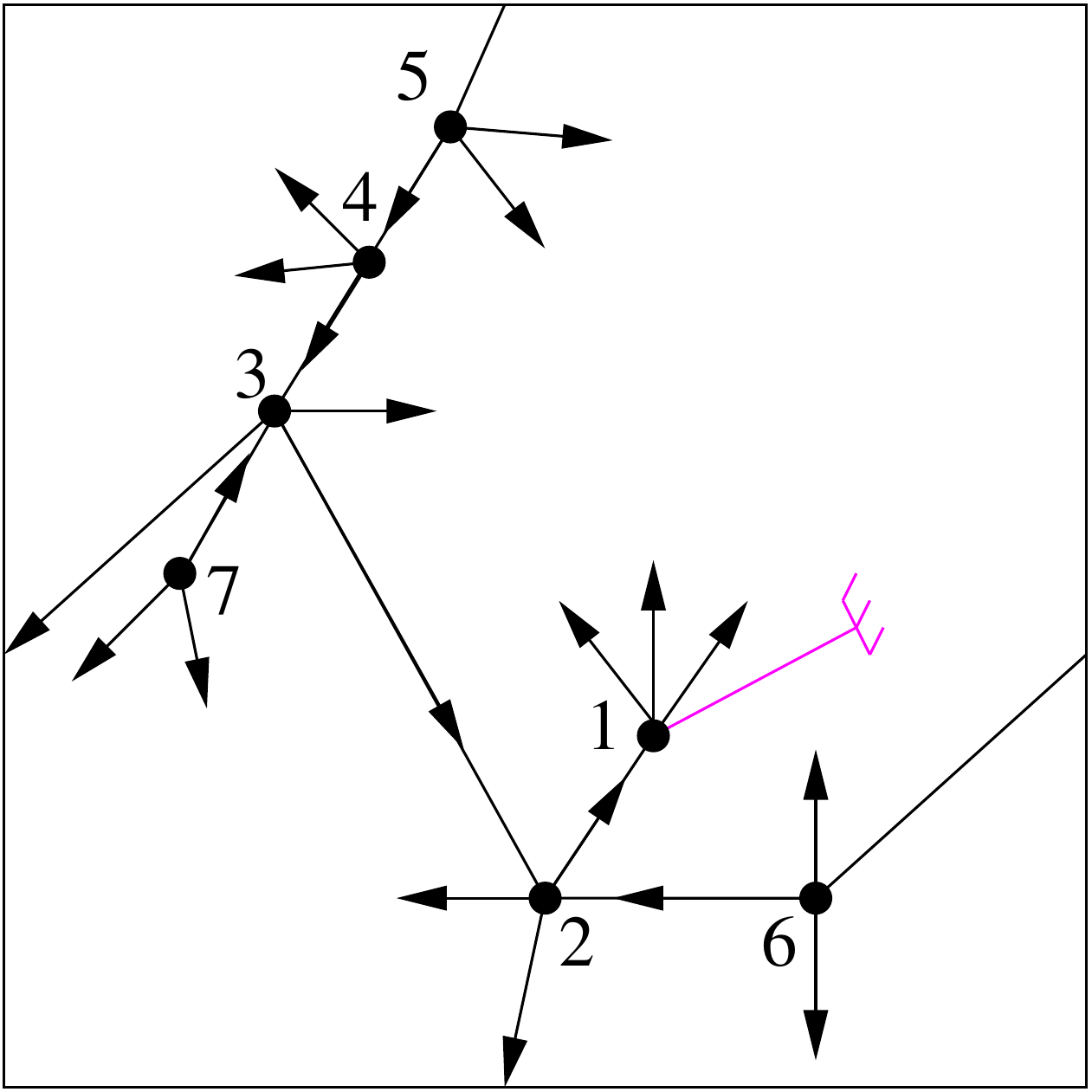}
\ \ \ \includegraphics[scale=0.4]{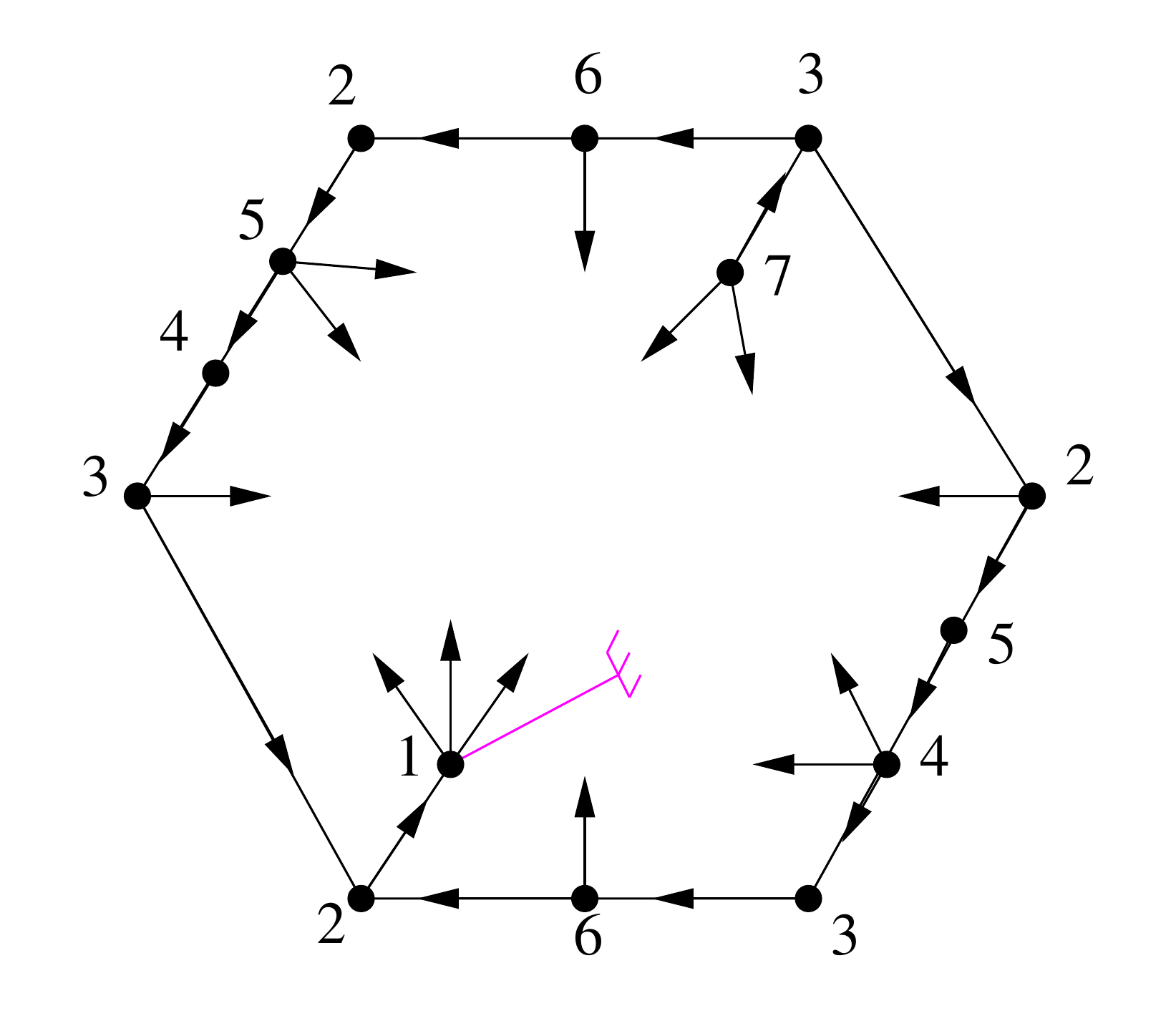}
\caption{Orientation of the edges and stems of an element of $\mathcal T_r(7)$.} 
\label{fig:orientation}
\end{figure}
 
For a cycle $C$ of $T$, given with a traversal direction, let
$\gamma(C)$ be the number of outgoing edges and stems that are
incident to the right side of $T$ minus the number of outgoing edges
and stems that are incident to its left side.  A unicellular map of
$\mathcal T_{r}(n)$ is said to be \emph{balanced} if $\gamma(C)=0$ for
all its (non-contractible) cycles $C$. Let us call
$\mathcal T_{r,s,b}(n)$ the set of balanced elements of
$\mathcal T_{r,s}(n)$.
 
Figure~\ref{fig:orientation} is an example of an element of
$\mathcal T_{r,s,b}(7)$.  The values $\gamma$ of the cycles of the
unicellular map are much more easier to compute on the left
representation.
 
 A consequence of~\cite{despre2017encoding} (see the proof of Theorem
 7 where $\mathcal T_{r,s,b}(n)$ is called
 $\mathcal U'_{r,b,\gamma_0}(n)$ and $\mathcal G(n)$ is called
 $\mathcal T'_r(n)$), is that, for $n\geq 1$, the complete closure
 procedure is indeed a bijection between elements of
 $\mathcal T_{r,s,b}(n)$ and $\mathcal G(n)$, that we denote $\Phi$
 in the curent paper:

\begin{theorem}[\cite{despre2017encoding}]
\label{them:bijectionbenjamin}
For $n\geq 1$, there is a bijection between $\mathcal T_{r,s,b}(n)$ and $\mathcal G(n)$.
\end{theorem}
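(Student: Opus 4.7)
The plan is to reduce the statement to the Schnyder wood machinery of \cite{despre2017encoding,leveque2017generalization} and verify by hand the two directions of the claimed bijection.

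First I would check that the complete closure map $\Phi$ is well-defined on $\mathcal T_{r,s,b}(n)$. The excerpt already records the edge-versus-stem count on the border of the special face, so at every step an admissible triple exists and $T_{2n-1}$ is a toroidal triangulation with $n$ vertices and $3n$ edges. I would then show that (i) the resulting triangulation is essentially simple, (ii) the root angle is preserved as a corner of its incident triangle throughout the procedure, and (iii) this triangle is maximal. Point (i) comes from the balanced hypothesis: a contractible loop or a pair of homotopic multiple edges would, in the canonical orientation carried over from $T$, enclose a region whose boundary cycle $C$ would satisfy $\gamma(C)\neq 0$, contradicting balancedness. Point (ii) follows from safeness, which is exactly the condition that no admissible closure ever wraps around $a_0$. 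Point (iii) follows by inspecting the last closure step: the root stem is attached last, creating the root triangle, and at that stage the face on its interior side is empty, so the triangle is maximal by inclusion.

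Next I would construct the inverse $\Psi$. Given $G\in\mathcal G(n)$, I would invoke the existence and uniqueness of the balanced minimal Schnyder wood of $G$ (the ``canonical orientation''), which is the piece of \cite{despre2017encoding,leveque2017generalization} that does the heavy lifting. Using this 3-orientation I would define a canonical spanning unicellular submap by keeping, at each vertex, only certain outgoing edges, turning the others into stems attached on the correct side so as to recover the prescribed stem distribution (one stem per non-special non-root vertex doubled, adjusted at special and root vertices as in the definition of $\mathcal T_r(n)$). Orienting edges and stems according to the rule in the definition of the canonical orientation then produces an element of $\mathcal T_{r}(n)$. Safeness of the output is read off from the fact that the root angle lies inside the maximal root triangle of $G$, and the balanced property for $\Psi(G)$ transports directly from the balanced property of the Schnyder wood.

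Finally I would show $\Phi\circ\Psi=\mathrm{id}$ and $\Psi\circ\Phi=\mathrm{id}$. For $\Phi\circ\Psi$, I would prove by induction on the closure steps that at every stage the current partial map coincides with the partial opening of $G$: the admissible triple chosen at step $k$ corresponds exactly to a single edge of $G$ that had been turned into a stem in $\Psi(G)$, and its reattachment reconstructs that edge of $G$ together with the triangular face on its left. For $\Psi\circ\Phi$, I would read the canonical orientation back from the closure: the orientation rules in the definition of the canonical orientation (stems outgoing, first edge met while walking clockwise from $a_0$ oriented counterclockwise with respect to the face) match the conventions used when closing admissible triples, so the cut-open unicellular map is $T$ itself.

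The main obstacle is the uniqueness of the canonical orientation and its compatibility with safeness: balancedness must single out exactly one element of the distributive lattice of Schnyder woods, and on that canonical element the opening procedure must never cut in a way that would leave the root angle on the wrong side of a future closure. This is precisely what is done in \cite{despre2017encoding} (with the translation of notation $\mathcal T_{r,s,b}(n)=\mathcal U'_{r,b,\gamma_0}(n)$ and $\mathcal G(n)=\mathcal T'_r(n)$ indicated in the excerpt), so the proof ultimately consists in carefully translating their statement into the present combinatorial setup and checking that all rooting conventions agree.
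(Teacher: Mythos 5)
Your overall architecture --- the complete closure procedure as the forward map, an inverse built from the canonical balanced minimal Schnyder wood, and a verification that the two compositions are identities --- is exactly the route taken in \cite{despre2017encoding}, which the present paper simply cites for Theorem~\ref{them:bijectionbenjamin} (no proof is given here beyond the pointer to the proof of Theorem~7 of that reference and the dictionary of notation). So at the level of strategy there is nothing to diverge from.

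One step of your sketch is wrong as written, however. You justify essential simplicity of $\Phi(T)$ by claiming that a contractible loop or a pair of homotopic multiple edges would bound a region whose boundary cycle $C$ satisfies $\gamma(C)\neq 0$, contradicting balancedness. But the balanced property, as defined both here and in \cite{despre2017encoding}, only imposes $\gamma(C)=0$ for \emph{non-contractible} cycles; it places no constraint on contractible closed walks, so no contradiction arises this way. The correct argument uses only the fact that the closure endows the triangulation with a $3$-orientation: if a closed walk of length $t$ bounds a disk, Euler's formula combined with outdegree $3$ forces the number of edges leaving the walk into the disk to equal $t-3$ (this is precisely the identity $y=t-3$ appearing in the proof of Lemma~\ref{lem:contractible} in the appendix), which is negative for $t\in\{1,2\}$ and hence rules out contractible loops and homotopic multiple edges. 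Balancedness, together with minimality of the orientation, is instead what is needed to guarantee that the face containing the root angle sits inside a \emph{maximal} triangle and that the inverse map lands back in $\mathcal T_{r,s,b}(n)$; your treatment of maximality (``the face on its interior side is empty, so the triangle is maximal'') likewise conflates being a face with being maximal by inclusion and would need the global argument of Lemma~\ref{lem:rightwalkroot} rather than a local inspection of the last closure step.
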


The left of Figure~\ref{fig:closure} gives an example of a hexagonal
unicellular map in $\mathcal T_{r,s,b}(7)$. Note that on the right of
Figure~\ref{fig:closure}, the face containing the root angle, after
the closure procedure, is indeed
a maximal triangle, so the obtained triangulation is an element of
$\mathcal G(7)$ if rooted on the corner of the face corresponding to
the root angle.

Given an element $T$ of $\mathcal T_{r,s,b}(n)$, the canonical
orientation of $T$, defined previously, induces an orientation of the
edges of the corresponding triangulation $G$ of $\mathcal G(n)$ that
is also called the \emph{canonical orientation} of $G$. Note that in
this orientation of $G$, all the vertices have outdegree exactly $3$,
we call such an orientation a \emph{$3$-orientation}.  In fact this
orientation corresponds to a particular $3$-orientation that is called
the \emph{minimal balanced Schnyder wood of $G$ w.r.t. to the root
  face} (see~\cite{leveque2017generalization} for more on Schnyder
woods in higher genus).  We extend the definition of function $\gamma$
to $G$ by the following.  For a cycle $C$ of $G$, given with a
traversal direction, let $\gamma(C)$ be the number of outgoing edges
that are incident to the right side of $T$ minus the number of
outgoing edges that are incident to its left side. As shown
in~\cite{leveque2017generalization}, the canonical orientation of $G$
as the particular property that $\gamma(C)=0$ for all its
non-contractible cycles $C$, we call this property \emph{balanced}.
 
Figure~\ref{fig:canonical}, gives the canonical orientation of $K_7$
obtained from the canonical orientation of its corresponding element
in $\mathcal T_{r,s,b}(7)$ after a complete closure procedure.
 
\begin{figure}[h]
\center
\includegraphics[scale=0.4]{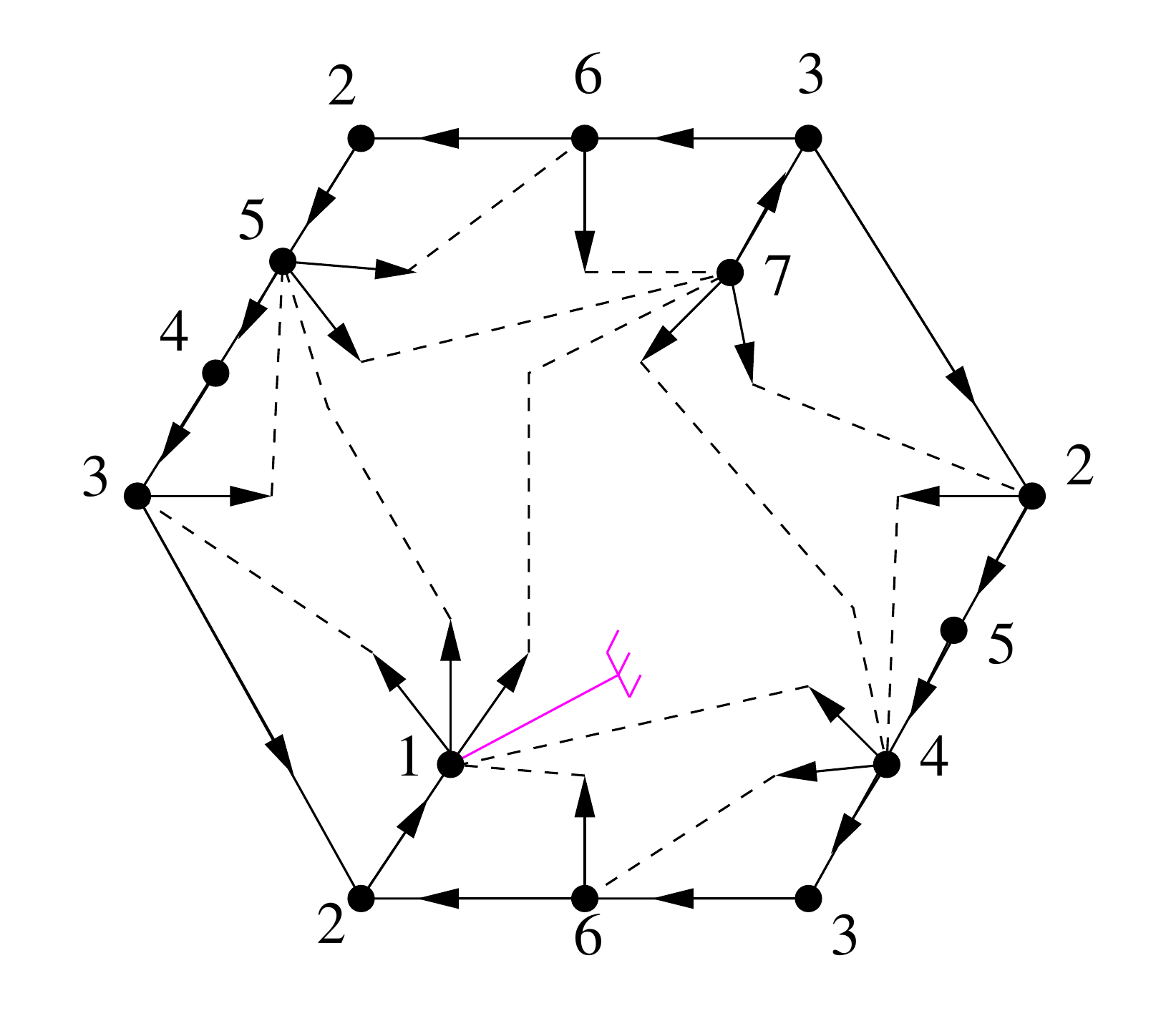}
\ \  
\includegraphics[scale=0.4]{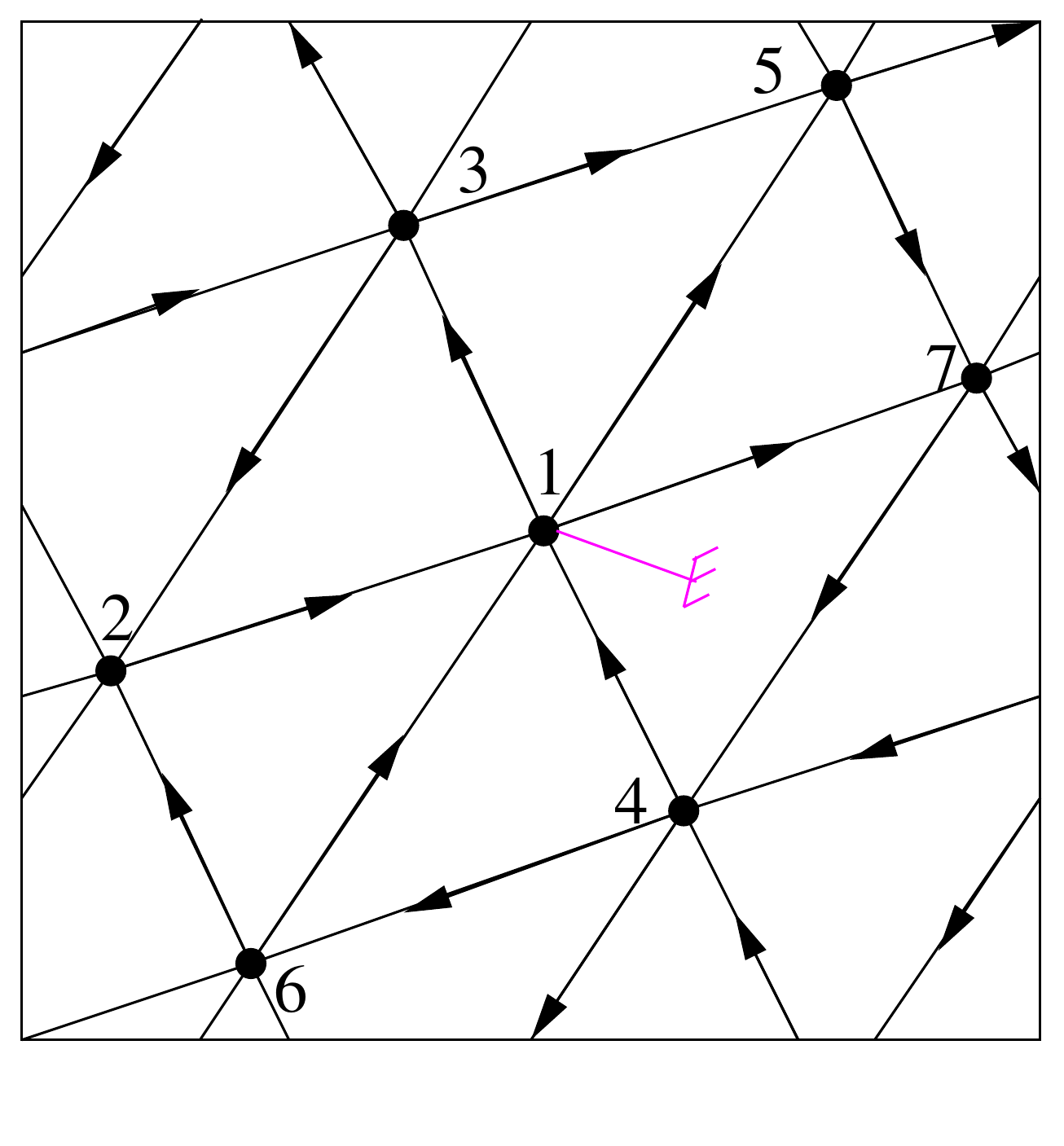}\\

\caption{The canonical orientation of $K_7$.}
\label{fig:canonical}
\end{figure}

\subsection{Unrooted unicellular maps}
\label{sec:unrootuni}

Given an element $T$ of $\mathcal T_{r,s,b}(n)$, we have seen that the
root stem $s_0$ can be the last stem that is attached by the
complete closure procedure.  Consequently, if one removes the root
stem $s_0$ from $T$ to obtain an unicellular map $U$ with $n$
vertices, $n+1$ edges and $2n-2$ stems, one can recover the graph
$T_{2n-2}$ by applying the closure procedure on $U$.

For $n\geq 1$, let $\mathcal U(n)$ denote the set of (non-rooted)
toroidal unicellular maps, with exactly $n$ vertices, $n+1$ edges and
$2n-2$ stems satisfying the following: a vertex is incident to exactly
$2$ stems if it is not a special vertex, $1$ stem if it is the special
vertex of a
hexagon and $0$ stem if it is the special vertex of a square. Thus, given an
element $T$ of $\mathcal T_{r}(n)$, the element $U$ obtained from $T$
by removing the root angle and the root stem is an element of
$\mathcal U(n)$.

Since an element $U$ of $\mathcal U(n)$ is non-rooted, it has no
"canonical orientation" as define previously for elements of
$\mathcal T_r(n)$.  Nevertheless one can still orient all the stems as
outgoing and compute $\gamma$ on the cycles of $U$ by considering
only its stems in the counting (and not the edges nor the root stem
anymore). For a cycle $C$ of $U$, given with a traversal direction,
let $\gamma(C)$ be the number of outgoing stems that are incident to
the right side of $U$ minus the number of outgoing stems that are
incident to its left side.  A unicellular map of $\mathcal U(n)$ is
said to be \emph{balanced} if $\gamma(C)=0$ for all its
(non-contractible) cycles $C$.  Let us call $\mathcal U_{b}(n)$ the
set of elements of $\mathcal U(n)$ that are balanced.

As remarked in~\cite{despre2017encoding}, an interesting property is
that an element $U$ of $\mathcal U(n)$ is balanced if and only if any
element $T$ of $\mathcal T_{r}(n)$ obtained from $U$ by adding a root
stem anywhere in $U$ is balanced (recall that in $U$ we use the
canonical orientation to compute $\gamma$).  Moreover, given an
element $T$ of $\mathcal T_{r,b}(n)$, then the element $U$ of
$\mathcal U(n)$, obtained by removing the root angle, (the canonical
orientation,) and the root stem is balanced.

Figure~\ref{fig:noroot} is the element of $\mathcal U_{b}(7)$
corresponding to Figure~\ref{fig:orientation}.

\begin{figure}[h]
\center
\includegraphics[scale=0.4]{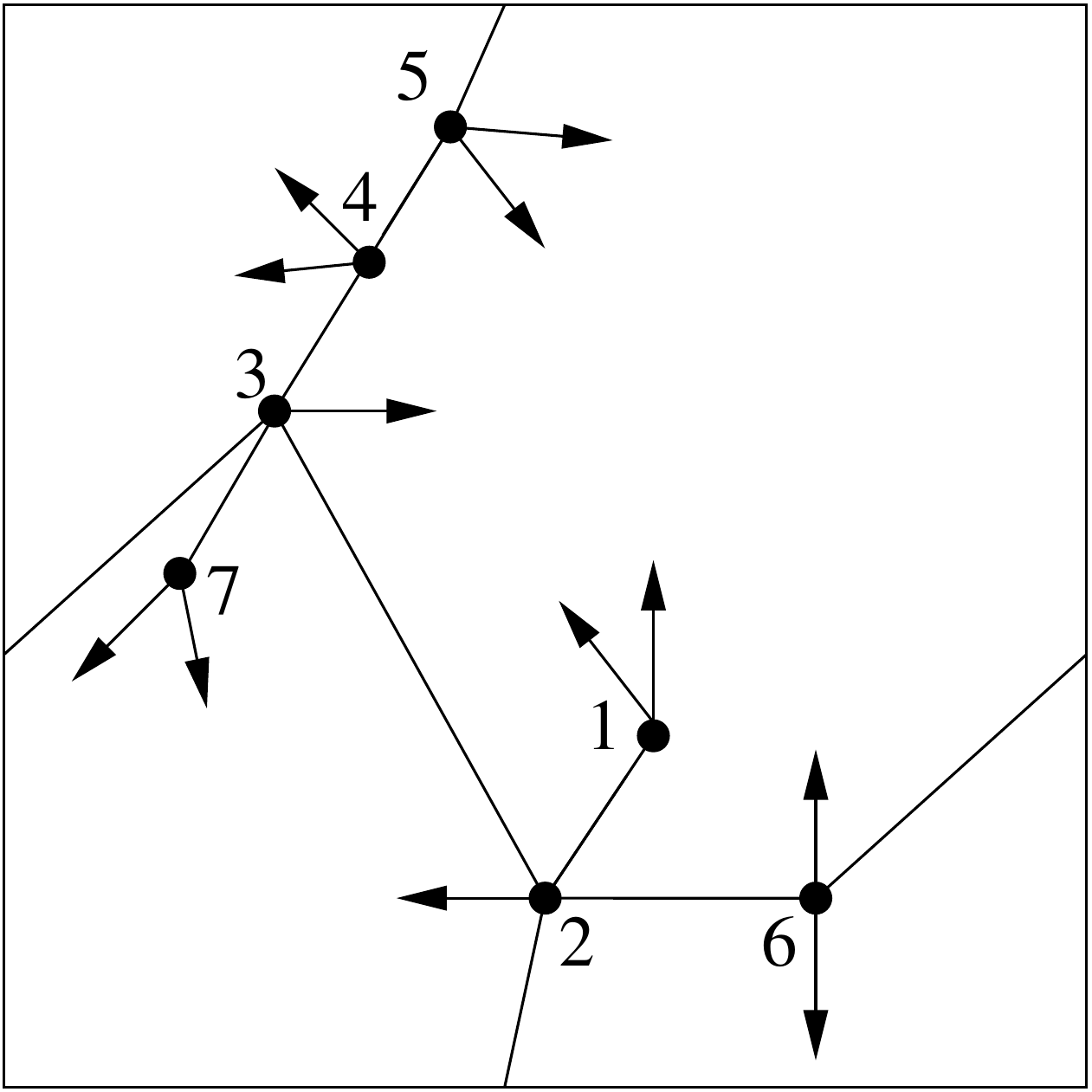}
\caption{Example of an element of $\mathcal U_b(7)$.}
\label{fig:noroot}
\end{figure}

\section{Labeling of the angles and distance properties}
\label{sec:label}

For $n\geq 1$, let $T$ be an element of $\mathcal T_{r,s,b}(n)$, and
$G=\Phi(T)$ the corresponding element of $\mathcal G(n)$ by
Theorem~\ref{them:bijectionbenjamin}. Let $V$ (resp. $E$) denotes the
set of vertices (resp. edges) of $G$. Let $a_0$ be the root angle of
$T$ and $v_0$ be its root vertex. We use the same notations for the
root angle and vertex of $G$ (while maintaining the root angle on the
right side of every stem during the complete closure procedure, as
explained in Section~\ref{sec:preliminaries}).  In this section, we
prove some relations between the graph distance in the triangulation
$G$ and a particular labeling of the vertices defined on the
unicellular map $T$.

\subsection{Definition and properties of the labeling function}

Let $\ell=4n+1$ be the number of angles of $T$. We add a special
dangling half-edge incident to the root angle of $T$, called the
\emph{root half-edge} (and not considered as a stem).  Let $\Gamma$ be
the obtained unicellular map. We define the \emph{root angle} of
$\Gamma$ as the angle of $\Gamma$ just after the root half-edge in
counterclockwise order around its incident vertex.  Let
$A=(a_0,\ldots,a_\ell)$ be the sequence of consecutive angles of
$\Gamma$ in clockwise order around the unique face of $\Gamma$ such
that $a_0$ is the root angle. Note that $a_\ell$ is incident to the
root half-edge. For $0\leq i \leq l-1$, two angles $a_i$ and $a_{i+1}$
are either consecutive around a stem or consecutive around an edge of
$\Gamma$.  We define a labeling function
$\lambda: A \rightarrow \mathbb{Z}$ as follows. Let
$\lambda(a_0)=3$. For $0\leq i \leq l-1$, let
$\lambda(a_{i+1})=\lambda(a_i)+1$ if $a_i$ and $a_{i+1}$ are
consecutive around a stem, and let $\lambda(a_{i+1})=\lambda(a_i)-1$
if they are consecutive around an edge.  By definition, the
unicellular map $\Gamma$ has $n+1$ edges and $2n-1$ stems.  While
going clockwise around the unique face of $\Gamma$, each edge is
encountered twice, so $\lambda(a_\ell)=2n-1-2(n+1)+\lambda(a_0)=0$.
Figure~\ref{fig:labeling} gives an example of the labeling function of
the unicellular map of Figure~\ref{fig:orientation}.

\begin{figure}[h]
\center
\includegraphics[scale=0.4]{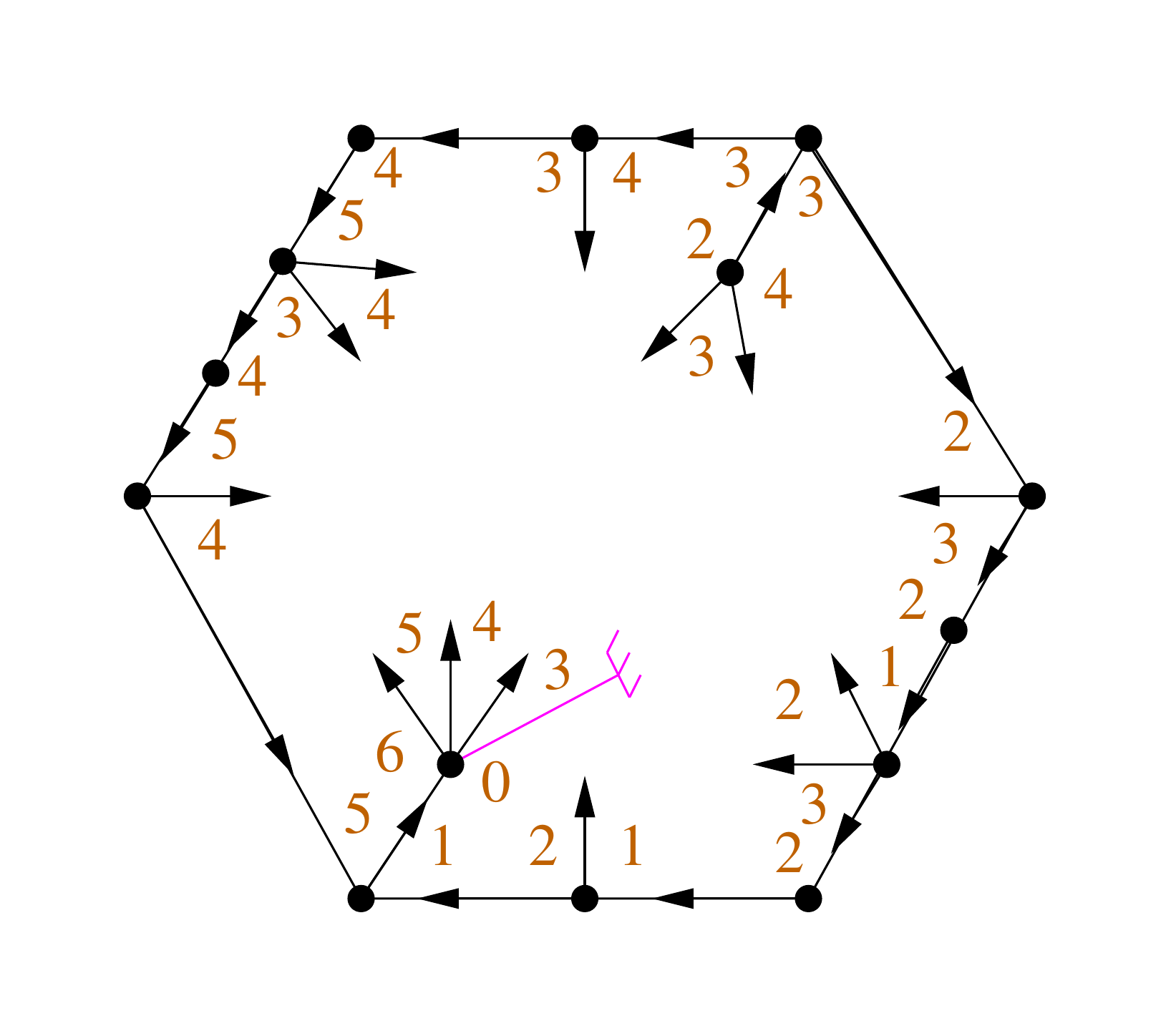}
\caption{Labeling of the angles of the unicellular map.}
\label{fig:labeling}
\end{figure}

Given a stem $s$ of $\Gamma$, we define the label $\lambda(s)$ of $s$ as the label of the angle that is just before $s$ in counterclockwise order around its incident vertex.

The complete closure procedure is formally defined on $T$ but we can
consider that it behaves on $\Gamma$ since the presence of the root
half-edge in $\Gamma$ does not change the procedure as $T$ is safe
(the root half-edge is maintained on the right of every stem during
the closure).  Let $\Gamma_0=\Gamma$, and, for $1\leq k \leq 2n-1$,
let $\Gamma_{k}$ be the map obtained from $\Gamma_{k-1}$ by closing an
admissible triple of $\Gamma_{k-1}$. By the bijection $\Phi$ we have
that $\Gamma_{2n-1}$ is the graph $G$ with an additional dangling
half-edge incident to the root angle, we call this graph $G^+$.  We
propagate the labeling $\lambda$ of $\Gamma$ during the closure
procedure by the following.  For $1\leq k \leq 2n-1$, when the stem
$s$ of $\Gamma_{k-1}$ is attached, it splits an angle $a$ of
$\Gamma_{k-1}$ into two angles of $\Gamma_{k}$ that both inherit the
label of $a$ in $\Gamma_{k-1}$. In other words, the complete closure
procedure just splits some angles that keeps the same label on each
side of the split. We still note $\lambda$ the labeling of the angles
of $\Gamma_k$. It is clear that the labeling of $G^+=\Gamma_{2n-1}$
that is obtained is independent from the order in which the admissible
triples are closed. We denote $\mathcal{A}(i)$ the set of angles of
$G^+$ which are splited from $a_i$ by the complete closure
procedure. Note that for all $a\in \mathcal{A}(i)$, we have
$\lambda(a)=\lambda(a_i)$.  Given a stem $s$ of $\Gamma$, we denote
$a(s)$ the angle of $\Gamma$ corresponding to where $s$ is attached
during the complete closure procedure (i.e. $s$ is attached to an
angle that comes from some splittings of $a(s)$).

Consider a stem $s$ of $T$. Let $i,j$, be such that $a_i$ is the angle just before $s$ in \ccw order around its incident vertex and $a_j=a(s)$. The fact that $T$ is safe implies that $0\leq i<j\leq \ell$.

\begin{lemma}
\label{lemma1}
For $0\leq k \leq 2n-1$,
the rules that are used to define the labeling function $\lambda$ are still valid around the special face of  $\Gamma_{k}$, i.e. the root angle of $\Gamma_{k}$ is labeled $3$, and while walking \cw around the special face of $\Gamma_{k}$, the labels are increasing by one around a stem and decreasing by one along an edge until finishing at label $0$ at the last angle.

In particular, for each stem $s$ of $\Gamma$, we have $\lambda(a(s))=\lambda(s)-1$.
Moreover, all the angles of $\Gamma$ that appear strictly between $s$ and $a(s)$ in \cw order along the unique face of $\Gamma$ have labels that are greater or equal to $\lambda(s)$.
\end{lemma}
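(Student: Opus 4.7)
The plan is to proceed by induction on $k$ for the main statement, and to deduce the subsequent claims from it. For the base case $k=0$ the labeling rules hold on $\Gamma=\Gamma_0$ by definition of $\lambda$, and the count performed just above the lemma gives $\lambda(a_\ell)=0$.

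For the inductive step, close an admissible triple $(e_1,e_2,s)$ with $e_1=(u,v)$, $e_2=(v,w)$ and $s$ attached at $w$. Locally, in \cw order along the boundary of the special face of $\Gamma_{k-1}$, the triple and its neighbouring angles appear as $x,s,y,e_2,z,e_1,t$, where $x,y$ are the angles at $w$ adjacent to $s$, $z$ is the angle at $v$ between the two edges, and $t$ is the angle at $u$ to which $s$ is about to be attached. The induction hypothesis yields $\lambda(y)=\lambda(x)+1$, $\lambda(z)=\lambda(x)$ and $\lambda(t)=\lambda(x)-1$. Closing the triple removes $s,e_2,e_1,y,z$ from the boundary of the special face and splits $t$ into two angles $t^{\text{face}},t^{\text{tri}}$ which both inherit $\lambda(t)$. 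On the new boundary one therefore reads $\ldots,x,\text{(new edge)},t^{\text{face}},\ldots$: the label drops by exactly $1$ across the new edge, which is the rule for an edge. By the safeness of $T$ the root angle and the last angle are never touched by any closure step, so their labels remain $3$ and $0$. The second claim then follows immediately: with the convention that \cw traversal of the face corresponds, at each vertex, to following the half-edges in \ccw order around that vertex, the angle $x$ is a split piece of the angle just before $s$ in \ccw order around $w$ in $\Gamma$, so $\lambda(x)=\lambda(s)$; as $t$ is a split piece of $a(s)$, one obtains $\lambda(a(s))=\lambda(t)=\lambda(s)-1$.

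For the third claim I would argue by induction on the step at which $s$ is closed, relying on the non-crossing structure of the arcs $(s,a(s))$. Any stem $s'$ strictly between $s$ and $a(s)$ in \cw order along $\Gamma$ must have $a(s')$ in the same sub-path, for otherwise its arc would cross the arc of $s$, contradicting the fact that the closures can be performed in the order prescribed by safeness. Decompose the sub-path into top-level components, each being either an original edge half of $\Gamma$ that is not nested inside any sub-arc, or a nested pair $(s',a(s'))$ together with the sub-arc between them. Each component contributes a net label change of exactly $-1$: for an edge, directly; for a nested pair, $+1$ at the stem plus $-2$ across the sub-arc by the second claim applied to $s'$. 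Since the total change from $\lambda(s)+1$ to $\lambda(a(s))=\lambda(s)-1$ is $-2$, there are exactly two top-level components. A short case analysis over these two components, together with the induction hypothesis of the third claim applied inside each nested sub-arc, then shows that every angle on the sub-path has label at least $\lambda(s)$. The main obstacle is justifying the non-crossing structure cleanly and controlling labels across the top-level decomposition; once that book-keeping is in place the case analysis is short.
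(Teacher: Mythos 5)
Your treatment of the first two claims is essentially the paper's own proof: induction on the closure step, the local computation $+1,-1,-1$ around the admissible triple, the observation that safeness protects the root angle and the last angle, and the identification of $\lambda(a(s))$ with the label of the split angle receiving the stem. That part is correct (up to the usual orientation bookkeeping, which you handle consistently).

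For the third claim, however, what you give is a plan rather than a proof, and the step you yourself flag as the "main obstacle" is a genuine gap. Your decomposition into top-level components needs two facts that are not established: (i) that for every stem $s'$ lying strictly between $s$ and $a(s)$, the target $a(s')$ lies in the closed arc from $s'$ to $a(s)$ (non-crossing of arcs), and (ii) that the degenerate case $a(s')=a(s)$ — which can occur, since several stems may attach to splits of the same angle of $\Gamma$ — does not break the claim that every angle strictly between $s$ and $a(s)$ is accounted for by exactly one component. Without (i) the count "each top-level component contributes $-1$, hence there are exactly two" does not get off the ground, and without (ii) the inductive control of labels inside nested components can leak below $\lambda(s)$ at a component endpoint. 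The paper avoids this bookkeeping entirely with a minimal-counterexample argument: if some angle $\alpha$ strictly between $s$ and $a(s)$ had $\lambda(\alpha)\le\lambda(a(s))$, choose one of minimum label; since every angle in that range must become an internal angle of some admissible triple closed before $s$, one gets $\lambda(\alpha)\in\{\lambda(a(s'))+1,\lambda(a(s'))+2\}$ for some earlier stem $s'$, and minimality forces $a(s')=a(s)$, whence $\lambda(\alpha)>\lambda(a(s))$, a contradiction. That route needs only the (already proved) identity $\lambda(a(s'))=\lambda(s')-1$ and no non-crossing lemma; I would recommend adopting it, or else supplying a careful proof of the nesting structure of the arcs $(s',a(s'))$ before running your component count.
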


\begin{proof}
  We prove the first part of the lemma by induction on $k$. Clearly
  the statement is true for $k=0$ by definition and properties of
  $\lambda$. Suppose now that for $1\leq k \leq 2n-1$, the statement
  is true for $\Gamma_{k-1}$.  Let $s$ be the stem of $\Gamma_{k-1}$
  that is attached to obtained $\Gamma_{k}$. Let $(e_1,e_2,s)$ be the
  admissible triple of $\Gamma_{k-1}$ involving $s$, when $s$ is
  attached. Let $\alpha_0,\alpha_1,\alpha_2,\alpha_3$ be the angles of
  the special face of $\Gamma_{k-1}$ that appears along the admissible
  triple $(e_1,e_2,s)$, such that
  $\alpha_0,s,\alpha_1,e_1,\alpha_2,e_2,\alpha_3$ appears
  consecutively in \cw order around the special face. So we have that
  the dangling part of $s$ is attached to the angle $\alpha_3$ to form
  $\Gamma_{k}$.  Since $T$ is safe, the root angle of $\Gamma_{k-1}$
  is distinct from $\alpha_1, \alpha_2, \alpha_3$. So, by induction,
  the rules of the labeling function applies in $\Gamma_{k-1}$ from
  $\alpha_0$ to $\alpha_3$. Thus
  $\lambda(\alpha_1)=\lambda(\alpha_0)+1$,
  $\lambda(\alpha_2)=\lambda(\alpha_1)-1$,
  $\lambda(\alpha_3)=\lambda(\alpha_2)-1$. So
  $\lambda(\alpha_3)=\lambda(\alpha_1)-1$, and the rules still apply
  in the special face of $\Gamma_{k}$.

A direct consequence of the above paragraph, is that for each stem $s$ of $\Gamma$, we have $\lambda(a(s))=\lambda(s)-1$.

Suppose by contradiction that there is a stem $s$ and an angle of $\Gamma$ that appear strictly between $s$ and $a(s)$ in \cw order along the unique face of $\Gamma$ whose label is less or equal to $\lambda(a(s))$. We choose such an angle $\alpha$ whose label is minimum. With the same notations of the angles  $\alpha_1,\alpha_2$ as above,
since $\lambda(\alpha_2)=\lambda(a(s))+1$ and $\lambda(\alpha_1)=\lambda(a(s))+2$, we have that neither $\alpha_1$ nor $\alpha_2$ comes from a splits of $\alpha$. So there exists an admissible triple $s'$, closed before $s$ is the complete closure procedure, and whose one of the two internal angles $\alpha'_1,\alpha'_2$ (with analogous notations as above)
is $\alpha$ (or comes from a split of $\alpha$). By the rule of the labeling, we have $\lambda(\alpha)\in\{\lambda(a(s'))+1,\lambda(a(s'))+2\}$ (depending on which internal angle it is, either $\alpha'_1$ or $\alpha'_2$). Thus by minimality of $\alpha$, we have $a(s')=a(s)$, but then $\lambda(\alpha)\in\{\lambda(a(s))+1,\lambda(a(s))+2\}$, a contradiction.
\end{proof}

\begin{lemma}
\label{lem:stemcycle}
Consider a (non-contractible) cycle $C$ of $\Gamma$ of length $k$ that does not contain the root vertex. Then there is exactly $k-1$ stems attached to each side of $C$.
\end{lemma}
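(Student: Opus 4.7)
The strategy is to count the stems attached at vertices of $C$ and then to invoke the stems-only balanced condition for the associated unrooted unicellular map in order to conclude that these stems split evenly between the two sides.

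First, I would observe that $C$ contains every special vertex of $T$: special vertices by definition lie on every cycle of the toroidal unicellular map, and $C$ is such a cycle. Since $C$ avoids the root vertex by hypothesis, the root is therefore not special. Summing stem counts at the $k$ vertices of $C$ according to the distribution recalled in Section~\ref{sec:preliminaries} yields the same total in both topological cases: in the square case, $C$ contains the unique special vertex (with $0$ stems) and $k-1$ non-special, non-root vertices (with $2$ stems each), giving $0+2(k-1)=2k-2$; in the hexagonal case, $C$ contains the two special vertices (with $1$ stem each) and $k-2$ non-special, non-root vertices (with $2$ stems each), giving $2+2(k-2)=2k-2$. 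In either case the total number of stems attached at vertices of $C$ equals $2k-2$.

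Second, let $U\in\mathcal U_b(n)$ be the unrooted unicellular map obtained from $T$ by erasing the root angle and the root stem. Since $T\in\mathcal T_{r,s,b}(n)$ is balanced, the equivalence recalled in Section~\ref{sec:unrootuni} implies that $U$ is balanced in the stems-only sense, i.e.\ $\gamma_U(C')=0$ for every non-contractible cycle $C'$ of $U$. Applied to $C$ (which is still a cycle of $U$, with the same stems attached at its vertices since the root stem lives at the root vertex, off $C$), this gives $R_s=L_s$, where $R_s$ and $L_s$ denote the number of stems attached at vertices of $C$ on the right and left side of $C$, respectively.

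Combining $R_s+L_s=2k-2$ with $R_s=L_s$ yields $R_s=L_s=k-1$, which is the statement of the lemma. I do not expect a significant obstacle: both ingredients are direct consequences of structural properties already recalled from~\cite{despre2017encoding} in Section~\ref{sec:preliminaries} and Section~\ref{sec:unrootuni}; the only care needed is to match the stem distribution with the position of the root vertex and to check that the root stem does not perturb the count on $C$.
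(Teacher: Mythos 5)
Your proof is correct and follows essentially the same route as the paper's: count the stems attached to $C$ (getting $2(k-1)$ in both the square and hexagonal cases) and then use the balanced property of the associated unrooted unicellular map in $\mathcal U_b(n)$ to split them evenly between the two sides. Your version just makes the case-by-case stem count and the observation that all special vertices lie on $C$ more explicit than the paper does.
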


\begin{proof}
  As explained in Section~\ref{sec:unrootuni}, when one remove from
  $T$ the root stem, the canonical orientation and the root angle, one
  obtain an element of $\mathcal U_b(n)$. So we have that the number
  of stems attached to the left and right side of $C$ are the same. In
  both cases, whether $\Gamma$ is a square or hexagonal unicellular
  map, we have that $C$ is incident to exactly $2(k-1)$ stems, so
  there is exactly $k-1$ stems attached to each side of $C$.
\end{proof}

Note that if $v_0\in C$ then the conclusion of Lemma~\ref{lem:stemcycle} is not true since there is an additional stem attached to the root vertex.

\begin{lemma}
\label{lemma2}

For $0\leq i\leq \ell-1$, we have $\lambda(a_i) > 0$.
\end{lemma}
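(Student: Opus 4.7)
The plan is to prove a stronger statement by backward induction on the closure step $k \in \{0, 1, \ldots, 2n-1\}$: in the special face of $\Gamma_k$, walking \cw from $a_0$, every angle has strictly positive label except for the very last one, whose label is $0$ by Lemma~\ref{lemma1}. Taking $k = 0$ then gives the lemma, since the special face of $\Gamma_0 = \Gamma$ is the unique face and its \cw boundary traversal from $a_0$ is exactly the sequence $a_0, a_1, \ldots, a_\ell$.

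For the base case $k = 2n-1$, the map $\Gamma_{2n-1} = G^+$ is the triangulation $G$ with the root half-edge dangling in the root triangle. The boundary of the special face (the root face of $G$) traversed \cw from $a_0$ visits exactly four angles before reaching the root half-edge: $a_0$ at $v_0$, the corners at the two other vertices of the root triangle, and the angle at $v_0$ just before the root half-edge (which is $a_\ell$). By Lemma~\ref{lemma1} their labels are $3, 2, 1, 0$, so positivity except at the last angle holds.

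The inductive step, from $\Gamma_k$ back to $\Gamma_{k-1}$, corresponds to reintroducing one stem $s$, with the admissible-triple data $\alpha_0, \alpha_1, \alpha_2, \alpha_3$ as in the proof of Lemma~\ref{lemma1}. The only new angles in the special face of $\Gamma_{k-1}$ are $\alpha_1$ and $\alpha_2$, while $\alpha_3''$ of $\Gamma_k$ is merged back into $\alpha_3$ of $\Gamma_{k-1}$ with the same label. The labeling rules give $\lambda(\alpha_1) = \lambda(\alpha_0) + 1$ and $\lambda(\alpha_2) = \lambda(\alpha_0)$, so it suffices to check that $\lambda(\alpha_0) > 0$. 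The key geometric observation is that in $\Gamma_k$ the angle $\alpha_0$ is followed in \cw order by the newly created edge (the closure of $s$), whereas the last angle of the special face of $\Gamma_k$ is followed by the root half-edge; since a proper edge can never coincide with the dangling root half-edge, $\alpha_0$ is not that last angle, and the inductive hypothesis forces $\lambda(\alpha_0) \geq 1$. The last angle of $\Gamma_{k-1}$'s special face is then either $\alpha_3$ itself (when $\alpha_3''$ was the last angle of $\Gamma_k$'s special face) or coincides with that of $\Gamma_k$, so positivity at all non-last angles is preserved.

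The hard part is really this geometric distinction between ``followed by the newly created edge'' and ``followed by the root half-edge''; all the bookkeeping (splits of $a_\ell$ from earlier closures onto $v_0$, the correspondence $\alpha_3 \leftrightarrow \alpha_3''$, etc.) becomes routine once the distinction is made. The safety of $T$ is used implicitly throughout: it guarantees that $a_0$, and hence the root half-edge, remains on the boundary of the special face at every step, so that the notion of ``angle just before the root half-edge'' is well defined and tracked consistently across all $k$.
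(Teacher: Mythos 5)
Your proof is correct, but it takes a genuinely different route from the paper's. You run a backward induction over the closure steps, establishing the stronger invariant that in the special face of every intermediate map $\Gamma_k$ all labels are strictly positive except the final one (which equals $0$); the crucial point, which you identify correctly, is that $\alpha_0$ is followed in the special face of $\Gamma_k$ by an actual edge (the closed stem) and never by the dangling root half-edge, so the inductive hypothesis applies to $\alpha_0$ and forces $\lambda(\alpha_0)\geq 1$, whence $\lambda(\alpha_2)=\lambda(\alpha_0)\geq 1$ and $\lambda(\alpha_1)=\lambda(\alpha_0)+1\geq 2$ when these angles are reinserted. The paper instead argues directly on $\Gamma=\Gamma_0$ by a short extremal argument: taking $k$ maximal with $\lambda(a_k)\leq 0$, either the next angle along an edge, or the attachment angle $a(s)$ of the stem following $a_k$ (which has strictly larger index by the safe property and label $\lambda(a_k)-1$ by Lemma~\ref{lemma1}), produces a strictly negative label at a strictly larger index, contradicting the maximality of $k$ or the fact that $\lambda(a_\ell)=0$. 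The paper's argument is shorter and does not revisit the closure procedure, whereas yours re-runs the induction of Lemma~\ref{lemma1} with a sharper invariant and yields, as a by-product, positivity of the labels along the boundary of every intermediate special face. Both ultimately rest on the same two ingredients: $\lambda(a(s))=\lambda(s)-1$ and the fact that, by safety, attachment happens at an angle of larger index.
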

\begin{proof}

Assume that there exists $0\leq i\leq \ell-1$, such that $\lambda(a_i) \leq  0$.
 Let $k=\max\left \{ 0\leq i\leq l-1: \lambda(a_i)\leq 0 \right \}$.  If $a_k$ and $a_{k+1}$ are consecutive along an edge, then we have $\lambda(a_{k+1})=\lambda(a_k)-1<0$. 
  If $a_k$ and $a_{k+1}$ are separated by a stem, then, by Lemma~\ref{lemma1}, we have $\lambda(a(s))=\lambda(a_k)-1$
  , so there exists $k'>k$ such that $\lambda(a_{k'})<0$. In both cases, there is a contradiction to the definition of $k$.
\end{proof}

Let $V_S$  be the set of special vertices of $\Gamma$ (defined in Section~\ref{sec:preliminaries}).
 We call \emph{proper} the edges and vertices of $\Gamma$ that are on at least one cycle of $\Gamma$. Let $V_P$ (respectively $E_P$) be the set of proper vertices (respectively edges) of $\Gamma$. Note that $V_S\subseteq V_P$.

We call \emph{root path} the (unique) shortest path of $\Gamma$ from the root vertex to a proper vertex. Note that the root path might have length $0$ if $v_0$ is proper. 
The sequence of vertices along the root path is denoted $V_R=(r_0,r_1,...,r_s)$, with $s\geq 0$, $r_0=v_0$ and $r_s$ is proper. The set of edges of the root path is denoted $E_R$.
Let $V_N=V \setminus (V_P \bigcup V_R) $ be the set of \emph{normal vertices} of $\Gamma$ and $E_N=E \setminus (E_P \bigcup E_R) $ be the set of \emph{normal edges} of $\Gamma$. 

The \emph{canonical orientation} of $\Gamma$ is the orientation of the edges and stems of $\Gamma$ that corresponds to the canonical orientation of $T$ (the root half edge added has no particular orientation). Consider an edge $e$ of $\Gamma$ with its orientation in the canonical orientation, then by the orientation rule, the angles of $\gamma$ incident to $e$ that are on its right side have greater indices in the set $A$ than the angles that are on its left side, i.e. they are seen after while going in \cw order around the unique face of $\Gamma$ starting from the root angle.

\begin{lemma}
\label{lem:labelvariation}
Consider an edge $e=uv$ of $\Gamma$ that is oriented from $u$ to $v$ in the canonical orientation of $\Gamma$. Let $0\leq i<j< \ell$ such that $a_i,a_{i+1},a_j, a_{j+1}$ appear in this order in \ccw order around $e$ with $a_i, a_{j+1}$ incident to $v$ and $a_{i+1}, a_j$ incident to $u$. Then we have the following (see Figure~\ref{fig:labelvariation}):
\begin{tabular}{ccc}
 $ \lambda(a_{j+1})-\lambda(a_i) = 
  \begin{cases}
    0 & \text{if } e\in E_N \\
    -3 & \text{if } e\in E_P \\
    -6  & \text{if } e\in E_R
  \end{cases}
  $
& and &
$ \lambda(a_{i+1})-\lambda(a_{j}) =
  \begin{cases}
    -2 & \text{if } e\in E_N \\
    1 & \text{if } e\in E_P \\
    4  & \text{if } e\in E_R
  \end{cases}
$
\end{tabular}

\end{lemma}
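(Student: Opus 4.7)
Since $(a_i, a_{i+1})$ and $(a_j, a_{j+1})$ are each pairs of consecutive angles separated by edge $e$, the labeling rule immediately gives $\lambda(a_{i+1}) = \lambda(a_i) - 1$ and $\lambda(a_{j+1}) = \lambda(a_j) - 1$. Hence $\lambda(a_{j+1}) - \lambda(a_i) = (\lambda(a_j) - \lambda(a_{i+1})) - 2$, showing that the two formulas in the statement are equivalent. Denoting $\Delta := \lambda(a_j) - \lambda(a_{i+1})$, it thus suffices to show $\Delta = 2, -1, -4$ in cases $E_N$, $E_P$, $E_R$ respectively. Along the clockwise sub-walk from $a_{i+1}$ to $a_j$, $\lambda$ changes by $+1$ per stem and $-1$ per edge crossing (of edges distinct from $e$), so $\Delta = \sigma - \eta$ where $\sigma, \eta$ are the stem and non-$e$ edge-crossing counts of the sub-walk. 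It is also convenient to consider the complementary sub-walk from $a_{j+1}$ through $a_\ell$ and $a_0$ to $a_i$; the global totals $\sigma + \sigma' = 2n-1$, $\eta + \eta' = 2n$ together with the jump $\lambda(a_0) - \lambda(a_\ell) = 3$ across the root half-edge yield the consistency identity $\Delta + \Delta' = 2$, where $\Delta' := \lambda(a_i) - \lambda(a_{j+1})$.

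For Case $E_N$, removing the tree edge $e$ splits $\Gamma$ into two components; the one containing $u$ is a sub-tree $X$ with only non-special vertices and not containing $v_0$. That $u$ lies in $X$ follows from the canonical orientation rule: the face walk first encounters $e$ going from $v$ to $u$, entering $X$ through $e$. Since the sub-walk from $a_{i+1}$ to $a_j$ then traces exactly the boundary of $X$, we obtain $\sigma = 2|V(X)|$ and $\eta = 2(|V(X)|-1)$, hence $\Delta = 2$.

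For Case $E_R$, it is convenient to work with $\Delta'$: since $a_0$ and $a_\ell$ both lie at $v_0$ but outside the range $[a_{i+1}, a_j]$, the sub-walk from $a_{i+1}$ to $a_j$ cannot enclose $v_0$ in its interior. Hence the complementary sub-walk encloses the $v_0$-side sub-tree $X_R$ of $\Gamma \setminus e$, which carries $v_0$'s three stems together with the root half-edge. A calculation analogous to Case $E_N$, accounting for the extra stem at $v_0$ and the $+3$ jump across the root half-edge, then gives $\Delta' = 6$ and hence $\Delta = -4$.

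For Case $E_P$, the edge $e$ lies on a non-contractible cycle $C$ and $\Gamma \setminus e$ remains connected. The sub-walk from $a_{i+1}$ to $a_j$ traces a topologically non-trivial closed loop at $u$ whose enclosed region contains part of $C$ together with the sub-trees hanging off the enclosed side. Combining Lemma~\ref{lem:stemcycle} (which counts stems on each side of $C$) with the tree-counting of Case $E_N$ for the hanging branches yields $\Delta = -1$. This last case is the principal obstacle: the enclosed region wraps non-trivially around the torus, and its precise combinatorial description depends on the unicellular type of $\Gamma$ (square or hexagonal) and on whether $v_0 \in C$, so the counting must be carried out carefully in each sub-case. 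Throughout, the identity $\Delta + \Delta' = 2$ serves as a useful consistency check.
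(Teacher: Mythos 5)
Your reduction to a single quantity via $\lambda(a_{i+1})=\lambda(a_i)-1$ and $\lambda(a_{j+1})=\lambda(a_j)-1$, and your treatment of the cases $E_N$ and $E_R$ by counting stems against edge-crossings over the subtree cut off by $e$ (with the extra stem at $v_0$ and the $+3$ jump across the root half-edge in the $E_R$ case), match the paper's argument and are correct. The problem is the case $E_P$: you state the target value $\Delta=-1$, name the ingredients (Lemma~\ref{lem:stemcycle} plus tree-counting for hanging branches), and then explicitly defer the actual computation ("the counting must be carried out carefully in each sub-case"). But this case is the entire content of the lemma — it is the only place where the value is neither forced by a tree count nor by the root conventions, and it is where the balanced property of the unicellular map (via Lemma~\ref{lem:stemcycle}) actually enters. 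As written, the proposal asserts the answer for $E_P$ without deriving it, so the proof is incomplete.

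For comparison, the paper closes this case as follows: along the clockwise sub-walk from $a_i$ to $a_{j+1}$, normal edges and the stems at normal vertices cancel, so only proper edges and stems at proper vertices matter; the walk meets the first special vertex $s$ twice and the other special vertex once; the portion between the two visits of $s$ is a single one-sided traversal of a cycle $C$ of length $k'$, contributing $(k'-1)-k'=-1$ by Lemma~\ref{lem:stemcycle}, while the remaining proper part is the path $P$ from $v$ to $s$ traversed on both sides, contributing $2(k-1)-2k=-2$ (each inner vertex of $P$ carries two stems and $P$ has one more edge than inner vertices); the total is $-3$. Note also that your proposed sub-casing on "whether $v_0\in C$" points in the wrong direction: Lemma~\ref{lem:stemcycle} fails when $v_0\in C$ (there is an extra stem at the root vertex), and the paper instead argues that the cycle arising here never contains $v_0$ because the sub-walk from $a_i$ to $a_{j+1}$, with $0\le i<j<\ell$, never passes the root angle. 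You would need to supply this exclusion, or your appeal to Lemma~\ref{lem:stemcycle} is not justified.
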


\begin{proof}
Note first that by the labeling rule we have $\lambda(a_{i+1})=\lambda(a_i)-1$ and  $\lambda(a_{j+1})=\lambda(a_j)-1$. So $(\lambda(a_{i+1})-\lambda(a_{j}))+ (\lambda(a_{j+1})-\lambda(a_i))=-2$. 

Suppose first that $e\in E_N$.
While going clockwise around the unique face of $\Gamma$ starting from $a_{i}$ to $a_{j+1}$, we encounter only normal vertices and edges.
So we go around a planar tree whose
 edges are encountered twice and whose number of stems is equal to twice the number of edges. This implies that $\lambda(a_{j+1})-\lambda(a_i)=0$ and so $\lambda(a_{i+1})-\lambda(a_{j})=-2$. 
 
 The case where $e\in E_R$ is quite similar. While going clockwise around the unique face of $\Gamma$ starting from $a_{j}$ to $a_{i+1}$, we are in the same situation as above except that we go over the root vertex. The root vertex is incident to $1$ more stem than normal vertices and there is a jump of $3$ from the label of $a_\ell$ to $a_0$ around the root vertex. This implies that $\lambda(a_{i+1})-\lambda(a_{j})=4$ and so $\lambda(a_{j+1})-\lambda(a_{i})=-6$.
 
It only remains to consider the case where $e\in E_P$. We suppose here that $\Gamma$ is hexagonal. The case where $\Gamma$ is square can be proved similarly.

 The value $\lambda(a_{j+1})-\lambda(a_{i})$ is equal to the number of stems minus the number of edges that are encountered while
going clockwise around the unique face of $\Gamma$ starting from $a_{i}$ to  $a_{j+1}$, with $i<j$. Each normal edge that is met is encountered twice and the number of stems that are met and attached to normal vertices is equal to exactly twice this number of edges. So there number does not affect the value  $\lambda(a_{j+1})-\lambda(a_{i})$. Thus we just have to look at proper edges and stems attached to proper vertices.

Let $s$ be the first special vertex that is encountered. Note that $s$ is encountered twice along the computation and the other special vertex only once. Let $P$ be the unique path of $\Gamma$ between $v$ and $s$ with no special inner vertices. Let $k$ be the length of $P$.  All the stems attached to inner vertices of $P$ are encountered exactly once and all the edges of $P$ are encountered exactly twice. Since each inner vertex of $P$ is incident to exactly two stems, and there one more edges in $P$ than inner vertices, this part results in value $-2$ in the computation of  $\lambda(a_{{j+1}})-\lambda(a_{i})$.

It remains to look at the part encountered between the two copies of $s$. This corresponds to exactly a cycle $C$ of $\Gamma$ of length $k'$, where all its edges and all the stems incident to one of its side are encountered exactly once. Note that $v_0$ does not belong to $C$ since $i<j$. Then by Lemma~\ref{lem:stemcycle}, there are exactly $k'-1$ stems attached to each side of $C$. So this part results in value $(k'-1)-k'=-1$ is the computation of  $\lambda(a_{{j+1}})-\lambda(a_{i})$.

Finally, in total we obtain $\lambda(a_{{j+1}})-\lambda(a_{i})=-3$  and so $\lambda(a_{i+1})-\lambda(a_{j})=1$.  
\end{proof}

\begin{figure}[h]
\center
\scalebox{0.7}{\input{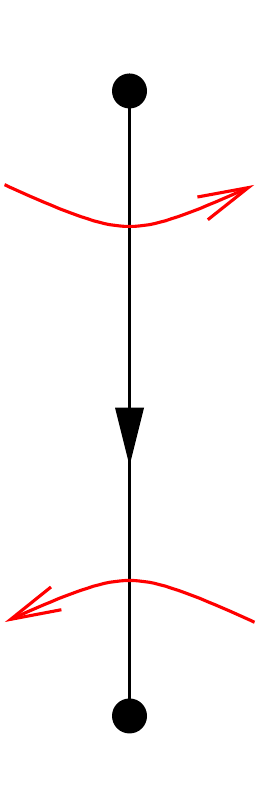_t}}
 \hspace{4em}
 \scalebox{0.7}{\input{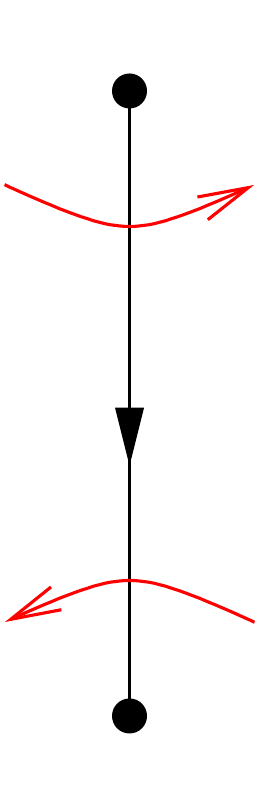_t}}
 \hspace{4em}
 \scalebox{0.7}{\input{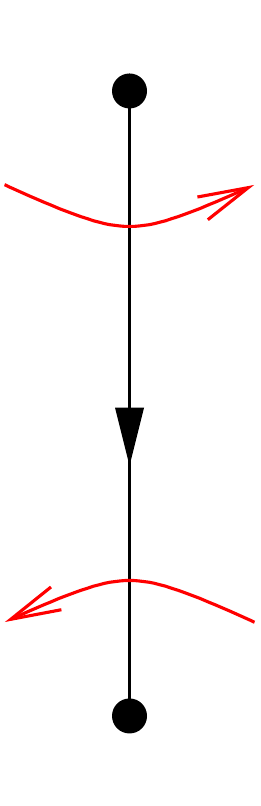_t}}
\caption{Variations of the labeling around the three different kind of edges of $\Gamma$.}
\label{fig:labelvariation}
\end{figure}

One can remark on Figure~\ref{fig:labelvariation} that an incoming edge of $\Gamma$ corresponds to a variation of the labeling in \ccw order around its incident vertex that is always $\leq 0$.

By Lemma~\ref{lem:labelvariation}, we can deduce the variation of the
labels around the different kind of possible vertices that may appear
on $\Gamma$. They are many different such vertices, the $12$ different
cases are represented on Figures~\ref{fig:allcases}.(a) to
($\ell$). The stems are not represented on the figures, except the
root stem, but their number is indicated below each figure. These
stems can be incident to any angle of the figures, except the angles
incident to the root half-edge that are marked with an empty set. Recall
that each of this stem results in a $+1$ in the variation of the
labels while going \ccw around their incident vertex. The incoming
normal edges are not represented either. There can be an arbitrary
number of such edges incident to each angle of the figures. By
Lemma~\ref{lem:labelvariation}, there is no variation of the labels
around them. When $v=v_0$, i.e. $v$ is the root vertex, we have
represented the root stem and the root half-edge. In
this particular case, there is no stem nor incoming normal edges
incident to the angles incident to the root half-edge by the safe
property.

\begin{figure}[hp]
\center
\tiny
\begin{tabular}{ccc}
\scalebox{0.5}{\input{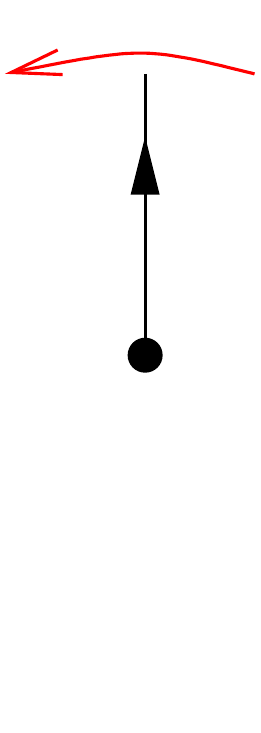_t}} & \scalebox{0.5}{\input{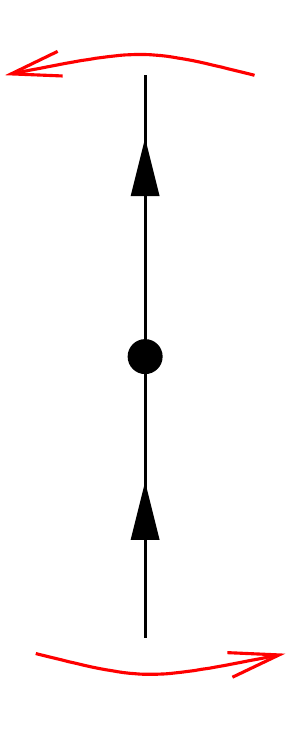_t}} & \scalebox{0.5}{\input{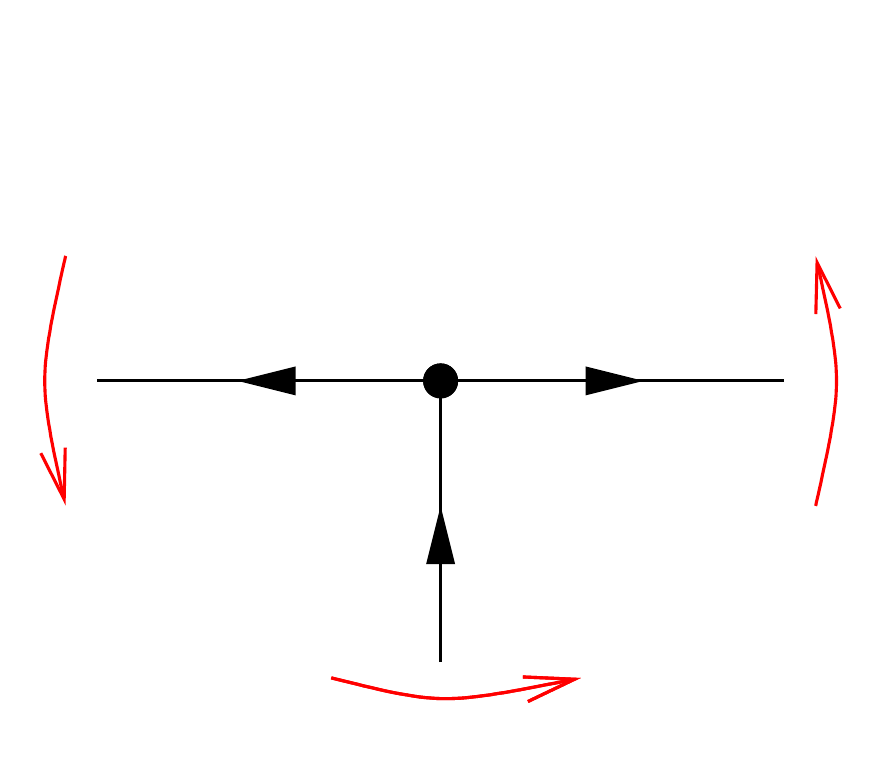_t}} \\
$v\in V_N$& 
$v\in V_P$, $v\notin V_S$, $v\notin V_R$ & 
$v\in V_S$, $v\notin V_R$, hexagonal \\
2 additional stems  & 2 additional stems & 1 additional stem\\
$M(v)-m(v)=2$ & $M(v)-m(v)=3$& $M(v)-m(v)=3$ \\
(a) & (b) & (c) \\ \\
\scalebox{0.5}{\input{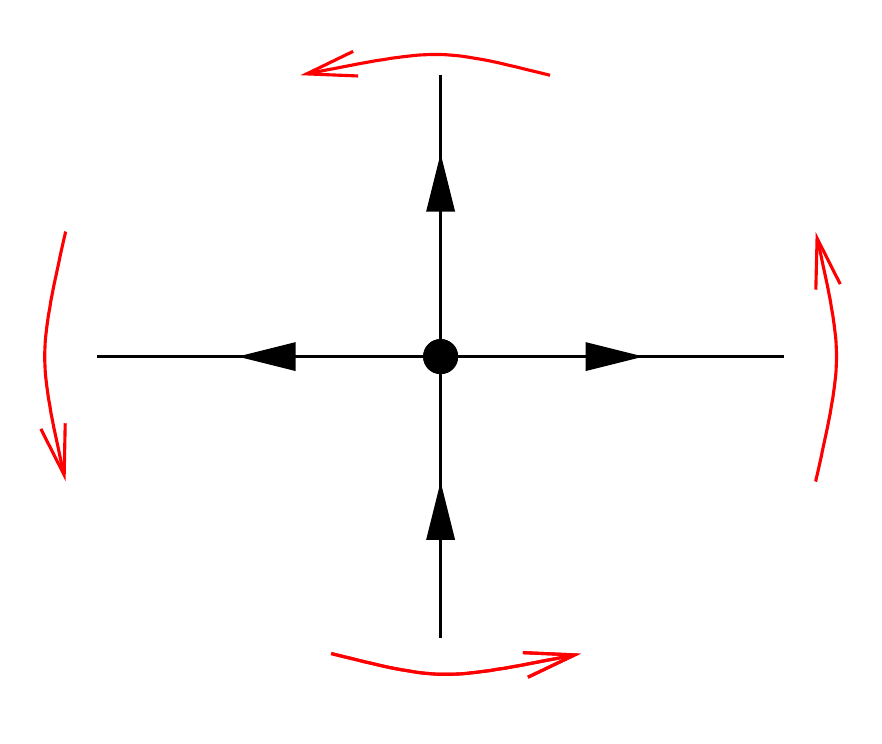_t}} & \scalebox{0.5}{\input{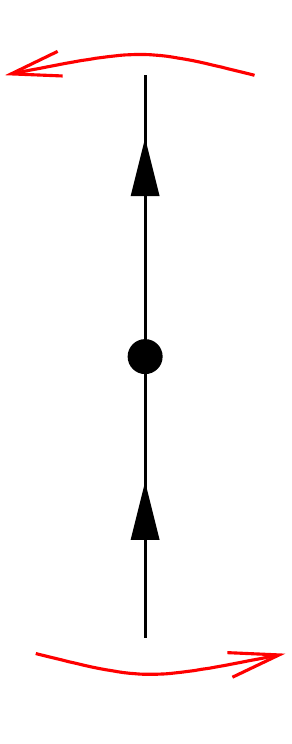_t}} & \scalebox{0.5}{\input{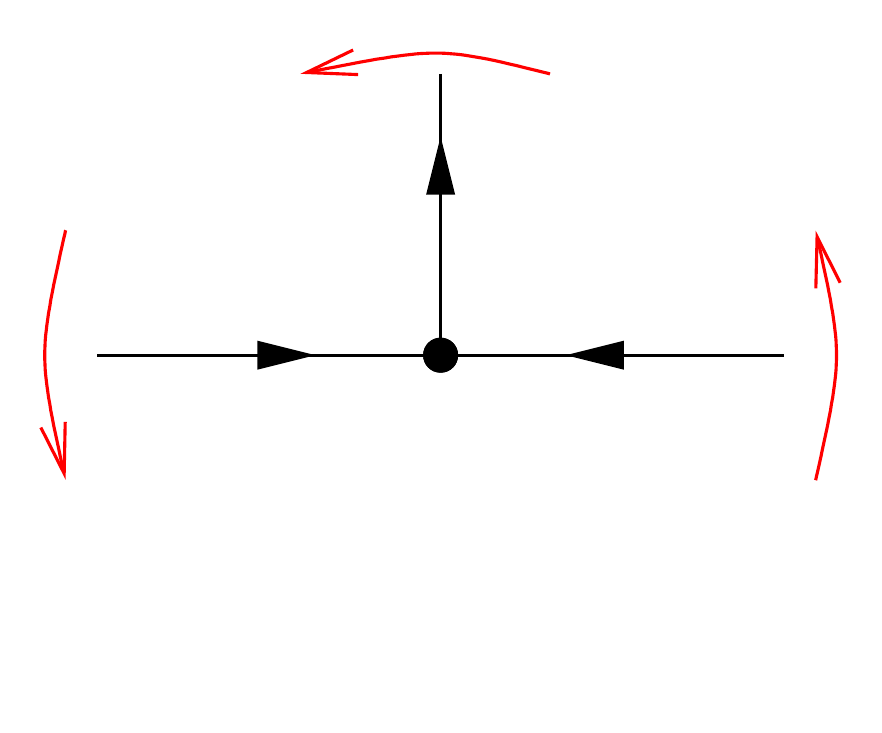_t}} \\
$v\in V_S$, $v\notin V_R$, square  & $v\in V_R$, $v\neq r_0$, $v\neq r_s$& $v= r_s$, $s>0$, $v\notin V_S$\\
0 additional stem & 2 additional stems & 2 additional stems \\
$M(v)-m(v)=3$  & $M(v)-m(v)= 6$&  $4\leq M(v)-m(v)\leq 6$\\
(d) & (e) & (f) \\
\\

\scalebox{0.5}{\input{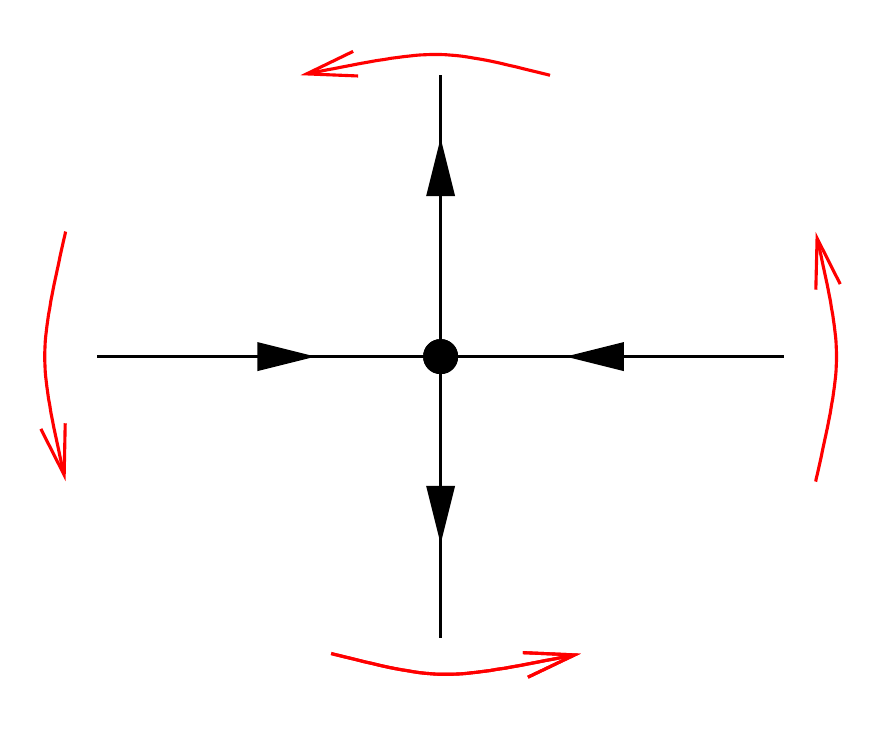_t}} & \scalebox{0.5}{\input{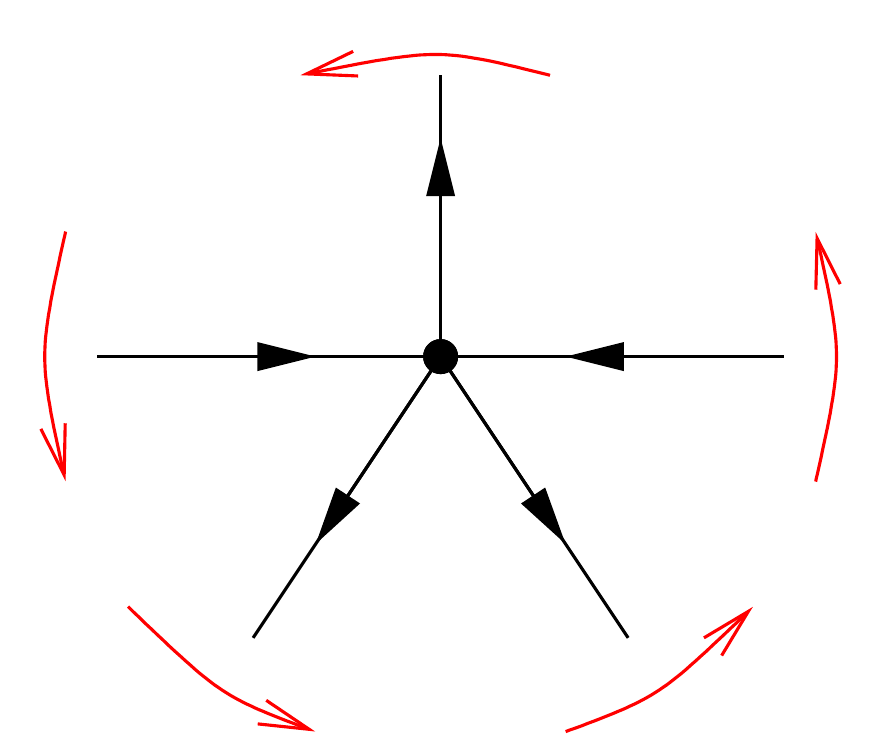_t}} & \scalebox{0.5}{\input{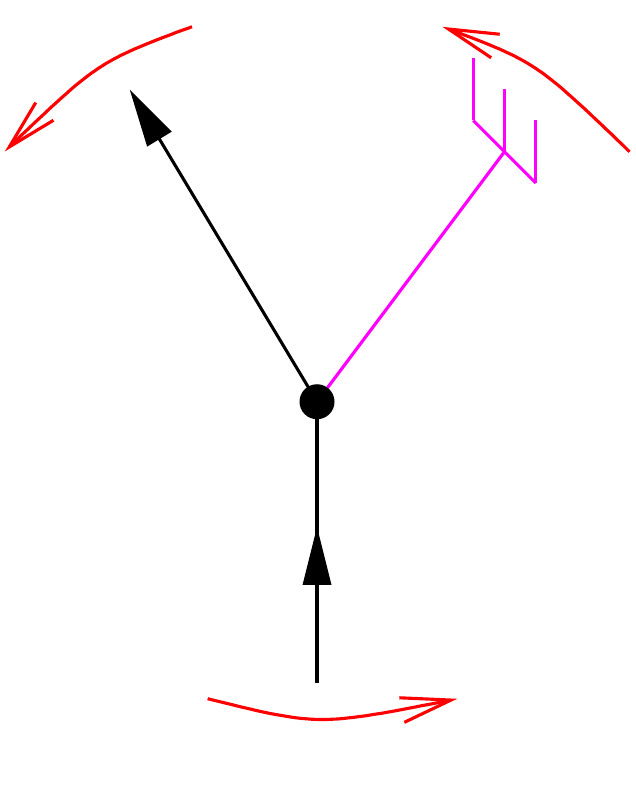_t}} \\
$v= r_s$, $s>0$, $v\in V_S$, hexagonal& $v= r_s$, $s>0$, $v\in V_S$, square
& $v= v_0$, $v\notin V_P$ \\
1 additional stem & 0 additional stems 
& 2 additional stems \\
$4\leq M(v)-m(v)\leq 5$ 
& $M(v)-m(v)=4$
& $M(v)-m(v)=6$ \\
(g) & (h) & (i) \\
\\

\scalebox{0.5}{\input{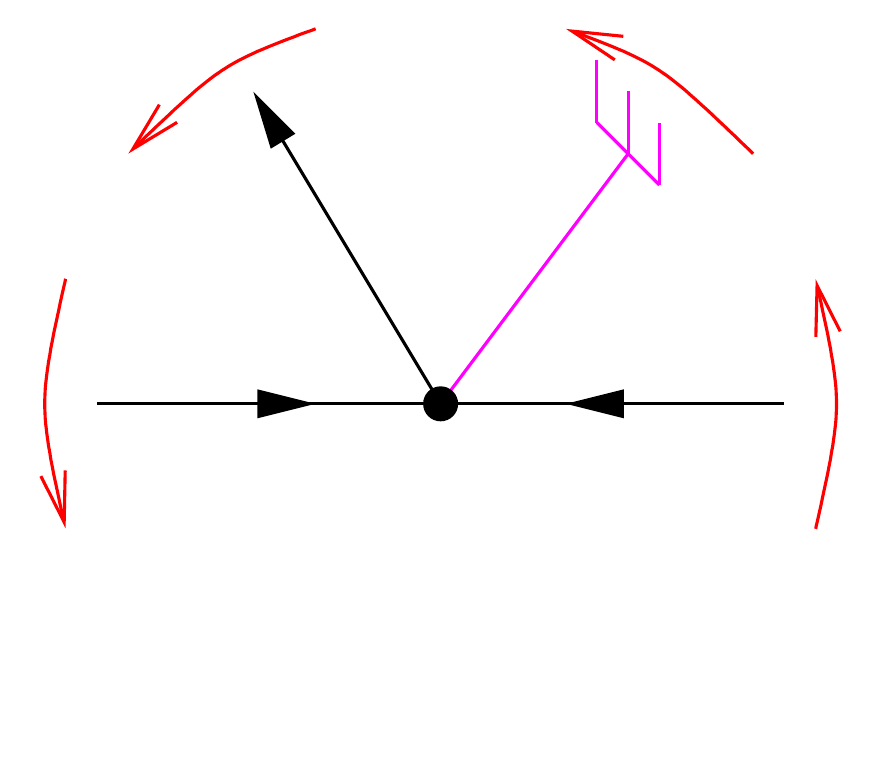_t}} & \scalebox{0.5}{\input{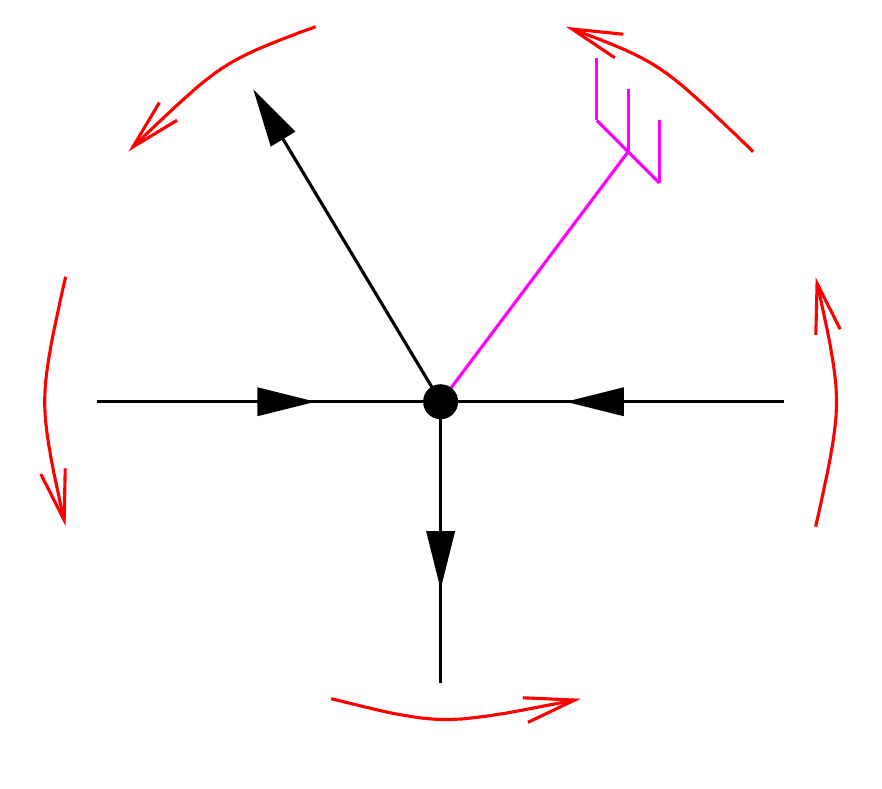_t}} & \scalebox{0.5}{\input{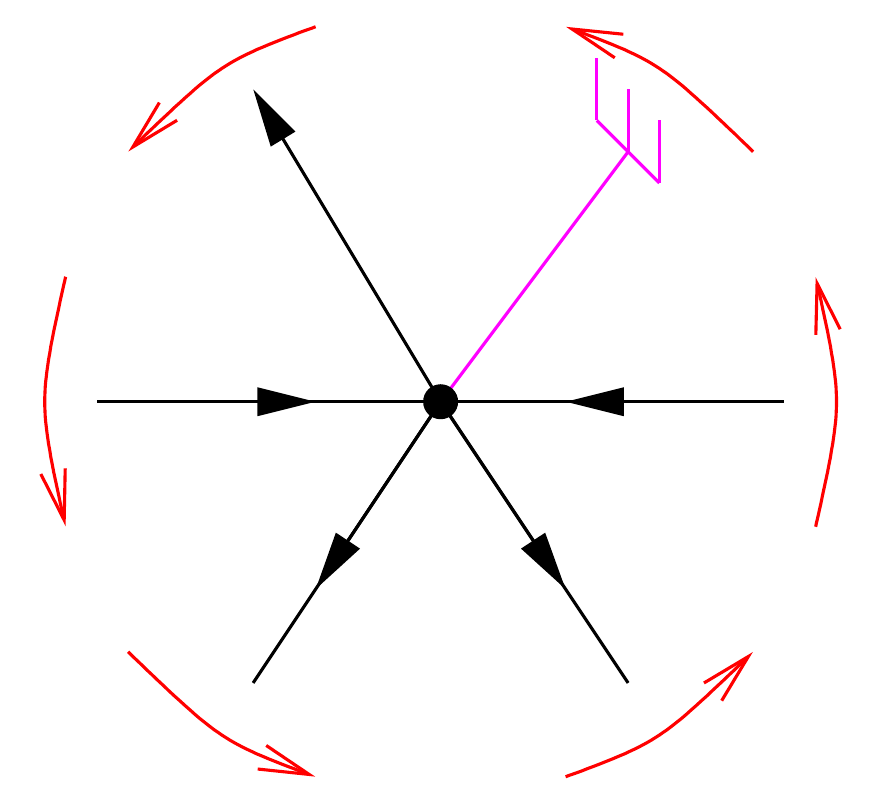_t}} \\
$v= v_0$, $v\in V_P$, $v\notin V_S$ & $v= v_0$, $v\in V_S$, hexagonal& $v= v_0$, $v\in V_S$, square \\
2 additional stems & 1 additional stem & 0 additional stems \\
$4\leq M(v)-m(v)\leq 6$ & $4\leq M(v)-m(v)\leq 5$& $M(v)-m(v)=4$ \\
(j) & (k) & ($\ell$) \\

\end{tabular}
\caption{Variations of the labeling around the different kind of possible vertices of $\Gamma$.}
\label{fig:allcases}
\end{figure}

For each $u\in V$, let $A(u)$ be the set of angles incident to $u$, let $m(u)=\min_{a\in A(u)}\lambda(a)$, and let $M(u)=\max_{a\in A(u)}\lambda(a)$.
On Figures~\ref{fig:allcases}.(a) to ($\ell$)  we have represented the position of the label $M(v)$ and $m(v)$ wherever the missing stems are. We also have given the value of $M(v)-m(v)$ or an inequality on it. This case analysis gives the following lemma :

\begin{lemma}
\label{lem:Mm6}
For all $v \in V$, we have $M(v)-m(v)\leq 6$. 
\end{lemma}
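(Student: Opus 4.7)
The plan is to carry out the case analysis set up in Figure~\ref{fig:allcases}. The key observation is that the labels of the angles around any vertex $v$, read in \ccw order, form a cyclic sequence whose consecutive differences are prescribed: across each stem incident to $v$ the label increases by $1$, and across an incident edge the variation is determined by Lemma~\ref{lem:labelvariation}, namely $-2$, $+1$, $+4$ across an outgoing edge of type $E_N$, $E_P$, $E_R$ respectively, and $0$, $-3$, $-6$ across an incoming edge of the same types. Since a full loop around $v$ returns to the initial label, these elementary variations sum to zero, and $M(v)-m(v)$ is exactly the oscillation of the corresponding sequence of partial sums.

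First I would partition $V$ into the twelve categories (a)--($\ell$) of Figure~\ref{fig:allcases}, distinguished by whether $v$ lies in $V_N$, $V_P\setminus V_S$ or $V_S$ (with the further hexagonal or square distinction in the latter), by whether $v$ belongs to the root path $V_R$ and if so whether it is an interior vertex or the endpoint $r_s$, and finally by whether $v$ coincides with the root vertex $v_0$. In each of these twelve cases, the types of the compulsory incident edges and the number of stems at $v$ are completely determined by the canonical orientation of $T$ together with the safe and balanced conditions. Two reductions then simplify the analysis: any additional incoming $E_N$ edges (which are allowed in arbitrary numbers) contribute zero variation by Lemma~\ref{lem:labelvariation} and hence cannot affect $M(v)-m(v)$; and any extra stem placed at an angle of label $\lambda$ produces a neighbouring angle of label $\lambda+1$, which in each of the twelve configurations already lies in the interval $[m(v), M(v)]$ forced by the compulsory edges, so the placement of such stems does not enlarge the oscillation either.

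It therefore suffices to tabulate the oscillation of the mandatory cyclic sequence of variations in each of the twelve cases. A direct computation gives $M(v)-m(v)=2$ in case (a), $M(v)-m(v)=3$ in cases (b), (c), (d), $M(v)-m(v)=6$ in case (e), and values in $\{4,5,6\}$ in the remaining cases (f)--($\ell$). For instance, case (e) involves one incoming and one outgoing $E_R$ edge together with two stems, which forces the cyclic variation sequence to be some rotation of $(+4,+1,-6,+1)$ or $(+4,-6,+1,+1)$ or $(+4,+1,+1,-6)$; in each admissible arrangement the partial sums oscillate between values differing by exactly $6$. In every case the bound $M(v)-m(v)\leq 6$ holds, which is the statement of the lemma.

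The main obstacle is not conceptual but the bookkeeping required for the terminal vertex $r_s$ and the root vertex $v_0$ (cases (f)--($\ell$)), where several cyclic arrangements of the $E_P$, $E_R$ and stem contributions are a priori possible, depending on whether the vertex is also special and, when it is, whether $\Gamma$ is of square or hexagonal type. One must also keep track of the position of the root half-edge and root stem, whose placement is constrained by the safe property (no stem is ever attached at the root angle during the closure). Once these constraints are made explicit, each sub-configuration reduces to a finite check of the oscillation of an explicit circular sequence of variations, and all of them yield a bound of $6$.
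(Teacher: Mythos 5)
Your proposal is correct and follows essentially the same route as the paper, which proves this lemma precisely by the case analysis of Figure~\ref{fig:allcases}: the twelve vertex types, the variations from Lemma~\ref{lem:labelvariation}, the observation that incoming normal edges contribute $0$ and each stem contributes $+1$, and the constraint from the safe property at the root angle. Your additional remark that in each sector between label-decreasing edges the labels are monotone (so stem placement can shift but never push the oscillation above the tabulated bound) is exactly the justification the paper leaves implicit when it writes the ranges such as $4\leq M(v)-m(v)\leq 6$.
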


From Lemma~\ref{lem:Mm6}, we obtain the following lemma.

\begin{lemma}
\label{lem:mm7}
For all $\{u,v\}\in E(G)$, we have $ | m(u)-m(v)  |\leq 7$. 
\end{lemma}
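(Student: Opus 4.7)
The plan is to split the proof into two cases, according to whether the edge $\{u,v\} \in E(G)$ arises from an edge of the unicellular map $\Gamma$ or from a stem of $\Gamma$ that was closed during the complete closure procedure. In each case, the strategy is to exhibit an angle at $u$ and an angle at $v$ whose labels differ by exactly $1$, and then to transfer this local information to $m(u)$ and $m(v)$ using the uniform bound $M(w) - m(w) \leq 6$ from Lemma~\ref{lem:Mm6}.

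In the first case, when $\{u,v\}$ is an edge of $\Gamma$, I would apply Lemma~\ref{lem:labelvariation} directly: using the four angles $a_i, a_{i+1}, a_j, a_{j+1}$ around the edge and the labeling rule, we have $\lambda(a_{i+1}) = \lambda(a_i) - 1$ with $a_i$ incident to $v$ and $a_{i+1}$ incident to $u$. Then
\[
m(u) \leq \lambda(a_{i+1}) = \lambda(a_i) - 1 \leq M(v) - 1 \leq m(v) + 5,
\]
and the symmetric inequality obtained from the pair $(a_j, a_{j+1})$ gives $|m(u)-m(v)| \leq 5$.

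In the second case, when $\{u,v\}$ comes from a stem $s$ of $\Gamma$ being attached during the closure, I would use Lemma~\ref{lemma1}. Say $s$ originates from $u$ and is attached on the $v$-side, i.e.\ $a(s)$ is an angle incident to $v$. The label $\lambda(s)$ is, by definition, the label of an angle incident to $u$, so $m(u) \leq \lambda(s) \leq M(u)$. Moreover, $a(s)$ is incident to $v$, so $m(v) \leq \lambda(a(s)) \leq M(v)$. Lemma~\ref{lemma1} gives $\lambda(a(s)) = \lambda(s) - 1$, so
\[
m(v) \leq \lambda(s) - 1 \leq M(u) - 1 \leq m(u) + 5,
\]
and
\[
m(u) \leq \lambda(s) = \lambda(a(s)) + 1 \leq M(v) + 1 \leq m(v) + 7,
\]
yielding $|m(u) - m(v)| \leq 7$.

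Taking the worse of the two bounds gives the claimed inequality. There is no real obstacle here beyond bookkeeping, namely correctly identifying which endpoint carries $\lambda(s)$ and which carries $\lambda(a(s))$, and applying Lemma~\ref{lem:Mm6} uniformly on both sides; the slight asymmetry between the $+5$ and $+7$ bounds (inherent to the fact that closing a stem transfers a label across distinct vertices via a drop of $1$) is what forces the constant $7$ rather than $5$ in the statement.
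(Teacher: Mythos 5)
Your proof is correct and follows essentially the same route as the paper: the same case split according to whether $\{u,v\}$ is an edge of $\Gamma$ or arises from closing a stem, combined with Lemma~\ref{lemma1} and the bound $M(w)-m(w)\leq 6$ of Lemma~\ref{lem:Mm6}. The only (harmless) difference is that in the first case you exploit both traversals of the edge to get the slightly sharper constant $5$, whereas the paper uses a single pair of consecutive angles and settles for $7$.
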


\begin{proof}
Let $e\in E(G)$ with extremities $u$ and $v$. 
We consider two cases whether $e$ is an edge of $\Gamma$ or not.

\begin{itemize}
\item \emph{$e$ is an edge of $\Gamma$:}
While walking \cw around the special face of $\Gamma$ from the root angle, there is an angle $\alpha$ incident to $u$ and an angle $\beta$ incident to $v$ that appears consecutively. By definition of the labels, we have $\lambda(\beta)=\lambda(\alpha)-1$.
Moreover by Lemma~\ref{lem:Mm6}, we have
$m(u) \in \left [\![ \lambda(\alpha)-6,\lambda(\alpha) \right ]\!] \text{ and } m(v) \in  \left [\![  \lambda(\beta)-6,\lambda(\beta) \right ]\!]$.
This implies that $\left | m(u)-m(v) \right |\leq 7$.  
\item \emph{$e$ is not an edge of $\Gamma$:}
Thus $e$ comes from the attachment of a stem $s$ of $\Gamma$ by the complete closure procedure. W.l.o.g., we may assume that $s$ is incident to $u$.
By Lemma~\ref{lemma1}, we have $\lambda(a(s))=\lambda(s)-1$. By lemma~\ref{lem:Mm6}, we have
$m(u)\in \left [\![ \lambda(s)-6,\lambda(s) \right ]\!]$ and  $m(v)\in \left [\![ \lambda(s)-7,\lambda(s)-1 \right ]\!]$.
This implies that $|m(u)- m(v)|\leq 7$. 
\end{itemize}
\end{proof}

\subsection{Relation with the graph distance}
\label{subsection:Relationwiththegraphdistance}
For $(u,v) \in V$, we denoted by $d_G(u,v)$ the length (i.e. the number of edges) of a shortest path in $G$ starting at $u$ and ending at $v$.  

Given an angle $\alpha$ of $\Gamma$, let $v(\alpha)$ denote the vertex of $\Gamma$ incident to $\alpha$.

\begin{lemma}
\label{prop:bornesupdistance}
For all $v\in V$, we have $\frac{m(v)}{7} \leq d_G(v_0,v)\leq m(v)$.
\end{lemma}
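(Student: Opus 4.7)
For the lower bound $m(v)/7 \leq d_G(v_0,v)$, I would apply Lemma~\ref{lem:mm7} along a shortest $v_0$-to-$v$ path in $G$. Writing $v_0 = u_0, u_1, \ldots, u_k = v$ with $k = d_G(v_0,v)$, each edge yields $|m(u_{i+1}) - m(u_i)| \leq 7$; summing gives $|m(v) - m(v_0)| \leq 7k$. It remains to observe that $m(v_0) = 0$: the angle $a_\ell$ is incident to the root half-edge at $v_0$ and carries label $\lambda(a_\ell)=0$ by the labeling computation at the start of Section~\ref{sec:label}, while Lemma~\ref{lemma2} ensures every other angle has strictly positive label.

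For the upper bound $d_G(v_0, v) \leq m(v)$, I would proceed by induction on $m(v)$. The base case $m(v)=0$ again uses Lemma~\ref{lemma2} to force $v = v_0$. For the inductive step with $m(v)\geq 1$, it suffices to exhibit a neighbor $u$ of $v$ in $G$ with $m(u) \leq m(v)-1$, since then $d_G(v_0,v) \leq 1 + d_G(v_0,u) \leq 1 + m(u) \leq m(v)$ by the inductive hypothesis.

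To exhibit such a neighbor, I would pick an angle $\alpha^*$ at $v$ with $\lambda(\alpha^*) = m(v)$ and examine the triangular face $T$ of $G^+$ containing it. From the closure analysis carried out in the proof of Lemma~\ref{lemma1}, the three corners of any triangle of $G^+$ carry labels $k-1, k, k+1$, placed at the attachment, middle and source vertices of the admissible triple that produced $T$. If $\alpha^*$ is the middle or source corner of $T$, then the attachment corner lies at a neighbor $u$ of $v$ with label $\leq m(v)-1$, and we are done. Otherwise $\alpha^*$ is the attachment corner of $T$, and I would invoke the cyclic pattern of label variations around $v$: the sum of CCW-variations around $v$ vanishes, and an $m(v)$-angle cannot be immediately followed by a negative variation, so some $m(v)$-angle at $v$ must be followed by a strictly positive variation. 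Reading Figure~\ref{fig:allcases} and Lemma~\ref{lem:labelvariation}, this positive variation is produced either by a stem $s$ of $v$ with $\lambda(s)=m(v)$, whose closure (by Lemma~\ref{lemma1}) yields an edge $v \to u$ in $G$ with $m(u)\leq \lambda(s)-1 = m(v)-1$, or by an outgoing edge of $\Gamma$ at $v$ whose first face-visit angle at $v$ is $\alpha^*$; in the latter case, the clockwise face step across this edge produces at the neighbor $u$ an angle of label $m(v)-1$, again yielding $m(u) \leq m(v)-1$.

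The main obstacle is verifying this positive-variation claim uniformly across the twelve vertex configurations of Figure~\ref{fig:allcases}, in particular for the special-square cases (d), (h) and ($\ell$), where no stem is incident to $v$ and the descending neighbor must come from an outgoing proper or root-path edge of $\Gamma$. In those cases one has to use the precise values of Lemma~\ref{lem:labelvariation} to show that the cyclic vanishing of label variations still forces some $m(v)$-angle to be immediately followed by a positive variation. Once this case-by-case verification is complete, the induction closes and the upper bound follows.
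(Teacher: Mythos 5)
Your lower bound is exactly the paper's argument (telescoping Lemma~\ref{lem:mm7} along a shortest path, with $m(v_0)=0$), so there is nothing to add there. For the upper bound, however, your proof is not complete as written: the entire weight rests on the claim that every $v\neq v_0$ has a neighbour $u$ with $m(u)\leq m(v)-1$, and you yourself defer its verification ("the main obstacle\dots once this case-by-case verification is complete"). As it stands this is a gap. It is a fillable one: around a non-root vertex the counterclockwise label variations telescope to $0$, Figure~\ref{fig:allcases} gives $M(v)-m(v)\geq 2$ so not all variations vanish, hence some angle of label $m(v)$ is immediately followed by a strictly positive variation; by Lemma~\ref{lem:labelvariation} the only positive contributions are stems ($+1$) and outgoing proper or root edges ($+1$ or $+4$), and in either case the object producing the variation hands an angle of label $m(v)-1$ to an adjacent vertex (via $a(s)$ and Lemma~\ref{lemma1} for a stem, via the head of the edge for an outgoing proper/root edge). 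Arguing this way you never need to run through the twelve configurations, and the special-square cases (d), (h), ($\ell$) are covered automatically. Your detour through the faces of $G^+$ is also slightly off at the margins: the $\{k-1,k,k+1\}$ pattern is established in the proof of Lemma~\ref{lemma1} only for the triangle created on the left of each closed stem, and the root face need not follow it.

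You should also know that the paper's proof of the upper bound avoids all of this. It does not induct on $m(v)$ and never rotates around the vertex: it builds a walk $W=(w_i)$ directly, where at each step one picks an angle $\alpha$ of $\Gamma$ incident to $w_i$ with $\lambda(\alpha)=m(w_i)$ and looks at the \emph{next angle $\alpha'$ in the clockwise traversal of the unique face of $\Gamma$}. If $\alpha,\alpha'$ are consecutive along an edge, the labeling rule gives $\lambda(\alpha')=\lambda(\alpha)-1$ and $v(\alpha')$ is adjacent to $w_i$; if they are separated by a stem $s$, Lemma~\ref{lemma1} gives $\lambda(a(s))=\lambda(s)-1=\lambda(\alpha)-1$ and the closure makes $v(a(s))$ adjacent to $w_i$. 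Either way $m(w_{i+1})\leq m(w_i)-1$ with no case distinction, and Lemma~\ref{lemma2} forces the walk to terminate at $v_0$ within $m(v)$ steps. The lesson is that the face traversal of $\Gamma$, rather than the link of the vertex or the faces of $G^+$, is the right object to descend along: the minimality of $\alpha$ is used only to convert the exact decrement $\lambda\mapsto\lambda-1$ into the decrement $m(w_i)\mapsto m(w_i)-1$.
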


\begin{proof}
 We first prove the left inequality.
Let $P=(w_0,w_1,...,w_k)$ be a shortest path in $G$ starting at $w_0=v$ and ending at $w_k=v_0$, thus $d_G(v_0,v)=k$. We want to prove that $k\geq \frac{m(v)}{7}$. 
By Lemma~\ref{lem:mm7}, for all $0\leq i\leq k-1$, we have $m(w_{i+1})\geq m(w_i)-7$. 
Thus we have $m(w_k)-m(w_0)=\sum_{i=0}^{k-1}(m(w_{i+1})-m(w_{i}))\geq -7k$. Moreover $m(w_k)=m(v_0)=0$ and $m(w_0)=m(v)$. This implies that $k\geq \frac{m(v)}{7}$.

We now proof the right inequality.
We define a walk $W=(w_i)_{i\geq 0}$ of $G$, starting at $v$ by the following. Let $w_0=v$ and assume that $w_i$ is defined for $i\geq 0$. If $w_i=v_0$, then the procedure stops. If $w_i$ is distinct from $v_0$, we consider an angle $\alpha$  incident to $w_i$ such that $\lambda(\alpha)=m(w_i)$. 
Let $\alpha'$ be the angle of the unique face of $\Gamma$, just after
$\alpha$ in clockwise order around this face.  If $\alpha$ and
$\alpha'$ are separated by a stem $s$, we set $w_{i+1}=v(a(s))$. If
$\alpha$ and $\alpha'$ are consecutive along an edge of $\Gamma$, we
set $w_{i+1}=v(\alpha')$.  In both cases, we prove that
$m(w_{i+1})\leq m(w_{i})-1$.  When $\alpha$ and $\alpha'$ are
separated by a stem $s$, then, by Lemma~\ref{lemma1}, we have
$m(w_{i+1})\leq \lambda(a(s))=\lambda(\alpha)-1=m(w_{i})-1$. When
$\alpha$ and $\alpha'$ are consecutive along an edge of $\Gamma$,
then, by the definition of the labeling function, we have
$m(w_{i+1})\leq \lambda(\alpha')=\lambda(\alpha)-1=m(w_{i})-1$.  So,
the sequence $(m(w_i))_{i\geq 0}$ is strictly decreasing along the walk $W$. By
Lemma~\ref{lemma2}, the function $m$ is $\geq 0$, and equal to zero
only for $v_0$. So the procedure ends on $v_0$. Let $k$ be the length
of $W$, we have $k\leq m(v)$.  So finally, we have
$d_G(v_0,v)\leq k \leq m(v)$.
\end{proof}

Recall that $A=(a_0,a_1,...,a_\ell)$ is the set of angles of $\Gamma$ and for $v\in V$, we have $A(v)$ is the set of angles incident to $v$. For $v\in V$, let $b(v)=\min\{i:a_i\in A(v)\}$.

For $v\in V$, we define the sequence $J(v)=({j(i)})_{i\geq 0}$ of
elements of $\mathbb N$ by the following.  Let ${j(0)}={b(v)}$ and
assume that ${j(i)}$ is defined for $i\geq 0$. If ${j(i)}=\ell$, then
the procedure stops.  If ${j(i)}\neq\ell$, then we define ${j(i+1)}$
by the following.  If the two consecutive angles $a_{j(i)}$ and
$a_{j(i)+1}$ of $A$ are separated by a stem $s$, then let $j(i+1)$ be
such that $a_{j(i+1)}=a(s)$. If $a_{j(i)}$ and $a_{j(i)+1}$ are
consecutive along an edge of $\Gamma$, then let ${j(i+1)}={j(i)+1}$.
Note that in both cases, by Lemma~\ref{lemma1} or the labeling rule,
we have $\lambda(a_{j(i+1)})=\lambda(a_{j(i)})-1$. So
$(\lambda(a_{j(i)}))_{i\geq 0}$ is decreasing by exactly one at each
step. Let $k=\lambda(a_{b(v)})$. Then for $i\geq 0$, we have
$\lambda(a_{j(i)})=k-i$.  Thus the procedure ends on $\ell$ after $k$
steps, i.e. $J(v)=({j(i)})_{0\leq i\leq k}$. Moreover we have that the
sequence $J(v)$ is strictly increasing since, as already remarked, by
the safe property, a stem $s$ is always attached to an angle with
greater index than the index of the angles incident to $s$.  We also
define the corresponding walk $W_J(v)=(v(a_{j(i)}))_{0\leq i\leq k}$
of $G$.

We have the following lemma:

 \begin{lemma}
 \label{lem:minindex}
 Consider $v\in V$ with $k=\lambda(a_{b(v)})$ and $J(v)=({j(i)})_{0\leq i\leq k}$. Then, $k>0$, and for $0\leq i\leq k$, we have $j(i)=\min\{z\geq b(v) :\lambda(a_z)=k-i\}$.
 \end{lemma}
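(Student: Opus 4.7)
The plan is to handle the positivity of $k$ separately, and then prove the extremal characterization of $j(i)$ by induction on $i$. The two main ingredients will be Lemma~\ref{lemma1} (which controls the labels of angles strictly between a stem $s$ and its attachment angle $a(s)$) and the elementary observation that $\lambda(a_{z+1})-\lambda(a_z) \in \{-1,+1\}$ for every $0 \leq z \leq \ell-1$, which will enable an intermediate-value argument.

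For the positivity of $k$, I would split cases: if $v=v_0$ then $b(v)=0$ and $\lambda(a_0)=3>0$; otherwise both $a_0$ and $a_\ell$ are incident to $v_0$, hence $1\leq b(v)\leq \ell-1$ and Lemma~\ref{lemma2} gives $\lambda(a_{b(v)})>0$. The base case $i=0$ of the induction is immediate since $b(v)$ itself lies in $\{z\geq b(v):\lambda(a_z)=k\}$. For the inductive step, the discussion preceding the lemma already establishes that $\lambda(a_{j(i+1)})=k-i-1$, so I only need to rule out any $z\in[b(v),j(i+1))$ with $\lambda(a_z)=k-i-1$. I will split this range into $[j(i),j(i+1))$ and $[b(v),j(i))$.

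On $[j(i),j(i+1))$ the argument is purely local. If $a_{j(i)}$ and $a_{j(i)+1}$ are separated by a stem $s$ with $a(s)=a_{j(i+1)}$, then $\lambda(s)=\lambda(a_{j(i)})=k-i$, and Lemma~\ref{lemma1} forces every angle strictly between $s$ and $a(s)$ in clockwise order to have label $\geq k-i$; combined with $\lambda(a_{j(i)})=k-i$, this rules out the value $k-i-1$ throughout the range. If instead the transition is along an edge, then $j(i+1)=j(i)+1$ and only $z=j(i)$, whose label is $k-i$, needs to be checked.

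The main obstacle will be the lower piece $[b(v),j(i))$, and this is where the induction hypothesis gets used in an essential way. Suppose for contradiction that some $z$ in this range satisfies $\lambda(a_z)=k-i-1$. Since $\lambda(a_{b(v)})=k$ and the increments of $\lambda$ along $A$ are $\pm 1$, the intermediate-value principle produces an index $z'\in[b(v),z)$ with $\lambda(a_{z'})=k-i$. But then $z'<z<j(i)$ contradicts the induction hypothesis that $j(i)$ is the minimum index $\geq b(v)$ at which $\lambda$ takes the value $k-i$. Designing the statement as a minimum characterization (rather than merely pinpointing $j(i)$) is precisely what makes this intermediate-value argument available, and this completes the induction.
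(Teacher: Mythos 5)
Your proof is correct and takes essentially the same route as the paper's. The key local ingredient is identical: on the interval $[j(i),j(i+1))$, you invoke Lemma~\ref{lemma1} (together with the observation $\lambda(s)=\lambda(a_{j(i)})$) to rule out the label $k-i-1$, which is exactly the content of the paper's internal ``Claim'' for $y=i$. The only structural difference is in how the lower interval $[b(v),j(i))$ is handled: the paper applies its Claim for every $y\leq i$ to get the uniform bound $\lambda(a_z)\geq k-i$ directly, whereas you combine the induction hypothesis (minimality of $j(i)$) with an intermediate-value argument to derive a contradiction from the weaker assumption $\lambda(a_z)=k-i-1$. Both are valid; the paper's phrasing avoids the explicit IVT step but proves something slightly stronger along the way. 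Your handling of $k>0$ by distinguishing $v=v_0$ from $v\neq v_0$ is also correct and essentially equivalent to the paper's contradiction argument, both resting on Lemma~\ref{lemma2}.
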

  \begin{proof}
 First, suppose by contradiction that $k=0$. Then we have ${b(v)}=\ell$, so $v=v_0$ and thus ${b(v)}=0$. This contradicts $\ell = 4n+1$ and $n\geq 1$. So $k> 0$.
 
 Let $y$ be such that $0\leq y< k$. We claim that for all $z$ such that $j(y)\leq z < j(y+1)$, we have $\lambda(a_z)\geq k-y$.
 Recall that we have $\lambda(a_{j(y)})=k-y$ so the claim is true for $z=j(y)$.
 If the two consecutive angles $a_{j(y)}$ and $a_{j(y)+1}$ of $A$ are consecutive along an edge of $\Gamma$, then we are done since ${j(y+1)}={j(y)+1}$. Suppose now that $a_{j(y)}$ and $a_{j(y)+1}$ are separated by a stem $s$, then we have $a_{j(y+1)}=a(s)$.
 By Lemma~\ref{lemma1}, for $j(y)< z < j(y+1)$, we have $\lambda(a_z)\geq \lambda(a_{j(y)})=k-y$. This concludes the proof of the claim.
 
 Let $i$ be such that $0\leq i< k$. So, by the claim applied for $0\leq y \leq i$, we have the following: for  $b(v)\leq z < j(i+1)$, we have $\lambda(a_z)\geq k-i$. Since $\lambda(a_{j(i+1)})=k-i-1$, we have $j(i+1)=\min\{z\geq b(v) :\lambda(a_z)=k-(i+1)\}$. Moreover, we clearly have $j(0)=\min\{z\geq b(v) :\lambda(a_z)=k\}$.
  \end{proof}

We say that a vertex $v$ is the successor of a vertex $u$ if $b(u)\leq b(v)$ and denote this by $u \preceq v$.  Then for all $u,v \in V$, we define 

$$\overline{m}(u,v)=
\begin{cases}
\label{definitionofmcheck}
\min \{ \lambda(a_k):b(u)\leq k\leq b(v)\} & \text{ if } u \preceq v \\ 
\min \{ \lambda(a_k):b(v)\leq k\leq b(u)\} & \text{ if } v \preceq u \\ 
\end{cases}.$$

\begin{lemma}
\label{distancebetweentwovertices}
For all $u,v \in V$, we have
$d_G(u,v) \leq m(u)+m(v)-2\overline{m}(u,v)+14$.
\end{lemma}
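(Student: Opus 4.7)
The plan is to exhibit, for any two vertices $u, v \in V$, a common vertex $w$ visited simultaneously by the two descent walks $W_J(u)$ and $W_J(v)$ at label just below $\overline{m}(u,v)$, and then to conclude by the triangle inequality. The idea is natural: both walks end at $v_0$ while their angle labels strictly decrease by $1$ at each step, so any point at which they coincide yields a shortcut between $u$ and $v$ whose length can be read off directly from the labels.

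Without loss of generality assume $u \preceq v$, so $b(u) \leq b(v)$. Since $\ell = 4n+1 > b(v)$ and Lemma~\ref{lemma2} gives $\lambda(a_k) > 0$ for $0 \leq k \leq \ell - 1$, every index $k$ in $[b(u), b(v)]$ satisfies $\lambda(a_k) \geq 1$, so $\overline{m}(u,v) \geq 1$. Set $L := \overline{m}(u,v) - 1 \geq 0$. By Lemma~\ref{lem:minindex} applied to $u$, the walk $W_J(u)$ reaches label $L$ at step $\lambda(a_{b(u)}) - L$ at the angle $a_{j_u}$, where $j_u := \min\{z \geq b(u) : \lambda(a_z) = L\}$; applying the same lemma to $v$ gives a corresponding index $j_v := \min\{z \geq b(v) : \lambda(a_z) = L\}$ reached after $\lambda(a_{b(v)}) - L$ steps of $W_J(v)$.

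The central step is the identity $j_u = j_v$. By definition of $\overline{m}(u,v)$, every $z \in [b(u), b(v)]$ satisfies $\lambda(a_z) \geq \overline{m}(u,v) > L$, so the minimum $j_u$ is not in $[b(u), b(v)]$ and hence lies strictly above $b(v)$; this makes $j_u$ a candidate for $j_v$, giving $j_v \leq j_u$, while $j_u \leq j_v$ is immediate from the inclusion $\{z \geq b(v)\} \subseteq \{z \geq b(u)\}$. Setting $w := v(a_{j_u}) = v(a_{j_v})$, the prefixes of the two walks terminating at $w$ are walks in $G$ of lengths $\lambda(a_{b(u)}) - L$ and $\lambda(a_{b(v)}) - L$ respectively. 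The triangle inequality then yields
\begin{equation*}
d_G(u,v) \leq \lambda(a_{b(u)}) + \lambda(a_{b(v)}) - 2L,
\end{equation*}
and Lemma~\ref{lem:Mm6} bounds $\lambda(a_{b(u)}) \leq m(u) + 6$ and $\lambda(a_{b(v)}) \leq m(v) + 6$. Substituting $L = \overline{m}(u,v) - 1$ produces the announced bound $m(u) + m(v) - 2\overline{m}(u,v) + 14$, in which the $+12$ comes from the two occurrences of the spread bound $M - m \leq 6$ and the remaining $+2$ is the cost of dropping one step below $\overline{m}(u,v)$ in order to force coalescence. The main obstacle is the coalescence identity $j_u = j_v$; however, as just sketched, it follows cleanly from Lemma~\ref{lem:minindex} together with the definition of $\overline{m}(u,v)$, and all edge cases ($u = v$, $u = v_0$) are covered without modification.
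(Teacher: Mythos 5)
Your proof is correct and follows essentially the same route as the paper's: both use Lemma~\ref{lem:minindex} to identify the first index $\geq b(u)$ (resp.\ $\geq b(v)$) at label $\overline{m}(u,v)-1$, observe that the definition of $\overline{m}$ forces these indices to coincide, and conclude by the triangle inequality plus the spread bound $M-m\leq 6$. The only cosmetic difference is that you absorb the cases $u=v$ and $u=v_0$ into the general argument (which works, since $b(v)<\ell$ always so Lemma~\ref{lemma2} gives $\overline{m}(u,v)\geq 1$), whereas the paper dispatches them separately first.
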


\begin{proof}
By symmetry, we can assume that $u \preceq v$. If $u=v$,
then, by Lemma~\ref{lem:Mm6}, we have $\overline{m}(u,v)\leq m(u)+6$ and the lemma is clear since $d_G(u,v)=0$.
If $u$ is equal to $v_0$, then $\overline{m}(u,v)\leq \lambda (b(v_0))= \lambda (a_0)=3$ and the lemma is clear by Lemma~\ref{prop:bornesupdistance}. We now assume that $u$ is distinct from $v$ and $v_0$. Thus $v$ is also distinct from $v_0$ since $u \preceq v$. 
Then, by Lemma~\ref{lemma2}, we have $\overline{m}(u,v)>0$.

 Let $k=\lambda(b(u))$ and $k'=\lambda(b(v))$.
 Consider the two sequences $J(u)=({j(i)})_{0\leq i\leq k}$ and 
 $J(v)=({j'(i)})_{0\leq i\leq k'}$. By definition, we have $\overline{m}(u,v)\leq k$ and $\overline{m}(u,v)\leq k'$. Moreover we have $\overline{m}(u,v)>0$.
  Let $t>0$ and $t'>0$ be such that $k-t=k'-t'=\overline{m}(u,v)-1$.
By Lemma~\ref{lem:minindex}, we have 
 $j(t)=\min\{z\geq b(u) :\lambda(a_z)=k-t\}$
 and $j'(t')=\min\{z\geq b(v) :\lambda(a_z)=k'-t'\}$.
By definition of $\overline{m}(u,v)$, we have $j(t)> b(v)$ and so $j(t)=j'(t')$.
So the two walks $W_J(u)$ and $W_J(v)$ of $G$ are  reaching vertex $v(a_{j(t)})=v(a_{j'(t')})$ in respectively $t$ and $t'$ steps.
So  $d_G(u,v)\leq t+t'\leq k+k'-2\overline{m}(u,v)+2$.

By Lemma~\ref{lem:Mm6}, we have $k\leq m(u)+6$ and $k'\leq m(v)+6$. So finally we obtain 
$d_G(u,v)\leq m(u)+m(v)-2\overline{m}(u,v)+14$
\end{proof}

\section{Decomposition of unicellular maps}
\label{somebijection}

In this section we decompose the unicellular map considered in the
bijection given by Theorem~\ref{them:bijectionbenjamin} into simpler
objects, namely well-labelled forest and Motzkin paths.

\subsection{Forests and well-labelings}
\label{subsection:forestandwell-labelings}

We first introduce a formal definition of forest from~\cite{neveu1986arbres}. 
  
Let  $\mathbb{N}=\{0,1,2,...\}$ and $\mathbb{N^*}=\mathbb{N}\setminus \{0\}$. Let $\mathcal F$ be the set of all $n$-uplets of elements of $\mathbb{N^*}$ for $n\geq 1$, i.e.:

$$\mathcal F=\bigcup_{n=1}^{\infty}\mathbb{(N^*)} ^n,$$ 

For $n\geq 1$, if $u\in \mathbb{(N^*)}^n$, we write $|u|=n$. Let
$u=u_1\,u_2\,...\,u_n$ and $v=\,v_1\,v_2\,...\,v_p$ be two elements of
$\mathcal F$, then $u\, v=u_1\,u_2\,...\,u_n\,v_1\,v_2\,...\,v_p$ is
the \emph{concatenation} of $u$ and $v$. If $w=u\, v$ for some
$u,v \in \mathcal F$, we say $u$ is an \emph{ancestor} of $w$.
In the
particular case where $|v|=1$, we say that $u$ is the \emph{parent} of
$w$, denoted by $pa(w)$, and $w$ is a \emph{child} of $u$.

For $F\subseteq \mathcal F$ and $i\geq 1$, we denote
$F_i=\{u\in F\, :\, |u|=i\}$ and
$F_{\geq i}=\{u\in F\, :\, |u|\geq i\}$.

\begin{definition}
\label{def:forest}
A \emph{forest} is a non-empty finite subset $F$ of $\mathcal F$ satisfying the following (see example of Figure~\ref{fig:forest}):
\begin{enumerate}
\item There exists $t(F)\in \mathbb N$ such that 
$F_1=[\![1,t(F)+1]\!]$. 

\item If $u\in F_{\geq 2}$, then $pa(u) \in F$.

\item For all $u\in F$, there exists $c_u(F) \in \mathbb N$ such that:
  for all $i\in \mathbb{N^*}$, we have $u\, i \in F$ if and only if  $i\leq c_u(F)$.

\item $c_{t(F)+1}(F)=0$. 
\end{enumerate}
\end{definition}

Given a forest $F\in \mathbb F$.
The integer $t(F)$ of Definition~\ref{def:forest} is called the
\emph{number of trees} of $F$. The set $F_1$ is called the set of
\emph{floors of $F$}. For $n\geq 1$, if $u=u_1\,u_2\,...\,u_n$ is an element of $F$, then we denote $fl(u)=u_1$. Note that $fl(u)\in F$ by Definition~\ref{def:forest} (item 2.). So $fl(u)$ is a floor of the forest that we call the \emph{floor of $u$}.
The set of ancestor of $u$ in $F$ is denoted $A_u(F)$.
For $1\leq j\leq t(F)$, the $j$-th \emph{tree} of $F$, denoted by $F^j$,  
is the set of elements of $F$ that have floor $j$.
We say that $j$ is the \emph{floor of $F^j$}.  
For $\rho \in \mathbb N$ and $\tau\in\mathbb N^*$, the set of all forests $F$ with $\tau$ trees and $\rho+\tau+1$ elements is denoted by $\mathbb{F}^\rho_{\tau}$. 

A plane rooted tree is a connected acyclic graph represented in the plane that is rooted at a particular angle.
We represent a forest as a plane rooted tree by the following (see example of Figure~\ref{fig:forest}).
The set of vertices are the elements of $F$.
The set of oriented edges are the couples $(u,v)$, with $u,v$ in $F$, such that $pa(v)=u$, or there exists $i\in [\![ 1,t(F)]\!]$  such that $u=i$ and $v=i+1$. The tree is embedded in the plane such that it satisfies the following: 

\begin{itemize}
\item Around the vertex $1$ appear in \ccw order : the root angle, then, if $c_1(F)\geq 1$, the vertices $1\, 1$ to $1\, c_1(F)$, then vertex $2$.
\item Around a vertex $i\in [\![2,t(F)]\!]$  appear in \ccw order : the vertex $(i-1)$, then, if $c_i(F)\geq 1$, the vertices $i\, 1$ to $i\, c_i(F)$, then vertex $(i+1)$.
\item Around a vertex $u\in F_{\geq 2}$   appear in \ccw order : the vertex $pa(u)$, then, if $c_u(F)\geq 1$, the vertices $u\, 1$ to $u\, c_u(F)$.
\end{itemize}

One can recover the set of floors of $F$ from the plane rooted tree by
considering, as on figure~\ref{fig:forest}, the left most path
starting from the root angle.  A vertex which is not a floor, is
called a \emph{tree-vertex}. An edge between two floors is called
\emph{floor-edge}. An edge which is not a floor-edge is called
\emph{tree-edge}

Note that there is indeed a bijection between  $\mathbb{F}^\rho_{\tau}$, and, plane rooted trees with $\tau+1$ floors and $\rho$ tree-vertices.

\begin{figure}[h]
  \centering
  \scalebox{0.5}{\input{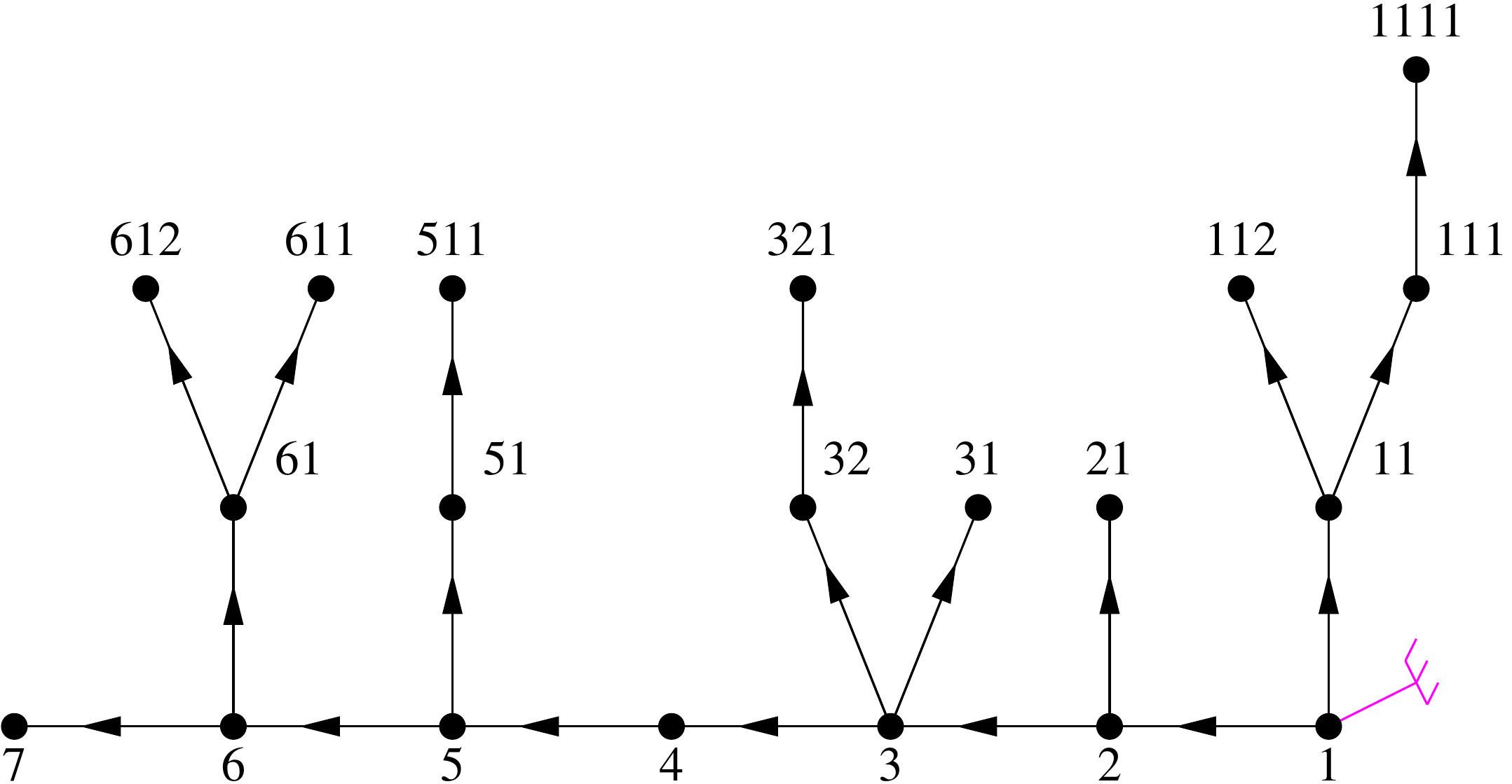_t}}
 
  \ \\
  
  $F=\{1,11,111,1111,112,2,21,3,31,32,321,4,5,51,511,6,61,611,612,7\}$
  \caption{Representation of a forest
    of $\mathbb{F}^{13}_6$.}
    \label{fig:forest}
  \end{figure}

We now equip the considered forest with a label function on the set of vertices.


\begin{definition}
  A \emph{well-labeled forest} is a pair $(F,\ell)$, where $F$ is a forest
   and $\ell: F \rightarrow \mathbb{Z}$ is such that
  $\ell$ satisfies the following conditions (see example of
  Figure~\ref{wellforest}):

  \begin{enumerate}
    
  \item For all $u\in F_1$, we have $\ell(u)=0$
\item For all $u\in F_2$, we have $\ell(u)=-1$, 
 \item For all $u\in F_{\geq 2}$ and $c_u(F)\geq 1$, we have $\ell(u)-1\leq \ell(u\, 1) \leq \ell(u\, 2)\leq \cdots \leq \ell(u\, c_u(F))\leq \ell(u)+1$.
  
\end{enumerate}

\end{definition}

The set of all well-labeled forests $(F,\ell)$ such that $F \in \mathbb{F}^\rho_{\tau}$  is denoted by $\mathcal F^\rho_{\tau}$.

\begin{figure}[h]
    \centering
 \scalebox{0.5}{\input{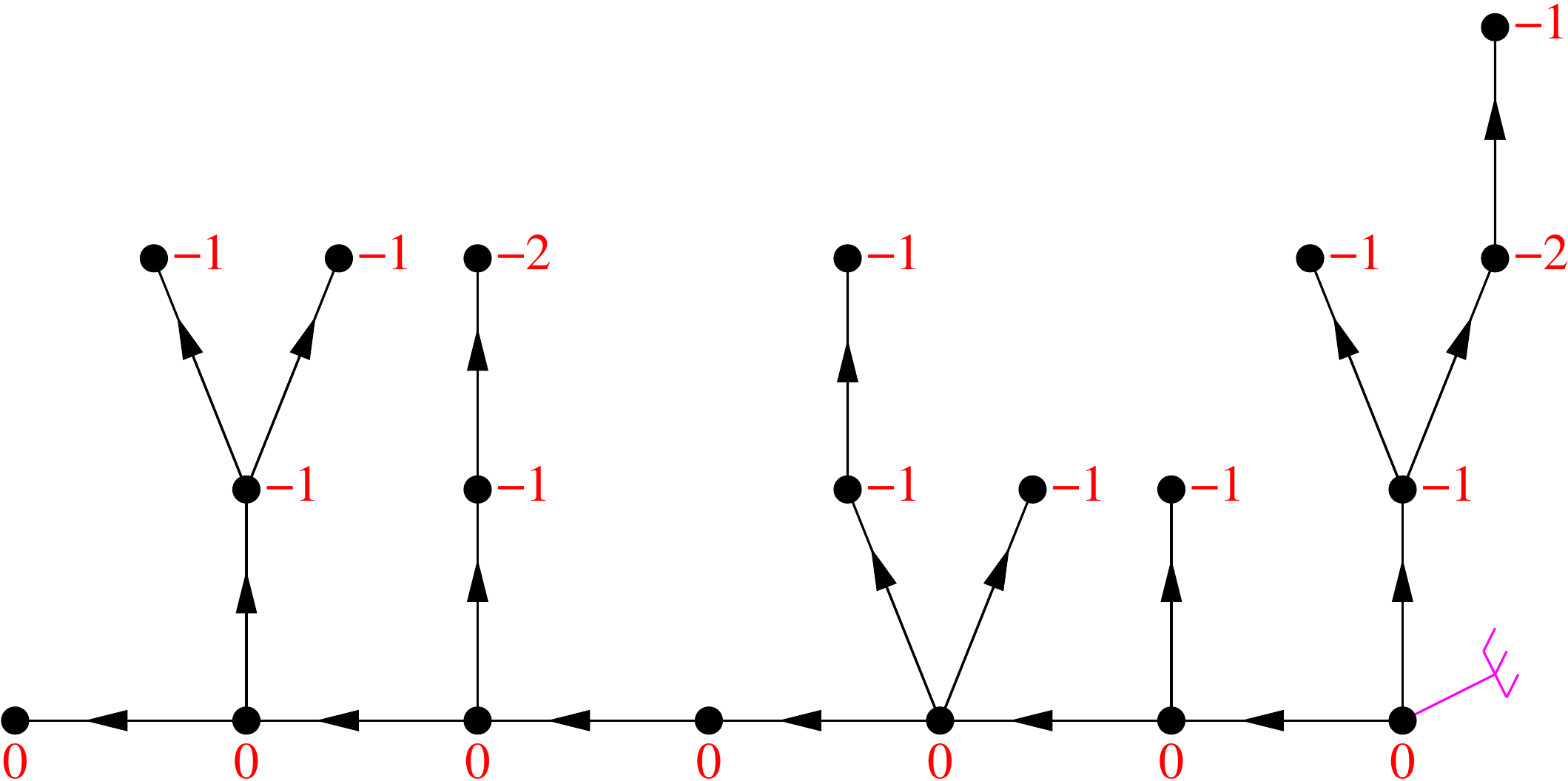_t}}
  \caption{Example of a well-labeled forest of $\mathcal F^{13}_6$.}
    \label{wellforest}
  \end{figure}

  The function $\ell$ of a well-labeled forest $(F,\ell)$ can be
  represented on the plane rooted tree representing $F$ by adding two
  stems incident to each tree-vertex of $F$ (see
  figure~\ref{fig:welllabelstem}). A variation into the value $\ell$
  of two consecutive children of vertex $u$ indicates the presence of one (or two)
  stems incident to $u$ in the corresponding angle (assuming that we add two virtual children, one on the right
  having label $\ell(u)-1$ and one on the left having  label $\ell(u)+1$.

Note that there is a bijection between $\mathcal F^\rho_{\tau}$, and, plane rooted tree with $\tau+1$ floors and $\rho$ tree-vertices each being incident to two additional stems.

\begin{figure}[h]
  \centering
 \scalebox{0.5}{\input{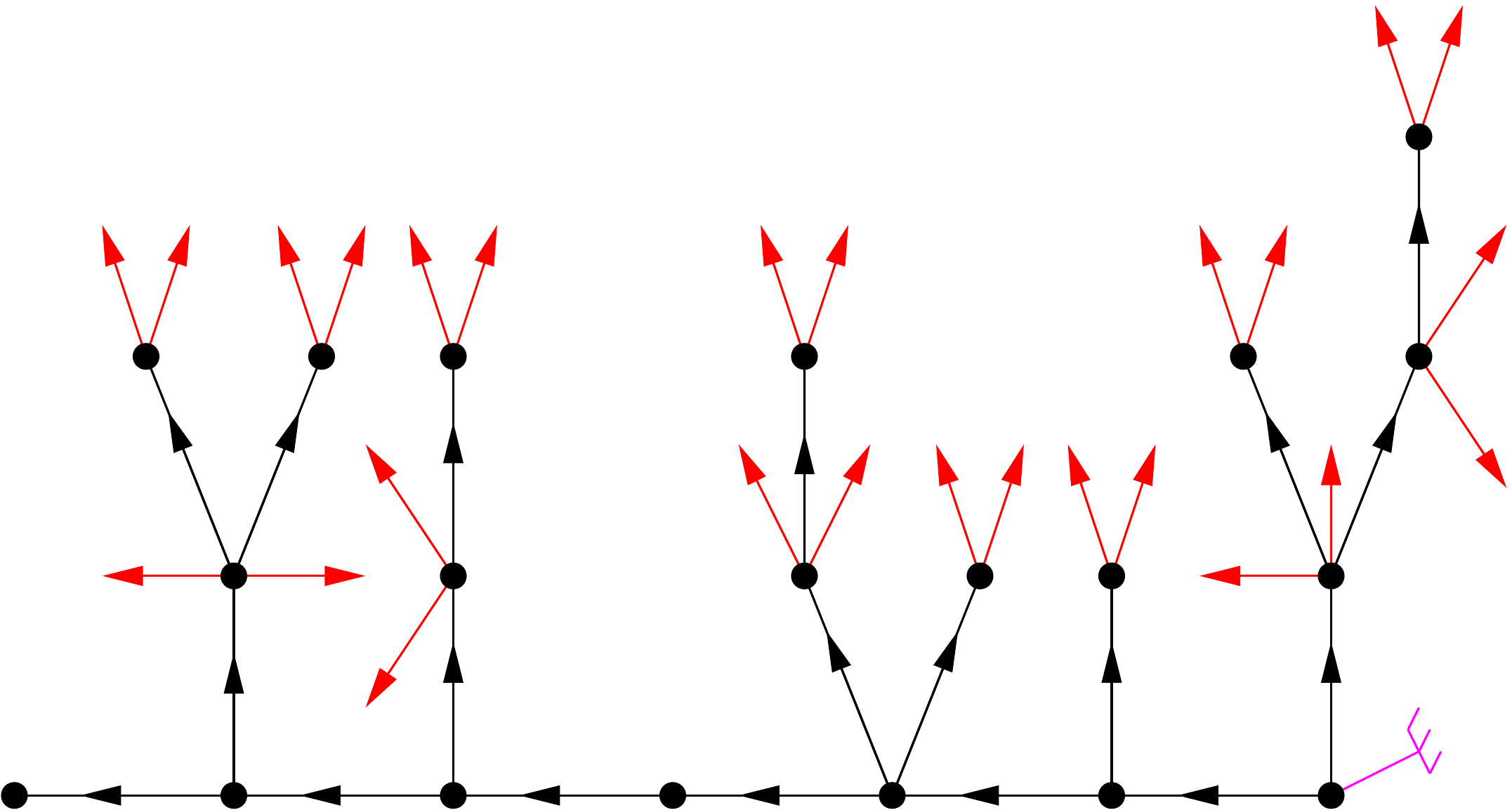_t}}
  \caption{Representation of the well-labeled forest of Figure~\ref{wellforest} by a  plane rooted tree with two additional stems incident to each tree-vertex.}
    \label{fig:welllabelstem}
  \end{figure}

We now encode forests and well-labeled forest similarly as in \cite{bettinelli2010scaling}. To do this, we need to define the contour and labeling functions.

Consider a forest $F$ of $\mathbb{F}^\rho_{\tau}$. 

We  define  the \emph{vertex contour  function}  of  $F$ as  the  function
$r_F:[\![0,2\rho+\tau]\!] \rightarrow  F$, such  that $r_F(0)=1$  and for
$0\leq i< 2\rho+\tau$, we have the following:
\begin{itemize}
\item If $r_F(i)$ have children which do not belong to the set $\{r_F(0),\ldots,r_F(i-1) \}$, then $r_F(i+1)=r_F(i)\, j$ where $j=\min\{k\in \mathbb N^*: r_F(i)\, k \notin \{r_F(0),\ldots,r_F(i-1) \} \}$. 
\item If all children of $r_F(i)$ belong to $\{r_F(0),\ldots,r_F(i-1) \}$ 
then, $r_F(i+1)=pa(r_F(i))$ if $|r_F(i)|\geq 2$, and,  $r_F(i+1)=r_F(i)+1$ otherwise 
\end{itemize}

Note that $r_F(2\rho+\tau)= \tau +1$ by a simple counting argument. 

Unformaly, the vertex contour function of a forest corresponds to a \ccw
walk around its representation, starting from the root angle.  For
the example of Figure~\ref{fig:forest}, one obtain the following
vertex contour function:

\begin{equation*}
\begin{split}
r_F([\![0,2\rho+\tau]\!])= ( & 
1,11,111,1111,111,11,112,11,1,2,21,2,3,31,3,32,321, 32,3,\\ & 4,5,51,511,51,5,6,61,611,61,612,61,6,7)
\end{split}
\end{equation*}

We now define the \emph{contour function} of $F$ as the function
$C_F:[0,2\rho+\tau] \rightarrow \mathbb{R}$ such that for
$i\in[\![0,2\rho+\tau]\!]$ $$C_F(i)=fl(r_F(i))-|r_F(i)|.$$ Note that
$C_F(0)=0$ and $C_F(2\rho+\tau)=\tau$.

For example, the contour function of the forest of Figure~\ref{fig:forest} is:
\begin{equation*}
\begin{split}
C_F([\![0,2\rho+\tau]\!])= ( & 
0,\text{-} 1,\text{-} 2,\text{-} 3,\text{-} 2,\text{-} 1,\text{-} 2,\text{-} 1,0,1,0,1,2,1,2,1,0,1,2,\\ & 3,4,3,2,3,4,5,4,3,4,3,4,5,6)
\end{split}
\end{equation*}

Note that one can recover a forest $F$ from its contour function $C_F$.

Now consider $(F,\ell)$ a well-labeled forest with $F\in \mathbb
F^\rho_{\tau}$.

We defined the \emph{labeling function} of $(F,\ell)$ as the function
$L_{(F,\ell)}:[0,2\rho+\tau]\rightarrow \mathbb{R}$ such that for  $i \in [\![0,2\rho+\tau]\!]$ by $$L_{(F,\ell)}(i)=\ell(r_F(i)).$$

For example, the labeling function of the well-labeled forest of
Figure~\ref{wellforest} is: 
\begin{equation*}
\begin{split}
L_F([\![0,2\rho+\tau]\!])= ( & 
0,\text{-}1,\text{-}2,\text{-}1,\text{-}2,\text{-}1,\text{-}1,\text{-}1,0,0,\text{-}1,0,0,\text{-}1,0,\text{-}1,\text{-}1,\text{-}1,0, \\
& 0,0,\text{-}1,\text{-}2,\text{-}1,0,0,\text{-}1,\text{-}1,\text{-}1,\text{-}1,\text{-}1,0,0)
\end{split}
\end{equation*}

Note that one can recover $(F,\ell)$ from
 the pair $(C_F,L_{(F,\ell)})$. This pair is called the \emph{contour pair} of $(F,\ell)$.

\subsection{Relation between well-labeled forests and $3$-dominating binary words}
\label{section:Relation between well-labeled forests and $3$-dominating binary words}
In this section, we show how to compute the value of
$|\mathcal F^{\rho}_{\tau}|$ for $\rho \in \mathbb N$ and
$\tau\in\mathbb N^*$.

Consider $b\in \{0,1\}^p$. If $b=b_1\, \ldots\, b_p$, then we define the \emph{inverse of $b$} by $b^{-1}=b_p\, \ldots\, b_1$. For $x\in \{0,1\}$, we denote $|b|_x=|\{1\leq i \leq p : b_i=x\}|$.
  We say that $b$ is \emph{$k$-dominating}, for $k>0$, if for  $1\leq i\leq p$,  we have
 $|b_1\ldots b_i|_0> k\, |b_1\ldots b_i|_1$. 
  For example, the sequence  $01001$ is not $1$-dominating and the sequence
 $000011001$ is $1$-dominating but not $2$-dominating. 
 We have the following lemma from~\cite{dershowitz1990cycle}:

\begin{lemma}[\cite{dershowitz1990cycle}]
\label{cyclelemma}
Consider $b\in \{0,1\}^{p+q}$ with $|b|_0=p$ and $|b|_1=q$. For $k\in \mathbb N^*$, if $p \geq k\, q$, then there exist exactly $p-k\,q$ elements of
$\{b_j\,b_{j+1}\,...\, b_{p+q}\, b_1\,b_2\,...\,b_{j-1}
: 1\leq j \leq p+q\}$
that are $k$-dominating. 
\end{lemma}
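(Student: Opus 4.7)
The plan is to recognize the statement as an instance of the classical Dvoretzky--Motzkin cycle lemma and prove it via a partial-sum argument. First, I would translate the $k$-domination condition numerically: assign to each letter of $b$ the weight $a_i = 1$ if $b_i = 0$ and $a_i = -k$ if $b_i = 1$. Then the inequality $|b_1\cdots b_i|_0 > k\,|b_1\cdots b_i|_1$ is equivalent to $\sum_{l=1}^i a_l > 0$. The cyclic rotation starting at position $j$ is therefore $k$-dominating iff every one of its $n := p+q$ partial sums is strictly positive, and the total $\sum_i a_i$ equals $s := p - kq$. In the boundary case $s = 0$ the full sum already vanishes, so no rotation is good, matching the announced count; I therefore focus on the case $s > 0$.

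Next, I would extend the weight sequence periodically, $a_{i+n} = a_i$, and set $S_0 = 0$, $S_i = \sum_{l=1}^i a_l$, so that $S_{i+n} = S_i + s$. The rotation starting at $j$ is good iff $S_{j-1} < S_{j-1+t}$ for every $1 \le t \le n$; since $S_{j-1+n} > S_{j-1}$, this condition is equivalent to $S_{j-1}$ being a strict forward minimum of $(S_i)_{i \in \mathbb{Z}}$, in the sense that $S_{j-1} < S_i$ for every $i > j-1$. The problem thus reduces to counting the strict forward minima whose index lies in $\{0, 1, \ldots, n-1\}$.

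Third, I would count these forward minima using the hypothesis $a_i \le 1$. Enumerate them in order as $i_1 < i_2 < \cdots$ with values $v_1 < v_2 < \cdots$; I claim $v_{m+1} = v_m + 1$ for every $m$. Indeed, if $v_{m+1} \ge v_m + 2$, then since $S_{i_{m+1}} - S_{i_m} \ge 2$ and $S$ cannot increase by more than $1$ per step, $S$ must take the value $v_m + 1$ at some intermediate index. Let $j^\star$ be the last such index $\le i_{m+1}$: the forward-minimum properties at $i_m$ and $i_{m+1}$, combined with the choice of $j^\star$, force $S_i > v_m + 1$ for all $i > j^\star$, making $j^\star$ itself a strict forward minimum strictly between $i_m$ and $i_{m+1}$, contradicting the enumeration. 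Because $S_{i+n} = S_i + s$, the forward-minimum values advance by exactly $s$ over each period, so each period contains exactly $s = p - kq$ forward minima, yielding the announced count.

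The main obstacle is the consecutive-integers claim in the third paragraph. It is precisely the place where the hypothesis $a_i \le 1$ (equivalently, that zeros carry weight $+1$) is indispensable: without it the partial sums could overshoot an integer when climbing, breaking the bijection between forward-minimum indices and the $s$ consecutive values immediately above the running minimum, and the count could drop below $s$.
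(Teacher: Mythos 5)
Your proof is correct. Note that the paper does not actually prove Lemma~\ref{cyclelemma}: it is imported verbatim from Dershowitz and Zaks' cycle lemma paper, so there is no in-paper argument to compare against. Your write-up is a sound, self-contained rendering of the standard partial-sum proof: the translation of $k$-domination into strict positivity of prefix sums with weights $+1$ and $-k$ is exact, the reduction of "good rotations" to strict forward minima of the periodic extension is valid (the window condition over one period does imply $S_{j-1}<S_i$ for all $i>j-1$, because each further period only adds a positive multiple of $s$), and the key consecutive-values claim is correctly reduced to the fact that up-steps have size exactly $1$, which is where the $0/1$ structure of the word enters. Combined with the observation that $i\mapsto i+n$ maps forward minima to forward minima while advancing the value by $s=p-kq$, this gives exactly $s$ good rotations per period, as claimed; the degenerate case $s=0$ is also handled correctly. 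The only point worth polishing is the definition of $j^\star$: you should say explicitly that it is the last index in the interval $(i_m,i_{m+1})$ at which $S$ takes the value $v_m+1$ (such an index exists because the path must climb through $v_m+1$ one unit at a time, and it cannot equal $i_{m+1}$ since $S_{i_{m+1}}\geq v_m+2$); with that wording the contradiction is airtight.
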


The set of elements $b\in \{0,1\}^{p+q}$ with $|b|_0=p$ and $|b|_1=q$
that are $3$-dominating is denoted $\mathcal D_{3,p,q}$.  The elements
whose inverse is in $\mathcal D_{3,p,q}$ are called \emph{inverse
  $3$-dominating binary words} and their set is denoted
$\mathcal D_{3,p,q}^{-1}$

\begin{lemma}
\label{bijection3dominating}
There is a bijection between $\mathcal F^\rho_{\tau}$ and $\mathcal D_{3,3\rho+\tau,\rho}^{-1}$.
\end{lemma}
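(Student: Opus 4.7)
The plan is to construct an explicit bijection $\Psi$ from $\mathcal F^\rho_{\tau}$ to $\mathcal D^{-1}_{3,3\rho+\tau,\rho}$ by encoding a canonical walk around the forest. Given $(F,\ell)\in \mathcal F^\rho_{\tau}$, I would use the plane-tree-with-stems representation from Figure~\ref{fig:welllabelstem} and walk counterclockwise around the outer face from the root angle to the last angle of floor $\tau+1$. The walk consists of exactly $4\rho+\tau$ elementary steps, because each of the $\rho$ tree-edges is traversed twice, each of the $\tau$ floor-edges once, and each of the $2\rho$ stems once. I then define $\Psi(F,\ell)_i=1$ precisely when the $i$-th step is a descent along a tree-edge (moving away from the floor spine) and $\Psi(F,\ell)_i=0$ otherwise. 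In particular the resulting word has $\rho$ ones and $3\rho+\tau$ zeros, and since the walk is entirely determined by $(F,\ell)$ the map $\Psi$ is immediately injective.

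Next I would verify that $\Psi(F,\ell)$ is inverse $3$-dominating, i.e.\ that every non-empty suffix has strictly more zeros than three times its ones. Fix $0\le j<4\rho+\tau$ and look at the state right after step $j$: let $D_j$ be the number of tree-descents still to come, $A_j$ the number of tree-vertices entered but not yet exited, $F_j$ the number of floor-edges still to cross, and $S_j$ the number of stem-steps still to perform. Every future tree-vertex carries two future stems, so $S_j\ge 2D_j$; a simple bookkeeping then gives that the suffix has length $2D_j+A_j+S_j+F_j$ with $D_j$ ones, hence
\[
|0|-3|1| \;=\; A_j+S_j+F_j-2D_j \;\ge\; A_j+F_j \;>\; 0,
\]
the last strict inequality holding because the walk has not yet terminated at step $j$. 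For the full word the inequality reduces to $\tau>0$, which is assumed. Reversing the word then shows $\Psi(F,\ell)\in\mathcal D^{-1}_{3,3\rho+\tau,\rho}$.

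Finally, to prove surjectivity I describe the inverse of $\Psi$. Reading $w\in\mathcal D^{-1}_{3,3\rho+\tau,\rho}$ from left to right, I maintain a stack initialised to $[\text{floor }1]$, each tree-vertex on it carrying a stem counter initialised to $2$. A bit $1$ pushes a new tree-vertex as a child of the top of the stack, while a bit $0$ acts greedily: (i) it decrements the stem counter of the top tree-vertex if that counter is positive, otherwise (ii) it pops the top if it is a tree-vertex whose stems are exhausted, otherwise (iii) it replaces the top floor by the next floor. Using the suffix inequality above, one checks by induction that at every stage the stack has the structure prescribed by the walk, so the algorithm never gets stuck and in particular the state $[\text{floor }\tau+1]$ is attained only at step $4\rho+\tau$; this prevents any tree-vertex from being pushed under floor $\tau+1$ and ensures the reconstructed object obeys $c_{t(F)+1}(F)=0$. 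The reconstructed plane tree with stems, together with the labels induced by stem positions through conditions (2)--(3) of the definition of a well-labeled forest, is then the unique preimage of $w$. The main obstacle is precisely this last step, namely translating the suffix inequality into the guarantee that the greedy decoding retraces the forward walk correctly on every inverse-$3$-dominating word; as a sanity check the count $\tfrac{\tau}{4\rho+\tau}\binom{4\rho+\tau}{\rho}$ coming from Lemma~\ref{cyclelemma} on the target side matches the Lagrange-inversion computation of $|\mathcal F^\rho_\tau|$ from the generating function relation $B=x/(1-B)^{3}$.
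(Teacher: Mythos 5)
Your proof is correct and follows essentially the same route as the paper: the same Poulalhon--Schaeffer-style counterclockwise contour encoding of the plane tree with two stems per tree-vertex, writing a $1$ for each outgoing tree-edge and a $0$ for each ingoing tree-edge, stem, or outgoing floor-edge, which produces a word of length $4\rho+\tau$ with $\rho$ ones whose reverse is $3$-dominating. The only difference is one of detail: you spell out the suffix-counting argument for domination and the greedy stack decoding for surjectivity, both of which the paper treats briefly by appeal to \cite{poulalhon2006optimal}.
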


\begin{proof}
As already mentioned $\mathcal F^\rho_{\tau}$ is in bijection with 
plane rooted trees with $\tau$ floors and $\rho$ tree-vertices each being incident to two stems.

Similarly as in~\cite{poulalhon2006optimal}, we encode these plane
rooted trees by the following method.  Let $\alpha$ be the (unique)
angle of the last vertex of the {left} most path from the root angle.
We walk around the tree starting from the root angle in \ccw order,
and ending at $\alpha$. We write a $"1"$ when going along an outgoing
tree-edge, and a $"0"$ when going along an ingoing tree-edge, or
around a stem of $F$, or along an outgoing floor-edge (see
Figure~\ref{fig:codewellforest}).  By doing so, we obtain an element
$b$ of $\{0,1\}^{4\rho+\tau}$ with $|b|_1=\rho$ such that $b$ is the inverse
of a $3$-dominating word. Indeed, while walking around the tree in
reverse order, i.e. starting from $\alpha$, walking in \cw order
around the tree and ending at the root angle, we go along an outgoing
tree-edge $e$, and the two stems incident to its terminal vertex
before going along this tree-edge $e$ in the other direction. Thus we
have seen three $"0"$ before the $"1"$ corresponding to edge
$e$. Moreover, this walk starts by going along an ingoing floor-edge,
therefore we start with an additional $"0"$. Thus $b^{-1}$ is
$3$-dominating  so $b\in\mathcal D_{3,3\rho+\tau,\rho}^{-1}$.  As
in~\cite{poulalhon2006optimal}, one can see that the rooted plane tree
can be recovered from $b$.  Moreover, it is easy to see that any
$b\in \mathcal D_{3,3\rho+\tau,\rho}^{-1}$ corresponds to such a tree. So there
is a bijection between $\mathcal F^\rho_{\tau}$ and
$\mathcal D_{3,3\rho+\tau,\rho}^{-1}$.
\end{proof}

\begin{figure}[h]
  \centering
 \scalebox{0.5}{\input{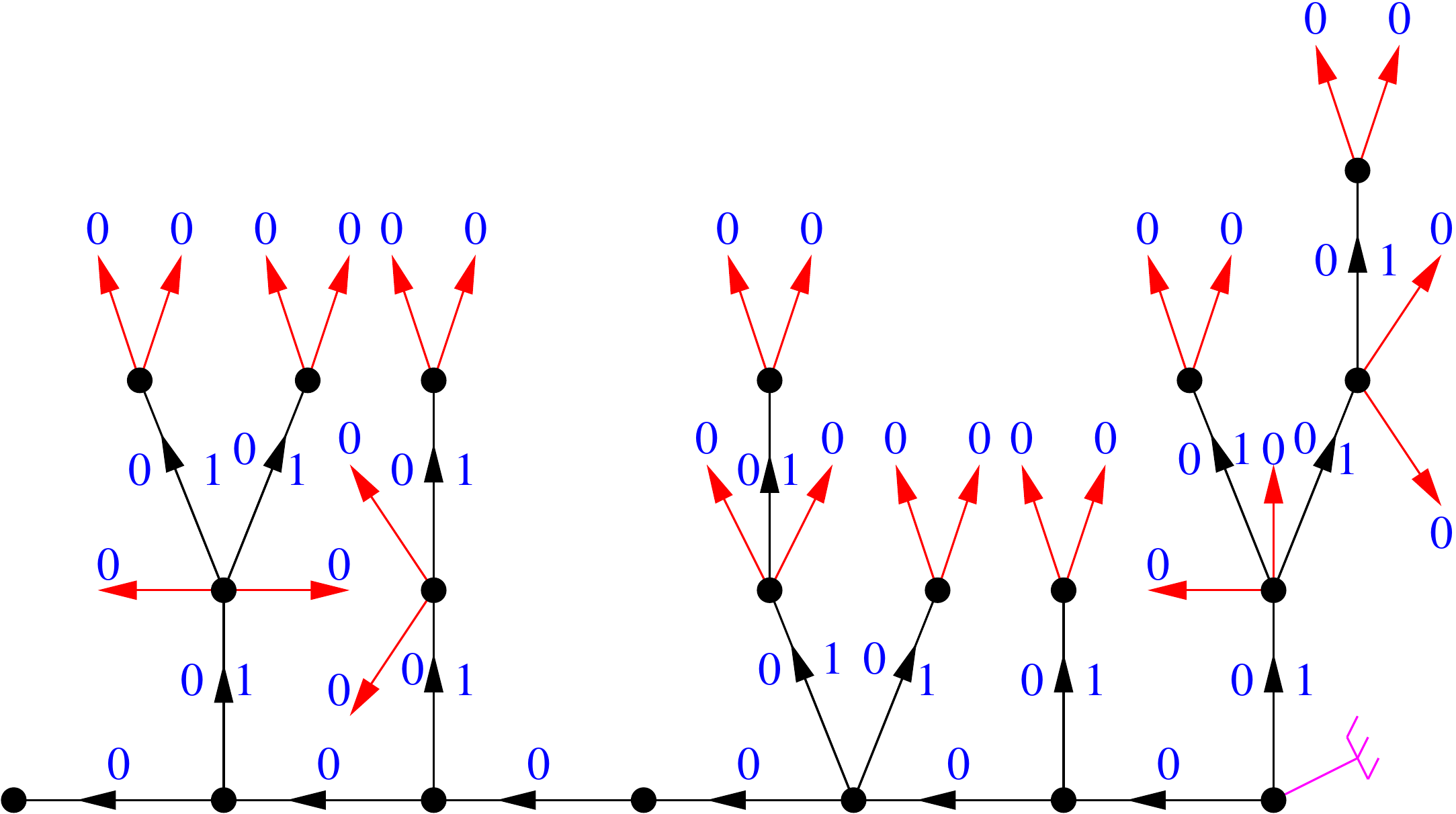_t}}
  
  \ \\
  
  $\color{blue}b=1100100000100000010000100010100000001100000001010001000000$
  \caption{Encoding a forest with two stem at each tree-vertex.}
    \label{fig:codewellforest}
  \end{figure}

\begin{lemma}
\label{mnforest}
For $\rho\in \mathbb N$ and $\tau \in \mathbb N^*$, we have:
$$ |\mathcal F^\rho_\tau|=
\frac{\tau}{4\rho+\tau}\begin{pmatrix}
4\rho+\tau\\ 
\rho
\end{pmatrix}.$$
\end{lemma}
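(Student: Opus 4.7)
The plan is to combine Lemma~\ref{bijection3dominating} with the Cycle Lemma (Lemma~\ref{cyclelemma}) via a double-counting argument. First I would use the bijection of Lemma~\ref{bijection3dominating} to rewrite
$|\mathcal F^\rho_\tau| = |\mathcal D_{3,3\rho+\tau,\rho}^{-1}|$,
and then observe that the inversion map $b \mapsto b^{-1}$ is an involution on $\{0,1\}^{p+q}$ preserving both $|\cdot|_0$ and $|\cdot|_1$. Consequently
$|\mathcal D_{3,3\rho+\tau,\rho}^{-1}| = |\mathcal D_{3,3\rho+\tau,\rho}|$,
so it suffices to count 3-dominating words directly.

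Next I set $p = 3\rho+\tau$ and $q = \rho$, so that $p \geq 3q$ (in fact $p - 3q = \tau \geq 1$) and Lemma~\ref{cyclelemma} applies with $k=3$. The Cycle Lemma asserts that for every $b \in \{0,1\}^{p+q}$ with $|b|_0 = p$ and $|b|_1 = q$, exactly $\tau$ of the $p+q$ cyclic shifts $\sigma_j(b) = b_j b_{j+1}\cdots b_{p+q} b_1 \cdots b_{j-1}$ are 3-dominating.

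I then count pairs $(b,j)$ with $1 \le j \le p+q$ and $\sigma_j(b) \in \mathcal D_{3,p,q}$ in two ways. Summing over $b$, the Cycle Lemma yields $\binom{p+q}{q}\cdot \tau$ such pairs. On the other hand, for each fixed 3-dominating word $b'$ and each fixed $j$, the equation $\sigma_j(b) = b'$ has a unique solution $b$ (since $\sigma_j$ is a bijection on the set of words with $p$ zeros and $q$ ones), so each $b' \in \mathcal D_{3,p,q}$ contributes exactly $p+q$ pairs. Equating the two counts gives
\[
|\mathcal D_{3,p,q}|\cdot (p+q) = \binom{p+q}{q}\cdot \tau,
\]
so that $|\mathcal D_{3,p,q}| = \frac{\tau}{p+q}\binom{p+q}{q}$. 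Substituting $p+q = 4\rho+\tau$ and $q = \rho$ yields the claimed formula.

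The main obstacle here is essentially bookkeeping: verifying that the Cycle Lemma applies under our parameters (the inequality $p \ge 3q$ is immediate from $\tau \ge 1$) and that the double counting is performed with the right multiplicities. There is no serious combinatorial or analytic difficulty beyond correctly tracking the inversion step, which is needed because Lemma~\ref{bijection3dominating} produces \emph{inverse} 3-dominating words rather than 3-dominating words themselves.
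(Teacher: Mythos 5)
Your proof is correct and follows essentially the same route as the paper: reduce to counting $3$-dominating words via Lemma~\ref{bijection3dominating} (together with the observation that inversion is a bijection), then apply the Cycle Lemma of Lemma~\ref{cyclelemma}. The paper leaves the double-counting step implicit where you spell it out, but the argument is the same.
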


\begin{proof}
By Lemma~\ref{bijection3dominating}, it is suffices to prove that
$$|\mathcal D_{3,3\rho+\tau,\rho}|=
\frac{\tau}{4\rho+\tau}\begin{pmatrix}
4\rho+\tau\\ 
\rho
\end{pmatrix}. $$

The number of elements $b\in \{0,1\}^{4\rho+\tau}$ with $|b|_0=3\rho+\tau$ and $|b|_1=\rho$ is $\begin{pmatrix}
4\rho+\tau\\ 
\rho
\end{pmatrix}$.

By Lemma~\ref{cyclelemma}, for each such element $b$, there are exactly
$3\rho+\tau-3\rho=\tau$ elements
of
$\{b_j\,b_{j+1}\,...\, b_{4\rho+\tau}\, b_1\,b_2\,...\,b_{j-1}
: 1\leq j \leq 4\rho+\tau\}$
that are $3$-dominating. 
Thus we obtain the result.
\end{proof}

\subsection{Motzkin paths}
\label{sec:motz}

A \emph{Motzkin path} of length $\sigma\in \mathbb N$, from $0$ to
$\gamma \in \mathbb{Z}$, with $|\gamma|\leq\sigma$, is a sequence of integers
$M=(M_i)_{0\leq i\leq \sigma}$, such that $M_0=0$, $M_{\sigma}=\gamma$, and
for all $0\leq i\leq \sigma-1$, we have $M_{i+1}-M_i\in \{-1,0,1\}$.
The set of Motzkin path of length $\sigma$ from $0$ to $\gamma$ is denoted
$\mathcal M_{\sigma}^{\gamma}$.

An example of a Motzkin path in $\mathcal M_5^{-2}$ is the
following:

\begin{equation}
\label{eq:MotzkinExample}
  M=(0,1,0,0,-1,-2)
\end{equation}

Consider $M\in \mathcal M_\sigma^\gamma$.

We define the \emph{extension of
  $M$} as a sequence of integers denoted
$\widetilde M=(\widetilde M_i)_{0\leq i\leq 2\sigma+\gamma}$ and
defined by the following.  We obtain $\widetilde M$ from
$M=(M_0,\ldots, M_\sigma)$ by considering consecutive values
$M_i, M_{i+1}$, for $0\leq i < \sigma$. When $M_{i+1}=M_i$ we add the
value $(M_i+1)$ between $M_i$ and $M_{i+1}$ in the sequence of
$\widetilde M$.  When $M_{i+1}=M_i+1$ we add the two values
$(M_i+1),(M_i+2)$ between $M_i$ and $M_{i+1}$ in the sequence of
$\widetilde M$.  When $M_{i+1}=M_i-1$ we add nothing between $M_i$ and
$M_{i+1}$ in the sequence of $\widetilde M$. So at each step $i$, the
number of values that are added to obtain  $\widetilde M$ is exactly 
$M_{i+1}-M_i+1$.
Note that the extension
of an element of $\mathcal M_\sigma^\gamma$ is an element of $\mathcal M_{2\sigma+\gamma}^\gamma$.

With this definition, the extension of the example of Motzkin
path $M$ given by~(\ref{eq:MotzkinExample}) is the following element
of $\mathcal M_8^{-2}$ (where added
values from $M$ are
represented in red):

\begin{equation}
\label{eq:MotzkinExtensionExample}
  \widetilde M=(0,\comment{1},\comment{2},1,0,\comment{1},0,-1,-2)
\end{equation}

We also define the \emph{inverse of
  $M$} as a sequence of integers denoted
$\underline M=((\underline M)_i)_{0\leq i\leq
  \sigma}$ and equal to
$(M_\sigma-\gamma,M_{\sigma-1}-\gamma,\ldots,M_0-
\gamma)$.
Thus informally, $\underline M$ is the Motzkin path obtained by "reading" the
variation of $M$ in reverse order.
Note that the
inverse of an element of $\mathcal
M_\sigma^\gamma$ is an element of $\mathcal M_{\sigma}^{-\gamma}$.

With this definition, the inverse of the example of Motzkin path $M$
given by~(\ref{eq:MotzkinExample}) is the following element of
$\mathcal M_5^{2}$:

\begin{equation}
\label{eq:MotzkinExampleInverse}
  \underline M=(0,1,2,2,3,2)
\end{equation}

Then one can consider the \emph{extension of the inverse} of $M$, that is
defined by the composition of the inverse then the extension of a
Motzkin path. It is  thus denoted by $\widetilde{(\underline M)}$ or
$\widetilde{\underline M}$ for simplicity.
Note that the extension of the
inverse of an element of $\mathcal
M_\sigma^\gamma$ is an element of $\mathcal M_{2\sigma-\gamma}^{-\gamma}$.

The extension of the inverse of the example of Motzkin path $M$ given
by~(\ref{eq:MotzkinExample}) is thus the extension of the Motzkin path
$\underline M$ given by~(\ref{eq:MotzkinExampleInverse}), and thus the
following element of $\mathcal M_{12}^{2}$ (where added values from
$\underline M$ are  represented in red):

\begin{equation}
\label{eq:MotzkinExampleInverseExtension}
  \widetilde{\underline M}=(0,\comment{1},\comment{2},1,\comment{2},\comment{3},2,\comment{3},2,\comment{3},\comment{4},3,2)
\end{equation}

\subsection{Decomposition of unicellular maps into well-labeled forests and Motzkin paths}
\label{dec}

Consider $n\geq 1$, and $U$ an element of $\mathcal U(n)$ (or
$\mathcal T_r(n)$). As in Section~\ref{sec:label}, we call proper the
set of vertices of $U$ that are on at least one cycle of $U$.  The
\emph{core} $C$ of $U$ is obtained from $U$ by deleting all the
vertices that are not proper (and keeping all the stems attached to
proper vertices). In $C$, or $U$, we call \emph{maximal chain} a path $P$ whose
extremities are special vertices and all inner  vertices of
$P$ are not special. Then the \emph{kernel} $K$ of $U$ is obtained
from $C$ by replacing every maximal chain $P$ by an edge (and thus
removing the inner vertices and the stems incident to them). Note that
we keep the stems incident to special vertices in the kernel.

Let $\mathcal U_r(n)$ be the set of elements $U$ of $\mathcal U(n)$
that are rooted at a half-edge of the kernel that is not a stem.  Note
that if $U\in \mathcal U(n)$ is hexagonal there is $6$ such
half-edges, and if $U$ is square there is $4$ such half-edges. Let
$\mathcal U_{r,b}(n)$ be the set of elements of $\mathcal U_r(n)$ that
are balanced.  Finally, let $\mathcal U^H_{r,b}(n)$,
$\mathcal U^S_{r,b}(n)$, $\mathcal T^H_{r,s,b}(n)$ and
$\mathcal T^S_{r,s,b}(n)$ be the elements of $\mathcal U_{r,b}(n)$ and
$\mathcal T_{r,s,b}(n)$ that are respectively hexagonal and square.

Next lemma enables to avoid the safe property while studying $\mathcal T_{r,s,b}(n)$.

\begin{lemma}
\label{prop:bijrootuni}
There is a bijection between $ [\![1,3]\!]\times  \mathcal T_{r,s,b}(n) $ and $$([\![1,3]\!] \times \mathcal U^S_{r,b}(n) )\,\bigcup\,([\![1,2]\!] \times\mathcal U^H_{r,b}(n) ) .$$
\end{lemma}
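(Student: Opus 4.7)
The plan is to build explicit maps in both directions between $[\![1,3]\!]\times \mathcal T_{r,s,b}(n)$ and $([\![1,3]\!]\times \mathcal U^S_{r,b}(n))\cup([\![1,2]\!]\times \mathcal U^H_{r,b}(n))$, and then check that they are mutually inverse.

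For the forward direction, I would start from a pair $(i,T)$. By the discussion at the end of Section~\ref{sec:unrootuni}, removing the root angle, the canonical orientation and the root stem of $T$ yields an unrooted balanced unicellular map $U\in\mathcal U_b(n)$, which belongs to either $\mathcal U^S_b(n)$ or $\mathcal U^H_b(n)$ and thereby selects the corresponding summand on the right-hand side. The key step is to promote $U$ to an element of $\mathcal U_{r,b}(n)$ by producing a canonical kernel half-edge from the data of $T$ and $i$. A natural choice is to start at the angle where the root stem of $T$ was attached and walk along the unique face of $U$ in a direction prescribed by the canonical orientation until meeting the first kernel half-edge, then use $i$ to record the angular offset between this kernel half-edge and the original root angle. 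The range of admissible indices differs between the square and hexagonal cases because the two topological types have a different number of kernel half-edges incident to each special vertex.

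For the inverse direction, given $(i,U)$ on the right-hand side, the kernel half-edge together with the index $i$ pinpoints an angle $\alpha$ of $U$ at which to attach a new stem; this becomes the root stem of a rooted unicellular map $T\in\mathcal T_r(n)$, with root angle placed just before $\alpha$ in counterclockwise order around its incident vertex. That $T$ is balanced follows from the fact, recalled in Section~\ref{sec:unrootuni}, that any rooting of a balanced unrooted unicellular map is balanced. The nontrivial point is safety, which is enforced by the precise choice of $\alpha$ prescribed by the kernel half-edge and $i$.

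The hard part will be verifying the safe property: since safety is a global constraint on the complete closure procedure, one has to argue that the attachment site prescribed by $(i,U)$ is compatible with each step of the closure, so that the procedure never wraps over the root angle. This would be established by analysing the structure of the face of $U$ relative to the chosen kernel half-edge and using the balanced condition to control the closure dynamics, in the same spirit as~\cite{despre2017encoding}. Finally, the check that the two constructions are mutually inverse reduces to showing the cardinality identity $3|\mathcal T_{r,s,b}(n)|=3|\mathcal U^S_{r,b}(n)|+2|\mathcal U^H_{r,b}(n)|$, equivalent after substitution to $|\mathcal T_{r,s,b}(n)|=4|\mathcal U_b(n)|$, and to verifying compatibility case by case (square vs.\ hexagonal, root vertex special or not), which the explicit maps make combinatorially transparent.
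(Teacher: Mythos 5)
There is a genuine gap, and it sits exactly where you flag "the hard part". Your whole construction (the well-definedness of the inverse map, the safety of the rooting prescribed by $(i,U)$, and the cardinality identity $|\mathcal T_{r,s,b}(n)|=4|\mathcal U_b(n)|$ that you invoke at the end) rests on one combinatorial fact that you never establish: \emph{an element $U$ of $\mathcal U_{r,b}(n)$ admits exactly four angles at which a root stem can be added so that the resulting rooted unicellular map is safe}. You gesture at proving safety by "analysing the structure of the face of $U$ relative to the chosen kernel half-edge and using the balanced condition", but the set of safe positions is determined by $U$ alone, not by the kernel half-edge, and the balanced condition by itself does not tell you which angles are safe. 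The clean way to see the count is the one the paper uses: run the complete closure procedure on $U$ (which has $n+1$ edges and $2n-2$ stems); the procedure terminates with a quadrangular special face, and the four angles of that quadrangle are precisely the admissible positions for the root stem. Without this input, your final step is circular: you propose to verify that the two maps are mutually inverse by "showing the cardinality identity", but that identity is equivalent to the statement being proved, and in any case exhibiting two maps between sets of equal cardinality does not make them inverse to each other.

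For comparison, the paper avoids constructing a direct bijection altogether. It introduces the intermediate set $Z(n)$ of elements of $\mathcal T_{r,s,b}(n)$ that additionally carry a root at a non-stem kernel half-edge, and counts $Z(n)$ in two ways: a hexagonal (resp.\ square) kernel has $6$ (resp.\ $4$) such half-edges, giving $Z^H(n)\cong[\![1,6]\!]\times\mathcal T^H_{r,s,b}(n)$ and $Z^S(n)\cong[\![1,4]\!]\times\mathcal T^S_{r,s,b}(n)$; and by the four-safe-angles fact above, $Z^H(n)\cong[\![1,4]\!]\times\mathcal U^H_{r,b}(n)$ and $Z^S(n)\cong[\![1,4]\!]\times\mathcal U^S_{r,b}(n)$. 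The stated bijection follows by cancelling the common factors. Your explicit two-way construction could in principle be made to work, but only after you prove the four-angle characterization, and at that point the double-counting argument is both shorter and free of the bookkeeping about "angular offsets" (whose role in your forward map is in any case ambiguous, since the input index $i$ cannot simultaneously be given and "record" the offset).
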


\begin{proof}
Let $Z(n)$ be the set of elements of $\mathcal T_{r,s,b}(n)$ that are moreover rooted at a half-edge of the kernel that is not a stem. 
Let $Z^H(n)$ (resp. $Z^S(n)$) be the set of elements of $Z$ that are hexagonal (resp. square).
Given an element of $\mathcal T^H_{r,s,b}(n)$,
there are $6$ possible such roots.
So there is a bijection between $Z^H(n)$
and $ [\![1,6]\!]\times  \mathcal T^H_{r,s,b}(n)$.
Given an element of 
 $\mathcal T^S_{r,s,b}(n)$,
there are $4$ possible roots.
So there is a bijection between $Z^S(n)$
and $[\![1,4]\!]\times  \mathcal T^S_{r,s,b}(n)$.

Given an element $U$ of $\mathcal U_{r,b}(n)$, there are four angles
where a root stem can be added to obtain an element of $Z(n)$. Indeed,
these four angles corresponds to the four angles remaining in the
special face when the complete closure procedure is applied on $U$.
So there is a bijection between $Z^S(n)$ and
$[\![1,4]\!]\times \mathcal U^S_{r,b}(n)$ and a bijection between
$Z^H(n)$ and $[\![1,4]\!]\times \mathcal U^H_{r,b}(n)$.  Finally
$ \mathcal T_{r,s,b}(n) = \mathcal T^S_{r,s,b}(n) \cup \mathcal
T^H_{r,s,b}(n)$ and we obtain the result.
\end{proof}

Let $n\geq 1$.  There are different possible kernels for element of
$\mathcal U_{r}(n)$, depending on the position of the possible
stems. All the possible kernels of elements of $\mathcal U_{r}(n)$ are
depicted on Figure~\ref{fig:kernels} where the root half-edge of the
kernel is depicted in pink.  There are exactly $10$ such possibilities
and, for $0\leq k \leq 9$, we say that an element of
$\mathcal U_{r}(n)$ is of type $k$ if its kernel corresponds to type
$k$ of Figure~\ref{fig:kernels}.  We decompose the elements
of $\mathcal U_{r,b}(n)$ depending on their types.

\begin{figure}[h!]
\center
\begin{tabular}{ccc}
  &
    \scalebox{0.3}{\input{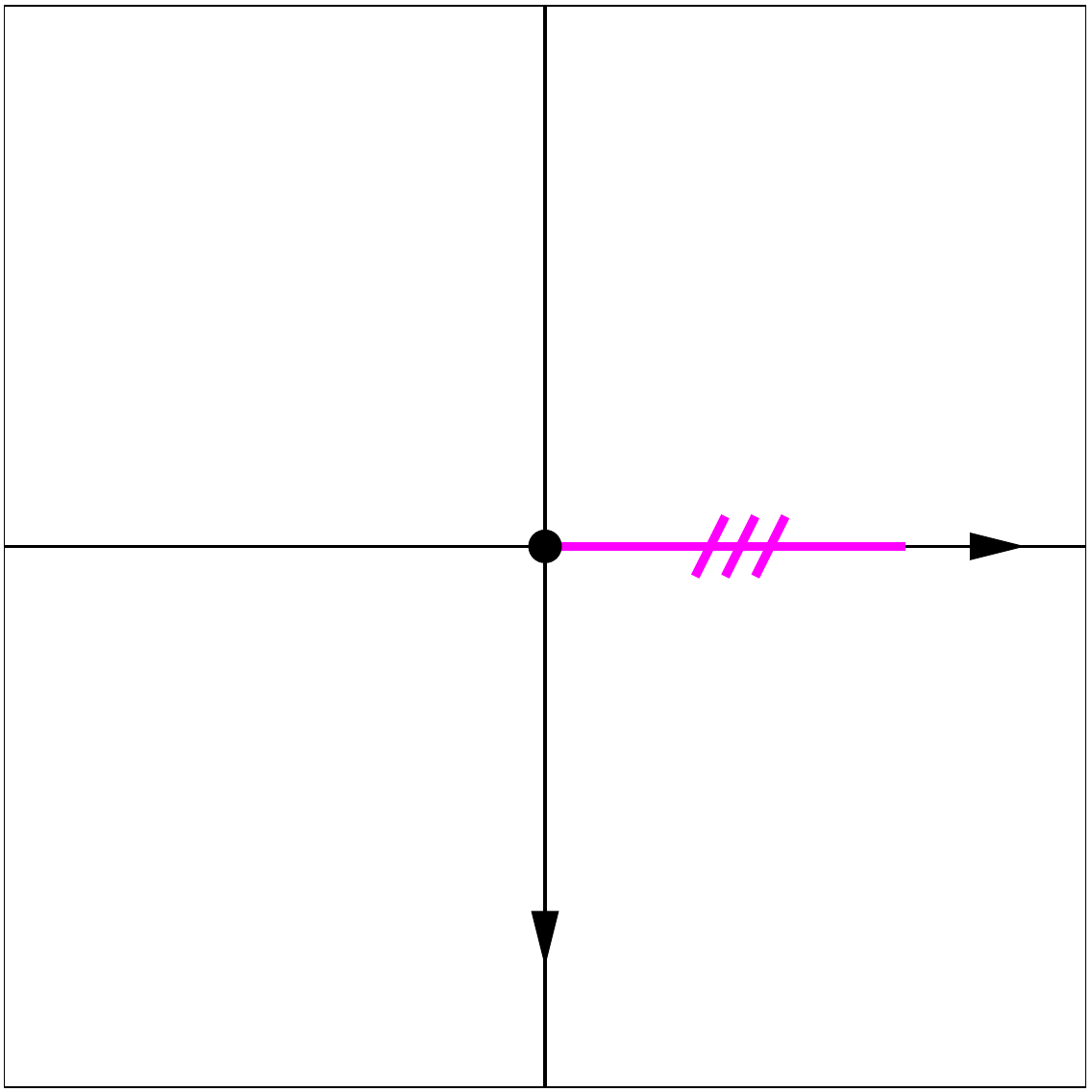_t}}
  \\ &Type $0$
 \\
& \\

  \scalebox{0.3}{\input{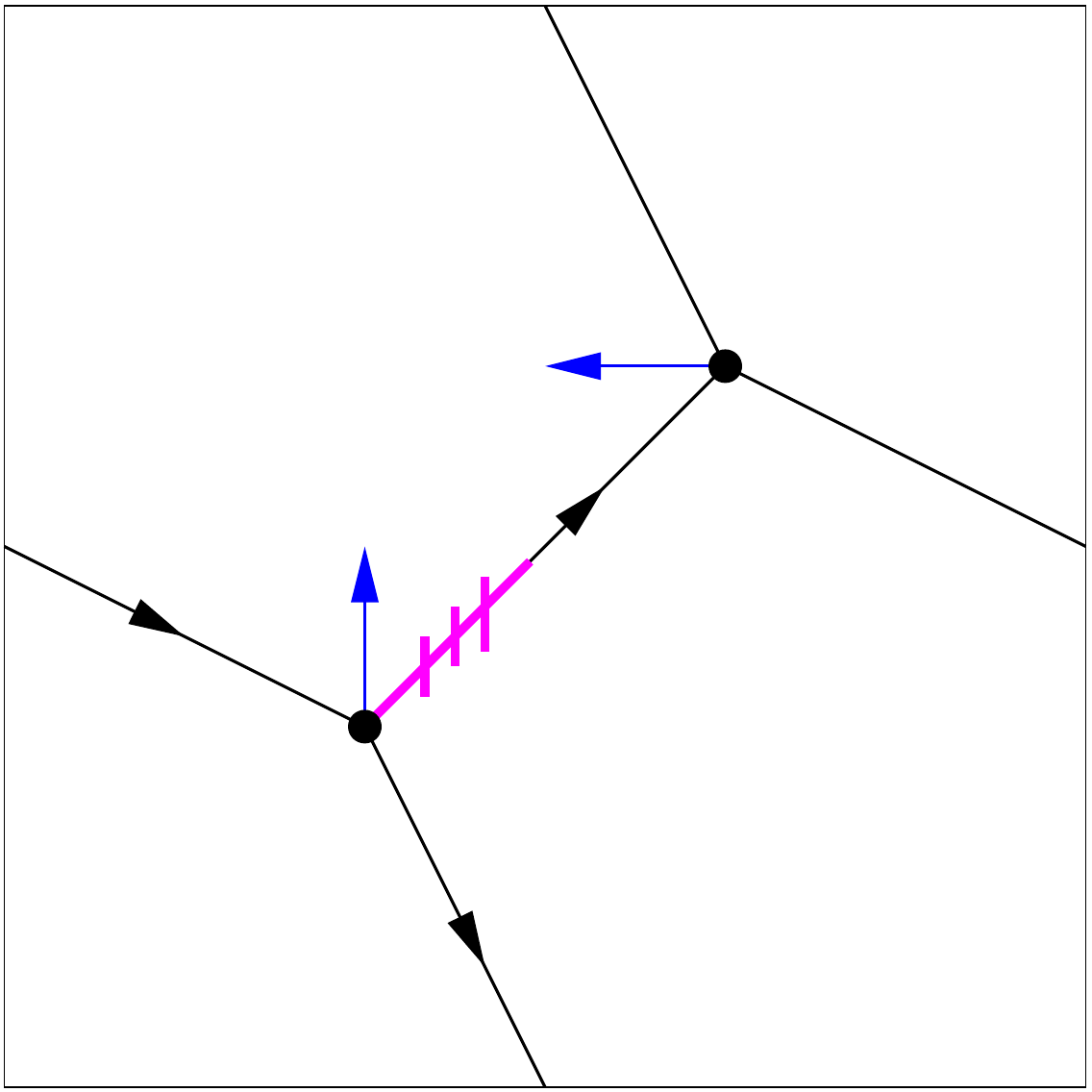_t}} &
  \scalebox{0.3}{\input{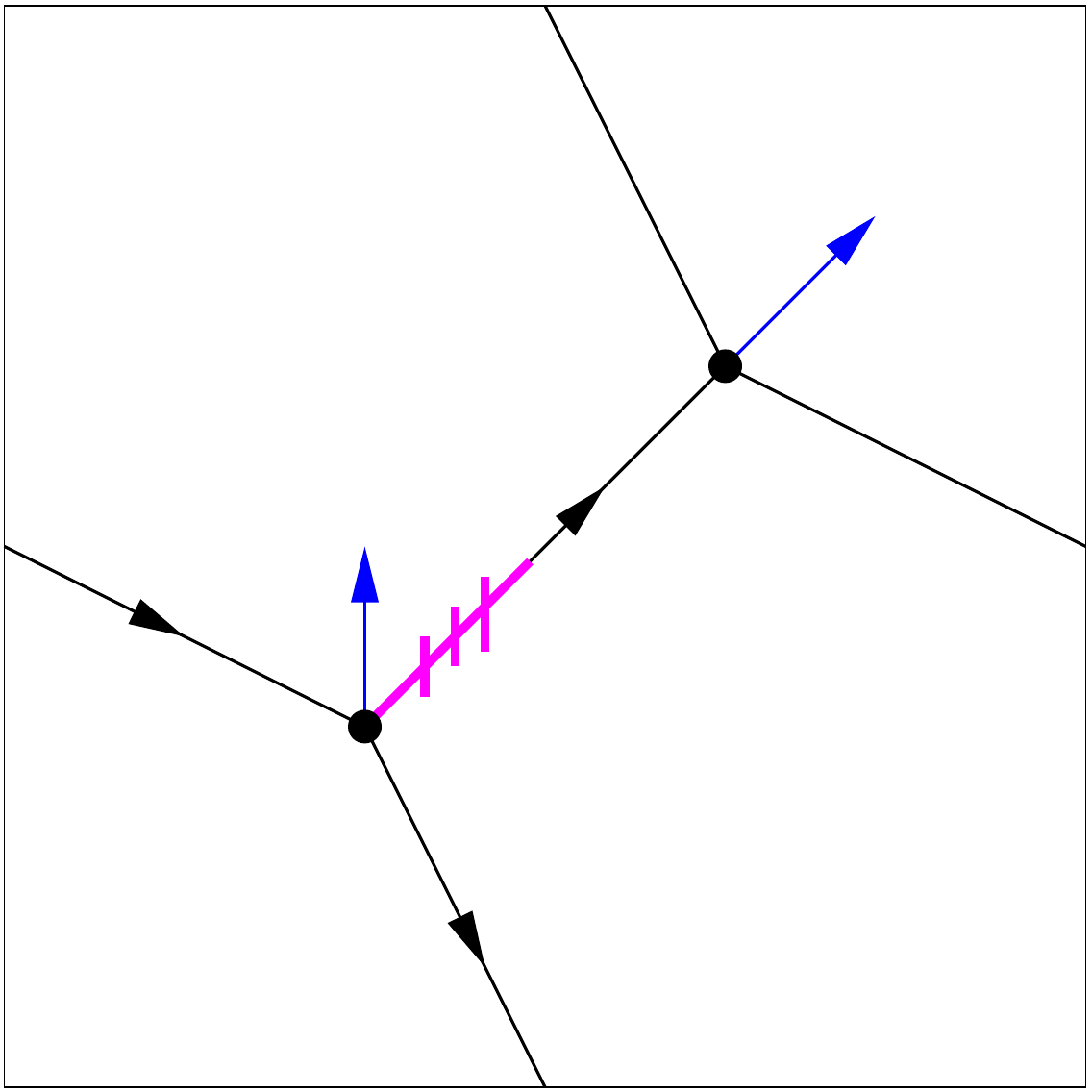_t}} &
  \scalebox{0.3}{\input{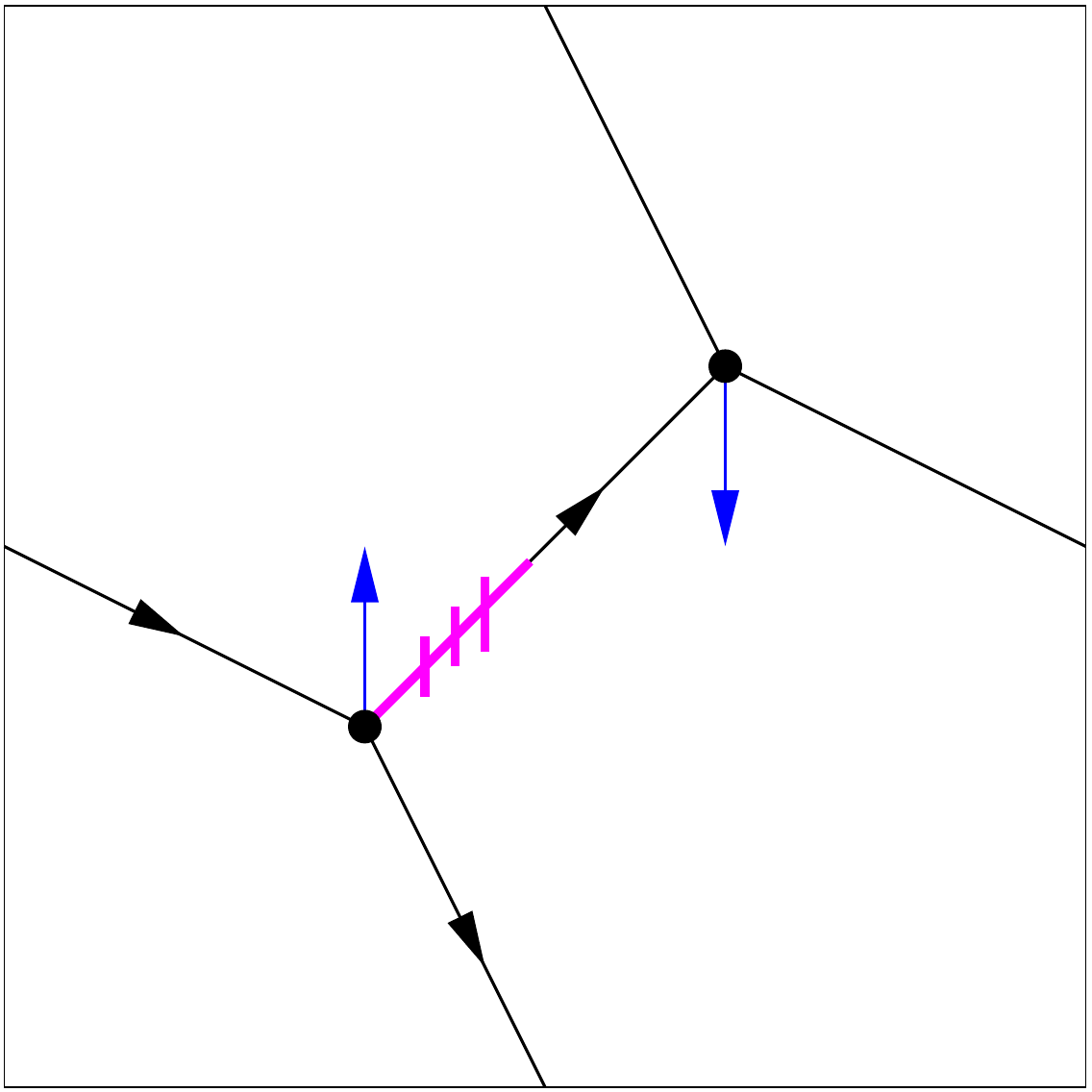_t}}
  \\
  Type $1$ & Type $2$ & Type $3$ \\
  & \\

  \scalebox{0.3}{\input{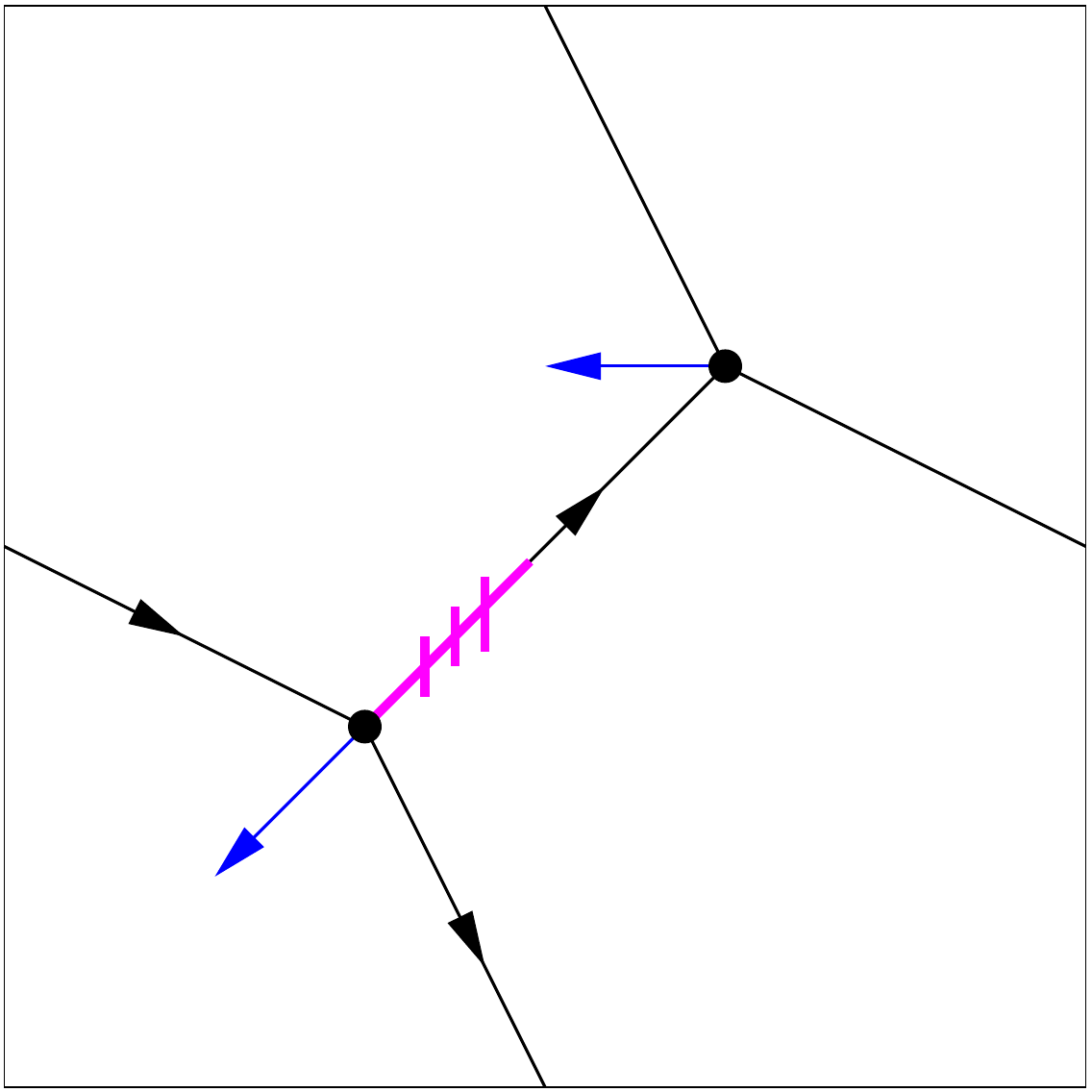_t}} &
  \scalebox{0.3}{\input{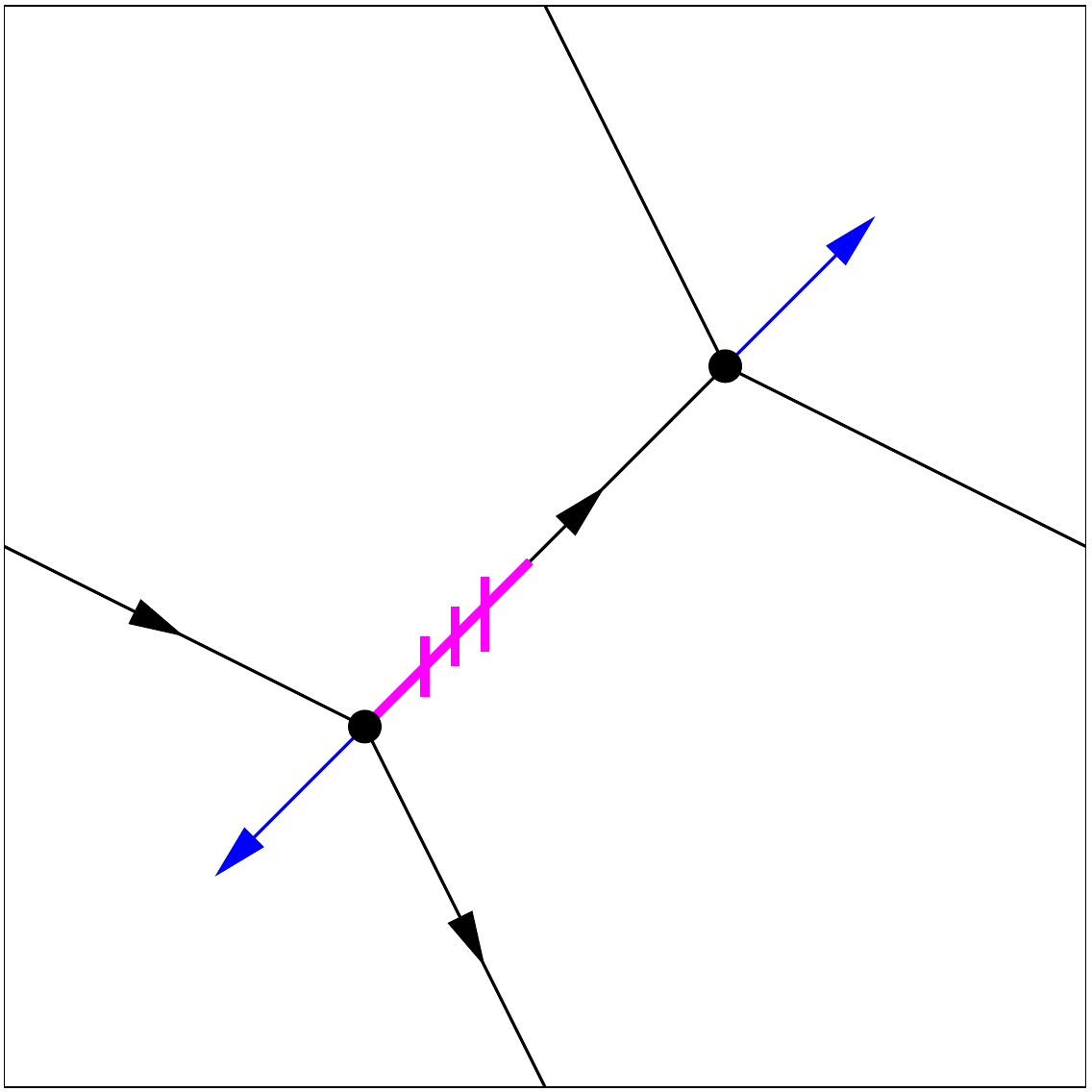_t}} &
  \scalebox{0.3}{\input{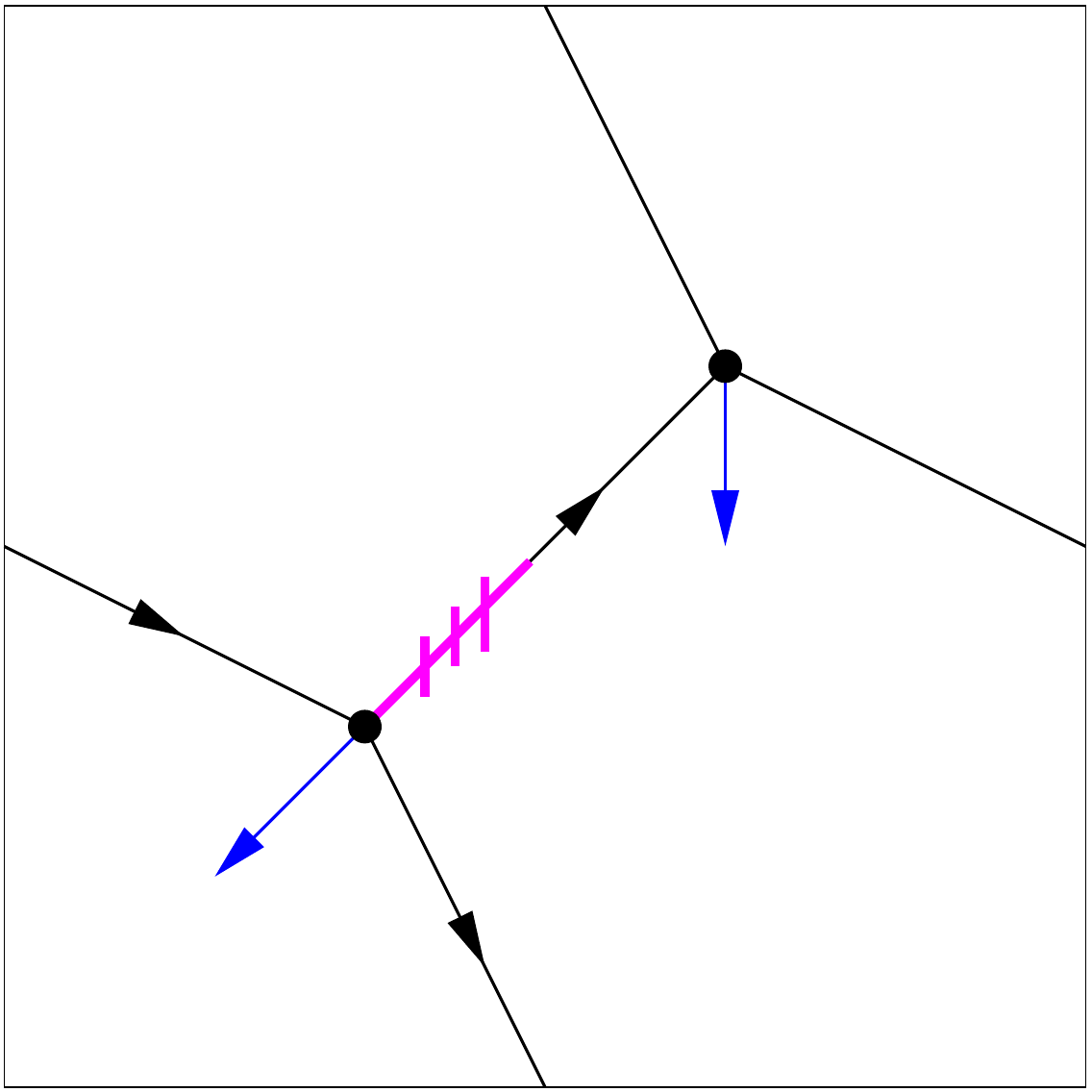_t}}
\\ Type $4$ & Type $5$ & Type $6$\\
& \\

  \scalebox{0.3}{\input{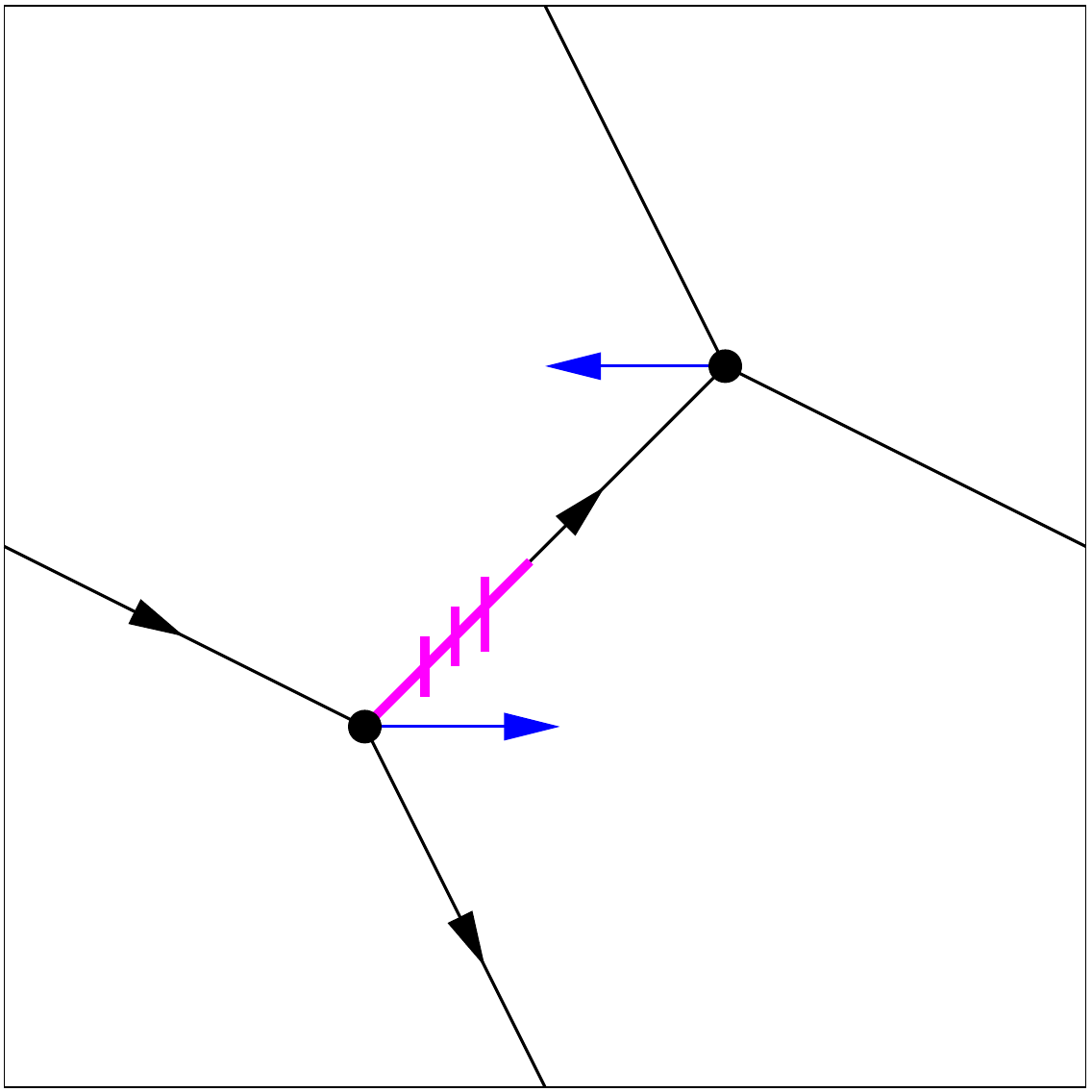_t}} &
  \scalebox{0.3}{\input{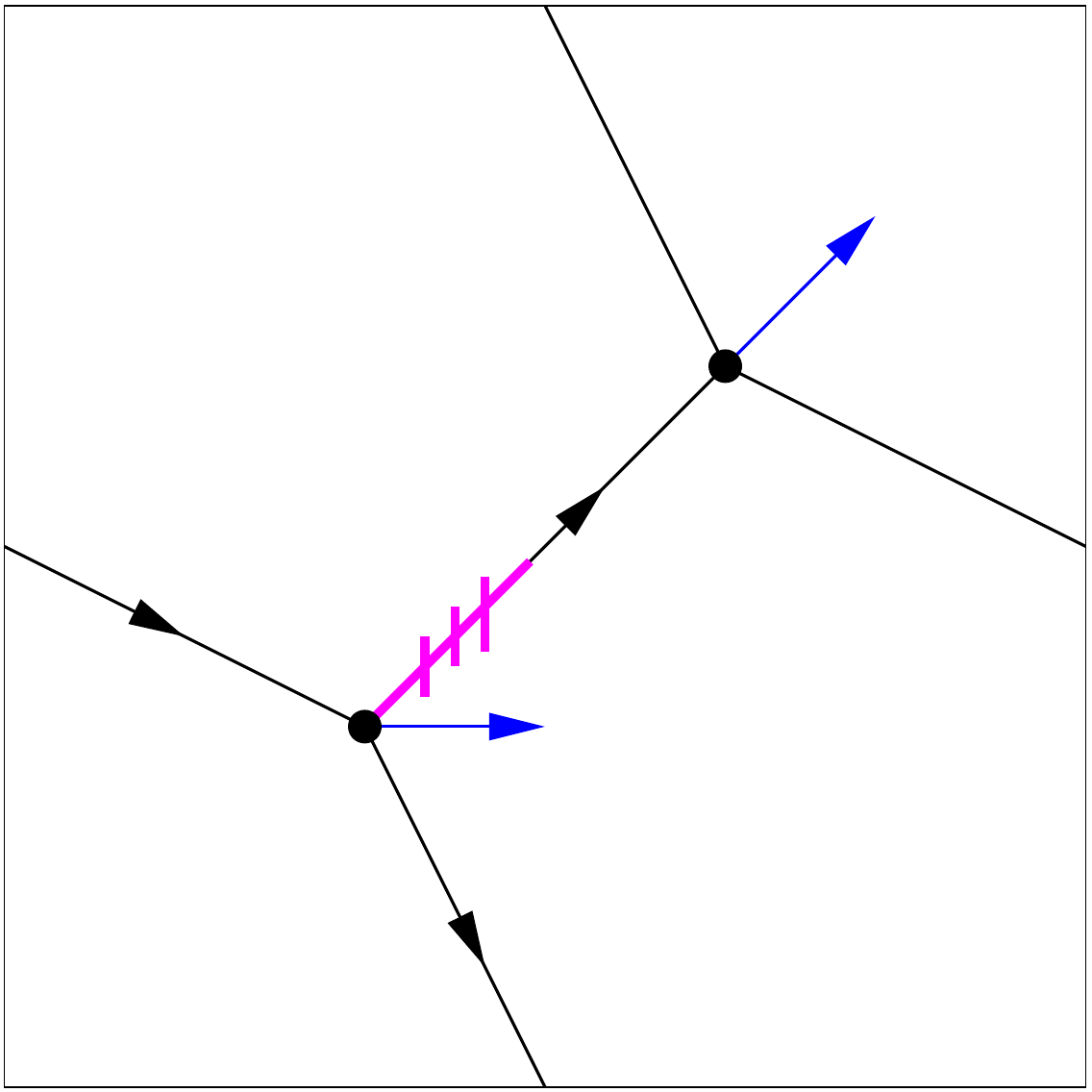_t}} &
  \scalebox{0.3}{\input{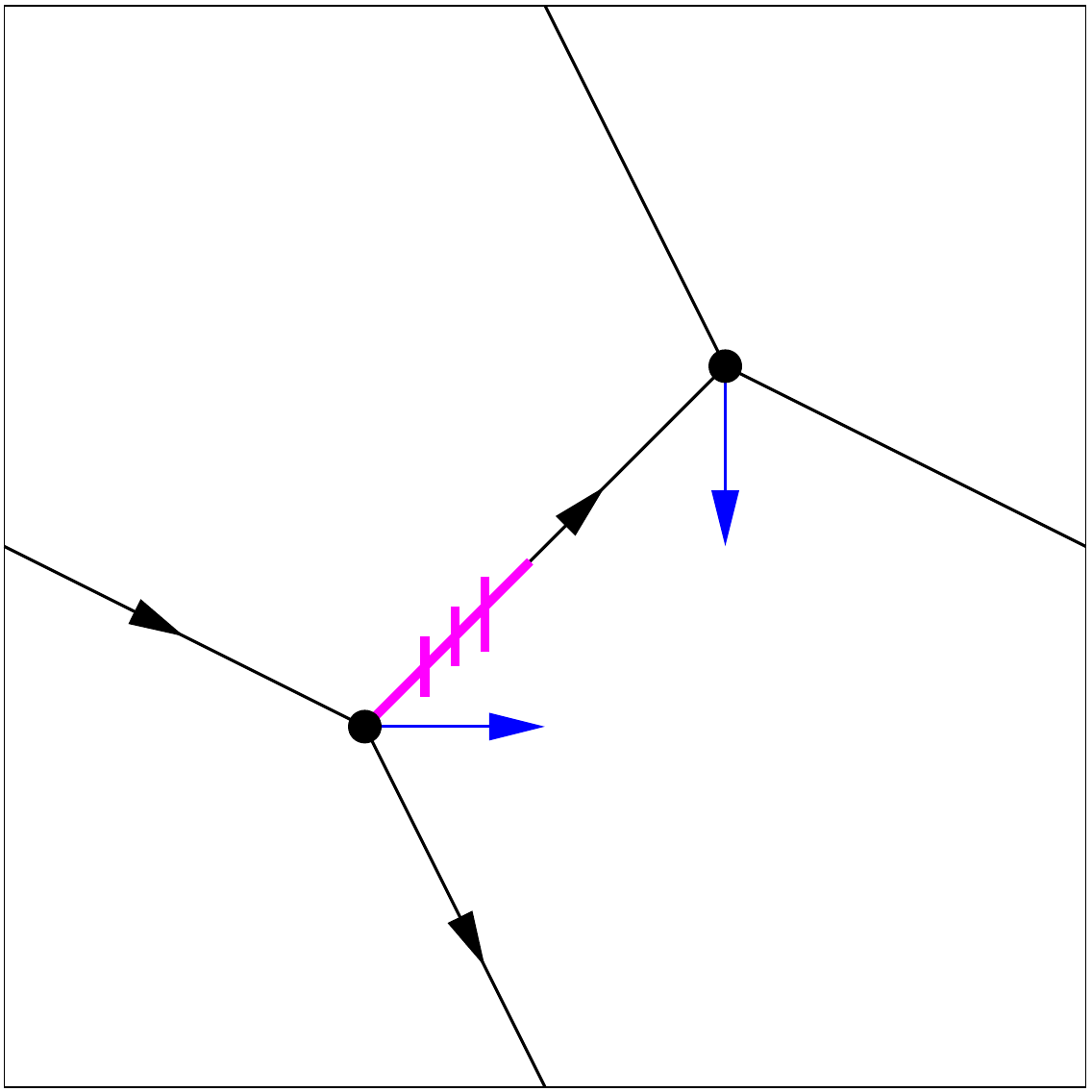_t}}
\\ Type $7$ & Type $8$ & Type $9$
\end{tabular}
\caption{The ten possible types of kernels for an element of $\mathcal U_{r}(n)$. The pink half-edge indicates the root half-edge.}
    \label{fig:kernels}
\end{figure}

Given an element  $U\in \mathcal U_{r,b}(n)$ of a given type, we
decompose it into its core $C$ and a set of forests. We orient and
denote the maximal chains of $U$ as on Figure~\ref{fig:kernels}. Each
of these maximal chain as two sides. For $t=3$ when $U$ is hexagonal
and $t=2$ when $U$ is square, we define $2t$ particular angles
$\alpha_1,\ldots, \alpha_{2t}$ of $U$ as depicted on
Figure~\ref{fig:anglecore} and moreover we set
$\alpha_{2t+1}=\alpha_1$. Note that the angles
$\alpha_1,\ldots, \alpha_{2t+1}$ are formally defined on $U$ but with a
slight abuse of notations, we also consider them to be defined on $C$
(with exactly the same definition as Figure~\ref{fig:anglecore}).

\begin{figure}[h]
\center
\begin{tabular}{cc}
      \scalebox{0.5}{\input{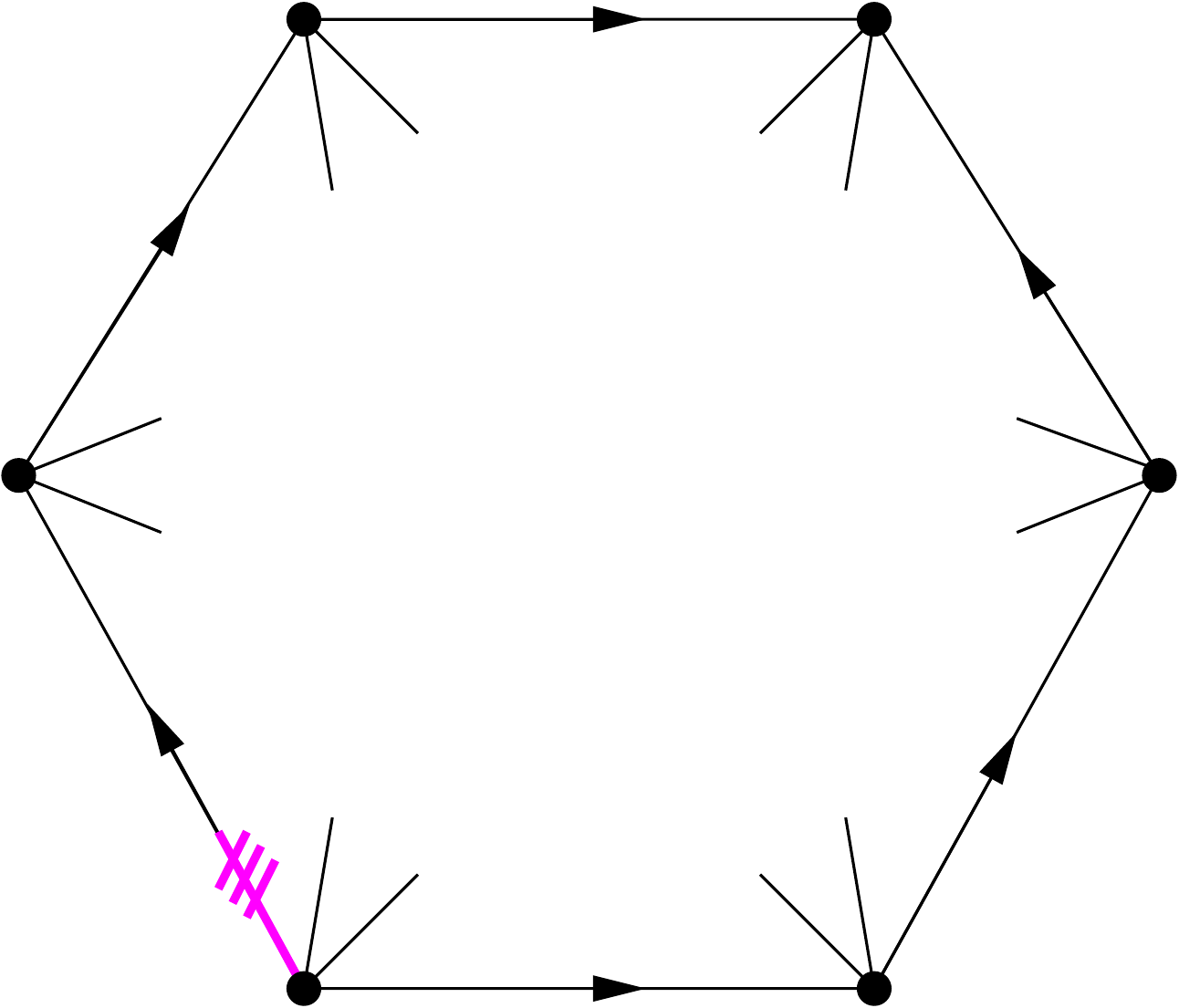_t}}
\hspace{1em} & \hspace{1em} 
               \scalebox{0.5}{\input{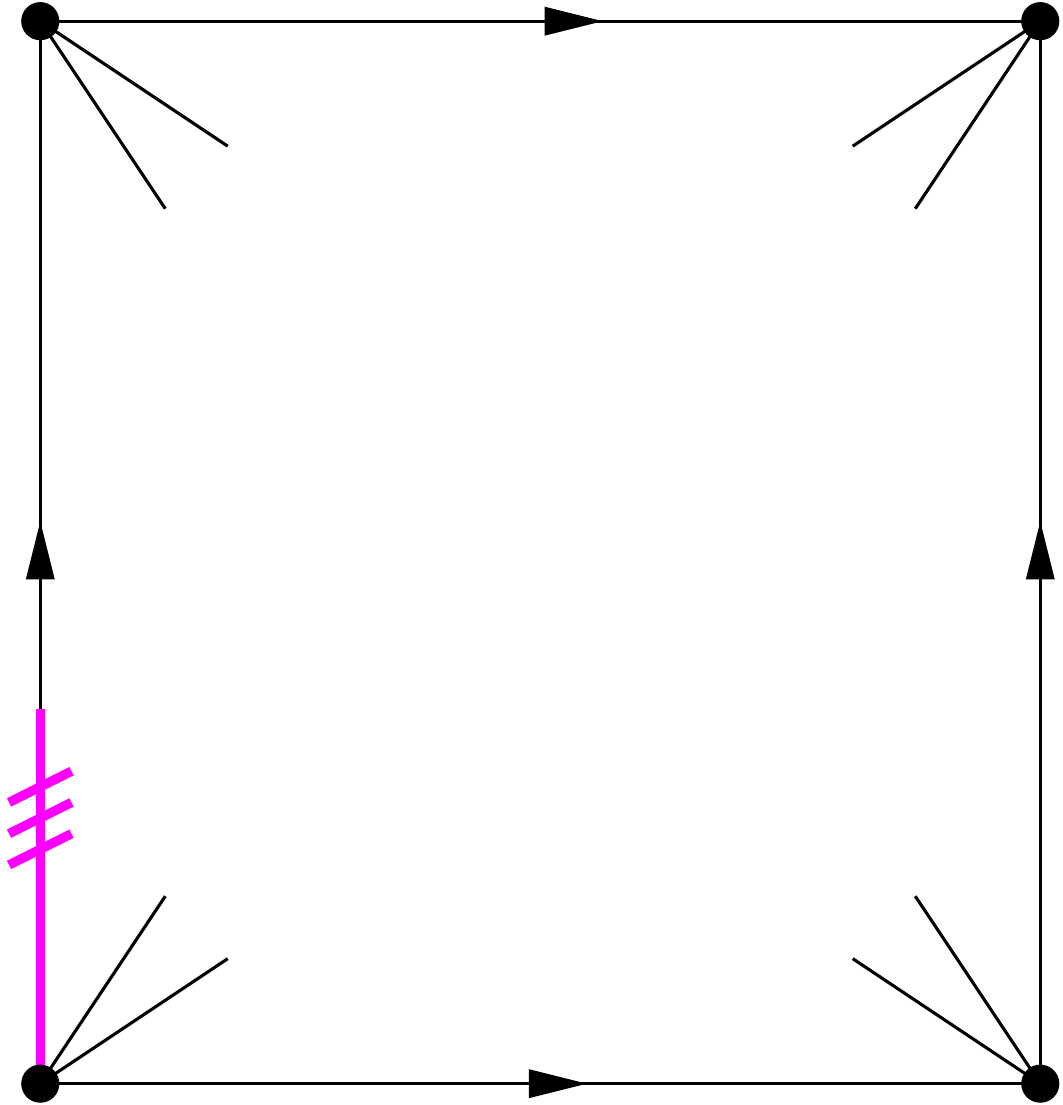_t}}
                
\end{tabular}
\caption{Definition of the angles $\alpha_1,\ldots, \alpha_{2t}$.}
\label{fig:anglecore}
\end{figure}

Let $[\alpha,\beta[$ denote the set of angles of $U$ between $\alpha$
and $\beta$, while walking along the border of the unique face of $U$
in \cw order, including $\alpha$ and excluding $\beta$.  Let
$[\alpha,\beta[\cap C$ denote the set of angles of $[\alpha,\beta[$
that are also incident to the core $C$. For $1\leq i\leq t$, let $S_i$
(resp.  $S_{i+t}$) be the maximal chain $W_i$ with all the stems of
$U$ that are incident to an angle of $[\alpha_i,\alpha_{i+1}[\cap C$
(resp.  $[\alpha_{i+t},\alpha_{i+t+1}[\cap C$).  Then $U$ is
decomposed into its core $C$ plus $2t$ parts where the $i$-th part is
the part of $U$ ``attached to (the right side of) $S_i$''. More
formally, for $1\leq i\leq t$, the $i$-th part (resp. the $(i+t)$-th
part) corresponds to all the components of $U\setminus C$ that are
connected to the rest of $U$ via an edge of $U$ that is incident to an
angle of $[\alpha_i,\alpha_{i+1}[\cap C$ (resp.
$[\alpha_{i+t},\alpha_{i+t+1}[\cap C$).  Each of these $2t$ parts can
be represented by one well-labeled forest (see
Figure~\ref{fig:exampledecompforest} where $S_i$ is represented in
green): the floor vertices of the forest corresponds to the angles of
$C$ in $[\alpha_i,\alpha_{i+1}[$ and the tree-vertices, tree-edges and
stems of the forest represents the part of $U$ attached to $S_i$.
Thus, the unicellular map $U$ is decomposed into its core $C$ plus
$2t$ well-labeled forests $((F_i,\ell_i))_{1\leq i \leq 2t}$.  For
$1\leq i\leq 2t$, let $\tau_i$ be the number of angles
$[\alpha_i,\alpha_{i+1}[\cap C$ and $\rho_i$ be the number of vertices of
the part of $U$ attached to $S_i$. So we have
$(F_i,\ell_i) \in \mathcal F^{\rho_i}_{\tau_i}$ for $1\leq i\leq 2t$.

\begin{figure}[h]
  \centering
 \scalebox{0.5}{\input{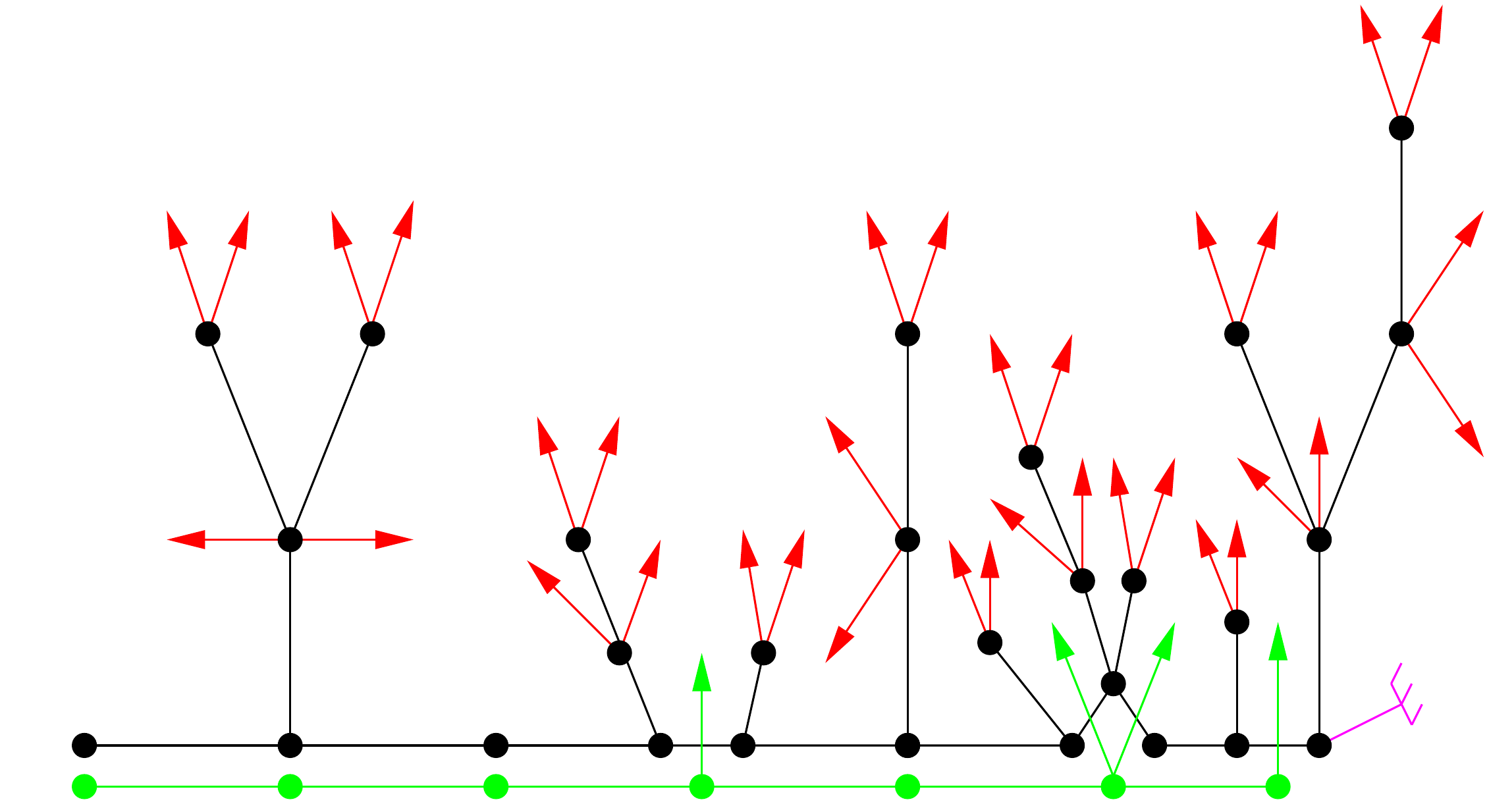_t}}
  \caption{The forest attached to $S_i$.}
    \label{fig:exampledecompforest}
  \end{figure}

  We now decompose the core $C$ of $U$.

  For $1\leq i\leq t$, we define $R_i$ as the maximal chain $W_i$ of
  $U$ with all the stems of $U$ that are incident to an inner vertex of
  $W_i$. Note that the ``union'' of $S_i$ and $S_{i+t}$, almost gives
  $R_i$ except that $R_i$ contains no stems incident to special
  vertices.
  Then we decompose $C$  into the type of its kernel (see
  Figure~\ref{fig:kernels}) plus $(R_i)_{1\leq i \leq t}$.

  For $1\leq i \leq t$, all the inner vertices of $R_i$ are incident
  to exactly $2$ stems. Let $\gamma_i$ be half of the number of stems
  incident to the right side of $R_i$ minus half of the number of
  stems incident to the left side of $R_i$. Note that $\gamma_i$ is an
  integer. Let $\sigma_i$ be the number of inner vertices of $R_i$.

When $U$ is square, we have $\gamma_1=\gamma_2=0$ by the balanced
property of $U$.  In this case, for $1\leq i \leq 2$, the total number
of angles of $R_i$ and incident to inner vertices of $R_i$ is
$4\sigma_i$. So the number of angles of
$R_i$ on one of its side and incident to inner vertices is
$2\sigma_i$. So for $1\leq i\leq 2$,
$\tau_i = \tau_{i+2}= 2\sigma_i+1$.

When $U$ is hexagonal, the value of $\gamma_1+\gamma_2$ and
$\gamma_2+\gamma_3$ is given by the type of $U$ and the fact that $U$
is balanced, see Table~\ref{tab:gamma}.  As for the square case, we
have a relation between $\tau$ and $\sigma$, but this times it depends
on the type and of the $\gamma_i$'s.  For $1\leq i\leq 6$, let
$c_i\in\{0,1\}$ such that $c_i=1$ if and only if there is a stem incident to the
angle $\alpha_i$. The value of $c_1,\ldots,c_6$
is given in Table~\ref{tab:gamma}.  For $1\leq i\leq 3$, we have
$\tau_i = 2\sigma_i+1+\gamma_i+c_i$, and
$\tau_{3+i} = 2\sigma_i+1-\gamma_i+c_{3+i}$.

\begin{table}[h!]
\center
\begin{tabular}{|c|c|c|c|c|c|c|c|c|}
\hline
&$\gamma_1+\gamma_2$ & $\gamma_2+\gamma_3$ & $c_1$  & $c_2$  & $c_3$ & $c_4$  & $c_5$  & $c_6$\\
\hline
 Type 1 & 1& 0 & 0 & 0& 0 & 1 & 1 & 0 \\  \hline
 Type 2 & 1& 1 & 0 & 0& 0 & 0 & 1 & 1 \\ \hline 
 Type 3 & 0& 0 & 0 & 1& 0 & 0 & 1 & 0 \\ \hline
 Type 4 & 0& -1 & 0 & 0& 1 & 1 & 0 & 0 \\ \hline
 Type 5 & 0& 0 & 0 & 0& 1 & 0 & 0 & 1 \\ \hline 
 Type 6 & -1& -1 & 0 & 1& 1 & 0 & 0 & 0 \\ \hline
 Type 7 & 0& 0 & 1 & 0& 0 & 1 & 0 & 0 \\ \hline 
 Type 8 & 0& 1 & 1 & 0& 0 & 0& 0 & 1 \\ \hline  
 Type 9 & -1& 0 & 1 & 1& 0 & 0 & 0 & 0 \\ 
\hline
\end{tabular}
\caption{Values of $\gamma_1+\gamma_2$,  $\gamma_2+\gamma_3$,  $c_1$, \ldots, $c_6$, depending of the type.}
\label{tab:gamma}
\end{table}

For $1\leq i \leq t$, we represent $R_i$ by a Motzkin path $M_i$ of
length $\sigma_i$ from $0$ to $\gamma_i$, thus
$M_i\in\mathcal M_{\sigma_i}^{\gamma_i}$. Two stems on the right
(resp. left) side of $R_i$ corresponds to a step of $1$ (resp. $-1$)
in the Motzkin path. A stem on each side of $R_i$ corresponds to a
step of $0$ in the Motzkin path.

The path $R_i$ corresponding to the example $S_i$ of
Figure~\ref{fig:exampledecompforest} is represented on
Figure~\ref{fig:exampledecompmotzkin} with the corresponding Motzkin
path in $\mathcal M_5^{-2}$ (from right to
left). This Motzkin path is precisely the example given
in~(\ref{eq:MotzkinExample}). Note that from
Figure~\ref{fig:exampledecompforest}, the stem that was incident to
$\alpha_i$ has been removed since $R_i$ contains no stems incident to
special vertices (the Motzkin path $M_i$ represents only the stems
incident to inner vertices of $W_i$).
 
\begin{figure}[h]
  \centering 
 \scalebox{0.5}{\input{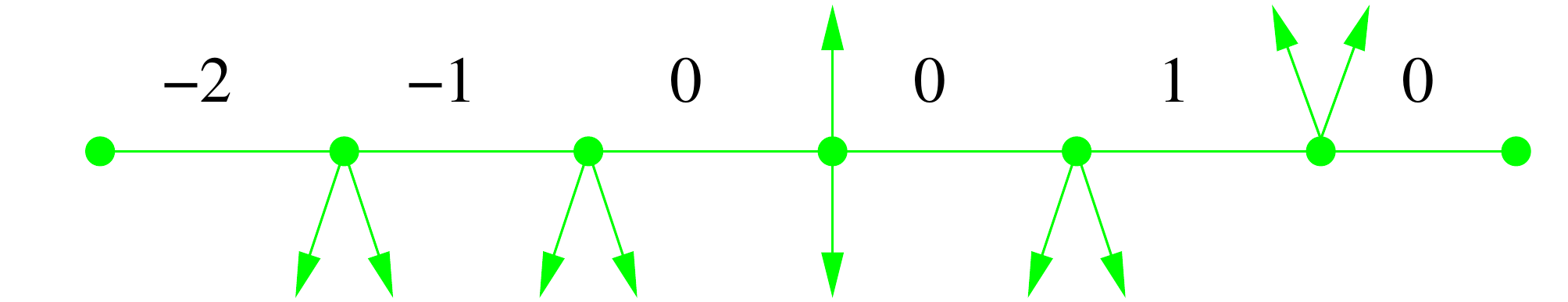_t}}
  \caption{The Motzkin path corresponding to $R_i$ (from right to left).}
    \label{fig:exampledecompmotzkin}
  \end{figure}

Finally, we have a relation between the number of vertices $n$ and the value $\sigma_i$ and $\rho_i$:
$$n=\rho_1+\cdots+\rho_{2t}+\sigma_1+\cdots+\sigma_t+(t-1)$$

\begin{definition}
\label{def:defofR0}
For $n\geq 1$, let $$\mathcal R^0(n)=\bigcup_{\substack{(\rho_1,\ldots,\rho_4)\in \mathbb N^4\\ (\tau_1,\ldots,\tau_4)\in (\mathbb N^*)^4\\
  (\sigma_1,\sigma_2)\in \mathbb N^2}}
 \mathcal F^{\rho_1}_{\tau_1}\times \cdots \times \mathcal F^{\rho_4}_{\tau_4}\times \mathcal M_{\sigma_1}^{0}\times M_{\sigma_2}^{0}$$
where
$$n=\rho_1+\cdots+\rho_{4}+\sigma_1+\sigma_2+1$$
$$\textrm{and for }1\leq i\leq 2,\textrm{ we have }
\tau_i = \tau_{2+i}= 2\sigma_i+1.$$ 
\end{definition}

Thus, by above discussion, for $n\geq 1$, there is a bijection between the set of (square) unicellular maps 
$\mathcal U^S_{r,b}(n)$ and $\mathcal R^0$.

\begin{definition}
\label{def:defofR1to9}
For $n\geq 1$ and $1\leq k\leq 9$, let
 $$\mathcal R^k(n)=\bigcup_{\substack{(\rho_1,\ldots,\rho_6)\in \mathbb N^6\\ (\tau_1,\ldots,\tau_6)\in (\mathbb N^*)^6\\
     (\gamma_1,\gamma_2,\gamma_3)\in \mathbb Z^3\\
     (\sigma_1,\sigma_2,\sigma_3)\in \mathbb N^3}} \mathcal
 F^{\rho_1}_{\tau_1}\times \cdots \times \mathcal F^{\rho_6}_{\tau_6}\times
 \mathcal M_{\sigma_1}^{\gamma_1}\times \mathcal
 M_{\sigma_2}^{\gamma_2} \times \mathcal M_{\sigma_3}^{\gamma_3}$$
 where
$$n=\rho_1+\cdots+\rho_{6}+\sigma_1+\sigma_2+\sigma_3+2$$
$$\textrm{for } 1\leq i\leq 3\textrm{, we have }
\tau_i = 2\sigma_i+1+\gamma_i+c_i \textrm{ and }
\tau_{3+i} = 2\sigma_i+1-\gamma_i+c_{3+i}$$
$$\textrm{for } 1\leq i\leq 3\textrm{, we have }
|\gamma_i| \leq \sigma_i$$
$$\textrm{with }\gamma_1+\gamma_2, \gamma_2+\gamma_3, c_1, \ldots, c_6  \textrm{ given by line $k$ of Table~\ref{tab:gamma}}.$$
\end{definition}

Thus, by above discussion, for $n\geq 1$ and $1\leq k\leq 9$, there exists a bijection between elements of
$\mathcal U^H_{r,b}(n)$ with kernel of type k and
$\mathcal R^k(n)$.

So by Lemma~\ref{prop:bijrootuni} we have the following:

\begin{lemma}
\label{bij:decomposition}
For $n\geq 1$,  there exists a bijection between
$[\![1,3]\!]\times \mathcal T_{r,s,b}(n)$ and
$$\left([\![1,3]\!]\times \mathcal R^0(n)\right)\ \bigcup\  \left([\![1,2]\!] \times \bigcup_{1\leq k\leq 9} \mathcal R^k(n)\right)$$
\end{lemma}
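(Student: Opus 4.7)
The plan is to assemble the bijection from the decomposition procedure detailed in Section~\ref{dec} together with Lemma~\ref{prop:bijrootuni}. First, I would invoke Lemma~\ref{prop:bijrootuni} to replace the left-hand side $[\![1,3]\!]\times \mathcal T_{r,s,b}(n)$ by $([\![1,3]\!]\times \mathcal U^S_{r,b}(n))\cup([\![1,2]\!]\times \mathcal U^H_{r,b}(n))$. It then suffices to construct two bijections:
\[
\mathcal U^S_{r,b}(n)\ \longleftrightarrow\ \mathcal R^0(n),\qquad \mathcal U^H_{r,b}(n)\ \longleftrightarrow\ \bigcup_{1\leq k\leq 9} \mathcal R^k(n),
\]
and moreover, in the hexagonal case, the bijection must partition $\mathcal U^H_{r,b}(n)$ according to the type of the kernel so that each type $k\in\{1,\ldots,9\}$ maps to $\mathcal R^k(n)$.

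To build the forward direction, given $U\in \mathcal U_{r,b}(n)$ I would take its core $C$ and its kernel $K$; the pair (rooted kernel, placement of stems on the kernel half-edges) singles out one of the ten types of Figure~\ref{fig:kernels}. Using the distinguished angles $\alpha_1,\ldots,\alpha_{2t}$ of Figure~\ref{fig:anglecore} (with $t=2$ in the square case and $t=3$ in the hexagonal case), I would decompose $U$ into the $2t$ well-labelled forests $(F_i,\ell_i)\in\mathcal F^{\rho_i}_{\tau_i}$ attached to the pieces $S_i$ and the $t$ Motzkin paths $M_i\in\mathcal M_{\sigma_i}^{\gamma_i}$ representing the chains $R_i$, following the encoding rules already spelled out (labels recording the left/right attachment of stems around proper vertices, and $\pm 1/0$ steps recording the side on which successive pairs of stems sit along the chain). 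The compatibility between $\tau_i$, $\sigma_i$, $\gamma_i$ and the stem indicators $c_i$ is exactly the one listed in Definitions~\ref{def:defofR0} and \ref{def:defofR1to9}: in the square case the balanced property forces $\gamma_1=\gamma_2=0$, and in the hexagonal case it forces $\gamma_1+\gamma_2$ and $\gamma_2+\gamma_3$ to be those given in line $k$ of Table~\ref{tab:gamma}. The constraint $|\gamma_i|\leq \sigma_i$ is automatic since $\gamma_i$ is the endpoint of a Motzkin path of length $\sigma_i$. Finally, counting vertices gives the relation $n=\sum_i \rho_i+\sum_j \sigma_j+(t-1)$ since each inner vertex of a chain contributes $1$ and each special vertex is shared.

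For the inverse direction, given data in $\mathcal R^k(n)$, I would first reconstitute the kernel of type $k$ (which is determined by $k$ alone), then glue the chains by substituting each Motzkin path $M_i$ for the kernel edge $W_i$, decorating the inner vertices of the resulting chain with stems on the side prescribed by the $\pm 1/0$ steps of $M_i$, and finally attach the forest $(F_i,\ell_i)$ to the appropriate side of each chain using the bijection between well-labelled forests and plane trees with two stems per tree-vertex discussed in Section~\ref{subsection:forestandwell-labelings}. The combinatorial data of Table~\ref{tab:gamma} ensures that the stems around the special vertices match the kernel type, so the resulting unicellular map lies in $\mathcal U^*_{r,b}(n)$ with $*\in\{S,H\}$ matching the kernel, and it is visibly balanced because, side by side, the two Motzkin paths bordering any non-contractible cycle carry opposite totals of stems by the $\gamma_i$ relations.

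The main (and only) obstacle is bookkeeping: verifying that the discrete identity between $\tau_i$, $\sigma_i$, $\gamma_i$, $c_i$ holds in all cases of Table~\ref{tab:gamma}, and that the local decoration around each $\alpha_i$ (whether it carries a stem) is correctly transferred between the kernel data and the forest/Motzkin data. Once this case-by-case check is made, the two constructions are inverse to each other by inspection, and taking the disjoint union over $k\in\{1,\ldots,9\}$ in the hexagonal case together with the $k=0$ square case yields the announced bijection.
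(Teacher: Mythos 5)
Your proposal is correct and follows exactly the route of the paper: the paper obtains the statement by combining Lemma~\ref{prop:bijrootuni} with the decomposition of Section~\ref{dec}, which gives the bijections $\mathcal U^S_{r,b}(n)\leftrightarrow\mathcal R^0(n)$ and (type-$k$ elements of $\mathcal U^H_{r,b}(n)$) $\leftrightarrow\mathcal R^k(n)$ via the core/kernel/forest/Motzkin-path encoding and the constraints of Table~\ref{tab:gamma}. Your additional remarks on the inverse construction and the automatic bound $|\gamma_i|\leq\sigma_i$ are consistent with the paper's discussion.
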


\subsection{Relation with labels of the unicellular map}
\label{sec:labeldistance}

We use the same notations as in previous section where
$U$ is an element of $\mathcal U_{r,b}(n)$ that is decomposed into the
type $k$ of its kernel, $2t$ well-labeled forests
$((F_i,\ell_i))_{1\leq i\leq 2t}$, with $(F_i,\ell_i)\in \mathcal F^{\rho_i}_{\tau_i}$, and $t$ Motzkin paths
$(M_i)_{1\leq i\leq t}$, with $M_i\in\mathcal M_{\sigma_i}^{\gamma_i}$.

We explain in this section how the well-label forests, the Motzkin
paths and the type are linked to the labeling function $\lambda$
defined in Section~\ref{sec:label}.

As in the proof of Lemma~\ref{prop:bijrootuni}, there are four angles
of $U$ where a root stem can be added to obtain an element of
$\mathcal T_{r,s,b}(n)$ from $U$ (after also forgetting the initial
root of $U$). Consider one such element $T\in\mathcal
T_{r,s,b}(n)$. Let $G$ be the image of $T$ by the bijection $\Phi$ of
Theorem~\ref{them:bijectionbenjamin} and $V$ the set of vertices of
$G$.  Let $\Gamma$ be the unicellular map obtained from $T$ by adding
a dangling root half-edge incident to its root angle. Let $\lambda$ be
the labeling function of the angles of $\Gamma$ as defined in
Section~\ref{sec:label}.  For all $u,v \in V$, let $m(u)$ and
$\overline m(u,v)$ be  defined as in 
Section~\ref{sec:label}.

Recall that the labeling function $\lambda$ is defined on
the angles of $\Gamma$ by the following: while going clockwise around
the unique face of $\Gamma$ starting from the root angle with
$\lambda$ equals to $3$, the variation of $\lambda$ is ``+1'' while going
around a stem and ``-1'' while going along an edge.

Recall that, for $1\leq i\leq t$, the Motzkin path $M_i$ is used to
represent the part $R_i$ of the unicellular map $U$ (see
Section~\ref{dec}).  Consider the extension $\widetilde {M_i}$ of
$M_i$, defined in Section~\ref{sec:motz}. Note that $\widetilde{M_i}$
can be used to encode the variation of the labels along the path $R_i$
between $\alpha_i$ (excluded) and $\alpha_{i+1}$ (included) as if we
were computing $\lambda$ around $R_i$.
Figure~\ref{fig:exampledecompmotzkinextention} is an example obtained
by superposing the example $R_i$ of
Figure~\ref{fig:exampledecompmotzkin} and the extension of the
corresponding Motzkin path given
by~(\ref{eq:MotzkinExtensionExample}). One can check that, from
$\alpha_i$ (excluded) to $\alpha_{i+1}$ (included), we get ``+1''
around a stem and ``-1'' along an edge, like in the definition of
$\lambda$.

\begin{figure}[h]
  \centering 
 \scalebox{0.5}{\input{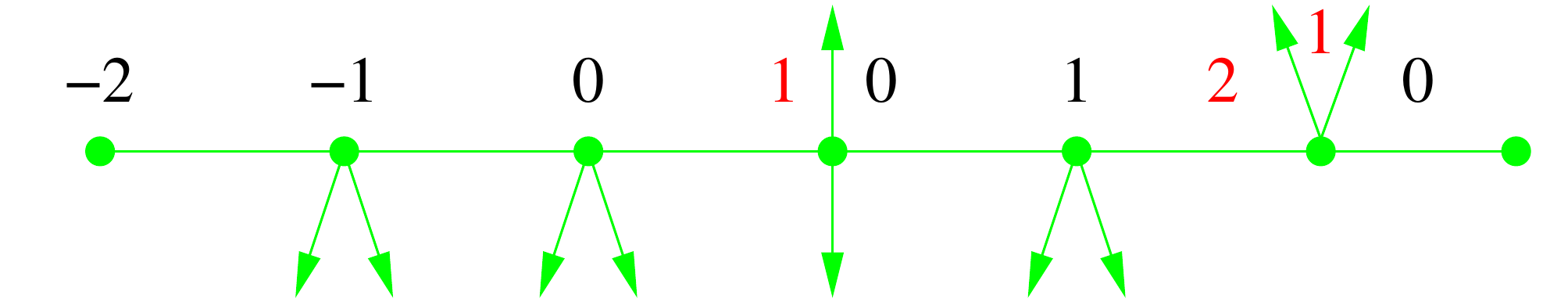_t}}
 \caption{The extension of the Motzkin path (from right to left).}
    \label{fig:exampledecompmotzkinextention}
  \end{figure}

  Note also that $\widetilde{\underline{M_i}}$ encode the variation of
  the labels along the path $R_i$ between $\alpha_{i+t}$ (excluded)
  and $\alpha_{i+t+1}$ (included).
  Figure~\ref{fig:exampledecompmotzkinextentioninverse} is an example
  obtained by superposing the example $R_i$ of
  Figure~\ref{fig:exampledecompmotzkin} and the extension of the
  inverse of the corresponding Motzkin path given
  by~(\ref{eq:MotzkinExampleInverseExtension}).

\begin{figure}[h]
  \centering 
 \scalebox{0.5}{\input{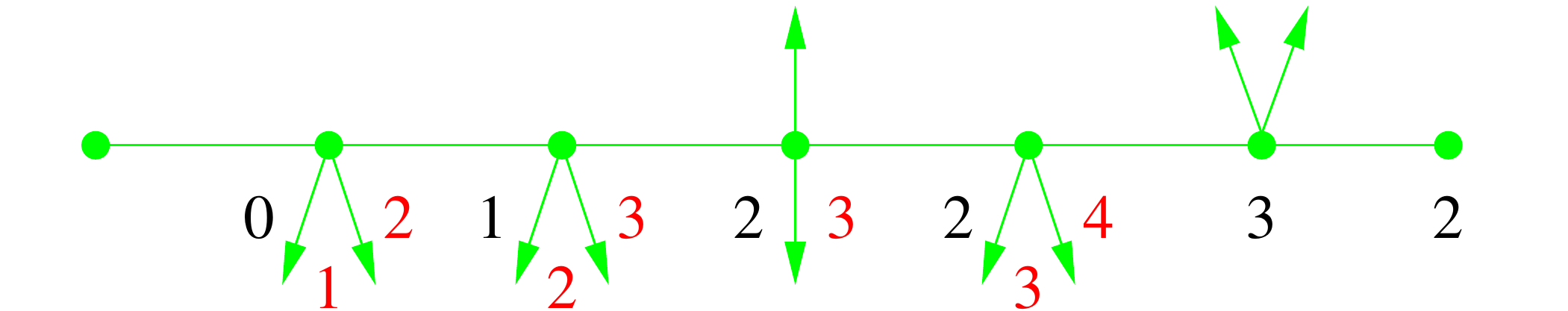_t}}
  \caption{The inverse of the extension of the Motzkin path (from left
    to right).}
    \label{fig:exampledecompmotzkinextentioninverse}
  \end{figure}

  For convenience, we define 
  $\widetilde {M_i}=\widetilde{\underline{M_{i-t}}}$  for $t+1\leq
  i\leq 2t$. So the sequence $\widetilde {M_1}, \ldots,\widetilde
  {M_{2t}}$ corresponds to the parts of the $R_i$ appearing
  consecutively while going \cw around the unique face of $U$.

  Now we need to extend a bit more $\widetilde {M_i}$ so
  it also encodes $\alpha_i$ and a possible stem incident to $\alpha_i$.
  For a Motzkin path $\widetilde M\in\mathcal M_{2\sigma+\gamma}^{\gamma}$ and
  $c\in\{0,1\}$, we define the \emph{$c$-shift of $\widetilde M$} as the following
  Motzkin path in $\mathcal M_{2\sigma+\gamma+c+1}^{\gamma+c-1}$: 
  $$
  \widetilde M^c=
  \begin{cases}
    (   0, (\widetilde M)_1-1,\ldots,(\widetilde M)_{2\sigma+\gamma}-1) &\text{ if } c=0\\
    (   0, 1, (\widetilde M)_1,\ldots,(\widetilde M)_{2\sigma+\gamma}) &\text{ if } c=1\\
  \end{cases}
  $$

  For $1\leq k\leq 9$ and $1\leq i\leq 6$, let $c_i(k)$ be the value
  of $c_i$ given by line $k$ of Table~\ref{tab:gamma}. We also define
  $c_1(0)=c_2(0)=c_3(0)=c_4(0)=0$. For $t+1\leq i\leq 2t$, let
  $\gamma_i=-\gamma_{i-t}$ and $\sigma_i=\sigma_{i-t}$.  With these
  notations, for $1\leq i\leq 2t$, we can consider the Motzkin path
  $\widetilde {M_i}^{c_i(k)}$ that is an element of
  $\mathcal M_{\tau_i}^{\gamma_i+c_i(k)-1}$ (see
  Definitions~\ref{def:defofR0} and~\ref{def:defofR1to9} for the
  relation between $\tau$, $\gamma$, $\sigma$, $c$). Now
  $\widetilde {M_i}^{c_i(k)}$ encode ``completely'' $R_i$ from
  $\alpha_i$ to $\alpha_{i+1}$ (both included) with also the stems
  incident to special vertices depending on the type.

Now we explain the links between $\lambda$ and the well-labeled forests.
Consider a tree of a well-labeled forest $(F,\ell)$. 
 Figure~\ref{fig:tree-label} gives an example represented either with
 its labels (on the left side) or with its stems (on the right
 side). Note that it is  the
first tree of the well-labeled forest of
Figures~\ref{wellforest} and~\ref{fig:welllabelstem} (i.e. the one on the right).

\begin{figure}[h]
    \centering 
    \scalebox{0.5}{\input{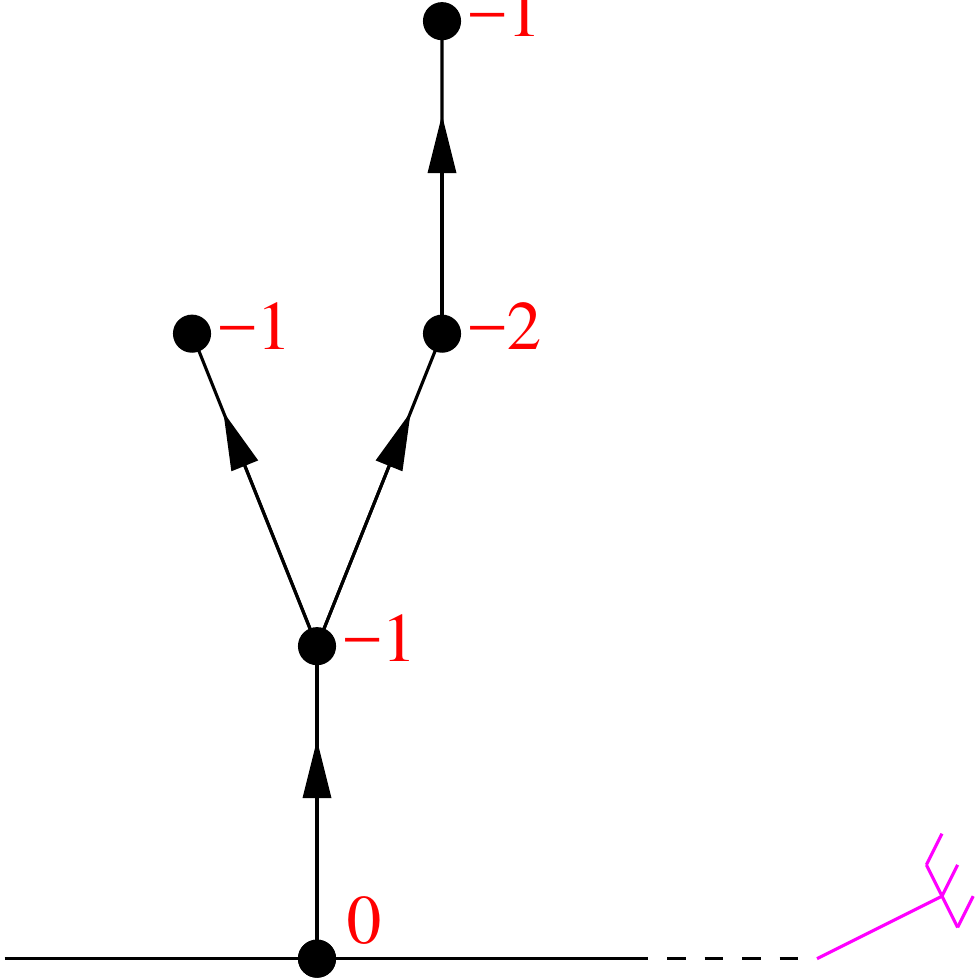_t}}\hspace{4em}
     \scalebox{0.5}{\input{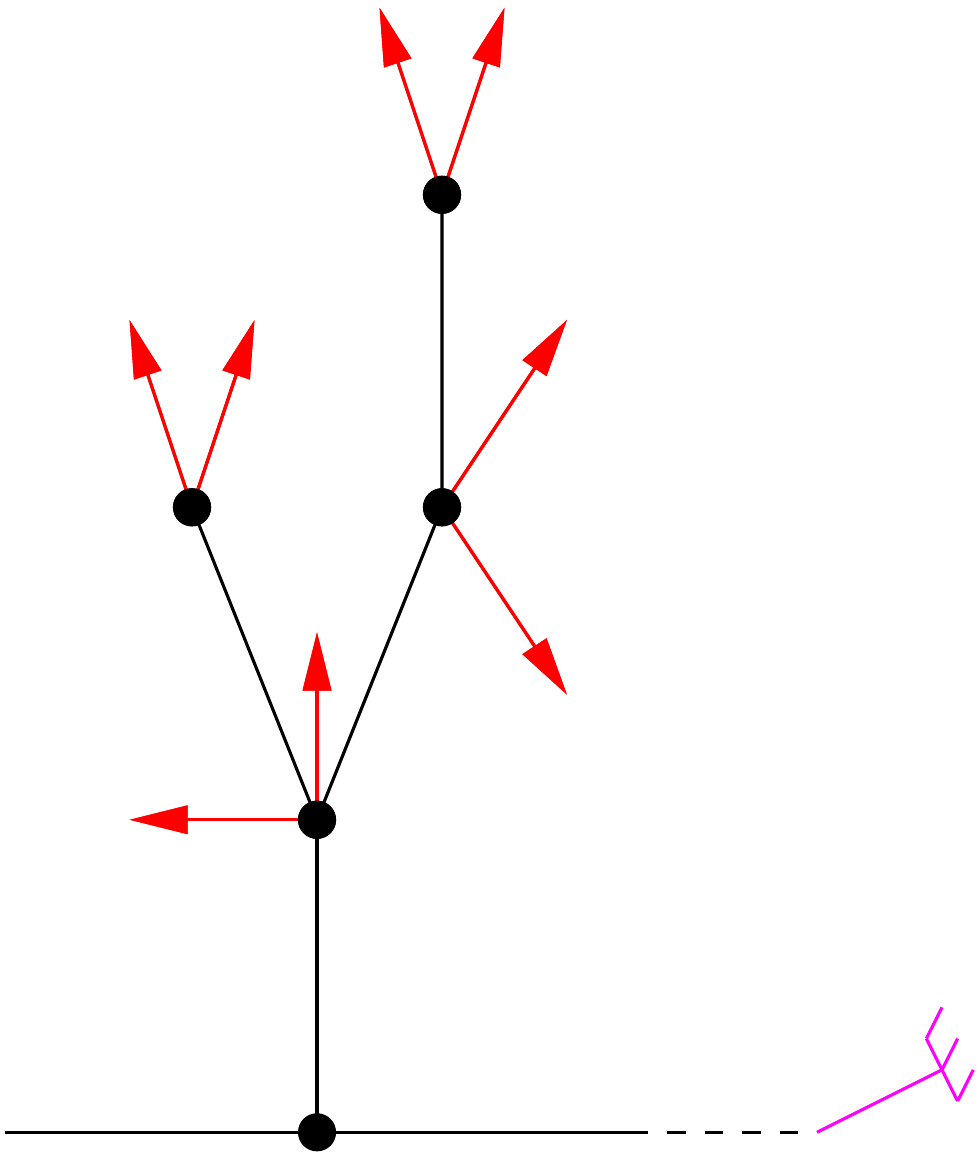_t}}
  \caption{Example of a tree of a well-labeled forest.}
    \label{fig:tree-label}
  \end{figure}

  If one computes the variation of $\lambda$ on the angles of the tree
  ``above the floor line''. Then one can note that the first angle of
  each vertex that is encountered receive precisely the label given by
  the function $\ell$ of $(F,\ell)$.  Figure~\ref{fig:tree-stem}, show
  this computation on the example of Figure~\ref{fig:tree-label} where
  the correspondence with the values of $\ell$ is represented in red.

\begin{figure}[h]
    \centering 
     \scalebox{0.5}{\input{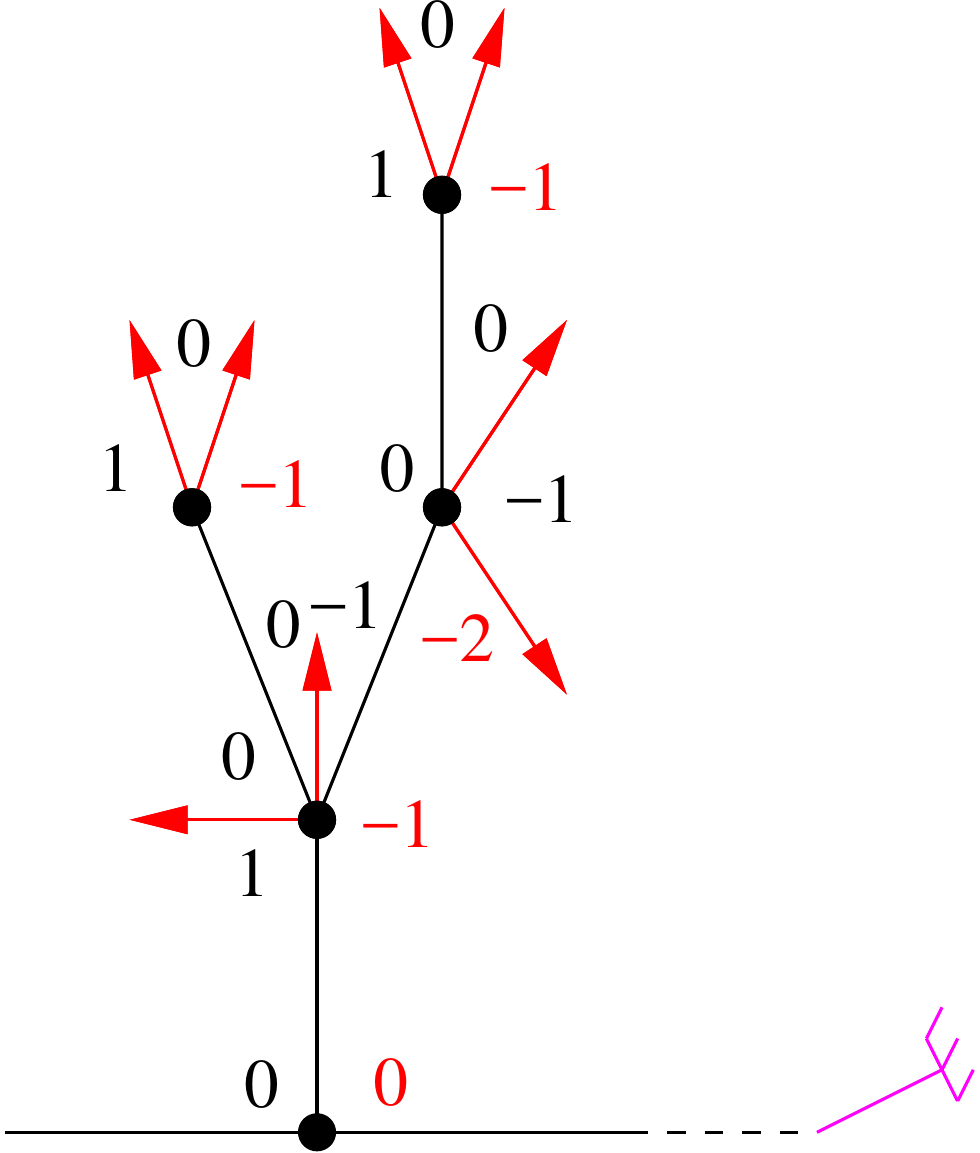_t}}
  \caption{Computation of the label $\lambda$ around a tree of a well-labeled forest.}
    \label{fig:tree-stem}
  \end{figure}

  Now with the help of the $c$-shift extensions of Motzkin paths we
  can encode completely the variation of the labels around the
  well-labeled forests.  For $1\leq i\leq 2t$, consider the vertex
  contour function $r_{F_i}$ and contour pair
  $(C_{F_i},L_{(F_i,\ell_i)})$ of $(F_i,\ell_i)$.  For
  $0\leq s \leq 2\rho_i+\tau_i$, let
  $\overline{C_{F_i}}(s)=\max_{x\leq s}C_{F_i}(x)$. Note that for
  $0\leq s \leq 2\rho_i+\tau_i$ the value of $\overline{C_{F_i}}(s)+1$
  is  the floor of the vertex $r_{F_i}(s)$.  For
  $0\leq s \leq 2\rho_i+\tau_i$, we define
  $$S_i(s)=L_{(F_i,\ell_i)}(s)+\widetilde
  {M_i}^{c_i(k)}(\overline{C_{F_i}}(s)).$$ With this definition, if one
  computes the variations of $\lambda$ around $F_i$, starting from
  $\alpha_i$ with value $0$, an ending at $\alpha_{i+1}$ then the
  first angle of each vertex $v$ that is encountered receives the
  value $S_i(t)$ where $t$ is any value $0\leq t\leq 2\rho_i+\tau_i$
  such that $r_{F_i}(t)=v$.

  For $f,g$ two functions defined on $[\![0,s]\!]$ and $[\![0,s']\!]$
  respectively, taking values in $\mathbb Z$ and such that
  $g(0)=0$. We define the \emph{concatenation} of $f,g$, denoted
  $f\bullet g $, as the function defined on $[\![0,s+s']\!]$ by the
  following:
$$(f\bullet g)(i)=
\begin{cases}
f(i) & \text{ if } 0\leq i\leq s \\
f(s)+g(i-s) & \text{ if } s\leq i\leq s+s'
\end{cases}$$

Let $I=\sum_{1\leq i\leq
   2t}(2\rho_i+\tau_i)$.
 Let $S^\bullet=S_1\bullet \cdots \bullet S_{2t}$ be the function
 defined on $[\![0,I]\!]$.
 Note that $S^\bullet(0)=S^\bullet(I)=0$.
 Note also that $I=\sum_{1\leq i\leq
   2t}(2\rho_i+\tau_i)=(2n+2)+2\times(\sigma_1+\cdots+\sigma_{t})+2\times
 \mathbbm 1_{k\neq 0}$.

 As in Section~\ref{sec:label}, we call \emph{proper}, the vertices of
 $U$ that are on at least one cycle of $\Gamma$. Let $P$ be the 
 unicellular map obtained from $U$ by removing all the stems that are
 not incident to proper vertices.  We still denote by
 $\alpha_1,\ldots, \alpha_{2t}$ the angles of $P$ corresponding to the
 angles $\alpha_1,\ldots, \alpha_{2t}$ of $U$. Note that $P$ has
 precisely $I$ angles. So we see $S^\bullet$ as a function from the
 angles of $P$ to $\mathbb Z$ by starting at $\alpha_1$ and walking
 \cw around the unique face of $P$.
 
 We define \emph{the vertex contour function of $P$} as the function
 $r_{P}:[\![0,I-1]\!] \rightarrow V$ as follows: while walking \cw
 around the unique face of $P$, starting at $\alpha_1$,
 let $r_{P}(i)$ denote the $i$-th vertex of $P$ that is encountered.

 Recall that for $u\in V$, $m(u)$ is the minimum of the values of
 $\lambda$ that appears in the angles incident to $u$.

 We explain that for $i\in[\![0,I-1]\!]$, $S^\bullet(i)$ is almost
 equal to $m(r_P(i)) - m(r_P(0))$. On one hand, we have explain above
 that $S^\bullet$ almost acts as computing a ``variation'' of
 $\lambda$ around $U$ from $\alpha_1$. On the other hand the value of
 $m$ is obtained by computing $\lambda$ around $\Gamma$ from its root
 angle $a_0$. This angle $a_0$ can be anywhere in $U$. Since we are
 considering $m(r_P(i)) - m(r_P(0))$ we have shifted $m$ so its
 corresponds to ``computing $\lambda$ from $\alpha_1$.  Let
 $a_0,a_1,\ldots,a_\ell$ denote the angles of $\Gamma$ as in
 Section~\ref{sec:label}. There is a jump of $4$ in the computation of
 $\lambda$ from $a_\ell$ to $a_1$.  Thus in the ``variation'' of
 $\lambda$ computed around the well-labeled forests we can get a $+4$
 at some place. Moreover in such computations, we match the
 computation of $\lambda$ just at the first angle of each vertex that
 is encountered around the forest. By Lemma~\ref{lem:Mm6}, it can
 differ from $m$ by $\pm 6$.  Thus in total we have, for
 $i\in [\![0,I-1]\!]$,
 $|S^\bullet(i)-(m(r_P(i)) - m(r_P(0)))|\leq 4+6+6=16.$

 Note that $P$ contains exactly $2\times(\sigma_1+\cdots+\sigma_{t})+2\times
 \mathbbm 1_{k\neq 0}$ stems.
Let $Q$ be the unicellular map obtained from $P$ by removing all its
stems. We also denote by $\alpha_1,\ldots, \alpha_{2t}$ the
corresponding angles of $Q$. Note that $Q$ has exactly $2n+2$ angles.
We now define the \emph{vertex contour function of $Q$} as the function
$r_Q:[\![0,2n+1]\!] \rightarrow V$ as follows: while walking \cw
around the unique face of $W$, starting at $\alpha_1$, let
$r_Q(i)$ denote the $i$-th vertex of $Q$ that is encountered.

We define the sequence $(S(i))_{0\leq i\leq 2n+1}$ as the sequence
that is obtained from $(S^\bullet(i))_{0\leq i\leq I}$ by removing all
the values of $(S^\bullet)$ that appear in an angle of $P$ that is
just after a stem of $P$  in \cw order around its incident vertex.
 So we see $S$ as a function from the
 angles of $Q$ to $\mathbb Z$ by starting at $\alpha_1$ and walking
 \cw around the unique face of $Q$.
We call $S$ the \emph{shifted labeling function} of the unicellular map
  $U$.

By above arguments, for $i\in [\![0,2n+1]\!]$, we have
\begin{equation}
  \label{eq:Setm}
  |S(i)-(m(r_{Q}(i))-m(r_{Q}(0)))|\leq 16.
  \end{equation}

  We now introduce the following pseudo-distance function.
For $i,j\in[\![0,2n+1]\!]$, let
$$d^{o}(i,j)=m(r_Q(i))+         m(r_Q(j))-
2\overline{m}(r_Q(i),r_Q(j))$$
By (\ref{eq:Setm}), we obtained the following: for $i,j\in[\![0,2n+1]\!]$,
\begin{equation}
  \label{eq:d0S}
|d^{o}(i,j)-(S(i)+S(j)-2\overline S(i,j))| \leq 64
\end{equation}
where $\overline S(i,j)=\min_{i\leq t\leq j}S(t)$.

\section{Some variants of the Brownian motion}
\label{sec:brownian}

We start with a some definitions
Let

 $$\mathcal H=\bigcup_{x\in \mathbb{R}_+} C([0,x],\mathbb{R}),$$
 where $C([0,x],\mathbb{R})$ is  the  set of  continuous functions
 from 
 $[0,x]$ to $\mathbb{R}$.

 We use the following standard notation: $x\wedge y=\min (x,y)$ for
 $x,y\in \mathbb R^2$.  For an element $f\in \mathcal H$, let
 $\sigma(f)$ be the only $x$ such that $f \in
 C([0,x],\mathbb{R})$. Then we
 define the following metric on $\mathcal H$:
$$d_{\mathcal H}(f,g)=|\sigma(f)-\sigma(g)|+\sup_{y\geq 0}|f(y\wedge
\sigma(f))-g(y\wedge \sigma(g))|.$$

Given a function $f:[0,x] \rightarrow \mathbb{R}$,
for $0\leq t\leq x$, let {$\overline{f}(t)=\sup_{r\in
    [0,t]}f(r)$.
 
Let $p$ (resp. $p_a$) denote the density of the standard Gaussian
random variable (resp. the centered Gaussian random variable with
variance $a$), i.e. for $x\in \mathbb R$,
$p(x)=\frac{1}{2\pi}e^{\frac{-x^2}{2}}$
(resp. $p_a(x)=\frac{1}{\sqrt{a}}p(\frac{x}{\sqrt{a}})$).  Let $p'_a$
denotes the derivative of $p_a$.

Let $\beta$
be  the   standard   Brownian  motion.

Consider $\tau,\rho\in \mathbb R^*_+$.
Intuitively,  the Brownian bridge $B^{0\rightarrow \tau}_{[0,\rho]}$ is the
standard Brownian motion on $[0,\rho]$ conditioned to take value $\tau$
at time $\rho$ and the first-passage Brownian bridge $F^{0\rightarrow \tau}_{[0,\rho]}$ is the
Brownian bridge conditioned to take value $\tau$ at time $\rho$  for the
first time. Since the probabilities of these conditioning events
are equal to $0$, these processes need to be more formally defined.
There are many equivalent definitions (see for example
\cite{bertoin2003path,billingsley2013convergence,revuz2013continuous})
and we use the following one (as   explained   in   
\cite{fitzsimmons1993markovian}, lemma $1$).

The      \emph{Brownian   bridge}
$B^{0\rightarrow \tau}_{[0,\rho]}$  is  the  unique  continuous  process
$(B_t)_{t\in[0,\rho]}$ taking  value $\tau$  at time $\rho$  and satisfying,
for every  $\rho' \in  [0,\rho[$ and  every continuous $f  : \mathcal  H \to
\mathbb R$, the identity
$$\mathbb{E}[f(B\vert_{[0,\rho']})]  =  \mathbb{E} \left[  f(\beta\vert_{[0,\rho']})
  \frac{p_{\rho-\rho'}(\tau-\beta_{\rho'})}{p_\rho(\tau)} \right].$$
Similarly, the \emph{first-passage Brownian bridge}
$F^{0\rightarrow \tau}_{[0,\rho]}$ is the unique continuous process
$(F_t)_{t \in [0,\rho]}$ taking value $\tau$ at time $\rho$ for the
first time and satisfying, for every $\rho' \in [0,\rho[$ and every
continuous function $f : \mathcal H \to \mathbb R$, the identity
$$\mathbb{E}[f(F\vert_{[0,\rho']})]  =  \mathbb{E} \left[  f(\beta\vert_{[0,\rho']})
  \frac{p'_{\rho-\rho'}(\tau-\beta_{\rho'})}{p'_\rho(\tau)}              {\mathbbm
    1}_{\overline{\beta}_{\rho'}<\tau} \right].$$

For convenience we define: 
$$\widetilde{F}^{0\rightarrow
    \tau}_{[0,\rho]}=\frac{1}{2}\left(F^{0\rightarrow
    \tau}_{[0,\rho]}+\overline {F^{0\rightarrow \tau}_{[0,\rho]}}\right).$$

Given a function $f:[0,\rho] \rightarrow \mathbb{R}$,
for $0\leq s\leq t\leq \rho$, let  
  $\check{f}(s,t)=\inf_{r\in [s,t]} \left(\overline f(r)-f(r)\right)$.

We now define the Brownian snake's head driven by a first-passage Brownian
bridge.
To simplify the notation, let $F$ denote the first-passage Brownian
bridge $F^{0 \to \tau}_{[0,\rho]}$.
The \emph{Brownian snake's head $Z=Z_{[0,\rho]}^\tau$ driven by $F$} is,
conditionally on $F$, define as the centered Gaussian process
satisfying, for $0\leq s\leq t\leq \rho$:
$$\Cov(Z(s),Z(t)){=\check{F}(s,t})$$

We can assume that  $Z_{[0,2\rho]}^\tau$ is
almost surely (a.s.) continuous.

Now, define an  equivalence relation as follows: for
any   $0\leq  s\leq   t\leq   \rho$,  we   say   that  $s\sim_{F}t$   if
{$\check{F}(s,s)=\check{F}(t,t)=\check{F}(s,t)$}. Then  the \emph{Brownian  continuum random
  forest} $(\mathcal{T}_F,d_{\mathcal{T}_F})$ is  defined as the space
$\mathcal{T}_F=[0,\rho]/\!\sim_F$    equipped     with    the    distance function
$d_{\mathcal{T}_F}(s,t)=\check{F}(s,s)+\check{F}(t,t)-2\check{F}(s,t)$   for   any   pair
$(s,t)$ such that $0\leq s\leq t\leq 2\rho$.

\begin{remark}
  \label{domainofZ}
  Note that      if       $s\sim_{F}t$      then
  $\mathbb{E}[(Z^\tau_{[0,\rho]}(s)-Z\tau_{[0,\rho]}(t))^2]=0$,  meaning   that  as
  usual  $Z^\tau_{[0,\rho]}$ can  be seen  as a  continuous Gaussian  process
  defined on $\mathcal T_{F}$.
\end{remark}

We now give  some definitions and results from (\cite{bettinelli2010scaling}, see also~\cite{petrov2012sums}):

The \emph{maximal span} of an integer-valued random variable $X$
is the greatest $h\in \mathbb{N}$ for which there exists an integer $a$ such that almost surely $X \in a+h\mathbb{Z}$.

Consider $(X_i)_{i\geq 0}$ a sequence of independent and identically
distributed $i.i.d.$ integer-valued centered
random variables with a moment of order $r_0$ for some $r_0\geq
3$. Let $\eta^2=Var(X_1)$, $h$ be the maximal span of $X_i$ and $a$ be
the integer such that $a.s.$ $X_i \in a+h\mathbb{Z}$. Let
$\Sigma_k=\sum_{i=0}^{k}X_i$ and $Q_k(i)=\mathbb{P}(\Sigma_k=i)$.

\begin{lemma}[\cite{bettinelli2010scaling}]
\label{sumiid}
We have:
    
    $$ 
\sup_{i\in ka+h\mathbb{Z}}\left | \frac{\eta}{h}\sqrt{k}\,Q_k(i)-p\left(\frac{i}{\eta \sqrt{k}}\right) \right |=o(k^{-\frac{1}{2}}), $$
    
and, for all $2 \leq r \leq r_0$, there exists a constant $C$ such that for all $i \in \mathbb{Z}$ and $k \geq 1$, 
    
    $$\left | \frac{\eta}{h}\sqrt{k}\, Q_k(i) \right | \leq \frac{C}{1+\left| \frac{i}{\eta \sqrt{k}}\right |^r}. $$
\end{lemma}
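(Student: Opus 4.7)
The plan is to prove both bounds via Fourier analysis of the characteristic function $\varphi(t)=\mathbb{E}[e^{itX_1}]$. Since $h$ is the maximal span of $X_1$, Fourier inversion for lattice distributions gives, for every $i\in ka+h\mathbb{Z}$,
$$Q_k(i)=\frac{h}{2\pi}\int_{-\pi/h}^{\pi/h}e^{-iui}\varphi(u)^k\,du,$$
and the substitution $u=v/(\eta\sqrt{k})$, $x:=i/(\eta\sqrt{k})$, rewrites this as
$$\frac{\eta\sqrt{k}}{h}Q_k(i)=\frac{1}{2\pi}\int_{|v|\leq \pi\eta\sqrt{k}/h}e^{-ivx}\,\varphi\!\left(\frac{v}{\eta\sqrt{k}}\right)^{\!k}\!dv,$$
which should be compared to $p(x)=\frac{1}{2\pi}\int_{\mathbb{R}}e^{-ivx}e^{-v^2/2}\,dv$. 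Because $i-ka\in h\mathbb{Z}$, the integrand on the right is periodic in $v$ of period $2\pi\eta\sqrt{k}/h$ (the length of the integration interval), so any subsequent integration by parts in $v$ will have vanishing boundary terms.

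For the first (local CLT) bound I would split the integration range into three standard pieces. On a fixed compact set $\{|v|\leq A\}$, the expansion $\varphi(u)=1-\eta^2u^2/2+O(|u|^3)$, available under the $r_0\geq 3$ hypothesis, combined with an Edgeworth-type second-order analysis gives $\varphi(v/(\eta\sqrt{k}))^k=e^{-v^2/2}+o(k^{-1/2})$ uniformly in $v$. On an intermediate annulus $\{A\leq |v|\leq \epsilon\eta\sqrt{k}\}$, choosing $\epsilon$ small enough that $|\varphi(u)|\leq e^{-\eta^2 u^2/4}$ for $|u|\leq \epsilon$ yields $|\varphi(v/(\eta\sqrt{k}))|^k\leq e^{-v^2/4}$, so this piece is a Gaussian tail in $v$ that can be made arbitrarily small by taking $A$ large, uniformly in $k$. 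On the outer region $\{\epsilon\eta\sqrt{k}\leq|v|\leq \pi\eta\sqrt{k}/h\}$ the maximal-span condition forces $\sup_{\epsilon\leq|u|\leq \pi/h}|\varphi(u)|<1$, so this contribution is exponentially small in $k$. Matching these with the analogous three-part split of the Gaussian integral produces the announced $o(k^{-1/2})$ rate.

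For the second (tail) bound I would integrate by parts $r$ times in the Fourier representation; the boundary terms vanish by the periodicity remarked above, leaving
$$\frac{\eta\sqrt{k}}{h}Q_k(i)=\frac{1}{2\pi(ix)^r}\int_{|v|\leq \pi\eta\sqrt{k}/h}e^{-ivx}\frac{d^r}{dv^r}\!\left[\varphi\!\left(\frac{v}{\eta\sqrt{k}}\right)^{\!k}\right]\!dv.$$
The Leibniz rule expands the derivative as a sum whose leading term, using $\varphi'(0)=0$ and Taylor-expanding $\varphi$ near $0$, is of the form $(-1)^r v^r\,\varphi(v/(\eta\sqrt{k}))^{k-r}$ plus lower-order pieces, each controlled by moments of $X_1$ up to order $r$ and inverse powers of $\eta\sqrt{k}$ coming from the chain rule. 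Running the same three-region split as above bounds this integral uniformly in $k$, yielding $|(\eta\sqrt{k}/h)Q_k(i)|\leq C|x|^{-r}$ for $|x|\geq 1$. Combined with the uniform $O(1)$ bound on $(\eta\sqrt{k}/h)Q_k(i)$ provided by the LCLT for $|x|\leq 1$, this gives the envelope $C/(1+|x|^r)$.

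The main obstacle will be sharpening the central-region estimate in the LCLT from the $O(k^{-1/2})$ that a crude Taylor expansion delivers to the genuine $o(k^{-1/2})$ claimed in the statement: this requires cancelling the Edgeworth correction at order $k^{-1/2}$, which is precisely where the hypothesis $r_0\geq 3$ enters. This is classical material (see e.g.\ Petrov's monograph on sums of independent random variables), but it is the substantive technical content behind the lemma.
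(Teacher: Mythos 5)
The paper does not prove this lemma: it is quoted verbatim from Bettinelli's article (with a pointer to Petrov's monograph), so there is no in-paper argument to compare yours against. Your characteristic-function strategy is precisely the classical route by which the cited results are established (lattice Fourier inversion, three-zone splitting of the integral, $r$-fold integration by parts for the tail envelope), and the second half of your outline --- the domination $\bigl|\tfrac{\eta}{h}\sqrt{k}\,Q_k(i)\bigr|\leq C/(1+|x|^r)$ --- is sound as sketched: the boundary terms do vanish by periodicity, the $r$ derivatives of $\varphi$ exist because $r\leq r_0$, and the same three-zone estimate controls the resulting integral uniformly in $k$ (note also that $Q_k(i)=0$ off the lattice $ka+h\mathbb{Z}$, so extending to all $i\in\mathbb{Z}$ is immediate).

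There is, however, a genuine gap in your treatment of the first bound, at exactly the point you identify as the crux. A finite third moment does not make the order-$k^{-1/2}$ Edgeworth term cancel; it makes it finite and explicitly computable. Petrov's strengthened local limit theorem gives
$$\sup_{i\in ka+h\mathbb{Z}}\left|\frac{\eta}{h}\sqrt{k}\,Q_k(i)-p(x)-\frac{\mu_3}{6\eta^3\sqrt{k}}\,(x^3-3x)\,p(x)\right|=o\bigl(k^{-1/2}\bigr),\qquad x=\frac{i}{\eta\sqrt{k}},$$
so the uncorrected difference you are estimating is generically of exact order $k^{-1/2}$ whenever $\mu_3=\mathbb{E}[X_1^3]\neq 0$ --- which is the case for one of the two walks this lemma is applied to in the paper (the $\{-3,1\}$-valued steps of Section~\ref{section6} have $\mu_3=-6$; only the Motzkin steps are symmetric). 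Your plan as written therefore cannot deliver the stated $o(k^{-1/2})$: saying that ``$r_0\geq 3$ is precisely where the cancellation enters'' is backwards, since the third-moment hypothesis is what produces the correction term, not what kills it. To close the argument you must either carry the Edgeworth polynomial along in the statement you prove, or settle for $O(k^{-1/2})$; this imprecision is inherited from the quoted source, and the applications in the present paper appear to use only the $o(1)$ consequence together with the domination bound, so the weaker form would suffice --- but your proof should say so explicitly rather than appeal to a cancellation that does not occur.
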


Consider $(\rho_n)\in \mathbb{N}^{\mathbb{N}}$ and
$(\tau_n)\in \mathbb{Z}^{\mathbb{N}}$  two sequences of integers
such that there exists $\rho,\tau\in \mathbb
R_+^*$ satisfying:
$$\frac{\rho_n}{n}\rightarrow \rho \text{ and } \frac{\tau_n}{\eta\sqrt{n}}\rightarrow \tau$$

Let $(B_n(i))_{0\leq i\leq \rho_n}$ be the process whose law is the law of $(\Sigma_i)_{0\leq i\leq \rho_n}$ conditioned on the event
$$\Sigma_{\rho_n}=\tau_n,$$
which we suppose occurs with positive probability.

We  write $B_n$ the linearly interpolated version of $B_n$
and define its rescaled version by: 
 $$B_{(n)}=\left(\frac{B_n(ns)}{\eta\sqrt{n}}\right)_{0\leq s\leq \frac{\rho_{n}}{n}}$$

\begin{lemma}[\cite{bettinelli2010scaling}]
\label{sumiid-2}
 There exists an integer $n_0\in \mathbb{N}$
 such that, for every $2\leq q \leq q_0$, there exists a constant
 $C_q$ satisfying, for all $n\geq n_0$ and
 $0\leq s\leq t \leq \frac{\rho_{n}}{n}$,
 $$\mathbb{E}[|B_{(n)}(t)-B_{(n)}(s)|^q]\leq C_q|t-s|^{\frac{q}{2}}.$$
\end{lemma}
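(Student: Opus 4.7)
The plan is to derive a Kolmogorov-type moment bound for the conditioned walk from the unconditioned i.i.d. moment estimates, using the local limit theorem (Lemma~\ref{sumiid}) as the key input to control the Radon-Nikodym factor coming from the conditioning $\Sigma_{\rho_n}=\tau_n$. First, I will write the conditional distribution of an increment explicitly. For integers $0\leq j<k\leq \rho_n$ with $\ell=k-j$, the Markov property gives
$$\mathbb{P}\bigl(B_n(k)-B_n(j)=x\bigr)=\frac{Q_\ell(x)\, Q_{\rho_n-\ell}(\tau_n-x)}{Q_{\rho_n}(\tau_n)}.$$
From the first part of Lemma~\ref{sumiid} applied at the endpoint, $\frac{\eta}{h}\sqrt{\rho_n}\,Q_{\rho_n}(\tau_n)\to p(\tau/\sqrt{\rho})>0$, so for $n\geq n_0$ we have $Q_{\rho_n}(\tau_n)\geq c/\sqrt{n}$ for a positive constant $c$. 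From the second part of the same lemma, $Q_m(y)\leq C/\sqrt{m}$ uniformly in $y$.

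Next I will split the analysis into two regimes. In the regime $\ell\leq \rho_n/2$, the factor $Q_{\rho_n-\ell}(\tau_n-x)/Q_{\rho_n}(\tau_n)$ is bounded by $C\sqrt{\rho_n/(\rho_n-\ell)}\leq C'$, so
$$\mathbb{E}\bigl[|B_n(k)-B_n(j)|^q\bigr]\leq C'\,\mathbb{E}\bigl[|\Sigma_\ell|^q\bigr]\leq C'_q\,\ell^{q/2},$$
by Marcinkiewicz–Zygmund, valid provided $q\leq r_0$ so the required moment of $X_1$ exists. Dividing by $(\eta\sqrt{n})^q$ and setting $\ell=n(t-s)$ yields the bound $C_q(t-s)^{q/2}$. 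In the symmetric regime $\ell>\rho_n/2$, I instead bound $Q_\ell(x)\leq C/\sqrt{\ell}\leq C'/\sqrt{n}$ and note that $\rho_n-\ell<\rho_n/2$; substituting $y=\tau_n-x$ one gets a factor of order $1$ times $\mathbb{E}[|\tau_n-\Sigma_{\rho_n-\ell}|^q]\leq C_q(|\tau_n|^q+(\rho_n-\ell)^{q/2})\leq C'_q n^{q/2}$. In this regime $t-s\geq \rho_n/(2n)$ is bounded below by a positive constant for $n\geq n_0$, so the rescaled bound $C''_q$ can be absorbed into $C_q(t-s)^{q/2}$.

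Finally, for arbitrary real $0\leq s\leq t\leq \rho_n/n$, I will reduce to the integer-time case by handling the linear interpolation. For $s,t$ with $\lfloor ns\rfloor=\lfloor nt\rfloor$, the increment $|B_{(n)}(t)-B_{(n)}(s)|$ is at most $n(t-s)$ times a single step $|X_i|/(\eta\sqrt n)$, whose $q$-th moment is $O(n^{-q/2})$; combined with $(n(t-s))^q\leq (n(t-s))^{q/2}$ when $n(t-s)\leq 1$, this gives the required bound. For general $s<t$, the standard telescoping argument combining the endpoint integer-time estimate with the short-interval contribution at both ends yields the bound with an enlarged constant.

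The main obstacle is simply the case split around $\ell\simeq\rho_n$: when $\ell$ is very close to $\rho_n$ the ratio of local probabilities degenerates, and one must exploit the time-reversal symmetry $x\leftrightarrow\tau_n-x$ of the bridge together with the fact that $(t-s)$ is then bounded away from $0$. Everything else is an application of Lemma~\ref{sumiid} and classical moment inequalities for centered random walks.
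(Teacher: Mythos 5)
The paper does not prove this lemma but cites it directly from Bettinelli~\cite{bettinelli2010scaling}, so there is no in-paper argument to compare against; your proof is correct and follows the standard local-limit-theorem strategy used there and in similar works (Le Gall, Miermont): write the bridge increment law via the Markov factorization $Q_\ell(x)\,Q_{\rho_n-\ell}(\tau_n-x)/Q_{\rho_n}(\tau_n)$, use the first part of Lemma~\ref{sumiid} to bound the denominator below by $c/\sqrt n$ for $n\geq n_0$, and the uniform bound $Q_m(\cdot)\leq C/\sqrt m$ from the second part, split at $\ell\lessgtr \rho_n/2$ with a time-reversal/Rosenthal step in the large-$\ell$ regime, and finish the sub-unit intervals by linear interpolation. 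The only caveats are cosmetic: there is an off-by-one in the factorization relative to the paper's stated convention $\Sigma_k=\sum_{i=0}^k X_i$ (one of $Q_\ell$, $Q_{\rho_n-\ell}$ should really be shifted by one index), but the paper's own Lemma~\ref{sumiid} is written for the ``$k$ summands'' convention, and all bounds are asymptotic, so this does not affect correctness; and the degenerate endpoint $\ell=\rho_n$ (deterministic increment) and $\ell=0$ should be noted explicitly but are trivially consistent with the claimed bound since $t-s$ is then bounded away from $0$ or equal to $0$.
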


\begin{theorem}[\cite{bettinelli2010scaling}]
  \label{convergencemotzkinbridge}
  The  process   $B_{(n)}$  converges   in  law  toward   the  process
  $B^{0\rightarrow       \tau}_{[0,\rho]}$,       in      the       space
  $(\mathcal H,d_\mathcal H)$, when $n$ goes to infinity.
\end{theorem}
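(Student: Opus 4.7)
The plan is to establish convergence in $(\mathcal H, d_\mathcal H)$ by the classical two-step strategy: first convergence of finite-dimensional marginals, then tightness. The length component of $d_\mathcal H$ is handled for free since $\sigma(B_{(n)})=\rho_n/n\to\rho$ by hypothesis, so after extending $B_{(n)}$ constantly beyond $\rho_n/n$ (which affects $d_\mathcal H$ by at most $|\tau_n/(\eta\sqrt n)-\tau|\to 0$) I may reduce to uniform convergence on a common compact interval.

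First I treat the finite-dimensional distributions. Fix $0<t_1<\cdots<t_k<\rho$ and set $i_j^{(n)}=\lfloor nt_j\rfloor$, together with lattice targets $y_j^{(n)}$ approximating $\eta\sqrt n\,x_j$ within the admissible class $a\,i_j^{(n)}+h\mathbb Z$. By the Markov property of the walk $(\Sigma_k)$ together with the definition of conditioning on $\Sigma_{\rho_n}=\tau_n$,
$$P\bigl(B_n(i_j^{(n)})=y_j^{(n)},\ j=1,\ldots,k\bigr)=\frac{Q_{i_1^{(n)}}(y_1^{(n)})\,\prod_{j=2}^{k}Q_{i_j^{(n)}-i_{j-1}^{(n)}}\bigl(y_j^{(n)}-y_{j-1}^{(n)}\bigr)\,Q_{\rho_n-i_k^{(n)}}\bigl(\tau_n-y_k^{(n)}\bigr)}{Q_{\rho_n}(\tau_n)}.$$
Multiplying numerator and denominator by the appropriate powers of $\eta\sqrt n/h$ and applying Lemma~\ref{sumiid} to each of the $k+2$ factors, one shows that the rescaled joint density converges to
$$\frac{p_{t_1}(x_1)\,\prod_{j=2}^{k}p_{t_j-t_{j-1}}(x_j-x_{j-1})\,p_{\rho-t_k}(\tau-x_k)}{p_\rho(\tau)},$$
which is exactly the $k$-dimensional marginal of $B^{0\rightarrow\tau}_{[0,\rho]}$ prescribed by the reweighting identity in its definition. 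Summing over lattice cells converts this pointwise density convergence into weak convergence of finite-dimensional laws.

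For tightness I would invoke Kolmogorov's continuity criterion. Pick any $q\in(2,r_0]$; Lemma~\ref{sumiid-2} yields a constant $C_q$ with $\mathbb{E}\bigl[|B_{(n)}(t)-B_{(n)}(s)|^q\bigr]\le C_q|t-s|^{q/2}$ for all sufficiently large $n$. Since $q/2>1$, this produces a uniform modulus-of-continuity bound, hence tightness of $(B_{(n)})$ in $C([0,\rho],\mathbb R)$. Combining tightness with the finite-dimensional convergence (and the almost sure continuity of $B^{0\rightarrow\tau}_{[0,\rho]}$) yields convergence in law in $(\mathcal H,d_\mathcal H)$ by Prokhorov's theorem.

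The main obstacle is the finite-dimensional step: the multiplicative errors in Lemma~\ref{sumiid} must not pile up across the $k+2$ factors in the ratio above. The uniform estimate $\sup_i\bigl|(\eta/h)\sqrt k\,Q_k(i)-p(i/(\eta\sqrt k))\bigr|=o(k^{-1/2})$ is tailored for this, since the limiting denominator $p_\rho(\tau)$ is bounded away from $0$ and the numerator factors are uniformly bounded on compact ranges of the $x_j$ by the second estimate of Lemma~\ref{sumiid}; these two facts promote the $o(1)$ errors of each individual factor into a uniform $o(1)$ error on the quotient. Book-keeping of the lattice class $a\,i_j^{(n)}+h\mathbb Z$ is routine, and the second bound of Lemma~\ref{sumiid} supplies the uniform tail estimate needed to pass from pointwise density convergence to weak convergence of the finite-dimensional marginals via dominated convergence.
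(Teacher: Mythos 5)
The paper does not actually prove this theorem: it is quoted directly from Bettinelli~\cite{bettinelli2010scaling}, so there is no internal proof to compare against. That said, your sketch is the standard proof of this fact and, as far as I can tell, is essentially the argument given in the cited reference: factor the conditioned walk's finite-dimensional law via the Markov property into a ratio of local probabilities, apply the uniform local CLT (Lemma~\ref{sumiid}) to each of the $k+2$ factors, use the second bound of Lemma~\ref{sumiid} for domination, and conclude finite-dimensional convergence toward the bridge density; then get tightness from Lemma~\ref{sumiid-2} with any $q>2$ via Kolmogorov's criterion, and combine with Prokhorov.

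Two small points worth cleaning up if you were to write this out in full. First, be careful with the lattice bookkeeping: you must choose $y_j^{(n)} \in i_j^{(n)}a + h\mathbb Z$ so that each increment $y_j^{(n)} - y_{j-1}^{(n)}$ and $\tau_n - y_k^{(n)}$ lies in the correct coset for Lemma~\ref{sumiid} to apply, and you also need $\tau_n\in\rho_n a + h\mathbb Z$ (which is a hypothesis since the conditioning event is assumed to have positive probability). Second, your dismissal of the length component of $d_{\mathcal H}$ is a bit quick: the relevant correction is controlled by the modulus of continuity of $B_{(n)}$ near the endpoint (as the paper itself does in its proofs of Lemmas~\ref{convergeofmotzkinpathex} and~\ref{convergenceofcontour}), not merely by the difference $|\tau_n/(\eta\sqrt n)-\tau|$. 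Neither issue changes the structure of the argument.
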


\section{Convergence of the parameters in  the decomposition}
\label{section5}

For  all ${n\geq  1}$,  consider  a random  pair  $(z_n,T_n)$ that  is
uniformly          distributed          over          the          set
$[\![1,3]\!]\times\mathcal{T}_{r,s,b}(n)$.  Let   $(r_n,R_n)$  be  the
image      of      $(z_n,T_n)$      by      the      bijection      of
Lemma~\ref{bij:decomposition}. Let  $k_n\in [\![0,9]\!]$ be  such that
$R_n\in \mathcal R^{k_n}(n)$. We have $r_n\in [\![1,3]\!]$ if
$k_n=0$  (i.e. $T_n$  is a  square) and  $r_n\in [\![1,2]\!]$  otherwise
(i.e. $T_n$  is hexagonal). In  what follows,  we  need  some rather
heavy additional notation, and the cases $k_n=0$ and $k_n>0$  have
to be treated  slightly differently, even though  the general approach
is parallel between both.

If    $k_n=0$,    let   $(\rho^1_n,\ldots,\rho^4_n)\in    \mathbb    N^{4}$,
$(\tau^1_n,\ldots,\tau^4_n)\in           (\mathbb           N^*)^{4}$,
$(\sigma^1_n,\sigma^2_n)\in               \mathbb              N^{2}$,
$((F^{1}_{n},\ell^{1}_{n}), \ldots ,(F^{4}_{n},\ell^{4}_{n}))\in \mathcal F^{\rho^1_n}_{\tau^1_n}\times
\cdots       \times      \mathcal       F^{\rho^4_n}_{\tau^4_n}$      and
\smash{$(M^{1}_{n},               M^{2}_{n})\in               \mathcal
  M_{\sigma^1_n}^{\gamma^1_n}\times                           \mathcal
  M_{\sigma^2_n}^{\gamma^2_n}$}          be         such          that
$R_n=((F^{1}_{n},\ell^{1}_{n}),   \ldots   ,(F^{4}_{n},\ell^{4}_{n}),M^{1}_{n},   M^{2}_{n})$   (see
Definition~\ref{def:defofR0}).      If      $k_n\neq      0$,      let
$(\rho^1_n,\ldots,\rho^6_n)\in                \mathbb                N^{6}$,
$(\tau^1_n,\ldots,\tau^6_n)\in           (\mathbb           N^*)^{6}$,
$(\gamma^1_n,\gamma^2_n,\gamma^3_n)\in         \mathbb         Z^{3}$,
$(\sigma^1_n,\sigma^2_n,\sigma^3_n)\in         \mathbb         N^{3}$,
$((F^{1}_{n},\ell^{1}_{n}), \ldots ,(F^{6}_{n},\ell^{6}_{n}))\in \mathcal F^{\rho^1_n}_{\tau^1_n}\times
\cdots         \times          \mathcal         F^{\rho^6_n}_{\tau^6_n}$,
$(M^{1}_{n},M^{2}_{n},              M^{3}_{n})\in             \mathcal
M_{\sigma^1_n}^{\gamma^1_n}\times \mathcal M_{\sigma^2_n}^{\gamma^2_n}
\times    \mathcal   M_{\sigma^3_n}^{\gamma^3_n}$    be   such    that
$R_n=((F^{1}_{n},\ell^{1}_{n}),  \ldots  ,(F^{6}_{n},\ell^{6}_{n}),M^{1}_{n},M^{2}_{n},  M^{3}_{n})$
(see Definition~\ref{def:defofR1to9}).

We define $t(k)$, for $k\in [\![0,9]\!]$, such that $t(0)=2$ and
$t(k)=3$ if $k\in [\![1,9]\!]$. For convenience again, we write $t_n$
for $t(k_n)$.  

When $k_n=0$, let $\gamma^1_n=\gamma^2_n=0$; for $k_n\in [\![0,9]\!]$ and
$i\in [\![t_n+1,2t_n]\!]$, let $\gamma^i_n=-\gamma^{i-t_n}_n$ and
$\sigma^i_n=\sigma^{i-t_n}_n$.

We  often  denote simply by $x$  the vector $(x^1,\ldots,x^{2t})$;
in particular, $\rho_n$, $\tau_n$,  $\gamma_n$, $\sigma_n$
denote
the        families        $(\rho^i_n)_{1\leq        i\leq        2t_n}$,
$(\tau^i_n)_{1\leq  i\leq 2t_n}$,  $(\gamma^i_n)_{1\leq i\leq  2t_n}$,
$(\sigma^i_n)_{1\leq  i\leq  2t_n}$,
respectively.
For $k\in [\![1,9]\!]$, let
$(c^1(k),\ldots,c^6(k))\in \{0,1\}^6$ denote the constants given by
line $k$ of Table~\ref{tab:gamma}. Moreover, let
$c^1(0)=\cdots=c^4(0)=0$.  Let $c(k)=(c^1(k),\ldots,c^{t(k)}(k))$.
For convenience, we write $c_n=c(k_n)$,
i.e. $(c^1_n,\ldots,c^{t_n}_n)= (c^1(k_n),\ldots,c^{t_n}(k_n))$.

With  these notations,  by Definitions~\ref{def:defofR0}
and~\ref{def:defofR1to9}, we have the following equality:
\begin{equation}
  \label{eq:tau}
  \tau_n=2\sigma_n+\gamma_n+c_n+1.
  \end{equation}
  Conditionally on the  vector $(k_n, \rho_n, \tau_n,\gamma_n,\sigma_n)$,
  the                 forests                and                 paths
  $F^{1}_{n},  \ldots  ,F^{2t_n}_{n},M^{1}_{n}, M^{2}_{n},M^{t_n}_{n}$  are
  independent and:

\begin{itemize}
\item for every $i \in [\![1,2t_n]\!]$, the well-labeled forest $(F^i_n,\ell^i_n)$
 is uniformly
distributed  over the set $\mathcal{F}_{\tau^i_n}^{\rho^i_n}$,
\item for  every $i  \in [\![1,t_n]\!]$, the  Motzkin path  $M^i_n$ is
  uniformly          distributed         over          the         set
  $\mathcal{M}_{\sigma^i_n}^{\gamma^i_n}$.
\end{itemize}

For every $n>0$, we define the renormalized version
$\rho_{(n)},\gamma_{(n)},\sigma_{(n)}$ by letting
$\rho_{(n)}=\frac{\rho_n}{n}$,
$\gamma_{(n)}=(\frac{9}{8n})^{1/4}\gamma_n$ and
$\sigma_{(n)}=\frac{\sigma_n}{\sqrt{2n}}$.

For $k  \in \{0,\ldots,9\}$,  we repeatedly use  two vector  spaces in
what           follows,           a          ``small           space''
$(\mathbb R_{+})^{2t(k)-1}  \times \mathbb R^{t(k)-2}  \times (\mathbb
R_{+})^{t(k)}$          and          a          ``big          space''
$(\mathbb  R_{+})^{2t(k)}  \times  \mathbb R^{2t(k)}  \times  (\mathbb
R_{+})^{2t(k)}$,  and use  the  terms ``small''  and  ``big'' in  what
follows as shortcuts for these spaces. The small space can be seen as
a subspace of the big one  by imposing the following relations between
coordinates      in     the      big      space.     Every      triple
$(\rho,\gamma,\sigma)  \in  (\mathbb R_{+})^{2t(k)-1}  \times  \mathbb
R^{t(k)-2}  \times  (\mathbb R_{+})^{t(k)}$  can  be  extended into  a
triple                                                              in
$(\mathbb  R_{+})^{2t(k)}  \times  \mathbb R^{2t(k)}  \times  (\mathbb
R_{+})^{2t(k)}$ by letting:
\begin{itemize}
\item $\rho^{2t(k)}=1-\sum_{i=1}^{2t(k)-1}\rho^i$
\item for  $i\in [\![2,2t(k)]\!]$,  $\gamma^{i}=(-1)^{i-1}\gamma^1$,
\item for $i\in [\![t(k)+1,2t(k)]\!]$, $\sigma^{i}=\sigma^{i-t(k)}$,
\end{itemize}
The  idea  is  that  combinatorial  constraints coming  from  our
previous  constructions will  impose  these relations  on the  scaling
limits:  the natural  limit  takes place  in the  big  space, but  the
degrees of  freedom correspond to  the coordinates in the  small space
and  so will    the  integration   variables  in  what  follows.  As  a
particularly useful  notation, we  several  times extend functions
from  the   small  space  to   the  big  space,  more   precisely:  if
$(\rho,  \gamma,  \sigma)\in  (\mathbb  R_{+})^{2t(k)-1}  \times  \mathbb
R^{t(k)-2}  \times (\mathbb  R_{+})^{t(k)}$ is  a point  in the  small
space                                                              and
$f : (\mathbb R_{+})^{2t(k)}  \times \mathbb R^{2t(k)} \times (\mathbb
R_{+})^{2t(k)}    \to    \mathbb    R$,     we        denote    by
$f(\rho, \gamma, \sigma)$ the value of $f$  at the point in the big space
obtained by computing the extra coordinates as above.

  Now, define  a probability
measure             $\mu$            on             the            set
$\mathcal   L=\bigcup_{k\in   [\![0,9]\!]}\left(\{k\}\times   (\mathbb{R}_{+})^{2t(k)}
  \times \mathbb{R}^{2t(k)} \times (\mathbb{R}_{+})^{2t(k)}\right)$ as
follows:  for  every  non-negative measurable  function  $\varphi$  on
$\mathcal L$, let
\begin{align*}
\mu(\varphi)=&\frac{1}{\Upsilon}\sum_{k=1}^{9}\int
_{\mathcal      X}      
 \left(  \mathbbm      1_{\rho^{6}\geq      0} \times \varphi(k,\rho,\gamma,\sigma)\times
               \right.
  \\ &\left.\prod_{i=1}^{6}\left(\frac{\sigma^i}{\sqrt{2}\,             \rho^i}\times
  \frac{2}{\sqrt{6\pi    \rho^i}}\times    e^{\frac{-(\sigma^i)^2}{3\rho^i}}
  \times
     \left(\frac{4}{3}\right)^{c^i(k)+1}\right)\times
  \prod_{i=1}^{3}p_{\sigma^i}(\gamma^i)\right   )\mathrm  d   X
\end{align*}
  where  like above  $(c^1(k),\ldots,c^6(k))$ is  given by  line $k$  of
Table~\ref{tab:gamma},    where $\mathrm  d  X$  is the  Lebesgue
measure on
$$\mathcal X=(\mathbb R_+)^{5} \times
\mathbb    R    \times    (\mathbb    R_{+})^{3},$$
and where   the
renormalization constant
\begin{align*}
\label{eq:defofUpsilon}
\Upsilon=\sum_{k=1}^{9}\int
_{\mathcal      X}      
  \left(  \mathbbm      1_{\rho^{6}\geq      0} 
  \prod_{i=1}^{6}\left(\frac{\sigma^i}{\sqrt{2}\,             \rho^i}\times
  \frac{2}{\sqrt{6\pi    \rho^i}}\times    e^{\frac{-(\sigma^i)^2}{3\rho^i}}
  \times
      \left(\frac{4}{3}\right)^{c^i(k)+1}\right)\times
  \prod_{i=1}^{3}p_{\sigma^i}(\gamma^i)\right   )\mathrm  d   X
\end{align*}
is  chosen so  that  $\mu$ has  total  mass $1$.  Note  that $\mu$  is
supported on a subspace of the big  space. The goal of this section is
to prove the following convergence result:

\begin{lemma}
\label{structureofu}
The      law       $\mu_n$      of      the       random      variable
$(k_n, \rho_{(n)},  \gamma_{(n)}, \sigma_{(n)})$ converges  weakly toward
the probability measure $\mu$.
\end{lemma}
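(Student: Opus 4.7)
The plan is to establish the convergence through a sharp asymptotic analysis of the probability mass function of $(k_n, \rho_n, \gamma_n, \sigma_n)$, multiplied by the Jacobian of the rescaling, and then pass to the limit via a Riemann-sum argument combined with tightness.

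First, I would fix the type $k_n = k$ and use the bijection of Lemma~\ref{bij:decomposition}, together with the uniformity of the forests and Motzkin paths conditionally on their parameters (stated just before the lemma), to write
$$\mathbb P\bigl(k_n=k,\,\rho_n=\rho,\,\gamma_n=\gamma,\,\sigma_n=\sigma\bigr) \,=\, \frac{\iota(k)}{3\,|\mathcal T_{r,s,b}(n)|}\,\prod_{i=1}^{2t(k)}\bigl|\mathcal F^{\rho^i}_{\tau^i}\bigr|\,\prod_{j=1}^{t(k)}\bigl|\mathcal M^{\gamma^j}_{\sigma^j}\bigr|,$$
with $\iota(0)=3$, $\iota(k)=2$ for $k\in[\![1,9]\!]$, and $\tau$ determined by $(\sigma,\gamma,c(k))$ through the relations of Definitions~\ref{def:defofR0} and~\ref{def:defofR1to9}.

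The next step is the asymptotic analysis of the two cardinalities under the scaling $\rho^i_n\sim n\rho^i$, $\sigma^j_n\sim\sqrt{2n}\,\sigma^j$, $\gamma^j_n\sim (8n/9)^{1/4}\gamma^j$. Starting from Lemma~\ref{mnforest}, a Stirling expansion taken to second order (keeping the $\tau^2/\rho$ correction, which is of constant order under our scaling) yields
$$\bigl|\mathcal F^{\rho^i_n}_{\tau^i_n}\bigr| \;\sim\; \frac{\tau^i_n}{4\rho^i_n}\cdot\frac{2}{\sqrt{6\pi\rho^i_n}}\cdot\frac{4^{4\rho^i_n+\tau^i_n}}{3^{3\rho^i_n+\tau^i_n}}\cdot\exp\!\Bigl(-\frac{(\tau^i_n)^2}{24\rho^i_n}\Bigr).$$
Substituting $\tau^i_n=2\sigma^i_n+\gamma^i_n+c^i_n+1$ and the rescaling produces exactly the prefactor $\frac{\sigma^i}{\sqrt 2\,\rho^i}\cdot\frac{2}{\sqrt{6\pi\rho^i}}\cdot e^{-(\sigma^i)^2/(3\rho^i)}\cdot(4/3)^{c^i(k)+1}$ of $\mu$, times a $1/n$ factor per forest and global factors of the form $4^{\rho+\sigma}/3^{\rho+\sigma}$ that will be dealt with below. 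For the Motzkin counts, I apply Lemma~\ref{sumiid} to the step law uniform on $\{-1,0,1\}$ (mean $0$, variance $2/3$) to obtain $|\mathcal M^{\gamma^j_n}_{\sigma^j_n}|\sim 3^{\sigma^j_n}\,p_{2\sigma^j_n/3}(\gamma^j_n)$; the rescaling $\gamma^j_n=(8n/9)^{1/4}\gamma^j$ is precisely the one that makes $p_{2\sigma^j_n/3}(\gamma^j_n)\sim C\,n^{-1/4}\,p_{\sigma^j}(\gamma^j)$ with an explicit constant $C$, thereby producing the Gaussian factor $p_{\sigma^j}(\gamma^j)$ of $\mu$.

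The third step is to check that the global exponentials in $4$ and $3$ combine into an $n$-dependent constant that cancels against $|\mathcal T_{r,s,b}(n)|$. This uses the identities $\sum_i\rho^i_n+\sum_j\sigma^j_n=n-(t(k)-1)$ and $\sum_i\tau^i_n=4\sum_j\sigma^j_n+2t(k)+2\mathbbm 1_{k\neq 0}$; the second one relies on $\sum_{i=1}^6 c^i(k)=2$ for every $k\in[\![1,9]\!]$, which is verified line by line on Table~\ref{tab:gamma}. After multiplying by the Jacobian $n^{2t(k)-1}\,(2n)^{t(k)/2}\,(8n/9)^{(t(k)-2)/4}$ of the rescaling, the density of $\mu_n$ on the small space converges pointwise to the density of $\mu$ for each $k\in[\![1,9]\!]$. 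The square case $k=0$ is negligible: the power of $n$ left over after combining Jacobian and probability equals $(t(k)-3)/2$, which is $0$ for $t=3$ but $-1/2$ for $t=2$, so $\mathbb P(k_n=0)\to 0$, in agreement with $\mu$ being supported on types $1$ through $9$.

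Finally, I would upgrade the pointwise convergence of densities to weak convergence of measures. Tightness of $(\rho_{(n)})$ is automatic since $\rho^i_{(n)}\in[0,1]$, while tightness of $(\sigma_{(n)},\gamma_{(n)})$ follows from the exponential Gaussian factors $e^{-(\sigma^i)^2/(3\rho^i)}$ and $e^{-(\gamma^j)^2/(2\sigma^j)}$ in the limit density. Together with uniformity of the Stirling and local-CLT asymptotics on compact subsets of the open small space, a standard Riemann-sum-to-integral argument then yields $\mu_n(\varphi)\to\mu(\varphi)$ for every bounded continuous $\varphi$. The main obstacle is controlling the Stirling asymptotic near the boundary where some $\rho^i_n$ is small: this is handled by combining the exponential concentration of the limit density (which forces $\rho^i$ away from $0$) with a crude uniform domination of $|\mathcal F^{\rho^i_n}_{\tau^i_n}|$ by its leading exponential $4^{4\rho^i_n+\tau^i_n}/3^{3\rho^i_n+\tau^i_n}$ coming from Lemma~\ref{cyclelemma}, so that dominated convergence applies to the Riemann sum.
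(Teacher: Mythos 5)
Your proposal follows essentially the same route as the paper's proof: write the exact probability mass function of $(k_n,\rho_n,\gamma_n,\sigma_n)$ via Lemmas~\ref{bij:decomposition} and~\ref{mnforest} and the uniform Motzkin identity~(\ref{eq:motzkin}), expand the binomial count by Stirling (the paper's Lemma~\ref{stirling}), apply the local CLT (Lemma~\ref{sumiid}) to the Motzkin endpoint probability, multiply by the Jacobian of the rescaling, observe that the leftover power of $n$ equals $n^{(t(k)-3)/2}$ so the square case $k=0$ vanishes, and then pass the Riemann sum to the integral by dominated convergence; this is exactly what the paper does. The one imprecise spot is the domination step: bounding $|\mathcal F^{\rho^i_n}_{\tau^i_n}|$ only by its leading exponential $4^{4\rho^i_n+\tau^i_n}/3^{3\rho^i_n+\tau^i_n}$ kills the $(256/27)^n$ prefactor but does not provide the polynomial decay in $\rho^i$ nor the Gaussian decay in $\sigma^i,\gamma^i$ that make the dominating function integrable over the small space, so as written this would not justify dominated convergence; the paper instead retains the factors $\sqrt{n}(27/256)^{\lfloor n\rho^i\rfloor}\binom{4\lfloor n\rho^i\rfloor}{\lfloor n\rho^i\rfloor}\le C/\sqrt{\rho^i}$, a uniform bound on the finite-product ratio by $e^{-(\sigma^i)^2/(24\rho^i)}$, and the local-CLT tail bound $\le C_1\sigma^{-1/2}(1+(\gamma)^2/\sigma)^{-1}$, which together give an integrable envelope. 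Apart from this bookkeeping gap in the sketch, your argument is the paper's argument.
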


We  say that  a random,  infinite  Motzkin path  $(M_i)_{i\geq 0}$  is
\emph{uniform} if its steps  are independent and uniformly distributed
in $\{-1,0,1\}$ (which means that for every $\sigma>0$, the restricted
path $(M_i)_{0\leq i\leq\sigma}$ is uniformly distributed among Motzkin
paths of length  $\sigma$). There is a relation  between Motzkin paths
with prescribed final value and uniform Motzkin paths:
\begin{equation}
  \label{eq:motzkin}
  |\mathcal M_{\sigma}^{\gamma}|=3^{\sigma}\mathbb{P}(M_{\sigma}=\gamma).\end{equation}
Consider $n \geq 1$ and $k\in [\![0,9]\!]$. 
Let $\mathcal C_n^k \subseteq
\mathbb N^{2t(k)}\times(\mathbb N^*)^{2t(k)}\times\mathbb Z^{2t(k)}\times
\mathbb N^{2t(k)}$ be the set of t-uples
$(\rho,\tau,\gamma,\sigma)$ 
satisfying the following conditions:
\begin{align}
\label{eq:condition1}
  \text{when $k=0$:} \quad & \gamma^1=\gamma^2=0 ;\\
\label{eq:condition2}
  \text{when $k\neq 0$:} \quad & \gamma^1+\gamma^2, \gamma^2+\gamma^3,
                                 \text{~are given by
                                 line $k$ of Table~\ref{tab:gamma}}; \\
\label{eq:condition3}
  \text{for      $i\in       [\![t(k)+1,2t(k)]\!]$:}      \quad      &
                                                                       \gamma^i=-\gamma^{i-t(k)} \text{~and~}
                                                                       \sigma^i=\sigma^{i-t(k)};
  \\
\label{eq:condition4}
                           &  n=\rho^1+\cdots  +\rho^{2t(k)}+\sigma^1+\cdots
                             +\sigma^{t(k)}+t(k)-1 \\
\label{eq:condition5}
                           & \tau=2\sigma +\gamma+c(k)+1 \\
\label{eq:condition6}
  \text{for $i\in [\![1,2t(k)]\!]$:} \quad & |\gamma^i|\leq \sigma^i.
\end{align}

For 
$(k,\rho,\tau,\gamma,\sigma)\in [\![0,9]\!]\times \mathcal C_n^k$, we define:
$$\mathbb{P}_n(k,\rho,\tau,\gamma,\sigma)=\mathbb{P}\big(\left(k_n,\rho_n,\tau_n,\gamma_n,\sigma_n)=(k,\rho,\tau,\gamma,\sigma\right)\big)$$

Then, by          Lemmas~\ref{mnforest}         and~\ref{bij:decomposition},
Definitions~\ref{def:defofR0}                and~\ref{def:defofR1to9},
Equations~(\ref{eq:tau})  and~(\ref{eq:motzkin}),  we   have:
\begin{equation}
\label{eq:proba}
\begin{split}
  \mathbb{P}_n(k,\rho,\tau,\gamma,\sigma)
  &=\frac{2+\mathbbm{1}_{k=0}}{3|\mathcal{T}_{r,s,b}(n)|}\prod _{i=1}^{2t(k)}|\mathcal F^{\rho^i}_{\tau^i}| \prod
  _{i=1}^{t(k)}|\mathcal M^{\gamma^i}_{\sigma^i}|
  \\
  &=\frac{2+\mathbbm{1}_{k=0}}{3|\mathcal{T}_{r,s,b}(n)|}\times
  \prod_{i=1}^{2t(k)}\frac{\tau^i}{4\rho^i+\tau^i}\binom{4\rho^i+\tau^i}{\rho^i}\times
  \prod_{i=1}^{t(k)} 3^{\sigma^i}\mathbb{P}(M_{\sigma^i}=\gamma^i)
\end{split}
\end{equation} 
where $(M_i)_{i\geq 0}$  is a uniform Motzkin path. To  get a grasp on
this quantity, we now collect a few combinatorial results.

\begin{lemma}
\label{binomial} For $a,b\in \mathbb N$, we have
$$ \begin{pmatrix}
  4a+b\\
  a
\end{pmatrix}=
\begin{pmatrix}
  4a\\
  a
\end{pmatrix}
\times               \left(\frac{4}{3}\right)^{b}               \times
\frac{\prod_{p=1}^{b}\left(1+\frac{p}{4a}\right)}{\prod_{p=1}^{b}\left(1+\frac{p}{3a}\right)}.$$
\end{lemma}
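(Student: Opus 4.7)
The plan is to write out both binomial coefficients explicitly and compute their ratio by cancelling common factorial factors. Specifically, I would start from the definitions
\[
\binom{4a+b}{a}=\frac{(4a+b)!}{a!\,(3a+b)!}, \qquad \binom{4a}{a}=\frac{(4a)!}{a!\,(3a)!},
\]
so that the $a!$ cancels in the ratio and we obtain
\[
\frac{\binom{4a+b}{a}}{\binom{4a}{a}}=\frac{(4a+b)!}{(4a)!}\cdot\frac{(3a)!}{(3a+b)!}=\frac{\prod_{p=1}^{b}(4a+p)}{\prod_{p=1}^{b}(3a+p)}.
\]

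Next I would factor out $4a$ from each term of the numerator and $3a$ from each term of the denominator, which yields
\[
\frac{\prod_{p=1}^{b}(4a+p)}{\prod_{p=1}^{b}(3a+p)}=\frac{(4a)^{b}}{(3a)^{b}}\cdot\frac{\prod_{p=1}^{b}\bigl(1+\tfrac{p}{4a}\bigr)}{\prod_{p=1}^{b}\bigl(1+\tfrac{p}{3a}\bigr)}=\left(\frac{4}{3}\right)^{b}\cdot\frac{\prod_{p=1}^{b}\bigl(1+\tfrac{p}{4a}\bigr)}{\prod_{p=1}^{b}\bigl(1+\tfrac{p}{3a}\bigr)}.
\]

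Multiplying through by $\binom{4a}{a}$ then gives the claimed identity. The statement is purely algebraic and no obstacle is expected: it is merely a convenient rewriting that will later be used to extract the $(4/3)^{c^i(k)+1}$ factor in the limiting density $\mu$ through Stirling-type asymptotics, where the two remaining products tend to $1$ when $a\to\infty$ with $b$ fixed.
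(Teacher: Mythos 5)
Your proof is correct and is essentially identical to the paper's: both write out the binomial coefficients, cancel the $a!$, reduce the ratio of factorials to $\prod_{p=1}^{b}(4a+p)/\prod_{p=1}^{b}(3a+p)$, and factor out $4a$ and $3a$ to extract the $(4/3)^b$. Nothing further to add.
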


\begin{proof}
A straightforward computation shows that
\begin{equation*}
\begin{split}
\frac{\begin{pmatrix}
4a+b\\ 
a
\end{pmatrix}}{\begin{pmatrix}
4a\\ 
a
\end{pmatrix}}
&=\frac{(4a+b)!}{(a)!\,(3a+b)!}\times
\frac{(a)!\,(3a)!}{(4a)!}=\frac{(4a+b)!}{(4a)!}\times \frac{(3a)!}{(3a+b)!}=\frac{\prod_{p=1}^{b}(4a+p)}{\prod_{p=1}^{b}(3a+p)}\\
&=\frac{(4a)^{b}\prod_{p=1}^{b}\left(1+\frac{p}{4a}\right)}{(3a)^{b}\prod_{p=1}^{b}\left(1+\frac{p}{3a}\right)}=\left(\frac{4}{3}\right)^{b} \frac{\prod_{p=1}^{b}\left(1+\frac{p}{4a}\right)}{\prod_{p=1}^{b}\left(1+\frac{p}{3a}\right)}
\qedhere
\end{split}
\end{equation*}
\end{proof}

By Lemma~\ref{binomial},  the
binomial term in (\ref{eq:proba}) can be rewritten:
\begin{equation}
\label{eq:probabinom}
\begin{split}
\binom{4\rho^i+\tau^i}{\rho^i}
&=\binom{4\rho^i+2\sigma^i+\gamma^i+c^i(k)+1}{\rho^i}\\
&=\binom{4\rho^i+2\sigma^i+\gamma^i}{\rho^i}\prod_{p=1}^{c^i(k)+1}\frac{4\rho^i+2\sigma^i+\gamma^i+p}{3\rho^i+2\sigma^i+\gamma^i+p}\\
&=\binom{4\rho^i}{\rho^i}\left(\frac{4}{3}\right)^{2\sigma^i+\gamma^i}\frac{\prod_{p=1}^{2\sigma^i+\gamma^i}\left(1+\frac{p}{4\rho^i}\right)}{\prod_{p=1}^{2\sigma^i+\gamma^i}\left(1+\frac{p}{3\rho^i}\right)}\prod_{p=1}^{c^i(k)+1}\frac{4\rho^i+2\sigma^i+\gamma^i+p}{3\rho^i+2\sigma^i+\gamma^i+p}.
\end{split}
\end{equation}

For $x \in\mathbb R$, let $\left  \lfloor x \right \rfloor$ denote the
largest integer that is bounded above by $x$.

\begin{lemma}
\label{stirling}
For
$(\rho,\gamma,\sigma)       \in      \mathbb{R}^*_+\times\mathbb{R}\times
\mathbb{R}^*_+$,       as       $n$       goes       to       infinity
\[\frac{\prod_{p=1}^{2 \left \lfloor  \sqrt{2n}\sigma \right \rfloor +
      \left     \lfloor      (8n/9)^{1/4}\gamma     \right     \rfloor
    }\left(1+\frac{p}{4\left         \lfloor         n\rho         \right
        \rfloor}\right)}{\prod_{p=1}^{2 \left  \lfloor \sqrt{2n}\sigma
      \right \rfloor + \left \lfloor (8n/9)^{1/4}\gamma \right \rfloor
    }\left(1+\frac{p}{3\left  \lfloor n\rho  \right \rfloor}\right)}  \to
  e^{\frac{-\sigma^2}{3\rho}}.\]
\end{lemma}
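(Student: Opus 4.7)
The plan is to take logarithms and reduce the identity to a Taylor expansion estimate. Write
\[
N_n = 2\lfloor \sqrt{2n}\sigma\rfloor + \lfloor (8n/9)^{1/4}\gamma\rfloor,
\qquad A_n = 4\lfloor n\rho\rfloor,\qquad B_n = 3\lfloor n\rho\rfloor,
\]
so that $A_n \sim 4n\rho$, $B_n \sim 3n\rho$ and $N_n = 2\sqrt{2n}\,\sigma + O(n^{1/4})$, in particular $N_n/A_n = O(n^{-1/2}) \to 0$. The logarithm of the ratio to be studied is
\[
L_n := \sum_{p=1}^{N_n}\bigl[\log(1+p/A_n)-\log(1+p/B_n)\bigr],
\]
and the statement is equivalent to $L_n \to -\sigma^2/(3\rho)$.

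Since $p/A_n \leq N_n/A_n \to 0$, one can apply the Taylor expansion $\log(1+x) = x - x^2/2 + O(x^3)$ uniformly for all $p \in [\![1,N_n]\!]$, yielding
\[
L_n = \Bigl(\tfrac{1}{A_n}-\tfrac{1}{B_n}\Bigr)\sum_{p=1}^{N_n}p - \tfrac{1}{2}\Bigl(\tfrac{1}{A_n^2}-\tfrac{1}{B_n^2}\Bigr)\sum_{p=1}^{N_n}p^2 + O\!\left(\tfrac{N_n^4}{A_n^3}\right).
\]
The first step is then to identify the leading term. One has $\tfrac{1}{A_n}-\tfrac{1}{B_n} = -\tfrac{1}{12\lfloor n\rho\rfloor}\sim -\tfrac{1}{12n\rho}$ and $\sum_{p=1}^{N_n} p = N_n(N_n+1)/2 \sim 4n\sigma^2$ (using $N_n^2 \sim 8n\sigma^2$, the $O(n^{1/4})$ correction in $N_n$ contributing only $o(n)$). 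Their product converges to $-\sigma^2/(3\rho)$, which is precisely the claimed limit.

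The remaining steps consist in checking that the two correction terms vanish. For the second term, $\tfrac{1}{A_n^2}-\tfrac{1}{B_n^2} = O(n^{-2})$ and $\sum p^2 = O(N_n^3) = O(n^{3/2})$, so the contribution is $O(n^{-1/2})$. For the error term, $N_n^4/A_n^3 = O(n^{-1})$. Both are $o(1)$, so $L_n \to -\sigma^2/(3\rho)$ and exponentiating gives the lemma. There is no real obstacle here — the whole argument is bookkeeping — but the one thing to be careful about is verifying that the $\lfloor(8n/9)^{1/4}\gamma\rfloor$ piece in $N_n$, which can be negative when $\gamma<0$, does not spoil the estimate: it only shifts $N_n$ by $O(n^{1/4})$, and correspondingly $N_n^2/2$ by $O(n^{3/4})$, which after multiplication by $1/A_n-1/B_n = O(n^{-1})$ leaves an error of $O(n^{-1/4})=o(1)$.
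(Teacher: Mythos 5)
Your proof is correct, and it takes a genuinely different (and more elementary) route than the paper. The paper first invokes its Lemma~\ref{binomial} to rewrite the ratio of products as
$\binom{4\lfloor n\rho\rfloor+N_n}{\lfloor n\rho\rfloor}\,(3/4)^{N_n}\big/\binom{4\lfloor n\rho\rfloor}{\lfloor n\rho\rfloor}$
with $N_n=2\lfloor\sqrt{2n}\sigma\rfloor+\lfloor(8n/9)^{1/4}\gamma\rfloor$, applies Stirling's formula to the factorials, and then evaluates four separate exponential limits (the displayed estimates \eqref{eb}--\eqref{ebbbb}) whose combination yields $e^{-\sigma^2/(3\rho)}$. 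You instead take the logarithm of the ratio directly and Taylor-expand $\log(1+p/A_n)-\log(1+p/B_n)$ term by term, which isolates the single source of the limit — the cancellation $1/A_n-1/B_n=-1/(12\lfloor n\rho\rfloor)$ against $\sum_{p\le N_n}p\sim 4n\sigma^2$ — and disposes of everything else as $O(n^{-1/2})+O(n^{-1})$. Your bookkeeping is right: the uniform validity of the expansion follows from $N_n/A_n=O(n^{-1/2})$, the second-order and remainder terms are $O(n^{-1/2})$ and $O(n^{-1})$ respectively, and the $O(n^{1/4})$ perturbation of $N_n$ coming from the $\gamma$-term (including when $\gamma<0$) only contributes $O(n^{-1/4})$ to the leading term. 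Your argument avoids Stirling's formula entirely and is shorter; the paper's detour through the binomial identity has the side benefit of reusing machinery (Lemma~\ref{binomial}) that is needed elsewhere in the proof of Lemma~\ref{structureofu}, but as a proof of this particular statement your version is self-contained and complete.
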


\begin{proof}

  For $n\geq 1$, let $a_n$ denote  the left-hand term in the statement
  of the lemma. By Lemma~\ref{binomial}, we have:
$$a_n=\begin{pmatrix}
4\left \lfloor n\rho \right \rfloor+2\left \lfloor \sqrt{2n}\sigma \right \rfloor+\left \lfloor (8n/9)^{1/4}\gamma \right \rfloor \\ 
\left \lfloor n\rho \right \rfloor
\end{pmatrix} \times \left(\frac{3}{4}\right)^{2\left \lfloor \sqrt{2n}\sigma \right \rfloor+\left \lfloor (8n/9)^{1/4}\gamma \right \rfloor}/\begin{pmatrix}
4\left \lfloor n\rho \right \rfloor\\ 
\left \lfloor n\rho \right \rfloor
\end{pmatrix}$$
$$
=\frac{(4\left \lfloor n\rho \right \rfloor+2\left \lfloor \sqrt{2n}\sigma \right \rfloor+\left \lfloor (8n/9)^{1/4}\gamma \right \rfloor)!\,(3\left \lfloor n\rho \right \rfloor)!}{(3\left \lfloor n\rho \right \rfloor+2\left \lfloor \sqrt{2n}\sigma \right \rfloor+\left \lfloor (8n/9)^{1/4}\gamma \right \rfloor)!\,(4\left \lfloor n\rho \right \rfloor)!}\times \left(\frac{3}{4}\right)^{2\left \lfloor \sqrt{2n}\sigma \right \rfloor+\left \lfloor (8n/9)^{1/4}\gamma \right \rfloor}.$$
Using the Stirling formula, we obtain: 
\begin{align*}
a_n \sim &\frac{(4\left \lfloor n\rho \right \rfloor+2\left
      \lfloor \sqrt{2n}\sigma \right \rfloor+\left \lfloor
      (8n/9)^{1/4}\gamma \right \rfloor)^{4\left \lfloor n\rho \right
      \rfloor+2\left \lfloor \sqrt{2n}\sigma \right \rfloor+\left
        \lfloor (8n/9)^{1/4}\gamma \right \rfloor}(3\left \lfloor n\rho
    \right \rfloor)^{3\left \lfloor n\rho \right \rfloor}}{(3\left
      \lfloor n\rho \right \rfloor+2\left \lfloor \sqrt{2n}\sigma \right
    \rfloor+\left \lfloor (8n/9)^{1/4}\gamma \right \rfloor)^{3\left
        \lfloor n\rho \right \rfloor+2\left \lfloor \sqrt{2n}\sigma
      \right \rfloor+\left \lfloor (8n/9)^{1/4}\gamma \right
      \rfloor}(4\left \lfloor n\rho \right \rfloor)^{4\left \lfloor n\rho
      \right \rfloor}}\\
&  \times \left(\frac{3}{4}\right)^{2\left \lfloor \sqrt{2n}\sigma \right \rfloor+\left \lfloor (8n/9)^{1/4}\gamma \right \rfloor} 
\\ \sim &\frac{\left(\frac{4\left \lfloor n\rho \right
  \rfloor+2\left \lfloor \sqrt{2n}\sigma \right \rfloor+\left \lfloor
  (8n/9)^{1/4}\gamma \right \rfloor}{4\left \lfloor n\rho \right
  \rfloor}\right)^{4\left \lfloor n\rho \right
  \rfloor}}{\left(\frac{3\left \lfloor n\rho \right \rfloor+2\left
  \lfloor \sqrt{2n}\sigma \right \rfloor+\left \lfloor
  (8n/9)^{1/4}\gamma \right \rfloor}{3\left \lfloor n\rho \right
  \rfloor}\right)^{3\left \lfloor n\rho \right \rfloor}}\\ &\times \frac{(4\left \lfloor n\rho \right \rfloor+2\left \lfloor \sqrt{2n}\sigma \right \rfloor+\left \lfloor (8n/9)^{1/4}\gamma \right \rfloor)^{2\left \lfloor \sqrt{2n}\sigma \right \rfloor+\left \lfloor (8n/9)^{1/4}\gamma \right \rfloor}}{\left(3\left \lfloor n\rho \right \rfloor+2\left \lfloor \sqrt{2n}\sigma \right \rfloor+\left \lfloor (8n/9)^{1/4}\gamma \right \rfloor\right)^{2\left \lfloor \sqrt{2n}\sigma \right \rfloor+\left \lfloor (8n/9)^{1/4}\gamma \right \rfloor}}\\
&\times \left(\frac{3}{4}\right)^{2\left \lfloor \sqrt{2n}\sigma \right
  \rfloor+\left \lfloor (8n/9)^{1/4}\gamma \right \rfloor}
\\ \sim &\frac{\left(1+\frac{2\left \lfloor \sqrt{2n}\sigma \right
          \rfloor+\left \lfloor (8n/9)^{1/4}\gamma \right
          \rfloor}{4\left \lfloor n\rho \right \rfloor}\right)^{4\left
          \lfloor n\rho \right \rfloor}}{\left(1+\frac{2\left \lfloor
          \sqrt{2n}\sigma \right \rfloor+\left \lfloor
          (8n/9)^{1/4}\gamma \right \rfloor}{3\left \lfloor n\rho \right
          \rfloor}\right)^{3\left \lfloor n\rho \right \rfloor}}\times
          \frac{\left(1+\frac{2\left \lfloor \sqrt{2n}\sigma \right
          \rfloor+\left \lfloor (8n/9)^{1/4}\gamma \right
          \rfloor}{4\left \lfloor n\rho \right \rfloor}\right)^{2\left
          \lfloor \sqrt{2n}\sigma \right \rfloor+\left \lfloor
          (8n/9)^{1/4}\gamma \right \rfloor}}{\left(1+\frac{2\left
          \lfloor \sqrt{2n}\sigma \right \rfloor+\left \lfloor
          (8n/9)^{1/4}\gamma \right \rfloor}{3[n\rho]}\right)^{2\left
          \lfloor \sqrt{2n}\sigma \right \rfloor+\left \lfloor
          (8n/9)^{1/4}\gamma \right \rfloor}}
\end{align*}

We have the following estimates as $n\rightarrow \infty$:
\begin{equation}
  \label{eb}
  \left(1+\frac{2\left \lfloor \sqrt{2n}\sigma \right \rfloor+\left \lfloor (8n/9)^{1/4}\gamma \right \rfloor}{4\left \lfloor n\rho \right \rfloor}\right)^{4\left \lfloor n\rho \right \rfloor}\sim e^{2\left \lfloor \sqrt{2n}\sigma \right \rfloor+\left \lfloor (8n/9)^{1/4}\gamma \right \rfloor-\sigma^2/\rho},
\end{equation}
\begin{equation}
  \label{ebb}
  \left(1+\frac{2\left \lfloor \sqrt{2n}\sigma \right \rfloor+\left \lfloor (8n/9)^{1/4}\gamma \right \rfloor}{3\left \lfloor n\rho \right \rfloor}\right)^{3\left \lfloor n\rho \right \rfloor} \sim e^{2\left \lfloor \sqrt{2n}\sigma \right \rfloor+\left \lfloor (8n/9)^{1/4}\gamma \right \rfloor-\frac{4\sigma^2}{3\rho}},
\end{equation}
    \begin{equation}
    \label{ebbb}
    \left(1+\frac{2\left \lfloor \sqrt{2n}\sigma \right \rfloor+\left \lfloor (8n/9)^{1/4}\gamma \right \rfloor}{4\left \lfloor n\rho \right \rfloor}\right)^{2\left \lfloor \sqrt{2n}\sigma \right \rfloor+\left \lfloor (8n/9)^{1/4}\gamma \right \rfloor} \rightarrow e^{2\sigma^2/\rho},
    \end{equation}
    \begin{equation}
    \label{ebbbb}
    \left(1+\frac{2\left \lfloor \sqrt{2n}\sigma \right \rfloor+\left \lfloor (8n/9)^{1/4}\gamma \right \rfloor}{3\left \lfloor n\rho \right \rfloor}\right)^{2\left \lfloor \sqrt{2n}\sigma \right \rfloor+\left \lfloor (8n/9)^{1/4}\gamma \right \rfloor}\rightarrow e^{\frac{8\sigma^2}{3\rho}}.
    \end{equation}
Combining these completes the proof.
\end{proof}

We are now ready to prove Lemma~\ref{structureofu}:

\begin{proof}[Proof of Lemma~\ref{structureofu}]
  Let  $\varphi$  be   a  bounded  continuous  function   on  the  set
  $\mathcal                L$                and                define
  $\mathbb{E}_n(\varphi)=\mathbb{E}\left(\varphi(k_n,        \rho_{(n)},
    \gamma_{(n)},  \sigma_{(n)})\right)$.   We  need  to   prove  that
  $\mathbb{E}_n(\varphi)$ converges toward  $\mu(\varphi)$ as $n$ goes
  to infinity.
  
  Let $n\in \mathbb{N}$. For a given value of $k$, we identify
  $(\rho,\gamma, \sigma) \in \mathbb N^{2t(k)-1}\times\mathbb
  Z^{t(k)-2}\times \mathbb N^{t(k)}$ with an element
  $\mathrm p(\rho,\gamma, \sigma)=(\rho,\tau,\gamma,\sigma)$ of
  $(\mathbb N^{2t(k)-1} \times \mathbb Z)\times(\mathbb
  N^*)^{2t(k)}\times  \mathbb  Z^{2t(k)}\times \mathbb  N^{2t(k)}$  by
  setting the missing coordinates so  that they satisfy the conditions
  (\ref{eq:condition1}) to (\ref{eq:condition5}).
  Note that  $\rho^{2t(k)}$ depends not  only on $n$ and  the $\rho^i$
  for $i  \leq 2t(k)-1$  but also  on the  $\sigma^i$. Note  also that
  $\mathrm p(\rho,\gamma,\sigma)$  is an  element of  $\mathcal C^k_n$
  provided that  the conditions lead  to $\rho^{2t(k)}\geq 0$  and for
  any  $i\in [\![1,2t(k)]\!]$  we have  $|\gamma^i|\leq \sigma^i$.  By
  Equations~\eqref{eq:proba} and \eqref{eq:probabinom} we have
  \begin{equation*}
  \begin{split}
    \mathbb{E}_n(\varphi)&=
\sum_{k=0}^{9}
\sum_{(\rho,\tau,\gamma,\sigma)\in \mathcal C_n^k}
\left( \mathbb{P}_n(k,\rho,\tau,\gamma,\sigma)
\, \varphi\left(k,\frac{\rho}{n},\left (\frac{9}{8n}\right)^{1/4} \gamma,\frac{\sigma}{\sqrt{2n}}\right)\right)\\
&=
\sum_{k=0}^{9}\frac{2+\mathbbm{1}_{k=0}}{3|\mathcal{T}_{r,s,b}(n)|}
\sum_{(\rho,\tau,\gamma,\sigma)\in \mathcal
  C_n^k}\left (
\mathrm f(k,\rho,\gamma,
\sigma)
\times \mathrm g(k,\gamma, \sigma)
\times \mathrm h(k,\rho,\gamma,\sigma)
\right)
\end{split}
\end{equation*}
where we introduced the functions
\begin{align*}\mathrm f(k,\rho,\gamma,
\sigma)=\prod_{i=1}^{2t(k)}&\left(\left(\frac{2\sigma^i+\gamma^i+c^i(k)+1}{4\rho^i+2\sigma^i+\gamma^i+c^i(k)+1}\right)
  \binom{4\rho^i}{\rho^i}\left(\frac{4}{3}\right)^{2\sigma^i+\gamma^i}\right.\\
                                   &\left.\frac{\prod_{p=1}^{2\sigma^i+\gamma^i}\left(1+\frac{p}{4\rho^i}\right)}{\prod_{p=1}^{2\sigma^i+\gamma^i}\left(1+\frac{p}{3\rho^i}\right)}\prod_{p=1}^{c^i(k)+1}\frac{4\rho^i+2\sigma^i+\gamma^i+p}{3\rho^i+2\sigma^i+\gamma^i+p}\right),
\end{align*}
$$\mathrm g(k,\gamma,
\sigma)=\prod_{i=1}^{t(k)}3^{\sigma^i}\mathbb{P}(M_{\sigma^i}=\gamma^i),$$
$$\mathrm h(k,\rho,\gamma,\sigma)=\varphi\left(k,\frac{\rho}{n},\left
    (\frac{9}{8n}\right)^{1/4}\gamma,\frac{\sigma}{\sqrt{2n}}\right).$$

In order  to derive  the asymptotic behavior  of the  discrete objects
above, we are going to compare discrete sums to integrals. To do that,
we             need      some      more       notation.

For $k\in[\![0,9]\!]$, $n\geq 0$ and
$(\rho,\gamma, \sigma) \in (\mathbb R_+)^{2t(k)-1}\times\mathbb
R^{t(k)-2}\times (\mathbb R_+)^{t(k)}$, we define
$(\lfloor \rho \rfloor,\lfloor \gamma \rfloor,\lfloor \sigma \rfloor
)\in (\mathbb N^{2t(k)-1} \times \mathbb Z)\times \mathbb
Z^{2t(k)}\times \mathbb N^{2t(k)}$  by the following.
For every
$i \in \{1, \ldots, 2t(k)-1 \}$, let
$\lfloor \rho \rfloor ^i = \lfloor \rho^i \rfloor$.
If $k\neq 0$, let $\lfloor \gamma \rfloor ^1 = \lfloor \gamma^1 \rfloor$.
For every
$i \in \{1, \ldots, t(k) \}$, let $\lfloor \sigma \rfloor ^i = \lfloor \sigma^i \rfloor$.
Then we choose $\lfloor \rho \rfloor^{2t(k)}$, $\lfloor \gamma \rfloor ^{t(k)-1}$,
\ldots,$\lfloor \gamma \rfloor ^{2t(k)}$,
$\lfloor \sigma \rfloor ^{t(k)+1}$,
\ldots,$\lfloor \sigma \rfloor ^{2t(k)}$ so that
$\lfloor \rho\rfloor$, $\lfloor \gamma\rfloor$,
$\lfloor \sigma\rfloor$ satisfies the relation~ (\ref{eq:condition1}),
(\ref{eq:condition2}), (\ref{eq:condition3}), and
\eqref{eq:condition4}.

Note  that the set of
all    preimages   of    a    given   joint    integral   value    for
$(\lfloor \rho \rfloor, \lfloor  \sigma \rfloor, \lfloor \gamma \rfloor)$
is  a  unit cube  in  the  ``small space''.  Note  as  well that  this
definition  does   not  coincide   with  first  computing   the  extra
coordinates as before and then taking integral parts coordinatewise on
the  big space:  we  choose  this particular  definition  so that  the
constraints  on  coordinates match  better  between  the discrete  and
continuous versions.

Writing the sum  over $\mathcal C_n^k$ in the form  of an integral, we
have:
\begin{align*}
  \mathbb{E}_n(\varphi)=&\sum_{k=0}^{9}\frac{2+\mathbbm{1}_{k=0}}{3|\mathcal{T}_{r,s,b}(n)|}
  \int _{{  X}^k}
  \left(
    \mathbbm {1}_{\mathcal E^k_n}(\lfloor
    \rho  \rfloor, \lfloor  \gamma  \rfloor, \lfloor  \sigma \rfloor)
    \times 
    \mathrm  f(k, \lfloor  \rho    \rfloor, \lfloor  \gamma
      \rfloor,    \lfloor  \sigma    \rfloor)\right.
  \\ &\left. \times
       \mathrm  g(k,  \lfloor  \gamma   \rfloor,    \lfloor
       \sigma       \rfloor)
       \times   \mathrm    h(k,\lfloor  \rho
       \rfloor,\lfloor  \gamma   \rfloor,\lfloor  \sigma   \rfloor)\right)
       \mathrm d {X}^k
\end{align*}
  where  $\mathrm d {X}^k$ is the
Lebesgue                           measure                          on
${\mathcal X}^k  = (\mathbb{R}_+)^{2t(k)-1}  \times \mathbb
R^{t(k)-2} \times (\mathbb R_{+})^{t(k)}$ and

\begin{align*}
  {\mathcal E}^k_n = &
  \left\{(\rho,\gamma,\sigma)\in
  (\mathbb{R}_+)^{2t(k)-1}  \times  \mathbb R^{t(k)-2}  \times  (\mathbb
    R_{+})^{t(k)}:\right.
  \\
  &\left. \lfloor    \rho     \rfloor    ^{2t(k)}\geq     0
  \,\,\text{and}\,\,  \forall i\in  [\![1,t(k)]\!],\,\, |\gamma^i|\leq
  \sigma^i  \right\}.\end{align*}

  We  now do  a change  of variables  by setting
$\rho'=  \frac{\rho}{n}$  ,  $\gamma'=  (\frac{9}{8n})^{\frac{1}{4}}\gamma$,
$\sigma'  =   \frac{\sigma}{\sqrt{2n}}$  (but  still  write   the  new
variables as  $(\rho, \gamma, \sigma)$  below for simpler  notation). The
change of variables is linear and acts like a multiplication by $n$ on
$\rho   \in    (\mathbb{R}_+)^{2t(k)-1}$,     by    $(8n/9)^{1/4}$    on
$\gamma   \in    (\mathbb{R})^{t(k)-2}$   and   by    $\sqrt{2n}$   on
$\sigma  \in  (\mathbb{R}_+)^{t(k)}$,  so  its Jacobian  is  equal  to
$n^{2t(k)-1} ({8n}/{9})^{(t(k)-2)/4}
(\sqrt{2n})^{t(k)}$.
Therefore we
obtain:
\begin{align*}
\mathbb{E}_n(\varphi)=&
\sum_{k=0}^{9}\frac{2+\mathbbm{1}_{k=0}}{3|\mathcal{T}_{r,s,b}(n)|}
  \int _{{\mathcal  X}^k} \left(
  \mathbbm 1_{\mathcal E^k_n}\left(\lfloor
  n\rho \rfloor,\left \lfloor (8n/9)^{\frac{1}{4}}\gamma \right \rfloor, \left \lfloor \sqrt{2n}\sigma
  \right \rfloor\right)\right.
\\ &
  \left (
n^{2t(k)-1}
  \left (\frac{8n}{9}\right)^{\frac{t(k)-2}{4}}
(\sqrt{2n})^{t(k)}\right )
 \\ &\times \mathrm f\left(k,\lfloor
  n\rho \rfloor,\left \lfloor (8n/9)^{\frac{1}{4}}\gamma \right \rfloor, \left \lfloor \sqrt{2n}\sigma
\right \rfloor\right)
\\ &\times \mathrm g\left(k,\left \lfloor (8n/9)^{\frac{1}{4}}\gamma \right \rfloor,
\left \lfloor \sqrt{2n}\sigma \right \rfloor\right)
 \\ &
\left. \times \mathrm h\left(k,\lfloor
   n\rho \rfloor,\lfloor (8n/9)^{\frac{1}{4}}\gamma \rfloor,\lfloor
      \sqrt{2n}\sigma \rfloor
\right)
  \right) \mathrm d {X}^k.
\end{align*}
  
Note            that,            for            every            $k\in
[\![0,9]\!]$,   due   to   the   way   we   defined   $\lfloor   \cdot
\rfloor$, we have:
\begin{align*}
  &\prod_{i=1}^{2t(k)}
 \left(\frac{4}{3}\right)^{2\left \lfloor \sqrt{2n}\sigma^i
        \right \rfloor+\left \lfloor (8n/9)^{1/4}\gamma^i \right
        \rfloor}
      \prod_{i=1}^{t(k)} 3^{\left \lfloor \sqrt{2n}\sigma^i \right
        \rfloor}
   \\   &=\left(\frac{256}{27}\right)^{\sum_{i=1}^{t(k)}\left \lfloor
          \sqrt{2n}\sigma^i \right \rfloor}
      \\ &=\left(\frac{256}{27}\right)^{
         n-\sum_{i=1}^{2t(k)}\left \lfloor n \rho \right \rfloor^i-(t(k)-1)}.
\end{align*}

   Hence,       we        can       rewrite        $\mathbb
   E_n(\varphi)$                                                             as
   \begin{align*}
     \mathbb{E}_n(\varphi)=\sum_{k=0}^{9}&
                                           \frac{2+\mathbbm{1}_{k=0}}{3|\mathcal{T}_{r,s,b}(n)|}\,n^{\frac{t(k)-3}{2}}\left(\sqrt{2}\right)^{t(k)}\left(\frac{9}{8}\right)^{1/2}\left(\frac{256}{27}\right)^{n-t(k)+1}
                                           \int
   _{{\mathcal   X}^k}\\ &  \Bigg(\mathbbm   1_{\mathcal
     E^k_n}\left(\lfloor        n\rho        \rfloor,\left        \lfloor
       (8n/9)^{\frac{1}{4}}\gamma   \right   \rfloor,  \left   \lfloor
                                     \sqrt{2n}\sigma  \right \rfloor\right)
\\ &
h\left(k,\lfloor
  n\rho \rfloor,\lfloor (8n/9)^{\frac{1}{4}}\gamma \rfloor,\lfloor \sqrt{2n}\sigma \rfloor\right)\\ & \prod_{i=1}^{2t(k)}\left(\left(\sqrt{n}\, \frac{2\left \lfloor
     \sqrt{2n}\sigma^i \right \rfloor+\left \lfloor
     (8n/9)^{1/4}\gamma^i \right \rfloor+c^i(k)+1}{4\left \lfloor n
     \rho^i \right \rfloor+2\left \lfloor \sqrt{2n}\sigma^i \right
     \rfloor+\left \lfloor (8n/9)^{1/4}\gamma^i \right
     \rfloor+c^i(k)+1}\right)\right.\\ &\times \sqrt{n} \left(\frac{27}{256}\right)^{\left
     \lfloor n\rho \right \rfloor^i} \binom{4\left \lfloor n \rho^i
     \right \rfloor}{\left \lfloor n \rho^i \right \rfloor}
     \\ &
\times \left(\frac{\prod_{p=1}^{2\left \lfloor \sqrt{2n}\sigma^i
          \right \rfloor+\left \lfloor (8n/9)^{1/4}\gamma^i \right
          \rfloor}(1+\frac{p}{4\left \lfloor n \rho^i \right
          \rfloor})}{\prod_{p=1}^{2\left \lfloor \sqrt{2n}\sigma^i
          \right \rfloor+\left \lfloor (8n/9)^{1/4}\gamma^i \right
          \rfloor}(1+\frac{p}{3\left \lfloor n \rho^i \right
          \rfloor})}\right)\\ &\left. \times \left(\prod_{p=1}^{c^i(k)+1}\frac{4\left \lfloor n \rho^i \right \rfloor+2\left \lfloor \sqrt{2n}\sigma^i \right \rfloor+\left \lfloor (8n/9)^{1/4}\gamma^i \right \rfloor+p}{3\left \lfloor n \rho^i \right \rfloor+2\left \lfloor \sqrt{2n}\sigma^i \right \rfloor+\left \lfloor (8n/9)^{1/4}\gamma^i \right \rfloor+p}\right)\right)\\
&\left.\times \prod_{i=1}^{t(k)}(8n/9)^{1/4}\mathbb{P}\left(M_{\left
     \lfloor \sqrt{2n}\sigma^i \right \rfloor}=\left \lfloor
                                                                                                                                                                                                                                                                                                                                                                           (8n/9)^{1/4}\gamma^i \right \rfloor\right)\right)
                                                                                                                                                                                                                                                                                                                                                                           \\
                                         & \mathrm   d   X^k 
     \end{align*}

We are now going to use dominated convergence to show that every integral term appearing in $\mathbb{E}_n$ converges. We have the following:
\begin{itemize}
\item
  $         \left          \lfloor         n          \rho         \right
  \rfloor^{2t(k)}=n-\sum_{i=1}^{2t(k)-1}\left   \lfloor  n   \rho  \right
  \rfloor^i-\sum_{i=1}^{t(k)}  \left \lfloor  \sqrt{2n} \sigma  \right
  \rfloor^i-(t(k)-1)$, and therefore
$$\frac{\left           \lfloor          n           \rho          \right
  \rfloor^{2t(k)}}{n}=1-\sum_{i=1}^{2t(k)-1}\frac{\left  \lfloor  n  \rho
  \right  \rfloor^i}{n}-\sum_{i=1}^{t(k)}\frac{\left   \lfloor  \sqrt{2n}
    \sigma\right                        \rfloor^i}{n}-\frac{t(k)-1}{n}\to
1-\sum_{i=1}^{2t(k)-1}\rho^i=\rho^{2t(k)}.$$  On the  other hand,  for every
$i\in           [\![1,           t(k)]\!]$          we           have:
$\mathbbm  1_{\left\{\left|\left \lfloor  (8n/9)^{\frac{1}{4}}\gamma^i
      \right \rfloor\right|\leq \left \lfloor \sqrt{2n}\sigma^i \right
    \rfloor\right\}}    \to    \mathbbm{1}_{\{\sigma^i\geq
  0\}}$, and hence,
$$\mathbbm 1_{\mathcal E^k_n}\left(\lfloor
  n\rho \rfloor,\left \lfloor (8n/9)^{\frac{1}{4}}\gamma \right \rfloor, \left \lfloor \sqrt{2n}\sigma
\right \rfloor\right) \to \mathbbm 1_{\left\{\rho^{2t(k)}\geq 0\right\}}.$$
    \item 
$ h\left(k,\lfloor
  n\rho \rfloor,\lfloor (8n/9)^{\frac{1}{4}}\gamma \rfloor,\lfloor
  \sqrt{2n}\sigma \rfloor\right)=\varphi\left(k,\frac{\left
      \lfloor n\rho \right \rfloor}{n},\frac{\left \lfloor
      (8n/9)^{\frac{1}{4}}\gamma \right \rfloor}{
    (8n/9)^{\frac{1}{4}}}, \frac{\left \lfloor \sqrt{2n}\sigma \right
    \rfloor}{\sqrt{2n}}\right)
\to \varphi\left(k,\rho,\gamma, \sigma\right)$.

    \item 
    By Lemma~\ref{stirling}, we obtain:
     $$\frac{\prod_{p=1}^{2\left \lfloor \sqrt{2n}\sigma^i \right \rfloor+\left \lfloor (8n/9)^{1/4}\gamma^i \right \rfloor}(1+\frac{p}{4\left \lfloor n \rho^i \right \rfloor})}{\prod_{p=1}^{2\left \lfloor \sqrt{2n}\sigma^i \right \rfloor+\left \lfloor (8n/9)^{1/4}\gamma^i \right \rfloor}(1+\frac{p}{3\left \lfloor n \rho^i \right \rfloor})}\longrightarrow e^{\frac{-(\sigma^i)^2}{3\rho^i}}.$$
    
    \item
    
      By Lemma~\ref{sumiid} with $(\eta,h)=\left(\sqrt{\frac{2}{3}},1\right)$ , we
      obtain (with some simple calculus) :
    $$(8n/9)^{1/4}\,\mathbb{P}\left(M_{\left \lfloor \sqrt{2n}\sigma^i \right \rfloor}=\left \lfloor (8n/9)^{1/4}\gamma^i \right \rfloor\right)\to p_{\sigma^i}(\gamma^i).$$

  \item If $\rho^i>0$, then
   $$\sqrt{n}\left(\frac{27}{256}\right)^{\left   \lfloor  n\rho   \right
      \rfloor^i}\binom{4\left  \lfloor  n  \rho^i  \right  \rfloor}{\left
        \lfloor      n     \rho^i      \right     \rfloor}\to
    \frac{2}{\sqrt{6\pi \rho^i}}.$$
    
    \item 
    $\prod_{p=1}^{c^i(k)+1}\frac{4\left \lfloor n \rho^i \right \rfloor+2\left \lfloor \sqrt{2n}\sigma^i\right \rfloor+\left \lfloor (8n/9)^{1/4}\gamma^i \right \rfloor+p}{3\left \lfloor n \rho^i \right \rfloor+2\left \lfloor \sqrt{2n}\sigma^i \right \rfloor+\left \lfloor (8n/9)^{1/4}\gamma^i \right \rfloor+p}\longrightarrow \left(\frac{4}{3}\right)^{c^i(k)+1}$.

    \item 
    
    $\sqrt{n}\, \frac{2\left \lfloor \sqrt{2n}\sigma^i \right \rfloor+\left \lfloor (8n/9)^{1/4}\gamma^i \right \rfloor+c^i(k)+1}{4\left \lfloor n \rho^i \right \rfloor+2\left \lfloor \sqrt{2n}\sigma^i \right \rfloor+\left \lfloor (8n/9)^{1/4}\gamma^i \right \rfloor+c^i(k)+1}\longrightarrow \frac{\sigma^i}{\sqrt{2}\rho^i}$.

\end{itemize}


It remains to prove domination of  the summand, which  follow from
the  following bounds:

\begin{itemize}

    \item  $\left|\varphi\left(k,\frac{\left \lfloor n\rho \right \rfloor}{n},\frac{\left \lfloor (8n/9)^{\frac{1}{4}}\gamma \right \rfloor}{ (8n/9)^{\frac{1}{4}}}, \frac{\left \lfloor \sqrt{2n}\sigma \right \rfloor}{\sqrt{2n}}\right)\right| \leq \left \| \varphi \right \|_{\infty}$.
    
    \item
If $\lfloor n\rho^i\rfloor=0$, then $\sqrt{n\rho^i}<1$. Hence,
$$\sqrt{n\rho^i}\left(\frac{27}{256}\right)^{\lfloor n\rho^i\rfloor}\binom{4\left \lfloor n \rho^i \right \rfloor}{\left \lfloor n \rho^i \right \rfloor}\leq 1.$$
If  on  the other  hand  $\lfloor  n\rho^i\rfloor>0$, by  using  Stirling
formula, there exists  a constant $c$ do not depend  on $n, \rho^i$ such
that:
$$\sqrt{n}\left(\frac{27}{256}\right)^{\lfloor n\rho^i\rfloor}\binom{4\left \lfloor n \rho^i \right \rfloor}{\left \lfloor n \rho^i \right \rfloor}\leq  \frac{c}{\sqrt{ \rho^i}}.$$
Let $C=\max \{1, c\}$. For all $n\geq 1$ and $0<\rho^i< 1$, we obtain:
$$\sqrt{n}\left(\frac{27}{256}\right)^{\lfloor n\rho^i\rfloor}\binom{4\left \lfloor n \rho^i \right \rfloor}{\left \lfloor n \rho^i \right \rfloor}\leq  \frac{C}{\sqrt{ \rho^i}}.$$

\item                                                            Since
  $\left|\left \lfloor  (8n/9)^{1/4}\gamma^i \right \rfloor\right|\leq
  \left  \lfloor \sqrt{2n}\sigma^i  \right \rfloor$,  $c^i(k)\in \{0,1\}$
  and $\left \lfloor \sqrt{2n}\sigma^i \right  \rfloor \geq 1$, we get
  $c^i+1\leq 2\leq  2\left \lfloor \sqrt{2n}\sigma^i  \right \rfloor$.
  By                using                the                inequality
  $\left \lfloor x  \right \rfloor^{-1}\leq 2/x$ for  all $x\geq1$ and
  $|\left \lfloor x \right \rfloor|\leq |x|+1$, then we obtain:
 $$\left|\sqrt{n}\frac{2\left \lfloor \sqrt{2n}\sigma^i\right \rfloor+\left \lfloor (8n/9)^{1/4}\gamma^i\right \rfloor+c^i(k)+1}{4\left \lfloor n\rho^i\right \rfloor+2\left \lfloor \sqrt{2n}\sigma^i \right \rfloor+\left \lfloor (8n/9)^{1/4}\gamma^i \right \rfloor+c^i(k)+1}\right|\leq \frac{5\sqrt{2}\sigma^i}{2\rho^i}.$$

\item 

$\prod_{p=1}^{c^i(k)+1}\frac{4\left \lfloor n \rho^i \right \rfloor+2\left \lfloor \sqrt{2n}\sigma^i \right \rfloor+\left \lfloor (8n/9)^{1/4}\gamma^i \right \rfloor+p}{3\left \lfloor n \rho^i \right \rfloor+2\left \lfloor \sqrt{2n}\sigma^i \right \rfloor+\left \lfloor (8n/9)^{1/4}\gamma^i \right \rfloor+p}\leq (\frac{4}{3})^{c^i(k)+1}$.

\item

By using {Lemma~\ref{sumiid}} with $r=2$, there exists $C_1\mathbb R_+$, such that 
$$(8n/9)^{1/4}\mathbb{P}\left(M_{\left \lfloor \sqrt{2n}\sigma^i \right \rfloor}=\left \lfloor (8n/9)^{1/4}\gamma^i \right \rfloor\right) \leq \frac{C_1}{\sqrt{\sigma^i}}\left(1+\frac{(\gamma^i)^2}{\sigma^i}\right)^{-1}. $$

\item                 For any $p\in \mathbb N$,       we          have
  $
  \frac{1+\frac{p+1}{4\left       \lfloor      n       \rho^i      \right
      \rfloor}}{1+\frac{p+1}{3\left  \lfloor n  \rho^i \right  \rfloor}}\leq\frac{1+\frac{p}{4\left       \lfloor       n      \rho^i       \right
      \rfloor}}{1+\frac{p}{3\left \lfloor  n \rho^i  \right \rfloor}}$
  and therefore, since
  $\left|\left \lfloor  (8n/9)^{1/4}\gamma^i \right \rfloor\right|\leq
  \left  \lfloor \sqrt{2n}\sigma^i  \right \rfloor$
$$
\frac{\prod_{p=1}^{2\left \lfloor \sqrt{2n}\sigma^i \right
    \rfloor+\left \lfloor (8n/9)^{1/4}\gamma^i \right
    \rfloor}(1+\frac{p}{4\left \lfloor n \rho^i \right
    \rfloor})}{\prod_{p=1}^{2\left \lfloor \sqrt{2n}\sigma^i \right
    \rfloor+\left \lfloor (8n/9)^{1/4}\gamma^i \right
    \rfloor}(1+\frac{p}{3\left \lfloor n \rho^i \right \rfloor})}
\leq 
\frac{\prod_{p=\left \lfloor \frac{\sqrt{2n}\sigma^i}{2} \right
    \rfloor}^{\left \lfloor \sqrt{2n}\sigma^i \right
    \rfloor}(1+\frac{p}{4\left \lfloor n \rho^i \right
    \rfloor})}{\prod_{p=\left \lfloor \frac{\sqrt{2n}\sigma^i}{2} \right
    \rfloor}^{\left \lfloor \sqrt{2n}\sigma^i \right
    \rfloor}
    (1+\frac{p}{3\left \lfloor n \rho^i \right \rfloor})}
$$$$\leq \left(\frac{1+\frac{\left \lfloor \frac{\sqrt{2n}\sigma^i}{2} \right \rfloor}{4\left \lfloor n \rho^i \right \rfloor}}{1+\frac{\left \lfloor \frac{\sqrt{2n}\sigma^i}{2} \right \rfloor}{3\left \lfloor n \rho^i \right \rfloor}} \right)^{\left \lfloor \frac{\sqrt{2n}\sigma^i}{2} \right \rfloor}\leq e^{\frac{-(\sigma^i)^2}{{24}\rho^i}}.$$

\end{itemize}

By  the dominated  convergence theorem,  the integral  in the  term of
index $k$ in $\mathbb{E}_n(\varphi)$ converges to
$$\int _{\mathcal{X}^k}$$$$\left ( \mathbbm 1_{\left\{\rho^{2t(k)}\geq 0\right\}}\varphi(k,\rho,\gamma,\sigma)\times
  \prod_{i=1}^{2t(k)}\left(\frac{\sigma^i}{\sqrt{2}\rho^i}\times
  \frac{2}{\sqrt{6\pi    \rho^i}}\times    e^{\frac{-(\sigma^i)^2}{3\rho^i}}
  \times                     \left(\frac{4}{3}\right)^{c^i(k)+1}\right)\times
  \prod_{i=1}^{t(k)}p_{\sigma^i}(\gamma^i)\right )\mathrm d X^k.$$ The
term $n^{\frac{t(k)-3}{2}}$ is equal to $n^{-1/2}$ if $k=0$ and $1$ if
$k\in [\![1,9]\!]$ (so in the end the case $k=0$ will not contribute).

Choosing $\varphi=1$ provides the estimate
$$|\mathcal{T}_{r,s,b}(n)|\sim 2\, \Upsilon
\left(\frac{256}{27}\right)^{n-2}.$$ Finally, we obtain the
convergence of $\mathbb{E}_n(\varphi)$ to
$$ \frac{1}{\Upsilon}\sum_{k=1}^{9}\int
_{\mathcal      X^k}       $$$$\left( \mathbbm      1_{\{\rho^{6}\geq      0\}}\times\varphi(k,\rho,\gamma,\sigma)\times
  \prod_{i=1}^{6}\frac{\sigma^i}{\sqrt{2}\,             \rho^i}\times
  \frac{2}{\sqrt{6\pi    \rho^i}}\times    e^{\frac{-(\sigma^i)^2}{3\rho^i}}
  \times                     \left(\frac{4}{3}\right)^{c^i(k)+1}\times
  \prod_{i=1}^{3}p_{\sigma^i}(\gamma^i)\right  )\mathrm   d  X^k,$$
For $k\in[\![1,9]\!]$, we have $\mathcal X^k=\mathcal X$
which completes the proof of the lemma.
\end{proof}

An immediate consequence of Lemma~\ref{structureofu} is the
following:

\begin{corollary}
\label{cor:nosymmetric}
There exists two constants $c,c'\in \mathbb R^*_+$ such that for all $n\geq 1$,
$$c\leq n\times \mathbb P \left(\exists i, i'\in  [\![1,2t_n]\!]\, :
  \, \rho^i_n=\rho^{i'}_n\right)\leq c',$$
$$c\leq \sqrt{n}\times \mathbb{P}\left(\exists i, i'\in  [\![1,t_n]\!]\,: \, \sigma^i_n=\sigma^{i'}_n\right)\leq c'.$$
\end{corollary}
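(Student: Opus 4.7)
The strategy is to refine the Riemann-sum analysis carried out in the proof of Lemma~\ref{structureofu}, now over the codimension-one set of configurations on which the relevant coincidence holds. Fix $k \in [\![0,9]\!]$ and two distinct indices $i,i' \in [\![1,2t(k)]\!]$, and write
$$\mathbb P\bigl(k_n=k,\ \rho^i_n=\rho^{i'}_n\bigr)=\sum_{(\rho,\tau,\gamma,\sigma)\in\mathcal C^k_n,\ \rho^i=\rho^{i'}}\mathbb P_n(k,\rho,\tau,\gamma,\sigma),$$
plug in the exact expression~\eqref{eq:proba}, and repeat the sequence of factorizations (using Lemma~\ref{binomial} and Lemma~\ref{stirling}), the change of variables $\rho \mapsto \rho/n$, $\gamma \mapsto (9/(8n))^{1/4}\gamma$, $\sigma \mapsto \sigma/\sqrt{2n}$, and the dominated-convergence argument of the proof of Lemma~\ref{structureofu}. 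The only difference is that the extra constraint $\rho^i=\rho^{i'}$ is a single affine equation on integer coordinates that rescale by a factor $n$; after the change of variables it cuts out a codimension-one affine subspace of $\mathcal X^k$, and the Jacobian loses exactly one factor of $n$ compared to the unconstrained integral. Consequently $n\cdot\mathbb P(k_n=k,\ \rho^i_n=\rho^{i'}_n)$ converges, as $n \to \infty$, to a finite integral of the density defining $\mu$ over that subspace. Summing over the finitely many $k$ and pairs $(i,i')$ gives the upper bound $c'$.

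For the second inequality one proceeds identically, replacing the constraint $\rho^i=\rho^{i'}$ by $\sigma^i=\sigma^{i'}$. The coordinates $\sigma$ rescale at the $\sqrt{n}$ scale, so the Jacobian now loses a factor $\sqrt{n}$ instead of $n$, and $\sqrt n\cdot\mathbb P(k_n=k,\ \sigma^i_n=\sigma^{i'}_n)$ converges to a finite integral over a codimension-one slice of $\mathcal X^k$.

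For the lower bounds, it suffices to exhibit one value of $k\in[\![1,9]\!]$ and one pair $(i,i')$ for which the limiting integral is strictly positive. This is immediate from the explicit form of the density in $\mu$: the factors $p_{\sigma^i}(\gamma^i)$, $\sigma^i/(\sqrt 2\rho^i)$, $2/\sqrt{6\pi\rho^i}$, $e^{-(\sigma^i)^2/(3\rho^i)}$ and $(4/3)^{c^i(k)+1}$ are all strictly positive throughout the interior of $\mathcal X^k$, so integrating over any codimension-one slice $\{\rho^i=\rho^{i'}\}$ or $\{\sigma^i=\sigma^{i'}\}$ yields a strictly positive value.

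The main obstacle is essentially bookkeeping: one must carefully track how the equality constraint interacts with the various normalizations and the choice of ``small-space'' versus ``big-space'' coordinates. In particular, one needs to verify that the constraint $\rho^i=\rho^{i'}$ (resp.\ $\sigma^i=\sigma^{i'}$) remains a genuine codimension-one condition in the small space (rather than being implied by other constraints, as might happen for some symmetry-related pairs depending on $k$), which is a routine inspection of Definitions~\ref{def:defofR0} and~\ref{def:defofR1to9}.
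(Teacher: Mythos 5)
Your overall strategy---refining the Riemann-sum/local-limit analysis carried out in the proof of Lemma~\ref{structureofu} to a codimension-one slice---is the natural reading of the paper's assertion that the corollary is ``an immediate consequence'' of that lemma, and the paper gives no explicit proof to compare against. Note, however, that weak convergence alone (the \emph{statement} of Lemma~\ref{structureofu}) says nothing about the probability of the measure-zero event $\{\rho^i=\rho^{i'}\}$, so you are right that one must go back into the proof; the lower bounds then do follow by Fatou as you say.

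The gap is in the upper bounds, where you assert without verification that ``$n\cdot\mathbb P(k_n=k,\ \rho^i_n=\rho^{i'}_n)$ converges to a \emph{finite} integral'' over the slice. This is not automatic, because the $\mu$-density is unbounded near the boundary of the simplex $\{\sum\rho^j=1\}$. For the pair $j$ and $j+t(k)$ the coordinate $\sigma^j$ is shared; integrating it out of the two factors
$\frac{\sigma^j}{(\rho^j)^{3/2}}\frac{\sigma^j}{(\rho^{j+t(k)})^{3/2}}\,e^{-(\sigma^j)^2(\frac{1}{3\rho^j}+\frac{1}{3\rho^{j+t(k)}})}$
yields, up to a constant, $(\rho^j+\rho^{j+t(k)})^{-3/2}$. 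This is integrable over the simplex, but it is \emph{not} integrable when restricted to the diagonal slice $\{\rho^j=\rho^{j+t(k)}\}$, where it becomes $(2\rho^j)^{-3/2}$ near $\rho^j=0$. Thus for these ``paired'' choices of $(i,i')$ the dominated-convergence argument (whose dominating function inherits the same $(\rho^j+\rho^{j+t(k)})^{-3/2}$ singularity) does not close, and the limit you invoke does not exist as a finite number. A matching back-of-the-envelope estimate on the discrete side---$\mathbb{P}(\rho^j_n=\rho^{j+t(k)}_n=0)$ corresponds to the unicellular map having trivial forests on both sides of one maximal chain, which forces $\sigma^j_n=O(1)$ and contributes at order $n^{-1/2}$---confirms that the diagonal corner is not negligible at the $1/n$ scale. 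So the upper bound for the $\rho$-coincidences does not follow from your argument as written, and some additional idea (or a restriction of the pairs considered, or a different normalization) is needed there.

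For the unpaired pairs the singularity is codimension two within the slice and the integral is finite, and for the $\sigma$-coincidences the joint density of $\sigma_{(n)}$ is bounded, so those parts of your argument are sound. The caveat you raise at the very end, about the constraint possibly failing to be genuinely codimension-one, is real but secondary; the boundary integrability above is the point where the argument actually needs work.
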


In the proof  of Lemma~\ref{structureofu} we compute  an asymptotic of
$\mathcal{T}_{r,s,b}(n)$; by  Theorem~\ref{them:bijectionbenjamin}, we
obtain  a reformulation  of the  asymptotic of  the number  of rooted
essentially simple triangulations:

\begin{corollary}
\label{cor:assymptotic}
For $n\geq 1$, the set  $\mathcal{G}(n)$ of essentially simple toroidal
triangulations  on $n$  vertices  that are  rooted at  a  corner of  a
maximal triangle satisfies:
$$|\mathcal{G}(n)|\sim {2\, \Upsilon}\left(\frac{256}{27}\right)^{n-2},$$
where $\Upsilon$ is the constant defined earlier. 
\end{corollary}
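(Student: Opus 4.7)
The plan is to combine two ingredients that are already in place: the bijection of Theorem~\ref{them:bijectionbenjamin}, which yields the identity $|\mathcal G(n)| = |\mathcal T_{r,s,b}(n)|$, and the asymptotic estimate for $|\mathcal T_{r,s,b}(n)|$ that was extracted inside the proof of Lemma~\ref{structureofu}. So the corollary should reduce to a one-line observation rather than require new work.

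More precisely, I would first recall from the proof of Lemma~\ref{structureofu} the expression
\[
\mathbb E_n(\varphi)=\sum_{k=0}^{9}\frac{2+\mathbbm 1_{k=0}}{3|\mathcal T_{r,s,b}(n)|}\,n^{\frac{t(k)-3}{2}}(\sqrt 2)^{t(k)}\Bigl(\tfrac{9}{8}\Bigr)^{1/2}\Bigl(\tfrac{256}{27}\Bigr)^{n-t(k)+1}\!\!\int_{\mathcal X^k}\!(\cdots)\,\mathrm dX^k,
\]
which is the basic ``counting identity'' underlying the convergence in distribution. Specializing to $\varphi\equiv 1$ forces $\mathbb E_n(\varphi)=1$, and only the terms with $k\in\llbracket 1,9\rrbracket$ contribute in the limit (the $k=0$ term carries the extra prefactor $n^{-1/2}$ and vanishes). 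The dominated convergence argument already carried out in the proof of Lemma~\ref{structureofu} then gives
\[
1=\frac{2}{3|\mathcal T_{r,s,b}(n)|}\Bigl(\tfrac{256}{27}\Bigr)^{n-2}\bigl(3\Upsilon+o(1)\bigr),
\]
so that $|\mathcal T_{r,s,b}(n)|\sim 2\Upsilon\,(256/27)^{n-2}$.

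Finally I would apply Theorem~\ref{them:bijectionbenjamin} to conclude $|\mathcal G(n)|=|\mathcal T_{r,s,b}(n)|\sim 2\Upsilon\,(256/27)^{n-2}$. Strictly speaking there is no obstacle here: all the work is in Lemma~\ref{structureofu}, and the corollary is purely a matter of reading off the value $\varphi=1$ from that proof and quoting the bijection. The only point deserving a brief justification is that the $k=0$ contribution is indeed negligible, which is immediate from the factor $n^{(t(0)-3)/2}=n^{-1/2}$ while the $k\ge 1$ terms come with factor $n^0=1$.
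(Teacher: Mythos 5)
Your proposal is correct and is exactly the paper's own argument: the authors obtain $|\mathcal T_{r,s,b}(n)|\sim 2\Upsilon(256/27)^{n-2}$ by specializing $\varphi\equiv 1$ in the proof of Lemma~\ref{structureofu} (where the $k=0$ term is negligible due to the $n^{-1/2}$ prefactor), and then invoke the bijection of Theorem~\ref{them:bijectionbenjamin} to transfer the asymptotic to $|\mathcal G(n)|$.
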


It is  possible that  the formula 
defining $\Upsilon$  could be
amenable to an  explicit computation, but we did not  manage to find a
simple  way  to  do  it.

\section{Convergence of uniformly random Motzkin paths}
\label{sec:motzkin}

  Consider
$(\sigma_n)\in (\mathbb{N}^*)^{\mathbb{N}}, (\gamma_n)\in
\mathbb{Z}^{\mathbb{N}}$ such that, there exist $\sigma \in \mathbb
R^*_+$ and $\gamma \in \mathbb R$ satisfying :
$$\frac{\sigma_n}{\sqrt {2n}}\longrightarrow \sigma \text{ and }
\left(\frac{9}{8n}\right)^{1/4}\gamma_n\rightarrow \gamma.$$
Let $M_n$ be a uniformly random element of $\mathcal{M}_{\sigma_n}^{\gamma_n}$ and let $M_n$ also denote its piecewise linear interpolation which is therefore a random element of $\mathcal{H}$. Let $M_{(n)}$ denote the
rescaled process defined as:
\begin{equation*}
\label{eq:definitionofW_n}
M_{(n)}=\left(\left(\frac{9}{8n}\right)^{1/4}M_n(\sqrt{2n}s)\right)_{0\leq       s       \leq
 \frac{\sigma_n}{\sqrt {2n}}}
\end{equation*}

By Theorem~\ref{convergencemotzkinbridge} with
$(\eta,h)=(\sqrt{\frac{2}{3}},1)$, we have the following:

\begin{lemma}
\label{convergeofmotzkinpath}
The process $M_{(n)}$ converges in law toward the 
Brownian bridge $B^{0\rightarrow \gamma}_{[0,\sigma]}$ in the space
$(\mathcal H,d_{\mathcal H})$, when $n$ goes to infinity.
\end{lemma}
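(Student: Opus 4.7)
The plan is to reduce the statement to a direct invocation of Theorem~\ref{convergencemotzkinbridge} once the parameters are correctly identified. First I would describe a uniform Motzkin path as a random walk with i.i.d.\ increments: for such a walk $(M_i)_{i\geq 0}$, the steps $X_i = M_i - M_{i-1}$ are independent and uniform on $\{-1,0,1\}$, hence centered with variance $\eta^2 = 2/3$; since they take values in $\{-1,0,1\}$ the maximal span is $h=1$, and all moments are finite, so any $r_0$ can be used. Under this identification, the uniform law on $\mathcal{M}_{\sigma_n}^{\gamma_n}$ coincides with the law of the partial sums $(X_1 + \cdots + X_k)_{0\leq k\leq \sigma_n}$ conditioned on $X_1+\cdots+X_{\sigma_n}=\gamma_n$, which is exactly the setup of Theorem~\ref{convergencemotzkinbridge}.

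Next I would reparametrize: let $m = \lfloor \sqrt{2n}\rfloor$ play the role of ``$n$'' in Theorem~\ref{convergencemotzkinbridge}. A short calculation gives
\[
\eta\,\sqrt{m} \;\sim\; \sqrt{2/3}\,(2n)^{1/4} \;=\; (8n/9)^{1/4},
\]
so that the hypothesis $\sigma_n/\sqrt{2n}\to\sigma$ becomes $\sigma_n/m \to \sigma$ (with the sequence $\sigma_n$ playing the role of $\rho_m$), and the hypothesis $(9/8n)^{1/4}\gamma_n\to\gamma$ becomes $\gamma_n/(\eta\sqrt{m})\to\gamma$ (with $\gamma_n$ playing the role of $\tau_m$). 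Both limiting constants are positive (resp.\ real), and the convergence probability for the conditioning event is positive for all $n$ large since $|\gamma_n|\leq \sigma_n$ eventually. Applying Theorem~\ref{convergencemotzkinbridge} then yields the convergence in law of the rescaled process $s\mapsto M_n(ms)/(\eta\sqrt{m})$, defined on $[0,\sigma_n/m]$, toward $B^{0\rightarrow\gamma}_{[0,\sigma]}$ in $(\mathcal{H},d_{\mathcal{H}})$.

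Finally I would compare this process with $M_{(n)}$. Since $m/\sqrt{2n}\to 1$ and $\eta\sqrt{m}/(8n/9)^{1/4}\to 1$, the two rescalings differ only by multiplicative factors tending to $1$ in both time and space. A standard comparison argument, using the uniform H\"older-type control afforded by Lemma~\ref{sumiid-2}, shows that the $d_{\mathcal{H}}$-distance between the two processes tends to $0$ in probability, and so the limit in law is the same. The hardest part, if one can call it so, is this final bookkeeping around the non-integer parameter $\sqrt{2n}$; it introduces no genuine probabilistic difficulty, and the lemma follows.
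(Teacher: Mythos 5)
Your proposal is correct and follows essentially the same route as the paper, which proves this lemma by a one-line invocation of Theorem~\ref{convergencemotzkinbridge} with $(\eta,h)=(\sqrt{2/3},1)$; your identification $(8n/9)^{1/4}=\eta\,(2n)^{1/4}$ and the remark about the harmless non-integrality of $\sqrt{2n}$ simply make explicit the bookkeeping the paper leaves implicit.
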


Recall from Section~\ref{sec:motz} that $\widetilde{M_n}$ is the extension of $M_n$ and let $\widetilde{M_n}$ also denote its piecewise linear interpolation. When $2\sigma_n+\gamma_n< 2\sqrt{2n}\sigma$, we
assume that $\widetilde{M_n}$ is extended to take value $\gamma_n$ on
$[2\sigma_n+\gamma_n, 2\sqrt{2n}\sigma]$.  Then we define
the rescaled versions:

\begin{equation*}
\label{eq:definitionofW_n}
\widetilde{M_{(n)}}=\left(\left(\frac{9}{8n}\right)^{1/4}\widetilde{M_n}(\sqrt{2n}s)\right)_{0\leq       s       \leq
  \max\left(\frac{2\sigma_n+\gamma_n}{\sqrt{2n}},\,2\sigma\right)}
\end{equation*}
\begin{lemma}
\label{convergeofmotzkinpathex} 
The process $\widetilde{M_{(n)}}$ converges in law toward the Brownian
bridge $B^{0\rightarrow \gamma}_{[0,2\sigma]}$ in the space
$(\mathcal H,d_{\mathcal H})$, when $n$ goes to infinity.
\end{lemma}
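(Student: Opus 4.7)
The plan is to deduce this convergence from Lemma~\ref{convergeofmotzkinpath} via a random change of time, combined with the continuous mapping theorem. The crucial combinatorial fact is the landmark identity $\widetilde{M_n}(2i+M_n(i))=M_n(i)$ for every $i\in[\![0,\sigma_n]\!]$: at the positions $j=2i+M_n(i)$, the extended path coincides with $M_n$, and between two consecutive such landmarks $\widetilde{M_n}$ visits at most two intermediate values, each differing from $M_n(i)$ by at most $2$ in absolute value.

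First, I would define the (random) time-change map $\Phi_n(u)=I_n(\lfloor\sqrt{2n}u\rfloor)/\sqrt{2n}$, where $I_n(j)$ is the unique $i$ with $2i+M_n(i)\le j<2(i+1)+M_n(i+1)$. The key quantitative estimate is that
\[
\sup_{0\le u\le 2\sigma+o(1)}\left|\Phi_n(u)-\tfrac{u}{2}\right|
=O(n^{-1/4})\cdot\|M_{(n)}\|_{\infty},
\]
which tends to $0$ in probability because $M_{(n)}$ is tight (by Lemma~\ref{convergeofmotzkinpath}) while the prefactor $(8n/9)^{1/4}/\sqrt{2n}$ is $O(n^{-1/4})$. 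Combined with the landmark identity and the bound on between-landmark fluctuations, this yields
\[
\sup_{0\le u\le 2\sigma+o(1)}\bigl|\widetilde{M_{(n)}}(u)-M_{(n)}(\Phi_n(u))\bigr|
\le 2\,(9/(8n))^{1/4}=O(n^{-1/4}).
\]

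Using the uniform continuity of $M_{(n)}$ on compact intervals, which follows from its tightness in $(\mathcal H,d_{\mathcal H})$, the two previous estimates together give that $\sup_{u}|\widetilde{M_{(n)}}(u)-M_{(n)}(u/2)|\to 0$ in probability. Applying the continuous mapping theorem to the continuous rescaling $f\in C([0,\sigma],\mathbb R)\mapsto f(\cdot/2)\in C([0,2\sigma],\mathbb R)$, together with Lemma~\ref{convergeofmotzkinpath}, yields convergence of $\widetilde{M_{(n)}}$ in distribution in $(\mathcal H,d_{\mathcal H})$ to the announced Brownian bridge; the cosmetic issue raised by the $\max$ in the definition of the domain of $\widetilde{M_{(n)}}$ is handled by noting that $(2\sigma_n+\gamma_n)/\sqrt{2n}\to 2\sigma$, so the two choices of interval differ only by a vanishing amount which is absorbed in the term $|\sigma(f)-\sigma(g)|$ of $d_{\mathcal H}$.

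The main technical obstacle will be making the uniform control of the random time change $\Phi_n$ fully rigorous, since $\Phi_n$ is path-dependent and couples the two processes. This is standard but delicate: it should be addressed via joint tightness of $(M_{(n)},\widetilde{M_{(n)}})$ followed by a Skorokhod representation argument, which passes from convergence in distribution to almost-sure uniform convergence along subsequences and allows one to take the limit inside suprema.
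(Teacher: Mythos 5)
Your proposal is correct and takes a genuinely different, arguably cleaner, route than the paper's. Both arguments rest on the same two ingredients: the landmark identity $\widetilde{M_n}(2i+M_n(i))=M_n(i)$ and a high-probability H\"older (equicontinuity) control on $M_{(n)}$ coming from Lemma~\ref{convergeofmotzkinpath}. The paper then proceeds in the classical two-step way: it first transfers the H\"older bound from $M_{(n)}$ to $\widetilde{M_{(n)}}$ to obtain tightness of the extension, and then proves finite-dimensional convergence by evaluating the landmark identity at each fixed time, truncating the random time shift on a high-probability event so that it is bounded, and identifying the limit via the law-equality $B^{0\to\gamma}_{[0,2\sigma]}(t)\overset{(d)}{=}B^{0\to\gamma}_{[0,\sigma]}(t/2)$; Prokhorov then closes the argument. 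Your approach instead shows directly that $\sup_u|\widetilde{M_{(n)}}(u)-M_{(n)}(u/2)|\to 0$ in probability, combining the $O(n^{-1/4})$ between-landmark bound with the $O(n^{-1/4})\|M_{(n)}\|_\infty$ estimate on the time change and the equicontinuity of $M_{(n)}$, and then concludes with continuous mapping plus Slutsky. This bypasses the fdd/tightness split and the truncation device entirely, which is a real simplification. One small remark: your concern about the path-dependence of $\Phi_n$ and the need for Skorokhod representation is unfounded. The quantity $\Phi_n$ is a deterministic functional of $M_n$, so the estimates $\sup|\Phi_n(u)-u/2|\leq c\, n^{-1/4}\|M_{(n)}\|_\infty$ and the between-landmark bound hold pathwise; combined with the equicontinuity event (which has high probability uniformly in $n$ by tightness), this gives convergence in probability of the supremum without any representation theorem, and Slutsky's theorem finishes the job.
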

\begin{proof}

Let $t\in [\![0, \sigma_n ]\!]$. By the construction of $\widetilde{M_n}$, we have

$$ M_n(t)=\widetilde{M_n}(2t+M_n(t))$$

Let $t,s$ be distinct element of $[\![0, 2\sigma_n
+\gamma_n]\!]$. Note that there exist $t_1, s_1$ distinct element of
$[\![0, \sigma_n ]\!]$ such that
 $$|t-(2t_1+M_n(t_1))|\leq 2\,\,\, \text{and}\,\,\, |s-(2s_1+M_n(s_1))|\leq 2$$

Therefore, we obtain
\begin{equation*}
\begin{split}
&\left|\widetilde{M_n}(t)-\widetilde{M_n}(s)\right|\\=&
\left|\widetilde{M_n}(t)-\widetilde{M_n}\left(2t_1+M_n(t_1)\right)+\widetilde{M_n}\left(2t_1+M_n(t_1)\right)-\widetilde{M_n}\left(2s_1+M_n(s_1)\right)\right.\\
&\left.+\widetilde{M_n}\left(2s_1+M_n(s_1)\right)-\widetilde{M_n}(s)\right| \\
\leq& \left|\widetilde{M_n}(t)-\widetilde{M_n}\left(2t_1+M_n(t_1)\right)\right|+\left|\widetilde{M_n}\left(2t_1+M_n(t_1)\right)-\widetilde{M_n}\left(2s_1+M_n(s_1)\right)\right|\\
&+\left|\widetilde{M_n}\left(2s_1+M_n(s_1)\right)-\widetilde{M_n}(s)\right|\\
\leq& 4+\left|\widetilde{M_n}\left(2t_1+M_n(t_1)\right)-\widetilde{M_n}\left(2s_1+M_n(s_1)\right)\right|\\
\leq &4+\left|M_n(t_1)-M_n(s_1)\right|
\end{split}
\end{equation*}

The convergence of $M_{(n)}$ by
 Lemma~\ref{convergeofmotzkinpath}
 implies that there exists $\alpha   <1/2$ such that
 \begin{equation}
   \label{eq:tightm(n)}
 \forall   \epsilon  >0   \quad  \exists
  C \quad \forall n\quad \mathbb{P}(\|M_{(n)}\|_\alpha \leq
  C)> 1-\epsilon.
 \end{equation}
  
Consider $\epsilon >0$. Let $C$ be such that~\eqref{eq:tightm(n)} is satisfied.

Conditioned on $\|M_{(n)}\|_\alpha \leq  C$, we have
\begin{equation}
\label{eq:tightofextension}
\begin{split}
\left|\widetilde{M_n}(t)-\widetilde{M_n}(s)\right|
\leq 4+C
\left(\frac{8n}{9}\right)^{1/4}
\left|\frac{t_1}{\sqrt{2n}}-\frac{s_1}{\sqrt{2n}}\right|^\alpha
\end{split}
\end{equation}

Since $\alpha <1/2$, there exists a constant $C_1$ which do not depend on $t_1$ and $s_1$ such that:
\begin{equation}
\label{eq:tightofextension1}
4\leq C_1 \left(\frac{8n}{9}\right)^{1/4}\left|\frac{t_1}{\sqrt{2n}}-\frac{s_1}{\sqrt{2n}}\right|^\alpha 
\end{equation}
By using \eqref{eq:tightofextension} and \eqref{eq:tightofextension1}, there exists a constant $C_2$ such that:
$$
\left|\widetilde{M_n}(t)-\widetilde{M_n}(s)\right|\leq C_2 \left(\frac{8n}{9}\right)^{1/4}\left|\frac{t_1}{\sqrt{2n}}-\frac{s_1}{\sqrt{2n}}\right|^\alpha 
$$

Note that $|t_1-s_1|\leq |t-s|+4\leq 5|t-s|$. So there exist a constant $C_3$,
such that:
$$\left|\widetilde{M_{(n)}}\left(\frac{t}{\sqrt{2n}}\right)-\widetilde{M_{(n)}}\left(\frac{s}{\sqrt{2n}}\right)\right|\leq C_3\left|\frac{t}{\sqrt{2n}}-\frac{s}{\sqrt{2n}}\right|^\alpha.$$

This inequality is satisfied for
$0\leq x < y \leq \frac{2\sigma_n+\tau_n}{\sqrt{2n}}$ such that
$2nx, 2ny\in \mathbb N$. It is also satisfied for all
$0\leq x < y \leq \frac{2\sigma_n+\tau_n}{\sqrt{2n}}$ by linear
interpolation. So we have:
$$ \forall n\quad \mathbb{P}(\|\widetilde{M_{(n)}}\|_\alpha \leq
C_3)> 1-\epsilon.$$

Therefore the family of laws of $\left(\widetilde{M_{(n)}}\right)_{n\geq 1}$ is tight in the space of probability measures on $\mathcal{H}$.

  Let $0\leq t <2\sigma$ and $\epsilon >0$.
    Since $\frac{2\sigma_n+\gamma_n}{\sqrt{2n}}$ converge toward $2\sigma$,
    there
  exists $N$ such that $t\leq \min_{n\geq N}{\frac{2\sigma_n+\gamma_n}{\sqrt{2n}}}$. Note that there exists $0\leq s <\sigma$ such that 
  $$\left|\lfloor\sqrt{2n}t\rfloor - \left(2\lfloor\sqrt{2n}s\rfloor+ M_n\left(\lfloor\sqrt{2n}s\rfloor\right)\right)\right|\leq 2.$$
 Therefore we obtain:
  $$\left|\widetilde{M_n}\left(\lfloor\sqrt{2n}t\rfloor\right) - \widetilde{M_n}\left(2\lfloor\sqrt{2n}s\rfloor+ M_n\left(\lfloor\sqrt{2n}s\rfloor\right)\right)\right|\leq 2.$$

Since $\widetilde{M_n}\left(2\lfloor\sqrt{2n}s\rfloor+
  M_n\left(\lfloor\sqrt{2n}s\rfloor\right)\right)=M_n\left(\lfloor
  \sqrt{2n}s\rfloor\right)$ and $\lfloor
\sqrt{2n}s\rfloor=\frac{1}{2}\left(\lfloor
  \sqrt{2n}t\rfloor-M_n\left(\lfloor
    \sqrt{2n}s\rfloor\right)\right)+e$, with $e=O(1)$. We then obtain:

$$\widetilde{M_n}\left(\lfloor\sqrt{2n}t\rfloor\right) =M_n\left[\frac{1}{2}\left(\lfloor \sqrt{2n}t\rfloor-M_n\left(\lfloor \sqrt{2n}s\rfloor\right)\right)+e\right].$$
Since the family of laws of $(M_{(n)})_{n\geq 1}$ is tight,  there exists  a constant  $c_1$ such
  that
  \begin{equation}
    \label{equation2345coucou}
    \inf_{n\geq N}\mathbb{P}\left(\sup_{k\in[\![0, \sigma_n]\!]}\left | M_n(k) \right |<c_1n^{1/4}\right)\geq 1-\epsilon.
  \end{equation}

  Let $\mathcal E_n$ the event:
  $$\left \{ \sup_{k\in[\![0, \sigma_n]\!]} \left | M_n(k) \right |<c_1n^{1/4} \right \}.$$
  Now we  define  a   random   variable  $Y_n$   as   follows:
$$Y_n=M_n\left[\frac{1}{2}\left(\lfloor \sqrt{2n}t\rfloor-M_n\left(\lfloor \sqrt{2n}s\rfloor\right)\mathbbm 1_{\mathcal E_n}\right)+e\right] $$

    By Lemma~\ref{convergeofmotzkinpath}, we have
    $\left(\left(\frac{9}{8n}\right)^{1/4} Y_n\right)_{n\geq N}$ converge toward
    $B^{0\rightarrow \gamma}_{[0,\sigma]}(t/2)$ when $n$ goes to
    infinity. Let $f$ be a bounded continuous function from $\mathbb
    R$ to $\mathbb R$.  Thus by (\ref{equation2345coucou}), there exists $n_0\geq N$ such
    that for all $n\geq n_0$:

  \begin{equation*}\begin{split}
  &    \left | \mathbb{E}[f(\widetilde{M_{(n)}}(t))]
    -\mathbb{E}\left[f\left(
B^{0\rightarrow \gamma}_{[0,\sigma]}(t/2)   
          \right)\right]\right    |
\\&\leq
     \left | \mathbb{E}[f(\widetilde{M_{(n)}}(t))]
        -\mathbb{E}\left[f\left(\left(\frac{9}{8n}\right)^{1/4} Y_n\right)\right]\right    |+
       \left | \mathbb{E}\left[f\left(\left(\frac{9}{8n}\right)^{1/4} Y_n\right)\right]
         -\mathbb{E}\left[f\left(
B^{0\rightarrow \gamma}_{[0,\sigma]}(t/2)   
          \right)\right]\right    |
      \\
      &\leq 2\,\mathbb{E}[1-\mathbbm 1_{\mathcal E_n}]\,\left    \|f    \right
    \|_{\infty}+\epsilon.
      \\
      &\leq
    (2\left    \|f    \right
    \|_{\infty}+1)\epsilon.
  \end{split}
  \end{equation*}

      This implies that $\left(\mathbb{E}[f(\widetilde{M_{(n)}}(t))]\right)_{n\geq N}$
  converge toward $\mathbb{E}\left[f\left(B^{0\rightarrow \gamma}_{[0,\sigma]}(t/2)
        \right)\right]$.

        We now prove the finite dimensional convergence of
        $\widetilde{M_{(n)}}$. Let $k\geq 1$ and consider
        $0\leq t_1< t_2<...< t_k < 2 \sigma$.  Let $N$ such that
        $t_{k}\leq \min_{n\geq N}{\frac{2\sigma_n+\gamma_n}{\sqrt{2n}}}$.  By
        above arguments, for $1\leq i\leq k$, we have
        $(\widetilde{M_{(n)}}(t_i))_{n\geq N}$ converge in law toward
        $B^{0\rightarrow \gamma}_{[0,\sigma]}(t_i/2)$\\        
        It remains to
          deal with the point $2\sigma$.  
\begin{align*}
  \left |\widetilde{M_{(n)}}\left (2\sigma\right )-\gamma  \right|&
  =\left |\widetilde{M_{(n)}}\left (2\sigma\wedge \frac{2\sigma_n+\gamma_n}{\sqrt{2n}}\right)-\gamma  \right|\\
  &=\left |\widetilde{M_{(n)}}\left (2\sigma\wedge \frac{2\sigma_n+\gamma_n}{\sqrt{2n}}\right)-\widetilde{M_{(n)}}\left (\frac{2\sigma_n+\gamma_n}{\sqrt{2n}}\right)+\widetilde{M_{(n)}}\left (\frac{2\sigma_n+\gamma_n}{\sqrt{2n}}\right)-\gamma \right|\\
    &\leq \left |\widetilde{M_{(n)}}\left (2\sigma\wedge \frac{2\sigma_n+\gamma_n}{\sqrt{2n}}\right)-\widetilde{M_{(n)}}\left (\frac{2\sigma_n+\gamma_n}{\sqrt{2n}}\right)\right|+\left |\gamma_n-\gamma \right| \\
\end{align*}
Consider $\epsilon>0$. Since the family of laws of $\widetilde{M_{(n)}}$ is tight, there exists $\alpha$ and $C$ such that
for all $n$:
$\mathbb{P}\left(\|  \widetilde{M_{(n)}}   \|_\alpha   \leq
  C\right)> 1-\epsilon$. Condition on the event $\{\|  \widetilde{M_{(n)}}   \|_\alpha
\leq C\}$,
  we have
\begin{align*}\left |\widetilde{M_{(n)}}\left (2\sigma\wedge
    \frac{2\sigma_n+\gamma_n}{2n}\right)-\widetilde{M_{(n)}}\left
    (\frac{2\sigma_n+\gamma_n}{2n}\right)\right|&\leq
C\,\left|2\sigma\wedge
\frac{2\sigma_n+\gamma_n}{\sqrt{2n}} - \frac{2\sigma_n+\gamma_n}{\sqrt{2n}}\right|^\alpha\\
&\leq
C\,\left|2\sigma - \frac{2\sigma_n+\gamma_n}{\sqrt{2n}}\right|^\alpha
\end{align*}
Since  $\frac{2\sigma_n+\gamma_n}{\sqrt{2n}}\rightarrow 2\sigma$ and $\gamma_n \rightarrow
\gamma$, for $n$ large enough, we have:
\begin{align*}
  \left |\widetilde{M_{(n)}}\left (2\sigma\right )-\gamma  \right|&
                                                \leq \epsilon
\end{align*}
Therefore we obtain for $n$ large enough:
$$\mathbb{P}\left(\left |\widetilde{M_{(n)}}\left (2\sigma\right)-\gamma \right|>\epsilon\right)\leq \mathbb{P}\left(\|  \widetilde{M_{(n)}}   \|_\alpha
  > C \right)\leq \epsilon.$$ This implies that
$\widetilde{M_{(n)}}\left (2\sigma\right)$ converges in probability toward the
deterministic value $\gamma$. So Slutzky's lemma shows that
$\widetilde{M_{(n)}}\left (2\sigma\right)$ converges in law toward $\gamma$. Note that $\left(B^{0\rightarrow \gamma}_{[0,2\sigma]}(t)\right)_{0\leq t \leq 2\sigma}$ and $\left(B^{0\rightarrow \gamma}_{[0,\sigma]}(t/2)\right)_{0\leq t \leq 2\sigma}$ have the same law. Thus we have
proved the convergence of the finite-dimensional marginals of
$\widetilde{M_{(n)}}$ toward
$B^{0\rightarrow \gamma}_{[0,2\sigma]}$. Moreover,  $\widetilde{M_{(n)}}$ is tight so Prokhorov's lemma
give the result.
\end{proof}

\section{Convergence of uniformly random $3$-dominating binary words}
\label{section6}

Consider
$(\rho_n)\in \mathbb{N}^{\mathbb{N}}, (\tau_n)\in
\mathbb{N}^{\mathbb{N}}$ and recall that
$\mathcal{D}_{3,3\rho_n+\tau_n, \rho_n}^{-1}$ is the set of elements
$b\in \{0,1\}^{p+q}$ with $|b|_0=3\rho_n+\tau_n$ and $|b|_1=\rho_n$ that are
inverse of $3$-dominating binary words (see
Section~\ref{section:Relation between well-labeled forests and
  $3$-dominating binary words}).  The goal of this section is to prove
the convergence of uniform random elements of the set
$\mathcal{D}_{3,3\rho_n+\tau_n, \rho_n}^{-1}$, in which we assume
that, there exists $\rho,\tau \in \mathbb R_+$, such that:

$$\rho_{(n)}=\frac{\rho_n}{n}\longrightarrow \rho \text{ and } \tau_{(n)}=\frac{\tau_n}{\sqrt{n}}\rightarrow \tau.$$

Given a element $b$ of $\mathcal{D}_{3,3\rho_n+\tau_n, \rho_n}^{-1}$, we can
replace the bits ``$1$'' by $-3$ and the bits ``$0$'' by $1$, getting
an encoding of a (random) inverse $3$-dominating binary word of length
$4\rho_n+\tau_n$ by a (random) path of the same length
$w=(w(0),w(1),...,w(4\rho_n+\tau_n))$ in $\mathbb Z$
such that
$$w(0)=0,                         w(4\rho_n+\tau_n)=\tau_n,
\overline{w}(4\rho_n+\tau_n)<\tau_n \text{ and }
w(i+1)-w(i) \in \{-3,1\} (\forall i),$$ where
$\overline{w}(t)=\sup_{s< t} w(s)$.  If $b$ is
uniformly distributed in $\mathcal{D}_{3,3\rho_n+\tau_n, \rho_n}^{-1}$, then
$w$ is uniformly distributed in the set $\mathcal
P_{3,3\rho_n+\tau_n, \rho_n}$ of all paths of
length $4\rho_n+\tau_n$ starting at $0$, with increments in
$\left \{ -3,1 \right \}$ and taking value $\tau_n$ at their
last step  for the first time.

Let $W_n$ be a uniformly random element of $\mathcal P_{3,3\rho_n+\tau_n,\rho_n}$ and
let $W_n$ also denote its piecewise linear interpolation which is
therefore a random element of $\mathcal H$.  Let $W_{(n)}$ denote the
rescaled process defined as:
\begin{equation}
\label{eq:definitionofW_n}
W_{(n)}=\left(\frac{W_n(2ns)}{\sqrt{3n}}\right)_{0\leq       s       \leq
  \frac{4\rho_n+\tau_n}{2n}}
\end{equation}

The goal of this section is to prove the following convergence result:

\begin{lemma}
\label{convergetowardbr}
The process $W_{(n)}$ converges in law toward the first-passage
Brownian bridge $F^{0\rightarrow \tau}_{[0,2\rho]}$ in the space
$(\mathcal H,d_{\mathcal H})$, when $n$ goes to infinity.
\end{lemma}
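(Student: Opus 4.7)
The strategy is to mimic the proof of Theorem~\ref{convergencemotzkinbridge} (Bettinelli's convergence for ordinary bridges) and adapt it to the first-passage setting, using Kemperman's cycle lemma as the crucial additional ingredient. First, I represent $W_n$ as a conditioned centered random walk: let $(X_i)_{i\ge 1}$ be i.i.d.\ with $\mathbb{P}(X_i=-3)=1/4$ and $\mathbb{P}(X_i=+1)=3/4$, so that $\mathbb{E}[X_i]=0$ and $\eta^2:=\Var(X_i)=3$, and put $S_k=X_1+\cdots+X_k$, $N_n=4\rho_n+\tau_n$. Then $W_n$ has the law of $(S_k)_{0\le k\le N_n}$ conditioned on the first-passage event $\{S_{N_n}=\tau_n,\ \overline S_{N_n-1}<\tau_n\}$. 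Because the steps of $S$ are bounded above by $+1$, Lemma~\ref{cyclelemma} with $k=3$ yields Kemperman's identity
$$\mathbb{P}\bigl(S_{N}=y,\ \overline S_{N-1}<y\bigr)=\frac{y}{N}\,\mathbb{P}(S_{N}=y)\qquad(N\ge 1,\ y>0),$$
which is the discrete counterpart of the absolute continuity that defines $F^{0\to\tau}_{[0,2\rho]}$ from the standard Brownian motion $\beta$.

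For convergence of the finite-dimensional marginals, I fix $0<s_1<\cdots<s_p<2\rho$, set $k_i=\lfloor 2ns_i\rfloor$, and combine the Markov property of $S$ with Kemperman's identity. For any bounded continuous $f$,
$$\mathbb{E}\bigl[f\bigl(W_{(n)}(s_1),\dots,W_{(n)}(s_p)\bigr)\bigr]=\mathbb{E}\!\left[f\!\left(\tfrac{S_{k_1}}{\sqrt{3n}},\dots,\tfrac{S_{k_p}}{\sqrt{3n}}\right)\Psi_n\,\mathbbm 1_{\overline S_{k_p}<\tau_n}\right],$$
with
$$\Psi_n=\frac{(\tau_n-S_{k_p})/(N_n-k_p)\,\mathbb{P}(S_{N_n-k_p}=\tau_n-S_{k_p})}{(\tau_n/N_n)\,\mathbb{P}(S_{N_n}=\tau_n)}.$$
The local CLT (Lemma~\ref{sumiid}) converts each $\mathbb{P}(S_m=\cdot)$ into a rescaled Gaussian density, and the prefactors $y/N$ combine with these densities to produce the derivative $p'$ in the limit. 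Pointwise on $\{\overline\beta_{s_p}<\tau\}$, $\Psi_n$ converges to $p'_{2\rho-s_p}(\tau-\beta_{s_p})/p'_{2\rho}(\tau)$, and Donsker's theorem applied to $(S_{\lfloor 2ns\rfloor}/\sqrt{3n})_{s\in[0,s_p]}$ provides the weak convergence of the unconditioned rescaled walk to $\beta$ on $[0,s_p]$. The quantitative bound in Lemma~\ref{sumiid} provides Gaussian domination of $\Psi_n$, so dominated convergence recovers exactly the defining integral for $F^{0\to\tau}_{[0,2\rho]}$.

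For tightness, I first apply Lemma~\ref{sumiid-2} to the bridge-conditioned walk $\{S_{N_n}=\tau_n\}$, obtaining the uniform increment moment bound $\mathbb{E}[|B_{(n)}(t)-B_{(n)}(s)|^q]\le C_q|t-s|^{q/2}$. The first-passage measure is absolutely continuous with respect to this bridge measure on $\mathcal F_{k_p}$, with Radon-Nikodym derivative equal, by the cycle lemma applied to the sub-bridge on $[k_p,N_n]$, to
$$\frac{(\tau_n-S_{k_p})\,N_n}{(N_n-k_p)\,\tau_n}\,\mathbbm 1_{\overline S_{k_p}<\tau_n},$$
which is uniformly bounded in $n$ on any compact sub-interval $[0,2\rho-\varepsilon]$. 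This transfers Kolmogorov's tightness criterion to $W_{(n)}|_{[0,2\rho-\varepsilon]}$; the right endpoint $s=2\rho$ is then handled exactly as in the last paragraph of the proof of Lemma~\ref{convergeofmotzkinpathex}, using that $W_{(n)}$ is pinned to a deterministic value there, combined with a H\"older-type a priori bound and Slutzky's lemma.

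The principal technical obstacle is the uniform domination of $\Psi_n$ required for the limit passage: near $s=2\rho$ the gap $\tau_n-S_{k_p}$ may become small and the first-passage constraint becomes singular, so the $o(k^{-1/2})$ precision of Lemma~\ref{sumiid} is exactly what is needed to control $\Psi_n$ without losing track of the $p'$ factor. Once finite-dimensional convergence and tightness are both established, Prokhorov's theorem identifies the weak limit unambiguously as $F^{0\to\tau}_{[0,2\rho]}$.
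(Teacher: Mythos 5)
Your representation of $W_n$ as the walk with steps in $\{-3,1\}$ conditioned on a first-passage event, the use of the cycle lemma to get the Kemperman-type identity, and the whole finite-dimensional argument (Markov property, local CLT via Lemma~\ref{sumiid}, domination, identification of the $p'$ densities) coincide with the paper's Claim~\ref{cl:Ffirstpassage} and are sound. The gap is in your tightness argument. The Radon--Nikodym derivative you compute on $\mathcal F_{k_p}$, namely $\frac{(\tau_n-S_{k_p})\,N_n}{(N_n-k_p)\,\tau_n}\,\mathbbm 1_{\overline S_{k_p}<\tau_n}$, is \emph{not} uniformly bounded in $n$, even on $[0,2\rho-\varepsilon]$: under the bridge measure the support of $S_{k_p}$ extends down to order $-n$, so on the event $\overline S_{k_p}<\tau_n$ the factor $(\tau_n-S_{k_p})/\tau_n$ can be of order $n/\sqrt n=\sqrt n$. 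Its $L^\infty$ norm therefore blows up, and the claimed transfer of Kolmogorov's criterion from the bridge to the first-passage walk does not follow as stated (it could perhaps be salvaged by H\"older/Cauchy--Schwarz together with a moment bound on the derivative, but that is additional work you have not done). Worse, even granting tightness on every $[0,2\rho-\varepsilon]$, your treatment of the endpoint is circular: the argument at the end of Lemma~\ref{convergeofmotzkinpathex} that you invoke relies on a \emph{global} H\"older bound $\|\cdot\|_\alpha\le C$ valid up to the right endpoint, which is precisely what your localized tightness fails to provide near $2\rho$ --- and near $2\rho$ the first-passage conditioning is most singular.

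The paper closes this gap by a completely different device, which is why Section~\ref{sectionbertion} extends the Bertoin--Chaumont--Pitman (Vervaat-type) path transformation to this walk: Lemma~\ref{bertoin} realizes $F_n$ in law as $\theta_{m_{\nu_n}}(B_n)$, a cyclic shift of the ordinary bridge $B_n$ at the time it first attains its maximum minus a uniform level. This gives the deterministic pathwise inequality $\|F_{(n)}\|_\alpha\le 2\|B_{(n)}\|_\alpha$, so the global H\"older control of Lemma~\ref{sumiid-2} for the bridge transfers to the first-passage bridge on the whole interval at once, with no absolute-continuity estimate and no special treatment of the endpoint needed for tightness. You should either adopt this shift argument or supply a genuine quantitative bound (uniform in $n$) on the moments of your Radon--Nikodym derivative together with a separate, non-circular control of the modulus of continuity near $2\rho$.
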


\subsection{Review and generalization of a result of Bertoin, Chaumont
  and Pitman}
\label{sectionbertion}

We are  going to  extend a  result in  \cite{bertoin2003path}, showing
that its  proof is  still valid  for the  case of  a random  path with
increments in  $\left \{ -3,1  \right \}$  as above. Fix  two integers $\beta$ and $n$ such that 
$1  \leq \beta  \leq  n$, and  let  $(X_i)_{1\leq i  \leq  n}$ be  a
sequence of   $i.i.d.$  random variables
of law:
$$\mathbb{P}(X_i=-3)=\frac{1}{4} \text{ and } \mathbb{P}(X_i=1)=\frac{3}{4}.$$
Let $S=(S_i)_{0\leq i  \leq n}$ be the random path  started at $0$ and
with  increments  given  by  the   $X_i$,  conditioned  on  the  event
$\{S_n  = \beta\}$. For  any  $k=0,1,...,n-1$,  define the  shifted
chain:
$$\theta_k(S)_i=\begin{cases}
S_{i+k}-S_k & \text{ if } 0\leq i \leq n-k, \\ 
S_{k+i-n}+S_n-S_k & \text{ if } n-k\leq i\leq n.
\end{cases}$$  For $k=0,1,...,\beta-1$,  define  the  first time  at
which $S$ reaches its maximum minus $k$ as follows:
$$ m_k(S)=\inf\left \{ i:S_i=\max_{0\leq j\leq n}S_j-k \right \}.$$
For convenience, we write $\theta_{m_k}(S)$ for $\theta_{m_k(S)}(S)$ in
what follows.

Denote  by  $\Gamma$ the  support  of  the law  of  $S$.   For every $\gamma  \in \Gamma$, define  the sequence
$\Lambda(s)      =      (s,     \theta_1(s),      ...,
\theta_{n-1}(s))$.
 Let $\overline{\Lambda}(s)$ be  the    subsequence
  of the paths in $\Lambda(s)$ which
  first  hit their  maximum  at time  $n$.
We need  the
following lemma. 

\begin{lemma}
  \label{cyclelemmabertoin}
  For      every     $s\in      \Gamma$,   $\overline{\Lambda}(s)$   contains exactly  $\beta$
  elements and more  precisely:
  $$\overline{\Lambda}(s)=\left            (  \theta_{m_{\beta-1}}(s),...,         \theta_{m_0}(s)\right ).$$
\end{lemma}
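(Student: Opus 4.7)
My plan is to translate $\theta_k(s)\in\overline\Lambda(s)$ into explicit conditions on the path $s$, and then match those conditions with the first-hitting times $m_i$ using the $\{-3,+1\}$ increment structure.

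First, I would unfold the definition of the shift: for $0\le i\le n-k$, $\theta_k(s)_i=s_{i+k}-s_k$, and for $n-k\le i\le n$, $\theta_k(s)_i=s_{i+k-n}+\beta-s_k$; in particular $\theta_k(s)_n=\beta$. A direct computation then shows that $\theta_k(s)\in\overline\Lambda(s)$ if and only if (i) $s_j<s_k$ for all $0\le j<k$ (i.e., $k$ is a strict left-to-right maximum of $s$), and (ii) $s_j<s_k+\beta$ for all $k\le j<n$. When $k\ge 1$, these two conditions together force $\max_j s_j\le s_k+\beta-1$, equivalently $s_k>M-\beta$ with $M:=\max_j s_j$; conversely any strict left-to-right maximum $k\ge 1$ with $s_k>M-\beta$ satisfies both conditions.

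Second, I would exploit that the only positive increment of $s$ is $+1$. If $s$ first reaches a positive value $v$ at some time $t$, then necessarily $s_{t-1}=v-1$, so by induction every integer in $\{0,1,\ldots,M\}$ is visited by $s$. Since $M\ge s_n=\beta$, the times $m_i:=\inf\{j:s_j=M-i\}$ are well defined for $i=0,\ldots,\beta-1$, pairwise distinct, and satisfy $m_{\beta-1}<\cdots<m_0\le n$. Moreover $s_j\le M-i-1$ for every $j<m_i$, because hitting a value strictly greater than $M-i$ before time $m_i$ would force an earlier first visit of $M-i$ itself via the ``only $+1$ goes up'' observation. Consequently each $m_i$ is a strict left-to-right maximum of $s$ with value $M-i>M-\beta$.

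Combining the two steps, the valid shifts $k$ with $k\ge 1$ are precisely the strict left-to-right maxima with $s_k\in\{M-\beta+1,\ldots,M\}$, and since each such value is attained uniquely at its first-visit time $m_i$, the set of valid shifts is $\{m_{\beta-1},\ldots,m_{0}\}\cap\{1,\ldots,n-1\}$, listed in natural increasing order along $\Lambda(s)$. The only delicate boundary case is $k=0$: condition (i) is vacuous and (ii) reads $s_j<\beta$ for $j<n$, which corresponds exactly to $M=\beta$ and $m_0=n$, and is handled via the cyclic identification $\theta_n(s)=\theta_0(s)=s$. In all cases we obtain $|\overline\Lambda(s)|=\beta$ and the claimed identity. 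The main new ingredient beyond the classical $\pm 1$ cycle lemma is precisely the $+1$-only observation, which is what guarantees that the $m_i$'s all exist.
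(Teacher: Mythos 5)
Your proof is correct, and it takes a genuinely more self-contained route than the paper's. The paper's proof is two lines: it asserts that each $\theta_{m_k}(s)$ lies in $\overline{\Lambda}(s)$ and then invokes the cycle lemma (Lemma~\ref{cyclelemma}) to conclude $|\overline{\Lambda}(s)|=\beta$, so that the $\beta$ distinct paths $\theta_{m_k}(s)$ exhaust $\overline{\Lambda}(s)$. You instead unfold the condition $m_0(\theta_k(s))=n$ into the two explicit constraints (i)--(ii) on $s$, observe that for $k\ge 1$ they amount to $k$ being a strict left-to-right maximum of $s$ with $s_k>M-\beta$, and then use the $\{-3,+1\}$ increment structure to identify those times precisely with $m_0,\dots,m_{\beta-1}$; the $k=0$ case is absorbed via $\theta_n=\theta_0$. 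This buys you something: the paper's appeal to Lemma~\ref{cyclelemma} is not literal, since the $3$-dominating condition there is a strict positivity condition whereas the condition here is that the path stays strictly below $\beta$ until time $n$, and the needed reversal/reflection dictionary is left implicit. Your argument re-derives the relevant instance of the cycle lemma directly, making the ``one can see'' part of the paper's proof explicit (the fact that $s$ visits every level in $\{0,\dots,M\}$, hence each $m_i$ exists and is a strict left-to-right maximum). One small remark on presentation: when $m_0=n$, the element $\theta_{m_0}(s)=\theta_0(s)=s$ is the \emph{first}, not the last, entry of $\overline{\Lambda}(s)$ read as a subsequence of $\Lambda(s)$, so the displayed equality should be understood as a set (or cyclic) identity; this matches how the lemma is actually used in the proof of Lemma~\ref{bertoin} and is a cosmetic point shared with the paper's own statement.
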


\begin{proof}
  One  can see  that the  path $\theta_{m_k}(s)$  is contained  in
  $\overline{\Lambda}(s)$ and  the cycle lemma gives  us that the
  cardinality of $\overline{\Lambda}(s)$ is exactly $\beta$.
\end{proof}

The following is an extension of a result of 
 Bertoin, Chaumont, Pitman \cite{bertoin2003path}:

\begin{lemma}
  \label{bertoin}
  Let $\nu$ be a random variable which is independent of $S$ and
  uniformly distributed on $\{0,1,...,\beta-1\}$.  The chain
  $\theta_{m_{\nu}}(S)$ has the same law as that of $S$ conditioned on
  the event $\{m_0=n\}$ and independent from $m_{\nu}$.
\end{lemma}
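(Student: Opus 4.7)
The plan is to follow verbatim the argument of Bertoin--Chaumont--Pitman, observing that the only combinatorial input used there was a cycle lemma which for our $\{-3, +1\}$-valued walk is provided by Lemma~\ref{cyclelemmabertoin} (itself derived from Lemma~\ref{cyclelemma} with $k = 3$). Concretely, set $\Gamma_* = \{s \in \Gamma : m_0(s) = n\}$, the set of paths that first hit their maximum at the terminal time. Since $S$ conditioned on $\{m_0 = n\}$ is uniform on $\Gamma_*$, and since $m_\nu$ takes values in $\{0, 1, \ldots, n-1\}$, it will suffice to show that $(\theta_{m_\nu}(S), m_\nu)$ is uniform on $\Gamma_* \times \{0, \ldots, n-1\}$.

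The key step is to introduce the map
$$T : \Gamma \times \{0, 1, \ldots, \beta-1\} \longrightarrow \Gamma_* \times \{0, 1, \ldots, n-1\}, \qquad T(s,k) = \bigl(\theta_{m_k(s)}(s),\, m_k(s)\bigr),$$
and show that it is a bijection. The image lies in $\Gamma_* \times \{0, \ldots, n-1\}$ directly from Lemma~\ref{cyclelemmabertoin}, which guarantees that every path in $\overline{\Lambda}(s)$ first reaches its maximum at time $n$ and that $m_k(s) \in \{0, \ldots, n-1\}$. For the inverse, given $(s', i) \in \Gamma_* \times \{0, \ldots, n-1\}$, define $s$ as the cyclic shift of $s'$ by $-i$, so that $\theta_i(s) = s'$. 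Then $s' \in \Gamma_* \cap \mathrm{orbit}(s) = \overline{\Lambda}(s)$, and Lemma~\ref{cyclelemmabertoin} produces a unique $k \in \{0, \ldots, \beta-1\}$ with $s' = \theta_{m_k(s)}(s)$ and $m_k(s) = i$. A cardinality count $|\Gamma|\,\beta = |\Gamma_*|\,n$ (which also follows from summing $|\overline\Lambda(s)| = \beta$ over $s \in \Gamma$ and exchanging sums using the fact that each $\theta_i$ is a bijection of $\Gamma$) confirms that $T$ is a bijection.

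With $T$ in hand, for any bounded measurable $f$ and $g$,
$$\mathbb{E}\bigl[f(\theta_{m_\nu}(S))\, g(m_\nu)\bigr] = \frac{1}{|\Gamma|\,\beta} \sum_{s \in \Gamma} \sum_{k=0}^{\beta-1} f\bigl(\theta_{m_k(s)}(s)\bigr)\, g\bigl(m_k(s)\bigr) = \frac{1}{|\Gamma|\,\beta} \sum_{s' \in \Gamma_*} \sum_{i=0}^{n-1} f(s')\, g(i),$$
which factors. Taking $g \equiv 1$ identifies the law of $\theta_{m_\nu}(S)$ as the uniform law on $\Gamma_*$, \emph{i.e.} the law of $S$ conditioned on $\{m_0 = n\}$; taking $f \equiv 1$ shows that $m_\nu$ is uniform on $\{0, \ldots, n-1\}$; and the product form delivers independence.

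The main obstacle is the verification that $T$ is truly a bijection for \emph{every} $s \in \Gamma$, including paths with a nontrivial period under cyclic shift. When $s$ has period $d \mid n$, several indices $i \in \{0, \ldots, n-1\}$ give rise to the same shifted path $\theta_i(s)$, and one must check that the corresponding indices $m_k(s)$ in $\{0, \ldots, n-1\}$ are all distinct as $k$ ranges over $\{0, \ldots, \beta-1\}$. This is precisely the content of the refined counting in Lemma~\ref{cyclelemma}: the total number of $3$-dominating cyclic shifts of the encoding word of $s$ equals $\beta$, whether or not $s$ is primitive, and the map $k \mapsto m_k(s)$ provides the indexing. Once this is granted, the bijectivity of $T$ is clear and the remainder of the argument is purely formal.
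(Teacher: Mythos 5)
Your proof takes essentially the same route as the paper's: both rest on Lemma~\ref{cyclelemmabertoin} and perform the same summation rearrangement, yours framing it as an explicit bijection $T$ where the paper's proof uses the indicator identity $\sum_{j}F(\theta_{m_{j}}(s))f(m_{j})=\sum_{k=0}^{n-1}F(\theta_{k}(s))f(k)\,{\mathbbm 1}_{\{m_0(\theta_k(s))=n\}}$. One small slip, shared implicitly with that displayed identity: $m_0(s)=n$ precisely when $s\in\Gamma_*$, so the natural codomain of $T$ is $\Gamma_*\times\{1,\ldots,n\}$ rather than $\Gamma_*\times\{0,\ldots,n-1\}$; since $\theta_n=\theta_0$ this is harmless and affects neither the independence of $\theta_{m_\nu}(S)$ and $m_\nu$ nor the identified law of $\theta_{m_\nu}(S)$, which is all the lemma asserts.
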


\begin{proof}
  For      every     bounded      function     $f$      defined     on
  $\left  \{ 0,1,...,n  \right \}$  and every  bounded function  $F$
  defined             on            $\mathbb{Z}^{n+1}$,             we
  have
  \begin{equation}
  \mathbb E\left[F(\theta_{m_{\nu}}(S))f(m_{\nu})\right]=\sum_{s\in
    \Gamma}\mathbb P(S=s)\frac{1}{\beta}\sum_{j=0}^{\beta-1}F(\theta_{m_{j}}(s))f(m_{j}).\label{eq:cyclebertoin}
\end{equation}
  By Lemma~\ref{cyclelemmabertoin}, we obtain
$$\sum_{j=0}^{\beta-1}F(\theta_{m_{j}}(s))f(m_{j})=\sum_{k=0}^{n-1}F(\theta_{k}(s))f(k){\mathbbm
  1}_{    \{     m_0(\theta_k(s))    =n\}}.$$     Replacing    in
\eqref{eq:cyclebertoin}, we get
$$\mathbb E\left[F(\theta_{m_{\nu}}(S))f(m_{\nu})\right]=\frac{n}{\beta}\mathbb E\left[F(\theta_{U}(S))f(U){\mathbbm
  1}_{   \{  m_0(\theta_U(S))   =n\}}\right]$$  where   $U$  is   uniform  on
$\{0,1,...,n-1\}$ and independent of $S$. This can be rewritten as 
$$\mathbb E\left[F(\theta_{m_{\nu}}(S))f(m_{\nu})\right]=\mathbb E\left[F(S)|m_0(S)=n\right]\mathbb E\left[f(U)\right],$$
which concludes the proof of the lemma.
\end{proof}

\subsection{Convergence to the first-passage Brownian bridge}
\label{sec:convergencebrownianbridge}

In     this     section,     we      prove
Lemma~\ref{convergetowardbr}.
Let  $a\in (0,1)$  and  let  $(X^{a}_n)_{n\geq 1}$  be  a sequence  of
$i.i.d.$  random   variables  with distribution
$a\delta_{-3}+(1-a)\delta_{1}$ (i.e. whose steps are in $\{-3,1\}$ with
probability $a$ for ``-3'' and $(1-a)$ for ``1'').
 We define  $S^a_0=0$  and
$S^{a}_n=\sum_{i=1}^{n}X^{a}_i$.  We begin  with  the following  basic
lemma.

\begin{lemma}
  \label{lemmasurprise}
  For  all  $a   \in  (0,1)$  and  $\rho,\tau\in   \mathbb{N}$,  we  have:
    $$\mathcal          L((S^a)_{0\leq i\leq
    4\rho+\tau}|S^a_{4\rho+\tau}=\tau,\overline{S^a}_{4\rho+\tau-1}<\tau)=
  \mathcal  U(\mathcal P_{3,3\rho+\tau,\rho}),$$
  where   $\overline{S^a}_{k}=\max_{0\leq   i\leq  k}S^a_i$   and
  $ \mathcal U(\mathcal P_{3,3\rho+\tau,\rho}) $ is the uniform law on $\mathcal P_{3,3\rho+\tau,\rho}$.
\end{lemma}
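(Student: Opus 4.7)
The plan is to observe that the conditioning event in the statement is exactly the event $\{(S^a_i)_{0 \leq i \leq 4\rho+\tau} \in \mathcal{P}_{3, 3\rho+\tau, \rho}\}$, and that under the law of $S^a$, every path in $\mathcal{P}_{3, 3\rho+\tau, \rho}$ has the same unconditional probability. Conditioning on an event under which all outcomes have equal probability yields the uniform distribution on that event, which is exactly the claim.

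More concretely, I would proceed as follows. First, fix any trajectory $s = (s_0, s_1, \ldots, s_{4\rho+\tau}) \in \mathcal{P}_{3, 3\rho+\tau, \rho}$. By definition of $\mathcal{P}_{3, 3\rho+\tau, \rho}$, the increments $s_{i+1} - s_i$ lie in $\{-3, 1\}$, the path starts at $0$, ends at $\tau$, and reaches $\tau$ for the first time at step $4\rho+\tau$. If $\rho'$ denotes the number of $-3$-steps in $s$ and $\tau'$ the number of $+1$-steps, then $\rho' + \tau' = 4\rho + \tau$ and $-3\rho' + \tau' = \tau$, which forces $\rho' = \rho$ and $\tau' = 3\rho + \tau$. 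In particular, the number of increments of each type is determined by the endpoint constraint alone.

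Second, since the $(X^a_i)$ are i.i.d.\ with $\mathbb{P}(X^a_i = -3) = a$ and $\mathbb{P}(X^a_i = 1) = 1 - a$, each trajectory $s \in \mathcal{P}_{3, 3\rho+\tau, \rho}$ satisfies
\[
\mathbb{P}\bigl((S^a_i)_{0 \leq i \leq 4\rho+\tau} = s\bigr) = a^\rho (1-a)^{3\rho+\tau},
\]
which does not depend on $s$. Furthermore, the event $\{S^a_{4\rho+\tau} = \tau,\ \overline{S^a}_{4\rho+\tau - 1} < \tau\}$ is exactly $\{(S^a_i)_{0 \leq i \leq 4\rho+\tau} \in \mathcal{P}_{3, 3\rho+\tau, \rho}\}$, since the two conditions ``reach $\tau$ at the last step'' and ``stay strictly below $\tau$ before'' amount to the first-passage requirement in the definition of $\mathcal{P}_{3, 3\rho+\tau, \rho}$. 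This event has positive probability (pick e.g. the trajectory with $3\rho+\tau$ upsteps followed by $\rho$ downsteps then the remaining structure, or simply note that $\mathcal{P}_{3, 3\rho+\tau, \rho}$ is non-empty for $\rho, \tau \in \mathbb{N}$).

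Finally, the conditional probability is
\[
\mathbb{P}\bigl((S^a_i)_i = s \,\big|\, S^a_{4\rho+\tau} = \tau,\ \overline{S^a}_{4\rho+\tau - 1} < \tau\bigr) = \frac{a^\rho (1-a)^{3\rho+\tau}}{|\mathcal{P}_{3, 3\rho+\tau, \rho}| \cdot a^\rho (1-a)^{3\rho+\tau}} = \frac{1}{|\mathcal{P}_{3, 3\rho+\tau, \rho}|},
\]
which is the uniform law on $\mathcal{P}_{3, 3\rho+\tau, \rho}$. There is essentially no obstacle here; the point of the lemma is just that the parameter $a$ disappears under conditioning because increments are binary and the total increment is fixed. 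This flexibility is what will allow, later on, to choose a convenient value of $a$ (presumably $a = 1/4$, matching the distribution of $X_i$ in Section~\ref{sectionbertion}) in order to apply Lemma~\ref{bertoin} to the rescaled walks $W_{(n)}$ and derive convergence to the first-passage Brownian bridge.
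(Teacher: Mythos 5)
Your proof is correct and follows essentially the same route as the paper: both compute that every trajectory in $\mathcal P_{3,3\rho+\tau,\rho}$ has unconditional probability $a^{\rho}(1-a)^{3\rho+\tau}$, independent of the trajectory, so that conditioning on the first-passage event yields the uniform law. The extra details you supply (identifying the conditioning event with membership in $\mathcal P_{3,3\rho+\tau,\rho}$ and counting the step types) are implicit in the paper's one-line computation.
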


\begin{proof}
  Let
  $w=(w_0=0,w_1,...,w_{4\rho+\tau}=\tau) \in \mathcal
  P_{3,3\rho+\tau,\rho}$.
  $$\mathbb{P}((S^a)_{0\leq i\leq 4\rho+\tau}=\omega|S^a_{4\rho+\tau}=\tau, \overline{S^a}_{4\rho+\tau-1}<\tau)=\frac{\left(1-a\right)^{3\rho+\tau}a^{\rho}}{\mathbb{P}(S^a_{4\rho+\tau}=\tau, \overline{S^a}_{4\rho+\tau-1}<\tau)},$$
  which  does not  depend on  $\omega$.  This concludes  the proof  of
  lemma.
\end{proof}

We are now ready to prove Lemma~\ref{convergetowardbr}:

\begin{proof}[Proof of Lemma~\ref{convergetowardbr}]

  Let $S_n=(S_n(i))_{0\leq i \leq 4\rho_n+\tau_n}$ be the random path started at $0$
  and with increments given by the $X_i$ (defined in section
  \ref{sectionbertion}).  Let $F_n$ be the random path $S_n$
   conditioned to take value $\tau_n$ at time $4\rho_n+\tau_n$  for the
   first time.
   Let
  the same notations $S_n$ and $F_n$  denote their piecewise linear
  interpolation which is therefore a random element of $\mathcal H$.
 When $4\rho_n+\tau_n< 4n\rho$, we
assume that $F_n$ is extended to take value $\tau_n$ on
$[4\rho_n+\tau_n, 4n\rho]$. 
    Let $S_{(n)}$ and $F_{(n)}$ denote the rescaled processes:
\begin{equation*}
\label{eq:equtionofF_n}
S_{(n)}=\left(\frac{S_n(2ns)}{\sqrt{3n}}\right)_{0\leq       s       \leq
  \frac{4\rho_n+\tau_n}{2n}}
  \end{equation*}
\begin{equation*}
F_{(n)}=\left(\frac{F_n(2ns)}{\sqrt{3n}}\right)_{0\leq       s
  \leq \max(\frac{4\rho_n+\tau_n}{2n},2\rho)}
\end{equation*}

  Let $\mathcal F_i=\sigma\left  \{ S_n(k),0\leq k \leq i  \right \}$ be
  the natural  filtration associated  with $S$.

  By
  Lemma~\ref{lemmasurprise}, the law of $W_n$ is the same of $F_n$.
  By
  Donsker's theorem  and Skorokhod's theorem,  we may  assume
  that  as $n\to\infty$,  $S_{(n)}$ converges  almost surely  toward a
  standard Brownian motion $(\beta_s)_{0\leq s\leq 2\rho}$ for the uniform
  topology. 

  \begin{claim}
    \label{cl:Ffirstpassage}
    Suppose $\rho >0$ and consider $0\leq \rho'<2\rho$. For $n$ large
    enough $ 2n\rho' < 4\rho_n+\tau_n$ and
    $(F_{(n)}(s))_{0\leq s \leq \rho'}$ converge in law toward
      $(F^{0\rightarrow \tau}_{[0,2\rho]})_{0\leq s\leq
          \rho'}$.
  \end{claim}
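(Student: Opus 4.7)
The plan is to establish the convergence via a Radon--Nikodym change of measure, transferring the constrained first-passage walk to the unconstrained walk $S_n$ which already converges by Donsker. Fix an integer $k < N := 4\rho_n + \tau_n$ and a bounded continuous functional $\varphi$ on paths of length $k$. Bayes' rule combined with the Markov property at time $k$ gives
\begin{equation*}
\mathbb{E}\bigl[\varphi(F_n|_{[0,k]})\bigr] = \mathbb{E}\!\left[\varphi(S_n|_{[0,k]}) \cdot \mathbbm{1}_{\{\overline{S_n}_k < \tau_n\}} \cdot \frac{\mathbb{P}\bigl(\tau^{S'}_{\tau_n - S_n(k)} = N - k\bigr)}{\mathbb{P}\bigl(\tau^S_{\tau_n} = N\bigr)}\right],
\end{equation*}
where $S'$ is an independent copy of the walk and $\tau^X_j$ denotes the first hitting time of level $j$ by the walk $X$. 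Because the increments of $S_n$ lie in $\{-3,1\}$, the walk is upward skip-free, so I can apply Kemperman's formula (the skip-free cycle-lemma form, a direct consequence of Lemma~\ref{cyclelemma}): $\mathbb{P}(\tau^S_j = m) = (j/m)\,\mathbb{P}(S_m = j)$ for every $j,m>0$. Substituting rewrites the Radon--Nikodym density as $\tfrac{N}{N-k}\cdot\tfrac{\tau_n - S_n(k)}{\tau_n}\cdot\tfrac{\mathbb{P}(S_{N-k} = \tau_n - S_n(k))}{\mathbb{P}(S_N = \tau_n)}$, i.e., a ratio of \emph{unconditional} walk probabilities times bounded explicit factors.

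Second, I take $k = \lfloor 2n\rho' \rfloor$ and analyse the asymptotic behaviour of this density. Lemma~\ref{sumiid} (local CLT) with variance $\eta^2 = \Var(X_1) = 3$ and maximal span $h=4$ gives a Gaussian approximation for $\mathbb{P}(S_m = j)$. Using the identity $x\,p_a(x)/a = -p'_a(x)$, the quantity $(j/m)\mathbb{P}(S_m = j)$ asymptotically equals $-h\eta^2 p'_{\eta^2 m}(j)$, so the prefactors cancel between numerator and denominator, and after the appropriate scaling the ratio converges to $p'_{2\rho-\rho'}(\tau - \beta_{\rho'})/p'_{2\rho}(\tau)$. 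Combined with the almost sure convergence $S_{(n)} \to \beta$ from the Skorokhod coupling set up in the sketch, the full Radon--Nikodym factor therefore converges pointwise to
\begin{equation*}
\mathbbm{1}_{\{\overline{\beta}_{\rho'} < \tau\}} \cdot \frac{p'_{2\rho-\rho'}(\tau - \beta_{\rho'})}{p'_{2\rho}(\tau)}.
\end{equation*}
The strict inequality $\rho' < 2\rho$ in the hypothesis is essential here: it keeps $N-k$ of order $n$, which ensures that the denominator $\mathbb{P}(S_N = \tau_n)$ stays bounded below by a positive multiple of $n^{-1/2}$ and that the limiting density $p'_{2\rho-\rho'}$ is well defined.

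Third, I pass from pointwise to expected-value convergence via dominated convergence. The power-law upper bound $\mathbb{P}(S_m=j)\le Ch/(\eta\sqrt{m}(1+|j/(\eta\sqrt{m})|^r))$ in Lemma~\ref{sumiid}, applied with a sufficiently large $r$, together with the trivially bounded prefactor $(\tau_n - S_n(k))/\tau_n \in [0,1]$ and $N/(N-k) \to 2\rho/(2\rho-\rho')$, provides an integrable dominator for the density. The continuity of $\varphi$ and the almost sure convergence $\varphi(S_{(n)}|_{[0,\rho']}) \to \varphi(\beta|_{[0,\rho']})$ then yield
\begin{equation*}
\mathbb{E}\bigl[\varphi(F_{(n)}|_{[0,\rho']})\bigr] \longrightarrow \mathbb{E}\!\left[\varphi(\beta|_{[0,\rho']})\cdot \mathbbm{1}_{\{\overline{\beta}_{\rho'}<\tau\}}\cdot \frac{p'_{2\rho-\rho'}(\tau-\beta_{\rho'})}{p'_{2\rho}(\tau)}\right],
\end{equation*}
and the right-hand side is precisely $\mathbb{E}[\varphi(F^{0\to\tau}_{[0,2\rho]}|_{[0,\rho']})]$ by the defining Radon--Nikodym characterisation of the first-passage Brownian bridge recalled in Section~\ref{sec:brownian}. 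Since this holds for every bounded continuous $\varphi$, the claim follows.

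The main obstacle I expect is the domination step, which requires simultaneous control of the Radon--Nikodym density in two delicate regimes: when $S_n(k)$ is close to $\tau_n$ (where the factor $(\tau_n - S_n(k))/\tau_n$ is small but must be paired against the indicator, and where the local CLT estimate is least sharp) and when $S_n(k)$ is far from its typical order $\sqrt{n}$ (where the uniform bound of Lemma~\ref{sumiid} must provide $L^p$ control). The power-law tail bound with $r$ large enough is exactly what is needed, and the continuity of the indicator $\mathbbm{1}_{\{\overline{\beta}_{\rho'}<\tau\}}$ on the support of the limit law (since the Brownian running maximum has a continuous distribution) ensures that the limiting functional is $\beta$-a.s.\ continuous, legitimising the passage to the limit.
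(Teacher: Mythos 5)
Your proposal follows essentially the same route as the paper's proof: the Radon--Nikodym decomposition via the Markov property at time $2n\rho'$, the cycle-lemma identity $\mathbb{P}(\tau^S_{\tau_n}=N)=(\tau_n/N)\mathbb{P}(S_N=\tau_n)$ (Lemma~\ref{cyclelemmabertoin}), the local CLT of Lemma~\ref{sumiid} with $(\eta,h)=(\sqrt{3},4)$ to identify the limiting density $p'_{2\rho-\rho'}(\tau-\beta_{\rho'})/p'_{2\rho}(\tau)\,\mathbbm{1}_{\overline{\beta}_{\rho'}<\tau}$, and dominated convergence to conclude via the defining characterisation of the first-passage Brownian bridge. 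The argument and all constants are correct.
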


  \begin{proofclaim}
    It is clear that for $n$ large enough we have
    $ 2n\rho' < 4\rho_n+\tau_n$. Let $f$ be a continuous bounded
    function from $\mathcal H$ to $\mathbb{R}$. We have
    $$\mathbb{E}[f((F_{(n)}(s))_{0\leq              s             \leq
      \rho'})]=$$$$\mathbb{E}[f((S_{(n)}(s))_{0\leq       s\leq
      \rho'})|S_n({4\rho_n+\tau_n})=\tau_n,
    \overline{S_n}({4\rho_n+\tau_n-1})<\tau_n].$$
    
    By the  definition of  conditional probability  and the  fact that
    $(S_{(n)}(s))_{0\leq   s\leq   \rho'}$   is  measurable  with respect   to
    $\mathcal F_{2n\rho'}$, we have:
    $$\mathbb{E}[f((F_{(n)}(s))_{0\leq              s             \leq
      \rho'})]=$$

    $$\mathbb{E}\left[f((S_{(n)}(s))_{0\leq                s\leq
        \rho'})\frac{\mathbb{P}(S_n({4\rho_n+\tau_n})=\tau_n,
        \overline{S_n}({4\rho_n+\tau_n-1})<\tau_n|\mathcal
        F_{2n\rho'})}{\mathbb{P}(S_n({4\rho_n+\tau_n})=\tau_n,
        \overline{S_n}({4\rho_n+\tau_n-1})<\tau_n)}\right].$$   Recall   the
    notation              $Q^S_k(i)=\mathbb{P}(S_k=i)$;             by
    Lemmas~\ref{cyclelemmabertoin}, we have:
    $$\mathbb{P}(S_n({4\rho_n+\tau_n})=\tau_n,\overline{S_n}(4\rho_n+\tau_n-1)<\tau_n)=\frac{\tau_n}{4\rho_n+\tau_n}\mathbb{P}(S_n({4\rho_n+\tau_n})=\tau_n).$$


    Using  the  Markov  property,  we   obtain,  denoting  by  $T_n$  an
    independent copy of $S_n$:
    \begin{align*}
      \mathbb{P}&(S_n({4\rho_n+\tau_n})=\tau_n, \overline{S_n}(4\rho_n+\tau_n-1)<\tau_n|\mathcal F_{2n\rho'})
      \\
      =&\mathbb{P}(T_n({4\rho_n+\tau_n-2n\rho'})=\tau_n-S_n({2n\rho'}),
 \overline{T_n}({4\rho_n+\tau_n-2n\rho'-1})<\tau_n-S_n({2n\rho'}))
      \\&{\mathbbm 1}_{\overline{S_n}({2n\rho'})<\tau_n}
      \\
      =&\frac{\tau_n-S_n({2n\rho'})}{4\rho_n+\tau_n-2n\rho'}\mathbb{P}(T_n({4\rho_n+\tau_n-2n\rho'})=\tau_n-S_n({2n\rho'})){\mathbbm
      1}_{\overline{S_n}({2n\rho'})<\tau_n}.
      \end{align*}

    We now verify that the ratio
    $$                            \frac{\mathbb{P}(S_n({4\rho_n+\tau_n})=\tau_n,
      \overline{S_n}({4\rho_n+\tau_n-1})<\tau_n|\mathcal
      F_{2n\rho'})}{\mathbb{P}(S_n({4\rho_n+\tau_n})=\tau_n,
      \overline{S_n}({4\rho_n+\tau_n-1})<\tau_n)} $$
    converges almost surely
    to
    $\frac{p'_{\added{2}\rho-\rho'}(\tau-\beta_{\rho'})}{p'_{2\rho}(\tau)}{\mathbbm
      1}_{\overline{\beta}({\rho'})<\tau}$.    Indeed,   by    using   the
    Lemma~\ref{sumiid}    for    the    random   walk    $S$    with
    $(\eta,h)=(\sqrt{3},4)$, we obtain:
$$                          \frac{\sqrt{3}}{4}\sqrt{4\rho_n+\tau_n}\times
\mathbb{P}(S_n({4\rho_n+\tau_n})=\tau_n)    \rightarrow    p\left(\frac{\tau}{2}\right)
\text{ and }$$
$$\frac{\sqrt{3}}{4}\sqrt{4\rho_n+\tau_n-n\rho'}\times \mathbb{P}(T_n({4\rho_n+\tau_n-n\rho'})=\tau_n-S_n({n\rho'})) \rightarrow p\left(\frac{\tau-\beta_{\rho'}}{\sqrt{\added{2}\rho-\rho'}}\right).$$
We can see also that:
$$\frac{\tau_n-S_n({n\rho'})}{4\rho_n+\tau_n-n\rho'} \frac{4\rho_n+\tau_n}{\tau_n}  \frac{\sqrt{4\rho_n+\tau_n}}{\sqrt{4\rho_n+\tau_n-n\rho'}}$$ $$=\frac{4\rho_n+\tau_n}{4\rho_n+\tau_n-n\rho'}\frac{\sqrt{4\rho_n+\tau_n}}{\tau_n} \frac{\tau_n-S_n({n\rho'})}{\sqrt{4\rho_n+\tau_n-n\rho'}}$$
converges            toward            $\frac{8(\tau-\beta_{\rho'})}{\tau
  (\added{2}\rho-\rho')^{\frac{3}{2}}}$. This implies that
$$\frac{\mathbb{P}(S_n({4\rho_n+\tau_n})=\tau_n,
  \overline{S_n}(4\rho_n+\tau_n-1)<\tau_n|\mathcal
  F_{n\rho'})}{\mathbb{P}(S_n({4\rho_n+\tau_n})=\tau_n,
  \overline{S_n}(4\rho_n+\tau_n-1)<\tau_n)}     $$    converges     toward
$\frac{p'_{\added{2}\rho-\rho'}(\tau-\beta_{\rho'})}{p'_\rho(\tau)}{\mathbbm
  1}_{\overline{\beta}_{\rho'}<\tau}$,   and   the   Lemma~\ref{sumiid}
ensures that this convergence is dominated. So,
 $$\mathbb{E}[f((F_{(n)}(s))_{0\leq s \leq \rho'})] \rightarrow \mathbb{E}\left[f((\beta_s)_{0\leq s\leq \rho'})\frac{p'_{\added{2}\rho-\rho'}(\tau-\beta_{\rho'})}{p'_{2\rho}(\tau)}\mathbbm 1_{\overline{\beta}({\rho'})<\tau}\right] $$
 $$=\mathbb{E}\left[f\left(\left(F^{0\rightarrow \tau}_{[0,2\rho]}(s)\right)_{0\leq s\leq \rho'}\right)\right].$$
 
\end{proofclaim}

\begin{claim}
  \label{cl:tightF}
  There exists a constant $\alpha>0$ such that
  $$\forall \epsilon >0 \quad \exists C \quad
  \forall   n   \quad   \mathbb{P}\left(\|  F_{(n)}   \|_\alpha   \leq
    C\right)> 1-\epsilon.$$  In particular,
  the family of laws of $\left(F_{(n)}\right)_{n\geq 1}$ is tight for
  the space of probability measure on $\mathcal H$.
\end{claim}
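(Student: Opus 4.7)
The plan is to establish a Kolmogorov-type moment bound for increments of $F_{(n)}$ that is uniform in $n$: I will show that for some fixed $q>2$ there is a constant $C_q$ with
\[
\mathbb{E}\!\left[|F_{(n)}(t) - F_{(n)}(s)|^q\right] \leq C_q\, |t-s|^{q/2}
\]
uniformly in $n$ and in $0 \leq s \leq t$ in the (rescaled) domain of $F_{(n)}$. Given such a bound, a standard Kolmogorov-Centsov argument yields, for every $\alpha < \tfrac{1}{2} - \tfrac{1}{q}$, that the family $(\|F_{(n)}\|_\alpha)_n$ is tight as a family of real random variables, which is precisely the statement of the claim.

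The key input is Lemma~\ref{sumiid-2}, which supplies exactly this type of bound for the Brownian bridge version $B_{(n)}$, where $B_n$ is $S_n$ conditioned only on the endpoint constraint $S_n(4\rho_n+\tau_n) = \tau_n$. To transfer the bound from $B_{(n)}$ to $F_{(n)}$, I would invoke the cyclic identity of Lemma~\ref{bertoin}: if $\nu$ is uniform on $\{0,\ldots,\tau_n-1\}$ and independent of $B_n$, and if we set $N = 4\rho_n + \tau_n$, then $F_n \stackrel{d}{=} \theta_{m_\nu}(B_n)$. (Indeed, $B_n$ conditioned on $\{m_0 = N\}$ is $B_n$ conditioned on first hitting its maximum at time $N$, which, since $B_n(N)=\tau_n$, is exactly the first-passage event defining $F_n$.)

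Now I would analyze the cyclic shift. For $0 \leq s \leq t$, conditionally on $\nu$, either both $2ns + m_\nu$ and $2nt + m_\nu$ lie on the same side of $N$, in which case $F_n(2nt) - F_n(2ns)$ is a single increment $B_n(i_2) - B_n(i_1)$ with $i_2 - i_1 = 2n(t-s)$, or the interval wraps over $N$, in which case
\[
F_n(2nt) - F_n(2ns) \;=\; [B_n(N) - B_n(2ns + m_\nu)] \;+\; [B_n(2nt + m_\nu - N) - B_n(0)],
\]
a sum of two $B_n$-increments whose time-lengths sum to $2n(t-s)$. Applying $|a+b|^q \leq 2^{q-1}(|a|^q + |b|^q)$ and Lemma~\ref{sumiid-2} to each piece (and using that $B_n$ increments over any sub-interval of $[0,N]$ satisfy the same moment bound) yields $\mathbb{E}[|F_{(n)}(t) - F_{(n)}(s)|^q \mid \nu] \leq 2^q C_q\, (t-s)^{q/2}$, uniformly in $\nu$. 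Integrating in $\nu$ (which is independent of $B_n$) preserves the bound.

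The main obstacle that I expect is the wrap-around case, since the two pieces of the broken increment are anchored at the boundary points $0$ and $N$ of $B_n$, where the conditioning of the bridge is most stringent. However, Lemma~\ref{sumiid-2} gives the moment bound uniformly for all pairs $0\leq s \leq t \leq (4\rho_n+\tau_n)/(2n)$, endpoint values included, so this does not cause any real loss: the uniform bound survives through the cyclic shift, and the Kolmogorov criterion concludes the proof, in particular yielding the tightness of $(F_{(n)})_{n\geq 1}$ in $(\mathcal H, d_\mathcal H)$.
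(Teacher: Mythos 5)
Your overall strategy is the one the paper uses: realize $F_n$ as the cyclic shift $\theta_{m_\nu}(B_n)$ of the endpoint-conditioned walk via Lemma~\ref{bertoin}, split an increment of $F_n$ into at most two increments of $B_n$ according to whether the interval wraps over $N=4\rho_n+\tau_n$, feed in the moment bound of Lemma~\ref{sumiid-2}, and conclude by Kolmogorov's criterion. However, there is a genuine gap in the step where you ``apply Lemma~\ref{sumiid-2} to each piece.'' The shift $m_\nu = m_\nu(B_n)$ is not a function of $\nu$ alone: it is the first time $B_n$ attains its maximum minus $\nu$, hence a random, path-dependent time. So the quantities you need to control are of the form $\mathbb{E}\bigl[|B_n(m_\nu(B_n)+2nt)-B_n(m_\nu(B_n)+2ns)|^q\bigr]$, i.e.\ increments of $B_n$ evaluated at random times anchored to a special (argmax-related) point of the path. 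Lemma~\ref{sumiid-2} only bounds $\mathbb{E}[|B_{(n)}(t')-B_{(n)}(s')|^q]$ for \emph{deterministic} $s',t'$, and one cannot condition on $\nu$ (or on the value of $m_\nu$) to reduce to that case, since the event $\{m_\nu(B_n)=k\}$ is heavily correlated with the path. In general, an increment taken at a path-dependent starting time can be substantially larger in $L^q$ than the worst deterministic-time increment, so the clean bound $C_q|t-s|^{q/2}$ does not follow as written.

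The paper sidesteps exactly this issue by reversing the order of operations: it first applies Lemma~\ref{sumiid-2} and Kolmogorov's criterion to $B_{(n)}$ (where all times are deterministic) to get tightness of $\|B_{(n)}\|_\alpha$, and then uses the same two-case wrap-around decomposition to prove the \emph{pathwise} inequality $\|F_{(n)}\|_\alpha \leq 2\|B_{(n)}\|_\alpha$, which holds surely for every realization and every value of $m_\nu$, so the randomness of the shift is harmless. Your proof becomes correct if you make this same move: instead of bounding moments of individual increments of $F_{(n)}$, bound $|F_{(n)}(t)-F_{(n)}(s)|$ by $2\|B_{(n)}\|_\alpha|t-s|^\alpha$ pathwise and transfer the tightness of the Hölder norm.
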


\begin{proofclaim}
      For       any      $\alpha      \in       (0,1/2)$      and
  $X=(X(s))_{0\leq s\leq x} \in \mathcal H$, we write
 $$\left \| X \right \|_{\alpha}=\sup_{0\leq s<t\leq x}\frac{|X(t)-X(s)|}{|t-s|^\alpha}$$
 its $\alpha$-Holder  norm. We  prove a  stochastic domination  of the
 $\alpha$-Holder  norm   of  $F_{(n)}$   by  that  of   $B_{(n)}$,  where  $B_n$  denotes the  random  walk
 $S_n$ conditioned  to have  the appropriate  final value $\tau_n$ at time  $4\rho_n+\tau_n$ and $B_{(n)}$ is the rescaled version of $B_n$. By
 Lemma~\ref{bertoin},  we  can  assume  that $F_n$  is  realized  as
 $\theta_{m_{\nu_n}}(B_n)$ . We
 consider     the    following     two     cases,    noticing     that
 $$\left |  F_{(n)}(t)-F_{(n)}(s) \right  |=\frac{1}{\sqrt{3n}}\left |
   \theta_{m_{\nu_n}}(B_n)(2nt)-  \theta_{m_{\nu_n}}(B_n)(2ns)
 \right |.$$
 \begin{itemize}
 \item                                                              If
   $0\leq   s\leq t\leq   \frac{4\rho_n+\tau_n}{2n}-\frac{m_{\nu_n}(B_n)}{2n}$,
   then by the definition of $\theta$, we have:
    $$\theta_{m_{\nu_n}}(B_n)(2nt)=B_n(m_{\nu_n}(B_n)+2nt)-B_n(m_{\nu_n}(B_n)),$$
    $$\theta_{m_{\nu_n}}(B_n)(2ns)=B_n(m_{\nu_n}(B_n)+2ns)-B_n(m_{\nu_n}(B_n)) $$
    and we get:
    $$\left | F_{(n)}(t)-F_{(n)}(s) \right |=\frac{1}{\sqrt{3n}}|B_n(m_{\nu_n}(B_n)+2nt)-B_n(m_{\nu_n}(B_n)+2ns)|$$
    $$=\left|B_{(n)}\left(\frac{m_{\nu_n}(B_n)}{2n}+t\right)-B_{(n)}\left(\frac{m_{\nu_n}(B_n)}{2n}+s\right)\right|\leq \left \| B_{(n)} \right \|_{\alpha}\left | t-s \right |^{\alpha}$$
    
  \item                                                             If
    $\frac{4\rho_n+\tau_n}{2n}-\frac{m_{\nu_n}(B_n)}{2n}\leq
    s\leq t\leq
    \frac{4\rho_n+\tau_n}{2n}$,  then by  the definition  of $\theta$,  we
    have:
$$
                    \theta_{m_{\nu_n}(B_n)}(B_n)(2nt)=$$$$B_n(m_{\nu_n}(B_n)+2nt-(4\rho_n+\tau_n))-B_n(m_{\nu_n}(B_n)) +B_n(4\rho_n+\tau_n)),$$
$$
                    \theta_{m_{\nu_n}(B_n)}(B_n)(2ns)=$$$$B_n(m_{\nu_n}(B_n)+2ns-(4\rho_n+\tau_n))-B_n(m_{\nu_n}(B_n)) +B_n(4\rho_n+\tau_n)),$$

                    and we get:
    $$\left | F_{(n)}(t)-F_{(n)}(s) \right |$$
    $$=\frac{1}{\sqrt{3n}}|B_n(m_{\nu_n}(B_n)+2nt-(4\rho_n+\tau_n)))-B_n(m_{\nu_n}(B_n)+2ns-(4\rho_n+\tau_n)))|$$
    $$=\left|B_{(n)}\left(\frac{m_{\nu_n}(B_n)}{2n}+t-\frac{(4\rho_n+\tau_n)}{2n}\right)-B_n\left(\frac{m_{\nu_n}(B_n)}{2n}+s-\frac{(4\rho_n+\tau_n)}{2n}\right)\right|$$
    $$\leq \left \| B_{(n)} \right \|_{\alpha}\left | t-s \right |^{\alpha}.$$
    \end{itemize}
    Using the triangular inequality to deal with the third case, i.e. $0\leq
    s\leq \frac{4\rho_n+\tau_n}{2n}-\frac{m_{\nu_n}(B_n)}{2n}\leq        t\leq
    \frac{4\rho_n+\tau_n}{2n}$, we obtain
    $\left \| F_{(n)} \right  \|_{\alpha}\leq 2\left \| B_{(n)} \right
    \|_{\alpha}$.

Let  $\epsilon>0$,   thanks  to   Lemma~\ref{sumiid-2}  and
     Kolmogorov's criterion, we can find some constant $C$ such that
  $$\sup_n\mathbb{P}(\|F_{(n)}\|_\alpha > C)<\epsilon.$$
  By Ascoli's theorem,  this implies that the laws  of $F_{(n)}$'s are
  tight.     
\end{proofclaim}

Claim~\ref{cl:Ffirstpassage} shows that for any
$p\geq 1$ and $0\leq s_1<s_2\cdots<s_p<2\rho$,
    
    $$(F_{(n)}(s_1),F_{(n)}(s_2),\ldots,F_{(n)}(s_p))          \rightarrow
    \left(F^{0\rightarrow \tau}_{[0,2\rho]}(s_1),F^{0\rightarrow \tau}_{[0,2\rho]}(s_2),\ldots,F^{0\rightarrow \tau}_{[0,2\rho]}(s_p)\right).$$ It only remain to deal with the point
    $2\rho$. Consider
    $\epsilon>0$.
By Claim~\ref{cl:tightF}, there exists $\alpha$ and $C$ such that for
all $n$, we have $\mathbb P\left(\left \{ \|F_{(n)}\|_\alpha \leq C \right \}\right)>1-\epsilon$.

    Condition on  the  event
    $\left \{ \|F_{(n)}\|_\alpha \leq C \right \}$, we have

\begin{align*}\left |F_{(n)}\left (2\rho\wedge
    \frac{4\rho_n+\tau_n}{2n}\right)-F_{(n)}\left
    (\frac{4\rho_n+\tau_n}{2n}\right)\right|&\leq
C\,\left|2\rho\wedge
\frac{4\rho_n+\tau_n}{2n} - \frac{4\rho_n+\tau_n}{2n}\right|^\alpha\\
&\leq
C\,\left|2\rho - \frac{4\rho_n+\tau_n}{2n}\right|^\alpha
\end{align*}

Since  $\frac{4\rho_n+\tau_n}{2n}\rightarrow 2\rho$ and $\tau_n \rightarrow
\tau$, for $n$ large enough, we have:
\begin{align*}
  \left |F_{(n)}\left (2\rho\right )-\tau  \right|&
                                                \leq \epsilon
\end{align*}

Therefore we obtain for $n$ large enough:
$$\mathbb{P}\left(\left |F_{(n)}\left (2\rho\right)-\tau  \right|>\epsilon\right)\leq \mathbb{P}\left(\|  F_{(n)}   \|_\alpha
  > C \right)\leq \epsilon.$$ This implies that
$F_{(n)}\left (2\rho\right)$ converges in probability toward the
deterministic value $\tau$. So Slutzky's lemma shows that
$F_{(n)}\left (2\rho\right)$ converges in law toward $\tau$. Thus we have
proved the convergence of the finite-dimensional marginals of
$F_{(n)}$ toward
$\widetilde F^{0\rightarrow \tau}_{[0,2\rho]}$. By
Lemma~\ref{lemmatightofvn}, $F_{(n)}$ is tight so Prokhorov's lemma
give the result.
\end{proof}

\section{Convergence of the contour pair of well-labeled forests}
\label{section7}

Consider
$(\rho_n)\in \mathbb{N}^{\mathbb{N}}, (\tau_n)\in
\mathbb{N}^{\mathbb{N}}$ such that, there exists $\rho,\tau\in \mathbb
R_+^*$ satisfying:
$$\frac{\rho_n}{n}\longrightarrow \rho \text{ and }
\frac{\tau_n}{\sqrt{n}}\rightarrow \tau.$$

For $n\geq 1$, let $(F_n,\ell_n)$ be a random well-labeled forest
uniformly distributed in $\mathcal{F}^{\rho_n}_{\tau_n}$.  For
convenience, we write $({C}_n,{L}_n)$ the contour pair
$({C}_{ F_n},{L}_{( F_n, \ell_n)})$ of $(F_n,\ell_n)$ (see
Section~\ref{subsection:forestandwell-labelings} for the definitions).  Let the same notation $C_n,L_n$
denote its piecewise linear interpolation. When $2\rho_n+\tau_n< 2n\rho$, we
assume that $C_n$ is extended to take value $\tau_n$ on
$[2\rho_n+\tau_n, 2n\rho]$.  Then we define
the rescaled versions:

$$ C_{(n)}=\left(\frac{
    C_n(2ns)}{\sqrt{3n}}\right)_{0\leq s\leq \max\left(\frac{2\rho_n+\tau_n}{2n},\rho\right)}\,
\, \, \,  \text{and}\, \, \, \,   L_{(n)}=\left(\frac{
    L_n(2ns)}{n^{1/4}}\right)_{0\leq s\leq \frac{2\rho_n+\tau_n}{2n}} $$

The goal of this section is to prove the following lemma:

\begin{lemma}
  \label{main1}
In the  sense of
  weak convergence in the space $(\mathcal H,d_{\mathcal H})^2$ when
  $n$ goes to infinity, we have:
  $$\left(C_{(n)},L_{(n)}\right)           \to
  \left(\widetilde F^{0\rightarrow    \tau}_{\left    [     0,\rho    \right    ]},
    Z^\tau_{[0,\rho]}\right).$$
\end{lemma}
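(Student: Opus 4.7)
The plan is to prove the joint convergence via two stages, combined by a tightness argument. First, marginal convergence of $C_{(n)}$ to $\widetilde F^{0 \to \tau}_{[0,\rho]}$ comes from the bijection of Lemma~\ref{bijection3dominating}: a uniform random well-labeled forest $(F_n, \ell_n) \in \mathcal F^{\rho_n}_{\tau_n}$ corresponds to a uniform random inverse $3$-dominating word $b_n \in \mathcal D^{-1}_{3, 3\rho_n+\tau_n, \rho_n}$, which after the substitution $0 \mapsto +1$, $1 \mapsto -3$ becomes a uniform random path $W_n \in \mathcal P_{3, 3\rho_n+\tau_n, \rho_n}$. By Lemma~\ref{convergetowardbr}, $W_{(n)} \to F^{0 \to \tau}_{[0, 2\rho]}$ in distribution on $(\mathcal H, d_{\mathcal H})$. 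The contour $C_n$ is a deterministic functional of the encoding: at contour time $i$, the floor index $fl(r_{F_n}(i))$ equals one plus the running maximum of $C_n$, and the depth is $|r_{F_n}(i)| = \overline{C_n}(i) + 1 - C_n(i)$, so $C_n$ can be recovered from $W_n$ by tracking the number of $0$'s and $1$'s between successive positions. The time change $\phi_n$ from the length-$(4\rho_n + \tau_n)$ encoding to the length-$(2\rho_n + \tau_n)$ contour satisfies $\phi_n(i) \sim 2i$ in the scaling regime, because the $2\rho_n$ stem-steps of the encoding are invisible to the contour. A continuous-mapping argument then transfers the convergence of $W_{(n)}$ into $C_{(n)} \to \widetilde F^{0 \to \tau}_{[0, \rho]}$.

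Second, for the labels, the well-labeling condition determines $\ell_n$ from the tree-with-stems. Conditionally on the bare tree $F_n$ (equivalently, on $C_n$), the label at a tree-vertex of depth $d$ is essentially a random walk along the downward path from the floor to that vertex, with centered $\{-1, 0, +1\}$-valued increments of bounded variance coming from the random placement of the stems. A conditional central limit theorem then gives that $L_{(n)}$ converges to a centered Gaussian process. The conditional covariance structure, for $0 \leq s \leq t \leq \rho$, is proportional to the rescaled depth of the lowest common ancestor of $r_{F_n}(\lfloor 2ns \rfloor)$ and $r_{F_n}(\lfloor 2nt \rfloor)$ in the forest, which in the scaling limit is precisely $\check F(s, t)$. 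Choosing the appropriate proportionality constant (absorbed into the $n^{1/4}$ rescaling) identifies the limit as $Z^{\tau}_{[0, \rho]}$. Tightness of $(C_{(n)}, L_{(n)})$ in $(\mathcal H, d_{\mathcal H})^2$ follows from a Kolmogorov--Chentsov criterion applied to each coordinate, using Lemma~\ref{sumiid-2} for $C_{(n)}$ and analogous conditional moment estimates for $L_{(n)}$; Prokhorov's theorem then yields joint weak convergence.

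The main obstacle I foresee is the rigorous treatment of the conditional CLT for the labels. The uniform law on well-labelings is not a law of independent increments: the monotonicity $\ell_n(u \cdot 1) \leq \cdots \leq \ell_n(u \cdot c_u)$ required by the well-labeling introduces local correlations between sibling increments, and the boundary conditions $\ell_n \equiv 0$ on the floors introduce global constraints. Controlling these in the spirit of the Brownian snake convergence arguments used by Bettinelli~\cite{bettinelli2010scaling} should work here, since in the large-$n$ limit the marginal distributions of the increments dominate; but verifying that the proportionality constants are precisely calibrated, so that the limit is exactly $Z^{\tau}_{[0, \rho]}$ and not a scalar multiple, will require careful combinatorial bookkeeping that I expect to be the most delicate part of the argument.
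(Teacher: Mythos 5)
Your treatment of the contour function matches the paper's: encode the forest as a uniform inverse $3$-dominating word, pass to the path $W_{(n)}$, invoke Lemma~\ref{convergetowardbr}, and recover $C_n$ through the time change; you even correctly identify the limit as $\widetilde F^{0\to\tau}_{[0,\rho]}=\tfrac12\bigl(F^{0\to\tau}_{[0,2\rho]}+\overline{F^{0\to\tau}_{[0,2\rho]}}\bigr)$ rather than the first-passage bridge itself, which is a genuinely subtle point the paper singles out in a remark after Lemma~\ref{convergenceofcontour}. That half of the argument is sound and is essentially the paper's Lemmas~\ref{lemmatightofvn} and~\ref{convergenceofcontour}.

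The gap is in the label process, and it is exactly the obstacle you flag at the end without resolving. Your reduction rests on the claim that, conditionally on the tree, the label along a root-to-vertex path is a walk with centered increments whose ``marginal distributions dominate in the large-$n$ limit.'' This is not merely unproven; as stated it is false for a fixed child position. Under $\nu_k$ (the uniform law on non-decreasing vectors in $\{-1,0,1\}^k$) the marginal of the $i$-th coordinate is \emph{not} uniform --- for $k=2$ the first coordinate takes the value $-1$ with probability $1/2$ --- so the increment attached to a specific child of a vertex is biased, and a na\"ive conditional CLT along a deterministic ancestral line would produce a drifting, miscalibrated limit. What rescues the argument is that the marked vertices are \emph{uniformly random}, so the child positions along their ancestral lines are themselves randomized; the paper makes this exact (not asymptotic) via the symmetrization device of Section~\ref{sec:symforest}: the partial symmetrization $(F_{\bm p},\ell^0_{\bm p})$ preserves the law $LGW(\nu,\rho,\tau)$ together with the marked set (Lemma~\ref{partialsymmetrization}), the complete symmetrization has law $LGW(\widehat\nu,\rho,\tau)$ with exchangeable, uniformly distributed marginals, and Lemma~\ref{corollarypartialsymmetrization} shows the label sums along the ancestral lines of the marked vertices coincide in law between the two --- restricting to the vertices of $O_U(F)$ having exactly one child in the ancestral set, since otherwise even the symmetrized increments fail to be independent. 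Only after this reduction do the increments become i.i.d.\ uniform on $\{-1,0,1\}$, at which point Rosenthal's inequality yields the moment bounds for tightness (Lemma~\ref{tightofz_n}) and the Bettinelli-style argument identifies the covariance $\check F(s,t)$. Your proposed ``analogous conditional moment estimates for $L_{(n)}$'' would hit the same dependence problem: the paper's tightness of the \emph{unsymmetrized} $L_{(n)}$ (Lemma~\ref{tightofZ_n}) is itself deduced by transporting H\"older estimates through the symmetrization isomorphism. Without this mechanism, or an equivalent one, the second half of your argument does not go through.
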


\subsection{Tightness of the contour function}

Recall that $\|\cdot\|_\alpha$ denotes the $\alpha$-Hölder norm.

\begin{lemma}[Tightness of contour function]
  \label{lemmatightofvn}
  There exists a constant $\alpha>0$ such that
  $$\forall \epsilon >0 \quad \exists C \quad
  \forall   n   \quad   \mathbb{P}\left(\|  C_{(n)}   \|_\alpha   \leq
    C\right)> 1-\epsilon.$$  In particular,  the family
  of laws of $\left(C_{(n)}\right)_{n \geq 1}$ is tight in the space of probability
  measures on $\mathcal H$.
\end{lemma}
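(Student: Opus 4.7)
The plan is to reduce the tightness of $C_{(n)}$ to that of the walk $W_{(n)}$ already established in Claim~\ref{cl:tightF}, by exploiting the bijective encoding of well-labeled forests by walks in $\mathcal P_{3, 3\rho_n+\tau_n, \rho_n}$ from Lemma~\ref{bijection3dominating} and Section~\ref{sec:convergencebrownianbridge}. Under this bijection a uniform $(F_n, \ell_n)$ corresponds to a uniform $W_n$, and a direct accounting of the walker's position as we read $W_n$ yields the key identity
\[
C_n\bigl(\phi(k)\bigr) = \tfrac{1}{2}\bigl(k + W_n(k)\bigr) - s(k), \qquad 0 \leq k \leq 4\rho_n+\tau_n,
\]
where $s(k)$ counts the stem steps among the first $k$ steps of $W_n$ and $\phi(k) = k - s(k)$ counts the non-stem ones. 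After rescaling, this reads $C_{(n)}(t) = \tfrac{1}{2} W_{(n)}(\psi_{(n)}(t)) + \sqrt{n/3}\,(2t - \psi_{(n)}(t))$, where $\psi_{(n)}$ is the rescaled inverse of $\phi$.

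From this identity, tightness of $C_{(n)}$ will decompose into tightness of the composition $W_{(n)} \circ \psi_{(n)}$ and tightness of the remainder $R_{(n)}(t) := \sqrt{n/3}\,(2t - \psi_{(n)}(t))$. The first piece will be handled by combining Claim~\ref{cl:tightF} with the monotonicity and bounded range of $\psi_{(n)}$, whose modulus of continuity is itself controlled by the stem-counting estimate sketched below. The second piece measures the deviation of $s$ from its asymptotic linear trend $k/2$ (the fraction of stem steps among all steps tends to $1/2$), and a direct calculation gives
\[
R_{(n)}(t_2) - R_{(n)}(t_1) = -\frac{1}{\sqrt{3n}}\Bigl(\bigl(s(\psi(2nt_2)) - s(\psi(2nt_1))\bigr) - \tfrac{1}{2}\bigl(\psi(2nt_2) - \psi(2nt_1)\bigr)\Bigr),
\]
so tightness of $R_{(n)}$ will reduce to a uniform Hölder moment bound on the centered stem counter $s(k) - k/2$ at scale $\sqrt n$.

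To obtain this moment estimate I would follow the same strategy used for $W_{(n)}$ in the proof of Claim~\ref{cl:tightF}: apply Lemma~\ref{bertoin} to realize $W_n$ as the cyclic shift $\theta_{m_{\nu_n}}(B_n)$ of an unconditioned bridge walk $B_n$, and combine this with the sum-of-i.i.d.\ moment estimates of Lemma~\ref{sumiid-2} applied to $B_n$. This should yield $\mathbb{E}\bigl[|R_{(n)}(t_2) - R_{(n)}(t_1)|^q\bigr] \leq C_q |t_2 - t_1|^{q/2}$ for some $q > 2$, after which Kolmogorov's continuity criterion together with Ascoli's theorem give the claimed uniform Hölder control on $C_{(n)}$, and hence tightness of its family of laws in $\mathcal H$.

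The main obstacle will be the moment bound on the centered stem counter $s(k) - k/2$: unlike $W_n$, the stem positions are not themselves i.i.d., since they are determined by the underlying plane tree with stems arising from Section~\ref{subsection:forestandwell-labelings}. The hardest step is therefore to reinterpret the increments of $s$ as sufficiently weakly dependent to fit in the framework of Lemma~\ref{sumiid-2} after applying the Bertoin--Chaumont--Pitman shift; this should be possible since conditionally on the plane tree skeleton, stem placements among available angles decouple vertex by vertex. Once this step is carried out, the rest of the argument parallels the proof of Claim~\ref{cl:tightF}.
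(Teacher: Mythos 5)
Your overall strategy---reducing tightness of $C_{(n)}$ to the already-established H\"older control of $W_{(n)}$ via the encoding of Lemma~\ref{bijection3dominating}---is indeed the paper's strategy, and the exact identity $C_n(\phi(k)) = \tfrac12(k + W_n(k)) - s(k)$ is correct (one can verify it step-by-step by comparing increments: a stem step contributes $(+1,+1,+1)$, an ingoing/floor step contributes $(+1,+1,0)$, an outgoing tree-edge contributes $(-1,-3,0)$ to $(C_n\circ\phi, W_n, s)$, and all three cases match). However, the identity is where you diverge from the paper, and in a way that creates a genuine gap rather than just a different route.

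The paper does not derive an exact identity; it derives a one-sided comparison $|C_n(k)-C_n(k')| \le |W_n(\widetilde k)-W_n(\widetilde k')|$, and crucially applies it only to specific pairs $(k,k')$ for which it actually holds: pairs where one endpoint is the floor (or nearest common ancestor) of the other. For such pairs the stem contribution enters $W_n(\widetilde k)-W_n(\widetilde k')$ with a favourable sign, since the fully explored subtrees between $k$ and $k'$ each contribute two stems per tree-vertex and this dominates the $(-3)$-weighted down-steps. (The inequality is false for arbitrary pairs---consider $k$ at a floor and $k'$ one level down after traversing several floor edges---which is precisely why the proof passes through the floor-and-ancestor pivots.) This comparison absorbs the stem fluctuation entirely; no separate control of $s(k)$ is needed, and the time-change $k\mapsto\widetilde k$ is trivially Lipschitz with constants in $[1,3]$.

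Your approach instead leaves the stem fluctuation sitting in the open as the remainder $R_{(n)}\propto (k - 2s(k))$, and this is where the gap is. First, $k-2s(k)$ is not a sum of i.i.d.\ increments of $W_n$, nor of anything that fits Lemma~\ref{sumiid-2}: its increments are determined by which $0$-bits are stems, which requires decoding the tree, so they are strongly dependent and have no Markovian description in the $W_n$ time-scale. The Bertoin--Chaumont--Pitman shift acts on $W_n$ as a conditioned bridge, but does not act on the stem indicator process in any way that produces exchangeability. Second, and more fundamentally, the remainder is not an independent object: writing out the bookkeeping (depth, floor progress, partial stems along the ancestor chain) gives $k-2s(k) = 2\,\overline{C_n}(\phi(k)) - C_n(\phi(k)) - p(k)$, where $p(k)$ counts the partial stems on the current ancestor path. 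That is, $R_{(n)}$ is \emph{expressed back in terms of the very contour function $C_{(n)}$ you are trying to bound}, up to the term $p(k)$. So the proposed decomposition is circular: bounding $R_{(n)}$ requires bounding $C_{(n)}$ and $\overline{C_{(n)}}$ already. The ``conditionally on the tree skeleton, stem placements decouple'' heuristic would have to be pushed through a second moment computation that keeps the tree random, and there is no indication of how to avoid re-introducing the contour in the estimates. Until that circularity is broken or the remainder is re-expressed purely in terms of $W_n$ and its running maximum (which would essentially reproduce the paper's argument for Lemma~\ref{convergenceofcontour}, not for tightness), the proof is incomplete.
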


\begin{proof}

  By the bijection of Lemma~\ref{bijection3dominating} and
  Section~\ref{section6}, we can consider $W_n$ the element of
  $\mathcal P_{3,3\rho_n+\tau_n, \rho_n}$ corresponding to $(F_n,\ell_n)$.
  Note that $W_n$ is a uniform random element of
  $\mathcal P_{3,3\rho_n+\tau_n, \rho_n}$.

  The convergence of $W_{(n)}$ (see Lemma~\ref{convergetowardbr}) implies that:
  $$\exists  \alpha   >0  \quad\forall   \epsilon  >0   \quad  \exists
  C \quad \forall n\quad \mathbb{P}(\|W_{(n)}\|_\alpha \leq
  C)> 1-\epsilon.$$
  
  Note that an integer $k$ such that $0\leq k \leq {2\rho_n+\tau_n}$
  corresponds to an angle $a(k)$ of the plane rooted tree representing
  $F$ (see Section~\ref{subsection:forestandwell-labelings}).  While
  encoding $(F,d)$ with a binary word of
  $\mathcal D_{3,3\rho_n+\tau_n, \rho_n}^{-1}$ starting from the root angle,
  we denote $\widetilde{k}$ the number of bits written before reaching
  angle $a(k)$.

 One can check that for all $k,k'\in[\![0,{2\rho_n+\tau_n}]\!]$, we
 have: $$|C_n(k)-C_n(k')|\leq
 |W_n(\widetilde{k})-W_n(\widetilde{k'})|,$$

 and $$|k-k'|\leq |\widetilde{k}-\widetilde{k'}|\leq 3|k-k'|.$$

  We use the definition of function $f$ and $r_F$ defined in
  Section~\ref{subsection:forestandwell-labelings}

  Consider $0\leq x < y \leq \frac{2\rho_n+\tau_n}{2n}$ such that $2nx, 2ny\in
\mathbb N$. Let $s=2nx$ and $t=2ny$.
    It is always possible
  to choose $u,v\in \mathbb N$, such that $s \leq u\leq v \leq t$, and satisfying:
  \begin{itemize}
  \item if $fl(r_F(s))\neq fl(r_F(t))$,
    then $r_F(u),r_F(v)\in F_1$, $r_F(u)=fl(r_F(s))$ and $r_F(v)=fl(r_F(t))$
      \item if  $fl(r_F(s))= fl(r_F(t))$, then $u=v$ and
    $r_F(u)$  is  the  nearest common ancestor of $r_F(s)$ and $r_F(t)$.
  \end{itemize}

  Using the triangular inequality, we get:
  \begin{equation*}
    \begin{split}
      \left    |   C_{n}(s)-C_{n}(t)    \right    |
  &\leq   \left    |
    C_{n}(s)-C_{n}(u)  \right  |+  \left  |  C_{n}(u)-C_{n}(v)
  \right   |+\left   |   C_{n}(v)-C_{n}(t)  \right   |\\
&
  \leq   \left    |
    W_{n}(\widetilde{s})-W_{n}(\widetilde{u})  \right  |+  \left  |  W_{n}(\widetilde{u})-W_{n}(\widetilde{v})
  \right   |+\left   |   W_{n}(\widetilde{v})-W_{n}(\widetilde{t})  \right   |
\end{split}
\end{equation*}

  We obtain
  \begin{align*}
    \left | C_{(n)}(x)-C_{(n)}(y) \right | \leq &\left |
                                             W_{(n)}(\widetilde{s}/2n)-   W_{(n)}(\widetilde{u})\right   |+\left   |
                                             W_{(n)}(\widetilde{u})-
                                                  W_{(n)}(\widetilde{v})\right
                                                  |\\ &+\left   |
                                             W_{(n)}(\widetilde{v})- W_{(n)}(\widetilde{t})\right |\\
                                           \leq &C (\left |
                                             \widetilde{s}/2n-\widetilde{u}
                                             \right |^{\alpha}+\left |
                                             \widetilde{u}-\widetilde{v}
                                             \right |^{\alpha}+\left |
                                             \widetilde{v}-\widetilde{t}
                                             \right |^{\alpha})\\
                                           \leq &C\left(\frac{3}{2n}\right)^\alpha (\left | {s}-{u} \right |^{\alpha}+\left | {u}-{v} \right |^{\alpha}+\left | {v}-{t} \right |^{\alpha}).
  \end{align*}

  Using     the    inequality
  $a^{\alpha}+b^{\alpha}+c^{\alpha}\leq 3(a+b+c)^{\alpha}$, we get

\begin{align*}
  \left | C_{(n)}(x)-C_{(n)}(y) \right | 
  &\leq 3C\left(\frac{
   3}{2n}\right)^\alpha (\left | {s}-{u} \right
   |+\left | {u}-{v} \right |+\left | {v}-{t} \right |)^\alpha\\
  &\leq
    3C\left(\frac{3}{2n}\right)^\alpha |s-t|^\alpha\\
  &\leq
    3^{\alpha+1}C |x-y|^\alpha
\end{align*}

This inequality is satisfied for
$0\leq x < y \leq \frac{2\rho_n+\tau_n}{2n}$ such that
$2nx, 2ny\in \mathbb N$. It is also satisfied for all
$0\leq x < y \leq \frac{2\rho_n+\tau_n}{2n}$ by linear interpolation. 
\end{proof}

\subsection{Conditioned Galton-Watson forest}

In this section, we introduce the notion of Galton-Watson forest which
allows us to present the law of uniform random well-labeled forests.

Let $(F,\ell)$ be a well-labeled forest in $\mathcal F^\rho_\tau$.
 For
 convenience, in this section, we extend the function $d$ to the set
 of tree-edges of $F$ by letting: for all $u\in F$
  such that $c_u(F)\geq 1$, for all $i\in \{1,\cdots, c_u(F)\}$, we define: 
  $$\ell(\{u, u\, i\})=\ell(u\, i)-\ell(u)$$

  Note that the value of $\ell$ on the set of tree-edges of $F$ is
  sufficient to recover $\ell$. 

For       $\tau        \in       \mathbb{N}$, let       
  $\mathbb{F}^{\infty}_\tau = \bigcup_{\rho\geq 0} \mathbb{F}^\rho_{\tau}$.
  
  Let $G$ be a random variable with geometric law of parameter $3/4$
  (i.e. $\mathbb{P}(G=c)=\frac{3}{4}\left(\frac{1}{4}\right)^{c}$ for
  $c\in \mathbb{N}$). Let $B$ be a random variable with law given
  by:
$$\mathbb{P}(B=c)=\frac{{\binom{c+2}{2}\, \mathbb{P}(G=c)}
}{\mathbb{E} \left[\binom{G+2}{2}\right]}, \text{ for }
c\in \mathbb{N} .$$

\begin{definition}
  For $\tau \in \mathbb N$, 
  a \emph{$\tau$-Galton-Watson forest} 
  is a random element $F'$ of $\mathbb{F'}^\infty_{\tau}$ such that,
  independent for each $u\in F'$, we have $c_u(F')$ has law $G$ if $u$
  is a floor and $c_u(F')$ has law $B$ if $u$ is a tree-vertex.
\end{definition}

  Let  $H$ be  a $\tau$-Galton-Watson  forest  conditioned to have $\rho$  tree-vertices. For each tree-vertex
  $v$ of $F'$, we add two stems incident to $v$, uniformly at random from
  among the
  $\binom{c_v(F')+1}{2}+\binom{c_v(F')+1}{1}=\binom{c_v(F')+2}{2}$
  possibilities. Let $(H,\ell)$ be the resulting forest of $\mathcal
  F^\rho_\tau$ (see Section~\ref{subsection:forestandwell-labelings} for the correspondence between stems
  and the function $\ell$).

\begin{lemma}
  \label{bijectionvalid2}
  $(H,\ell)$ is  uniformly distributed over $\mathcal  F^\rho_\tau$. 
\end{lemma}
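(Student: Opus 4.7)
The plan is to verify that $\mathbb{P}(H = F,\, \ell = \ell_0)$ is constant over $(F,\ell_0) \in \mathcal{F}^{\rho}_{\tau}$ by a direct computation that splits into (i) the probability of the underlying forest, and (ii) the probability of the labeling given the forest. The two pieces will be reciprocals (up to a constant), which yields uniformity.

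First, I would compute $\mathbb{P}(F' = F)$ for a fixed $F \in \mathbb{F}^{\rho}_{\tau}$. By independence of the $c_u(F')$, and using the key combinatorial identity $\sum_{u \in F} c_u(F) = |F_{\geq 2}| = \rho$ (since every element of $F \setminus F_1$ has exactly one parent), the definitions of $G$ and $B$ give
$$
\mathbb{P}(F' = F) \;=\; \left(\frac{3}{4}\right)^{\tau+1+\rho}\left(\frac{1}{4}\right)^{\rho} \mathbb{E}\!\left[\binom{G+2}{2}\right]^{-\rho}\,\prod_{u \in F \setminus F_1} \binom{c_u(F)+2}{2}.
$$
The prefactor depends only on $\rho$ and $\tau$, so upon conditioning on the event $\{F' \in \mathbb{F}^{\rho}_{\tau}\}$ one obtains
$$
\mathbb{P}(H = F) \;=\; \frac{1}{Z}\,\prod_{u \in F \setminus F_1} \binom{c_u(F)+2}{2},
$$
where $Z$ is the natural normalization summed over $\mathbb{F}^{\rho}_{\tau}$.

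Next, conditionally on $H = F$, two stems are placed independently at each tree-vertex $u$, uniformly among the $\binom{c_u(F)+2}{2}$ possibilities. Section~\ref{subsection:forestandwell-labelings} explains that the pair of stems at $u$ determines the sequence of labels $\ell(u\,1) \leq \ldots \leq \ell(u\,c_u(F))$ of its children, taking values in $\{\ell(u)-1, \ell(u), \ell(u)+1\}$, and that this correspondence is a bijection (the $\binom{c_u(F)+2}{2}$ weakly increasing sequences with three allowed values are exactly counted by the two stem positions among $c_u(F)+2$ gaps). Hence for any $(F,\ell_0) \in \mathcal{F}^{\rho}_{\tau}$,
$$
\mathbb{P}(\ell = \ell_0 \mid H = F) \;=\; \prod_{u \in F \setminus F_1} \frac{1}{\binom{c_u(F)+2}{2}}.
$$

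Multiplying the two expressions, the binomial products cancel and we are left with $\mathbb{P}(H = F,\, \ell = \ell_0) = 1/Z$, independent of the particular $(F,\ell_0)$. This proves that $(H,\ell)$ is uniformly distributed on $\mathcal{F}^{\rho}_{\tau}$. The only mildly subtle step is the bijection between stem placements at a tree-vertex and valid weakly increasing sequences of child labels; everything else is bookkeeping around the Galton--Watson product formula and the cancellation of the $\binom{c_u+2}{2}$ factors, which is precisely what motivates the choice of the law $B$ as a size-biasing of $G$.
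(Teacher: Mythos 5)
Your proof is correct and follows essentially the same route as the paper: both arguments reduce to the observation that the size-biasing factor $\binom{c_u(F)+2}{2}$ built into the law of $B$ cancels exactly against the $1/\binom{c_u(F)+2}{2}$ coming from the uniform placement of the two stems (equivalently, the uniform choice among the weakly increasing label sequences of the children), leaving a joint probability depending only on $\rho$ and $\tau$. Splitting the computation into $\mathbb{P}(H=F)$ times $\mathbb{P}(\ell=\ell_0\mid H=F)$ rather than writing the joint product directly is only an organizational difference.
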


\begin{proof}
  Let $(F,\ell')\in \mathcal F^\rho_\tau$. For each $1\leq i\leq \tau$,
  assume that the list of vertices of $F^i$ (the $i$-th tree of $F$ as
  in Section~\ref{subsection:forestandwell-labelings}) in
  lexicographic order is $v_{i1},v_{i2},...,v_{in_i}$.  Then $(H,\ell)$
  is equal to $(F,\ell')$ if and only if all the vertices of $H$ and $F$
  have the same number of children and the stems are inserted at the
  right place to obtain $(H,\ell)$ from $H$. Hence we have: \begin{align*} \mathbb{P}((H,\ell)=(F,\ell')) &\propto
    \prod_{i=1}^{\tau}\left[\mathbb{P}(G=c_{v_{i1}}(F))\prod_{j=2}^{n_i}\frac{\mathbb{P}(B=c_{v_{ij}}(F))}{\binom{c_{v_{ij}}(F)+2}{2}}\right]
                        \\
                                                                                                 &=\prod_{i=1}^{\tau}\left[\mathbb{P}(G=c_{v_{i1}}(F))\prod_{j=2}^{n_i}\frac{\binom{c_{v_{ij}}(F)+2}{2}}{\binom{c_{v_{ij}}(F)+2}{2}}\frac{\mathbb{P}(G=c_{v_{ij}}(F))}{\mathbb{E}\left[\binom{G+2}{2}\right
                                                                                                   ]}\right]\\
                                                           &=\frac{3^{\rho+\tau}}{4^{2\rho+\tau}\left(\mathbb{E}\left[\binom{G+2}{2}\right]\right)^\rho}.
  \end{align*}
  Since the last term does not depend on $(F,\ell')$, this concludes the proof
  of the Lemma.
\end{proof}

\begin{definition}
       Consider
$(\rho,\tau) \in  \mathbb{N}^2$, and $\mu=(\mu_k)_{k\geq 1}$ where
$\mu_k$ is a probability measure on $\mathbb R^k$. Let $LGW(\mu,\rho,\tau)$  be the
law of the well-labeled forest $(F,\ell)\in\mathcal F^\rho_\tau$ such that:
\begin{itemize}
\item $F$ has  the law of the  $\tau$-Galton-Watson forest
 conditioned  to  have  $\rho$  tree
  vertices,
\item Conditionally on
  $H$, independently for each tree-vertex $v$ of $H$ such that
  $c_{v}(H)\geq 1$, let
  $(\ell(\left \{ v,v\,j \right \}))_{1\leq j \leq c_{v}(H)}$ be a random
  vector with law $\mu_{ c_{v}(H)}$
\end{itemize}
\end{definition}

Consider $\nu=(\nu_k)_{k\geq 1}$ where $\nu_k$ is the uniform law over non-decreasing
vectors
$(X_1,X_2,...,X_k)\in \left \{ -1,0,1 \right \}^k$ (i.e. $X_1\leq
\ldots \leq X_k$).

\begin{remark}
\label{rem:galtonlaw}  
A consequence of Lemma~\ref{bijectionvalid2}, is that if $(F,\ell)$ is
uniformly distributed on $\mathcal{F}^\rho_\tau$, then the law of $(F,\ell)$
is $LGW(\nu,\rho,\tau)$.
\end{remark}

\subsection{Symmetrization of a forest}
\label{sec:symforest}
We  adapt  a notion first applied  in the case of  plane trees~\cite{berry2013scaling}  to well-labeled  forest. We  begin
this section with the following definition.

\begin{definition}
  Let  $\mu$   be  a   probability  measure  on   $\mathbb{R}^k$.  The
  \emph{symmetrization}  of  $\mu$,  denoted  by  $\widehat{\mu}$,  is
  obtained by uniformly permuting the marginals of $\mu$. In  other
  words,  if $(X_1,X_2,...,X_k)$  has  law $\mu$,  and  $\sigma$ is  a
  uniformly    random    in    the     set    of    permutations    of
  $\left       \{        1,2,...,k       \right        \}$,       then
  $(X_{\sigma(1)},X_{\sigma(2)},...,X_{\sigma(k)})$       has      law
  $\widehat{\mu}$.
\end{definition}

We now describe the symmetrization of $\nu=(\nu_k)_{k\geq 1}$ where $\nu_k$ is the uniform law over non-decreasing
vectors of
$\left \{ -1,0,1 \right \}^k$  (as in previous section).
Assume
that $(X_1,X_2,...,X_k)$  has law $\nu_k$,  and $\bm{\sigma}$ is  a uniform
random element of the  set of permutations of $\left \{  1,2,...,k
\right \}$. Then, for $x=(x_1,x_2,...,x_k)\in       \{-1,0,1\}^k$, we have:
\begin{equation*}
\begin{split}
\widehat{\nu}_k\{x\}=\mathbb{P}\left \{(X_1,X_2,...,X_k)=(x_{\bm{\sigma_x}(1)},x_{\bm{\sigma_x}(2)},...,x_{\bm{\sigma_x}(k)})
\, ;\,  \bm \sigma^{-1}=\bm \sigma_x\right\},
\end{split}
\end{equation*}
where 
$\bm{\sigma_x}$  is a  permutation
of            $\{1,2,...,k\}$
such that
$(x_{\bm{\sigma_x}(1)},x_{\bm{\sigma_x}(2)},...,x_{\bm{\sigma_x}(k)})$
is non-decreasing. Thus, for $x=(x_1,x_2,...,x_k)\in
\{-1,0,1\}^k$, we have:
\begin{equation*}
\begin{split}
\widehat{\nu}_k\{x\}\propto (n_{-1}(x))!(n_0(x))!(n_1(x))!,
\end{split}
\end{equation*}
where
$n_{-1}(x)$, $n_0(x), n_1(x)$ denotes the  number of
occurrences   of    $-1$, $0$, $1$    in $x$, respectively.
Note that the marginals
of $\widehat{\nu}_k$ are  not i.i.d, but  that each of them  has uniform law
on $\{-1,0,1\}$.

Let $(F,\ell)$ be a well-labeled forest in $\mathcal F_\tau^\rho$ for
$\rho,\tau\in \mathbb N$.
We define the following
set of vectors of permutations:
$$\mathcal P(F)=\{  (\bm{p}_v)_{v  \in  F,\,  c_v(F)>0}\, : \,
\bm{p}_v \textrm{ is a permutation of } \left \{ 1,2,...,c_v(F) \right
\}\}.$$

The \emph{symmetrization of $F$ with respect to
  $\bm{p}\in \mathcal P (F)$} is the forest $F_{\bm{p}}$ obtained from
$F$ by permuting the order of the children at each tree-vertex
$v$ according to $\bm{p}_v$. More formally, we
have $$F_{\bm{p}}=\{\overline{\bm{p}}(v)\, : v\in F\,\},$$ where for
$v=v_1\,\ldots\,v_k$ in $F$, we define
$$\overline{\bm{p}}(v)=v_1\,\bm{p}_{v_1}(v_2)\,\bm{p}_{v_1v_2}(v_3)...\,\bm{p}_{v_1\ldots
  v_{k-1}}(v_k).$$

Note that $F$ and $F_{\bm{p}}$ are isomorphic in terms of
(non-embedded) graphs (the image of a vertex $v$ of $F$ is precisely
$\overline{\bm{p}}(v)$ in $F_{\bm{p}}$). We now define two variants of
labeling function $\ell^0_{\bm{p}},\ell^1_{\bm{p}}$ of $F_{\bm{p}}$ by the
following: for each tree-edge $\left \{ u,u\,i \right \}$ of $F$, let
$$\ell^0_{\bm{p}}(\overline{\bm{p}}(u),\overline{\bm{p}}(u)i)=\ell(u,ui)$$
$$\ell^1_{\bm{p}}(\overline{\bm{p}}(u),\overline{\bm{p}}(ui))=\ell(u,ui).$$

Informally, for $\ell^1_{\bm{p}}$, the labels of $F$ are attached to
edges during the permutation of the children and for $\ell^0_{\bm{p}}$,
the labels stay at their initial position and do not move.

The \emph{partial symmetrization of $(F,\ell)$ with respect to
  $\bm{p}\in \mathcal P (F)$} is the well-labeled forest
$(F_{\bm{p}},\ell^0_{\bm{p}})$.  The \emph{complete symmetrization of
  $(F,\ell)$ with respect to $\bm{p}\in \mathcal P (F)$} is the
labeled forest
$(F_{\bm{p}},\ell^1_{\bm{p}})$. Note that
$(F_{\bm{p}},\ell^1_{\bm{p}})$ is not necessarily a well-labeled
forest.

\begin{lemma}
  \label{lem:toto}
  Let $(F,\ell)$ be a random element on $\mathcal F_\tau^\rho$ with law
  $LGW(\nu,\rho,\tau)$ and $\bm{p}$ be a uniform element on
  $\mathcal P(F)$, then $(F_{\bm{p}},\ell_{\bm{p}}^0)$ has law
  $LGW({\nu},\rho,\tau)$ and $(F_{\bm{p}},\ell_{\bm{p}}^1)$ has law
  $LGW(\widehat{\nu},\rho,\tau)$.
\end{lemma}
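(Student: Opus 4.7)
The plan is to compute, for each target pair $(f_0,\ell_0)\in\mathcal{F}^\rho_\tau$ and each $\varepsilon\in\{0,1\}$, the probability $\mathbb{P}((F_{\bm{p}},\ell^\varepsilon_{\bm{p}})=(f_0,\ell_0))$ by summing over all compatible triples $(F,\bm{p},\ell)$, and to verify that it matches the weight of $(f_0,\ell_0)$ under $LGW(\nu,\rho,\tau)$ or $LGW(\widehat{\nu},\rho,\tau)$. Two structural observations make the computation direct. First, the Galton-Watson probability $\mathbb{P}(F=f)$ and the cardinality $|\mathcal{P}(f)|=\prod_v c_v(f)!$ depend only on the multiset of degrees of $f$, hence are constant on the equivalence class $[f_0]$ of embedded forests isomorphic to $f_0$ as abstract rooted forests. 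Second, there is a natural bijection between pairs $(f,p)$ with $f\in[f_0]$, $p\in\mathcal{P}(f)$ and $f_p=f_0$ on the one hand, and families of permutations $(\sigma_{v'})_{v'\in f_0,\,c_{v'}(f_0)\geq 1}$ with $\sigma_{v'}\in S_{c_{v'}(f_0)}$ on the other: one direction reads off $\sigma_{v'}=p_u$ for $u=\overline{\bm{p}}^{-1}(v')$, and the other reconstructs $f$ and $p$ recursively from $f_0$ by reordering the children at each vertex.

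Using this parameterization, the constrained sum over compatible $(f,p)$ becomes an unconstrained product over vertices of $f_0$. For $\ell^0_{\bm{p}}$, unwinding the definition $\ell^0_{\bm{p}}(\overline{\bm{p}}(u),\overline{\bm{p}}(u)\,i)=\ell(u,u\,i)$ shows that the vector of labels at $v'$ in $F_{\bm{p}}$ coincides position by position with the vector at the corresponding $u$ in $F$, and in particular does not depend on $\sigma_{v'}$. The sum over each $\sigma_{v'}$ therefore contributes a factor $c_{v'}(f_0)!$ that exactly cancels the $1/|\mathcal{P}(f)|$ coming from the uniform sampling of $\bm{p}$, leaving
\[
\mathbb{P}\big((F_{\bm{p}},\ell^0_{\bm{p}})=(f_0,\ell_0)\big)=\mathbb{P}(F=f_0)\prod_{v'\in f_0,\,c_{v'}\geq 1}\nu_{c_{v'}(f_0)}\big((\ell_0)_{v'}\big),
\]
which is exactly the $LGW(\nu,\rho,\tau)$ weight of $(f_0,\ell_0)$.

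For $\ell^1_{\bm{p}}$, the analogous unwinding of $\ell^1_{\bm{p}}(\overline{\bm{p}}(u),\overline{\bm{p}}(ui))=\ell(u,ui)$ shows that the label vector at $v'\in f_0$ is the vector at the corresponding $u\in f$ composed with $\sigma_{v'}$. The factor that survives after summing is
\[
\prod_{v'\in f_0,\,c_{v'}\geq 1}\frac{1}{c_{v'}(f_0)!}\sum_{\sigma\in S_{c_{v'}(f_0)}}\nu_{c_{v'}(f_0)}\big((\ell_0)_{v'}\circ\sigma\big)=\prod_{v'}\widehat{\nu}_{c_{v'}(f_0)}\big((\ell_0)_{v'}\big),
\]
by the very definition of the symmetrization $\widehat{\nu}$ of $\nu$, giving the $LGW(\widehat{\nu},\rho,\tau)$ weight. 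The main obstacle is setting up the bijection in the first paragraph cleanly, since the permutation $p$ lives on the vertex set of $f$ which itself depends on $f$; once this is done, both claims follow from the same computation, with $\nu$ simply replaced by its permutation-averaged version in the $\ell^1$ case.
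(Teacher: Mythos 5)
Your proposal is correct. The paper's own proof is essentially a two-line invocation: it cites the branching property of Galton–Watson processes to conclude that $F$ and $F_{\bm p}$ have the same law, and then says the rest ``follows from the definitions'' of $\ell^0_{\bm p}$, $\ell^1_{\bm p}$, and $\widehat\nu$. You have instead carried out the direct computation that those remarks gesture at: summing the law of $(F_{\bm p},\ell^\varepsilon_{\bm p})$ over preimages $(f,\bm p,\ell)$ of a fixed target $(f_0,\ell_0)$, exploiting that both $\mathbb P(F=f)$ and $|\mathcal P(f)|$ are invariants of the abstract-forest isomorphism class, and encoding the constrained set $\{(f,\bm p): f\in[f_0],\ f_{\bm p}=f_0\}$ via the bijection with $\prod_{v'} S_{c_{v'}(f_0)}$. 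That bijection is exactly the point the paper leaves implicit, and your handling of it is sound: the $\ell^0$ case cancels the $c_{v'}!$ factors against $|\mathcal P(f)|$ to recover $\nu$, and the $\ell^1$ case leaves the symmetrizing average, which is the definition of $\widehat\nu_k$. So the two proofs rest on the same underlying branching/exchangeability structure, but yours replaces the conceptual appeal to the branching property with an explicit preimage count, which is more elementary and self-contained, while the paper trades detail for brevity. Either is acceptable here; the only caution for your write-up is to be explicit, as you are, that the reconstruction of $(f,\bm p)$ from the family $(\sigma_{v'})$ proceeds recursively from the floors down, since $\mathcal P(f)$ is indexed by a vertex set that itself depends on $f$.
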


\begin{proof}
It follows from the branching property of Galton-Watson processes that $F$   and
$F_{\bm{p}}$ have the  same law. The rest follows from the definitions
of $\ell_{\bm{p}}^0,\ell_{\bm{p}}^1,\widehat{\nu}$.
\end{proof}

Recall some notations from
Section~\ref{subsection:forestandwell-labelings}.  For $u\in F$, with
$|u|\geq 2$, $pa(u)$ denotes the parent of $u$ in $F$.  For $u\in F$,
$A_u(F)$ denotes the set of ancestors of $u$ in $F$.  
For $u,v\in F$, we say that  $v< u$ if
 $v\in A_u(F)$. Similarly, we say that $v\leq u$  if
 $v\in (A_u(F)\cup \{u\})$.
 
 Let $U$ be a set of tree-vertices of $F$. We denote
$A_U(F)=\cup_{u\in U}A_u(F)$.
 Let $O_U(F)$ denote the set
 of vertices of $F$ that have exactly one child in $A_U(F)$. Note that
 $O_U(F) \subseteq A_U(F)$. 
We define $\mathcal P_U(F)$ as the subset of vectors $\bm{p}$ of
$\mathcal P(F)$ such that for all $v\in (F\setminus O_U(F))$, we have
$\bm{p}_v$ is equal to identity.  For $\bm{p}\in \mathcal P_U(F)$, we
define $\overline{\bm{p}}(U)=\{\overline{\bm{p}}(u)\, :\, u\in U\}$.

\begin{lemma}
\label{partialsymmetrization}
Let $(F,\ell)$ be a random element on $\mathcal F_\tau^\rho$ with law
$LGW(\nu,\rho,\tau)$. Let $k\in [\![0,\rho+\tau+1]\!]$ and $U$ be a set of $k$
independent and uniformly random vertices of $F$. Let $\bm{p}$ be a
uniformly random element of $\mathcal P_U(F)$. Then $(F,\ell,U)$ and
$(F_{\bm{p}},\ell^0_{\bm{p}},\overline{\bm{p}}(U))$ have the same law.
\end{lemma}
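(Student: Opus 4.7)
The plan is to establish the equality in distribution by constructing a measure-preserving involution on an enlarged probability space that explicitly incorporates the permutation $\bm{p}$. Let
\[
\Omega := \{(F,\ell,U,\bm{p}) : (F,\ell)\in\mathcal{F}^\rho_\tau,\ U\in F^k,\ \bm{p}\in\mathcal{P}_U(F)\},
\]
equipped with the probability measure $\mu$ under which $(F,\ell)$ is uniform on $\mathcal{F}^\rho_\tau$ (equivalently $LGW(\nu,\rho,\tau)$, by Remark~\ref{rem:galtonlaw}), $U$ is uniform given $F$, and $\bm{p}$ is uniform on $\mathcal{P}_U(F)$ given $(F,U)$; concretely each atom has mass $\mu(F,\ell,U,\bm{p}) = \bigl(|\mathcal{F}^\rho_\tau|\cdot |F|^k\cdot |\mathcal{P}_U(F)|\bigr)^{-1}$.

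I would then define $\Phi:\Omega\to\Omega$ by
\[
\Phi(F,\ell,U,\bm{p}) := \bigl(F_{\bm{p}},\,\ell^0_{\bm{p}},\,\overline{\bm{p}}(U),\,\bm{q}\bigr),
\]
where $\overline{\bm{p}}:F\to F_{\bm{p}}$ is the canonical bijection induced by $\bm{p}$ and $\bm{q}$ is the permutation family on $F_{\bm{p}}$ defined by $\bm{q}_v := \bm{p}^{-1}_{\overline{\bm{p}}^{-1}(v)}$. Three properties then require verification. First, that $\Phi$ lands in $\Omega$: the pair $(F_{\bm{p}},\ell^0_{\bm{p}})$ is a well-labeled forest in $\mathcal{F}^\rho_\tau$ because $\overline{\bm{p}}$ sends the $i$-th child of each $u$ to the $i$-th child of $\overline{\bm{p}}(u)$, so the non-decreasing label constraint is carried over verbatim; and $\bm{q}\in\mathcal{P}_{\overline{\bm{p}}(U)}(F_{\bm{p}})$ because $\overline{\bm{p}}$ preserves the ancestor relation and hence restricts to a bijection $O_U(F)\to O_{\overline{\bm{p}}(U)}(F_{\bm{p}})$. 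Second, $\Phi$ is an involution: a direct unfolding of the definitions using $\overline{\bm{q}}=\overline{\bm{p}}^{-1}$ yields $(F_{\bm{p}})_{\bm{q}}=F$, $(\ell^0_{\bm{p}})^0_{\bm{q}}=\ell$, $\overline{\bm{q}}(\overline{\bm{p}}(U))=U$, and the permutation family produced by the second iteration of the construction equals $\bm{p}$. Third, $\Phi$ preserves $\mu$: trivially $|F_{\bm{p}}|=|F|$, and $|\mathcal{P}_{\overline{\bm{p}}(U)}(F_{\bm{p}})|=|\mathcal{P}_U(F)|$ because the bijection $O_U(F)\to O_{\overline{\bm{p}}(U)}(F_{\bm{p}})$ preserves the number of children at each vertex.

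The lemma then follows by marginalization: the joint law of $(F_{\bm{p}},\ell^0_{\bm{p}},\overline{\bm{p}}(U))$ is the projection of $\Phi_*\mu$ onto the first three coordinates, and since $\Phi_*\mu=\mu$ this agrees with the marginal law of $(F,\ell,U)$ under $\mu$. The real work---and the only non-routine part---lies in the combinatorial bookkeeping behind the two identities invoked in the third step: checking that $\overline{\bm{p}}$ restricts to a degree-preserving bijection $O_U(F)\to O_{\overline{\bm{p}}(U)}(F_{\bm{p}})$, and that $\bm{q}$ as defined really sits in $\mathcal{P}_{\overline{\bm{p}}(U)}(F_{\bm{p}})$; once these are in place, the probabilistic content of the lemma reduces to the equality of two cardinalities.
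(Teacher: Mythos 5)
Your proof is correct, and the involution on the enlarged space $\Omega$ is a genuinely cleaner packaging of the argument than the one in the paper. The paper's proof works directly with the target probability: it fixes a candidate value $(F',\ell',U')$, expands $\mathbb{P}[(F_{\bm{p}},\ell^0_{\bm{p}},\overline{\bm{p}}(U))=(F',\ell',U')]$ as a sum over permutation families $\bm{p}'$, and then claims the resulting factorization matches that of $\mathbb{P}[(F,\ell,U)=(F',\ell',U')]$ term by term; the conditioning identities used there are left rather implicit and the summation set is written in a way that depends on the random variables being conditioned on. Your approach absorbs all of that into a single measure-preserving involution $\Phi$, and the measure preservation reduces to three transparent cardinality invariances ($|\mathcal F^\rho_\tau|$, $|F|$, and $|\mathcal P_U(F)|$). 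What each buys: the paper's version stays closer to the probabilistic statement being proved and is a little shorter to state; yours makes the underlying bijective symmetry explicit and is, in my view, easier to audit, at the modest cost of setting up $\Omega$ and verifying the three properties.

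One small slip to fix: you justify that $(F_{\bm{p}},\ell^0_{\bm{p}})$ is a well-labeled forest ``because $\overline{\bm{p}}$ sends the $i$-th child of each $u$ to the $i$-th child of $\overline{\bm{p}}(u)$.'' That is not what $\overline{\bm{p}}$ does --- it sends $u\,i$ to $\overline{\bm{p}}(u)\,\bm{p}_u(i)$. The correct reason is the definition of $\ell^0_{\bm{p}}$ itself: since $\ell^0_{\bm{p}}(\overline{\bm{p}}(u),\overline{\bm{p}}(u)\,i)=\ell(u,u\,i)$, the sequence of outgoing edge labels at $\overline{\bm{p}}(u)$, read in positional order $i=1,\ldots,c_u(F)$, is literally the same sequence as at $u$, hence still non-decreasing in $\{-1,0,1\}$. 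The conclusion you draw is right; only the stated justification is off.
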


\begin{proof}
  Let $(F',\ell')\in \mathcal{F}^\rho_\tau$, $U'$
be a set of
$k$ vertices of $F'$.
We have:
$$
    \mathbb{P}[
    (F,\ell,U)=(F',\ell',U')]
           =\mathbb{P}[(F,\ell)=(F',\ell')]\times
      \frac{1}{(\rho+\tau+1)^k}
            $$
            \begin{equation*}
              \begin{split}
&    \mathbb{P}[
    (F_{\bm{p}},\ell^0_{\bm{p}},\overline{\bm{p}}(U))=(F',\ell',U')]\\
&    =\sum_{\bm{p'}\in\mathcal P(F,U)}
    \left[
    \mathbb{P}[(F_{\bm{p}},\ell^0_{\bm{p}})=(F',\ell')\, ;
    \,\overline{\bm{p}}(U)=U' ;
    \,{\bm{p}}={\bm{p'}} ]
  \right]\\
  &    =\sum_{\bm{p'}\in\mathcal P(F,U)}
    \left[
    \mathbb{P}[(F_{\bm{p}},\ell^0_{\bm{p}})=(F',\ell')\, |
    \,\overline{\bm{p}}(U)=U' ;
    \,{\bm{p}}={\bm{p'}} ]
    \times
\mathbb{P}[\overline{\bm{p}}(U)=U' |
    \,{\bm{p}}={\bm{p'}} ]
    \times
    \mathbb{P}[
    {\bm{p}}={\bm{p'}} ]
  \right]\\
\end{split}
\end{equation*}
  
for all ${\bm{p'}\in\mathcal P_U(F)}$
we have
$
\mathbb{P}[(F_{\bm{p}},\ell^0_{\bm{p}})=(F',\ell')\, |
    \,\overline{\bm{p}}(U)=U' ;
    \,{\bm{p}}={\bm{p'}} ]
=
\mathbb{P}[(F,\ell)=(F',\ell')]$ and

$\mathbb{P}[\overline{\bm{p}}(U)=U' |
    \,{\bm{p}}={\bm{p'}} ]= \frac{1}{(\rho+\tau+1)^k}$
thus we obtain the result.
\end{proof}

We obtain the following lemma (similar to~\cite[Corollary~6.7]{berry2013scaling}).

\begin{lemma}
  \label{corollarypartialsymmetrization}
Let $(F,\ell)$ be a random element on $\mathcal F_\tau^\rho$ with law
 $LGW(\nu,\rho,\tau)$. Let $k\in [\![0,\rho+\tau+1]\!]$ and $U$ be a set of $k$
 independent and uniformly random vertices of $F$. 
  Let $(\widehat F,\widehat \ell)$ be a random element with law
  $LGW(\widehat \nu,\rho,\tau)$. Let $\widehat U$ be a set of $k$
 independent and uniformly random vertices of $\widehat F$.
 Let $U=\{u_1,\ldots,u_k\}$  and $\widehat U=\{\widehat
 u_1,\ldots,\widehat u_k\}$   such that  $u_1,\ldots,u_k$
  and $\widehat
  u_1,\ldots,\widehat u_k$ are  lexicographically ordered.
  For $1\leq i \leq k$, let
  
  $$S_i=\sum_{\substack{v\leq u_i\\
    w\in O_U(F) \\ w=pa(v)}}\ell(w,v)$$
  
  $$\widehat S_i=\sum_{\substack{v\leq {\widehat u_i}\\
    w\in O_{\widehat U}(\widehat F) \\ w=pa(v)}}\widehat \ell(w,v).$$
    Then 
   $(|u_1|,\ldots,|u_k|,S_1,\ldots,S_k)$ and $\left(|\widehat{u}_1|,\ldots,|\widehat{u}_k|,\widehat{S}_1,\ldots,\widehat{S}_k\right)$ have the same law.
\end{lemma}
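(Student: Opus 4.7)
The plan is to invoke Lemma \ref{partialsymmetrization} to reduce the problem to a setting in which the labels contributing to the sums $S_i$ are naturally described by uniform random permutations acting on $\nu$-distributed label vectors, and then to observe that this is precisely the defining property of $\widehat\nu$.

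First I would apply Lemma \ref{partialsymmetrization} to write $(F,\ell,U) \stackrel{d}{=} (F_{\bm p}, \ell^0_{\bm p}, \overline{\bm p}(U))$ with $\bm p$ uniform in $\mathcal P_U(F)$. Since $(|u_1|,\ldots,|u_k|,S_1,\ldots,S_k)$ is a deterministic functional of the triple $(F,\ell,U)$, its law is preserved by this equivalence; so it is enough to show that this vector, computed from the symmetrized triple, has the same law as $(|\widehat u_1|,\ldots,|\widehat u_k|,\widehat S_1,\ldots,\widehat S_k)$ computed from $(\widehat F,\widehat \ell,\widehat U)$.

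Next I would analyze, in the symmetrized triple, the labels that are picked up at each $w\in O_U(F)$. Writing $c:=c_w(F)$ and $T_w\subseteq\{1,\ldots,c\}$ for the set of positions of children of $w$ lying in $A_U(F)\cup U$, these are precisely the children whose edges contribute to some $S_i$. The definition of $\ell^0_{\bm p}$ assigns to the edge from $\overline{\bm p}(w)$ to its child at position $\bm p_w(t)$ the value $\ell(w,w\cdot t)$, so because $\bm p_w$ is a uniform random permutation of $[c]$ independent of $\ell$, the ordered tuple $(\ell(w,w\cdot \bm p_w(t)))_{t\in T_w}$ is distributed as the labels at $|T_w|$ uniformly chosen ordered distinct positions of the $\nu_c$-distributed vector $(\ell(w,w\cdot 1),\ldots,\ell(w,w\cdot c))$. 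By the very definition of $\widehat\nu_c$ as the uniform symmetrization of $\nu_c$, this is exactly the law of the $T_w$-marginal of $\widehat\nu_c$, which is what the $\widehat\nu$-side produces at the corresponding vertex.

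Finally, since labels at distinct vertices are independent in both $LGW(\nu)$ and $LGW(\widehat\nu)$, and since $(F_{\bm p},\overline{\bm p}(U)) \stackrel{d}{=} (F,U) \stackrel{d}{=} (\widehat F,\widehat U)$ as marked Galton--Watson forests with independent uniform $k$-subsets, combining the three previous observations yields the desired equality in law of the two vectors. The hard part will be the second step: one must carefully track how each picked-up label at $\overline{\bm p}(w)$ contributes to which sums $S_i$, and verify that the ordered-tuple identification with the $T_w$-marginal of $\widehat\nu_c$ is faithful under these contribution coefficients, in particular when $w$ has, beyond its unique child in $A_U(F)$, additional children lying in $U$ that each contribute to a single $S_i$ through the same position-permutation $\bm p_w$.
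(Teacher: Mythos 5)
Your proposal is correct and follows essentially the same route as the paper's own proof: reduce via Lemma~\ref{partialsymmetrization} to the partially symmetrized triple $(F_{\bm p},\ell^0_{\bm p},\overline{\bm p}(U))$, and then identify the law of the labels picked up along the ancestral lines with the law produced by $\widehat\nu$, using independence across distinct vertices and the equality in law of the underlying marked forests. Two minor remarks: the clause ``assigns to the edge from $\overline{\bm p}(w)$ to its child at position $\bm p_w(t)$ the value $\ell(w,w\,t)$'' actually describes $\ell^1_{\bm p}$ (under $\ell^0_{\bm p}$ that edge carries $\ell(w,w\,\bm p_w(t))$, which is the tuple you then correctly work with), and your vertex-by-vertex matching of the $T_w$-marginal of $\widehat\nu_c$ is in fact more careful than the paper's argument, which asserts full independence of all these label variables --- a claim that fails when some $w\in O_U(F)$ has several relevant children, although only equality of the joint laws is needed and that is what both arguments ultimately deliver.
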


\begin{proof}
  Let $\bm p$ be a uniformly random element of $\mathcal{P}_U(F)$ and
  consider $(F_{\bm{p}},\ell^0_{\bm{p}},\overline{\bm{p}}(U))$. For
  $v\in F_{\geq 2}$, if $\{pa(v),v\}$ is a tree-edge of $F$ such that
  $pa(v)\in O_U(F)$, then the partial symmetrization of $(F,\ell)$ with
  respect to $\bm p$ uniformly permutes the children of $pa(v)$ but the
  labels are not permuted.  Consider two distinct vertices $u,v\in F$ such that
  $c_u(F),c_v(F)$ are at least $1$. If $u',v'$ are children of $u,v$,
  respectively, then the  values of $\ell(u,u')$ and $\ell(v,v')$ are independent.
It follows that
  the random variables
  $$\left\{\ell_{\bm
      p}^0(\overline{\bm{p}}(w),\overline{\bm{p}}(v))\,:\, 
v\in F,
w\in O_U(F)
 \,\, \,  \text{and} \,           \, \, 
w=pa(v)\right\}
$$
  are independent and uniformly distributed on $\{-1,0,1\}$.

  Thus,
  by Lemma~\ref{partialsymmetrization}, the random variables
  $$\left\{{\ell}(w,v)\,:\,
v\in F,
w\in O_U(F)
 \,\, \,  \text{and} \,           \, \, 
w=pa(v)\right\}
$$
  are
  independent and uniformly distributed on $\{-1,0,1\}$.\\
  Finally, the trees $F$ and $\widehat{F}$ have the same law, so
  $(|u_1|,\ldots ,|u_k|)\overset{(\mathrm
    d)}{=}(|\widehat{u}_1|,\ldots ,|\widehat{u}_k|)$. Moreover, by the
  definition of $\widehat{\nu}$, the random variables
  $$\left\{\widehat {\ell}( w, v)\,:\,
v\in \widehat F,
w\in O_{\widehat U}(\widehat F)
 \,\, \,  \text{and} \,           \, \, 
w=pa(v)\right\}
$$ are
  independent and uniformly distributed on $\{-1,0,1\}$, and the
  result follows.
\end{proof}

\subsection{Tightness of  the labeling function of  a symmetrized Galton-Watson
  forest}
\label{TightnessofthelabelingfunctionofaGaltonWatsonforest}
Recall that $\nu=(\nu_k)_{k\geq 1}$ where $\nu_k$ is the uniform law
over non-decreasing vectors of $\left \{ -1,0,1 \right \}^k$ and
$\widehat \nu$ is the symmetrization of $\nu$ as defined in previous
section.

By Remark~\ref{rem:galtonlaw}, $(F_n,\ell_n)$ is a
random element with law $LGW({\nu},\tau_n,\rho_n)$.
Now consider $(\widehat F_n,\widehat \ell_n)$   a
random element with law $LGW(\widehat{\nu},\tau_n,\rho_n)$. For convenience, we write
$(\widehat{C}_n,\widehat{L}_n)$ the contour pair $({C}_{\widehat F_n},{L}_{(\widehat F_n,\widehat \ell_n)})$ of
$(\widehat F_n,\widehat \ell_n)$.  As before, we consider that
$\widehat{C}_n$ and $\widehat{L}_n$ are linearly
interpolated. We extend $\widehat{C}_n$ to be equal to
$\tau_n$ on $[2\rho_n+\tau_n,2n\rho]$ when $2\rho_n+\tau_n<2n\rho$. Then we define
the rescaled versions:
$$\widehat C_{(n)}=\left(\frac{\widehat
    C_n(2ns)}{\sqrt{3n}}\right)_{0\leq s\leq \max\left(\frac{2\rho_n+\tau_n}{2n},\rho\right)}\,
\, \, \,  \text{and}\, \, \, \,  \widehat L_{(n)}=\left(\frac{\widehat
    L_n(2ns)}{n^{1/4}}\right)_{0\leq s\leq \frac{2\rho_n+\tau_n}{2n}} $$

The aim  of this  section is  to prove the  tightness of  the labeling
function    $\widehat L_{(n)}$.

Since $F_n$ and $\widehat F_n$ do not depend on $\nu$ and
$\widehat \nu$, they have the same law. So the contour functions
$\widehat{C}_{n}$ and $C_{n}$ have the same law (but not necessarily
$\widehat{L}_{n}$ and $L_{n}$).  Thus we can couple the two labeled
forests $(\widehat F_n,\widehat \ell_n)$ and $(F_n,\ell_n)$ so that
$\widehat{C}_{n}= C_{n}$.

We  need the following classical inequality:
\begin{lemma}[Rosenthal's inequality, \cite{petrov1995limit}]
  \label{Rosenthal}
  For each $p\geq 2$, there exists a constant $C_p>0$ such that for
  $k\geq 1$ we have the following. Consider $X,X_1, \ldots, X_k$ a
  sequence of $i.i.d.$ centered random variables in $\mathbb R$.  Let
  $\Sigma=\sum_{i=1}^{k}X_i$. Then:
  $$\mathbb{E}(\left |  \Sigma \right |^p)\leq C_p\left(k\, \mathbb{E}(\left  | X^p
    \right |)+(k\, \mathbb{E}(X^2))^{p/2}\right).$$
\end{lemma}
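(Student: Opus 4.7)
The plan is to follow the classical three-stage route: symmetrize, apply a conditional Khintchine/square-function bound, and then control the square-function by an induction on $p$.

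First I would reduce to the symmetric case. Let $(X_i')_{1\le i\le k}$ be an independent copy of $(X_i)_{1\le i\le k}$ and set $Y_i = X_i - X_i'$ and $\widetilde{\Sigma} = \sum_{i=1}^k Y_i$. Because $\mathbb{E}\Sigma' = 0$, Jensen's inequality applied conditionally on $\Sigma$ gives $\mathbb{E}|\Sigma|^p \leq \mathbb{E}|\widetilde{\Sigma}|^p$, while conversely $\mathbb{E}Y_1^2 \leq 2\,\mathbb{E}X_1^2$ and $\mathbb{E}|Y_1|^p \leq 2^p\,\mathbb{E}|X_1|^p$ by convexity. So it suffices to prove the bound for the symmetric variables $Y_i$ (with a constant worse by a factor $2^p$).

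Next, let $(\varepsilon_i)_{1\le i\le k}$ be i.i.d.\ Rademacher signs, independent from $(Y_i)$. Because each $Y_i$ is symmetric, $(Y_i)$ and $(\varepsilon_i Y_i)$ have the same joint law, hence
\[
\mathbb{E}|\widetilde{\Sigma}|^p = \mathbb{E}\Bigl|\sum_{i=1}^k \varepsilon_i Y_i\Bigr|^p.
\]
Conditioning on $(Y_i)$ and applying Khintchine's inequality for Rademacher sums (which can itself be proved via the moment generating function of $\sum \varepsilon_i a_i$ and Markov's inequality, yielding a constant depending only on $p$) gives
\[
\mathbb{E}|\widetilde{\Sigma}|^p \leq K_p \, \mathbb{E}\Bigl[\Bigl(\sum_{i=1}^k Y_i^2\Bigr)^{p/2}\Bigr].
\]

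Finally, I would control the right-hand side by writing $\sum_i Y_i^2 = A_k + k\,\mathbb{E}Y_1^2$ with $A_k = \sum_i (Y_i^2 - \mathbb{E}Y_i^2)$, so that $(\sum Y_i^2)^{p/2} \leq 2^{p/2-1}(|A_k|^{p/2} + (k\,\mathbb{E}Y_1^2)^{p/2})$. The second term already delivers the $(k\,\mathbb{E}X^2)^{p/2}$ piece of the conclusion. For the first term I proceed by induction on $p$: when $2\le p\le 4$, $p/2\le 2$ and Jensen/Cauchy--Schwarz give $\mathbb{E}|A_k|^{p/2}\leq (k\,\mathrm{Var}(Y_1^2))^{p/4} \leq (k\,\mathbb{E}|Y_1|^p)^{p/4}$, which is dominated by $k\,\mathbb{E}|Y_1|^p + (k\,\mathbb{E}Y_1^2)^{p/2}$ by the AM/GM inequality (split according to whether $k\,\mathbb{E}Y_1^2 \gtrless \mathbb{E}|Y_1|^p/\mathbb{E}Y_1^2$). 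For $p>4$, the $Y_i^2 - \mathbb{E}Y_i^2$ are again i.i.d.\ centered with $p/2$-th moment bounded in terms of $\mathbb{E}|Y_1|^p$, so the induction hypothesis at exponent $p/2$ applies and closes the recursion. The main obstacle is bookkeeping: one must verify that the constants $C_p$ produced by the induction remain finite (the recursion $C_p \leq F(C_{p/2}, K_p)$ terminates after $\lceil \log_2 p\rceil$ steps) and that the $L^2$--$L^p$ interpolation in the base case $p\le 4$ really yields a bound of the required homogeneous form.
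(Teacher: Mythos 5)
The paper states this inequality with a citation to Petrov and gives no proof, so there is no internal argument to compare against; I will assess your proof on its own.

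Your first two reductions are standard and correct: symmetrizing via $Y_i=X_i-X_i'$ costs only a multiplicative constant, and conditioning on $(Y_i)$ plus Khintchine's inequality for Rademacher averages reduces the problem to bounding $\mathbb{E}\bigl[(\sum_i Y_i^2)^{p/2}\bigr]$. The gap is in the base case $2\le p<4$ of the square-function estimate. There you write $\sum Y_i^2 = A_k + k\,\mathbb{E}Y_1^2$ and bound $\mathbb{E}|A_k|^{p/2}\le(\mathbb{E}A_k^2)^{p/4}=(k\operatorname{Var}(Y_1^2))^{p/4}$. This requires $\mathbb{E}Y_1^4<\infty$, which is not guaranteed when $p<4$ (the hypothesis only controls the $p$-th moment), so the bound may be vacuous. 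Worse, the claimed inequality $\operatorname{Var}(Y_1^2)\le\mathbb{E}|Y_1|^p$ is false in general: the two sides are homogeneous of degrees $4$ and $p$ in $Y_1$, so the inequality cannot hold under rescaling $Y_1\mapsto\lambda Y_1$. For the same reason the intermediate quantity $(k\,\mathbb{E}|Y_1|^p)^{p/4}$ scales like $\lambda^{p^2/4}$, not $\lambda^p$, and hence cannot be dominated by the right-hand side of Rosenthal's inequality for all scalings; no AM--GM manipulation can repair a non-homogeneous bound. Since your induction for $p>4$ bottoms out at this base case, the whole recursion is affected.

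The standard way to close this, keeping only $p$-th and second moments in play, is different. Write $T=\sum_i Y_i^2$ and $T_{(i)}=T-Y_i^2$, and note
\[
\mathbb{E}\,T^{p/2}=\sum_{i=1}^k\mathbb{E}\bigl[T^{\,p/2-1}\,Y_i^2\bigr].
\]
For $2\le p\le 4$ the map $x\mapsto x^{p/2-1}$ is concave and subadditive on $\mathbb R_+$, so $T^{p/2-1}\le Y_i^{p-2}+T_{(i)}^{p/2-1}$. Using the independence of $Y_i$ and $T_{(i)}$ and then Jensen, $\mathbb{E}\bigl[T_{(i)}^{p/2-1}\bigr]\le(\mathbb{E}T)^{p/2-1}=(k\,\mathbb{E}Y_1^2)^{p/2-1}$, which yields directly
\[
\mathbb{E}\,T^{p/2}\le k\,\mathbb{E}|Y_1|^p+\bigl(k\,\mathbb{E}Y_1^2\bigr)^{p/2}.
\]
For $p>4$ one instead bounds $T^{p/2-1}\le 2^{p/2-2}\bigl(Y_i^{p-2}+T_{(i)}^{p/2-1}\bigr)$ and handles $\mathbb{E}T_{(i)}^{p/2-1}$ by induction on the number of dyadic halvings of $p$, with a short Hölder interpolation to control the moments that appear. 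This replaces your $A_k$-based recursion and avoids the fourth moment entirely; the rest of your outline (symmetrization and Khintchine) can be kept as is.
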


 We now prove the main result of this section:

\begin{lemma}[Tightness of the labeling function]
  \label{tightofz_n}
  The family of laws of $\left(\widehat L_{(n)}\right)_{n\geq  1}$ is tight for the space of
  probability measure on $\mathcal H$.
\end{lemma}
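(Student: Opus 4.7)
The plan is to verify Kolmogorov's tightness criterion for $\widehat{L}_{(n)}$: I will exhibit an even integer $p \geq 2$ and an exponent $\beta > 1$ together with a constant $C$ such that, uniformly in $n$ and in $0 \leq s < t \leq \max\bigl(\frac{2\rho_n+\tau_n}{2n}, \rho\bigr)$,
$$\mathbb{E}\bigl[|\widehat{L}_{(n)}(t) - \widehat{L}_{(n)}(s)|^{p}\bigr] \leq C\, |t-s|^{\beta}.$$
Combined with $\widehat{L}_{(n)}(0) = 0$ and the deterministic convergence $\frac{2\rho_n + \tau_n}{2n} \to \rho$, an application of Kolmogorov's lemma will then produce uniform Hölder control and hence tightness in $\mathcal{H}$.

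The central observation is that, since $\widehat{\nu}_k$ has uniform $\{-1, 0, 1\}$ marginals (as checked in Section~\ref{sec:symforest}) and since the labels on children of distinct tree-vertices are sampled independently under $LGW(\widehat{\nu}, \rho_n, \tau_n)$, conditionally on $\widehat{F}_n$ every edge label $\widehat{\ell}_n(\{pa(v), v\})$ with $pa(v)$ a tree-vertex is uniform on $\{-1, 0, 1\}$, and two such edge labels are independent as soon as their parents differ. For integers $0 \leq i \leq j \leq 2\rho_n + \tau_n$, set $u_i = r_{\widehat{F}_n}(i)$ and $u_j = r_{\widehat{F}_n}(j)$ and let $w^{*}$ be their lowest common ancestor in $\widehat{F}_n$ (when $u_i, u_j$ lie in distinct trees, the inter-tree contribution from the floor is absorbed as a bounded correction). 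Decomposing the tree path joining $u_i$ to $u_j$ around $w^{*}$ yields the representation
$$\widehat{\ell}_n(u_j) - \widehat{\ell}_n(u_i) = \sum_{k=1}^{N_{i,j}} X_k + R_{i,j},$$
where, conditionally on $\widehat{F}_n$, the $X_k$ are i.i.d.\ centered uniform random variables on $\{-1, 0, 1\}$ (variance $\tfrac{2}{3}$), the integer $N_{i,j}$ is bounded above by the tree-distance $D_{i,j} := \widehat{C}_n(i) + \widehat{C}_n(j) - 2\min_{i \leq k \leq j} \widehat{C}_n(k)$, and $R_{i,j}$ is an almost surely bounded remainder collecting the two edges incident to $w^{*}$ (whose labels share a parent and so are not mutually independent) and the forced floor-to-child contributions. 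This is a pointwise reading of Lemma~\ref{corollarypartialsymmetrization} applied to the single pair $(u_i, u_j)$.

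Applying Rosenthal's inequality (Lemma~\ref{Rosenthal}) to the conditional sum yields
$$\mathbb{E}\bigl[|\widehat{L}_n(j) - \widehat{L}_n(i)|^{p} \,\big|\, \widehat{F}_n\bigr] \leq C_p\bigl(1 + D_{i,j}^{p/2}\bigr)$$
for every even $p \geq 2$. By Lemma~\ref{lemmatightofvn}, there exists $\alpha \in (0, 1/2)$ such that the family $(\|\widehat{C}_{(n)}\|_{\alpha})_n$ is stochastically bounded; an elementary bootstrap from the tail estimate proved there (or a direct Kolmogorov-type argument applied to the contour process itself) upgrades this to $\sup_n \mathbb{E}[\|\widehat{C}_{(n)}\|_{\alpha}^{q}] < \infty$ for every $q \geq 1$. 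Since $D_{i,j} \leq 2\sqrt{3n}\, \|\widehat{C}_{(n)}\|_{\alpha}\, |t-s|^{\alpha}$ with $s = i/(2n)$ and $t = j/(2n)$, taking expectation and rescaling yields
$$\mathbb{E}\bigl[|\widehat{L}_{(n)}(t) - \widehat{L}_{(n)}(s)|^{p}\bigr] \leq C \bigl(|t-s|^{\alpha p / 2} + n^{-p/4}\bigr),$$
and choosing $p$ so large that $\alpha p /2 > 1$ yields the desired Kolmogorov bound (the additive term $n^{-p/4}$ is handled separately in the regime $|t-s| \leq 1/(2n)$, where the linear interpolation keeps $|\widehat{L}_{(n)}(t) - \widehat{L}_{(n)}(s)|$ below $n^{-1/4}\cdot 2n|t-s|$).

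The main obstacle is the careful justification of the representation above for a single fixed pair $(i,j)$, rather than only on average over uniformly sampled vertex pairs as in Lemma~\ref{corollarypartialsymmetrization}. The argument is a pointwise streamlining of the proof of that lemma, and the delicate bookkeeping concerns the two ``shared-parent'' edges at the common ancestor $w^{*}$ and the forced value $-1$ on edges emanating from floor vertices, both of which must be absorbed into the bounded remainder $R_{i,j}$ without disrupting the independence of the $X_k$'s. Once this representation is established, the combination of Rosenthal, the contour-Hölder control, and Kolmogorov's criterion takes care of the rest.
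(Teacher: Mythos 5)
Your strategy is essentially the one the paper uses: represent increments of $\widehat L_n$ as sums of conditionally independent, marginally uniform $\{-1,0,1\}$ edge labels along ancestral paths (isolating the two shared-parent edges at the common ancestor and the deterministic floor-to-child edges as a bounded remainder), apply Rosenthal's inequality given the forest, control the number of summands by the H\"older norm of the contour function from Lemma~\ref{lemmatightofvn}, and conclude by a Kolmogorov/Billingsley moment criterion, with the additive $n^{-p/4}$ term handled on the scale $|t-s|\le 1/(2n)$ by interpolation. Your through-the-LCA decomposition is only cosmetically different from the paper's three-way split at floor/ancestor points with a $3^p$ factor.

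There is, however, one genuine gap. You need an \emph{unconditional} moment bound $\mathbb{E}[|\widehat L_{(n)}(t)-\widehat L_{(n)}(s)|^p]\le C|t-s|^\beta$, and to take the expectation over the forest you invoke $\sup_n \mathbb{E}[\|\widehat C_{(n)}\|_\alpha^{q}]<\infty$, claiming it follows by an ``elementary bootstrap'' from the tail estimate of Lemma~\ref{lemmatightofvn}. That estimate only gives stochastic boundedness of $\|\widehat C_{(n)}\|_\alpha$ (for every $\epsilon$ there is a level $C$ exceeded with probability at most $\epsilon$, uniformly in $n$), and stochastic boundedness does not imply uniform integrability of any power: a single law with infinite mean is stochastically bounded. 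So as written the expectation step does not go through. Two repairs are available. The simplest, and the one the paper uses, is to fix $\epsilon>0$, condition on the event $\{\|\widehat C_{(n)}\|_\alpha\le C\}$ of probability at least $1-\epsilon$ (this conditioning concerns only the forest shape and leaves the conditional law of the labels untouched), prove the moment bound on that event, and deduce tightness up to $\epsilon$; since $\epsilon$ is arbitrary this suffices. Alternatively, your parenthetical route works but is not ``elementary'': one must go back through the coding $C_n \to W_n$, use the increment moment bounds of Lemma~\ref{sumiid-2} for the underlying bridge, and apply the moment form of the Kolmogorov--\v{C}entsov criterion to get $\sup_n\mathbb{E}[\|\widehat C_{(n)}\|_\alpha^{q}]<\infty$ for suitable $\alpha<1/2-1/q$; this needs to be carried out, not merely asserted. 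With either repair the rest of your argument is sound.
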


\begin{proof}
  By  Lemma~\ref{lemmatightofvn}, there exists a constant $\alpha>0$ such that
  $$\forall \epsilon >0 \quad \exists C \quad
  \forall   n   \quad   \mathbb{P}\left(\|  C_{(n)}   \|_\alpha   \leq
    C\right)> 1-\epsilon.$$  

Let $\epsilon >0$ and $C$ that satisfies the above inequality.

We assume that $C_{(n)}$ is conditioned on
$\| C_{(n)} \|_\alpha \leq C$.

Let $X$ be uniformly distributed in $\{-1,0,1\}$.
Recall that  the marginals of $\widehat \nu_k$ for $k\geq 1$, have the
same law as $X$. So
for all $a,b\in \widehat F$ with $a=p(b)$, we have
$\widehat  \ell_n(a,b)$ and $X$ have the same law.

One can check that for all $i,j\in[\![0,{2\rho_n+\tau_n}]\!]$, with
$u=r_{\widehat F_n}(i)$, $v=r_{\widehat F_n}(j)$,
$u\in A_{v}(\widehat F_n)$, we have:
$$\widehat L_{n}(j)-\widehat L_{n}(i)=\sum_{\substack{u< b\leq v\\
    a=p(b)}} \widehat \ell_n(a,b)$$
Let $k=|v|-|u|$. Note that $k=C_n(j)-C_n(i)$.
Then by Lemma~\ref{Rosenthal}, we have, for $p\geq 2$,
 there exists a constant $C_p>0$ such that:

\begin{align*}
  \mathbb{E}\left(\left |  \widehat L_{n}(j)-\widehat L_{n}(i) \right |^p\right)
  &\leq C_p\left(k\,\mathbb{E}(\left  | X^p
    \right |)+(k\,\mathbb{E}(X^2))^{p/2}\right)\\
    &\leq C_p\left((C_n(j)-C_n(i))\,\mathbb{E}(\left  | X^p
    \right |)+((C_n(j)-C_n(i))\,\mathbb{E}(X^2))^{p/2}\right) \\
  &\leq C_p\, C \, \sqrt{3n} \left(\left |\frac{j-i}{2n}\right |^\alpha\,\mathbb{E}(\left  | X^p
    \right |)+\left (\left |\frac{j-i}{2n}\right
      |^\alpha\,\mathbb{E}(X^2)\right )^{p/2}\right)   
\end{align*}

As in the proof of Lemma~\ref{lemmatightofvn},
  consider $0\leq x < y \leq \frac{2\rho_n+\tau_n}{2n}$ such that $2nx, 2ny\in
\mathbb N$. Let $s=2nx$ and $t=2ny$. Let $u=r_{\widehat F}(s)$ and $v=r_{\widehat F}(t)$.
    It is always possible
  to choose $p,q\in \mathbb N$, such that $s \leq i\leq j \leq t$, and satisfying:
  \begin{itemize}
  \item if $fl(u)\neq fl(v)$,
    then $r_{\widehat F}(i),r_{\widehat F}(j)\in ({\widehat F})_1$, $r_{\widehat F}(i)=fl(u)$ and $r_{\widehat F}(j)=fl(v)$
      \item if  $fl(u)= fl(v)$, then $i=j$ and
    $r_{\widehat F}(i)$  is  the  nearest common ancestor of $u$ and $v$.
  \end{itemize}

  Note that ${\widehat L}_{n}(i)={\widehat L}_{n}(j)=0$, so we have:

  \begin{align*}
   \mathbb{E}\left[\left     |     {\widehat L}_{n}(s)-{\widehat
    L}_{n}(t)    \right     |^p\right ]
    \leq&
    3^p\left(
    \mathbb{E}\left[\left     |     {\widehat L}_{n}(s)-{\widehat
    L}_{n}(i)    \right     |^p\right ]+
    \mathbb{E}\left[\left     |     {\widehat L}_{n}(i)-{\widehat
    L}_{n}(j)    \right     |^p\right ]\right.\\ & \left.+
        \mathbb{E}\left[\left     |     {\widehat L}_{n}(j)-{\widehat
    L}_{n}(t)    \right     |^p\right ]
                                          \right )\\
     \leq&
      3^p\, C_p\, C \, \sqrt{3n}
      \left(
      \left |\frac{s-i}{2n}\right |^\alpha\,\mathbb{E}\left(\left  | X^p
    \right |\right)+\left (\left |\frac{s-i}{2n}\right
      |^\alpha\,\mathbb{E}\left(X^2\right)\right )^{p/2}
      \right.
    \\
    & +
      \left.
      \left |\frac{j-t}{2n}\right |^\alpha\,\mathbb{E}\left(\left  | X^p
    \right |\right)+\left (\left |\frac{j-t}{2n}\right
      |^\alpha\,\mathbb{E}\left(X^2\right)\right )^{p/2}
      \right)   
     \end{align*}

     Thus for the rescaled version, we have:
  \begin{align*}
   \mathbb{E}\left[\left     |     {\widehat L}_{(n)}(x)-{\widehat
    L}_{(n)}(y)    \right     |^p\right ]
     \leq&
      n^{-p/4}3^p\, C_p\, C \, \sqrt{3n}
      \left(
      \left |\frac{s-i}{2n}\right |^\alpha\,\mathbb{E}\left(\left  | X^p
    \right |\right)+\left (\left |\frac{s-i}{2n}\right
      |^\alpha\,\mathbb{E}\left(X^2\right)\right )^{p/2}
      \right.
    \\
    & +
      \left.
      \left |\frac{j-t}{2n}\right |^\alpha\,\mathbb{E}\left(\left  | X^p
    \right |\right)+\left (\left |\frac{j-t}{2n}\right
      |^\alpha\,\mathbb{E}\left(X^2\right)\right )^{p/2}
      \right)
\\
       \leq &
      n^{-p/4}3^p\, C_p\, C \, \sqrt{3n}
      \left(
      \left |x-y\right |^\alpha\,\mathbb{E}\left(\left  | X^p
    \right |\right)+\left (\left |x-y\right
      |^\alpha\,\mathbb{E}\left(X^2\right)\right )^{p/2}
      \right)
     \end{align*}
     
     Consider $p$ such that $p >10$. So we have
     $n^{-p/4+1/2}\leq 1/n^{2}$.  Since $2nx,2ny\in\mathbb N$, and
     $x\neq y$, we have $\left | x-y \right |\geq \frac{1}{2n}$. So
     $n^{-p/4+1/2}\leq 4 \left | x-y \right |^2$.
     Moreover, we have $\left | x-y \right |\leq
     \frac{2\rho_n+\tau_n}{2n}$ which converge to $\rho$.
So  there exists a constant $C'$, such that:
       \begin{align*}
   \mathbb{E}\left[\left     |     {\widehat L}_{(n)}(x)-{\widehat
    L}_{(n)}(y)    \right     |^p\right ]
\leq C' |x-y|^2 
     \end{align*}

     Since $\widehat L_{n}$ is linearly interpolated, the above inequality
     holds for for all
     $x,y \in \left[0,\frac{2\rho_n+\tau_n}{2n}\right]$.  By Billingsley
     (\cite{billingsley2013convergence}, Theorem $12.3$), the family
     of laws of $\left(\widehat L_{(n)}\right)_{n\geq 1}$ is tight,
     which completes the proof of the Lemma.
\end{proof} 

\subsection{Convergence of the contour function}

We consider  $\widehat F_n$, $\widehat \ell_n$
$\widehat{L}_n$, $\widehat{L}_{(n)}$ as in previous section.

Here, we prove the convergence of the contour function by
using the convergence of uniformly random $3$-dominating binary words from
Section~\ref{section6} and the tightness of ${\widehat L}_{(n)}$ from
Section~\ref{TightnessofthelabelingfunctionofaGaltonWatsonforest}.

We need the following bound:

\begin{lemma}
  \label{faibletightofz_n}
  For  all  $\epsilon>0$,  there   exists  a  constant  $C$
such that
  $$\sup_{n}\ \mathbb{P}\left(
    \sup_{v \in  F_n}| \ell_n(v)|
    \geq C\,n^{1/4}\right) < \epsilon.$$
  \end{lemma}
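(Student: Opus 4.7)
My plan is to reduce the statement to the symmetrized situation, where Lemma~\ref{tightofz_n} already provides tightness of $\widehat L_{(n)}$ in $(\mathcal H,d_{\mathcal H})$, via an explicit coupling between $(F_n,\ell_n)$ and $(\widehat F_n,\widehat\ell_n)$ under which the supremum of the absolute vertex labels is exactly preserved. The key preliminary observation is that, because $\widehat L_{(n)}(0)=0$ and the domains $[0,(2\rho_n+\tau_n)/(2n)]$ are uniformly bounded (they converge to $[0,\rho]$), tightness in $\mathcal H$ upgrades to tightness in sup norm: for every $\epsilon>0$ there exists $C$ such that
\[
\sup_n\mathbb{P}\Bigl(\sup_{v\in\widehat F_n}|\widehat\ell_n(v)|\geq Cn^{1/4}\Bigr)<\epsilon,
\]
since $\sup_{v\in\widehat F_n}|\widehat\ell_n(v)|/n^{1/4}=\|\widehat L_{(n)}\|_\infty$. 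It therefore suffices to build a coupling for which $\sup_v|\ell_n(v)|=\sup_v|\widehat\ell_n(v)|$ almost surely.

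The coupling I would use is the following \emph{sort coupling}. First sample $(\widehat F_n,\widehat\ell_n)\sim LGW(\widehat\nu,\rho_n,\tau_n)$; then, processing vertices recursively from the floors downward, at every vertex $u$ with $c_u\geq 1$ permute the $c_u$ outgoing half-edges so that the edge labels become non-decreasing, dragging the subtree rooted at each child along with the permutation. Let $(F_n,\ell_n)$ denote the labeled forest so obtained. Two things need to be checked. First, $(F_n,\ell_n)$ has law $LGW(\nu,\rho_n,\tau_n)$: the offspring counts at every vertex are unaffected by the permutations, so the underlying tree has the correct Galton--Watson distribution; by the very definition of $\widehat\nu_k$ as a uniform permutation of $\nu_k$, the sorted vector of edge labels at each internal $u$ has law $\nu_{c_u}$; and the subtrees attached at the children of $u$ are i.i.d.\ under $LGW(\widehat\nu,\cdot)$, hence exchangeable, so re-ordering them by the (label-measurable) sort permutation preserves their joint distribution and the conditional independences required by $LGW(\nu,\cdot)$. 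Second, the sort induces an abstract bijection $\phi:\widehat F_n\to F_n$ between vertex sets which preserves labels: the sequence of edge labels encountered along the path from any $v\in\widehat F_n$ to its floor is exactly the sequence along the (re-addressed) path from $\phi(v)\in F_n$ to its floor, so $\ell_n(\phi(v))=\widehat\ell_n(v)$, and in particular
\[
\sup_{v\in F_n}|\ell_n(v)|=\sup_{v\in\widehat F_n}|\widehat\ell_n(v)|.
\]

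Combining the two ingredients finishes the proof. The main obstacle I expect is the law-preservation step in the sort coupling: one must argue that recursively sorting the edge labels at every vertex while simultaneously reshuffling the corresponding subtrees still yields the exact product structure prescribed by $LGW(\nu,\cdot)$, rather than introducing a spurious label-subtree correlation. This reduces to combining the exchangeability of i.i.d.\ Galton--Watson subtrees with the tautological identity that $\nu_k$ is the pushforward of $\widehat\nu_k$ under sorting, and although intuitively clear it is the only step requiring genuine care.
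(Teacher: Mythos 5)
Your proof is correct and follows essentially the same strategy as the paper's: reduce to the tightness of $\widehat L_{(n)}$ from Lemma~\ref{tightofz_n} via a coupling of $(F_n,\ell_n)$ and $(\widehat F_n,\widehat\ell_n)$ under which $\sup_v|\ell_n(v)|$ is preserved pathwise. The only real difference is the direction of the coupling: the paper starts from $(F_n,\ell_n)\sim LGW(\nu,\cdot)$, applies a \emph{uniformly random} child-permutation $\bm p\in\mathcal P(F_n)$, and cites Lemma~\ref{lem:toto} to conclude that $((F_n)_{\bm p},(\ell_n)^1_{\bm p})\sim LGW(\widehat\nu,\cdot)$, with vertex labels preserved via $\ell_n(v)=(\ell_n)^1_{\bm p}(\overline{\bm p}(v))$; you start from $\widehat\nu$ and sort. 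The paper's direction is slightly cleaner, because $\bm p$ is drawn independently of the forest and its labels, so the distributional identity of Lemma~\ref{lem:toto} follows almost directly from the branching property. Your sort permutation is instead a function of the labels at each vertex, so verifying that the pushforward is $LGW(\nu,\cdot)$ \emph{as a joint law} (not merely that the sorted vector at each fixed vertex has law $\nu_{c_u}$) genuinely needs the subtree-exchangeability argument, and one must also fix a deterministic tie-breaking rule; the conditional law of the sorted labels at a vertex given the resulting tree shape is not $\nu$, and the identity only emerges after marginalizing over the pre-sort shape. You correctly flag this as the one delicate point, and it does go through, but it is worth noting that the paper sidesteps the issue entirely by having the easy-direction statement (Lemma~\ref{lem:toto}) already available.
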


\begin{proof}
    For  any   $\epsilon  >0$,  by
  Lemma~\ref{tightofz_n}, there exists  a constant $C$ such
  that:
  $$\sup_{n}\ \mathbb{P}\left(
    \sup_{v \in \widehat F_n}|\widehat \ell_n(v)|
    \geq C\,n^{1/4}\right) < \epsilon.$$
 
  Let $\bm{p}$ be a uniform random element of $\mathcal
  P(F_n)$. Denote by $(\ell_n)^1_{\bm{p}}$ the labeling function of
  $(F_n)_{\bm{p}}$ as defined in Section~\ref{sec:symforest}. For
  convenience we write $\ell'_n=(\ell_n)^1_{\bm{p}}$ and
  $F'_n=(F_n)_{\bm{p}}$.
  Note
  that for all $v \in F_n$, we have
  $\ell_n(v)=\ell'_n(\overline {\bm{p}}(v))$.  Then we have
  \begin{equation*}
  \sup_{v \in F_n}|\ell_n(v)|=\sup_{v \in F'_n}|\ell'_n(v)|.
  \end{equation*}
  By Lemma~\ref{lem:toto}, we have $(F'_n, \ell'_n)$
  has law $LGW(\widehat{\nu},\tau_n,\rho_n)$, i.e.
  $(F'_n, \ell'_n)$ and
  $(\widehat{F}_n,\widehat{\ell}_n)$ have the same law.
So 
$$\sup_{v \in F_n}|\ell_n(v)|=\sup_{v \in \widehat F_n}|\widehat \ell_n(v)|.$$

This completes the proof of the Lemma.
\end{proof}

\begin{lemma}[Convergence of contour function]
  \label{convergenceofcontour}
  The   process  $C_{(n)}$  converges   in  law   toward
  $\widetilde F^{0\rightarrow  \tau}_{\left  [ 0,
      \rho  \right  ]}$  in the  space
  $(\mathcal            H,d_\mathcal             H)$, when $n$ goes to infinity.
\end{lemma}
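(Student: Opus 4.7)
The plan is to combine the tightness of $(C_{(n)})_n$ from Lemma~\ref{lemmatightofvn} with the convergence $W_{(n)}\to F^{0\to\tau}_{[0,2\rho]}$ from Lemma~\ref{convergetowardbr}, and to identify every subsequential limit of $C_{(n)}$ via explicit identities linking $C_n$ and $W_n$. By Prokhorov's theorem, every subsequence admits a further subsequence along which $C_{(n)}$ converges in law to some limit $C_\infty$; Skorokhod's representation then allows one to work on a common probability space where both $W_{(n)}\to F^{0\to\tau}_{[0,2\rho]}$ and $C_{(n)}\to C_\infty$ almost surely in $(\mathcal H, d_\mathcal H)$ along the subsequence.

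The key ingredient is a natural time change $\phi_n:[\![0,2\rho_n+\tau_n]\!]\to[\![0,4\rho_n+\tau_n]\!]$ sending contour time $i$ to the walk time at which the counterclockwise walk around $F_n$ sits at $r_{F_n}(i)$. Writing $s_i$ for the number of stems seen by contour time $i$, one has $\phi_n(i)=i+s_i$. A direct bookkeeping shows that every outgoing tree-edge contributes $(-1,-3)$ to the pair $(C_n,W_n)$, while ingoing tree-edges, floor-edges and stems contribute respectively $(+1,+1)$, $(+1,+1)$ and $(0,+1)$. This yields the two identities
\[
  \overline{C_n}(i)=\overline{W_n}(\phi_n(i)) \qquad\text{and}\qquad C_n(i)=i-\tfrac{1}{2}\bigl(\phi_n(i)-W_n(\phi_n(i))\bigr),
\]
the first one using that inside any subtree $W_n$ never exceeds its entry value (so $\overline{W_n}$ only increases at floor crossings), the second following from the equality $b_i=(\phi_n(i)-W_n(\phi_n(i)))/4$ where $b_i$ counts the outgoing tree-edges traversed by time $i$. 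Since each tree-vertex carries exactly two stems and $\tau_n=o(\rho_n)$, one has $s_i=i+O(\sqrt n)$ uniformly in $i$, so the rescaled time change $T_n(s)=\phi_n(2ns)/(2n)$ converges uniformly to $2s$ on $[0,\rho]$.

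Passing to the limit in these identities along the chosen subsequence gives $\overline{C_\infty}(s)=\overline{F^{0\to\tau}_{[0,2\rho]}}(2s)$ directly, together with a formula expressing $C_\infty(s)$ as an affine combination of $F^{0\to\tau}_{[0,2\rho]}(2s)$, its running maximum, and a residual coming from the rescaled quantity $t_i/\sqrt{3n}$, where $t_i$ counts the stems encountered along the current ancestral path at time $i$. The main obstacle is to show that this residual is asymptotically equal to the rescaled depth $(\overline{C_n}(i)-C_n(i))/\sqrt{3n}$; for this I would apply a symmetrization argument in the spirit of Lemma~\ref{lem:toto} to the uniform placement of the two stems around each tree-vertex, combined with a concentration estimate of Rosenthal type as in Lemma~\ref{tightofz_n}, so that after rescaling the residual cancels and one obtains $C_\infty(s)=\tfrac{1}{2}\bigl(F^{0\to\tau}_{[0,2\rho]}(2s)+\overline{F^{0\to\tau}_{[0,2\rho]}}(2s)\bigr)$. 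The scaling property of the first-passage Brownian bridge then identifies this limit in law with $\widetilde F^{0\to\tau}_{[0,\rho]}(s)$, and uniqueness of the subsequential limit upgrades the subsequential convergence to a full weak convergence in $(\mathcal H, d_\mathcal H)$.
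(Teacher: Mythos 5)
Your proposal is correct and follows essentially the same route as the paper: both rest on the identity expressing $C_n(i)$ as $\tfrac12\bigl(W_n(\phi_n(i))+\overline{W_n}(\phi_n(i))\bigr)$ plus a residual, the uniform $O(\sqrt n)$ control of the time change, the convergence of $W_{(n)}$ from Lemma~\ref{convergetowardbr}, and the tightness of $C_{(n)}$ from Lemma~\ref{lemmatightofvn}. The only remark worth making is that your residual $t_i-d_i$ is exactly (up to sign) the label $\ell_n(r_{F_n}(i))$, so the symmetrization-plus-Rosenthal step you sketch is precisely the already-established bound $\sup_v|\ell_n(v)|=O(n^{1/4})$ of Lemma~\ref{faibletightofz_n}, which is what the paper invokes at the corresponding point.
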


\begin{remark}
  Note    that    the    limit     in    this    lemma    is    indeed
  $\widetilde  F^{0\to\tau}_{[0,\rho]}$ and  not $F^{0\to\tau}_{[0,\rho]}$
  as  the corresponding  result in~\cite{bettinelli2010scaling}  would
  seem to indicate. This is due  to the fact that our decomposition of
  the unicellular map  into Motzkin paths and  well-labeled forests is
  not exactly the same as in the case of quadrangulations.
\end{remark}

\begin{proof}

    Let $f$ be a bounded continuous function from $\mathbb R$ to
    $\mathbb R$. Let $0\leq t <\rho$ and $\epsilon >0$.
    Since $\frac{2\rho_n+\tau_n}{2n}$ converge toward $\rho$,
    there
  exists $N$ such that $t\leq \min_{n\geq N}{\frac{2\rho_n+\tau_n}{2n}}$.
For $n\geq N$, we define
    $$T_n(t)=\min\left \{ k\in [\![0,2\rho_n+\tau_n]\!]\,:\, r_{F_n}(k)=fl
      (r_{F_n}(\lfloor 2nt\rfloor ))\right \}.$$
Note that $r_{F_n}(T_n(t))$ is an integer that we denote  by 
$i_n$.

As in the proof of Lemma~\ref{lemmatightofvn}, we consider $W_n$ the
element of $\mathcal P_{3,3\rho_n+\tau_n, \rho_n}$ corresponding to
$(F_n,\ell_n)$.  Note that for $k\in [\![0,2\rho_n+\tau_n]$, such that
$r_{F_n}(k)$ is a floor of $F_n$, we have
$C_n(k)=W_n(2k-r_{F_n}(k))$. So in particular:
$$C_n(T_n(t))=W_n(2T_n(t)+i_n).$$

For convenience, let $j_n=L_n(\lfloor 2nt\rfloor)$ and
$k_n=i_n-j_n+|r_{F_n}(\lfloor 2nt\rfloor)|$. Note that we have:
 $$C_n(\lfloor 2nt\rfloor)-C_n(T_n(t))=\frac{1}{2}\left(W_n(2\lfloor
   2nt\rfloor+k_n)-W_n(2T_n(t)+i_n)-j_n\right).$$

 Thus we have:
 $$C_n(\lfloor 2nt\rfloor)=\frac{1}{2}\left(W_n(2\lfloor 2nt\rfloor+k_n)+W_n(2T_n(t)+i_n)-j_n\right).$$

Note that $W_n(2T_n(t)+i_n)=\max_{s\leq 2\lfloor 2nt\rfloor+k_n}W_n(s)$, therefore:
$$C_n(\lfloor 2nt\rfloor)=\frac{1}{2}\left(W_n(2\lfloor 2nt\rfloor+k_n)+\max_{s\leq 2\lfloor 2nt\rfloor+k_n}W_n(s)-j_n\right).$$

  By Lemma~\ref{lemmatightofvn},  there exists  a constant  $c_1$ such
  that
  \begin{equation}
    \label{equation2345}
    \inf_{n\geq N}\mathbb{P}\left(\sup_{k\in[\![0, 2\rho_n+\tau_n]\!]}\left | C_n(k) \right |<c_1n^{1/2}\right)\geq 1-\epsilon.
  \end{equation}
  Moreover, by  Lemma~\ref{faibletightofz_n}, there exists  a constant
  $c_2$ such that
  \begin{equation}
    \label{equation3456}
    \inf_{n\geq N}\mathbb{P}\left (\sup_{k\in[\![0, 2\rho_n+\tau_n]\!]}\left | L_n(k) \right |<c_2n^{1/4}\right)\geq 1-\epsilon.
  \end{equation}
  
 By \eqref{equation2345}  and \eqref{equation3456}, there
  exists a constant $c>0$ such that:
 
   \begin{equation*}
       \inf_{n\geq N}\mathbb{P}\left(\sup_{k\in[\![0, 2\rho_n+\tau_n]\!]}\left |
        C_n(k) \right |<cn^{1/2}\ ; \ \sup_{k\in[\![0, 2\rho_n+\tau_n]\!]}\left | L_n(k) \right |<cn^{1/4} \right)\geq 1-2\epsilon.
  \end{equation*}

  So we have:
  
  \begin{equation}
    \label{equation5678}
    \inf_{n\geq N}\mathbb{P}\left(\left | i_n \right |\leq c\,n^{1/2}, \left | j_n \right |\leq c\,n^{1/4},\left | k_n \right |\leq c\,n^{1/2}\right)\geq 1-2\epsilon. 
  \end{equation}


  Let $\mathcal E_n$ the event:
  $$\left \{ \left | i_n \right |\leq c\,n^{1/2}, \left | j_n \right
    |\leq c\,n^{1/4},\left | k_n  \right |\leq c\,n^{1/2} \right \}.$$

  Now we  define  a   random   variable  $Y_n$   as   follows:
    $$Y_n=\frac{1}{2}\left(W_n(2\lfloor 2nt\rfloor+k_n\mathbbm 1_{\mathcal E_n})+\max_{s\leq 2\lfloor 2nt\rfloor+k_n}W_n(s)-j_n\mathbbm 1_{\mathcal E_n}\right).$$

    By Lemma~\ref{convergetowardbr}, we have
    $\left(\frac{Y_n}{\sqrt{3n}}\right)_{n\geq N}$ converge toward
    $\frac{1}{2}\left(F^{0\rightarrow
          \tau}_{[0,2\rho]}(2t)+\overline {F^{0\rightarrow
            \tau}_{[0,2\rho]}}(2t)\right
      )=  \widetilde F^{0\rightarrow \tau}_{[0,\rho]}(t)$ when $n$ goes to
    infinity.  Thus by (\ref{equation5678}), there exists $n_0\geq N$ such
    that for all $n\geq n_0$:

  \begin{equation*}\begin{split}
  &    \left | \mathbb{E}[f(C_{(n)}(t))]
    -\mathbb{E}\left[f\left(
\widetilde F^{0\rightarrow \tau}_{[0,\rho]}(t)
          \right)\right]\right    |
\\&\leq
     \left | \mathbb{E}[f(C_{(n)}(t))]
        -\mathbb{E}\left[f\left(\frac{Y_n}{\sqrt{3n}}\right)\right]\right    |+
       \left | \mathbb{E}\left[f\left(\frac{Y_n}{\sqrt{3n}}\right)\right]
         -\mathbb{E}\left[f\left(
\widetilde F^{0\rightarrow \tau}_{[0,\rho]}(t)
          \right)\right]\right    |
      \\
      &\leq 2\,\mathbb{E}[1-\mathbbm 1_{\mathcal E_n}]\,\left    \|f    \right
    \|_{\infty}+\epsilon.
      \\
      &\leq
    (4\left    \|f    \right
    \|_{\infty}+1)\epsilon.
  \end{split}
  \end{equation*}

      This implies that $\left(\mathbb{E}[f(C_{(n)}(t))]\right)_{n\geq N}$
  converge toward $\mathbb{E}\left[f\left(\widetilde F^{0\rightarrow
        \tau}_{[0,\rho]}(t)
        \right)\right]$.

        We now prove the finite dimensional convergence of
        $C_{(n)}$. Let $k\geq 1$ and consider
        $0\leq t_1< t_2<...< t_k < \rho$.  Let $N$ such that
        $t_{k}\leq \min_{n\geq N}{\frac{2\rho_n+\tau_n}{2n}}$.  By
        above arguments, for $1\leq i\leq k$, we have
        $(C_{(n)}(t_i))_{n\geq N}$ converge in law toward
        $\widetilde F^{0\rightarrow \tau}_{[0,\rho]}(t_i)$

        It remains to
          deal with the point $\rho$.  
\begin{align*}
  \left |C_{(n)}\left (\rho\right )-\tau  \right|&
  =\left |C_{(n)}\left (\rho\wedge \frac{2\rho_n+\tau_n}{2n}\right)-\tau  \right|\\
  &=\left |C_{(n)}\left (\rho\wedge \frac{2\rho_n+\tau_n}{2n}\right)-C_{(n)}\left (\frac{2\rho_n+\tau_n}{2n}\right)+C_{(n)}\left (\frac{2\rho_n+\tau_n}{2n}\right)-\tau  \right|\\
    &\leq \left |C_{(n)}\left (\rho\wedge \frac{2\rho_n+\tau_n}{2n}\right)-C_{(n)}\left (\frac{2\rho_n+\tau_n}{2n}\right)\right|+\left |\tau_n-\tau  \right| \\
\end{align*}

Suppose that 

Consider $\epsilon>0$.
By Lemma~\ref{lemmatightofvn}, there exists $\alpha$ and $C$ such that
for all $n$:
$\mathbb{P}\left(\|  C_{(n)}   \|_\alpha   \leq
  C\right)> 1-\epsilon$. Condition on the event $\{\|  C_{(n)}   \|_\alpha
\leq C\}$,
  we have
\begin{align*}\left |C_{(n)}\left (\rho\wedge
    \frac{2\rho_n+\tau_n}{2n}\right)-C_{(n)}\left
    (\frac{2\rho_n+\tau_n}{2n}\right)\right|&\leq
C\,\left|\rho\wedge
\frac{2\rho_n+\tau_n}{2n} - \frac{2\rho_n+\tau_n}{2n}\right|^\alpha\\
&\leq
C\,\left|\rho - \frac{2\rho_n+\tau_n}{2n}\right|^\alpha
\end{align*}

Since  $\frac{2\rho_n+\tau_n}{2n}\rightarrow \rho$ and $\tau_n \rightarrow
\tau$, for $n$ large enough, we have:
\begin{align*}
  \left |C_{(n)}\left (\rho\right )-\tau  \right|&
                                                \leq \epsilon
\end{align*}

Therefore we obtain for $n$ large enough:
$$\mathbb{P}\left(\left |C_{(n)}\left (\rho\right)-\tau  \right|>\epsilon\right)\leq \mathbb{P}\left(\|  C_{(n)}   \|_\alpha
  > C \right)\leq \epsilon.$$ This implies that
$C_{(n)}\left (\rho\right)$ converges in probability toward the
deterministic value $\tau$. So Slutzky's lemma shows that
$C_{(n)}\left (\rho\right)$ converges in law toward $\tau$. Thus we have
proved the convergence of the finite-dimensional marginals of
$C_{(n)}$ toward
$\widetilde F^{0\rightarrow \tau}_{[0,\rho]}$. By
Lemma~\ref{lemmatightofvn}, $C_{(n)}$ is tight so Prokhorov's lemma
give the result.
\end{proof}

\begin{remark}
  In  the  case  when  $\tau_n=1$   for  all  $n$,  this  provides  an
  alternative  proof of  a  particular  case of  a  theorem of  Aldous
  (\cite{aldous1991continuum}, Theorem $2$).
\end{remark}

\subsection{Convergence of the contour pair}
\label{proofofmain1}
We consider  $ F_n$, $ \ell_n$,
 ${C}_n$,
 ${L}_n$,
 $\widehat F_n$, $\widehat \ell_n$,
 $\widehat{L}_n$,
 as in previous sections.

By Lemma~\ref{convergenceofcontour},  the rescaled contour function $C_{(n)}$ converge. 
{So as in \cite[Corollary~16]{bettinelli2010scaling} one obtain} the
following lemma which proof is omitted:

\begin{lemma}
   \label{convergeofforest2}
In the  sense of
  weak convergence in the space $(\mathcal H,d_{\mathcal H})^2$ when
  $n$ does to infinity, we have:
  $$\left(C_{(n)},\widehat L_{(n)}\right)           \to
  \left(\widetilde F^{0\rightarrow    \tau}_{\left    [     0,\rho    \right    ]},
    Z^\tau_{[0,\rho]}\right).$$
\end{lemma}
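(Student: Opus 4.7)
The plan is to combine the marginal convergence of the contour function from Lemma~\ref{convergenceofcontour} with the tightness of the labeling function from Lemma~\ref{tightofz_n}, and to identify the limit of the pair using the independent-increment structure that the symmetrization enforces. More precisely: since $C_{(n)}$ converges in law in $(\mathcal H,d_{\mathcal H})$ to $\widetilde F^{0\to\tau}_{[0,\rho]}$ and $\widehat L_{(n)}$ is tight in $(\mathcal H,d_{\mathcal H})$, the joint process $(C_{(n)},\widehat L_{(n)})$ is tight in $(\mathcal H,d_{\mathcal H})^2$. Hence it suffices to identify any subsequential limit $(\widetilde F,Z)$, whose first marginal must be $\widetilde F^{0\to\tau}_{[0,\rho]}$, and to check that $Z=Z^\tau_{[0,\rho]}$ in law.

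To identify $Z$ I would use the Cramér--Wold device: fix $k\geq 1$ and $0\leq t_1<\cdots<t_k\leq \rho$, and show that $(C_{(n)},\widehat L_{(n)}(t_1),\ldots,\widehat L_{(n)}(t_k))$ converges jointly in law to $(\widetilde F^{0\to\tau}_{[0,\rho]},Z^\tau_{[0,\rho]}(t_1),\ldots,Z^\tau_{[0,\rho]}(t_k))$. The crucial structural fact is that, under the symmetrized law $LGW(\widehat\nu,\rho_n,\tau_n)$, the edge-labels $\{\widehat\ell_n(a,b):a=pa(b)\}$ are, conditionally on $\widehat F_n$, independent and uniformly distributed on $\{-1,0,1\}$ (variance $2/3$). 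Consequently, for every contour position $2nt_i$, the value $\widehat L_n(2nt_i)=\widehat\ell_n(r_{\widehat F_n}(2nt_i))$ is a sum of i.i.d.~uniform $\{-1,0,1\}$ increments along the path from the vertex $r_{\widehat F_n}(2nt_i)$ to its floor, and the length of that path is precisely $|r_{\widehat F_n}(2nt_i)|-1=\bar C_n(2nt_i)-C_n(2nt_i)$.

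For two indices $i<j$, the labels $\widehat L_n(2nt_i)$ and $\widehat L_n(2nt_j)$ share exactly the edges from the most recent common ancestor (when the two positions lie in the same tree) down to their common floor. The height of that ancestor equals $\min_{s\in[2nt_i,2nt_j]}(\bar C_n(s)-C_n(s))$, and is $0$ whenever the contour leaves the current tree in between. Hence, conditionally on $\widehat F_n$, the conditional covariance of $\widehat L_n(2nt_i)$ and $\widehat L_n(2nt_j)$ equals a constant times $\min_{s\in[2nt_i,2nt_j]}(\bar C_n(s)-C_n(s))$. After rescaling by $\sqrt{3n}$ this converges, jointly with $C_{(n)}$, to $\check{\widetilde F}(t_i,t_j)$, matching exactly the covariance function that defines the Brownian snake's head $Z^\tau_{[0,\rho]}$ driven by $\widetilde F^{0\to\tau}_{[0,\rho]}$.

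It remains to promote this covariance computation to convergence in law of the finite-dimensional marginals. This is done by applying a conditional Lindeberg-type central limit theorem for the vector $(\widehat L_{(n)}(t_i))_{1\leq i\leq k}$ given $C_{(n)}$, exploiting that the summands are uniformly bounded by $1$ and that the numbers of summands along and across the paths concentrate (in the sense of convergence in law) around the limiting quantities $\bar{\widetilde F}(t_i)-\widetilde F(t_i)$ and $\check{\widetilde F}(t_i,t_j)$. The tightness established in Lemma~\ref{tightofz_n} guarantees that no mass escapes to infinity during this limit. This is precisely the argument implemented in the quadrangulation case by Bettinelli~\cite[Corollary~16]{bettinelli2010scaling}, and the only adaptations needed are cosmetic (the Motzkin-type encoding introduces the rescaling factor $\sqrt{3n}$ and the variance $2/3$ instead of the factors appearing in the bipartite case, but these are absorbed into the constants appearing in the definition of $\widetilde F^{0\to\tau}_{[0,\rho]}$ via Lemma~\ref{convergenceofcontour}). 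The main obstacle is thus of a bookkeeping nature: checking that the conditional CLT can be applied uniformly in $n$ despite the random and $n$-dependent path-length structure, which is where the previously-established tightness and the identity-in-law provided by Lemma~\ref{corollarypartialsymmetrization} play their roles.
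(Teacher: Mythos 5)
Your proposal follows essentially the same route as the paper, which states this lemma with the proof omitted and defers to \cite[Corollary~16]{bettinelli2010scaling}: tightness of the pair from Lemmas~\ref{convergenceofcontour} and~\ref{tightofz_n}, then identification of the finite-dimensional marginals via the conditional covariance of the labels along ancestral paths, which after rescaling reproduces $\check{\widetilde F}(t_i,t_j)$ and hence the law of the snake's head $Z^\tau_{[0,\rho]}$. The one imprecision is your claim that \emph{all} edge-labels are conditionally independent under $LGW(\widehat\nu,\rho_n,\tau_n)$ --- sibling labels are only exchangeable with uniform marginals, not independent (the paper explicitly notes the marginals of $\widehat\nu_k$ are not i.i.d.) --- but this is harmless since each branch point of the ancestral paths involves at most one sibling pair, contributing an $O(1)$ correction that vanishes under the $n^{1/4}$ rescaling, and Lemma~\ref{corollarypartialsymmetrization} provides exactly the independence statement (restricted to parents in $O_U(F)$) that the finite-dimensional argument requires.
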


\begin{lemma}
  \label{tightofZ_n}
  The  family of laws  of  $(L_{(n)})_{n\geq  1}$  is tight  in  the  space  of
  probability measures on $\mathcal H$.
\end{lemma}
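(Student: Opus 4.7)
The plan is to adapt the proof of Lemma~\ref{tightofz_n}, using Lemma~\ref{corollarypartialsymmetrization} to transfer moment estimates from the symmetrized labeling $\widehat L_n$ to $L_n$. The obstacle for a direct argument is that, unlike in the symmetrized forest, the individual edge labels of $(F_n,\ell_n)$ are not marginally uniform on $\{-1,0,1\}$ — their distribution depends on the position of the child among the sorted siblings — so Rosenthal's inequality (Lemma~\ref{Rosenthal}) cannot be applied to ancestor-to-descendant sums as it was for $\widehat\ell_n$.

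Fix $0\le s < t \le 2\rho_n+\tau_n$ and set $u=r_{F_n}(s)$, $v=r_{F_n}(t)$; assuming $u,v$ lie in the same tree of $F_n$, let $w$ be their nearest common ancestor with children $c_1,c_2$ on the paths to $u$ and $v$ respectively. A direct computation with $U=\{u,v\}$ gives the identity
\[L_n(t)-L_n(s) \;=\; \ell_n(v)-\ell_n(u) \;=\; (S_2-S_1)+\delta,\]
where $(S_1,S_2)$ are the partial sums of Lemma~\ref{corollarypartialsymmetrization} and $\delta=\ell_n(w,c_2)-\ell_n(w,c_1)\in\{-2,\ldots,2\}$ is a bounded correction coming from the two edges at $w$ that are excluded from $O_U(F_n)$ (with $\delta=0$ in the distinct-tree case, since then every edge on the ancestor chains has its parent in $O_U$). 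The proof of Lemma~\ref{partialsymmetrization} further shows that, conditionally on $F_n$ and on $U=\{u,v\}$, the edge labels $\{\ell_n(pa(v'),v') : pa(v')\in O_U\}$ are independent and uniformly distributed on $\{-1,0,1\}$ — the same conditional law as in the symmetrized forest.

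From here the argument parallels that of Lemma~\ref{tightofz_n}. For any $0\le x<y\le (2\rho_n+\tau_n)/(2n)$, one decomposes $L_{(n)}(y)-L_{(n)}(x)$ by the triangle inequality through intermediate positions $i,j$ that correspond either to two floors (in the distinct-tree case, where floor labels vanish) or both to the common ancestor (where the middle term is zero). Each remaining ancestor-to-descendant increment is then, conditionally on $F_n$, a centered sum of independent uniform $\{-1,0,1\}$ variables plus an $O(1)$ error $\delta$, the number of summands being a tree distance controlled by contour-function fluctuations. Rosenthal's inequality (Lemma~\ref{Rosenthal}) yields the same conditional moment bound as in Lemma~\ref{tightofz_n}; conditioning on the high-probability event $\{\|C_{(n)}\|_\alpha\le C\}$ supplied by Lemma~\ref{lemmatightofvn}, picking $p>10$ so that $n^{-p/4+1/2}\le 4|x-y|^2$ (valid since $|x-y|\ge 1/(2n)$), and applying Billingsley's criterion~\cite[Theorem~12.3]{billingsley2013convergence} concludes tightness.

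The main obstacle is that Lemma~\ref{corollarypartialsymmetrization} is stated for uniformly random $U$, whereas our $U=\{r_{F_n}(s),r_{F_n}(t)\}$ is a deterministic function of $F_n$. This is handled by observing that since $U$ is chosen independently of the labels given $F_n$, the conditional distribution of $(S_1,S_2)$ given $(F_n,U)$ is a fixed function of $F_n$ and $U$ alone (together with the label kernel of the model), and by the joint distributional identity in Lemma~\ref{corollarypartialsymmetrization} this function is the same for $(F_n,\ell_n)$ and $(\widehat F_n,\widehat\ell_n)$; hence the conditional-distribution statement remains valid when $U$ is specified as a deterministic function of $F_n$.
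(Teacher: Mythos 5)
Your proposal has a genuine gap at the central step. You claim that, conditionally on $F_n$ and on $U=\{u,v\}$, the edge labels $\{\ell_n(pa(v'),v') : pa(v')\in O_U(F_n)\}$ are independent and uniform on $\{-1,0,1\}$. This is false for the un-symmetrized labeling. Under the law $LGW(\nu,\cdot,\cdot)$, the vector $(\ell_n(\{w,w\,1\}),\ldots,\ell_n(\{w,w\,c_w(F_n)\}))$ is a uniform \emph{non-decreasing} vector, so the marginal at a specific position is biased. Concretely: if $c_w(F_n)=2$ and $w\,1\leq u$ (which, once $U$ is fixed as a deterministic function of $F_n$, pins down the position $1$), then $\ell_n(\{w,w\,1\})$ takes the values $-1,0,1$ with probabilities $1/2,1/3,1/6$ — not centered. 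Under $\widehat\nu_2$, by contrast, it is uniform. The conditional laws differ. Repeated occurrences of such parents along the path from $u$ to its floor produce a conditional drift in $S_1$ of order the path length, which is typically $\Theta(\sqrt n)$; this is exactly what the Rosenthal step of Lemma~\ref{tightofz_n} (which requires centered i.i.d.\ summands) cannot absorb.

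Your attempted fix in the final paragraph does not hold. Lemma~\ref{corollarypartialsymmetrization} asserts a \emph{joint} distributional identity for $(|u_1|,\ldots,|u_k|,S_1,\ldots,S_k)$ when $U$ is a set of i.i.d.\ uniform vertices; the averaging over uniform $U$ (and over the random forest) is precisely what makes each individual edge label look uniform. It does \emph{not} follow that the conditional law of $(S_1,\ldots,S_k)$ given $(F_n,U)$ is the same for $\ell_n$ and $\widehat\ell_n$ — and the example above shows it is not. When $U$ is a deterministic function of $F_n$ (as $\{r_{F_n}(s),r_{F_n}(t)\}$ is for fixed contour times $s,t$), you are integrating the conditional law against $\mathbb{P}(F_n=\cdot)$ rather than against the uniform-$U$ measure, and the two do not agree. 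So the identity $L_n(t)-L_n(s)=(S_2-S_1)+\delta$ gives no usable moment bound.

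The paper avoids this issue entirely. It does not attempt to prove a moment bound for $L_{(n)}$. Instead it observes that the event $\{\exists u,v : d_{F_n}(u,v)\leq\beta n^{1/2},\ |\ell_n(u)-\ell_n(v)|>\epsilon n^{1/4}\}$ is a function of the labeled forest \emph{as an abstract graph} (no embedding involved) and is therefore invariant under the complete symmetrization $(F_n,\ell_n)\mapsto((F_n)_{\bm p},(\ell_n)^1_{\bm p})$, which has law $LGW(\widehat\nu,\cdot,\cdot)$ by Lemma~\ref{lem:toto}. Hence this event has the same probability as the corresponding event for $(\widehat F_n,\widehat\ell_n)$, which is controlled via Lemma~\ref{convergeofforest2} and the continuity of $Z^\tau_{[0,\rho]}$ on $\mathcal T_{F^{0\to\tau}_{[0,\rho]}}$. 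The remaining gap between ``small contour increment'' and ``small tree distance'' is a function of $F_n$ alone and is controlled by the tightness/convergence of the contour function. If you want a Rosenthal-style argument, you would first have to center the edge labels, and then control the accumulated drift — which is exactly the tree-distance estimate the paper's route supplies more economically.
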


\begin{proof}
    We  prove  that for all $\epsilon >0$, there  exists $\delta >0$
  such that
  \begin{equation}
    \label{equation7.2.3.1}
    \limsup_{n}\mathbb{P}(  \sup_{\left  |  i-j  \right  |\leq  \delta
      (2\rho_n+\tau_n)}\left     |      L_n(i)-L_n(j)     \right     |>\epsilon
    n^{1/4})<\epsilon
  \end{equation}
  For $n \geq 1$, let ${\bm p}_n$ be a uniformly random element of
  $\mathcal P(F_n)$ and let
  $(F'_n,\ell'_n)=((F_n)_{{\bm p}_n},(\ell_n)^1_{{\bm p}_n})$ be the
  complete symmetrization of $F_n$ with respect to ${\bm p}_n$ (see
  Section~\ref{sec:symforest} for the definition).

  By Lemma~\ref{convergeofforest2}, we have
  \begin{equation}
    \label{equation7.2.3.2}
    ((3n)^{-1/2}{C}_n,
    n^{-1/4}\widehat{L}_n) \to (F^{0\rightarrow  \tau}_{\left [ 0,\rho
      \right ]}, Z^\tau_{[0,\rho]}).
  \end{equation}
  This   implies   that   for   all  $\epsilon   >0$,   there   exist
  $\alpha>0$ and $\beta>0$ such that:
  \begin{equation}
    \label{equation7.2.3.3}
    \sup_{n}\mathbb{P}\left(\sup_{     |      i-j     |\leq     \alpha
        (2\rho_n+\tau_n)}|           \widehat{L}_n(i)-\widehat{L}_n(j)
      |>\epsilon n^{1/4}\right)<\epsilon \quad \text{and}
  \end{equation}
  \begin{equation}
    \label{equation7.2.3.4}
    \sup_{n}\mathbb{P}\left(\sup_{\substack{i,j\in[\![0,2\rho_n+\tau_n]\!]\\
          d_{F'_n}(\overline{\bm p}\left(r_{F_n}(i) \right),\overline{\bm p}\left(r_{F_n}(j) \right))\leq \beta n^{1/2}}}| \widehat{L}_n(i)-\widehat{L}_n(j)
        |>\epsilon n^{1/4}\right)<\epsilon.
  \end{equation}

  Indeed, the  existence of  $\alpha$ is a  direct consequence  of the
  convergence of the sequence $(n^{-1/4}  \hat L_n)$ seen as functions
  on the  integers, while  the existence of  $\beta$ follows  from the
  continuity           of           $Z_{[0,\rho]}^{\tau}$           on
  $\mathcal T = {\mathcal T}_{F_{[0,\rho]}^{0\to \tau}}$ equipped with
  the distance  $d_{\mathcal T}$  (see Remark~\ref{domainofZ}  and the
  paragraphs before  it): fix  $\epsilon>0$ and $\eta>0$,  $n_0$ after
  which
  $d_{\mathcal H}( (3n)^{-1/2} C_n  , F_{[0,\rho]}^{0\to\tau}) < \eta$
  and
  $d_{\mathcal  H}( n^{-1/4}  \widehat L_n  , Z_{0,\rho}^\tau)<\epsilon/3$
  and use the  domination of $d_{\mathcal T}$ by $d_{F}$ (the limit of
  $d_{F_n}$) to write for
  $n \geq n_0$
  \begin{equation}
    \sup_{  \substack{i,j\in[\![0,2\rho_n+\tau_n]\!]\\  d_{F_n}(r_{F_n}(i),r_{F_n}(j))\leq \beta n^{1/2}}}| \widehat{L}_n(i)-\widehat{L}_n(j)
    | \leq \frac {2\epsilon}3  + \sup_{\substack{u,v\in[0,\rho]\\ d_{\mathcal T}(u,v)\leq \beta
      + 2 \eta}}
    | Z_{[0,\rho]}^\tau(u) - Z_{[0,\rho]}^\tau(v)|
  \end{equation}
  which can  be made smaller  that $\epsilon$ by choosing  $\beta$ and
  $\eta$ appropriately;  the (finitely  many) cases $n  < n_0$  can be
  taken into account by making $\beta$ even smaller if needed.

Next, one can see that, for all $i,j\in[\![0,2\rho_n+\tau_n]\!]$: 
$$ d_{F_n}(r_{F_n}(i),r_{F_n}(j))=d_{{F'}_n}(\overline{\bm p}\left(r_{F_n}(i) \right), \overline{\bm p}\left(r_{F_n}(j) \right)), \text{ and }$$
$$|L_n(i)-L_n(j)|= | \ell_{n}(r_{F_n}(i))-\ell_{n}(r_{F_n}(j))  |= | \ell'_{n}(\overline{\bm p}\left(r_{F_n}(i) \right))-\ell'_{n}(\overline{\bm p}\left(r_{F_n}(j) \right)) |=|\widehat{L}_n(i)-\widehat{L}_n(j)  |.$$

We have for all $n\geq 1$ and $\delta \in [0,1]$:
\begin{align*}\ &\mathbb{P}\left(\sup_{\substack{i,j\in[\![0,2\rho_n+\tau_n]\!] \\\left | i-j \right |\leq \delta
                  (2\rho_n+\tau_n)}} \left | L_n(i)-L_n(j) \right
                  |>\epsilon n^{1/4} \right)\\
  = &\mathbb{P}\left(\sup_{\substack{i,j\in[\![0,2\rho_n+\tau_n]\!] \\
  \left | i-j \right |\leq \delta (2\rho_n+\tau_n)}}
  |\widehat{L}_n(i)-\widehat{L}_n(j)|
    >\epsilon n^{1/4} \right)\\
\leq &\mathbb{P}\left(\exists i,j:  i,j\in[\![0,2\rho_n+\tau_n]\!]\, ,
          d_{F'_n}(\overline{\bm p}\left(r_{F_n}(i) \right),\overline{\bm p}\left(r_{F_n}(j) \right))\leq \beta n^{1/2}, |\widehat{L}_n(i)-\widehat{L}_n(j)  |>\epsilon n^{1/4}  \right)\\
&+\mathbb{P}\left(\exists i,j: i,j\in[\![0,2\rho_n+\tau_n]\!]\, ,\left
                                                                                                                                                                                             |i-j\right | \leq \delta (2\rho_n+\tau_n), d_{F'_n}(\overline{\bm p}\left(r_{F_n}(i) \right),\overline{\bm p}\left(r_{F_n}(j) \right)) \geq \beta n^{1/2} \right).
  \\
                                                    \leq &\epsilon + \mathbb{P}\left(\exists i,j: i,j\in[\![0,2\rho_n+\tau_n]\!]\, ,\left |i-j\right | \leq \delta (2\rho_n+\tau_n), d_{F_n}(r_{F_n}(i) ,r_{F_n}(j) ) \geq \beta n^{1/2} \right).
\end{align*}

Moreover, we can see that 
\begin{align*}&\sup
  \left\{d_{F_n}(r_{F_n}(i),r_{F_n}(j)):i,j\in[\![0,2\rho_n+\tau_n]\!]\,
                , \left |i-j \right | \leq \delta (2\rho_n+\tau_n) \right\}
  \\
  \leq&
        3\sup\left \{\left |C_{n}(i)-C_{n}(j)\right | : i,j\in[\![0,2\rho_n+\tau_n]\!]\,
    , \left |i-j
        \right | \leq \delta (2\rho_n+\tau_n) \right\}
   \\ \leq&
 3\sqrt{3n}\sup\left \{\left |C_{(n)}(x)-C_{(n)}(y)\right | : x,y\in\left[0,\frac{2\rho_n+\tau_n}{2n}\right]\,
     ,\left | x-y \right
   |\leq \delta \frac{2\rho_n+\tau_n}{2n}\right\}.
\end{align*}

By Lemma~\ref{convergenceofcontour}, $C_{(n)}$ converges in law toward
$\widetilde F^{0\rightarrow \tau}_{\left [ 0, \rho \right ]}$.  Since
$\widetilde F_{[0,\rho]}^{0\rightarrow \tau}$ is almost surely continuous on
$[0,\rho]$, there exists $\delta$ small enough such that:
$$\sup_{n}\mathbb{P}(\exists i,j: i,j\in[\![0,2\rho_n+\tau_n]\!]\, ,\left |i-j\right | \leq \delta (2\rho_n+\tau_n), d_{F_n}(r_{F_n}(i), r_{F_n}(j)) \geq \beta n^{1/2} )< \epsilon.$$
For this $\delta$, we have:
$$\sup_{n}\mathbb{P}\left(\sup_{\substack{i,j\in[\![0,2\rho_n+\tau_n]\!] \\\left | i-j \right |\leq \delta
                  (2\rho_n+\tau_n)}}\left | L_n(i)-L_n(j)  \right |>\epsilon n^{1/4}\right)<2\epsilon,$$
this completes the proof of the Lemma. 
\end{proof}

{Then the proof of Lemma~\ref{main1} follows from
  Lemmas~\ref{convergeofforest2} and~\ref{tightofZ_n} by applying
  exactly the same steps as in \cite{berry2013scaling}. We omit the
  details.}

\section{Convergence of uniformly random toroidal triangulations}
\label{sec:proof}

In this section, we prove our main theorem. Combining the results
of previous sections, we have all the necessary tools to adapt the
method of Addario-Berry and Albenque~(\cite{berry2013scaling}, lemma
$6.1$); we extend the arguments of Bettinelli
(\cite{bettinelli2010scaling}, Theorem $1$) and Le Gall
\cite{le2007topological} to obtain Theorem~\ref{main3}.

For $n\geq 1$, let $G_n$ be a uniformly random element of
$\mathcal G(n)$.  Let $V_n$ be the vertex set of $G_n$. Recall that
$\Phi$ denotes the bijection from $\mathcal T_{r,s,b}(n)$ to
$ \mathcal G(n)$ of Theorem~\ref{them:bijectionbenjamin}.  Let
$T_n=\Phi^{-1}(G_n)$.  Therefore $T_n$ is a uniformly random element
of $\mathcal T_{r,s,b}(n)$.

We now consider $t_n$ that is uniformly distributed over
$[\![1,3]\!]$.  So the random pair $(t_n,T_n)$ is uniformly
distributed over the set $[\![1,3]\!]\times\mathcal{T}_{r,s,b}(n)$.
Then we consider $(r_n,R_n)$ the image of $(t_n,T_n)$ by the bijection
of Lemma~\ref{bij:decomposition}.  Let $k_n\in [\![0,9]\!]$ be such
that $R_n\in \mathcal R^{k_n}(n)$, so that we have
$r_n\in [\![1,3]\!]$ if $k_n=0$ (i.e. $T_n$ is a square) and
$r_n\in [\![1,2]\!]$ otherwise (i.e. $T_n$ is hexagonal).  By
Lemma~\ref{structureofu}, almost surely $k\neq 0$ so we can consider
that $T_n$ is always hexagonal.

By the discussion on the decomposition of unicellular map in
Section~\ref{dec}, the elements of
$\cup_{0\leq j\leq 9}\mathcal R^{j}(n)$ are in bijection with
$\mathcal U_{r,b}(n)$.  Let $U_n$ be the element of
$\mathcal U_{r,b}(n)$ that is decomposed into $R_n$.

As in Section~\ref{sec:labeldistance}, we define $Q_n$ as the
unicellular map obtained from $U_n$ by removing all its stems and let
$r_n=r_{Q_n}$ be the vertex contour function of $Q_n$.

We  define a pseudo-distance $d_n$ on $[\![0,2n+1]\!]$ by the following:
for $i,j\in [\![0,2n+1]\!]^2$, let
$$d_n(i,j)=d_{G_n}(r_n(i),r_n(j)).$$
Then we define the associated equivalence relation: for
$i,j\in [\![0,2n+1]\!]$, we say that $i\sim_{n}j$ if
$d_n(i,j)=0$. Thus we can see $d_n$ as a metric on
$[\![0,2n+1]\!]/\sim_{n}$.

We extend the definition of $d_n$ to non-integer values by the
following linear interpolation: for $s,t\in [0,2n+1]$, let
\begin{equation*}
\label{wybis}
d_{n}(s,t)=\underline{s}\,\underline{t}d_n(\lceil  s \rceil  , \lceil  t \rceil   )+\underline{s}\,\overline{t}d_n(\lceil  s \rceil  , \lfloor  t \rfloor   )+\overline{s}\,\underline{t}d_n(\lfloor s \rfloor  , \lceil  t \rceil   )+\overline{s}\,\overline{t}d_n(\lfloor  s \rfloor  , \lfloor  t \rfloor),  
\end{equation*}
where
$\lfloor x \rfloor=\sup\left\{k \in \mathbb Z:\, k\leq x \right\}$,
$\lceil x \rceil =\lfloor x \rfloor+1$,
$\underline{x}=x-\lfloor x \rfloor$ and
$\overline{x}=\lceil x \rceil -x$. We define its rescaled version by
the following: 
\begin{equation*}
\label{wy}
d_{(n)}=\left(\frac{d_n((2n+1)s,(2n+1)t)}{n^{1/4}}\right)_{s,t\in[0,1]^2}.
\end{equation*}

Note that the metric space $\left(\frac{1}{2n+1}[\![0,2n+1]\!]/\sim_{n}, d_{(n)}\right)$ is isometric to $\left(V_{n},n^{-1/4}d_{G_{n}}\right)$. Therefore we obtain
\begin{equation}
\label{eq:hausdorft}
d_{GH}\left(\left(\frac{1}{2n+1}[\![0,2n+1]\!]/\sim_{n}, d_{(n)}\right), \left(V_{n},n^{-1/4}d_{G_{n}}\right) \right)=0.
\end{equation}

The goal of this section is to prove the following lemma which implies Theorem~\ref{main3}

\begin{lemma}
\label{lem:convergenceofd}
  There exists a subsequence $(n_k)_{k\geq 0}$  and a pseudo-metric
  $d$ on $[0,1]$ such that 

$$\left(\frac{1}{2n_k+1}[\![0,2n_k+1]\!]/\sim_{n_k},
  d_{(n_k)}\right)\xrightarrow[{k\rightarrow \infty}]{{(d)}}
\left([0,1]/\sim_d,d\right) $$ for the Gromov-Hausdorff distance,
where for $x,y\in[0,1]^2$, we say that $x\sim_{d} y$ if $d(x,y)=0$.
\end{lemma}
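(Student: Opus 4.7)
The plan is to reduce the Gromov--Hausdorff convergence of the rescaled metric spaces to the uniform convergence (along a subsequence, in distribution) of a simpler pair of random continuous functions, namely the rescaled shifted labeling function $S_{(n)}$ and the rescaled pseudo-distance $d_{(n)}$ on $[0,1]^2$. First, I would establish the convergence in distribution of $S_{(n)}$ along subsequences. By Lemma~\ref{structureofu}, along any subsequence one can further extract a sub-subsequence along which the kernel type $k_{n}$ is a fixed value $k$ and the renormalized parameters $(\rho_{(n)}, \gamma_{(n)}, \sigma_{(n)})$ converge almost surely (after Skorokhod embedding) to deterministic limits $(\rho, \gamma, \sigma)$ with positive coordinates. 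Conditionally on these parameters, the $2t_n$ well-labeled forests and the $t_n$ Motzkin paths are independent, so Lemma~\ref{main1} and Lemma~\ref{convergeofmotzkinpathex} yield joint convergence of their rescaled contour/label pairs and extended/extended-inverse Motzkin paths to independent pairs $(\widetilde F^{0\to\tau^i}_{[0,\rho^i]}, Z^{\tau^i}_{[0,\rho^i]})$ and Brownian bridges $B^{0\to\gamma^i}_{[0,2\sigma^i]}$. Concatenating these continuous processes according to the construction of $S^\bullet$ and then of $S$ given in Section~\ref{sec:labeldistance}, the rescaled function $S_{(n)}$ converges to an explicit random continuous function $\mathbf S$ on $[0,1]$; in particular $(S_{(n)})_n$ is tight in $C([0,1], \mathbb R)$.

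Second, I would translate this into tightness for $d_{(n)}$. By Lemma~\ref{distancebetweentwovertices} applied in $G_n$ and by equation~(\ref{eq:d0S}), there exists a universal constant $K$ such that for all $i,j \in [\![0,2n+1]\!]$,
\begin{equation*}
d_n(i,j) \;\leq\; S(i) + S(j) - 2\,\overline S(i,j) + K.
\end{equation*}
Dividing by $n^{1/4}$ and interpolating linearly gives, uniformly in $s,t\in[0,1]$,
\begin{equation*}
d_{(n)}(s,t) \;\leq\; S_{(n)}(s) + S_{(n)}(t) - 2\min_{u \in [s\wedge t,\, s\vee t]} S_{(n)}(u) + K\,n^{-1/4}.
\end{equation*}
Since the right-hand side is bounded above by $2\,\omega(S_{(n)}, |t-s|) + o(1)$, and $d_{(n)}$ satisfies the triangle inequality, a standard computation gives
\begin{equation*}
|d_{(n)}(s_1,t_1) - d_{(n)}(s_2,t_2)| \;\leq\; d_{(n)}(s_1,s_2) + d_{(n)}(t_1,t_2) \;\leq\; 2\,\omega(S_{(n)}, |s_1-s_2|) + 2\,\omega(S_{(n)}, |t_1-t_2|) + o(1),
\end{equation*}
so the tightness of $(S_{(n)})_n$ in $C([0,1],\mathbb R)$ implies, via Arzel\`a--Ascoli, the tightness of $(d_{(n)})_n$ in $C([0,1]^2, \mathbb R_+)$.

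Third, extracting a further subsequence $(n_k)$ along which $(S_{(n_k)}, d_{(n_k)})$ converges jointly in distribution to some limit $(\mathbf S, d)$, and using Skorokhod representation so the convergence is almost sure and uniform, the limit function $d$ inherits symmetry, non-negativity, the triangle inequality, and $d(x,x)=0$ (recall $d_{(n)}(s,s) = O(n^{-1/4})$ by the interpolation formula), so $d$ is a pseudo-metric on $[0,1]$.

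Finally, I would conclude the Gromov--Hausdorff convergence by the identification of the discrete space with the quotient of $[0,1]$ via $d_{(n)}$. By (\ref{eq:hausdorft}), the space $(V_{n_k}, n_k^{-1/4}d_{G_{n_k}})$ is isometric to $(\tfrac{1}{2n_k+1}[\![0,2n_k+1]\!]/\!\!\sim_{n_k},\, d_{(n_k)})$, which in turn is at GH-distance $O(n_k^{-1/4})$ from $([0,1]/\!\!\sim_{d_{(n_k)}},\, d_{(n_k)})$ (using the piecewise-constant retraction onto the $\frac{1}{2n_k+1}$-grid, whose distortion is controlled by the equicontinuity established above). Applying the standard estimate, for the correspondence $R = \{(x,x) : x\in[0,1]\}$,
\begin{equation*}
d_{GH}\bigl(([0,1]/\!\!\sim_{d_{(n_k)}}, d_{(n_k)}),\, ([0,1]/\!\!\sim_d, d)\bigr) \;\leq\; \tfrac{1}{2}\sup_{(s,t)\in[0,1]^2} |d_{(n_k)}(s,t) - d(s,t)|,
\end{equation*}
which tends to $0$, yields the announced convergence.

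The main obstacle is the joint convergence of $S_{(n)}$ to a continuous limit. The shifted labeling is built by concatenating the contour-label pairs of the six well-labeled forests with the extensions and inverse extensions of the three Motzkin paths, after the $c$-shift prepending dictated by the type $k_n$ and the removal of the post-stem angles when passing from $S^\bullet$ to $S$; one must check that these discrete surgery operations commute with the rescaling in the limit and produce a single continuous functional of the independent limiting processes. The explicit control furnished by (\ref{eq:Setm}) and Lemma~\ref{lem:Mm6} on the discrepancy between $S$ and the shifted $m$ values ensures that this concatenation is the correct object for the upper bound on $d_{(n)}$.
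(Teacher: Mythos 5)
Your proposal is correct and follows essentially the same route as the paper: convergence of the rescaled shifted labeling $S_{(n)}$ (the paper's Lemma~\ref{lem:convergenceshift}), the upper bound $d_{(n)}\leq d^o_{(n)}+O(n^{-1/4})$ from Lemma~\ref{distancebetweentwovertices} combined with~(\ref{eq:d0S}), tightness of $d_{(n)}$ via the triangle inequality (the paper's Lemma~\ref{tightofd_n}), subsequential extraction with Skorokhod representation to identify the limit as a pseudo-metric, and Gromov--Hausdorff convergence via a correspondence whose distortion is controlled by the uniform convergence of $d_{(n_k)}$. The only cosmetic differences are that you bound the modulus of continuity of $d_{(n)}$ directly by $\omega(S_{(n)},\cdot)$ where the paper passes to the limit $d^o$ and invokes Fatou's lemma, and that the limiting parameters $(\rho,\gamma,\sigma)$ should be described as random (with law $\mu$) rather than deterministic, one then works conditionally on their value exactly as you do.
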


\subsection{Convergence of the shifted labeling function of the
  unicellular map}

As in Section~\ref{section5}, let $(\rho^1_n,\ldots,\rho^6_n)\in \mathbb N^{6}$,
$(\tau^1_n,\ldots,\tau^6_n)\in (\mathbb N^*)^{6}$,
$(\gamma^1_n,\gamma^2_n,\gamma^3_n)\in \mathbb Z^{3}$,
$(\sigma^1_n,\sigma^2_n,\sigma^3_n)\in \mathbb N^{3}$,
$((F^{1}_{n},\ell^{1}_{n}), \ldots ,(F^{6}_{n},\ell^{6}_{n}))\in
\mathcal F^{\rho^1_n}_{\tau^1_n}\times \cdots \times \mathcal
F^{\rho^6_n}_{\tau^6_n}$,
$(M^{1}_{n},M^{2}_{n}, M^{3}_{n})\in \mathcal
M_{\sigma^1_n}^{\gamma^1_n}\times \mathcal M_{\sigma^2_n}^{\gamma^2_n}
\times \mathcal M_{\sigma^3_n}^{\gamma^3_n}$ be such that
$R_n=((F^{1}_{n},\ell^{1}_{n}), \ldots
,(F^{6}_{n},\ell^{6}_{n}),M^{1}_{n},M^{2}_{n}, M^{3}_{n})$ (see
Definition~\ref{def:defofR1to9}).
For $i\in [\![4,6]\!]$, let
$\gamma^i_n=-\gamma^{i-3}_n$ and
$\sigma^i_n=\sigma^{i-3}_n$. Moreover, for every $n>0$, we define
the renormalized version $\rho_{(n)},\gamma_{(n)},\sigma_{(n)}$ by
letting $\rho_{(n)}=\frac{\rho_n}{n}$,
$\gamma_{(n)}=(\frac{9}{8n})^{1/4}\gamma_n$ and
$\sigma_{(n)}=\frac{\sigma_n}{\sqrt{2n}}$.
For $1\leq k\leq 9$ and $1\leq i\leq 6$, let $c_i(k)$ be the value
  of $c_i$ given by line $k$ of Table~\ref{tab:gamma}.

As  in
Section~\ref{sec:labeldistance}, we introduce several definitions.
For $i\in[\![1,6]\!]$ and  $j\in[\![0, 2\rho_n^i+\tau_n^i]\!]$, we define
  $$S_n^i(j)=L_{(F^i_n,\ell^i_n)}(j)+\widetilde
  {M_n^i}^{c_i(k_n)}(\overline{C_{F^i_n}}(j)),$$

  Let $S_n^\bullet=S_n^1\bullet \cdots \bullet S_n^{6}$.  Let $P_n$ be
  the the unicellular map obtained from $U_n$ by removing all the
  stems that are not incident to proper vertices and let $r_{P_n}$ be
  its vertex contour function.  We see $S_n^\bullet$ as a function
  from the angles of $P_n$ to $\mathbb Z$

 Note that $P_n$ contains exactly $2\times(\sigma_1+\cdots+\sigma_{3})+2\times
 \mathbbm 1_{k\neq 0}$ stems.

We define the sequence $(S_n(i))_{0\leq i\leq 2n+1}$ as the sequence
that is obtained from $S_n^\bullet$ by removing all the values that
appear in an angle of $P_n$ that is just after a stem of $P_n$ in \cw
order around its incident vertex.   So $S_n$ is the {shifted
  labeling function} of the unicellular map $U_n$ (as defined in
Section~\ref{sec:labeldistance}) and is seen as a function from the
angles of $Q_n$ to $\mathbb Z$.

We consider that $S_n$ is linearly
interpolated between its integer values and define its rescaled
version:
$${S}_{(n)}=\left ( \frac{ S_n((2n+1)x)}{n^{1/4}} \right )_{0\leq x \leq 1}$$

We have the following lemma

\begin{lemma}
  \label{lem:convergenceshift}
  $S_{(n)}$ converge  in law in the space
$(\mathcal  H,d_{\mathcal  H})$  when
  $n$ goes to infinity.
\end{lemma}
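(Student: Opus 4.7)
The plan is to deduce Lemma~\ref{lem:convergenceshift} by combining the convergence results already established for the individual constituents of $S_n$, namely the contour pair of each uniform well-labeled forest (Lemma~\ref{main1}) and the extension of each uniform Motzkin path (Lemma~\ref{convergeofmotzkinpathex}), and then assembling them via continuous mapping. The key preliminary step is to use Lemma~\ref{structureofu} together with the Skorokhod representation theorem to reduce to subsequences along which $k_n$ converges almost surely to some $k \in [\![1,9]\!]$ and $(\rho_{(n)},\gamma_{(n)},\sigma_{(n)}) \to (\rho,\gamma,\sigma)$ a.s., with $\rho^i>0$ and $\sigma^i>0$ for all $i$ (which holds $\mu$-a.s.). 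Conditionally on these parameters, the six well-labeled forests $(F_n^i,\ell_n^i)$ and the three Motzkin paths $M_n^i$ are independent and uniformly distributed on the corresponding sets, so the joint convergence
\begin{equation*}
\bigl( C_{F_n^i,(n)},\, L_{(F_n^i,\ell_n^i),(n)},\, \widetilde{M_n^i}_{,(n)} \bigr)_{1\leq i \leq 6} \xrightarrow{(d)} \bigl( \widetilde F^{0\to\tau^i}_{[0,\rho^i]},\, Z^{\tau^i}_{[0,\rho^i]},\, B^{0\to\gamma^i}_{[0,2\sigma^i]} \bigr)_{1\leq i \leq 6}
\end{equation*}
follows from independence together with Lemmas~\ref{main1} and~\ref{convergeofmotzkinpathex}, where the six limiting forest pairs and the three limiting Brownian bridges are mutually independent (and $\widetilde{M_n^i} = \widetilde{\underline{M_n^{i-3}}}$ for $i\in[\![4,6]\!]$).

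Next I would treat each piece $S_n^i = L_{(F_n^i,\ell_n^i)} + \widetilde{M_n^i}^{c_i(k_n)} \circ \overline{C_{F_n^i}}$ separately. The $c$-shift $\widetilde{M_n^i}^{c_i(k_n)}$ differs from $\widetilde{M_n^i}$ by at most two boundary entries, so after rescaling it has the same limit. The running maximum $\overline{C_{F_n^i}}$ (which records the floor index in the forest) converges uniformly, after rescaling, to $\overline{\widetilde F}^{0\to\tau^i}_{[0,\rho^i]}$ by the continuous mapping theorem. Since the limiting Brownian bridge $B^{0\to\gamma^i}_{[0,2\sigma^i]}$ is a.s.\ continuous and the limit $\overline{\widetilde F}^{0\to\tau^i}_{[0,\rho^i]}$ takes values in the compact interval $[0,2\sigma^i]$ where $B$ is defined, the composition is a continuous functional on the appropriate continuity set and Lemma~\ref{main1} together with independence yields the joint convergence of the rescaled $S_n^{i,(n)}$ to
\begin{equation*}
\Sigma^i := Z^{\tau^i}_{[0,\rho^i]} + B^{0\to\gamma^i}_{[0,2\sigma^i]} \circ \overline{\widetilde F}^{0\to\tau^i}_{[0,\rho^i]}.
\end{equation*}

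I would then concatenate: since $\bullet$ is continuous on $\mathcal H$, the rescaled $S_n^{\bullet,(n)} = S_n^{1,(n)}\bullet\cdots\bullet S_n^{6,(n)}$ converges jointly to $\Sigma^1 \bullet \cdots \bullet \Sigma^6$, still in $(\mathcal H, d_{\mathcal H})$. Finally, passing from $S_n^\bullet$ (indexed by angles of $P_n$, with $I_n = 2n+2 + 2(\sigma_n^1+\sigma_n^2+\sigma_n^3) + 2$ values) to $S_n$ (indexed by angles of $Q_n$, with $2n+2$ values) corresponds to deleting the values at angles lying immediately after a stem of $P_n$. Because the total number of deleted positions is $2(\sigma_n^1+\sigma_n^2+\sigma_n^3)+2 = O(\sqrt n)$, the induced time change on the rescaled axis $[0,1]$ is close to the identity uniformly with error $O(n^{-1/2})$, and hence the reparameterization is continuous and asymptotically trivial. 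This gives the convergence of $S_{(n)}$ in $(\mathcal H, d_{\mathcal H})$ toward the concatenation $\Sigma^1\bullet\cdots\bullet\Sigma^6$ (up to the identity time change), conditionally on the limiting parameters; unconditioning by integrating against $\mu$ yields the unconditional convergence in law claimed by the lemma.

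The main obstacle will be to make the composition $\widetilde M \circ \overline{C}$ rigorous at the rescaled level: although $\overline{C_{F_n^i}}$ grows only in unit jumps at ``floor transitions'' and $\widetilde M_n^i$ is itself a discrete object, the continuous mapping argument requires identifying the continuity set of the limit and controlling the joint tightness; this is essentially a standard argument on $(\mathcal H, d_{\mathcal H})^2$ using that the cumulative maximum of $\widetilde F^{0\to\tau^i}_{[0,\rho^i]}$ is almost surely continuous. The second delicate point is the stem-removal reparameterization: the deleted positions are not evenly spaced but alternate with the well-labeled forest values, so one must verify that the time change is monotone and has sublinear fluctuations uniformly in $n$, which again follows from the $O(\sqrt n)$ count of stems and the explicit decomposition of $P_n$ recalled in Section~\ref{sec:labeldistance}.
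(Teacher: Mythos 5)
Your proposal follows the same approach and structure as the paper's own proof: reduce via Lemma~\ref{structureofu} and Skorokhod to almost-sure convergence of $(k_n,\rho_{(n)},\gamma_{(n)},\sigma_{(n)})$, apply Lemma~\ref{main1} and Lemma~\ref{convergeofmotzkinpathex} conditionally and independently for each of the six forests and three Motzkin paths, compose the extended Motzkin path with the running maximum of the contour, concatenate, and finally absorb the $O(\sqrt n)$ stem-removal time-change. The only slip is in the explicit formula for the limit $\Sigma^i$: because $C_{(n)}$ is rescaled by $(3n)^{-1/2}$ while $\widetilde M_{(n)}^i$ uses $(9/8n)^{1/4}$ in value and $(2n)^{-1/2}$ in time, and $S_{(n)}$ is normalized by $n^{-1/4}$, the composed term should read $\left(\frac{8}{9}\right)^{1/4}\widetilde{M^i}\!\left(\sqrt{\tfrac{3}{2}}\,\overline{C^i}(t)\right)$ rather than $B^{0\to\gamma^i}_{[0,2\sigma^i]}\circ\overline{\widetilde F}^{0\to\tau^i}_{[0,\rho^i]}$ with no constants; this does not affect the existence of a limit, which is all the lemma claims, but it matters for any later use of the explicit limit.
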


\begin{proof}
 By Lemma~\ref{structureofu},
 the vector
$\left(k_n,\rho_{(n)},\gamma_{(n)},\sigma_{(n)}\right)$
converges  in
law toward a random vector
$\left(k,\rho,\gamma,
  \sigma\right)$
whose law is the 
probability measure $\mu$ of Section~\ref{section5}.

For convenience, for $1\leq i\leq 6$, let $(C_n^i,L_n^i)$ denote the
contour pair $(C_{F^i_n},L_{(F^i_n,\ell^i_n)})$ of the well-labeled
forest $(F^i_n,\ell^i_n)$. As usual, $(C_n^i,L_n^i)$ is linearly
interpolated and we denote the rescaled version by
$(C_{(n)}^i,L_{(n)}^i)$ as in Section~\ref{section7}.  By Lemma~\ref{main1},
conditionally on $ \left(k,\rho,\gamma, \sigma\right)$, we have
$(C_{(n)}^i,L_{(n)}^i)$ converge in law toward
$(C^i,L^i)=\left(\widetilde F^{0\rightarrow \tau^i}_{\left [ 0,\rho^i \right ]},
  Z^{\tau^i}_{[0,\rho^i]}\right)$.

Similarly as in Section~\ref{sec:motzkin} we consider that
$\widetilde{{M_{n}^i}}$ and $\widetilde{{M_{n}^i}}^{c_i({k_n})}$ are
linearly interpolated and we define their rescaled versions:

\begin{equation*}
\label{eq:rescaledtilde}
\widetilde{M_{(n)}^i}=\left(\left(\frac{9}{8n}\right)^{1/4}\widetilde{M_n^i}(\sqrt{2n}s)\right)_{0\leq       s       \leq
 \frac{2\sigma_n^i+\gamma_n^i}{\sqrt {2n}}}
\end{equation*}

\begin{equation*}
\label{eq:rescaledtilde}
\widetilde{M_{(n)}^i}^{c_i({k_n})}=\left(\left(\frac{9}{8n}\right)^{1/4}\widetilde{M_n^i}^{c_i({k_n})}(\sqrt{2n}s)\right)_{0\leq       s       \leq
 \frac{\tau_n^i}{\sqrt {2n}}}
\end{equation*}

By Lemma~\ref{convergeofmotzkinpathex}, $\widetilde{{M_{(n)}^i}}$
converges in law toward
$\widetilde{{M^i}}=B^{0\rightarrow \gamma^i}_{[0,2\sigma^i]}$ thus $\widetilde{{M_{(n)}^i}}^{c_i({k_n})}$  converge toward the
same limit.

Note that the processes
$C_{(n)}^i,L_{(n)}^i,\widetilde{{M_{(n)}^i}}^{c_i({k_n})}$, for $i\in [\![1,6]\!]$, are independent.  Moreover, by
Skorokhod's theorem, we can assume that these convergences hold almost
surely.

  We consider that $S_n^i$ is linearly interpolated between its
  integer values and we define  its rescaled version:
$${S}_{(n)}^i=\left ( \frac{ S_n^i(2ns)}{n^{1/4}} \right )_{0\leq
  s \leq \frac{2\rho_n^i+\tau_n^i}{2n}}.$$

We have
\begin{align*}
  {S}_{(n)}^i(s)&=\frac{1}{n^{1/4}} S_n^i(2ns)
  \\
                &=
\frac{1}{n^{1/4}}L_{n}(2ns)+\frac{1}{n^{1/4}}\widetilde
                  {M_n^i}^{c_i(k_n)}\left(\overline{C_{n}}(2ns)\right)
                  \\
                &=
L_{(n)}(s)+\left(\frac{8}{9}\right)^{1/4}\widetilde
  {M_{(n)}^i}^{c_i(k_n)}\left(\sqrt{\frac{3}{2}}\overline{C_{(n)}}(s)\right)
\end{align*}

So ${S}_{(n)}^i$ converge in law toward
a
limit ${S}^i:[0,\rho^i]\rightarrow \mathbb R$ in the space $(\mathcal
H,d_{\mathcal H})$, where, for $t\in[0,\rho^i]$, we have:
$$S^i(t)= L^i(t)
+\left(\frac{8}{9}\right)^{1/4}
\widetilde{M^i}
\left(\sqrt{\frac{3}{2}}
  \overline{
    C^i}(t)\right
)$$

 We consider that ${S}^\bullet_{n}$ is linearly interpolated between its
  integer values and we define  its rescaled version:
 
$${S}^\bullet_{(n)}=\left ( \frac{ S^\bullet_n((2n)s)}{n^{1/4}} \right )_{0\leq s \leq \frac{2n+\sum_i\sigma_i+4}{2n}}$$

Therefore we have that  ${S}^\bullet_{(n)}$ converge in law toward
 ${S}^\bullet={S}^{1}\bullet \ldots\bullet {S}^{6}$ in the space $(\mathcal H,d_{\mathcal H})$.

 It remains to show the convergence of $S_{(n)}$ given that of $S_{(n)}^\bullet$. This is done by noticing that $S_n$ is within bounded distance (in the uniform topology on continuous functions) from a time-change of $S_n^\bullet$, where the time change itself is within $O(\sqrt n)$ from the identity. This and the tension of both sequences (or a priori bounds on their moduli of continuity) imply that $S_{(n)}$ and $S_{(n)}^\bullet$ converge toward the same limit.
\end{proof}

\subsection{Subsequential convergence of the pseudo-distance function
  of the unicellular map}

We now introduce several definitions similar to those in Section~\ref{sec:label}.  Let
$a_n^0$ be the root angle of $T_n$ and $v^0_n$ be its root vertex. Let
$\ell_n=4n+1$.  Let $\Gamma_n$ be the unicellular map obtained
from $T_n$ by adding a special dangling half-edge, called the root
half-edge, incident to the root angle of $T_n$.
Let $\lambda_n$ be the labeling function of th angles of $\Gamma_n$ as
defined in Section~\ref{sec:label}.     For all $u,v \in V$,
let $m(u)$ and $\overline m(u,v)$ be  defined  as  in
Section~\ref{sec:label}.

As in Section~\ref{sec:labeldistance}, we define the 
 following pseudo-distance: for $i,j\in[\![0,2n+1]\!]$,
$$d_n^{o}(i,j)=m_n(r_n(i))+         m_n(r_n(j))-
2\overline{m_n}(r_n(i),r_n(j)).$$
We  extend the definition of $d_n^{o}$ to
non-integer values and define its rescaled version $d_{(n)}^{o}$ as for $d_n$.

By Lemma~\ref{lem:convergenceshift}, $S_{(n)}$  converge in law toward
a limit $S:[0,1]\rightarrow \mathbb R$  in the space
$(\mathcal  H,d_{\mathcal  H})$  when
$n$ goes to infinity. For $s,t\in [0,1]$, we define:
$$d^o(s,t)={S}(s)+ {S}(t)-2\min_{x\in [s, t]}
S(x).$$

\begin{lemma}
  \label{lem:d0nconverge}
    $d_{(n)}^{o}$ converges in law toward $d^o$  when
  $n$ goes to infinity.
\end{lemma}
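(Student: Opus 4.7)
My plan is to rewrite $d_{(n)}^o$ as a small deterministic perturbation of an explicit continuous functional $\Phi$ evaluated at $S_{(n)}$, and then to transfer the convergence of $S_{(n)}$ provided by Lemma~\ref{lem:convergenceshift} by the continuous mapping theorem and Slutsky's lemma.

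First, I would introduce the functional $\Phi$ acting on continuous real valued functions on $[0,1]$ by
$$\Phi(f)(s,t) = f(s) + f(t) - 2 \min_{x \in [s\wedge t,\, s\vee t]} f(x).$$
Since $|\min_x f - \min_x g| \leq \|f-g\|_\infty$, the map $\Phi$ is $4$-Lipschitz from $(\mathcal C([0,1],\mathbb R),\|\cdot\|_\infty)$ to $(\mathcal C([0,1]^2,\mathbb R),\|\cdot\|_\infty)$, and by definition $d^o = \Phi(S)$.

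Second, estimate~\eqref{eq:d0S} gives, for all $i,j \in [\![0,2n+1]\!]$,
$$|d_n^o(i,j) - (S_n(i) + S_n(j) - 2\,\overline{S_n}(i,j))| \leq 64.$$
Dividing by $n^{1/4}$ and evaluating at grid points $(i/(2n+1), j/(2n+1))$, and using that the linear interpolation of $S_n$ over an integer interval attains its minimum at an endpoint so that the discrete and continuous minima agree at grid points, we obtain
$$\bigl| d^o_{(n)}\!\bigl(\tfrac{i}{2n+1}, \tfrac{j}{2n+1}\bigr) - \Phi(S_{(n)})\!\bigl(\tfrac{i}{2n+1}, \tfrac{j}{2n+1}\bigr)\bigr| \leq \frac{64}{n^{1/4}}.$$

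Third, I would extend this bound to all $(s,t)\in[0,1]^2$. By construction of $S_n$ from the contour pairs $(C_{F^i_n},L_{(F^i_n,\ell^i_n)})$ and the shifted Motzkin extensions $\widetilde{M^i_n}^{c_i(k_n)}$, the discrete increments $|S_n(i+1)-S_n(i)|$ are bounded by a constant independent of $n$ and $i$ (each elementary contour step contributes $\pm 1$, up to a bounded number of shifts between blocks). In parallel, Lemma~\ref{lem:mm7} bounds the variation of $m_n$ along consecutive vertices of the contour walk $r_n$ by a constant, which controls the cell to cell oscillation of $d_n^o$. After rescaling, both $\Phi(S_{(n)})$ and the bilinear interpolation defining $d^o_{(n)}$ then oscillate by $O(n^{-1/4})$ on cells of side $1/(2n+1)$, and combining with the estimate at grid points gives
$$\sup_{(s,t)\in[0,1]^2}\bigl|d^o_{(n)}(s,t) - \Phi(S_{(n)})(s,t)\bigr| = O(n^{-1/4}).$$

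Fourth, Lemma~\ref{lem:convergenceshift} provides $S_{(n)} \to S$ in law in $(\mathcal H, d_{\mathcal H})$, and the continuous mapping theorem applied to $\Phi$ yields $\Phi(S_{(n)}) \to \Phi(S) = d^o$ in law in $\mathcal C([0,1]^2,\mathbb R)$ for the uniform topology. Combining with the uniform error estimate from the third step via Slutsky's lemma concludes that $d^o_{(n)} \to d^o$ in law, as required.

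The main technical step will be the third one: producing a clean uniform bound on the discrepancy between the bilinear interpolation used in the definition of $d^o_{(n)}$ and the honest continuous formula defining $\Phi(S_{(n)})$ at non-grid points. Although both sources of error are of order $n^{-1/4}$ for natural reasons, the bookkeeping requires care: one has to decompose the bilinear error into horizontal and vertical parts, compare the continuous minimum over $[s\wedge t, s\vee t]$ with the minima over the neighbouring integer sub-intervals, and use Lemma~\ref{lem:mm7} together with the bounded-increment property of $S_n$ to control each contribution.
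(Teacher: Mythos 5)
Your proposal is correct and follows essentially the same route as the paper's proof: both start from the uniform bound~\eqref{eq:d0S} at integer points, use Lemma~\ref{lem:mm7} to control the oscillation of $d_n^o$ between grid points and the bounded increments of $S_n$ to control the oscillation of the min-functional, and then transfer the convergence of $S_{(n)}$ from Lemma~\ref{lem:convergenceshift}. The only difference is presentational: you package the last step explicitly as continuity of the map $\Phi$ plus Slutsky's lemma, whereas the paper leaves that transfer implicit.
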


\begin{proof}

By~(\ref{eq:d0S}) we have: for $i,j\in[\![0,2n+1]\!]$,
\begin{equation}
  \label{eq:dnsn}
|d_n^{o}(i,j)-(S_n(i)+S_n(j)-2\overline S_n(i,j))| \leq 64
\end{equation}

By Lemma~\ref{lem:mm7}
for any $i,j\in[\![0,2n
+1]\!]$, we have
$$d_n^{o}(i+1,j),d_n^{o}(i,j+1),d_n^{o}(i+1,j+1)\in
[\![d_n^{o}(i,j)-28,d_n^{o}(i,j)+28]\!].$$
Thus, for $s,t\in [0,2n+1]$, we have

\begin{equation*}
|d_{n}^{o}(s,t)-d_{n}^{o}(\lfloor s\rfloor,\lfloor t\rfloor)|\leq 28
\end{equation*}

So, for $s,t\in[0,1]^2$, we have:
$$\left |d_{(n)}^{o}(s,t)-d_{(n)}^{o}\left(\frac{\lfloor(2n+1)s\rfloor}{2n+1},\frac{\lfloor(2n+1)t\rfloor}{2n+1}\right)\right|\leq
\frac{28}{n^{1/4}}$$

Since every vertex is incident to at most two stems and the variation
of $S^\bullet$ is at most $1$, we have
for $s,t\in [0,2n+1]$:

\begin{equation*}
  |S_{n}(s)-S_{n}(\lfloor s\rfloor)|\leq 3
  \end{equation*}
  \begin{equation*}
  |\overline S_{n}(s,t)-\overline S_{n}(\lfloor s\rfloor,\lfloor
  t\rfloor)|\leq 6
\end{equation*}

So, for $s,t\in[0,1]^2$, we have:
$$\left |S_{(n)}(s)-S_{(n)}\left(\frac{\lfloor(2n+1)s\rfloor}{2n+1}\right)\right|\leq
\frac{3}{n^{1/4}}$$
$$\left |\overline S_{(n)}(s,t)-\overline S_{(n)}\left(\frac{\lfloor(2n+1)s\rfloor}{2n+1},\frac{\lfloor(2n+1)s\rfloor}{2n+1}\right)\right|\leq
\frac{6}{n^{1/4}}$$

Then by (\ref{eq:dnsn}), for $C=28+3+3+2\times 6+64=110$, we have, for
all $s,t\in[0,1]^2$:

\begin{equation}
  \label{eq:dnsnrescaled}
|d_{(n)}^{o}(s,t)-(S_{(n)}(s)+S_{(n)}(t)-2\overline S_{(n)}(s,t)| \leq \frac{C}{n^{1/4}}.
\end{equation}

By Lemma~\ref{lem:convergenceshift}, ${S}_{(n)}$ converge in law toward
 ${S}$ in the space $(\mathcal H,d_{\mathcal H})$.
 So
$d_{(n)}^o$   converges   in law  toward
$d^o$.
\end{proof}

\subsection{Convergence for the Gromov-Haussdorf distance}

We use the same notations as in previous sections. We first prove the
tightness of $d_{(n)}$ and then the convergence for the Gromov-Haussdorf distance.

\begin{lemma}
\label{tightofd_n}
 The sequence of the laws of the processes $$\left(d_{(n)}(s,t)\right)_{0\leq s,t\leq 1}$$
 is tight in the space of probability measures on $C([0,1]^2, \mathbb{R})$.
\end{lemma}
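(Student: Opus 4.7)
The plan is to verify the two standard sufficient conditions for tightness in $C([0,1]^2,\mathbb{R})$: first, that the distribution of $d_{(n)}$ at a single point is tight (trivially satisfied since $d_{(n)}(0,0)=0$), and second, that the modulus of continuity
$$\omega_{d_{(n)}}(\delta)=\sup_{\|(s,t)-(s',t')\|\leq\delta}|d_{(n)}(s,t)-d_{(n)}(s',t')|$$
satisfies $\lim_{\delta\to 0}\limsup_n\mathbb{P}(\omega_{d_{(n)}}(\delta)>\epsilon)=0$ for every $\epsilon>0$. An application of Prokhorov's theorem then yields the announced tightness.

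The key algebraic input is that for integer indices $i,i',j,j'\in[\![0,2n+1]\!]$, the triangle inequality for $d_n$ (which is a genuine graph distance, hence a metric) together with Lemma~\ref{distancebetweentwovertices} combine to give
$$|d_n(i,j)-d_n(i',j')|\leq d_n(i,i')+d_n(j,j')\leq d^o_n(i,i')+d^o_n(j,j')+28.$$
After rescaling by $n^{1/4}$, this reads, at points of the form $(\tfrac{i}{2n+1},\tfrac{j}{2n+1})$,
$$\Bigl|d_{(n)}\Bigl(\tfrac{i}{2n+1},\tfrac{j}{2n+1}\Bigr)-d_{(n)}\Bigl(\tfrac{i'}{2n+1},\tfrac{j'}{2n+1}\Bigr)\Bigr|\leq d^o_{(n)}\Bigl(\tfrac{i}{2n+1},\tfrac{i'}{2n+1}\Bigr)+d^o_{(n)}\Bigl(\tfrac{j}{2n+1},\tfrac{j'}{2n+1}\Bigr)+\tfrac{28}{n^{1/4}}.$$

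Next I would extend this bound to arbitrary $(s,t),(s',t')\in[0,1]^2$. Observe that consecutive contour vertices $r_n(k),r_n(k+1)$ of $Q_n$ are adjacent in $G_n$ (since every edge of $Q_n$ is an edge of $G_n$), so $d_n(k,k+1)\leq 1$. Since $d_{(n)}(s,t)$ is by definition a convex combination of the four rescaled values of $d_n$ at the corners of the cell it belongs to, the spread of those four values is at most $2/n^{1/4}$, and the bilinear interpolation contributes an error of size at most $C/n^{1/4}$ for some absolute constant $C$. Consequently,
$$\omega_{d_{(n)}}(\delta)\leq 2\,\omega_{d^o_{(n)}}(\delta)+\tfrac{C'}{n^{1/4}}$$
for some absolute constant $C'$, reducing matters to tightness of the modulus of continuity of $d^o_{(n)}$.

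Finally, I would invoke Lemma~\ref{lem:d0nconverge}, which asserts that $d^o_{(n)}$ converges in law to the process $d^o$ constructed from the continuous limit $S$. Since $d^o$ is almost surely continuous on the compact square $[0,1]^2$, its modulus of continuity tends to $0$ as $\delta\to 0$, and weak convergence transfers this uniform control to the sequence $d^o_{(n)}$: for every $\epsilon,\eta>0$ there exists $\delta>0$ with $\limsup_n\mathbb{P}(\omega_{d^o_{(n)}}(\delta)>\epsilon)<\eta$. Combined with the displayed inequality above, this yields tightness of $\omega_{d_{(n)}}$ and hence of the laws of $d_{(n)}$. The main technical hurdle is the bookkeeping around the bilinear interpolation, particularly the verification that it only contributes a vanishing error; once that is dispatched, the reduction to the already-established convergence of $d^o_{(n)}$ is essentially automatic.
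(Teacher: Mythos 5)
Your proposal is correct and follows essentially the same route as the paper: bound increments of $d_{(n)}$ via the triangle inequality and Lemma~\ref{distancebetweentwovertices} by increments of $d^o_{(n)}$ near the diagonal, then transfer the control from the continuous limit $d^o$ (which is continuous and vanishes on the diagonal) using the convergence of Lemma~\ref{lem:d0nconverge}. The only difference is cosmetic — you make the bilinear-interpolation bookkeeping explicit where the paper works directly with the interpolated process — so no further comment is needed.
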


\begin{proof}

For every $s,s',t,t' \in [0,1]$, by triangular inequality for $d_{G_n}$, we have:
\begin{align*}
d_{(n)}(s,t)\leq d_{(n)}(s,s')+d_{(n)}(s',t')+d_{(n)}(t',t)\\
d_{(n)}(s',t')\leq d_{(n)}(s',s)+d_{(n)}(s,t)+d_{(n)}(t,t')
\end{align*}
Therefore we obtain:
$$\left|d_{(n)}(s,t)-d_{(n)}(s',t')\right|\leq d_{(n)}(s,s')+d_{(n)}(t,t').$$

By Lemma~\ref{distancebetweentwovertices}, we
have, for $s,t\in[0,1]$
$$d_{(n)}(s,t)\leq d_{(n)}^{o}(s,t)+\frac{14}{n^{1/4}}.$$
So we have:
$$\left|d_{(n)}(s,t)-d_{(n)}(s',t')\right|\leq d^o_{(n)}(s,s')+d^o_{(n)}(t,t')+\frac{28}{n^{1/4}}.$$

Consider $\epsilon, \eta>0$. By Lemma~\ref{lem:d0nconverge}, $d_{(n)}^{o}$ converge toward $d^o$, so by using Fatou's
lemma, we have for every $\delta>0$,
\begin{equation}
\label{eq:proofoftightofdn1}
\limsup_{n\rightarrow \infty}\mathbb{P}\left(\sup_{|s-s'|\leq \delta}d^o_{(n)}(s,s')\geq \eta\right)\leq \mathbb{P}\left(\sup_{|s-s'|\leq \delta}d^o(s,s')\geq \eta\right) .
\end{equation}

Since $d^o$ is continuous and null on the diagonal, therefore there exists $\delta_\epsilon>0$ such that:
\begin{equation}
\label{eq:proofoftightofdn2}
\mathbb{P}\left(\sup_{|s-s'|\leq \delta_\epsilon}d^o(s,s')\geq
  \eta\right)\leq \epsilon.
\end{equation}
By~\eqref{eq:proofoftightofdn1},\eqref{eq:proofoftightofdn2} there exists $n_0\in \mathbb{N}$ such that for every $n\geq n_0$ we have:

$$\mathbb{P}\left(\sup_{|s-s'|\leq \delta_\epsilon}d^o_{(n)}(s,s')\geq \eta \right)\leq \epsilon.$$

By taking $n_0$ large enough (if necessary) such that
$\frac{28}{n^{1/4}}\leq \eta$, we have for every $n\geq n_0$: 

$$\mathbb{P}\left(\sup_{|s-s'|\leq \delta_\epsilon; |t-t'|\leq
    \delta_\epsilon}\left|d_{(n)}(s,t)-d_{(n)}(s',t')\right|\geq
  3\eta\right)\leq 2\epsilon.$$

By Ascoli's theorem, this completes the proof of lemma. 
\end{proof}

We are now able to prove the main result of this section.

\begin{proof}[Proof of Lemma~\ref{lem:convergenceofd}]
By Lemma~\ref{tightofd_n}, there exists a subsequence $(n_k)_{k\geq 0}$ and a function $d\in C([0,1]^2, \mathbb{R})$ such that 
\begin{equation}
\label{eq:convergenceofd}
d_{(n_k)} \overset{(d)}{\longrightarrow} d.
\end{equation}

By the Skorokhod theorem, we  assume that this convergence holds
almost surely. As the triangular inequality holds for each $d_{(n)}$
function, the function $d$ also satisfies the triangular
inequality. On the other hand, for $s\in [\![0,2n+1]\!]$, note that  we have
$d_{(n)}(s,s)\leq 1$. So for $x\in [0,1]$, we have
$d_{(n)}(x,x)=O(n^{-1/4})$. Therefore the function $d$ is actually a
pseudo-metric. For $x,y\in[0,1]^2$, we say that $x\sim_{d} y$ if $d(x,y)=0$. 

We use the characterization of the Gromov-Hausdorff distance via
correspondence. Recall that a \emph{correspondence} between two metric spaces
$(S,\delta)$ and $(S',\delta')$ is a subset $R\subseteq S\times S'$ such
that for all $x\in S$, there exists at least one $x'\in S'$ such that
$(x,x')\in R$ and vice-versa. The distortion of $R$ is defined by:
$$\mathrm{dis}(R)=\sup\left\{\left|\delta(x,y)-\delta'(x',y')\right|: (x,x'), (y,y')\in R \right\}.$$

Therefore we have (see~\cite{burago2001course})
$$d_{GH}((S,\delta),(S',\delta'))=\frac{1}{2}\inf_{R} \mathrm{dis}(R),$$

where the infimum is taken over all correspondence $R$ between $S$ and $S'$. 

We define the correspondence $R_{n}$ between $\left(\frac{1}{2n+1}[\![0,2n+1]\!]/\sim_{n}, d_{(n)}\right)$ and $\left([0,1]/\sim_d,d\right)$ as the set
$$R_{n}=\left\{\left(\frac{\pi_{n}\left(\lfloor(2n+1)x\rfloor\right)}{2n+1}, \pi(x)\right), x\in [0,1]\right\},$$
where
$\pi_n$ the canonical projection from $[\![0,2n+1]\!]$ to
$[\![0,2n+1]\!]/\sim_{n}$ and $\pi$ is the canonical projection from
$[0,1]$ to $[0,1]/\sim_d$.

We have
$$\mathrm{dis}(R_{n})=\sup_{0\leq x,y\leq 1}\left|d_{(n)}\left(\frac{\lfloor (2n+1)x\rfloor}{2n+1}, \frac{\lfloor (2n+1)y\rfloor}{2n+1}\right)-d(x,y)\right| $$

By~\eqref{eq:convergenceofd}, we have $\mathrm{dis}(R_{n_k})$ converges
toward $0$ and thus the following convergence  for the Gromov-Hausdorff distance:

$$\left(\frac{1}{2n_k+1}[\![0,2n_k+1]\!]/\sim_{n_k},
  d_{(n_k)}\right)\xrightarrow[{k\rightarrow \infty}]{{(d)}}
\left([0,1]/\sim_d,d\right).$$
\end{proof}
\newpage

\appendix

\section{Approximation of distance by labels}
\label{rightmostpath}

In this  appendix, we  show that with  high probability,  the labeling
function  defined in  Section~\ref{subsection:forestandwell-labelings}
approximates the  distance to  the root up  to a  uniform $o(n^{1/4})$
correction. As we mentioned in  the introduction, we believe that this
is an essential  step toward proving uniqueness  of the subsequential
limit in  Theorem~\ref{main3}. The  proof is  quite technical  and the
estimate itself  is not  needed in  the proof  of Theorem~\ref{main3};
since it exploits the same rather involved combinatorial construction,
we chose to include it here as  an appendix rather than to write it as
a separate article.

\subsection{Rightmost walks and distance properties}
\label{rightmostpathsub}

\subsubsection{Definition and properties of rightmost walks}

We use the
same notations as in Sections~\ref{sec:preliminaries} and~\ref{sec:label}.

For $n\geq 1$, let $T$ be an element of $\mathcal T_{r,s,b}(n)$, and
$G=\Phi(T)$ the corresponding element of $\mathcal G(n)$.  The
canonical orientation of $G$ is noted $D_0$. Recall that, as already
mentioned, every vertex of $G$ as outdegree exactly three in $D_0$.

For an (oriented) edge $e$ of $D_0$, we define the \emph{rightmost
  walk} from $e$ as the sequence of edges starting by following $e$,
and at each step taking the rightmost outgoing edge among the three
outgoing edges at the current vertex. Note that a rightmost walk is
necessarily ending on a periodic closed walk since $G$ is finite.

We have the following essential lemma concerning rightmost walks:

\begin{lemma}
\label{lem:rightwalkroot}
For any edge $e$ of $D_0$, the ending part of the rightmost walk from
$e$ is the root triangle with the interior of the triangle on its
right side.
\end{lemma}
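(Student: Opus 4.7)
The plan is as follows. Because every vertex of $G$ has outdegree exactly $3$ in the canonical orientation $D_0$, the rightmost walk from $e$ is uniquely defined: at each directed edge there is a unique rightmost outgoing edge at its head. Since $G$ is finite, the sequence of directed edges visited is eventually periodic, so the walk ends by cycling indefinitely around some closed walk $C$ of length $k\geq 1$. The defining property of ``rightmost'' gives that at each vertex traversed by $C$, the angular cone strictly on the right of $C$ (between the incoming and outgoing $C$-edges at that vertex) contains no outgoing edge of $G$; equivalently, no outgoing edge of $G$ is incident to the right side of $C$.

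First I would rule out that $C$ is non-contractible. At each of the $k$ directed edges of $C$, the tail vertex has outdegree $3$, of which one is the $C$-edge itself, none lies on the right, so exactly $2$ must lie on the left. This produces $0$ outgoing edges on the right of $C$ versus $2k$ on the left. If $C$ is already a simple cycle this gives directly $\gamma(C)=-2k\neq 0$, contradicting the balanced property of $D_0$ recalled in Section~\ref{sec:preliminaries}; in the general case one reduces to this by extracting a simple non-contractible sub-cycle of $C$ that inherits the same imbalance. Hence $C$ is contractible, its vertex sequence is then also simple, and $C$ bounds a closed disk $D$ lying on its right side.

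Since $G$ is a triangulation, all faces of $D$ have size $3$. Since no outgoing edge of $C$ enters $D$ and no edge from outside $D$ crosses $C$, each interior vertex of $D$ has all three of its outgoing edges inside $D$, so the number $E_D^{\mathrm{int}}$ of edges strictly inside $D$ equals $3 v_{\mathrm{int}}$, where $v_{\mathrm{int}}$ is the number of interior vertices. Combining Euler's formula $V_D - E_D + F_D = 1$ with the face-edge incidence $3F_D = 2 E_D^{\mathrm{int}} + k$ forces $k = 3$. So $C$ is a triangle of $G$ in the sense of Section~\ref{sec:preliminaries}, cyclically oriented in $D_0$, with interior $D$ on its right side.

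The main obstacle is the last step, identifying $C$ with the root triangle itself. For this I would invoke the characterization of $D_0$ as the \emph{minimal} balanced Schnyder wood proved in \cite{leveque2017generalization}: every cyclically oriented triangle of $G$ gives rise to a ``flip'' that strictly decreases the orientation in the distributive lattice of balanced Schnyder woods, and the minimal element $D_0$ is precisely the one in which no such flip is available except at the root triangle (whose cyclic orientation is enforced by its role as the root face of $D_0$). Consequently $C$ must coincide with the root triangle, its interior $D$ coincides with the interior of the root triangle in the sense of Section~\ref{sec:preliminaries}, and the lemma follows.
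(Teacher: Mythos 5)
Your parts one and two are a legitimate and pleasantly self-contained re-derivation of what the paper imports as Lemma~37 of \cite{leveque2017generalization}: balancedness rules out a non-contractible limit cycle, and the Euler/outdegree count forces $k=3$. Two caveats there. First, the reduction from a non-simple limit walk $C$ to a simple non-contractible sub-cycle ``that inherits the same imbalance'' is asserted but not justified; the sub-cycle's $\gamma$-value is not obviously inherited from the walk, so this needs an argument. Second, you should note that $k=1,2$ are impossible from essential simplicity (no contractible loops or homotopic multi-edges), which is what guarantees that the disk-bounding closed walk is in fact a simple cycle.

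The real gap is in your last paragraph, where you try to identify $C$ with the root triangle $A_0$. The flip heuristic you invoke is not correct as stated and, even patched, does not give the conclusion. The limit cycle $C$ has its interior on its \emph{right}, so it is the clockwise orientation; in the distributive-lattice language, reversing a cw cycle moves \emph{up}, not down, so the existence of such a cycle is perfectly compatible with minimality and gives you no leverage. More importantly, nothing you wrote rules out a strictly smaller cyclically oriented triangle sitting inside the interior of $A_0$; your argument never uses the \emph{maximality} of the root triangle, which is essential. The paper's proof proceeds quite differently: it invokes \cite[Lemma~25]{leveque2017generalization}, i.e.\ minimality in the precise form that the interior of the limit triangle $A$ must contain the root \emph{face} $f_0$; then an Euler-formula count shows that all edges in the interior of $A_0$ incident to $A_0$ point inward, so the rightmost walk can never enter $\mathrm{int}(A_0)$; since both $A$ and $A_0$ enclose $f_0$, this forces $\mathrm{int}(A)\supseteq\mathrm{int}(A_0)$, and then maximality of $A_0$ gives $A=A_0$. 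You would need to supply a substitute for this whole chain — in particular a usable formulation of minimality and an appeal to maximality of $A_0$ — before the final step can stand. Also note a terminology slip: the root triangle is not ``the root face of $D_0$''; the root face is one of the faces inside the root triangle, and the root triangle is distinguished by maximality, not by any a priori cyclic orientation.
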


\begin{proof}
  The proof is based on results from~\cite{leveque2017generalization}.
  Let $e$ be an edge of $D_0$.
  By~\cite[Lemma~37]{leveque2017generalization}, i.e. by the balanced
  property of the orientation $D_0$, the end of the rightmost walk
  from $e$ is a triangle $A$ with the interior of the triangle on its
  right side.  By~\cite[Lemma~25]{leveque2017generalization}, i.e. by
  minimality of the orientation $D_0$, the interior of $A$ must
  contain the root face $f_0$ of $G$. The root face is incident to the
  root triangle $A_0$ by definition.  Since the outdegree of all the
  edges is three, a classic counting argument using Euler's formula
  gives that all the edges in the interior of $A_0$ and incident to it
  are entering $A_0$. So it is not possible that $A$ is entering in
  the interior of $A_0$. Since $f_0$ is in the interior of both $A$
  and $A_0$, we have that the interior of $A$ contains the interior of
 $A_0$. Then by maximality of $A_0$, we have that $A=A_0$.
\end{proof}

By Lemma~\ref{lem:rightwalkroot}, any rightmost walk visit the root vertex. For an edge $e$ of  $D_0$, we define the \emph{right-to-root walk}, noted $W_R(e)$, as the subwalk of the rightmost walk started from $e$ that stops at the first visit of the root vertex $v_0$.

Recall that, for $0\leq i \leq \ell$, the set $\mathcal{A}(i)$ denote the set of angles of $G^+$ which are splited from $a_i$ by the complete closure procedure.
Let $f$ be the mapping that associate to an angle $\alpha$ of $G^+$ the integer $i$ such that $\alpha\in \mathcal A(i)$.
Let $g$ be the mapping that associate to an angle $\alpha$ of $\Gamma$ the integer $i$ such that $\alpha=a_i$.

Depending of the type of the unicellular map, i.e. hexagonal or
square, and the fact that $r_s$ is special or not, we define three
particular angles $x_1$, $x_2$ and $x_3$ of $\Gamma$, as represented
on Figure~\ref{fig:partitionA}. Note that in the particular case where
$r_s\in S$, we have $x_1=x_2$. Moreover, let $x_0=a_0$ and
$x_4=a_\ell$. Then, for $1\leq j \leq 4$, let
$X_j=\bigcup_{g(x_{j-1})\leq i < g(x_j)} \mathcal A(i)$. Note that
$X_2=\emptyset$ if $x_1=x_2$. Thus the set of angles of $G^+$ is
partitioned into the four sets $X_1,\ldots, X_{4}$ such that if
$\alpha \in X_i$ and $\alpha' \in X_j$, with $i<j$, then
$f(\alpha)<f(\alpha')$.

\begin{figure}[h]
\center
\begin{tabular}{cc}
  \scalebox{0.4}{\input{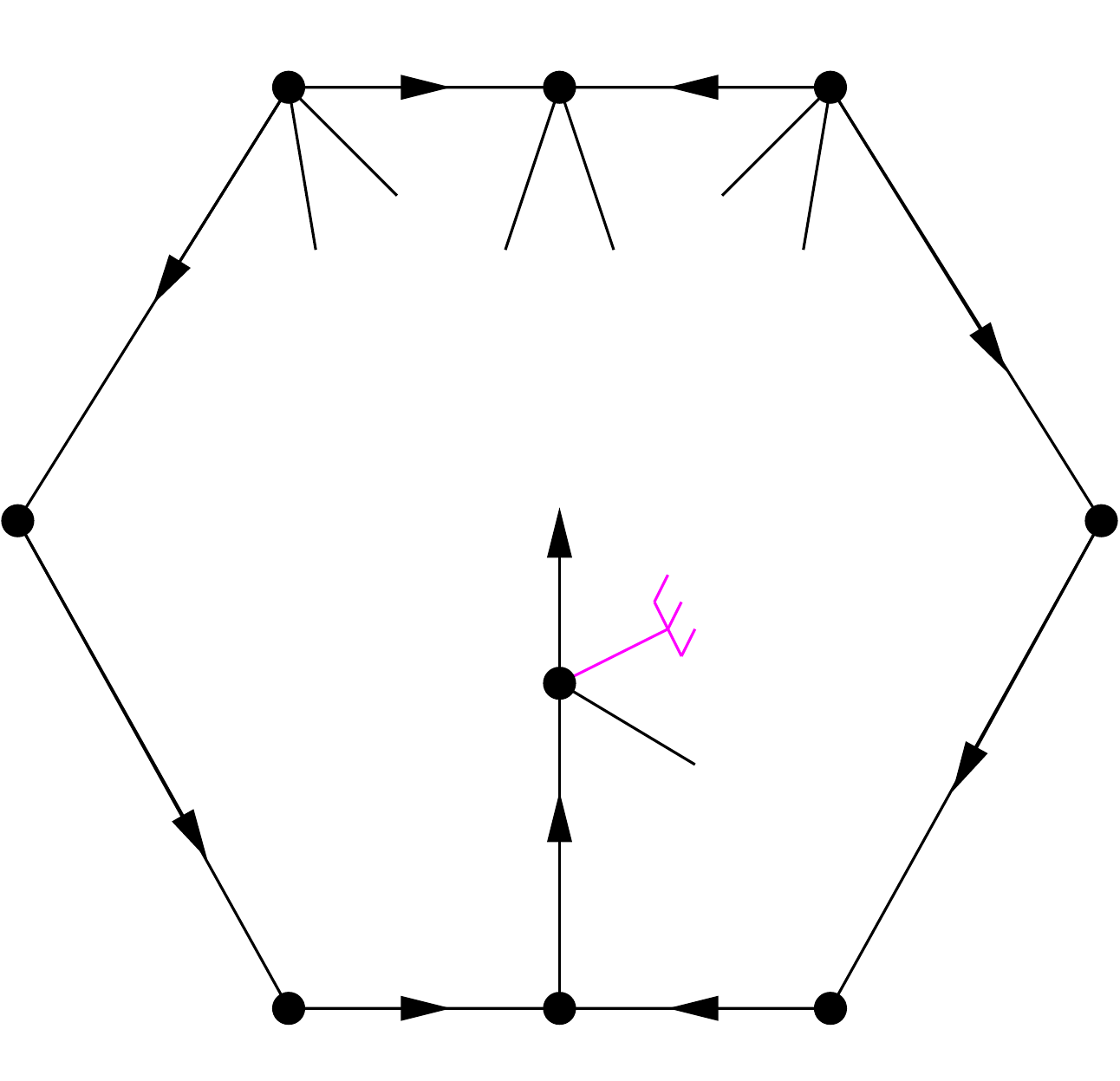_t}}
\ \ & \ \ 
  \scalebox{0.4}{\input{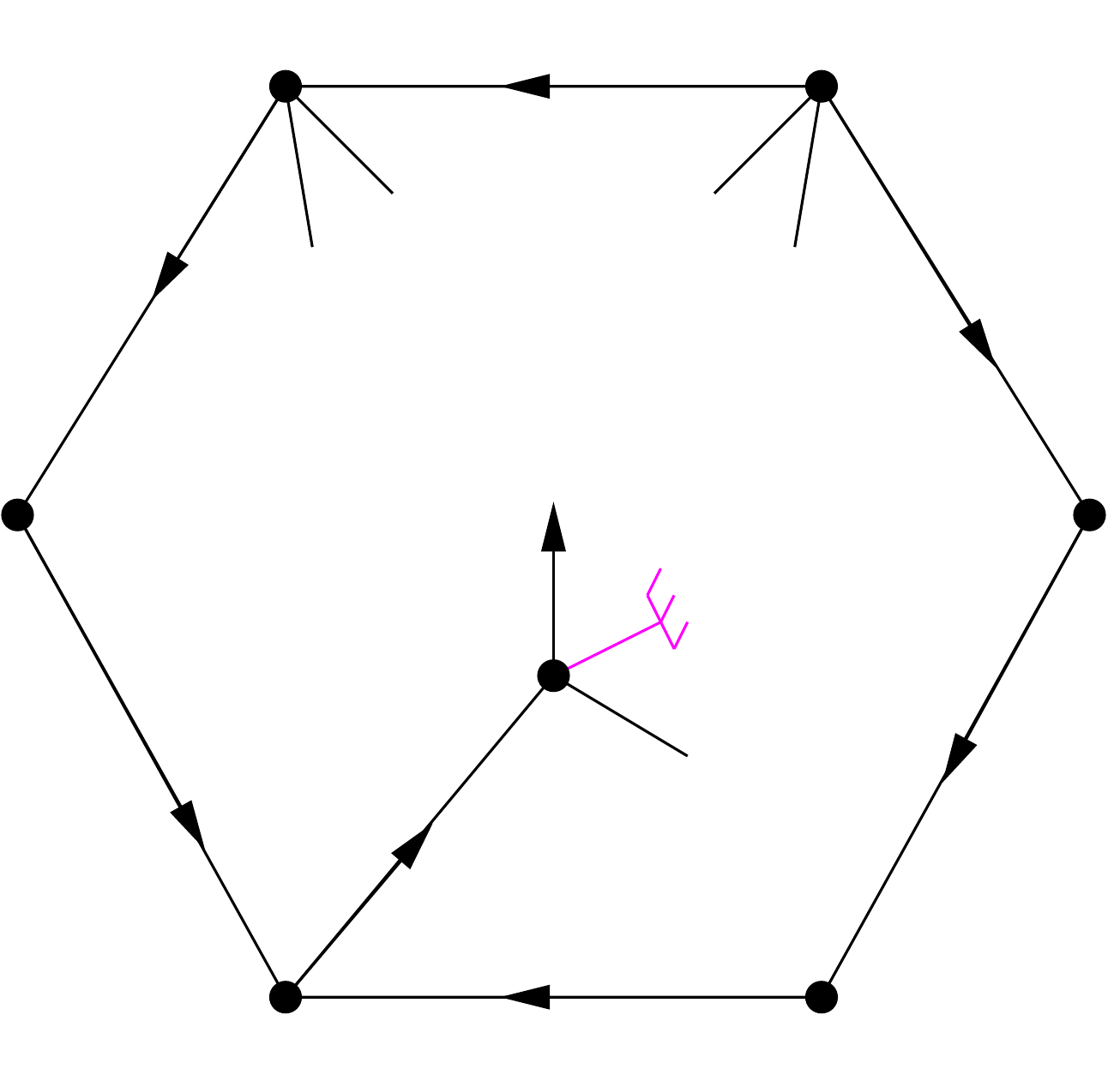_t}}
  \\
   & \\
    \scalebox{0.4}{\input{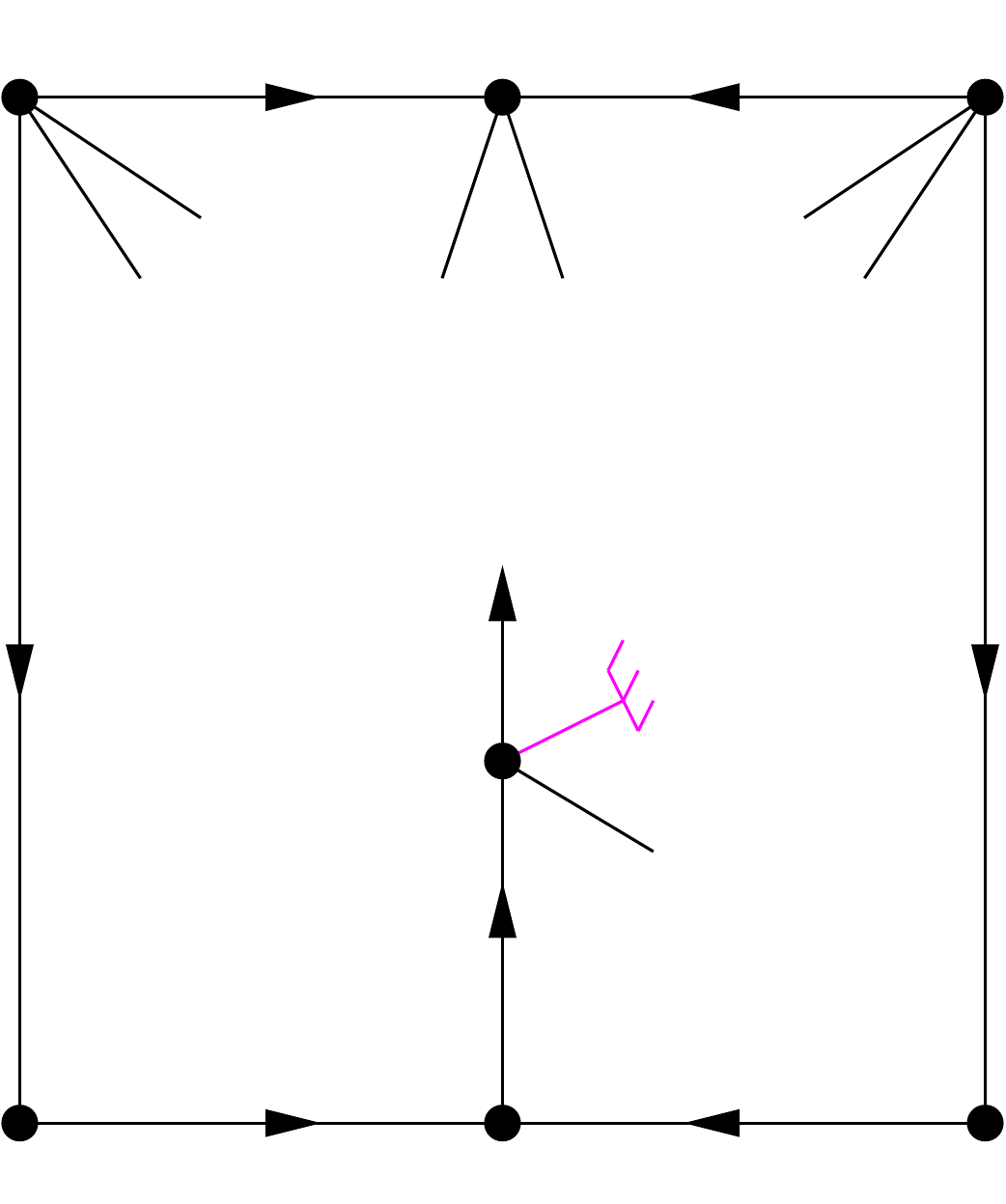_t}}
\ \ & \ \ 
  \scalebox{0.4}{\input{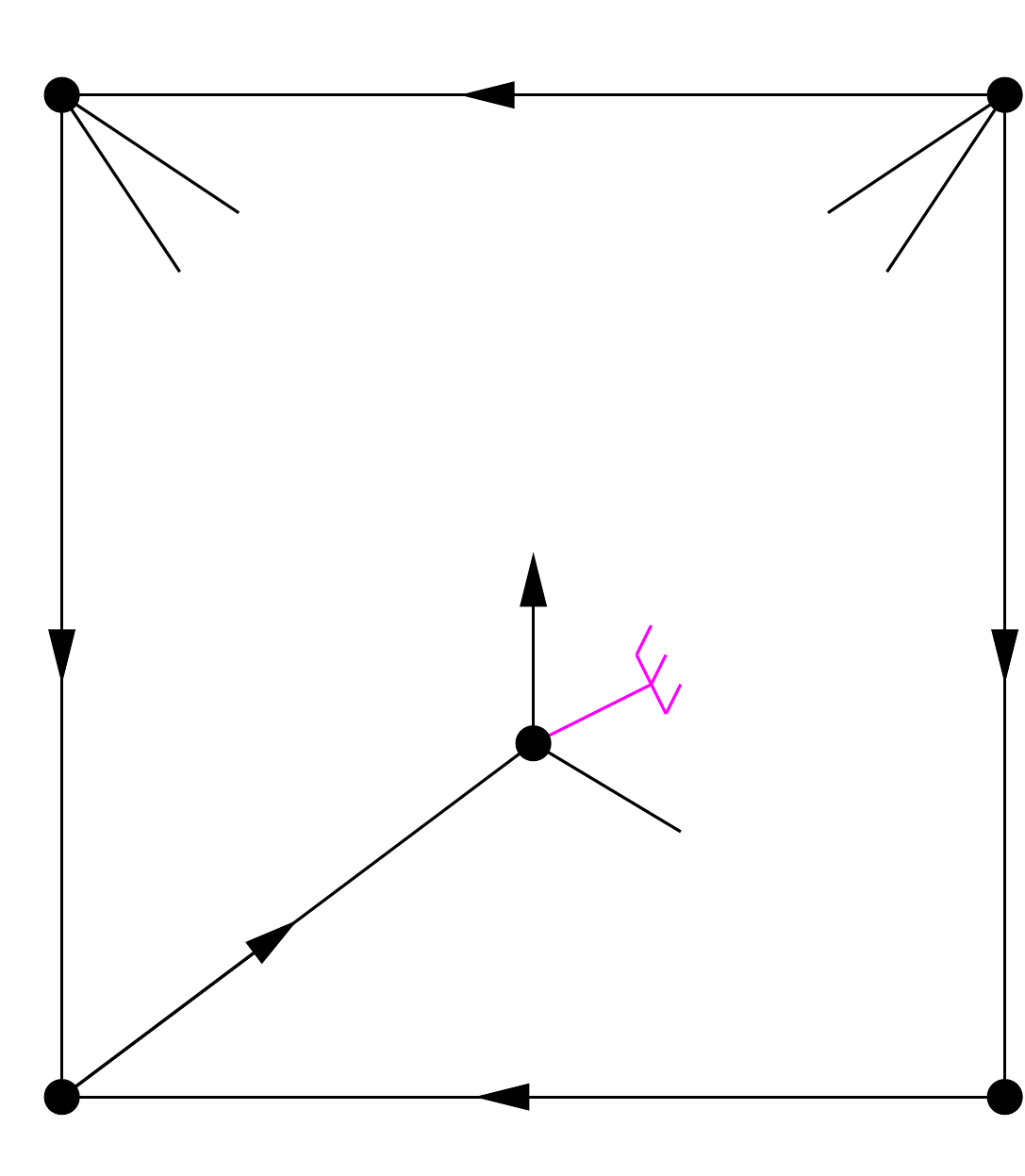_t}}
\end{tabular}
\caption{Definition of the angles $x_1$, $x_2$ and $x_3$ depending on the type of unicellular map.}
\label{fig:partitionA}
\end{figure}

The partition $(X_1,\cdots, X_{4})$ has been defined to satisfied the
following property.  Consider an edge $e=uv$ of $E_P\cup E_R$,
oriented from $u$ to $v$ in the canonical orientation, with angles
$a,a'$ of $G^+$ incident to $e$ that appears in \ccw order around $v$.
Then, one can see on Figure~\ref{fig:partitionA} that
$a\in (X_1\cup X_3)$. Moreover if $a$ in $X_1$ (resp. in $X_3$), then
$a'$ is in $X_{2}\cup X_{3}\cup X_{4}$ (resp. in $X_4$).

Given an edge $\{ u,v \}$ of $G^+$, we note $a^\ell(u,v)$
(respectively $a^r(u,v)$ ) the angle incident to $u$ that is just
after $\{ u,v \}$ in \ccw order (resp \cw order) around $u$.

Consider $e\in D_0$, and $W_R(e)$ the right-to-root walk starting from
$e$, whose sequence of vertices is $(u_j)_{0\leq j\leq k}$, with
$k> 0$.  We define two sequence of angles of $G^+$ incident to the
right side of $W_R$. For $0\leq i \leq k-1$, let
$\alpha_i=a^r(u_i,u_{i+1})$. For $1\leq i \leq k$, let
$\beta_i=a^\ell(u_{i},u_{i-1})$. Note that, for $0< i < k$, we might
have $\alpha_i=\beta_i$ if there is no edges incident to the right
side of $W_R(e)$ at $u_i$.

\begin{lemma}
\label{lem:alphabetadecrease}
For  $0\leq i\leq k-1$, we have
$\lambda(\beta_{i+1})-\lambda(\alpha_i)=-1$.
For $1\leq i\leq k-1$, we have
$-6\leq \lambda(\alpha_{i})-\lambda(\beta_i)\leq 0$. 
Moreover  $|\{ i\in [\![ 1,k-1 ]\!]: \lambda(\alpha_{i})<\lambda(\beta_i)\} |\leq 2$ and
$f(\alpha_0)< f(\beta_1)\leq f(\alpha_1)<\cdots< f(\beta_{k-1})\leq f(\alpha_{k-1})<f(\beta_k)$.
\end{lemma}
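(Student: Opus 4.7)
The plan is to analyze each vertex $u_i$ along the walk separately, then assemble the global claims from the local analyses. For the equality $\lambda(\beta_{i+1}) - \lambda(\alpha_i) = -1$, I observe that both $\alpha_i$ and $\beta_{i+1}$ lie on the right side of the oriented edge $(u_i, u_{i+1})$, at the origin $u_i$ and the destination $u_{i+1}$ respectively. Two cases arise. If $(u_i, u_{i+1})$ belongs to $\Gamma$, the equality follows from Lemma~\ref{lem:labelvariation} applied to the second traversal of the edge in the cw walk around the face: the two right-side angles appear as consecutive entries $a_j$ and $a_{j+1}$ in $A$, whose labels differ by $-1$. Otherwise $(u_i, u_{i+1})$ arose from the attachment of some stem $s$ during the complete closure; Lemma~\ref{lemma1} gives $\lambda(a(s)) = \lambda(s) - 1$, and identifying $\alpha_i$ with the angle at $u_i$ of label $\lambda(s)$ (the one just before $s$ in ccw order around its vertex) and $\beta_{i+1}$ with one of the two pieces of the split of $a(s)$ (both with label $\lambda(s) - 1$) yields the same difference.

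For the inequalities $-6 \leq \lambda(\alpha_i) - \lambda(\beta_i) \leq 0$, the lower bound is immediate from Lemma~\ref{lem:Mm6}, since both angles are incident to $u_i$. For the upper bound, I would compute the variation of $\lambda$ as one moves cw around $u_i$ from $\alpha_i$ to $\beta_i$. By the choice of the rightmost outgoing edge (sharpest right turn), no outgoing edge of the canonical orientation lies in the short cw arc between the walk's outgoing edge at $u_i$ and its incoming edge; every edge crossed while going cw from $\alpha_i$ to $\beta_i$ is therefore incoming at $u_i$. A case analysis of each such incoming edge, combining Lemma~\ref{lem:labelvariation} with the closure computation from the first step, shows that each contributes a non-negative variation ($0$ for $E_N$ or closure-created edges, $+3$ for $E_P$, $+6$ for $E_R$). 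Summing gives $\lambda(\beta_i) \geq \lambda(\alpha_i)$.

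For the chain $f(\alpha_0) < f(\beta_1) \leq f(\alpha_1) < \cdots < f(\beta_k)$, the strict inequalities $f(\alpha_i) < f(\beta_{i+1})$ come directly from the two-case analysis of the first step: in the $\Gamma$-edge case the angles $\alpha_i$ and $\beta_{i+1}$ are consecutive in $A$, while in the closure case the safe property of $T$ forces the index of the attaching angle $a(s)$ to strictly exceed the index of the angle just before $s$. The weak inequalities $f(\beta_i) \leq f(\alpha_i)$ reflect the fact that $A$ visits angles at a given vertex in ccw order around that vertex, while $\alpha_i$ lies ccw-after $\beta_i$ around $u_i$ in the local cyclic order; equality arises precisely when $\alpha_i$ and $\beta_i$ are two pieces of the split of the same original angle of $\Gamma$. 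Finally, the bound $|\{i\in [\![1,k-1]\!] : \lambda(\alpha_i) < \lambda(\beta_i)\}| \leq 2$ requires a strict inequality at $u_i$ to come from at least one incoming $E_P$ or $E_R$ edge in the right arc at $u_i$; I would argue globally that the rightmost walk can only pick up such crossings a bounded number of times, using the balanced property of the canonical orientation, the endpoint property from Lemma~\ref{lem:rightwalkroot}, and the first Betti number $2$ of the torus.

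The main obstacle will be this global count of strict inequalities. Steps 1, 2, and the order chain reduce to careful bookkeeping of the already-established labeling rules, but controlling the total number of proper or root edges that can appear on the right side of the rightmost walk demands a genuinely topological argument that ties the walk to the handles of the torus and the structure of the canonical orientation; the fact that the root triangle is reached, together with the balanced condition, is what will ultimately cap the count at $2$.
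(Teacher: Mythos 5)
Your treatment of the first two claims and of the ordering chain on $f$ is essentially the paper's argument, up to reversal of direction: the $-1$ drop at each step of the walk follows from Lemma~\ref{lemma1} (closure case) or from the labeling rule (edge-of-$\Gamma$ case), the lower bound $-6$ follows from Lemma~\ref{lem:Mm6}, the upper bound $0$ follows from summing the sign-definite contributions of the incoming edges in the right-side arc at $u_i$ (the paper walks ccw from $\beta_i$ to $\alpha_i$ getting non-positive steps, you walk cw from $\alpha_i$ to $\beta_i$ getting non-negative steps, equivalently), and the monotonicity of $f$ uses the canonical-orientation rule that the right-side angle of an oriented edge has the larger index, plus the fact that closure splits preserve $f$. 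All of this is sound.

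The count $|\{i : \lambda(\alpha_i)<\lambda(\beta_i)\}| \le 2$, however, is a genuine gap, and you yourself flag it as the main obstacle. The paper's argument for it is not topological at all and does not invoke the balanced property, the Betti number, or Lemma~\ref{lem:rightwalkroot}; the route you sketch would, even if it could be pushed through, give a bound depending on the genus rather than the sharp constant $2$. The actual mechanism is the partition $(X_1,X_2,X_3,X_4)$ of the angles of $G^+$ defined in the paragraph just before the lemma (see Figure~\ref{fig:partitionA}). Its defining property is: for an edge $e=uv\in E_P\cup E_R$ oriented $u\to v$, with angles $a,a'$ of $G^+$ flanking $e$ at $v$ in ccw order, one has $a\in X_1\cup X_3$, and moreover $a\in X_1 \Rightarrow a'\in X_2\cup X_3\cup X_4$ while $a\in X_3 \Rightarrow a'\in X_4$. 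Since the $f$-sequence along the concatenated angles $\alpha_0,\gamma^1_1,\dots,\gamma^{k-1}_{p_{k-1}},\beta_k$ is increasing (which you do establish), the induced sequence of block indices is non-decreasing, so an $X_1\to\{X_2,X_3,X_4\}$ transition and an $X_3\to X_4$ transition can each occur at most once, and a strict drop $\lambda(\alpha_i)<\lambda(\beta_i)$ can only happen at such a transition. That gives the bound $2$ directly, with no global topological accounting. Without introducing this partition (or some equivalent ``colouring'' of the angles along the face contour that is compatible with the safe/closure structure), the $\le 2$ bound does not follow.
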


\begin{proof}
Let $0\leq i\leq k-1$ and consider the edge $\left \{ u_i,u_{i+1} \right \}$. We have  $\left \{ u_i,u_{i+1} \right \}$ is either in $E(\Gamma)$ or not. If $\left \{ u_i,u_{i+1} \right \} \notin E(\Gamma)$, let $s$ be a stem such that we attach $s$ to an angle that comes from $a(s)$ to form the edge $\{u_i,u_{i+1} \}$ of $G$. By Lemma~\ref{lemma1}, we have $\lambda(\beta_{i+1})=\lambda(a(s))=\lambda(s)-1=\lambda(\alpha_{i})-1$. Moreover since $U$ is safe, we have $f(\beta_{i+1})> f(\alpha_{i})$.
If $\left \{ u_i,u_{i+1} \right \} \in E(\Gamma)$, we also have
$\lambda(\beta_{i+1})=\lambda(\alpha_{i})-1$ and $f(\beta_{i+1})> f(\alpha_{i})$.

Consider $1\leq i\leq k-1$.
By Lemma~\ref{lem:Mm6}, we have $-6\leq \lambda(\alpha_{i})-\lambda(\beta_i)$.
Let $(\gamma^i_1,\ldots ,\gamma^i_{p_i})$, with $p_i\geq 1$, be the set of consecutive angles of $G^+$ between $\beta_i=\gamma^i_1$ and $\alpha_i=\gamma^i_{p_i}$ in \ccw order around $u_i$. Since $W_R(e)$ is a right-to-root walk, if $p_i> 1$, then all the edges that are incident to $u_i$ between two consecutive angles $\gamma^i_j$ and $\gamma^i_{j+1}$, with $1\leq j<p$, are entering $u_i$. So, by Lemma~\ref{lem:labelvariation}, for $1\leq j<p_i$, we have $\lambda(\gamma^i_{j+1})-\lambda(\gamma^i_j)\leq 0$. Moreover, we have
$\lambda(\gamma^i_{j+1})-\lambda(\gamma^i_j)< 0$ if and only if the edge entering $u_i$ between
$\gamma^i_j$ and $\gamma^i_{j+1}$ is in $E_P\cup E_R$. Thus we have $\lambda(\alpha_{i})-\lambda(\beta_i)\leq 0$, and, for $1\leq j<p_i$, we have $f(\gamma^i_{j+1}) \geq f(\gamma^i_j)$.

We obtain that the sequence 
$$(f_p)_{0\leq p\leq r}=(f(\alpha_0),f(\gamma^1_1),\ldots,f(\gamma^1_{p_1}),\ldots,f(\gamma^{k-1}_1),\ldots,f(\gamma^{k-1}_{p_{k-1}}),f(\beta_k))$$
is increasing and thus $f(\alpha_0)< f(\beta_1)\leq f(\alpha_1)<\cdots< f(\beta_{k-1})\leq f(\alpha_{k-1})<f(\beta_k)$. 
This also implies that the sequence $I=(\{i\, :\, f_p\in X_i\})_{0\leq p\leq r}$ is increasing. 

If there is a couple $(i,j)$, with 
 $1\leq i\leq k-1$, and $1 \leq j < p_i$, such that the edge incident to $\gamma^{i}_{j}$ and $\gamma^{i}_{j+1}$ is in $E_P\cup E_R$, then
either $\gamma^i_j \in X_1$ and $\gamma^i_{j+1} \in X_2\cup X_3\cup X_4$, or, $\gamma^i_j \in X_3$ and  $\gamma^i_{j+1} \in X_4$.
Since $I$ is increasing, this implies that there is at most two such couples $(i,j)$.
So  $|\{ i\in [ 1,k-1 ]: \lambda(\alpha_{i})<\lambda(\beta_i)\} |\leq 2$.
\end{proof}

\begin{lemma}
\label{lem:wr186}
For all $e=uv \in D_0$, we have
$$m(u)-18\leq\left | W_R(e) \right |\leq m(u)+6.$$
\end{lemma}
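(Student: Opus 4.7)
The plan is to track the label $\lambda$ along the right-to-root walk using Lemma~\ref{lem:alphabetadecrease}, and then combine the telescoping sum with the known bounds on labels at $u$ and at the root vertex $v_0$.

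Let $(u_j)_{0\leq j\leq k}$ be the sequence of vertices of $W_R(e)$, so $u_0=u$, $u_k=v_0$, and $k=|W_R(e)|$. Consider the angle sequences $(\alpha_i)_{0\leq i\leq k-1}$ and $(\beta_i)_{1\leq i\leq k}$ defined before Lemma~\ref{lem:alphabetadecrease}. By that lemma, each transition $\alpha_i \to \beta_{i+1}$ contributes exactly $-1$ to the label, while each transition $\beta_i \to \alpha_i$ contributes a value in $[-6,0]$, and at most two such transitions contribute something strictly negative. Telescoping gives
\[
\lambda(\beta_k)-\lambda(\alpha_0) \;=\; -k \;+\; \sum_{i=1}^{k-1}\bigl(\lambda(\alpha_i)-\lambda(\beta_i)\bigr),
\]
where the sum on the right lies in $[-12,0]$. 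Writing $\eta = -\sum_{i=1}^{k-1}(\lambda(\alpha_i)-\lambda(\beta_i)) \in [0,12]$, this rearranges to
\[
k \;=\; \lambda(\alpha_0) - \lambda(\beta_k) - \eta.
\]

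Next I would bound the two boundary labels. Since $\alpha_0$ is incident to $u$, Lemma~\ref{lem:Mm6} yields $m(u)\leq \lambda(\alpha_0) \leq M(u) \leq m(u)+6$. Since $\beta_k$ is incident to $v_0$, and the case analysis in Figure~\ref{fig:allcases}(i)--($\ell$) combined with $m(v_0)=\lambda(a_\ell)=0$ gives $0\leq \lambda(\beta_k)\leq M(v_0)\leq 6$. Combining these with $\eta\in[0,12]$:
\[
k \;\leq\; \lambda(\alpha_0) - \lambda(\beta_k) \;\leq\; (m(u)+6)-0 \;=\; m(u)+6,
\]
\[
k \;\geq\; \lambda(\alpha_0) - \lambda(\beta_k) - 12 \;\geq\; m(u) - 6 - 12 \;=\; m(u)-18.
\]

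There is essentially no hard step here: once Lemma~\ref{lem:alphabetadecrease} is in hand, the proof is a straightforward telescoping argument supplemented by the uniform bounds on $M(v)-m(v)$ provided by Lemma~\ref{lem:Mm6} and the explicit root-vertex cases of Figure~\ref{fig:allcases}. The only subtle point is noticing that the constraint ``at most two negative jumps of size up to $-6$'' forces $\eta \leq 12$ rather than the naive bound $6(k-1)$; this is exactly what makes the additive constants in the statement independent of $k$ (and of $n$).
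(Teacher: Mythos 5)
Your proof is correct and follows essentially the same route as the paper's: telescope the label along the alternating angle sequence using Lemma~\ref{lem:alphabetadecrease} (decrease of exactly $1$ across each edge, a total correction in $[-12,0]$ from at most two drops of size at most $6$ at intermediate vertices), then bound $\lambda(\alpha_0)$ and $\lambda(\beta_k)$ via Lemma~\ref{lem:Mm6} together with $m(v_0)=0$. The only cosmetic difference is that you make the correction term $\eta$ explicit, whereas the paper states the resulting two-sided inequality directly.
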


\begin{proof}
  By Lemma~\ref{lem:alphabetadecrease}, the sequence
  $(\lambda(\alpha_0), \lambda(\beta_1),\lambda(\alpha_1),\ldots,
  \lambda(\beta_{k-1}),\lambda(\alpha_{k-1}),\lambda(\beta_k))$ is
  decreasing by one between $\alpha_i$ and $\beta_{i+1}$, for
  $0\leq i\leq k-1$, it is constant between $\beta_i$ and
  $\alpha_{i}$, for $1\leq i\leq k-1$, except for at most two value
  $1\leq i\leq k-1$ where it can decrease by at most $6$.  So
  $\lambda(\alpha_0)-\lambda(\beta_k)-2\times6\leq |W_R(e)|\leq
  \lambda(\alpha_0)-\lambda(\beta_k)$.  By Lemma~\ref{lem:Mm6}, we
  have $m(u)\leq \lambda(\alpha_0)\leq m(u)+6$ and
  $0\leq \lambda(\beta_k)\leq 6$.  So
  $m(u)-18\leq\left | W_R(e) \right |\leq m(u)+6$.
\end{proof}

 We define 
 
 $$
t=
\begin{cases}
3 & \textrm{if $\Gamma$ is hexagonal and $r_s\notin S$}\\
4 & \textrm{if  $\Gamma$ is hexagonal and $r_s\in S$}\\
4 & \textrm{if  $\Gamma$ is square and $r_s\notin S$}\\
5 & \textrm{if  $\Gamma$ is square and $r_s\in S$}\\
\end{cases}
$$

and $t-1$ particular angles $y_1,\ldots,y_{t-1}$ of $\Gamma$, as represented on Figure~\ref{fig:partitionB}. Moreover, let $y_0=a_0$ and $y_t=a_\ell$. Then, for $1\leq j \leq t$, let 
$Y_j=\bigcup_{g(y_{j-1})\leq i < g(y_j)} \mathcal A(i)$. 
Thus the set of angles of $G^+$ is partitioned into the $t$ sets $(Y_1,\cdots, Y_{t})$ such that if $\alpha \in Y_i$ and $\alpha' \in Y_j$, with $i<j$, then $f(\alpha)<f(\alpha')$.

\begin{figure}[h]
\center 
\begin{tabular}{cc}
  \scalebox{0.5}{\input{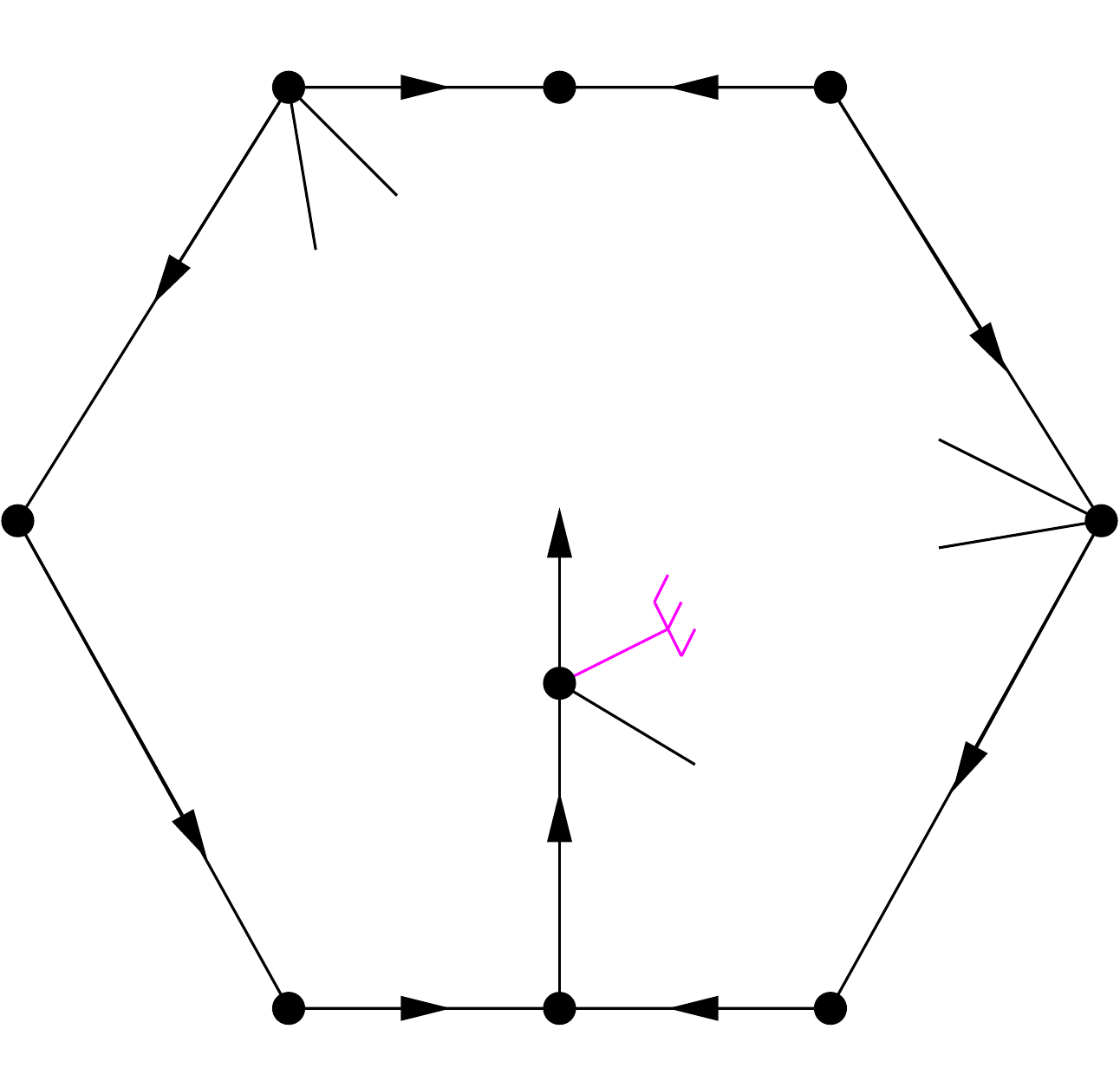_t}}
\ \ & \ \ 
  \scalebox{0.5}{\input{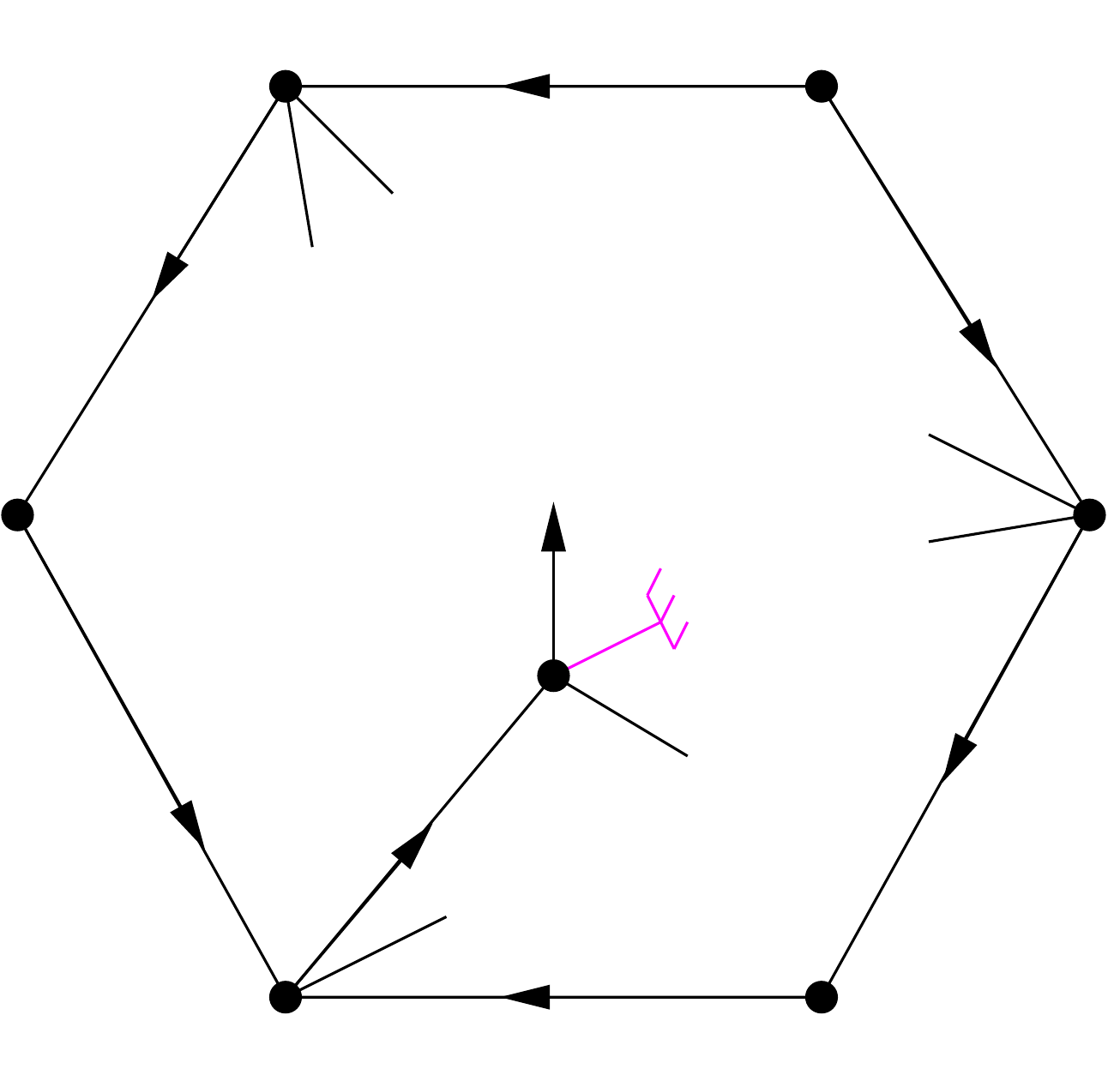_t}}
  \\
   & \\
    \scalebox{0.5}{\input{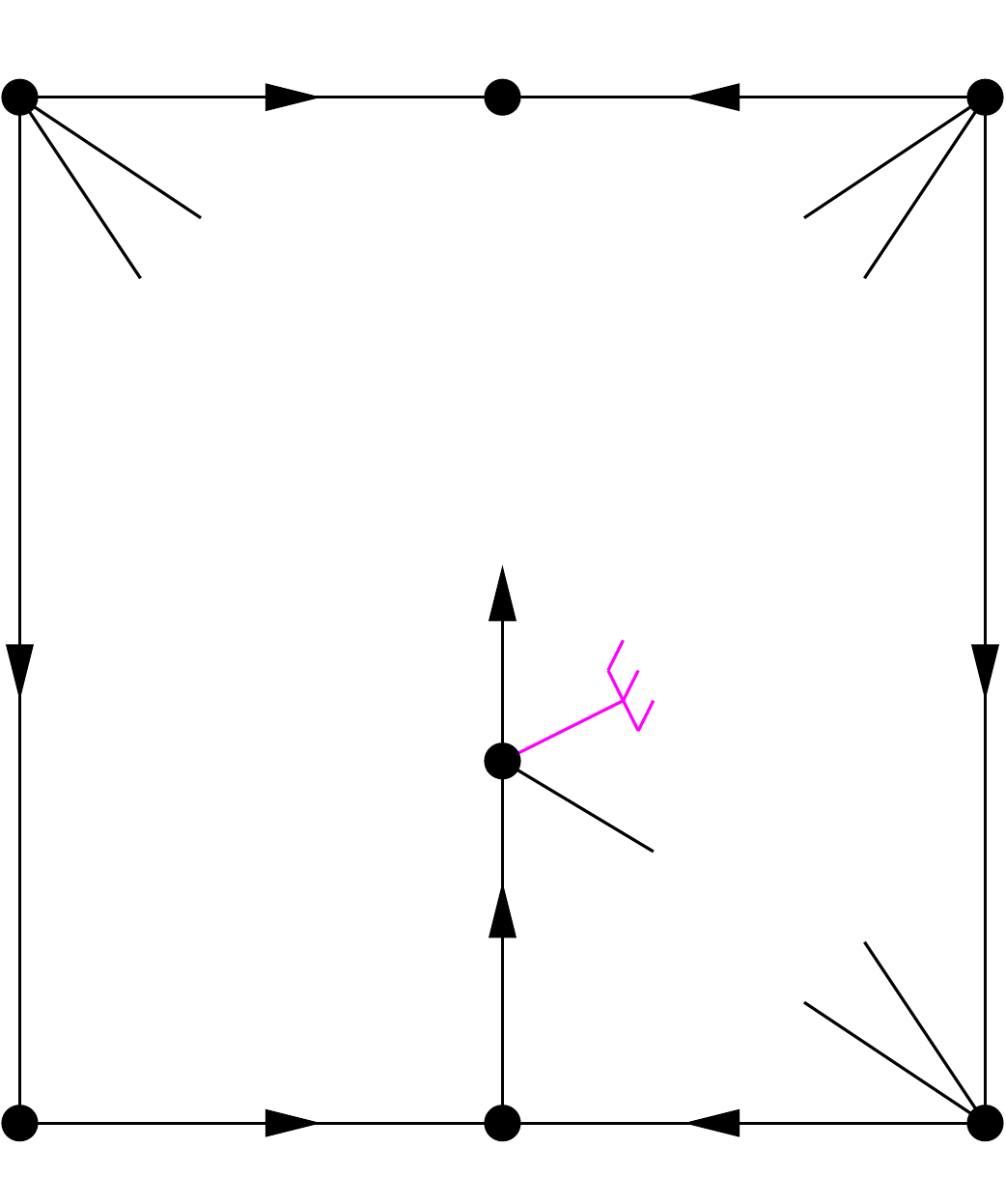_t}}
\ \ & \ \   
  \scalebox{0.5}{\input{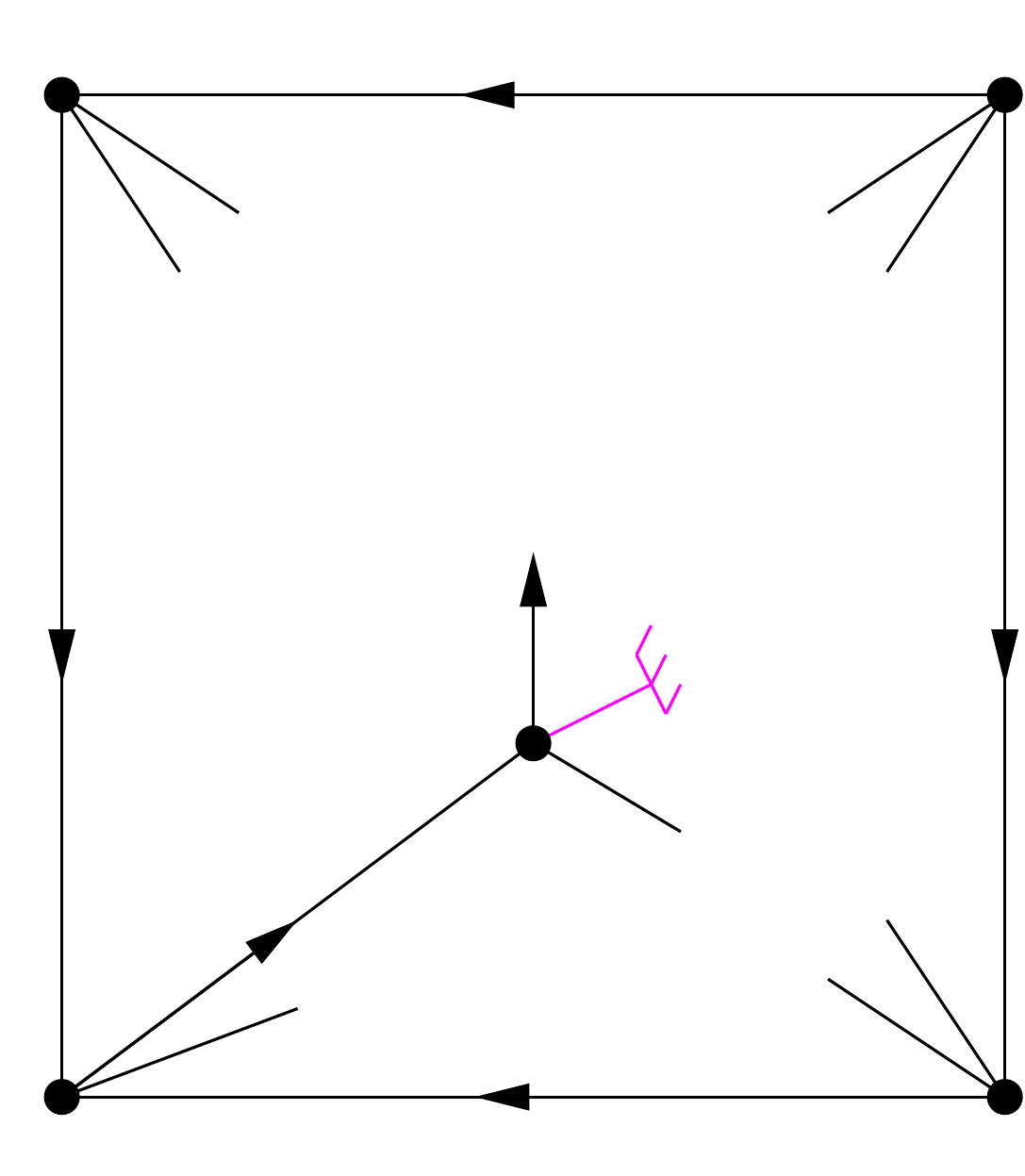_t}}
\end{tabular}
\caption{Definition of the angles $y_1,\ldots,y_{t-1}$.}
\label{fig:partitionB}
\end{figure}
 
The partition $(Y_1,\cdots, Y_{t})$ has been defined to satisfied the
following property.  For any vertex $v$, each set of consecutive
angles around $v$ that is delimited by edges of $E_P\,\cup\, E_R$ lies
in a different set $Y_j$.

We define the \emph{right-to-root path} $P_R(e)$ starting at $e$ and
ending at $v_0$, obtained by deleting edges from $W_R(e)$ by the
following method. We follow $W_R(e)$ from $e$, the first time we meet
a vertex $v$ that appears twice in the sequence of vertices
$(u_i)_{0\leq i \leq k}$ of $W_R(e)$. Let $m=\min \{i\, : \, u_i=v\}$
and $M=\max \{i\, : \, u_i=v\}$.  Then we delete all the edges of
$W_R(e)$ between $u_m$ and $u_M$. We repeat the process until reaching
$v_0$. Note that  $P_R(e)$ is not ``rightmost''.
For $e\in D_0$, let $h(e)$ be the set of inner vertices of $P_R(e)$ that
have outgoing edges on the right side of $P_R(e)$.

\begin{lemma}
  \label{lem:geometryrtor}
  $|P_R(e)|\leq |W_R(e)|\leq |P_R(e)|+24$ and  $|h(e)|\leq 4$.
\end{lemma}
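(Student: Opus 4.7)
The first inequality $|P_R(e)| \leq |W_R(e)|$ is immediate since $P_R(e)$ is obtained from $W_R(e)$ by deleting edges.

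For the other two bounds, the key observation is that there is a natural correspondence between elements of $h(e)$ and the maximal subwalks of $W_R(e)$ that are removed in the construction of $P_R(e)$. Indeed, if $v$ is an inner vertex of $P_R(e)$ lying in $h(e)$, then, letting $\beta$ be the angle of $P_R$ at $v$ on the incoming side, the rightmost rule applied to the edge entering $v$ at its first occurrence in $W_R(e)$ forces the walk to depart along an outgoing edge lying on the right of $P_R(e)$ rather than along $P_R$; this produces a closed subwalk which returns to $v$ and is eventually deleted. Conversely, each deleted loop is anchored at a vertex of $h(e)$. With this correspondence in hand, the bounds $|h(e)|\leq 4$ and $|W_R(e)|\leq |P_R(e)|+24$ reduce to proving (i) at most $4$ loops are removed and (ii) each removed loop has length at most $6$.

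For (i), the plan is to exploit the partition $(Y_1,\ldots,Y_t)$ with $t\leq 5$ together with Lemma~\ref{lem:alphabetadecrease}, which asserts that the $f$-values of the sequence $\alpha_0,\beta_1,\alpha_1,\ldots,\beta_k$ along $W_R(e)$ are strictly increasing. To each $v\in h(e)$ I will associate the pair of angles on its right sector and show that distinct elements of $h(e)$ correspond to distinct ``transitions'' between different $Y_j$ indices when reading the angles along $W_R(e)$. Since the $Y$-partition has at most $5$ parts, there can be at most $t-1\leq 4$ such transitions, yielding $|h(e)|\leq 4$.

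For (ii), the plan is to apply Lemma~\ref{lem:Mm6}, which gives $M(v)-m(v)\leq 6$. If a loop in $W_R(e)$ starts and ends at a vertex $v$ with respective departure and return angles $\alpha_m$ and $\beta_M$ both incident to $v$, then the strictly decreasing behavior of $\lambda$ along $W_R(e)$ (up to the bounded corrections at inner vertices quantified in Lemma~\ref{lem:alphabetadecrease}) forces $\lambda(\alpha_m)-\lambda(\beta_M)$ to equal, up to $O(1)$, the length of the loop; but these two values both lie in the label range at $v$, which has width at most $6$, so the loop has length at most $6$. Combining (i) and (ii) yields $|W_R(e)|-|P_R(e)|\leq 4\cdot 6=24$.

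The main obstacle will be to make rigorous the injectivity in (i), i.e. that two distinct loops produce distinct $Y$-transitions rather than being absorbed into the same one; this is where the interplay between the partition property and the strict monotonicity of $f$ along $W_R(e)$ must be handled carefully, and where a case analysis depending on whether the loop anchor is normal, proper, or on the root path is likely to be needed.
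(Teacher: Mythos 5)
Your outline follows the paper's proof almost step for step: the correspondence between elements of $h(e)$ and the anchors of deleted loops, the bound of $6$ on each loop's length via Lemma~\ref{lem:Mm6} combined with the monotone decrease of $\lambda$ from Lemma~\ref{lem:alphabetadecrease} (note that the lemma gives the exact inequality $\lambda(\beta_M)\leq\lambda(\alpha_m)-(M-m)$, so you should use that rather than an ``up to $O(1)$'' version, which would spoil the constant $24$), and the bound of $4$ on the number of loops via the monotonicity of the $Y$-indices along the walk. However, the step you defer as the ``main obstacle'' --- that each deleted loop actually produces a strict $Y$-transition, i.e.\ that $\alpha_m$ and $\beta_M$ cannot lie in the same class $Y_p$ --- is precisely the crux of the argument, and leaving it unproved is a genuine gap: without it nothing prevents all the loops from being absorbed into a single class and the count of $4$ fails.

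The paper closes this gap with a short contradiction argument that you should supply. Suppose $\alpha_m$ and $\beta_M$ both lie in $Y_p$. By construction the partition $(Y_1,\ldots,Y_t)$ refines, at each vertex, the decomposition of the angles into maximal sectors delimited by edges of $E_P\cup E_R$; hence $\alpha_m$ and $\beta_M$ lie in the same such sector at $v$, and since $f(\alpha_m)<f(\beta_M)$ there is no edge of $E_P\cup E_R$ incident to $v$ in the counterclockwise sector from $\alpha_m$ to $\beta_M$. All edges of $E_N$ in that sector enter $v$ (the walk being rightmost), so Lemma~\ref{lem:labelvariation} shows the labels are nondecreasing counterclockwise across that sector, giving $\lambda(\alpha_m)\leq\lambda(\beta_M)$. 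This contradicts $\lambda(\beta_M)\leq\lambda(\alpha_m)-(M-m)<\lambda(\alpha_m)$. No further case analysis on whether the anchor is normal, proper, or on the root path is needed beyond this; the sector/label argument handles all cases uniformly.
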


\begin{proof}
Consider a vertex $v$ appearing at least twice in the sequence $(u_i)_{0\leq i \leq k}$.
Let $m=\min \{i\, : \, u_i=v\}$ and $M=\max \{i\, : \, u_i=v\}$. We have $0\leq m < M \leq k$.
By Lemma~\ref{lem:alphabetadecrease}, we have $f(\alpha_m) <f(\beta_M)$ and $\lambda(\beta_M)\leq \lambda(\alpha_m)- (M-m)$.
By Lemma~\ref{lem:Mm6}, we have $\lambda(\alpha_m)-6\leq  \lambda(\beta_M)$. So $M-m\leq 6$.

Suppose by contradiction that there is $1\leq p\leq t$, such that
$\alpha_m$ and $\beta_m$ are in $Y_p$. Then $\alpha$ and $\beta$ lie
in the same set of consecutive angles around $v$ delimited by edges of
$E_P\,\cup\, E_R$. Since $f(\alpha_m) <f(\beta_M)$, there is no edge
of $E_P\,\cup\, E_R$ incident to $v$ in the \ccw sector from
$\alpha_m$ to $\beta_M$. Moreover, all the edges of $E_N$ incident to
$v$ in this sector are entering $v$. So By
Lemma~\ref{lem:labelvariation}, the sequence of labels from $\alpha_m$
to $\beta_M$ is increasing around $v$ in \ccw order. So
$\lambda(\alpha_m) \leq \lambda(\beta_M)$, a contradiction.  So there
exists $1\leq p<q\leq t$, such that $\alpha_m\in Y_p$ and
$\beta_M\in Y_q$.

With the same notations as in Lemma~\ref{lem:alphabetadecrease}, the sequence 
$$(f_p)_{0\leq p\leq r}=(f(\alpha_0),f(\gamma^1_1),\ldots,f(\gamma^1_{p_1}),\ldots,f(\gamma^{k-1}_1),\ldots,f(\gamma^{k-1}_{p_{k-1}}),f(\beta_k))$$
is increasing. Thus the sequence $I=(\{i\, :\, f_p\in Y_i\})_{0\leq p\leq r}$ is increasing. 

The path $P_R(e)$ is obtained by following $W_R(e)$ from $e$, each
time we meet a vertex $v$ that appears twice in the sequence of
vertices of $W_R(e)$, then we delete all the edges of $W_R(e)$ between
$u_m$ and $u_M$. Since $M-m\leq 6$, we have deleted at most $6$ edges
from $W_R(e)$.  Since there exists $1\leq p<q\leq t\leq 5$ with
$\alpha_m\in Y_p$ and $\beta_M\in Y_q$, and the sequence $I$ is
increasing, there is at most $4$ such steps of deletions. Thus in
total, we have deleted at most $24$ edges to obtained $P_R(e)$ from
$W_R(e)$ and there are at most $4$ inner vertices of $P_R(e)$ that
have outgoing edges on the right side of $P_R(e)$.
\end{proof}

Finally we obtain the following lemma by combining Lemmas~\ref{lem:wr186} and~\ref{lem:geometryrtor}:

\begin{lemma}
\label{lem:Pr426}
For all $e=uv \in D_0$, we have
$$m(u)-42\leq\left | P_R(e) \right |\leq m(u)+6$$
\end{lemma}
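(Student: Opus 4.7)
The statement is an immediate consequence of the two preceding lemmas, so the plan is essentially to chain their bounds.

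First, I would recall Lemma~\ref{lem:wr186}, which bounds the length of the rightmost walk $W_R(e)$ from any edge $e=uv$ by $m(u)-18 \leq |W_R(e)| \leq m(u)+6$. Then, Lemma~\ref{lem:geometryrtor} compares this walk with the right-to-root path $P_R(e)$ obtained by shortcutting repeated vertices, giving $|P_R(e)| \leq |W_R(e)| \leq |P_R(e)|+24$ (together with the auxiliary bound $|h(e)|\leq 4$ that is not needed here).

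For the upper bound, I would combine $|P_R(e)| \leq |W_R(e)|$ with $|W_R(e)| \leq m(u)+6$ to obtain $|P_R(e)| \leq m(u)+6$. For the lower bound, I would combine $|W_R(e)| \geq m(u)-18$ with $|W_R(e)| \leq |P_R(e)|+24$, which rearranges to $|P_R(e)| \geq |W_R(e)| - 24 \geq m(u) - 42$.

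There is no real obstacle here; the only thing to check is that both lemmas apply to the same edge $e \in D_0$ and use the same convention for the initial vertex $u$, which is indeed the case (both $W_R(e)$ and $P_R(e)$ start at $u$ and end at $v_0$). Consequently, the proof is a one-line citation of the two previous lemmas followed by the arithmetic above.
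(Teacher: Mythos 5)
Your proposal is correct and matches the paper's argument exactly: the paper states Lemma~\ref{lem:Pr426} as an immediate combination of Lemmas~\ref{lem:wr186} and~\ref{lem:geometryrtor}, chaining $m(u)-18\leq |W_R(e)|\leq m(u)+6$ with $|P_R(e)|\leq |W_R(e)|\leq |P_R(e)|+24$ just as you do. Nothing is missing.
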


\subsubsection{Relation with shortest paths}
\label{sec:relationwith}

Let $e=uv \in D_0$. Consider $P_R(e)=(u_0=u,u_1=v,...,u_k=v_0)$ the
right-to-root path starting at $e$ and $h(e)$ the set of inner
vertices of $P_R(e)$ that have outgoing edges on the right side of
$P_R(e)$.  Recall that $|h(e)|\leq 4$ by
Lemma~\ref{lem:geometryrtor}.

Let $S=(w_0,w_1,...,w_p)$ be a path of
$G$ with distinct extremities and meeting $P(e)$ only at $w_0$ and
$w_p$, such that $w_0=u_i$ and $w_p=u_j$ for $0\leq i < j \leq k$. Let
$C=(w_0,\ldots, w_p=u_j,\ldots,u_i)$ be the cycle formed by the union
of $S$ and $(u_i,\ldots,u_j)$, given with the traversal direction
corresponding to $S$ oriented from $w_0$ to $w_p$.

We say that $S$ \emph{leaves $P_R(e)$ from the right} if $i>0$ and $S$
leaves $P_R(e)$ by its right side.  Otherwise, we say that $S$
\emph{leaves $P_R(e)$ from the left}. In particular, if $i=0$, then
$S$ leaves $P_R(e)$ from the left, by convention.
Likewise, we say that $S$ \emph{enters $P_R(e)$ from the right} if $j<k$ and $S$ enters $P_R(e)$ by its right side. Otherwise, we say that $S$ \emph{enters  $P_R(e)$ from the left}. In particular, if $j=k$, then $S$ enters $P_R(e)$ from the left, by convention.  

We define different possible types for $S$, depending on whether $S$ is
leaving/entering on the left or right side of $P_R(e)$, whether $C$ is
contractible or not, and whether $C$ contains some vertices of $V(e)$
or not.  We say that $S$ has \emph{type $LR$} (respectively type $RR$,
type $RL$, type $LL$) if $S$ leaves $P_R(e)$ from the left
(respectively right, right, left), enters $P_R(e)$ from the right
(respectively right, left, left). When $C$ is contractible, we add the
subscript $\ell$ or $r$ depending on whether $C$ delimits a region
homeomorphic to an open disk on its left or right side. When $C$ is
non-contractible, we add the subscript $n$. When $C$ contains some
vertices of $h(e)$, we add the superscript $h$.  Thus we have define
twenty-four types $LR_\ell$, $RR_\ell$, $RL_\ell$, $LL_\ell$, $LR_r$,
$RR_r$, $RL_r$, $LL_r$, $LR_n$, $RR_n$, $RL_n$, $LL_n$, $LR_\ell^h$,
$RR_\ell^h$, $RL_\ell^h$, $LL_\ell^h$, $LR_r^h$, $RR_r^h$, $RL_r^h$,
$LL_r^h$, $LR_n^h$, $RR_n^h$, $RL_n^h$, $LL_n^h$ so that a path $S$
as defined above is of exactly one type.

We show the following  inequality between $p$, $i$ and $j$
depending on the type:

\begin{lemma}
\label{lem:contractible}
We  have $p\geq j-i+c$ where $c$ is a constant given in Table~\ref{tab:type} that depends on the type of $S$.
\end{lemma}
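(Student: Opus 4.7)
The plan is to prove the inequality by a two-way counting argument for the quantity $\gamma(C)$ on the cycle $C$, exploiting that the canonical orientation of $G$ is a balanced $3$-orientation. Globally, the value of $\gamma(C)$ is prescribed by the type of $S$: the balanced property gives $\gamma(C) = 0$ for non-contractible $C$ (subscript $n$), and Euler's formula applied to the triangulated disk bounded by $C$, combined with the outdegree-$3$ property, forces $\gamma(C) = -6$ when the disk lies on the right of $C$ (subscript $r$) and $\gamma(C) = +6$ when on the left (subscript $\ell$), where ``right/left'' refer to the traversal $w_0 \to \cdots \to w_p$ along $S$ followed by $u_j \to \cdots \to u_i$ along $P_R(e)$ reversed.

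Locally, I would decompose the non-$C$ outgoing edges of $C$ into contributions from inner $S$-vertices, inner $P_R$-vertices $u_{i+1}, \ldots, u_{j-1}$, and the endpoints $u_i, u_j$. The key observation is that at an inner $P_R$-vertex $u_\alpha$, the forward edge $(u_\alpha, u_{\alpha+1})$ is on $C$ and outgoing (since $P_R(e)$ is a rightmost walk minus loop deletions), so exactly $2$ non-$C$ outgoing edges remain at $u_\alpha$. If $u_\alpha \notin h(e)$, the rightmost-walk definition forces both to lie on the left of $P_R$, which corresponds to the right of $C$ because the $P_R$-portion is traversed in reverse; if $u_\alpha \in h(e)$, up to $2$ edges may lie on the right of $P_R$. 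Since $|h(e)| \leq 4$ by Lemma~\ref{lem:geometryrtor}, the total contribution $O_R^{P_R}$ of inner $P_R$-vertices to the count $O_R(C)$ of outgoing edges on the right of $C$ satisfies
\[ 2(j - i - 1) - 8 \leq O_R^{P_R} \leq 2(j - i - 1). \]

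From $O_R(C) - O_L(C) = \gamma(C)$ and $O_R(C) + O_L(C) = 2|C| = 2p + 2(j-i)$, one obtains $O_R(C) = p + (j-i) + \gamma(C)/2$. Combining with the nonnegativity of the inner $S$-contribution and the endpoint contribution yields
\[ p + (j - i) + \gamma(C)/2 \geq O_R^{P_R} \geq 2(j - i - 1) - 8, \]
whence $p \geq (j - i) - \gamma(C)/2 - 10$. The values $\gamma(C)/2 \in \{-3, 0, 3\}$ give a constant offset, and refining the analysis by tracking endpoint contributions and the $h$-slack yields the exact value of $c$ for each type.

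To complete the proof for all $24$ types, I would carry out a case analysis tracking three independent pieces of information: (i) the value of $\gamma(C)$ (subscripts $\ell, r, n$), (ii) the side-correspondence between the left/right of $P_R$ and the left/right of $C$, which depends on the entering/leaving sides of $S$ (cases LR, RR, RL, LL) and can swap which side of $C$ carries the constrained outgoing-edge count, and (iii) whether the $P_R$-portion of $C$ actually contains vertices of $h(e)$ (superscript $h$), which controls whether the $\pm 8$ slack in the bound on $O_R^{P_R}$ applies. The main obstacle is precisely this bookkeeping, which is mechanical but tedious: one must verify, for each of the $24$ configurations, which side of $C$ the ``constrained side'' of $P_R$ maps to, then extract the corresponding arithmetic combination yielding the tabulated value of $c$.
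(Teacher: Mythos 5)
Your high-level framework — replacing the paper's two separate arguments (Euler's formula $y=t-3$ for contractible $C$, the balanced property for non-contractible $C$) with the single identity $O_R(C) = |C| + \gamma(C)/2$ with $\gamma(C)\in\{-6,0,+6\}$ — is a correct and somewhat cleaner repackaging of the same essential counting. The values $\gamma(C)=\pm 6$ for a contractible cycle in a $3$-orientation do indeed follow from the same Euler computation the paper uses, and the observation that the two non-$P_R$ outgoing edges at a non-$h(e)$ inner $P_R$-vertex land on the right of $C$ (because $C$ traverses $P_R(e)$ in reverse) is correct. So the skeleton is sound.

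However, there is a genuine gap, and it is exactly the part you defer to ``mechanical but tedious bookkeeping.'' The constants in Table~\ref{tab:type} are the content of the lemma, and your preliminary bound $p\geq (j-i)-\gamma(C)/2-10$ is far from reproducing them: for $LR_\ell$ it gives $p\geq (j-i)-13$ while the table requires $p\geq (j-i)-2$; for $RR_r$ you would get $p\geq(j-i)-7$ while the table requires $p\geq(j-i)+6$. The missing terms come precisely from the endpoint vertices $u_i=w_0$ and $u_j=w_p$ and from the contribution of the inner $S$-vertices, whose behavior depends on all of: which of $u_i,u_j$ have their forward $P_R$-edge on $C$ (this is what distinguishes $LR,RR,RL,LL$), whether the forward $P_R$-edges $(u_i,u_{i+1})$ and $(u_j,u_{j+1})$ are on $C$ or not, and whether the region lies to the left or right. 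The paper's proof is precisely this case analysis, done separately for each of the $12$ combinations, and it is not a routine addendum to your argument — it is the argument. Moreover, your claim that the $h$-slack is ``$\pm 8$'' is wrong for the non-contractible types: a careful look at Table~\ref{tab:type} shows the difference between the $n$ and $n^h$ rows is $4$, not $8$. The paper explains this as ``a division by two in the computation'': for $\gamma=0$ the final inequality is obtained by comparing a lower bound on $O_R$ with an upper bound on $O_L$, and the $h$-slack enters only one side of that comparison before the factor $2$ is divided out. If you apply your $O_R\geq 2(j-i-1)-2|h(e)|$ bound mechanically for the $n^h$ types you get $p\geq (j-i)-10$ rather than the required $p\geq(j-i)-6$. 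So the refinement you postpone is not purely a matter of patience: it requires noticing where, in the balance between $O_R$ and $O_L$, the slack should and should not be double-counted.
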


\begin{table}[h!]
\center
\begin{tabular}{|c|c|c|c|c|c|c|c|c|c|c|c|}
\hline
         $LR_\ell$ &
{$RR_\ell$} &
{$RL_\ell$} &
{$LL_\ell$} &
{$LR_r$} &
{$RR_r$} &
{$RL_r$} &
{$LL_r$} &
{$LR_n$} &
{$RR_n$} &
{$RL_n$} &
{$LL_n$} \\ \hline
      -2 & 0 & -3 & -5 & 4 & 6 & 3 & 1 & 1 & 3 & 0 & -2 \\
\hline
\end{tabular}

\ \\

\ \\

  \begin{tabular}{|c|c|c|c|c|c|c|c|c|c|c|c|}
\hline
          $LR_\ell^h$ &
{$RR_\ell^h$} &
{$RL_\ell^h$} &
{$LL_\ell^h$} &
{$LR_r^h$} &
{$RR_r^h$} &
{$RL_r^h$} &
{$LL_r^h$} &
{$LR_n^h$} &
{$RR_n^h$} &
{$RL_n^h$} &
{$LL_n^h$} \\ \hline
        -10 & -8 & -11 & -13 & -4 & -2 & -5 & -7 & -3 & -1 & -4 & -6 \\
\hline
\end{tabular}
\caption{Values of $c$ in Lemma~\ref{lem:contractible}.}
\label{tab:type}
\end{table}

\begin{proof}

Suppose first that $C$ is contractible. Let $R$ be the region homeomorphic to an open disk that is delimited by $C$. Let $t$ be the size of $C$, so $t=j-i+p$.
Let $G'$ be the planar map formed by all the vertices and edges that lie in $R$ (including its border). Let $n',m',f'$ be the number of vertices, edges, faces of $G'$ respectively. By Euler's formula, we have $n'-m'+f'=2$.
 All inner faces of $G'$ have degree three and its outer face has degree $t$, so $3(f'-1)=2m'-t$.
 Let $y$ be the number of edges in the interior of $R$ incident to $C$ and leaving $C$.
  Since $G$ is $3$-orientation, it follows that $m'=3(n'-t)+y+t$.
  So, by combining the three equalities, we have
\begin{equation}
\label{contractile}
y=t-3
\end{equation}

Assume that $S$ is of type $LR_\ell$. For $i<m\leq j$, the number of edges that
are in the interior of $R$ and leaving $u_m$ is $0$. Then we obtain $y\leq 3p-p-1$. By (\ref{contractile}), we obtain $p\geq j-i-2$. 

Assume that $S$ is of type $RR_\ell$. For $i\leq m\leq j$, the number of edges that
are in the interior of $R$ and leaving $u_m$ is $0$. Then we obtain $y\leq 3(p-1)-p$. By (\ref{contractile}), we obtain $p\geq j-i$.

Assume that $S$ is of type $RL_\ell$. For $i\leq m< j$, the number of edges that
are in the interior of $R$ and leaving $u_m$ is $0$. Then we obtain $y\leq 3p-p$. By (\ref{contractile}), we obtain $p\geq j-i-3$.

Assume that $S$ is of type $LL_\ell$. For $i< m< j$, the number of edges that
are in the interior of $R$ and leaving $u_m$ is $0$. Then we obtain $y\leq 3(p+1)-p-1$. By (\ref{contractile}), we obtain $p\geq j-i-5$. 

When $S$ is of type $LR_\ell^h$, $RR_\ell^h$, $RL_\ell^h$,
$LL_\ell^h$. The argument is exactly the same as above except that
there might be some vertices of $h(e)$ along $C$. Each such vertex
has at most $2$ edges leaving in the interior of $R$ and there is at
most $4$ such vertices along $C$. So we obtain a difference of $8$
between the  two rows of Table~\ref{tab:type} for these cases.

Assume that $S$ is of type $LR_r$. For $i<m\leq j$, the number of edges that
are in the interior of $R$ and leaving $u_m$ is $2$ if $m<j$ and $3$ if $m=j$. Then we obtain $y\geq 2(j-i-1)+3$. By (\ref{contractile}), we obtain $p\geq j-i+4$.

Assume that $S$ is of type $RR_r$. 
 For $i\leq m\leq j$, the number of edges that
are in the interior of $R$ and leaving $u_m$ is $2$ if $m<j$ and $3$ if $m=j$. Then we obtain $y\geq 2(j-i)+3$. By (\ref{contractile}), we obtain $p\geq j-i+6$.

Assume that $S$ is of type $RL_r$. For $i\leq m < j$, the number of edges that
are in the interior of $R$ and leaving $u_m$ is $2$. Then we obtain $y\geq 2(j-i)$. By (\ref{contractile}), we obtain $p\geq j-i+3$. 

Assume that $S$ is of type $LL_r$. For $i<m<j$, the number of edges that
are in the interior of $R$ and leaving $u_m$ is $2$. Then we obtain $y\geq 2(j-i-1)$. By (\ref{contractile}), we obtain $p\geq j-i+1$.

Again, when $S$ is of type $LR_r^h$, $RR_r^h$, $RL_r^h$,
$LL_r^h$. The argument is exactly the same as above except that
there might be some vertices of $h(e)$ along $C$. Each such vertex
has at most $2$ edges leaving on the right side of $P_R(e)$,
i.e. outside $R$, and there is at
most $4$ such vertices along $C$. So we obtain a difference of $8$
between the  two rows of Table~\ref{tab:type} for these cases.

Suppose now that $C$ is non-contractible

Assume that $S$ is of type $LR_n$. For $i<m\leq j$, the number of outgoing edges that are incident to $u_m$ and leaving $C$ by its right side is equal to $2$ if $m<j$ and $3$ if $m=j$. So the number of edges leaving $C$ by its right is at least $2(j-i-1)+3$. Moreover the number of edges leaving $C$ by its left side is at most $3p-p-1$. 
Since $D_0$ is balanced, we have exactly the same number of outgoing edges incident to each side of $C$. Then we obtain $p\geq j-i+1$. 

Assume that $S$ is of type $RR_n$. For $i\leq m\leq j$, the number of outgoing edges that are incident to $u_m$ and leaving $C$ by its right side is equal to $2$ if $m<j$ and $3$ if $m=j$. So the number of edges leaving $C$ by its right is at least $2(j-i)+3$. Moreover the number of edges leaving $C$ by its left side is at most $3(p-1)-p$. 
Since $D_0$ is balanced, we obtain $p\geq j-i+3$. 

Assume that $S$ is of type $RL_n$. For $i\leq m< j$, the number of outgoing edges that are incident to $u_m$ and leaving $C$ by its right side is equal to $2$. So the number of edges leaving $C$ by its right is at least $2(j-i)$. Moreover the number of edges leaving $C$ by its left side is at most $3p-p$. 
Since $D_0$ is balanced, we obtain $p\geq j-i$. 

Assume that $S$ is of type $LL_n$. For $i< m< j$, the number of outgoing edges that are incident to $u_m$ and leaving $C$ by its right side is equal to $2$. So the number of edges leaving $C$ by its right is at least $2(j-i-1)$. Moreover the number of edges leaving $C$ by its left side is at most $3(p+1)-p-1$. 
Since $D_0$ is balanced, we obtain $p\geq j-i-2$.

Again, when $S$ is of type $LR_n^h$, $RR_n^h$, $RL_n^h$,
$LL_n^h$. The argument is exactly the same as above except that
there might be some vertices of $h(e)$ along $C$. There is a division
by two in the computation of  these cases that results
in a difference of $4$
between the two rows of Table~\ref{tab:type} for these cases.
\end{proof}

Let $Q$ be a shortest path from $u$ to $v_0$ that maximizes the number of common edges with $P_R(e)$. Subdivide  $Q$ into edge-disjoint sub-paths $S_1,S_2,...,S_t$, each of which 
meets $P_R(e)$ only at its (distinct) endpoints. For $1\leq q \leq t$,
note that $S_q$ is not necessarily edge-disjoint from $P_R(e)$, but if
$S_q$ share an edge with $P(e)$ then it has length $1$.  We assume
that $S_1,S_2,...,S_t$ are ordered so that $Q$ is the concatenation of
$S_1,S_2,...,S_t$, so in particular, $u_0$ is the first vertex of
$S_1$ and $u_k$ is the last vertex of $S_t$. For $1\leq q \leq t$,
note that $S_q$ is not necessarily of a type define previously since
such a path might starts (resp. ends) at a vertex $u_i$ (resp. $u_j$) of
$P_R(e)$ such that $j<i$.

For $i,j$ in $\{0,k\}$, the sub-path of $P_R(e)$ between $u_i$ and $u_j$ is denoted by $P_R(e)[i,j]$. Likewise, if $u_i,u_j$ are vertices of $Q$, then the sub-path of $Q$ between $u_i$ and $u_j$ is denoted by $Q[i,j]$. \\

\begin{lemma}
\label{lem:zone}
Consider $1\leq q\leq q+9\leq q' \leq t$ such that $S_q$ starts at a
vertex $u_i$, ends at vertex $u_j$, with $i<j$, and $(u_{i'}, u_{j'})$
are the extremities of $S_{q'}$ with $i'\leq j'$ (note that $S_{q'}$
may starts at $u_{i'}$ or $u_{j'}$). Then we have $j< i'$.
\end{lemma}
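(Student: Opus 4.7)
The plan is to argue by contradiction. Assuming $i' \leq j$, I will build an alternative walk from $u$ to $v_0$ that is strictly shorter than $Q$, contradicting the defining property of $Q$.

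First, introduce the ``projection sequence'' $\pi_0 = 0, \pi_1, \ldots, \pi_t = k$ on $P_R(e)$: for $0 < l < t$, let $\pi_l$ be the index such that $u_{\pi_l}$ is the common endpoint of $S_l$ and $S_{l+1}$, so that $S_l$ joins $u_{\pi_{l-1}}$ to $u_{\pi_l}$. In particular $\pi_{q-1}=i$, $\pi_q=j$, and $\{\pi_{q'-1},\pi_{q'}\}=\{i',j'\}$. The hypothesis $i'\leq j$ ensures that there exists an $l\in\{q+1,\ldots,q'\}$ with $\pi_l\leq j$; let $l^\star$ be the smallest such index, and set $\delta := j-\pi_{l^\star}$. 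Since $Q$ is a simple path, the vertices $u_j=u_{\pi_q}$ and $u_{\pi_{l^\star}}$ are distinct, so $\delta\geq 1$.

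Next, I would construct an alternative walk from $u$ to $v_0$ by replacing the sub-walk $S_q\cup S_{q+1}\cup\cdots\cup S_{l^\star}$ of $Q$ (which joins $u_i$ to $u_{\pi_{l^\star}}$) with the segment of $P_R(e)$ between $u_i$ and $u_{\pi_{l^\star}}$ (of length $|i-\pi_{l^\star}|$). Since $Q$ is shortest,
\[
|S_q|+\sum_{l=q+1}^{l^\star}|S_l| \;\leq\; |i-\pi_{l^\star}|.
\]
On the other hand, Lemma~\ref{lem:contractible} applied to $S_q$ (going from $u_i$ to $u_j$ with $i<j$) and to each $S_l$ with $q+1\leq l\leq l^\star$ gives $|S_q|\geq (j-i)+c_q$ and $|S_l|\geq |\pi_l-\pi_{l-1}|+c_l$, with constants $c_\cdot$ read from Table~\ref{tab:type}. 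Summing and using the triangle inequality $\sum_{l=q+1}^{l^\star}|\pi_l-\pi_{l-1}|\geq |\pi_{l^\star}-\pi_q|=\delta$, one deduces after simplification
\[
c_q+\sum_{l=q+1}^{l^\star} c_l \;\leq\; -2\min(\delta,\,j-i) \;\leq\; -2.
\]

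The main obstacle is to turn this last inequality into a contradiction; this is where the hypothesis $q'\geq q+9$ must be exploited. The argument would combine two geometric constraints. On the one hand, Lemma~\ref{lem:geometryrtor} gives $|h(e)|\leq 4$, bounding the number of subpaths $S_l$ whose cycle $C_l$ contains a vertex of $h(e)$ and hence can take the most negative values of $c_l$ (down to $-13$). On the other hand, the rightmost-walk structure of $P_R(e)$ forces non-$h$-type subpaths whose cycle encloses a region on the right side of $P_R(e)$ to have $c_l\geq 1$, while the remaining non-$h$ types have $c_l$ bounded below by $-5$, with the most negative values only realized by subpaths of specific ``$\ell$''-type. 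A careful case analysis distinguishing forward and backward subpaths (with respect to the indices $\pi_l$) and their side relative to $P_R(e)$ should then show that among the $\geq 9$ summands involved in the inequality above, sufficiently many have $c_l\geq 0$ to force $c_q+\sum c_l > -2$, yielding the required contradiction. I expect the precise bookkeeping of types---particularly the verification that two backward-crossing subpaths cannot both be ``left-$\ell$'' type simultaneously in a strongly oscillating walk---to be the technical heart of the argument.
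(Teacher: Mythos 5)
Your setup (arguing by contradiction from $i'\le j$, projecting $Q$ onto $P_R(e)$, swapping the sub-walk $S_q\cup\cdots\cup S_{l^\star}$ for a segment of $P_R(e)$, and invoking Lemma~\ref{lem:contractible}) is sound as far as it goes, but the inequality you land on, $c_q+\sum_{l=q+1}^{l^\star}c_l\le -2$, is not a contradiction and cannot be turned into one along the lines you sketch. Two problems. First, $l^\star$ is the \emph{smallest} index in $\{q+1,\dots,q'\}$ with $\pi_{l^\star}\le j$, and nothing prevents $l^\star=q+1$ (the very next sub-path may immediately cross back below $j$); the sum then has only two terms, your ``$\ge 9$ summands'' claim fails, and the hypothesis $q'\ge q+9$ never actually enters the estimate. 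Second, even when the sum is long, at this stage of the paper there is no bound on how many summands are negative: the counts $n_{LL_\ell}\le 2$ and $n_{LL_n}^Y\le 2$ are only established in the later Lemmas~\ref{lem:nlll2} and~\ref{lemma11bis} (and the latter depends on the set $Y$, whose defining property is derived from the present lemma, so invoking it here would be circular), and in any case nothing restricts how many $LR_\ell$ or $RL_\ell$ pieces, each with $c\le -2$, occur in a given window. Two adjacent $LR_\ell$ pieces already give $c_q+c_{q+1}=-4\le-2$ with no contradiction.

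The paper's argument is genuinely different and avoids type bookkeeping entirely. It takes $q_1$ minimal such that the span $[i_1,j_1]$ of $S_{q_1}$ covers $i'$ (so $q_1\le q$), applies Lemma~\ref{lem:contractible} only once, with the single worst-case constant, to get $p_1\ge j_1-i_1-13$, and then exploits $q'\ge q+9$ through the crude but decisive fact that each of the at least seven sub-paths strictly between $S_{q_1}$ and $S_{q'}$ has length at least one. Since subpaths of the shortest path $Q$ are themselves shortest, this forces $|P_R(e)[i',j_1]|\ge|Q[j_1,i']|\ge 7$, whence $|P_R(e)[i_1,i']|\le (j_1-i_1)-7\le p_1+6<|Q[i_1,i']|$, contradicting the minimality of $Q$. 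The missing idea in your write-up is exactly this: use the intermediate pieces as a lower bound on a segment of $P_R(e)$ via the shortest-path property, rather than as terms whose constants you try to sum.
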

\begin{proof}
  Suppose by contradiction that $i'\leq j$.
Let $q_1=\min\{q\in[\![1,t]\!]\ : \textrm{$S_q$ starts at $u_{i''}$, ends
  at $u_{j''}$ with }i''\leq i' \leq j''\}$. Note that $q_1\leq q$.
Let $(i_1, j_1)$ be such that $S_{q_1}$
starts at $u_{i_1}$, ends at $ u_{j_1}$.
For $2\leq r \leq 8$, let
$q_r=q_1+r$. Note that $q_8<q+9\leq q'$.
Let $p_1,  \cdots, p_8$ be the lengths of
$S_{q_1},  \cdots, S_{q_8}$ respectively.  By
Lemma~\ref{lem:contractible}, we have $p_1\geq j_1-i_1-13$. Moreover,
we have $|Q[i_1,i']|\geq p_1+\cdots + p_8\geq p_1+7$. Since
$Q$ is a shortest path, we have
$|P_R(e)[i',j_1]|\geq |Q[j_1,i']|\geq p_2+\cdots +p_8\geq 7$.
We obtain the following contradiction:
$$|P_R(e)[i_1,i']|=|P_R(e)[i_1,j_1]|-|P_R(e)[i',j_1]|\leq j_1-i_1 -7\leq p_1+6 \leq |Q[i_1,i']|-1.$$
\end{proof}

For all types
$\xi \in \{LR_\ell, RR_\ell, RL_\ell, LL_\ell, LR_r, RR_r, RL_r, LL_r,
LR_n, RR_n, RL_n, LL_n\}$, let
$n_{\xi}(Q,e)=|\{j \in \{1,...,t\}: S_j \text{ has type }
\xi\}|$. 

\begin{lemma}
\label{lem:nlll2}
$n_{LL_\ell}(Q,e)\leq 2$ 
\end{lemma}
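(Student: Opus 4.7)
The plan is to argue by contradiction: suppose there are three subpaths $S_{q_1},S_{q_2},S_{q_3}$ of type $LL_\ell$, with $q_1<q_2<q_3$ in the order of $Q$. First, I would exploit the definition of $LL_\ell$ geometrically. Each $S_{q_r}$ leaves and enters $P_R(e)$ from its left side, and the cycle $C_{q_r}$ (traversed as $S_{q_r}$ forward from $u_{i_{q_r}}$ to $u_{j_{q_r}}$ followed by $P_R(e)[i_{q_r},j_{q_r}]$ backward) is contractible with its bounded disk on the left. A local (planar) analysis of a ``simple bump'' on the left side of $P_R(e)$ would give the disk on the \emph{right} of $C_{q_r}$ (the type $LL_r$); so each $LL_\ell$ subpath must exhibit a non-trivial topological feature on the torus, with $S_{q_r}$ and $P_R(e)[i_{q_r},j_{q_r}]$ each representing non-trivial homotopies that must cancel so that $C_{q_r}$ remains contractible.

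Second, I would use the combinatorial constraints. The shortest-path property of $Q$ combined with the maximality of the number of edges shared with $P_R(e)$ forces $|S_{q_r}|\leq j_{q_r}-i_{q_r}-1$, while Lemma~\ref{lem:contractible} gives the matching lower bound $|S_{q_r}|\geq j_{q_r}-i_{q_r}-5$, so each $LL_\ell$ subpath offers only a bounded saving. Lemma~\ref{lem:zone}, applied with the actual traversal direction of each $S_{q_r}$ along $Q$, would control how the intervals $[i_{q_r},j_{q_r}]$ are placed along $P_R(e)$, reducing to a finite number of relative configurations to examine.

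The crux would then be topological. Each disk $D_r$ bounded by $C_{q_r}$ must be structured so that the non-trivial homotopies of $S_{q_r}$ and $P_R(e)[i_{q_r},j_{q_r}]$ cancel; since we are on a torus, this means the corresponding non-contractible components in the three configurations must be homologous. Combining two such configurations would produce a closed walk whose class in $H_1(\mathbb{T}^2)$ is trivial but which uses disjoint pieces of $S_{q_r}$'s and $P_R(e)$, leading either to a contractible non-trivial cycle drawn inside a $D_r$ (impossible) or to a contradiction with the balanced condition on $D_0$, which asserts $\gamma(C)=0$ for each non-contractible cycle; this last contradiction would be derived by the same type of edge-counting already used in the proof of Lemma~\ref{lem:contractible}, now extended to the non-contractible cycle assembled from two $LL_\ell$ configurations.

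The main obstacle I expect is this topological bookkeeping: making rigorous the claim that three coexisting $LL_\ell$ configurations create incompatible homotopy data on the torus. Concretely, one has to carry out a careful case analysis of how the three $C_{q_r}$ sit on the genus-$1$ surface---nested disks, disjoint disks, and the interaction of the homologous non-contractible pieces---and combine it with the planar edge-counting reasoning of Lemma~\ref{lem:contractible} and the rightmost-walk property of $P_R(e)$ to exclude every case.
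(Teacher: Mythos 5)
Your opening move coincides with the paper's: assume three subpaths $S_{q_1},S_{q_2},S_{q_3}$ of type $LL_\ell$, and play the lower bound of Lemma~\ref{lem:contractible} against the shortest-path property of $Q$. But the step you identify as ``the crux'' --- showing that three $LL_\ell$ configurations carry incompatible homotopy data on the torus --- is both unproved in your sketch and, I believe, misdirected. Three such configurations are topologically perfectly consistent (their cycles are all contractible, hence null-homologous, so no homological obstruction can arise from combining them), and the balanced condition on $D_0$ plays no role for the contractible types: it only enters Lemma~\ref{lem:contractible} for the subscripts $n$. The actual contradiction is metric, not topological. Also, Lemma~\ref{lem:zone} is not the right tool to control the relative positions of the three intervals: it only compares subpaths at distance at least $9$ in the order along $Q$.

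The single missing idea is a nesting statement. Writing $(u_{i_r},u_{j_r})$ for the extremities of $S_{q_r}$ with $i_r<j_r$ and relabelling so that $i_1<i_2<i_3$, one has $i_1<i_2<i_3<j_3<j_2<j_1$. This is exactly where the definition of type $LL_\ell$ is used: for such a subpath the disk bounded by $C_r$ lies on its \emph{left}, i.e.\ on the side containing the continuation of $P_R(e)$ towards the root, so the pocket between $S_{q_r}$ and $P_R(e)[i_r,j_r]$ is the non-disk side of $C_r$; this forces the three configurations to be nested rather than side by side. Once nesting is granted, the proof is a few lines of counting: the outermost subpath satisfies $p_1\geq j_1-i_1-5$ by Lemma~\ref{lem:contractible}; the portion of $Q$ between $u_{i_m}$ and $u_{j_M}$ (where $q_m,q_M$ are the extreme indices among $q_1,q_2,q_3$ in the order along $Q$) contains all three subpaths plus at least two connecting edges, hence has length at least $p_1+4$; and the shortest-path property applied to the two end segments gives $|P_R(e)[i_m,j_M]|\leq (j_1-i_1)-4\leq p_1+1\leq |Q[i_m,j_M]|-3$, contradicting $|Q[i_m,j_M]|\leq |P_R(e)[i_m,j_M]|$. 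Without the nesting claim, and with the homological route standing in its place, your proposal does not yet amount to a proof.
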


\begin{proof}
  Suppose by contradiction that $n_{LL_\ell}(Q,e)\geq 3$. Let
  $q_1,q_2,q_3$ be three distinct elements of $\{1, \cdots, t\}$ such
  that $S_{q_1}$, $S_{q_2}$ and $S_{q_3}$ have type $LL_\ell$. For
  $1\leq r \leq 3$, let $(u_{i_r}, u_{j_r})$, be the extremities of
  $S_{q_r}$, such that $S_{q_r}$ starts at $u_{i_r}$ and ends at
  $u_{j_r}$.  Let $p_1$, $p_2$ and $p_3$ be the length of $S_{q_1}$,
  $S_{q_2}$ and $S_{q_3}$. We assume, w.l.o.g., that
  $i_1<i_2<i_3$. Then, one can see that $i_1<i_2<i_3<j_3<j_2<j_1$. By
  Lemma~\ref{lem:contractible}, we have
  $p_1\geq j_1-i_1-5$. Let $q_{m}=\min\{q_1, q_2,
  q_3\}$ and $q_{M}=\max\{q_1,q_2, q_3\}$.  Since
  $Q$ is a shortest path we have $|P_R(e)[i_1,i_m]| +
  |P_R(e)[j_{M},j_1]| \geq
  |Q[i_m,i_1]|+|Q[j_1,j_M]|$.  Moreover, whenever $q_1= q_m$, $q_1
  =q_M$ or
  $q_m<q_1<q_M$, one can check that $|Q[i_m,i_1]|+|Q[j_1,j_M]|\geq
  4$.  We also have $|Q[i_{m},j_{M}]|\geq p_1+p_2+p_3+2 \geq p_1+4$.

Then we obtain the following contradiction:
\begin{align*}
|P_R(e)[i_{m},j_{M}]| & = |P_R(e)[i_1,j_1]|- |P_R(e)[i_1,i_m]|-|P_R(e)[j_{M},j_1]|\\
 & \leq (j_1-i_1)-|Q[i_m,i_1]|-|Q[j_1,j_M]|\\
 & \leq (j_1-i_1)-4\\
 & \leq p_1+1\\
 & \leq |Q[i_{m},j_{M}]|-3
\end{align*}
\end{proof}

For $1\leq z\leq |h(e)|$, let
$t_z=min\{q\in [\![1, t]\!]: S_q \textrm{ ends at $u_j$} \textrm{ with
  $P_R(e)[0,u_j]$} \textrm{ contains at least $z$ elements} \textrm{ of
  $h(e)$} \}$.  Let $X=\cup_{1\leq z\leq h(e)}[\![t_z, t_z+18[\![$ and
$Y=[\![1, t]\!]\setminus X$ and $Z=[\![1, t]\!]\setminus Y$. So $[\![1, t]\!]$ is
partitioned into $Y,Z$.
 By Lemma~\ref{lem:geometryrtor}, we have $h(e)\leq 4$, so
 $|Z|\leq 4\times 18=72$.
Note that
$Y$ has been defined so that it satisfies the
following by Lemma~\ref{lem:zone}: if $q,q'\in [\![1, t]\!]$ are such that
$q\in Y$, $q-9\leq q'\leq q$, and
$S_{q'}$ has extremities $(u_i,u_j)$, then $P_R(e)[i,j]$ contains no
vertex of $h(e)$.\\
For $q \in \{1,...,t\}$, we say that $S_q $ has type $h$ if $S_q$ is
of one of the type $LR_\ell^h, RR_\ell^h, RL_\ell^h, LL_\ell^h,$
$LR_r^h, RR_r^h, RL_r^h, LL_r^h,$ $LR_n^h, RR_n^h, RL_n^h, LL_n^h$.

\begin{lemma}
\label{lem:jpluspetiti}
Consider $q_1,q_2\in Y$, such that $q_1<q_2$
and $S_{q_1},S_{q_2}$ are of type $LL_n$. If $i_1,j_1,i_2,j_2$
are such that $S_{q_1}$, $S_{q_2}$ have extremities
$(u_{i_1},u_{j_1})$ and $(u_{i_2}, u_{j_2})$ with $i_1<j_1$ and
$i_2<j_2$,
then $j_1\leq i_2$.
\end{lemma}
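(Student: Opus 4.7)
Assume for contradiction that $i_2 < j_1$. The strategy is to build an alternative walk $\tilde Q$ from $u$ to $v_0$ and show that either $|\tilde Q|<|Q|$ (contradicting that $Q$ is a shortest path) or $|\tilde Q|=|Q|$ while $\tilde Q$ shares strictly more edges with $P_R(e)$ than $Q$ does (contradicting the maximality in the choice of $Q$). I focus on the representative configuration $i_1\leq i_2<j_1\leq j_2$ with both $S_{q_r}$ traversed in the $Q$-direction from $u_{i_r}$ to $u_{j_r}$; the remaining three combinations of $Q$-directions and the symmetric nesting $i_2\leq i_1<j_2\leq j_1$ are treated by entirely analogous splicing constructions with the appropriate shift of endpoints among $u_{i_1},u_{j_1},u_{i_2},u_{j_2}$.

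In this configuration write $Q = Q[u_0,u_{i_1}]\cdot S_{q_1}\cdot S_{q_1+1}\cdots S_{q_2-1}\cdot S_{q_2}\cdot Q[u_{j_2},u_k]$ and put $L = |S_{q_1+1}|+\cdots+|S_{q_2-1}|$. Since $Q$ is a shortest path, $L=d_G(u_{j_1},u_{i_2})$, and because $u_{j_1}\neq u_{i_2}$ (as $P_R(e)$ is a simple path) one has $1\leq L\leq j_1-i_2$. Now set
$$\tilde Q \;=\; Q[u_0,u_{i_1}]\cdot P_R(e)[i_1,i_2]\cdot S_{q_2}\cdot Q[u_{j_2},u_k],$$
which is a walk from $u$ to $v_0$ satisfying $|Q|-|\tilde Q|=|S_{q_1}|+L-(i_2-i_1)$. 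Lemma~\ref{lem:contractible} applied to the type $LL_n$ subpath $S_{q_1}$ (the non-$h$ row of Table~\ref{tab:type}, available because the hypothesis $q_1\in Y$ forbids vertices of $h(e)$ inside $[i_1,j_1]$) yields $|S_{q_1}|\geq j_1-i_1-2$; moreover $|S_{q_1}|\geq 1$ since $S_{q_1}$ has distinct endpoints. Combined with $L\geq 1$ this gives
$$|Q|-|\tilde Q| \;\geq\; \max\bigl(j_1-i_2-1,\; 2-(i_2-i_1)\bigr).$$

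Whenever the right-hand side is strictly positive, $|\tilde Q|<|Q|$, contradicting that $Q$ is shortest; a case check shows this already handles every sub-configuration except $j_1=i_2+1$ with $i_2-i_1\geq 2$. In that remaining configuration equality must hold throughout, which forces $L=1$, so the subpath between $S_{q_1}$ and $S_{q_2}$ in $Q$ reduces to the single $P_R(e)$-edge $u_{j_1}u_{i_2}$. Now $S_{q_1}$ meets $P_R(e)$ only at its endpoints and, being a non-degenerate $LL_n$ subpath, does not itself coincide with a $P_R(e)$-edge; therefore the middle portion of $Q$ shares exactly one edge with $P_R(e)$, whereas $\tilde Q$ contains the full sub-path $P_R(e)[i_1,i_2]$ and hence shares all $i_2-i_1\geq 2$ of its edges. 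Thus $\tilde Q$ is another shortest path sharing strictly more edges with $P_R(e)$ than $Q$, contradicting the choice of $Q$.

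The main obstacle is the bookkeeping for the other combinations: for each pair of $Q$-directions of $S_{q_1}$ and $S_{q_2}$ (four cases), and for each relative nesting of $[i_1,j_1]$ and $[i_2,j_2]$, one must splice in the correct $P_R(e)$ sub-arc between a different pair of endpoints and check that the analogue of the above inequality, together with the shared-edge argument in the degenerate residual case, still closes. The hypothesis $q_1,q_2\in Y$ is essential precisely because it rules out $h(e)$-vertices inside $[i_r,j_r]$: the $h$-rows of Table~\ref{tab:type} have constants weaker by $8$, which would no longer suffice to force $|Q|-|\tilde Q|>0$ in the non-degenerate sub-configurations.
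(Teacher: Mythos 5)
Your proposal handles in detail only one of the configurations that actually arise, and the claim that the remaining ones are ``treated by entirely analogous splicing constructions'' does not hold up. After the contradiction hypothesis $i_2<j_1$, the five possible orderings of $i_1,j_1,i_2,j_2$ (given $i_1\neq i_2$, $j_1\neq j_2$, $i_1\neq j_2$) are $i_2<j_2<i_1<j_1$, $i_2<i_1<j_2<j_1$, $i_2<i_1<j_1<j_2$, $i_1<i_2<j_2<j_1$ and $i_1<i_2<j_1<j_2$. Your argument addresses the last one (and asserts the second is ``symmetric''), but the first and third are genuinely different: there the interval of $S_{q_1}$ lies entirely to the right of, or strictly inside, the interval of $S_{q_2}$, so the endpoints of $S_{q_1}$ and $S_{q_2}$ do not bound a single $P_R(e)$-arc to splice in between the two subpaths. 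Concretely, for $j_2<i_1$ your middle segment of $Q$ goes from an $S_{q_1}$-endpoint at index $\geq i_1$ down past $j_2$, and there is no natural $P_R(e)[i_1,i_2]$-type replacement available; the same difficulty appears in the nested case $i_2<i_1<j_1<j_2$. The paper resolves these by introducing an auxiliary index $q_0=\max\{q<q_1:\ S_q \text{ straddles } j_2 \text{ (resp. } i_2\text{)}\}$, applying Lemma~\ref{lem:zone} (guaranteed by $q_1\in Y$) to ensure $S_{q_0}$ is not of type $h$, and then comparing $|Q[i_0,\cdot]|$ with $|P_R(e)[i_0,\cdot]|$, in one case after a further sub-case split on $j_2\lessgtr j_0$. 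Your proposal never introduces such a $q_0$ and never invokes Lemma~\ref{lem:zone}, so these cases remain open.

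A secondary issue: you organize cases by ``$Q$-direction of $S_{q_r}$'' crossed with ``nesting of intervals,'' but the traversal directions of $S_{q_1},S_{q_2}$ in $Q$ and the index orderings on $P_R(e)$ are independent observables, and the paper's proof (correctly) keys entirely on the index orderings; the direction bookkeeping does not by itself reproduce the missing orderings. The one case you do treat in full ($i_1<i_2<j_1<j_2$) is essentially correct and matches the paper's final case, including the residual equality configuration handled by the shared-edge maximality of $Q$ — but the overall proof has a real gap in the disjoint and nested configurations.
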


\begin{proof}
 Suppose by contradiction that $i_2<j_1$. 
Let $p_1$, $p_2$ be the length of $S_{q_1}$ and $S_{q_2}$.  By Lemma~\ref{lem:contractible}, we have $p_1\geq j_1-i_1-2$.
Since $q_1<q_2$ we have $i_1\neq j_2$,  $i_1\neq i_2$ and $j_1\neq j_2$.
 We consider the four following cases: $j_2< i_1$ or $i_1<j_2<j_1$ or $i_2<i_1<j_1<j_2$ or $i_1<i_2<j_1<j_2$.

\begin{itemize}
\item \emph{If $j_2< i_1$:} Let
  $q_0=\max\{q\in[\![1,q_1[\![\ : \text{$S_q$ starts at $u_i$, ends at
    $u_j$ with } i\leq j_2 \leq j\}$. Let $(u_{i_0},u_{j_0})$ be the
  extremities of $S_{q_0}$ with $ i_0\leq j_2 \leq j_0$.  Let $p_0$ be
  the length of $S_{q_0}$. Since $ i_0\leq j_2 \leq j_0$, by
  definition of $Y$ and Lemma~\ref{lem:zone}, we have that $S_{q_0}$
  is not of type $h$.  By Lemma~\ref{lem:contractible}, we have
  $p_0\geq j_0-i_0-5$.  Moreover, we have
  $|Q[i_0,j_2]|\geq p_0+p_1+p_2+1\geq p_0+3$. Since $Q$ is a shortest
  path, we have
  $|P_R(e)[j_2,j_0]|\geq |Q[j_0,j_2]|\geq p_1+p_2+1\geq 3$.  We obtain
  the following contradiction:
$$|P_R(e)[i_0,j_2]|=|P_R(e)[i_0,j_0]|-|P_R(e)[j_2,j_0]|\leq j_0-i_0
-3\leq p_0+2 \leq |Q[i_0,j_2]|-1$$

\item \emph{If $i_1<j_2<j_1$:}  We have $|Q[i_1,j_2]|\geq p_1+p_2+1\geq p_1+2$. Since $Q$ is a shortest path, we have $|P_R(e)[j_2,j_1]|\geq |Q[j_1,j_2]| \geq 
1+p_2\geq 2$. 
We obtain the following contradiction:
$$|P_R(e)[i_1,j_2]|=|P_R(e)[i_1,j_1]|-|P_R(e)[j_2,j_1]|\leq j_1-i_1-2 \leq p_1\leq |Q[i_1,j_2]|-2$$

\item If $i_2<i_1<j_1<j_2$: Let
  $q_0=\max\{q\in[\![1,q_1[\![\ : \text{$S_q$ starts at $u_i$, ends at
    $u_j$ with } i\leq i_2 \leq j\}$. Let $(u_{i_0},u_{j_0})$ be the
  extremities of $S_{q_0}$ with $ i_0\leq i_2 \leq j_0$.  Let $p_0$
  be the length of $S_{q_0}$. Since $ i_0\leq i_2 \leq j_0$, by
  definition of $Y$ and Lemma~\ref{lem:zone}, we have that $S_{q_0}$
  is not of type $h$.  We consider two cases depending on whether
  $j_2 \leq j_0$ or not.

\begin{itemize}
\item \emph{$j_2 \leq j_0$:}
By Lemma~\ref{lem:contractible}, we have $p_0\geq j_0-i_0-5$. Moreover, we have $|Q[i_0,j_2]|\geq p_0+p_1+p_2+2\geq p_0+4$. Since $Q$ is a shortest path, we have $|P_R(e)[j_0,j_2]|\geq p_1+p_2+2\geq 4$. We obtain the following contradiction:
$$|P_R(e)[i_0,j_2]|=|P_R(e)[i_0,j_0]|-|P_R(e)[j_2,j_0]|\leq j_0-i_0-4\leq p_0+1 $$$$\leq |Q[i_0,j_2]|-3$$

\item \emph{$j_0 < j_2$:} We have $i_0<i_2<j_0<j_2$ so one can remark
  that $S_{q_0}$ is not of type $LL_\ell$. By
  Lemma~\ref{lem:contractible}, we have $p_0\geq j_0-i_0-3$. Moreover,
  we have $|Q[i_0,i_2]|\geq p_0+p_1+1\geq p_0+2$. Since $Q$ is a
  shortest path, we have
  $|P_R(e)[i_2,j_0]|\geq |Q[j_0,i_2]|\geq p_1+1\geq 2$.  We obtain the
  following contradiction:
$$|P_R(e)[i_0,i_2]|=|P_R(e)[i_0,j_0]|-|P_R(e)[i_2,j_0]|\leq  j_0-i_0 -2 \leq p_0 +1 \leq |Q[i_0,i_2]|-1$$
\end{itemize}

\item {$i_1<i_2<j_1<j_2$:}
We have $|Q[i_1,i_2]|\geq p_1+1$. Since $Q$ is a shortest path, we have $|P_R(e)[i_2,j_1]|\geq |Q[j_1,i_2]| \geq 1$. 
We obtain the following:
$$|P_R(e)[i_1,i_2]|=|P_R(e)[i_1,j_1]|-|P_R(e)[i_2,j_1]|\leq j_1-i_1-1 \leq p_1+1\leq |Q[i_1,i_2]|$$
Since $Q$ is a shortest path, we obtain $|P_R(e)[i_1,i_2]|= |Q[i_1,i_2]|$.
Consider the walk $Q'$ obtain by replacing the part $Q[i_1,i_2]$ in $Q$ by $P_R(e)[i_1,i_2]$. Thus $Q'$ is a walk from $u_0$ to $v_0$ that have the same length as $Q$, so $Q'$ is a shortest path. Moreover $Q'$ has strictly more edges of $P_R(e)$ than $Q$, a contradiction.
\end{itemize}

\end{proof}

Let $n_{LL_n}^Y(Q,e)$ be the number of integers in $q\in Y$ such that $S_q$ has type $LL_n$.

\begin{lemma}
\label{lemma11bis}
$n_{LL_n}^Y(Q,e)\leq 2$ 
\end{lemma}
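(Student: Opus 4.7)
The proof proceeds by contradiction. Suppose $n_{LL_n}^Y(Q,e) \geq 3$, and fix three indices $q_1 < q_2 < q_3$ in $Y$ such that each $S_{q_r}$ is of type $LL_n$, with extremities $(u_{i_r}, u_{j_r})$ (with $i_r < j_r$) and length $p_r$.

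First I would apply Lemma~\ref{lem:jpluspetiti} to the pairs $(q_1, q_2)$ and $(q_2, q_3)$ (both pairs satisfy the hypotheses of that lemma, since all three sub-paths are of type $LL_n$ and all three indices lie in $Y$) to obtain the chain of inequalities
$$i_1 < j_1 \leq i_2 < j_2 \leq i_3 < j_3.$$
In particular, the sub-intervals of $P_R(e)$ used by the three cycles $C_r = S_{q_r} \cup P_R(e)[i_r,j_r]$ are pairwise disjoint, which together with the simplicity of $Q$ implies that the three $C_r$ are pairwise edge-disjoint simple closed curves on the torus. Since each $C_r$ is non-contractible (by the definition of type $LL_n$), and since two disjoint non-contractible simple closed curves on the torus are necessarily isotopic, the three cycles $C_1, C_2, C_3$ represent the same primitive class of $H_1(T^2)$ up to sign and together partition the torus into three annular regions.

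The proof then proceeds by a case analysis depending on the traversal direction of each $S_{q_r}$ along $Q$, i.e.\ on which of $u_{i_r}, u_{j_r}$ appears first along $Q$ (giving at most eight sub-cases indexed by signs $(\epsilon_1,\epsilon_2,\epsilon_3)\in\{\pm 1\}^3$), which in turn determines in which of the three annular regions the sub-paths of $Q$ between the $S_{q_r}$'s lie. In each sub-case I would construct an alternative path $Q'$ from $u$ to $v_0$ that either has strictly smaller length (contradicting the shortest-path property of $Q$), or has the same length as $Q$ but shares strictly more edges with $P_R(e)$ (contradicting the maximality of common edges in the choice of $Q$, exactly as in case~4 of the proof of Lemma~\ref{lem:jpluspetiti}). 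The natural candidates for $Q'$ are obtained by replacing one of the $S_{q_r}$ by $P_R(e)[i_r,j_r]$ (an operation costing at most $2$ in length, by the inequality $p_r \geq j_r - i_r - 2$ of Lemma~\ref{lem:contractible} for type $LL_n$) and/or by rerouting through a different annular region opened up by cutting the torus along the other cycles.

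The principal obstacle is that, unlike the $LL_\ell$ case treated in Lemma~\ref{lem:nlll2}, the inequality $p_r \geq j_r - i_r - 2$ alone is too weak to yield a direct length contradiction from three ordered disjoint intervals: indeed, a naive summation of the bounds only gives $(i_2 - j_1) + (i_3 - j_2) \geq -6$, which is automatically satisfied under the constraint $i_2 \geq j_1, i_3 \geq j_2$. The topological structure must therefore be exploited non-trivially, and the hard part will be to identify, in each of the sub-cases indexed by the $Q$-traversal directions of the $S_{q_r}$, a specific rerouting across one of the three annular regions on the torus that either saves at least one edge or preserves the length while augmenting the set of common edges with $P_R(e)$. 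This should mirror, at a higher level of complexity, the rerouting arguments used for pairs in the proof of Lemma~\ref{lem:jpluspetiti}, with the additional leverage that three parallel non-contractible cycles leave essentially only one topologically ``generic'' configuration up to relabeling.
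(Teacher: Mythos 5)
Your setup is correct as far as it goes: the ordering $i_1<j_1\leq i_2<j_2\leq i_3<j_3$ obtained from Lemma~\ref{lem:jpluspetiti}, and the observation that the three cycles $C_r=S_{q_r}\cup P_R(e)[i_r,j_r]$ are disjoint, non-contractible, hence mutually homotopic curves on the torus. But at that point you abandon the topology and launch a rerouting-and-length program (eight sub-cases on traversal directions, alternative paths $Q'$ through annular regions) that you yourself admit you cannot close --- and indeed the metric information available is too weak, as you note. The idea you are missing is that the topology alone already produces the contradiction, and it falls on the \emph{middle} cycle: $C_1$ and $C_3$ are vertex-disjoint homotopic non-contractible cycles traversed in opposite directions, joined by the arc $P_R(e)[j_1,i_3]$; the union $C_1\cup C_3\cup P_R(e)[j_1,i_3]$ therefore delimits a region homeomorphic to a disk (see Figure~\ref{fig:nLLn}), and because $S_{q_2}$ leaves and enters $P_R(e)$ from the left, all of its edges lie in the interior of that region while $P_R(e)[i_2,j_2]$ lies on its boundary. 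Hence $C_2$ is contained in a disk and is contractible, contradicting the assumption that $S_{q_2}$ has type $LL_n$ (which, by definition of the subscript $n$, requires $C_2$ to be non-contractible). No length estimate beyond Lemma~\ref{lem:jpluspetiti}, no case analysis, and no modified path $Q'$ are needed.

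Concretely, the gap is twofold. First, your plan to ``construct an alternative path $Q'$ that is shorter or shares more edges with $P_R(e)$'' is never instantiated: you do not exhibit a single rerouting that saves an edge or increases the overlap, and the inequality $p_r\geq j_r-i_r-2$ summed over disjoint intervals yields nothing, exactly as you observe. Second, the correct exploitation of the ``three parallel non-contractible cycles'' configuration is not a refined rerouting but the elementary remark that anything squeezed between two of them and attached to the connecting arc of $P_R(e)$ on the appropriate side is contractible. As written, your text is a programme pointing in a direction that does not terminate, rather than a proof.
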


\begin{proof}
Suppose by contradiction that $n_{LL_n}^Y(Q,e)\geq 3$. Let $q_1,q_2,q_3$ be three distinct elements of $Y$ such that $S_{q_1}$, $S_{q_2}$ and $S_{q_3}$ are of type $LL_n$ and $q_1<q_2<q_3$.
Let $(u_{i_1}, u_{j_1})$, $(u_{i_2}, u_{j_2})$ and $(u_{i_3}, u_{j_3})$ be the extremities of $S_{q_1}$, $S_{q_2}$ and $S_{q_3}$.
Then by Lemma~\ref{lem:jpluspetiti}, we have
$i_1<j_1\leq i_2<j_2 \leq i_3< j_3$.
Let $C_1$ (resp. $C_2$, $C_3$) be the cycle formed by the union of $S_1$ (resp. $S_2$, $S_3$) and $P_R(e)[i_1, j_1]$  (resp. $P_R(e)[i_2, j_2]$, $P_R(e)[i_3, j_3]$).
The two non contractible cycle $C_1$ and $C_3$ are vertex disjoint. Thus we are in the situation of Figure~\ref{fig:nLLn}, where $C_1,C_3$ are homotopic but with opposite traversal direction.
Then the union of $C_1$, $C_3$ and $P_R(e)[j_1,i_3]$  delimit a contractible region whose interior contain all the edges of $S_2$. Then $C_2$ is contractible, a contradiction.
\end{proof}

\begin{figure}[h]
\center
\includegraphics[scale=0.5]{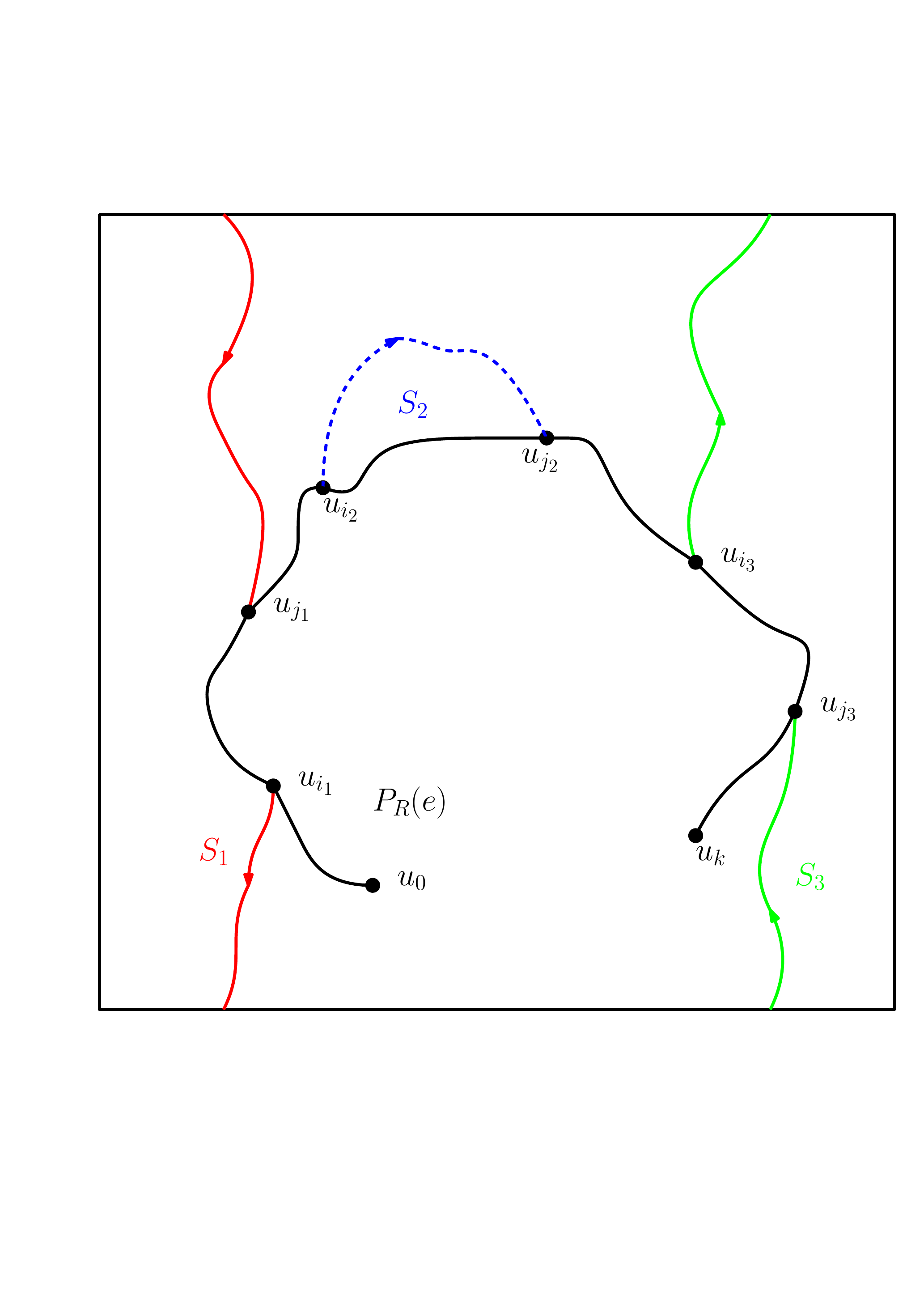}
\caption{Situation of Lemma~\ref{lemma11bis}.}
\label{fig:nLLn}
\end{figure}

\begin{lemma}
\label{bornonrightmostpath}
$$\left | Q \right | \geq \left | P_R(e) \right |-2n_{LR_\ell}(Q,e)-3n_{RL_\ell}(Q,e)-922.$$
\end{lemma}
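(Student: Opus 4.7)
The plan is to decompose $|Q|$ according to the sub-paths $S_q$ and apply Lemma~\ref{lem:contractible} piece by piece. For each $q \in [\![1,t]\!]$, let $(i_q,j_q)$ be the pair of endpoints of $S_q$ on $P_R(e)$ with $i_q < j_q$, and let $\xi_q$ denote its type; Lemma~\ref{lem:contractible} gives $p_q := |S_q| \geq (j_q - i_q) + c(\xi_q)$ where $c(\xi_q)$ is the tabulated constant from Table~\ref{tab:type}. Writing $(a_q,b_q) = (i_q,j_q)$ or $(j_q,i_q)$ according to the direction in which $Q$ traverses $S_q$, we have $\sum_q (b_q - a_q) = k = |P_R(e)|$ by telescoping (since $Q$ starts at $u_0$ and ends at $u_k=v_0$), and since $|b_q - a_q| = j_q - i_q \geq b_q - a_q$, summing yields
$$|Q| = \sum_q p_q \geq |P_R(e)| + \sum_q c(\xi_q).$$

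It then suffices to lower bound $\sum_q c(\xi_q)$. I would isolate the contributions of types $LR_\ell$ and $RL_\ell$, which account for exactly $-2\,n_{LR_\ell}(Q,e) - 3\,n_{RL_\ell}(Q,e)$, and bound the sum over the remaining types by an absolute constant. For this, I split over $q \in Y$ and $q \in Z$. For $q \in Y$, a short argument combining the definition of $Y$ with Lemma~\ref{lem:zone} shows that $S_q$ is necessarily of non-$h$ type, so the only negative-constant types beyond $LR_\ell, RL_\ell$ are $LL_\ell$ and $LL_n$, with constants $-5$ and $-2$ respectively; by Lemmas~\ref{lem:nlll2} and~\ref{lemma11bis}, $n_{LL_\ell}(Q,e) \leq 2$ and $n_{LL_n}^Y(Q,e) \leq 2$, so this part contributes at least $-5 \cdot 2 - 2 \cdot 2 = -14$. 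For $q \in Z$, the bound $|Z| \leq 18 \cdot |h(e)| \leq 72$ (from Lemma~\ref{lem:geometryrtor}) combined with the worst-case constant $c \geq -13$ from Table~\ref{tab:type} gives a contribution of at least $-13 \cdot 72 = -936$ for the terms other than $LR_\ell, RL_\ell$ (whose milder constants $-2, -3$ are already accounted for).

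Combining gives a bound of the form $|Q| \geq |P_R(e)| - 2\,n_{LR_\ell} - 3\,n_{RL_\ell} - C$ for some concrete $C$. To match the stated constant $922$, I would refine either the $Z$-estimate (noting that the worst-case type $LL_\ell^h$, which attains $c=-13$, cannot occur too often; an analogue of Lemma~\ref{lem:nlll2} adapted to cycles enclosing $h$-vertices saves the needed additive constant) or tighten the non-$h$-type exclusion inside $Y$.

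The main obstacle lies in the bookkeeping for the constant, which requires (i) a careful verification that sub-paths indexed by $Y$ are of non-$h$ type, using the definition of $Y$ together with Lemma~\ref{lem:zone} applied to the first sub-path whose endpoint passes each $h$-vertex, and (ii) an efficient pooling of contributions from $Z$, exploiting that the finitely many $h$-vertices of $P_R(e)$ are visited in a fixed order along $P_R(e)$, so that the worst-case constant $-13$ cannot be attained simultaneously at all $72$ indices in $Z$.
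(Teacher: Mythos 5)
Your argument is correct and follows essentially the same route as the paper's proof: sum the bound of Lemma~\ref{lem:contractible} over the $S_q$, telescope the endpoint indices to recover $|P_R(e)|$, isolate the $LR_\ell$ and $RL_\ell$ contributions, and control the remaining negative constants via Lemma~\ref{lem:nlll2}, Lemma~\ref{lemma11bis}, the fact that sub-paths indexed by $Y$ are not of type $h$, and the bound $|Z|\leq 72$. You need not chase the exact constant $922$: the paper's own displayed inequality also evaluates to $5\cdot 2+2\cdot 2+13\cdot 72=950$, and only the existence of some absolute additive constant matters for the application in Theorem~\ref{main2bis}.
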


\begin{proof}
By Lemmas~\ref{lem:contractible},  we have 
$$|Q|=\sum_{q=1}^{t}\left | S_q \right |\geq \left | P_R(e) \right |
-2n_{LR_\ell}(Q,e) -3n_{RL_\ell}(Q,e) -5n_{LL_\ell}(Q,e)
-2n_{LL_n}^Y(Q,e)-{13\times |Z|}.$$ Thus we obtain the lemma by
Lemmas~\ref{lem:nlll2} and~\ref{lemma11bis} and since $|X|\leq 72$.
\end{proof}

\begin{lemma}
\label{lem:geometryoftypelr_l}
Consider $q_1,q_2\in Y$, such that $q_1\neq q_2$
and $S_{q_1},S_{q_2}$ are both of type
$LR_l$ or $RL_\ell$. If $S_{q_1}$, $S_{q_2}$ have
extremities $(u_{i_1},u_{j_1})$ and $(u_{i_2}, u_{j_2})$ with
$i_1<j_1$, $i_2<j_2$ and $i_1<i_2$, then $q_1< q_2$.
\end{lemma}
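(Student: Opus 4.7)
The proof will proceed by contradiction. Assume that $q_1 > q_2$, so in the decomposition of $Q$ the sub-path $S_{q_2}$ is fully traversed before $S_{q_1}$ begins. For $r\in\{1,2\}$, let $C_r = S_{q_r}\cup P_R(e)[i_r,j_r]$ with the orientation inherited from $S_{q_r}$, and let $D_r$ denote the open disk that $C_r$ bounds on its left side (this exists by the type hypothesis on $S_{q_r}$).

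First I would split the argument according to the relative position of the intervals $[i_1,j_1]$ and $[i_2,j_2]$ on $P_R(e)$. Since $i_1<i_2$, only three sub-cases arise: \emph{nested} ($j_2\leq j_1$), \emph{overlapping} ($i_2<j_1<j_2$), and \emph{disjoint} ($j_1\leq i_2$). The key geometric input is that both $D_1$ and $D_2$ sit on the \emph{same} (left) side of $P_R(e)$: consequently, the disks either satisfy $D_2\subset D_1$ in the nested sub-case, have disjoint interiors in the disjoint sub-case, or — in the overlapping sub-case — their interiors overlap in a way that is pinned down by the shared boundary portion $P_R(e)[i_2,j_1]$.

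Then, in each sub-case, I would exploit the ordering hypothesis $q_2<q_1$ to construct an alternative walk $Q'$ from $u$ to $v_0$ by swapping the two sub-paths $S_{q_1}$ and $S_{q_2}$ and rerouting the intermediate segments along $P_R(e)[i_1,i_2]$ (or along the boundary of the appropriate disk, depending on the sub-case). A length accounting using the bounds from Lemma~\ref{lem:contractible} (namely $p_r\geq j_r-i_r-2$ if $S_{q_r}$ has type $LR_\ell$ and $p_r\geq j_r-i_r-3$ if it has type $RL_\ell$), combined with triangle-inequality applications of the shortest-path property of $Q$ exactly as deployed in the proofs of Lemmas~\ref{lem:jpluspetiti} and~\ref{lemma11bis}, will show that $Q'$ is either strictly shorter than $Q$ or has the same length but strictly more edges in common with $P_R(e)$. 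Either outcome contradicts the choice of $Q$ as a shortest path from $u$ to $v_0$ maximizing $|E(Q)\cap E(P_R(e))|$.

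The hardest part, I expect, will be the overlapping sub-case combined with the mismatched type combination (say $S_{q_1}$ of type $LR_\ell$ and $S_{q_2}$ of type $RL_\ell$): here the traversal directions of $Q$ along $S_{q_1}$ and $S_{q_2}$ point in opposite senses relative to $P_R(e)$, so the swap must be accompanied by a reversal of one sub-path, and one must check carefully that the resulting walk is still a walk in $G$, that it genuinely terminates at $v_0$, and that its overlap with $P_R(e)$ strictly increases (rather than merely shifting). Once this most delicate sub-case is handled, the three remaining combinations of types in $\{LR_\ell, RL_\ell\}^2$ and the nested and disjoint sub-cases follow the same template with easier bookkeeping.
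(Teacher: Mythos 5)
Your overall strategy (contradiction from $q_1>q_2$, case split on the relative position of $[i_1,j_1]$ and $[i_2,j_2]$, and a length count via Lemma~\ref{lem:contractible} against the shortest-path property of $Q$) points in the right direction, but the central mechanism you propose --- building a competitor walk $Q'$ by \emph{swapping} $S_{q_1}$ and $S_{q_2}$ and ``rerouting the intermediate segments'' --- does not work as stated. The sub-paths $S_1,\ldots,S_t$ concatenate \emph{exactly} to form $Q$ (a piece of $Q$ lying on $P_R(e)$ is itself one of the $S_q$'s, of length $1$), so exchanging two non-adjacent blocks breaks the walk at four endpoints and there is no canonical, length-controlled way to reconnect it; none of your three sub-cases actually needs such surgery. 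The argument the paper uses is much more direct: one exhibits two vertices $u_a,u_b$ of $P_R(e)$ such that the lower bounds of Table~\ref{tab:type} force $|Q[a,b]|>|P_R(e)[a,b]|$, which is absurd since $Q[a,b]$ realizes the graph distance between $u_a$ and $u_b$ while $P_R(e)[a,b]$ is a path joining them. In the case $i_2<j_1$ this is a two-line computation using only $S_{q_1}$ and $S_{q_2}$. Also, your topological framing is off: the subscript $\ell$ in $LR_\ell$, $RL_\ell$ refers to the side of the cycle $C$ \emph{oriented by $S$}, not to the left side of $P_R(e)$, so the nested/disjoint/overlapping disk trichotomy you set up is neither established nor needed.

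The second, more serious omission is in the case $j_1<i_2$. There the two sub-paths $S_{q_1}$, $S_{q_2}$ alone give no contradiction; the key idea is that since $Q$ traverses $S_{q_2}$ (ending near $u_{j_2}$ with $j_2>j_1$) \emph{before} reaching $u_{i_1}$ with $i_1<j_1$, some intermediate sub-path $S_{q_0}$ must straddle the position $j_1$ on $P_R(e)$, i.e.\ have extremities $u_{i_0},u_{j_0}$ with $i_0\le j_1\le j_0$. One then applies Lemma~\ref{lem:contractible} to $S_{q_0}$, and this only closes numerically if $S_{q_0}$ is \emph{not} of type $h$ (constant $-5$ rather than $-13$), which is exactly what the hypothesis $q_1,q_2\in Y$ together with Lemma~\ref{lem:zone} guarantees. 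Your proposal never locates this straddling sub-path and never explains where the $Y$-membership enters, so the disjoint case --- which is where the lemma's hypotheses are really used --- is not covered by your plan.
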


\begin{proof}
  Suppose by
  contradiction that $q_1>q_2$. Let $p_1$, $p_2$ be the length of
  $S_{q_1}$ and $S_{q_2}$. By Lemma~\ref{lem:contractible}, we have
  $p_1\geq j_1-i_1-3$ and $p_2\geq j_2-i_2-3$. Since $q_2<q_1$, we
  have $i_2\neq j_1$. We consider the two following cases: $i_2<j_1$
  or $j_1<i_2$.
\begin{itemize}
\item \emph{If $i_2<j_1$:} We have
  $|Q[i_2,j_1]|\geq p_1+p_2+1\geq p_2+2$. Since $Q$ is a shortest
  path, we have $|P_R(e)[j_1,j_2]|\geq |Q[j_2,j_1]| \geq p_1+1\geq 2$.
  We obtain the following contradiction:
$$|P_R(e)[i_2,j_1]|=|P_R(e)[i_2,j_2]|-|P_R(e)[j_1,j_2]|\leq j_2-i_2-2 \leq p_2+1\leq |Q[i_2,j_1]|-1.$$

\item \emph{If $j_1<i_2$:} Let
  $q_0=\max\{q\in[\![q_1,q_2[\![\ : \text{the extremities }
  i,j\textrm{ of }S_q\textrm{ are such that }i\leq j_1 \leq j\}$. Let
  $(u_{i_0},u_{j_0})$ be the extremities of $S_{q_0}$ with
  $ i_0\leq j_1 \leq j_0$.  Let $p_0$ be the length of
  $S_{q_0}$. Since $ i_0\leq j_1 \leq j_0$, by definition of $Y$ and
  Lemma~\ref{lem:zone}, we have that $S_{q_0}$ is not of type $h$. By
  Lemma~\ref{lem:contractible}, we have $p_0\geq j_0-i_0-5$. Moreover,
  we have $|Q[i_0,j_1]|\geq p_0+p_1+p_2+1\geq p_0+3$. Since $Q$ is a
  shortest path, we have $|P_R(e)[j_0,j_1]|\geq p_1+p_2+1\geq 3$. We
  obtain the following contradiction:
$$|P_R(e)[i_0,j_1]|=|P_R(e)[i_0,j_0]|-|P_R(e)[j_0,j_1]|\leq j_0-i_0-3\leq p_0+2 \leq |Q[i_0,j_1]|-1.$$
\end{itemize}
\end{proof}

We now state a lemma which is analogous to Proposition $11$ and Proposition $12$ of~\cite{berry2013scaling}. 

Consider $C$ a contractible cycle of $G$, given with a traversal
direction. Then $C$ separates the map $G$ into two regions. We define
$V_\ell(C)$ (respectively $V_r(C)$) the set of vertices lying in the
region on the left (resp. right) side of $C$, including $C$. The graphs
$G[V_\ell(C)]$ and $G[V_r(C)]$ denotes the subgraph of $G$ induced by
these set of vertices.

\begin{lemma}
\label{corollaryimportant}
If $n_{LR_\ell}(Q,e)>3$ (resp. $n_{RL_\ell}(Q,e)>3$), then there
exists a contractible cycle $C$ in $G$, given with a direction of
traversal, of length at most $\frac{6|Q|}{n_{LR_\ell}(Q,e)-3}+2$
(resp. $\frac{6|Q|}{n_{RL_\ell}(Q,e)-3}+3$) such that 
for all
$\iota\in \{\ell,r\}$, we have
$\max_{u\in V_\iota(C)}m(u)- \min_{u\in V_\iota(C)}m(u)$ is at least
$\lfloor n_{LR_\ell}(Q,e)/3\rfloor -{79}$ (resp.
$\lfloor n_{RL_\ell}(Q,e)/3\rfloor-{79}$).
\end{lemma}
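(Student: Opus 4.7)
The plan is to exploit the linear ordering of the $LR_\ell$ sub-paths along $P_R(e)$ given by Lemma~\ref{lem:geometryoftypelr_l}, combined with a pigeon-hole over three equal blocks, to extract a short contractible cycle whose interior accounts for a large fraction of these sub-paths; the large $m$-diameter on each side will then follow from the near-linear decay of the labeling $\lambda$ along $P_R(e)$ (Lemmas~\ref{lem:alphabetadecrease} and~\ref{lem:geometryrtor}).

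First I would restrict attention to the sub-paths of type $LR_\ell$ whose index lies in $Y$, losing at most $|Z|\leq 72$ of them and absorbing this loss later into the additive constant $79$. Let $N' \geq n_{LR_\ell}(Q,e)-72$ be the remaining count; by Lemma~\ref{lem:geometryoftypelr_l} they can be ordered as $S_{q_1},\ldots,S_{q_{N'}}$ with strictly increasing starting positions $i_1<\cdots<i_{N'}$ along $P_R(e)$, the same order in which they are traversed by $Q$. An elementary planarity argument, using that the sub-paths are pairwise vertex-disjoint and that each $S_{q_r}$ together with $P_R(e)[i_r,j_r]$ bounds a contractible region $R_r$ on the left of $P_R(e)$, rules out crossing configurations and shows that the family $(R_r)$ is laminar: any two such regions are either nested or disjoint.

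Next, partition the list into three consecutive blocks of $\lfloor N'/3\rfloor$ sub-paths each. Using $\sum_r |S_{q_r}|\leq |Q|$, the middle block contains some index $m$ with $|S_{q_m}|\leq 3|Q|/N'$; after at most a constant number of adjustments within the laminar family (replacing $m$ by the smallest enclosing ancestor whose subtree contains at least $\lfloor N'/3\rfloor$ sub-paths), we can assume that $R_m$ contains $\geq \lfloor N'/3\rfloor$ of the $R_r$'s while keeping $|S_{q_m}|\leq 3|Q|/N'$ up to a multiplicative factor of $2$. Set $C=S_{q_m}\cup P_R(e)[i_m,j_m]$; by Lemma~\ref{lem:contractible} applied to type $LR_\ell$, $j_m-i_m\leq |S_{q_m}|+2$, so $|C|\leq 2|S_{q_m}|+2\leq 6|Q|/N'+2\leq 6|Q|/(n_{LR_\ell}(Q,e)-3)+2$. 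Finally, combining Lemma~\ref{lem:alphabetadecrease} (the sequence of right-side labels $\lambda(\alpha_i),\lambda(\beta_i)$ along $W_R(e)$ decreases by $1$ per step, with at most two exceptional jumps of size at most $6$), Lemma~\ref{lem:geometryrtor} ($|W_R(e)|-|P_R(e)|\leq 24$), and Lemma~\ref{lem:Mm6} ($|m(u_i)-\lambda(\alpha_i)|\leq 6$), we obtain a uniform lower bound $|m(u_i)-m(u_j)|\geq |i-j|-42$ for $u_i,u_j$ on $P_R(e)$; since $R_m$ captures at least $\lfloor n_{LR_\ell}(Q,e)/3\rfloor-O(1)$ distinct endpoints of nested sub-paths along $P_R(e)[i_m,j_m]$, their $m$-values span a range of at least $\lfloor n_{LR_\ell}(Q,e)/3\rfloor-79$ inside $V_\ell(C)$. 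On the other side, $v_0$ (with $m(v_0)=0$) lies past $u_{j_m}$ and $u$ lies before $u_{i_m}$, so $V_r(C)$ carries at least $m(u)-m(v_0)-O(1)\geq |P_R(e)|-O(1)\geq n_{LR_\ell}(Q,e)-O(1)$ of label-range by Lemma~\ref{lem:Pr426}, which is certainly larger than $\lfloor n_{LR_\ell}(Q,e)/3\rfloor-79$.

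The main obstacle will be the third step, which requires a careful bookkeeping of all additive corrections: the $|Z|\leq 72$ discarded sub-paths, the two exceptional $-6$ jumps in Lemma~\ref{lem:alphabetadecrease}, the $\pm 6$ slack between $m$ and $\lambda$ from Lemma~\ref{lem:Mm6}, and the $\pm 24$ gap between $|W_R(e)|$ and $|P_R(e)|$ all have to be gathered into the single constant $79$, and one has to check that the combinatorial choice of $m$ inside the laminar family can be made while keeping $|S_{q_m}|$ within a factor of $6$ of the average. The $RL_\ell$ case is entirely parallel, with the roles of left and right swapped and Table~\ref{tab:type} providing $p\geq j-i-3$ instead of $p\geq j-i-2$, which is exactly what accounts for the $+3$ rather than $+2$ additive term in the cycle-length bound in that case.
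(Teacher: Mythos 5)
Your overall blueprint is right — discard the indices near $h(e)$, order the type-$LR_\ell$ sub-paths in $Y$ using Lemma~\ref{lem:geometryoftypelr_l}, pigeon-hole on a middle third to find a short sub-path, turn it into a short contractible cycle via Lemma~\ref{lem:contractible}, and then argue about the $m$-range on each side. But the middle step is where you deviate from what the paper does, and it is where the gaps appear.

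The paper simply takes the shortest $S_{q^\star}$ among the $s=\lfloor n_{LR_\ell}(Q,e)/3\rfloor$ sub-paths occupying the \emph{middle third in $Q$-order}, uses the pigeon-hole bound $|S_{q^\star}|\leq |Q|/s$, and then Lemma~\ref{lem:geometryoftypelr_l} (the $Q$-order of type-$LR_\ell$ sub-paths in $Y$ coincides with the order of their starting positions on $P_R(e)$) places the $s$ ``before'' sub-paths on one side of $C$ and the $s$ ``after'' sub-paths on the other, yielding $s-72$ distinct $P_R(e)$-vertices on each side symmetrically. You instead introduce a laminar family of regions $R_r$ and want to pick $R_m$ enclosing $\geq\lfloor N'/3\rfloor$ smaller regions ``after a constant number of adjustments.'' Two genuine problems here. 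First, the laminar (nested-or-disjoint) claim is not a consequence of anything you cite: Lemma~\ref{lem:geometryoftypelr_l} only orders starting indices and is silent about configurations $i_1<i_2<j_1<j_2$; moreover a type-$LR_\ell$ path leaves on the left of $P_R(e)$ and re-enters on the right, so the $R_r$'s do not all sit on a fixed side of $P_R(e)$ and the ``elementary planarity'' you invoke needs an actual argument on the torus. Second — and this is the more serious gap — there is nothing controlling $|S_{q_m}|$ when you replace $m$ by an ancestor in your purported laminar tree; that ancestor's defining sub-path can be arbitrarily longer than the child's, so ``up to a multiplicative factor of $2$ (or $6$)'' has no justification, and without it the length bound on $C$ collapses.

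Finally, your treatment of the two sides is asymmetric: you get the $m$-range inside $R_m$ from nested endpoints but for the outside you argue that $u$ and $v_0$ lie on opposite sides of $C$ and appeal to $|P_R(e)|$. That is not established: $P_R(e)[0,i_m-1]$ and $P_R(e)[j_m+1,k]$ can perfectly well lie in the \emph{same} component of the complement of $C$, in which case your lower bound on the other side's $m$-range vanishes. The paper's choice — middle third in $Q$-order, so order-consistency splits the remaining $LR_\ell$ sub-paths evenly between the two sides — sidesteps both the laminarity issue and this side-asymmetry, and is the cleaner route. Your accounting of the additive constants (the $72$ from $|Z|$, the $\pm 6$ from Lemma~\ref{lem:Mm6}, and the $+2$/$+3$ discrepancy from Table~\ref{tab:type}) is sound, but it sits on top of a middle step that does not hold up.
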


\begin{proof}

  We prove the lemma for $n_{LR_\ell}(Q,e)>3$ (the proof for
  $n_{RL_\ell}(Q,e)>3$ is similar).   For $1\leq q\leq t$, let
  $n_{LR_\ell}(q)$ be the number of sub-paths of type $LR_\ell$ among
  $\{S_1,\cdots, S_q\}$. Let
  $s=\left \lfloor n_{LR_\ell}(Q,e)/3 \right \rfloor$.  Let $Z$ be the
  set of elements $1\leq q\leq t$, such that $S_q$ is of type
  $LR_\ell$ and $s+1\leq n_{LR_\ell}(q)\leq 2s$. Let $q^\star \in Z$
  such that $|S_{q^\star}|=\min\{|S_{q}|\, : \, q\in Z\}$.  Let
  $S_{q^\star}=(w_0,...,w_p)$ with $w_0=u_i$, $w_p=u_j$ for some
  $0\leq i<j\leq k$ and let $C=(w_0,\ldots, w_p=u_j,\ldots,u_i)$.
  Then
$$|Q|\geq s\,p\geq \frac{n_{LR_\ell}(Q,e)-3}{3}p$$

By Lemma~\ref{lem:contractible}, we have $p\geq j-i-2$. Then
 $|C|=p+j-i \leq 2p+2 \leq \frac{6|Q|}{n_{LR_\ell}(Q,e)-3}+2$ edges. 
 
Finally, by Lemma~\ref{lem:geometryoftypelr_l}, one of $G[V_\ell(C)]$
or $G[V_r(C)]$ contains all sub-paths of type $LR_\ell$ among
$(S_1,\cdots, S_{q^\star}){\bigcap \left\{\bigcup_{i\in Y}S_i\right\}}$
and the other contains all sub-paths of type $LR_\ell$ among
$(S_{q\star},\cdots,S_t){\bigcap \left\{\bigcup_{i\in
      Y}S_i\right\}}$. Therefore, each of $G[V_\ell(C)]$ and
$G[V_r(C)]$ contains at least $s- 18\times 4$ vertices of $P_R(e)$. By
Lemmas~\ref{lem:alphabetadecrease} and~\ref{lem:Mm6}, we obtain
$\max_{u\in V_\iota(C)}m(u)- \min_{u\in V_\iota(C)}m(u)\geq s-72-7$ for all
$\iota\in \{\ell,r\}$.
\end{proof}

\subsection{Approximation of distances by labels}
\label{section8}

As in Section~\ref{sec:proof}, for $n\geq 1$, let $G_n$ be a uniformly
random element of $\mathcal G(n)$.  Let $d_n$ denote the graph
distance $d_{G_n}$. Recall that $\Phi$ denotes the bijection from
$\mathcal T_{r,s,b}(n)$ to $ \mathcal G(n)$ of
Theorem~\ref{them:bijectionbenjamin}.  Let $T_n=\Phi^{-1}(G_n)$.
Therefore $T_n$ is a uniformly random element of
$\mathcal T_{r,s,b}(n)$.

We need several definitions similar to Section~\ref{sec:label}. Let
$V_n$ be the set of vertices of $T_n$. Let $a_n^0$
be the root angle of $T_n$ and $v^0_n$ be its root vertex. Let
$\ell_n=4n+1$.  We define $\Gamma_n$ as the unicellular map obtained
from $T_n$ by adding a special dangling half-edge, called the root
half-edge, incident to the root angle of $T_n$.  The root angle of
$\Gamma_n$, still noted $a_n^0$, is the angle of $\Gamma_n$ just after the root half-edge in
counterclockwise order around its incident vertex.  Let
$A_n=(a_n^0,\ldots,a_n^\ell)$ be the sequence of consecutive angles of
$\Gamma_n$ in clockwise order around the unique face of $\Gamma_n$
starting from $a_n^0$.  Let $\lambda_n$ be the labeling
function of $\Gamma_n$ as defined in Section~\ref{sec:label}.
For each vertex $u$ of $V_n$, let $m_n(u)$ be the minimum of the
labels incident to $u$.

The main result of this section is the following:

\begin{theorem}
  \label{main2bis}
  For all $\epsilon>0$, we have 
  $$\lim_{n\rightarrow \infty}\mathbb{P}\left(\exists u\in V_n:|d_n(u,v_n^0)- m_n(u)|>\epsilon\, n^{1/4}\right)=0. $$
\end{theorem}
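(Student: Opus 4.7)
The theorem combines an easy upper bound with a much deeper lower bound. Since Lemma~\ref{prop:bornesupdistance} provides $d_n(u,v_n^0) \leq m_n(u)$, what remains is to establish, for each fixed $\epsilon>0$,
$$\lim_{n\to\infty}\mathbb{P}\bigl(\exists u\in V_n:\ d_n(u,v_n^0)<m_n(u)-\epsilon\,n^{1/4}\bigr)=0.$$
For each $u\in V_n$, I choose an outgoing edge $e=uv$ in the canonical orientation $D_0$ of $G_n$ and let $Q$ be a shortest path from $u$ to $v_n^0$ that shares the largest possible number of edges with the right-to-root path $P_R(e)$ of Section~\ref{rightmostpathsub}. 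Combining Lemmas~\ref{lem:Pr426} and~\ref{bornonrightmostpath} yields
$$d_n(u,v_n^0) \;=\; |Q| \;\geq\; m_n(u) \;-\; 2\,n_{LR_\ell}(Q,e) \;-\; 3\,n_{RL_\ell}(Q,e) \;-\; 964.$$
Consequently, it suffices to show that with probability tending to one, for every $u\in V_n$ one can choose $(e,Q)$ so that both $n_{LR_\ell}(Q,e)$ and $n_{RL_\ell}(Q,e)$ are at most $\tfrac{\epsilon}{10}n^{1/4}$.

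If this fails at some $u$, then setting $\alpha=\epsilon/20$, and using the a priori bound $|Q|\leq m_n(u)=O(n^{1/4})$ which holds with high probability by Lemma~\ref{lem:convergenceshift} together with the comparison \eqref{eq:Setm}, Lemma~\ref{corollaryimportant} produces a contractible cycle $C$ in $G_n$ of length bounded by a constant $K=K(\epsilon)$ such that the oscillation of $m_n$ on each of the sides $V_\ell(C)$ and $V_r(C)$ is at least $\beta n^{1/4}$, where $\beta=\beta(\epsilon)>0$. Denoting by $\mathcal{E}_n(K,\beta)$ the event that such a short contractible cycle exists in $G_n$, the proof reduces to establishing
$$\lim_{n\to\infty}\mathbb{P}\bigl(\mathcal{E}_n(K,\beta)\bigr)\;=\;0.$$

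This last step is the main obstacle. By Lemma~\ref{lem:mm7}, a side on which $m_n$ oscillates by at least $\beta n^{1/4}$ must contain at least $\beta n^{1/4}/7$ vertices, so a cycle realising $\mathcal{E}_n(K,\beta)$ is a bounded-length bottleneck separating $V_n$ into two macroscopic pieces. The strategy is to translate this combinatorial/topological event into a statement about the shifted labeling function $S_n$ and exploit its scaling limit $S$ from Lemma~\ref{lem:convergenceshift}. The two macroscopic pieces $V_\ell(C),V_r(C)$ correspond, via the vertex contour function $r_{Q_n}$, to two collections of at most $O(K)$ contour-time intervals partitioning $[\![0,2n+1]\!]$ up to separating gaps of total length $O(K)$ (namely the contour passes near the $K$ vertices of $C$). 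By \eqref{eq:Setm}, the oscillation of $S_n$ on each of these two collections must be at least $\beta n^{1/4}-O(1)$, hence the oscillation of the rescaled $S_{(n)}$ on the corresponding rescaled subsets of $[0,1]$ is at least $\beta/2$ for $n$ large.

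Passing to the scaling limit via Skorokhod's representation of the convergence in Lemma~\ref{lem:convergenceshift}, one obtains a continuous limit $S$ on $[0,1]$ whose (almost sure) uniform continuity forbids two disjoint sets with rescaled separating gaps of total length $O(K/n)\to 0$ to carry oscillations of $S$ both bounded below by $\beta/2$. Made quantitative (after controlling via a first-moment argument the $O(n)$ possible combinatorial types of short contractible cycles in $G_n$, using the bijection of Theorem~\ref{them:bijectionbenjamin} and the decomposition of Section~\ref{sec:labeldistance}) and combined with a union bound, this continuity argument yields $\mathbb{P}(\mathcal{E}_n(K,\beta))\to 0$ and hence completes the proof. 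The delicate translation between cycle topology in $G_n$ and contour-time continuity of the labeling process $S_n$ — in particular, controlling how short cycles of $G_n$ dissect the unicellular map $Q_n$ — is the technical crux of the appendix.
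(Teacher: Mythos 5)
Your proposal is correct and matches the paper up to and including the application of Lemma~\ref{corollaryimportant}: the easy upper bound via Lemma~\ref{prop:bornesupdistance}, the right-to-root path comparison via Lemmas~\ref{lem:Pr426} and~\ref{bornonrightmostpath}, the resulting lower bound $2n_{LR_\ell}+3n_{RL_\ell}\geq c\,\epsilon n^{1/4}$, and the conclusion that a short contractible cycle must exist with large oscillation of $m_n$ on both sides. You correctly reduce to showing $\mathbb{P}(\mathcal{E}_n(K,\beta))\to 0$.

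The final step, however, contains a genuine gap. You propose to translate the event ``short contractible cycle with large two-sided oscillation'' into a statement about the contour-time structure of the shifted labeling $S_n$ (asserting the two sides correspond to $O(K)$ contour intervals separated by gaps of total length $O(K)$) and then invoke uniform continuity of the limit process $S$. This translation is not correct: the cycle $C$ lives in $G_n$ (not in the unicellular map), some of its edges arise from stem closures rather than unicellular-map edges, and its $K$ vertices may be visited by the vertex contour a number of times proportional to their degrees, which is not bounded by $O(K)$. Hence the separating contour-time gaps are not $o(n)$ in general. Moreover, even granting the translation, almost-sure continuity of $S$ does not by itself forbid two macroscopic-oscillation pieces separated by small gaps (and you would still need quantitative control suitable for a union bound over all cycles, which pushes you back toward an explicit counting argument anyway).

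The paper's actual route is more combinatorial. After Lemma~\ref{corollaryimportant}, it invokes Lemma~\ref{lemma25} (the discrete analogue of an $\epsilon$-net statement for $m_n$) to upgrade ``$m_n$ oscillates by $\geq\beta n^{1/4}$ on each side of $C$'' to ``each side of $C$ contains at least $\alpha n$ vertices''. The final estimate is then a direct first-moment enumeration: a separating contractible cycle of bounded length $k$ decomposes $G_n$ into a simple $k$-gon triangulation with $\ell$ inner vertices and a one-holed toroidal triangulation with $n-\ell$ vertices, so
\begin{equation*}
\mathbb{P}(\mathcal E_n(K,\beta))\leq \Upsilon^{-1}\Bigl(\tfrac{256}{27}\Bigr)^{-n}\sum_{k=3}^{K}\ \sum_{\ell=\lfloor\alpha n\rfloor}^{\lfloor(1-\alpha)n\rfloor} p_{\ell,k}\,t_{n-\ell,k} = O(n^{-3/2}),
\end{equation*}
using $p_{\ell,k}\leq A_k\,\ell^{-5/2}(256/27)^\ell$ (from Brown's enumeration) and $t_{n-\ell,k}\leq B_k(256/27)^{n-\ell}$ (from Corollary~\ref{cor:nosymmetric}), together with $|\mathcal G(n)|\sim 2\Upsilon(256/27)^{n-2}$ from Corollary~\ref{cor:assymptotic}. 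This polynomial-in-$n$ suppression — coming from the $n^{-5/2}$ in the planar count — is what makes the bound go to zero, and it is entirely absent from the continuity heuristic you sketched.
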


Before going into the proof, we need some additional notations.  For
$0\leq i\leq \ell_n$, let $r_n(i)$ be the vertex of $V_n$ incident to
angle $a_n^i$ (i.e the vertex contour function of $\Gamma$).  Given an
integer $0\leq i \leq \ell_n$ and $\Delta>0$, we denote
\begin{align*}
  p_n(i,\Delta) &=\max (\{0\}\cup \left \{ j<i: \left | m_n(r_n(j))-m_n(r_n(i))  \right |\geq \Delta\right
                  \}), \\
  q_n(i,\Delta)&=\min (\{\ell_n\} \cup \left  \{ j>i:  \left |  m_n(r_n(j))-m_n(r_n(i)) \right |\geq \Delta
                 \right \}) \; \text{and}\\
  N_n(i,\Delta) &= | \left \{ r_n(j): \exists j \in \llbracket p_n(i,\Delta),
                  q_n(i,\Delta) \rrbracket \right \} |. 
\end{align*}

{The proof of the following lemma is omitted, it is almost identical to
\cite[Lemma $8.2$]{berry2013scaling}:}

\begin{lemma}
  \label{lemma25}
  For  all $\epsilon>0$  and  $\beta>0$, there  exists $\alpha>0$  and
  $n_0 \in \mathbb{N}$ such that for every $n \geq n_0$,
  $$\mathbb{P}\left ( \inf\left \{ N_n(i,\beta n^{1/4}): 0\leq i\leq 2n+1 \right \}\geq \alpha n \right )\geq 1-\epsilon.$$
\end{lemma}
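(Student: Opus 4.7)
The plan is to deduce the result from the weak convergence of the rescaled shifted labeling function $S_{(n)}$ established in Lemma~\ref{lem:convergenceshift}, combined with the approximation~(\ref{eq:Setm}) of $m_n$ by $S_n$. Fix $\epsilon, \beta > 0$. By Skorokhod's representation theorem, we may realize the convergence $S_{(n)} \to S$ almost surely in the uniform topology on $[0,1]$, where the limit $S$ is almost surely continuous, hence uniformly continuous on the compact interval $[0,1]$. Pick $\alpha_0>0$ and $n_0$ such that, with probability at least $1-\epsilon/2$, the modulus of continuity of $S$ at scale $2\alpha_0$ does not exceed $\beta/4$, and such that, with probability at least $1-\epsilon/2$, $\sup_{[0,1]} |S_{(n)}-S| \leq \beta/4$ for all $n\geq n_0$. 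Unscaling, on the resulting event of probability at least $1-\epsilon$, for all $i,j \in [\![0, 2n+1]\!]$ with $|i-j| \leq 2\alpha_0(2n+1)$ we have $|S_n(i)-S_n(j)| \leq \tfrac{\beta}{2} n^{1/4}$, and therefore, by~(\ref{eq:Setm}) and for $n$ large enough that $32 \leq \tfrac{\beta}{2}n^{1/4}$, $|m_n(r_n(i))-m_n(r_n(j))| < \beta n^{1/4}$. By the definitions of $p_n$ and $q_n$, this forces $q_n(i,\beta n^{1/4}) - p_n(i,\beta n^{1/4}) \geq \alpha_1 n$ for some constant $\alpha_1>0$, uniformly in $i \in [\![0, 2n+1]\!]$.

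The remaining step is to turn this lower bound on the index-interval length into a lower bound on the number $N_n(i,\beta n^{1/4}) = |\{r_n(j): j\in [\![p_n, q_n]\!]\}|$ of \emph{distinct} vertices visited. For this I would establish a local density property of the contour of $Q_n$: for some $\alpha>0$ depending on $\alpha_1$, with probability at least $1-\epsilon/2$, every interval of $\alpha_1 n$ consecutive indices of the contour contains at least $\alpha n$ distinct vertices. This step is the main obstacle, since the naive bound $N_n \geq (q_n-p_n)/\max_v d_v$ only yields $\Omega(n/\log n)$ via standard tail bounds on maximum degree. I expect the stronger linear-in-$n$ lower bound to follow from the decomposition of $U_n$ into Galton--Watson forests and Motzkin paths (Lemma~\ref{bij:decomposition}), using that the offspring distributions $G$ (geometric with parameter $3/4$) and its size-biased variant $B$ have exponential tails: a union bound over the $O(n)$ possible starting indices combined with a Chernoff-type estimate on the number of ``first visits'' (i.e., the number of downsteps toward previously unvisited vertices) in a Galton--Watson contour window of length $\alpha_1 n$ produces the required uniform lower bound. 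Intersecting this event with the event from the first step finishes the proof.
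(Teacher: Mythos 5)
Your first step is sound and is exactly the route the paper intends (the proof is omitted there with a pointer to Lemma~8.2 of~\cite{berry2013scaling}, which proceeds the same way): Skorokhod representation for the convergence of $S_{(n)}$ from Lemma~\ref{lem:convergenceshift}, uniform continuity of the a.s.\ continuous limit, and the comparison~(\ref{eq:Setm}) together show that, off an event of probability at most $\epsilon$, any two indices at distance at most $2\alpha_0(2n+1)$ have $m_n$-values differing by less than $\beta n^{1/4}$, whence $q_n(i,\beta n^{1/4})-p_n(i,\beta n^{1/4})\geq \alpha_1 n$ uniformly in $i$. (Your constants actually give $3\beta/4$ rather than $\beta/2$ for the increment of $S_{(n)}$, which is immaterial.)

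The gap is in your second step, and it is a gap of diagnosis as much as of execution: the passage from a window of $\alpha_1 n$ consecutive contour indices to $\alpha n$ \emph{distinct} vertices needs no probabilistic input, so the Chernoff/Galton--Watson estimate you defer to is unnecessary --- and, as written, it is only announced, not carried out. The observation you are missing is that distinct visits should be counted through edges, not through degrees. In the contour $r_Q$ indexed by $[\![0,2n+1]\!]$ (the indexing your argument and the lemma statement both use), consecutive corners are separated by the traversal of an edge of the unicellular map $Q_n$, and each edge is traversed exactly twice in the whole contour; hence a window of $L$ consecutive corners traverses at least $L/2$ distinct edges. These edges span a connected submap of $Q_n$, whose first Betti number is $2$ since $Q_n$ has $n$ vertices and $n+1$ edges, so any connected subgraph with $E$ edges has at least $E-1$ vertices. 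Therefore $N_n(i,\beta n^{1/4})\geq (q_n-p_n)/2-1$ deterministically, and $\alpha$ can be taken to be a fixed fraction of $\alpha_1$. (If one insists on the contour of $\Gamma_n$ including stems, the same count goes through with a smaller constant, since each vertex carries a bounded number of stems.) Your bound $N_n\geq (q_n-p_n)/\max_v \deg(v)$ is valid but is simply the wrong deterministic bound; with the edge count in its place your proof is complete.
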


We are now ready to prove the main theorem of this section.

\begin{proof}[Proof of Theorem~\ref{main2bis}]
   By Lemma~\ref{prop:bornesupdistance}, for $n\geq 1$ and $u\in V_n$,
  we have $d_{n}(v_n^0,u)\leq m_n(u)$.  So it suffices to prove that
  for all $\epsilon >0$
 $$\lim_{n\rightarrow \infty}\mathbb{P}\left(\exists u\in
   V_n:d_n(v_n^0,u)<m_n(u)-\epsilon\, n^{1/4}\right)=0. $$
 
 This is equivalent to show that for all $\epsilon>0$,
  $$\limsup_{n\rightarrow \infty}\mathbb{P}\left(\exists u \in V_n: d_{n}(v^0_n,u)<m(u)-15\epsilon n^{1/4}+964 \right)\leq 4\epsilon.$$
 Denote by $\diam(G_n)$ the diameter of the graph $G_n$. Consider
 $\epsilon >0$. By
 Lemma~\ref{tightofd_n}, there exists $y>0$ such
 that $\mathbb{P}\{\diam(G_n)\geq yn^{1/4} \}<\epsilon$.

  Now, assume that there exists $n_0\in \mathbb N$, such that for all
  $n\geq n_0$, there exists $u_n\in V_n$ such that
  $d_{G_n}(u_n,v_n^0)<m(u_n)-15\epsilon n^{1/4}-964$. Consider the
  canonical orientation of $G_n$ and let $e_n$ be an outgoing edge of
  $u_n$. With the notations of Section~\ref{rightmostpathsub}, let $P_n=P_R(e_n)$ be the
right-to-root path starting at $e_n$. Let $Q_n$ be a shortest path from $u_n$ to $v_n^0$ that maximizes the number of common edges with $P_n$.

  By
  Lemmas~\ref{lem:Pr426} and \ref{bornonrightmostpath},              we              have

  \begin{align*}
 2n_{LR_\ell}(Q_n,e_n)+3n_{RL_\ell}(Q_n,e_n) \geq &\left | P_n \right |-\left | Q_n \right | -922\\
  \geq& (m_n(u)-42)-(m_n(u)-15\epsilon  n^{1/4}-964)-922\\
 \geq &15\epsilon  n^{1/4}
  \end{align*}

  Thus for $n_0$ large enough we have (for each
  $n\geq n_0$)  either
  $n_{LR_\ell}(Q_n,e_n)\geq \max(3,3\epsilon n^{1/4})$ or
  $n_{RL_\ell}(Q_n,e_n)\geq \max(3,3\epsilon n^{1/4})$.
We call $B_n$ the event $G_n$ contains a contractile cycle
 $C$ of length at most $(2y/\epsilon+4)$, given with a traversal direction, such that for both
 $\iota \in \{l,r\}$, we have
  $$\max_{u\in   V_\iota(C)}m_n(u)-   \min_{u\in  V_\iota(C)}m_n(u)\geq   \epsilon
  n^{1/4}-79.$$
  We deduce from
  Lemma~\ref{corollaryimportant} that, for $n_0$ large enough and all
  $n\geq n_0$, either
  $\diam(G_n)\geq y n^{1/4}$ or $B_n$ occurs.

  Therefore it suffices to prove that
  $$ \mathbb{P}(B,\diam(G_n)\leq yn^{1/4}) \leq 3\epsilon.$$

  Consider $n\geq  n_0$ such  that $B$  occurs. Let $C$  be as  in the
  definition  of $B$.  Let $F$  be the  subgraph of  $T_n$ induced  by
  $V(G_n)\setminus   V(C)$.    Recall   that    $G_n[V_l(C)]$   (resp.
  $G_n[V_r(C)]$) is the sub-graph of  $G_n$ induced by $V_l(C)$ (resp.
  $V_r(C)$). Then each component of  $F$ is contained in $G_n[V_l(C)]$
  or $G_n[V_r(C)]$. By Lemma~\ref{lem:mm7}, for $\{u,v\}\in E(G_n)$ we
  have $|m(u)-m(v)|\leq 7$. It follows  that, for $\iota \in \{l,r\}$,
  there exists one component $F_\iota$ of $F$ such that
  $$\max_{u\in  V(F_\iota)}m_n(u)-   \min_{u\in  V(F_\iota)}m_n(u)\geq  \epsilon^2
  n^{1/4}/(2y+4\epsilon)-79. $$ 
  Using again Lemma~\ref{lem:mm7},
  then  for $\iota\in  \{l,r\}$, there
  exists $v_\iota\in F_\iota$ such that
  \begin{align*}
  \min_{v\in                                  V(C)}|m(v_\iota)-m(v)|\geq
 & \left(\frac{\epsilon^2n^{1/4}}{2y+4\epsilon}-79\right)\big{/}2-7-(2y/\epsilon+2)\times 7\\
 &\geq \frac{\epsilon^2n^{1/4}}{4y+8\epsilon}-19-14y/\epsilon.
  \end{align*}
 
  Now           for          $\iota\in           \{\ell,r\}$,          let
  $j_\iota=\inf\{0\leq   i\leq   \ell_n:  r_{n}(i)=v_\iota\}$.   Fix   any
  $\beta\in (0,\epsilon^2/(4y+8\epsilon))$. By Lemma~\ref{lemma25}, there exists
  $\alpha>0$ such that for $n$ large enough,
  $$\mathbb{P}\left(\min      \{|N_n(j_\ell,\beta     n^{1/4})|,|N_n(j_r,\beta
    n^{1/4})| \}\leq \alpha n \right)\leq \epsilon.$$ For $n$
  sufficiently large, we have
  $\frac{\epsilon^2n^{1/4}}{4y+8\epsilon}-19-14y/\epsilon>\beta
  n^{1/4}$.  Then we have for          $\iota\in           \{\ell,r\}$,
  $N(j_\iota,\beta n^{1/4}) \subset V_\iota(C)$. It follows that for $n$
  large enough,
  $$\mathbb{P}(B,\diam(G_n)\leq y n^{1/4})
  \leq $$$$ \epsilon + \mathbb{P}\left(\exists C \text{ contractile
      cycle}, |C|\leq 2y/\epsilon+4, \min\{|V_l(C)|,|V_r(C)| \}\geq
    \alpha n \right).$$ The event $\{\exists C \text{ contractile
    cycle}, |C|\leq 2y/\epsilon+4, \min\{|V_l(C)|,|V_r(C)| \}\geq \alpha
  n \}$ means that
  $G_n$ contains a separating contractile cycle of length at most
  $2y/\epsilon+4$ that separates
  $G_n$ into two sub-triangulations both of size at least $\alpha
  n$. It remains to prove that this has probability going to
  $0$ when $n$ goes to infinity. Let
  $p_{n,m}$ (resp.
  $t_{n,m}$) be the number of simple triangulation of an
  $m$-gon with
  $n$ inner vertices (resp.  the number of essentially simple toroidal
  maps on the torus with
  $n$ vertices, such that all faces have size three except one that
  has size $m$),  rooted at a maximal triangle.  From previously known
  estimates, there exist two constants
  $A_m$ (see~\cite{brown1964enumeration}) and
  $B_m$ (by Corollary~\ref{cor:nosymmetric}) such that
  \begin{equation*}
    p_{n,m}\leq A_mn^{-5/2}\left(\frac{256}{27}\right)^n \quad\text{and}\quad
    {t}_{n,m}\leq B_m \left(\frac{256}{27}\right)^n
  \end{equation*}
  (the                 upper                 bound                 for
  $p_{n,m}$ estimates the number of arbitrarily rooted triangulations,
  of which there are more than the type counted by $p_{n,m}$ itself).

Let $\Gamma_n$ be the event $G_n$
  contains  a   separating  contractile   cycle  of  length   at  most
  $2y/\epsilon+4$ that separates $G_n$  into two sub-triangulations both
  of size at least $\alpha n$. We have:
  
\begin{align*}
  \mathbb{P}(\Gamma_n) &\leq
                         \Upsilon^{-1}
                         \left(\frac{256}{27}\right)^{-n}\sum_{k=3}^{\lfloor
                         2y/\epsilon+4\rfloor}  \sum_{\ell  =  \lfloor
                         \alpha  n  \rfloor}  ^{\lfloor  (1-\alpha)  n
                         \rfloor} p_{\ell,k}{t}_{n-\ell,k} \\
                       &\leq
                         \Upsilon^{-1}
                         \left(\frac{256}{27}\right)^{-n}\sum_{k=3}^{\lfloor
                         2y/\epsilon+4\rfloor}  \sum_{\ell  =  \lfloor
                         \alpha  n  \rfloor}  ^{\lfloor  (1-\alpha)  n
                         \rfloor} A_k  \ell^{-5/2} \left(  \frac {256}
                         {27} \right)^\ell B_k   \left(  \frac {256}
                         {27} \right)^{n-\ell}  \\
                       &\leq
                         \Upsilon^{-1}
                         \sum_{k=3}^{\lfloor
                         2y/\epsilon+4\rfloor} A_k B_k\sum_{\ell = \lfloor
                         \alpha  n  \rfloor}  ^{\lfloor  (1-\alpha)  n
                         \rfloor} \ell^{-5/2} \leq
                         \Upsilon^{-1}
                         \sum_{k=3}^{\lfloor
                         2y/\epsilon+4\rfloor} A_k B_k \; n \; (\alpha
                         n)^{-5/2}.
\end{align*}
Therefore $\mathbb{P}(\Gamma_n)$  converges towards $0$ when  $n$ goes
to infinity, which concludes the proof of the Theorem.
\end{proof}

\bibliographystyle{siam}
\bibliography{abc}

\begin{thebibliography}{10}

\bibitem{berry2013scaling}
{\sc L.~Addario-Berry and M.~Albenque}, {\em {The scaling limit of random
  simple triangulations and random simple quadrangulations}}, Annals of
  Probability, 45 (2017), pp.~2767--2825.

\bibitem{aldous1991continuum}
{\sc D.~Aldous}, {\em The continuum random tree ii: an overview}, Stochastic
  analysis, 167 (1991), pp.~23--70.

\bibitem{bertoin2003path}
{\sc J.~Bertoin, L.~Chaumont, and J.~Pitman}, {\em Path transformations of
  first passage bridges}, Electronic Communications in Probability, 8 (2003),
  pp.~155--166.

\bibitem{bettinelli2010scaling}
{\sc J.~Bettinelli}, {\em Scaling limits for random quadrangulations of
  positive genus}, Electronic Journal of Probability, 15 (2010),
  pp.~1594--1644.

\bibitem{BM19}
{\sc J.~Bettinelli and G.~Miermont}, {\em Personal communication}.
\newblock 2019.

\bibitem{billingsley2013convergence}
{\sc P.~Billingsley}, {\em Convergence of probability measures}, John Wiley \&
  Sons, 2013.

\bibitem{brown1964enumeration}
{\sc W.~G. Brown}, {\em Enumeration of triangulations of the disk}, Proceedings
  of the London Mathematical Society, 3 (1964), pp.~746--768.

\bibitem{burago2001course}
{\sc D.~Burago, Y.~Burago, I.~D. Burago, and S.~Ivanov}, {\em A course in
  metric geometry}, vol.~33, American Mathematical Soc., 2001.

\bibitem{chapuy2009bijection}
{\sc G.~Chapuy, M.~Marcus, and G.~Schaeffer}, {\em A bijection for rooted maps
  on orientable surfaces}, SIAM Journal on Discrete Mathematics, 23 (2009),
  pp.~1587--1611.

\bibitem{chassaing2004random}
{\sc P.~Chassaing and G.~Schaeffer}, {\em Random planar lattices and integrated
  superbrownian excursion}, Probability Theory and Related Fields, 128 (2004),
  pp.~161--212.

\bibitem{dershowitz1990cycle}
{\sc N.~Dershowitz and S.~Zaks}, {\em The cycle lemma and some applications},
  European Journal of Combinatorics, 11 (1990), pp.~35--40.

\bibitem{despre2017encoding}
{\sc V.~Despr{\'e}, D.~Gon{\c{c}}alves, and B.~L{\'e}v{\^e}que}, {\em Encoding
  toroidal triangulations}, Discrete \& Computational Geometry, 57 (2017),
  pp.~507--544.

\bibitem{fitzsimmons1993markovian}
{\sc P.~Fitzsimmons, J.~Pitman, and M.~Yor}, {\em Markovian bridges:
  construction, palm interpretation, and splicing}, in Seminar on Stochastic
  Processes, 1992, Springer, 1993, pp.~101--134.

\bibitem{GKL15}
{\sc D.~Gon{\c{c}}alves, K.~B. Knauer, and B.~L{\'{e}}v{\^{e}}que}, {\em
  Structure of schnyder labelings on orientable surfaces}, CoRR, abs/1501.05475
  (2015).

\bibitem{GL14}
{\sc D.~Gon{\c{c}}alves and B.~L{\'e}v{\^e}que}, {\em Toroidal maps: Schnyder
  woods, orthogonal surfaces and straight-line representations}, Discrete \&
  Computational Geometry, 51 (2014), pp.~67--131.

\bibitem{le2007topological}
{\sc J.-F. Le~Gall}, {\em The topological structure of scaling limits of large
  planar maps}, Inventiones mathematicae, 169 (2007), pp.~621--670.

\bibitem{le2013uniqueness}
{\sc J.-F. Le~Gall et~al.}, {\em Uniqueness and universality of the brownian
  map}, The Annals of Probability, 41 (2013), pp.~2880--2960.

\bibitem{leveque2017generalization}
{\sc B.~L{\'e}v{\^e}que}, {\em Generalization of schnyder woods to orientable
  surfaces and applications}, Habilitation manuscript, arXiv:1702.07589,
  (2017).

\bibitem{miermont2013brownian}
{\sc G.~Miermont}, {\em The brownian map is the scaling limit of uniform random
  plane quadrangulations}, Acta mathematica, 210 (2013), pp.~319--401.

\bibitem{neveu1986arbres}
{\sc J.~Neveu}, {\em Arbres et processus de galton-watson}, Ann. Inst. H.
  Poincar{\'e} Probab. Statist, 22 (1986), pp.~199--207.

\bibitem{petrov1995limit}
{\sc V.~V. Petrov}, {\em Limit theorems of probability theory: sequences of
  independent random variables},  (1995).

\bibitem{petrov2012sums}
{\sc V.~V. Petrov}, {\em Sums of independent random variables}, vol.~82,
  Springer Science \& Business Media, 2012.

\bibitem{poulalhon2006optimal}
{\sc D.~Poulalhon and G.~Schaeffer}, {\em Optimal coding and sampling of
  triangulations}, Algorithmica, 46 (2006), pp.~505--527.

\bibitem{revuz2013continuous}
{\sc D.~Revuz and M.~Yor}, {\em Continuous martingales and Brownian motion},
  vol.~293, Springer Science \& Business Media, 2013.

\bibitem{schramm2006conformally}
{\sc O.~Schramm}, {\em Conformally invariant scaling limits}, in Proceedings of
  the International Congress of Mathematicians, Madrid, 2006.

\end{thebibliography}

\end{document}